\definecolor{darkgreen}{rgb}{0.1,0.5,0.1}
\newtheorem{theorem}{Theorem}[section]
\numberwithin{equation}{section}
\newtheorem{lemma}[theorem]{Lemma}
\newtheorem{definition}[theorem]{Definition}
\newtheorem{proposition}[theorem]{Proposition}
\newtheorem{corollary}[theorem]{Corollary}
\newtheorem{remark}[theorem]{Remark}
\newtheorem{example}{Example}[section]
\DeclareMathAlphabet{\bi}{OML}{cmm}{b}{it}
\DeclareMathAlphabet\bfcal{OMS}{cmsy}{b}{n} 
\newcommand{\aspar}{{g}}
\newcommand{\cell}{{\Omega}} 
\newcommand{\nn}{{\nonumber}}
\newcommand{\boldcell}{{\bf \cell}} 
\newcommand{\R}{\mathbb{R}}
\newcommand{\C}{\mathbb{C}}
\newcommand{\Z}{\mathbb{Z}}
\newcommand{\N}{\mathbb{N}}
\newcommand{\NO}{{\mathbb{N}_0}}
\newcommand{\HH}{\mathbb{H}}
\newcommand{\bE}{\boldsymbol{E}}
\newcommand{\bH}{\boldsymbol{H}}
\newcommand{\bJ}{\boldsymbol{J}}
\newcommand{\bK}{\boldsymbol{\mathrm{K}}}
\newcommand{\bx}{\boldsymbol{x}}
\newcommand{\by}{\boldsymbol{y}}
\newcommand{\bk}{\boldsymbol{k}}
\newcommand{\bkg}{\boldsymbol{\kappa}}
\newcommand{\bv}{\boldsymbol{v}}
\newcommand{\bw}{\boldsymbol{w}}
\newcommand{\bu}{\boldsymbol{u}}
\newcommand{\bn}{\boldsymbol{n}}
\newcommand{\domain}{\textrm{dom}}
\newcommand{\calF}{\mathcal{F}}
\newcommand{\calH}{\mathcal{H}}
\newcommand{\calO}{\mathcal{O}}
\newcommand{\bbA}{\mathbb{A}}
\newcommand{\tbbA}{\tilde{\mathbb{A}}}
\newcommand{\bbB}{\mathbb{B}}
\newcommand{\bbC}{\mathbb{C}}
\newcommand{\bbE}{\mathbb{E}}
\newcommand{\bbP}{\mathbb{P}}
\newcommand{\bbR}{\mathbb{R}}
\newcommand{\bbS}{\mathbb{S}}
\newcommand{\bbAinfDk}{\mathbb{A}_{\infty,\bk}}
 \newcommand{\LaplDcellp}{-\Delta_{\mathrm{Dir},\cell^+}}
  \newcommand{\LaplDcellA}{-\Delta_{\mathrm{Dir},\cell^A}}
\newcommand{\curl}{\operatorname{curl}}
\newcommand{\OpcompApp}{\mathbb{V}}
\newcommand{\condS}{(\mathcal{S})}
\newcommand{\eps}{\varepsilon}
\newcommand{\rmd}{{\mathrm{d}}}
\newcommand{\rmD}{{\mathrm{D}}}
\newcommand{\rme}{{\rm e}}
\newcommand{\rmi}{{\rm i}}
\begin{document}
\title{High contrast elliptic operators in honeycomb structures}
\author{Maxence Cassier$^{a}$ and Michael I. Weinstein$^{b}$ \\
\small $^a$ Aix Marseille Univ, CNRS, Centrale Marseille, Institut Fresnel, Marseille, France,\\ \small  $^b$ Dept of Applied Physics \& Applied Mathematics, and Dept. of Mathematics, Columbia University, New-York, United States\\\small
(maxence.cassier@fresnel.fr,  miw2103@columbia.edu)}

\maketitle

\begin{abstract}
We study the band structure of self-adjoint elliptic operators 
$\mathbb{A}_\aspar= -\nabla \cdot \sigma_{\aspar} \nabla$, where $\sigma_g$ has the symmetries of a honeycomb tiling of $\mathbb{R}^2$. 
We focus  on the case where  $\sigma_{\aspar}$ is a real-valued scalar:  $\sigma_{\aspar}=1$ within identical, disjoint  ``inclusions'', centered at vertices of a honeycomb lattice, and  
$\sigma_{\aspar}=\aspar\gg1 $ (high contrast) in the complement of the inclusion set (bulk).  Such operators govern, {\it e.g.} transverse electric (TE) modes in photonic crystal media consisting of high dielectric constant inclusions (semi-conductor pillars) within a homogeneous lower contrast bulk (air), a configuration used in many physical studies. Our approach, which  is based on monotonicity properties of the associated energy form,  extends to a class of high contrast  elliptic operators  that model heterogeneous and anisotropic honeycomb media.

Our results concern the global behavior of dispersion surfaces, and the existence of conical crossings (Dirac points)
 occurring in the lowest two energy bands as well as in bands arbitrarily high in the spectrum.   Dirac points are the source of important phenomena in fundamental and applied physics, e.g. graphene and its artificial analogues, and topological insulators.   The key hypotheses are the non-vanishing of the Dirac (Fermi) velocity
$v_D(g)$, verified numerically, and a spectral isolation condition, verified analytically in many configurations. 
Asymptotic expansions, to any order in $\aspar^{-1}$, of Dirac point eigenpairs and $v_D(g)$ are derived with error bounds.

Our study illuminates differences between the high contrast behavior of $\mathbb{A}_\aspar$ and the corresponding strong binding regime for Schroedinger operators. 
\end{abstract}
\noindent {\bf Keywords:} Photonic crystals, High contrast elliptic operators, Honeycomb media, Band structure, Dirac points.

\section{Introduction and summary of the results}
\subsection{Introduction}

This article concerns the spectral properties of the second order divergence form elliptic operator
$\bbA_{\aspar}\ :=\ -\nabla \cdot \sigma_{\aspar} \nabla\quad \textrm{acting on $L^2(\bbR^2)$}$, 
 where $\sigma_{\aspar}$ is defined on $\mathbb{R}^2$ and has the symmetries of 
  a honeycomb tiling of $\R^2$. 
We focus on the case where $\sigma_{\aspar}(\bx)$ is a strictly positive, real-valued and piecewise constant scalar function of position, $\bx=(x,y)$, which is equal to $g>0$ on a set of {\it inclusions} and equal to $1$ on their complement in $\mathbb{R}^2$ (the bulk);
 see Figure \ref{fig.funcell}.
 
The interest in elliptic operators with honeycomb symmetry was catalyzed by the discovery of 2D materials, such as graphene \cite{geim2007rise,novoselov2011nobel,RMP-Graphene:09,Katsnelson:12}, and their role in the field of topological insulators. Graphene's remarkable wave propagation properties are directly related  to the presence of 
{\it Dirac points}, conical touchings of neighboring dispersion surfaces in the band structure  of the single-electron (Schroedinger) model of graphene. Dirac points have been shown to occur in generic honeycomb Schroedinger operators \cite{FW:12}; see also \cite{Ammari:2020,Berkolaiko-comech:18,C91,Grushin:09,LWZ:18}. Their implications for the dynamics of wave-packets were studied in \cite{FW:14}.

Analogous wave properties have been observed in many different physical systems with honeycomb symmetry,
 where operators of type $\bbA_{\aspar}$ arise in engineered topological materials, 
 {\it e.g.} electromagnetism for photonic graphene, acoustics, mechanics; see, for example,
  \cite{Marquardt:17,irvine:15,ozawa_etal:18,artificial-graphene:11}.  
  Such engineered honeycomb media are often called {\it artificial graphene}. 
For a discussion of operators of the type $\bbA_{\aspar}$, as they arise in the context of transverse electric (TE) modes in the 2D Maxwell equations, see Appendix \ref{TE-Max}. Elliptic operators of type $\bbA_{\aspar}$ occur as well in models for  $2D$  acoustics \cite{Kosloff1983migration} and in elasticity \cite{Pham:2017}.
 Dirac points and their dynamical consequences in photonic graphene for  the 2D Maxwell equations with smooth coefficients were studied in \cite{LWZ:18,XZ:19}. 

Typically, engineered periodic structures (honeycomb and other) are media which consist of two or more distinct materials, 
each characterized by its own constant material parameter, {\it e.g.} dielectric constant. Often the material contrast is taken to be large. {\it The goal of this article is to 
study the spectral properties of honeycomb operators $\bbA_{\aspar}$ where $\sigma_{\aspar}$ is piecewise constant.} We focus on the regime of high material contrast, corresponding to $\aspar$ large.

An analogous study of continuum honeycomb Schroedinger operators in the {\it strong binding regime} was initiated in \cite{FLW-CPAM:17}. Here, the periodic  quantum potential consists of deep atomic potential wells centered at honeycomb lattice sites.
It is shown that the low-lying (first two) dispersion surfaces, after a centering and rescaling, converge uniformly to those of
   the tight binding (discrete) model of graphene. In contrast, for $\bbA_\aspar$ we  obtain detailed information on the low-lying spectrum  and also its higher energy dispersion surfaces. For example, our results imply in particular for the case of circular inclusions that
   {\it for each 
  eigenvalue, $\tilde\delta$,  of the infinite sequence of  radial (simple) Dirichlet eigenpairs of $-\Delta$ for the single inclusion, there is a pair of  dispersion surfaces of $\bbA_\aspar$ acting in $L^2(\mathbb{R}^2)$, which meet in a Dirac point, and which converge to the constant 
  function with value equal $\tilde\delta$}, as $\aspar\uparrow\infty$, uniformly on $\mathcal{B}$ for the upper one and on any compact subsets of $\mathcal{B}\setminus\{{\bf 0}\}$ for the lower one.

 Corresponding results for Dirac points at higher energies have not yet been proved for Schroedinger operators
 in the strong binding regime.
Furthermore,  the global character of the dispersion surfaces of $\bbA_\aspar$
 is very different from that of the Schroedinger case.
   For example, general dispersion surfaces of $\bbA_\aspar$ (and, in particular, the first dispersion surface) do not converge uniformly with increasing $\aspar$ in any compact set including $\bk=0$. 
   
 The methods we use differ  in many key respects from those used in the honeycomb  Schroedinger case.
 $\bbA_\aspar$ is decomposed as a fiber integral over the subspaces $L^2_{\bf k}$, on which we use the variational characterization of eigenvalues of self-adjoint  operators. Comparison principles (Dirichlet and Neumann bracketing)
 and the monotonicity of the energy form for $\bbA_\aspar$, with respect to $\aspar$, enable verification of a key {\it spectral isolation property}, used to study the asymptotics of bands that touch in a Dirac point. 
    Finally, due to the discontinuity in the coefficients 
  of $\bbA_\aspar$,  we work with a weak formulation of the elliptic eigenvalue problem, which requires many technical 
adjustments to aspects of the analysis with parallels in \cite{FLW-CPAM:17}.

 \begin{figure}[!h]
 \includegraphics[width=1\textwidth]{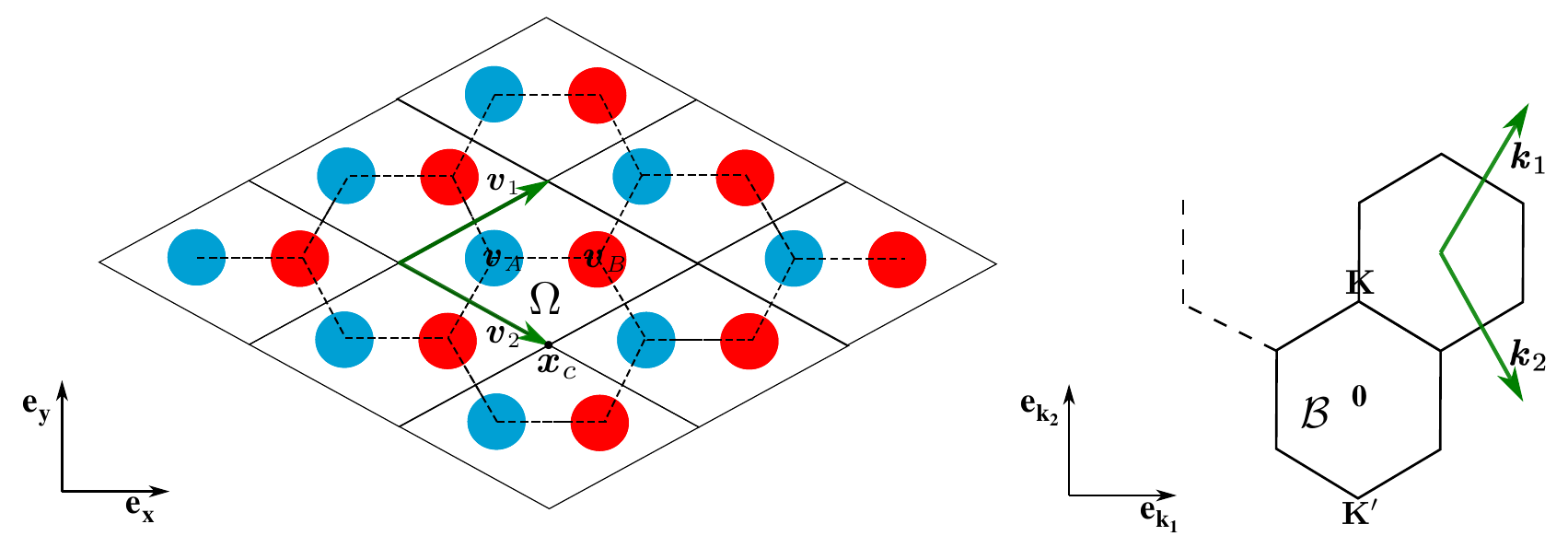}
   \caption{Left panel: Honeycomb arrangement of inclusions. Centers of  inclusions are located at vertices of the honeycomb:  $(\bv_A+\Lambda)\cup(\bv_B+\Lambda)$, where $\Lambda=\Z\bv_1\oplus\Z\bv_2$ is the equilateral triangular lattice. The unit cell $\cell$  is a diamond-shaped region  containing inclusions: $\cell^A$  centered at $\bv_A$ and $\cell^B$ centered at $\bv_B$. $\sigma_g(\bx)=1$ for $\bx\in\cell^A\cup\cell^B$ and $\sigma_g=g$ for $\bx$ outside the inclusions. $\sigma_g(\bx+\bv)=\sigma(\bx)$ for all $\bv\in\Lambda$ and all $\bx\in\R^2$. Right panel: Dual lattice $\Lambda^*=\Z\bk_1\oplus\Z\bk_2$. Brillouin zone, $\mathcal{B}$, with its two independent high symmetry quasimomenta at vertices:  ${\bf K}$ and  ${\bf K^\prime}$. }
 \label{fig.funcell}
\end{figure}

\subsection{ A brief description of the mathematical problem}
 Figure \ref{fig.funcell} (left panel) displays
 a portion of the structure defined by $\sigma_{\aspar}$. The plane is partitioned into diamond-shaped period cells with fundamental cell $\cell$. Each period cell has two disjoint identical inclusions, $\cell^A$ and $\cell^B$, centered at vertices of a honeycomb structure. The inclusion shape is required to be: $2\pi/3$ rotationally invariant and inversion symmetric about its center. The function $\sigma_g$  takes on two values
 \[
\sigma_g(\bx)\ =\ 
\begin{cases} 
1 &\textrm{for $\bx$ in the inclusions}\\
g>1 & \textrm{for $\bx$ in the bulk (outside the set of inclusions).}\\
 \end{cases}
 \]
  In this article we present results on spectral properties of 
 \begin{equation}\label{eq.defAg}
\bbA_{\aspar}\ :=\ -\nabla \cdot \sigma_{\aspar} \nabla\quad \textrm{acting on $L^2(\bbR^2)$ }.
\end{equation}

Since $\bbA_{\aspar}$ commutes with (triangular) lattice translations, its spectrum can be obtained via the family of quasi-periodic Floquet-Bloch eigenvalue problems; see Section \ref{sec-Floqth}.
 For each $\bk\in\mathcal{B}\subset(\R^2_{\bx})^*$ ($\mathcal{B}\simeq\mathbb{T}^2$), the Brillouin zone (Figure \ref{fig.funcell}, right panel), let  $ \lambda_1(g;\bk)\le  \lambda_2(g;\bk)\le\dots\lambda_n(g;\bk)\le\dots$ denote the eigenvalues, with multiplicities listed, for the eigenvalue problem: 
\[ \bbA_{\aspar}\psi=\lambda\psi\quad \textrm{subject to $\bk-$quasi-periodic boundary conditions; see \eqref{eq.defL2k}.}\]
 The functions $\bk\mapsto \lambda_n(g;\bk)$, $n\ge1$,  are Lipschitz continuous and their graphs are the {\it dispersion surfaces} of 
   $\bbA_{\aspar}$. The $L^2(\R^2)$ spectrum of 
   $\bbA_{\aspar}$ is the union of closed real intervals, that are swept out by the maps $\bk\mapsto\lambda_n(\aspar;\bk)$
    as $\bk$ varies over $\mathcal{B}\simeq\mathbb{T}^2$. The collection of all Floquet-Bloch eigenvalue / eigenfunction pairs is called 
   the {\it band structure} of   $\bbA_{\aspar}$. Section \ref{sec-Floqth} provides a more detailed discussion.

 \subsection{Summary of main results}\label{sec-mainresults}
 
We summarize our main results on the band structure of $\bbA_{\aspar}$ acting in $L^2(\R^2)$ for $g\gg1$. To keep the presentation of this introduction short, we 
outline results as they apply to the low-lying (first two) dispersion surfaces. Our results extend to higher energy bands whose high contrast limit satisfies a band spectral isolation condition $\condS$; see Definition \ref{Def.condS}. Precise formulations of results for low-lying and  higher energy dispersion surfaces are stated in Sections \ref{HL-thy} to \ref{sec.diracpointthoerem}.

 \begin{enumerate}
 \item  {\it Theorem \ref{thm.convunifbands} and Corollary \ref{cor.firstdipstcurves}: 
 Convergence of dispersion maps as $g\uparrow$ :}\  Hempel and Lienau  \cite{Hem-00} developed a variational approach for studying the convergence of the band dispersion functions as $g\uparrow\infty$; see also \cite{Frie:02}. 
   It is based on the monotonicity of the energy form, $a_{g,\bk}(u,u)=\int_{\cell}\sigma_g(\bx) |\nabla u(\bx)|^2\ d\bx$, with respect to the parameter $g$ and the min-max characterization of eigenvalues of self-adjoint operators. In Section \ref{sec.condP} we apply their approach to a study of the band structure of 
  $\bbA_{\aspar}$ for $g\gg1$.  Note that the results of  Sections \ref{HL-thy} and \ref{sec.condP}   do not require  honeycomb symmetry;  see Remark \ref{rem-noHoneycombsym}
  
To explain these results, in the context of the first two bands, note first that $\bbA_{\aspar}$ annihilates constant functions, which satisfy periodic boundary conditions ($\bk-$quasi-periodicity  with $\bk={\bf 0}$). Therefore, $\lambda_1(g;{\bf 0})=0$ for all $g$.  We prove, on the other hand, that as $g$ tends to infinity:\medskip \\
 {\it (a) the  first dispersion map, $\bk\mapsto\lambda_1(g;\bk)$ converges, uniformly on compact subsets of $\mathcal{B}\setminus\{{\bf 0}\}$ (but not on all $\mathcal{B}$), to the constant function 
of $\bk$ with value equal to the (strictly positive) $1^{st}$ Dirichlet eigenvalue  of a single inclusion, $\cell^A$
}, and\\[4pt]
{\it (b) the second dispersion map, $\bk\mapsto\lambda_2(g;\bk)$ converges, uniformly on all of $\mathcal{B}$ to the constant function in $\bk$ with value equal to the (positive) $1^{st}$ Dirichlet eigenvalue  of a single inclusion, $\cell^A$
}.\\[4pt]
{\it  (c) For $\aspar$ sufficiently large, there is a gap in the spectrum of  $\bbA_{\aspar}$ between the $2^{nd}$ and $3^{rd}$ spectral bands.}

Figure \ref{fig.3eigenvalbig} illustrates assertions (a), (b) and (c). In each panel, the $3$ displayed curves are obtained by tracking  the dispersion surfaces $\bk\mapsto\lambda_j(\aspar;\bk)$, $ j=1,2,3$, along the boundary of a symmetry-reduced Brillouin zone for the indicated value of $\aspar$.
 While in all panels $\lambda_1(\aspar,0)=0$, we see that in the complement of any neighborhood of $\bk=0$, $\lambda_1(g;\bk)$ converges uniformly to the first Dirichlet eigenvalue, $\tilde{\delta}_1>0$,  of the single inclusion. This eigenvalue is a doubly degenerate Dirichlet eigenvalue, $\delta_1=\delta_2=\tilde{\delta}_1$, for the union of two identical disjoint inclusions $\Omega^A\cup \Omega^B$.  On the other hand, $\lambda_2(g;\bk)$ is seen to converge uniformly on all $\mathcal{B}$. 
Finally, as asserted in (c),   a spectral gap opens between the first two bands and the third  band  for larger values of $g$.
\begin{figure}[h!]
\begin{center}
\includegraphics[width=5.1cm]{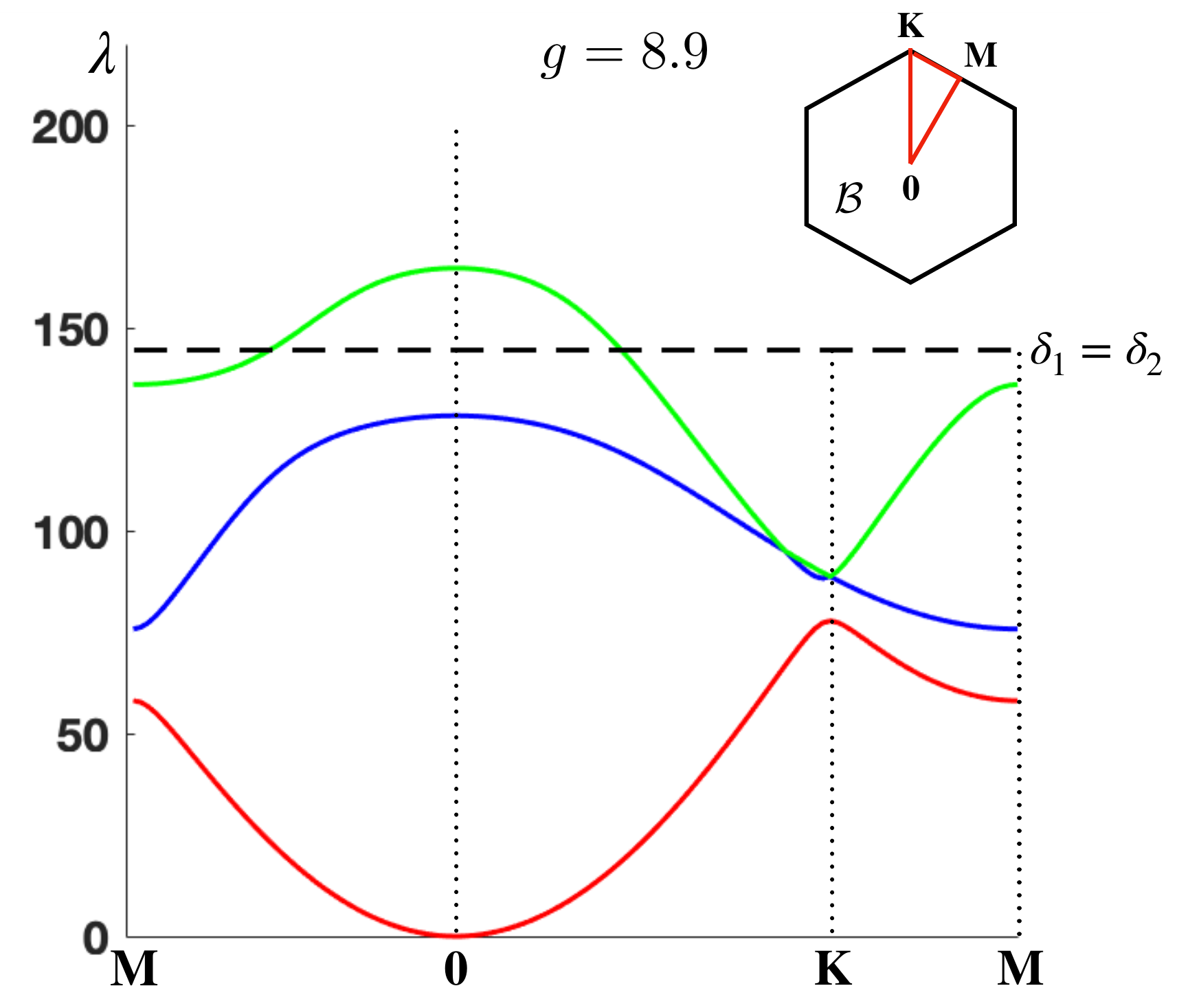}
\hspace{0.1cm}
\includegraphics[width=5.1cm]{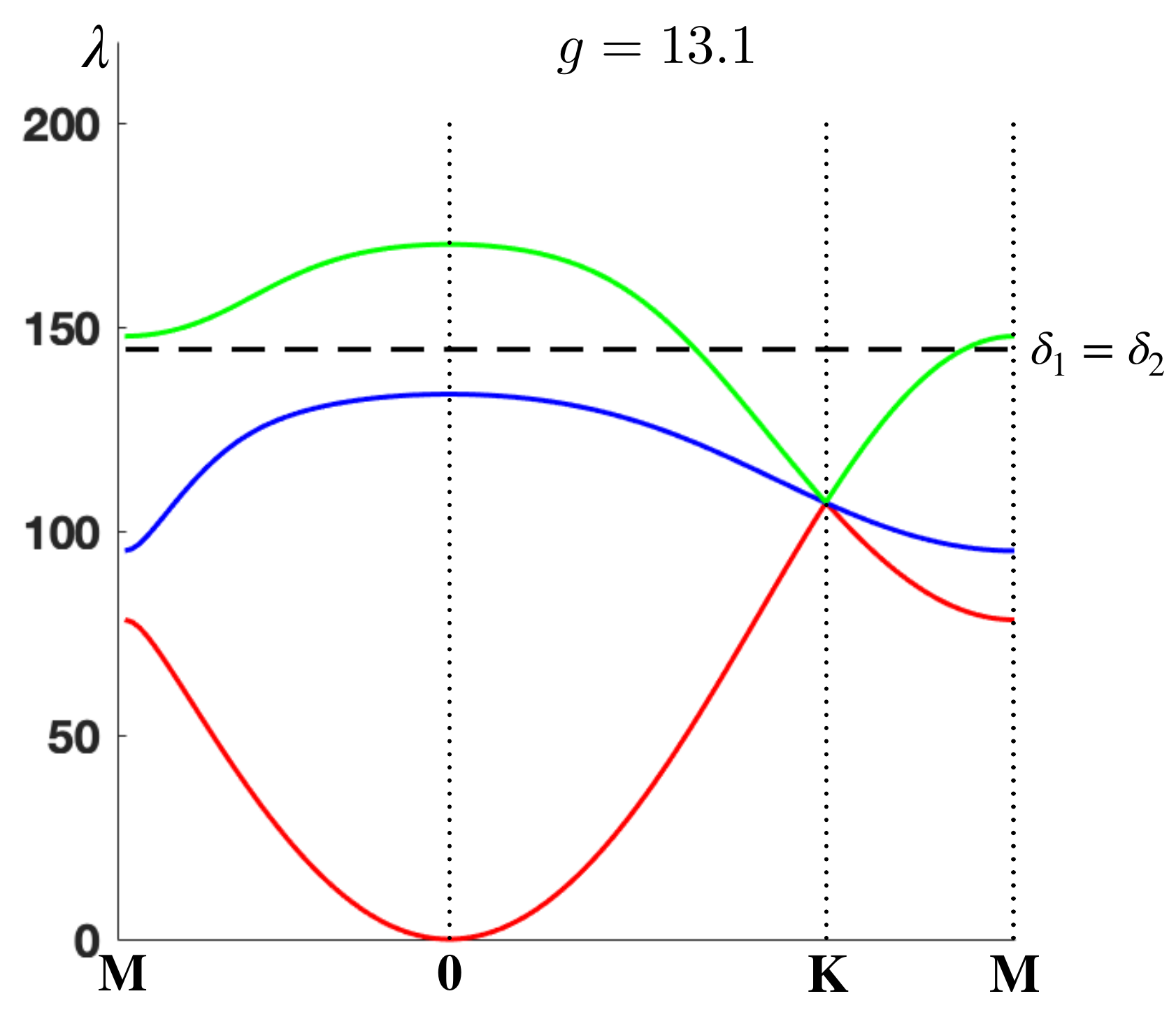}
\hspace{0.1cm}
\includegraphics[width=5.1cm]{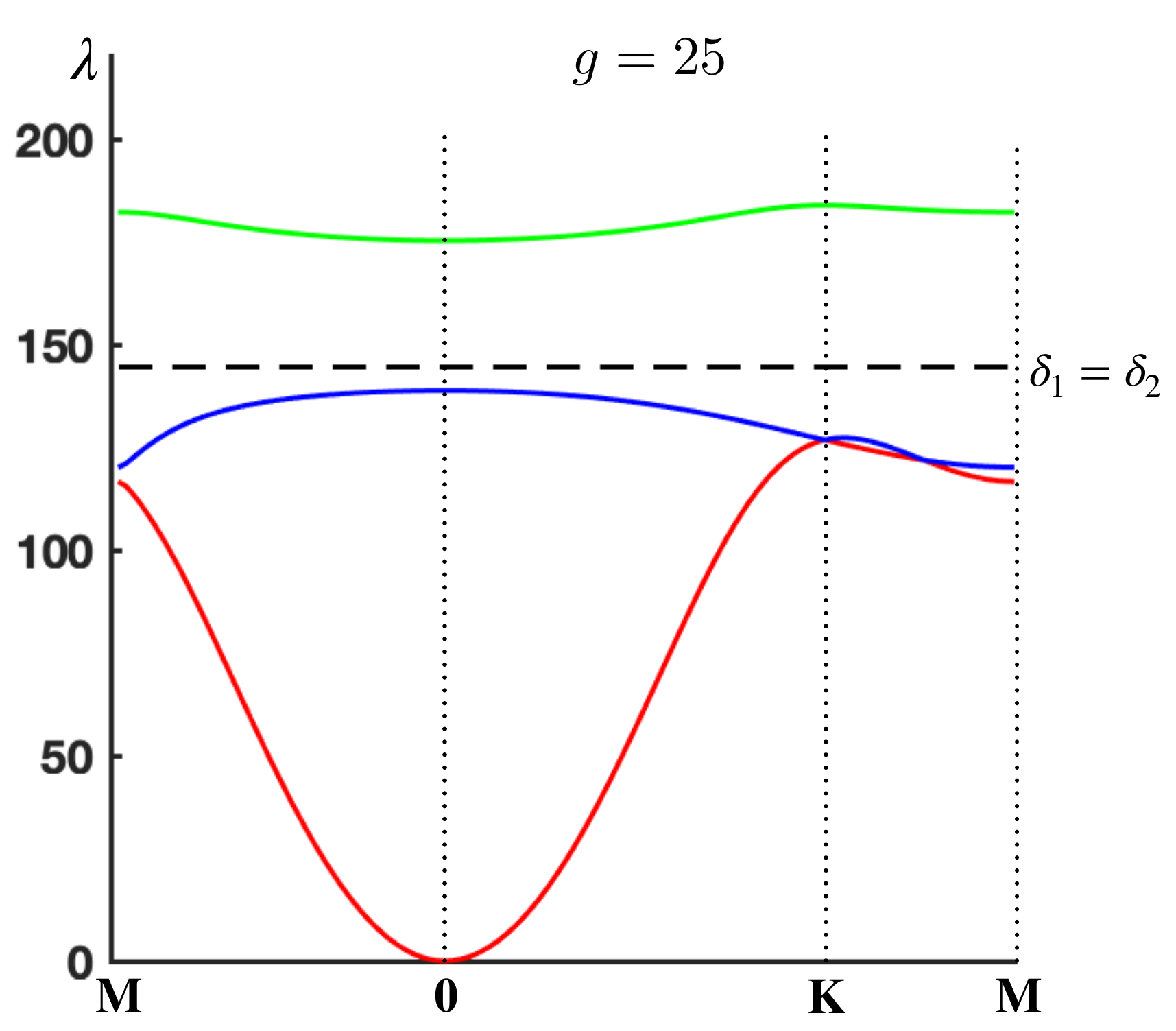}
\includegraphics[width=5.1cm]{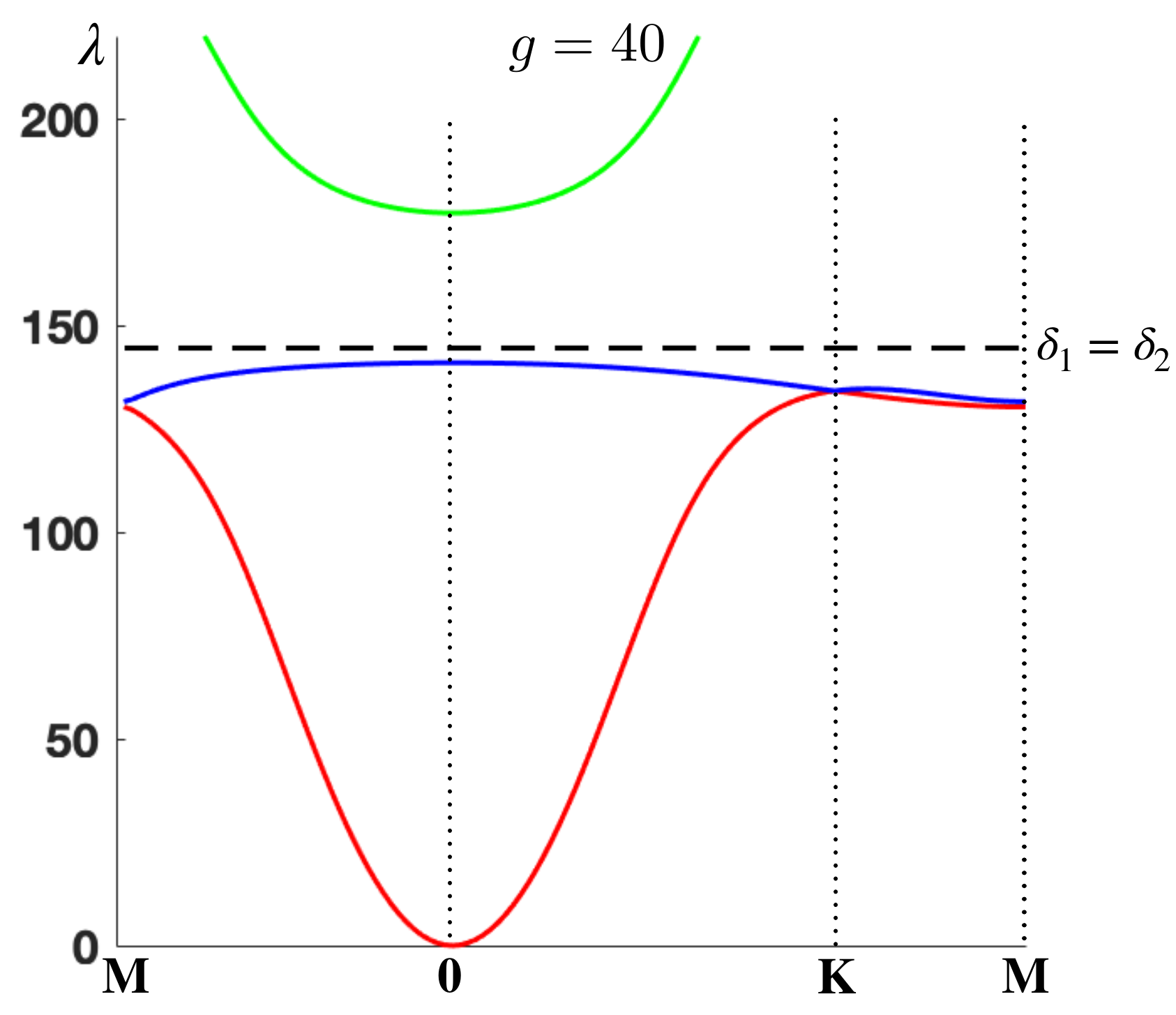}
\hspace{0.1cm}
\includegraphics[width=5.1cm]{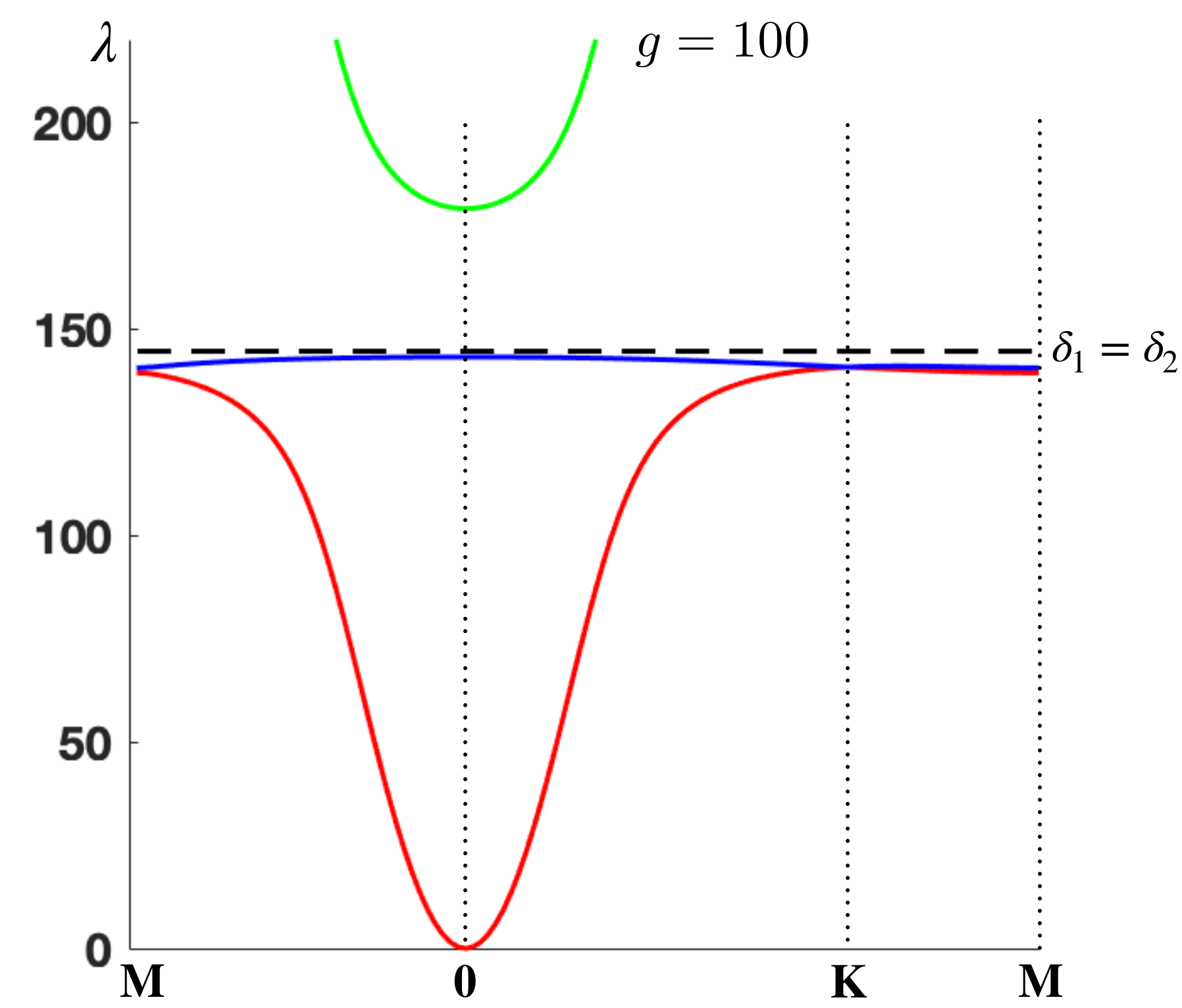}
\hspace{0.1cm}
\includegraphics[width=5.1cm]{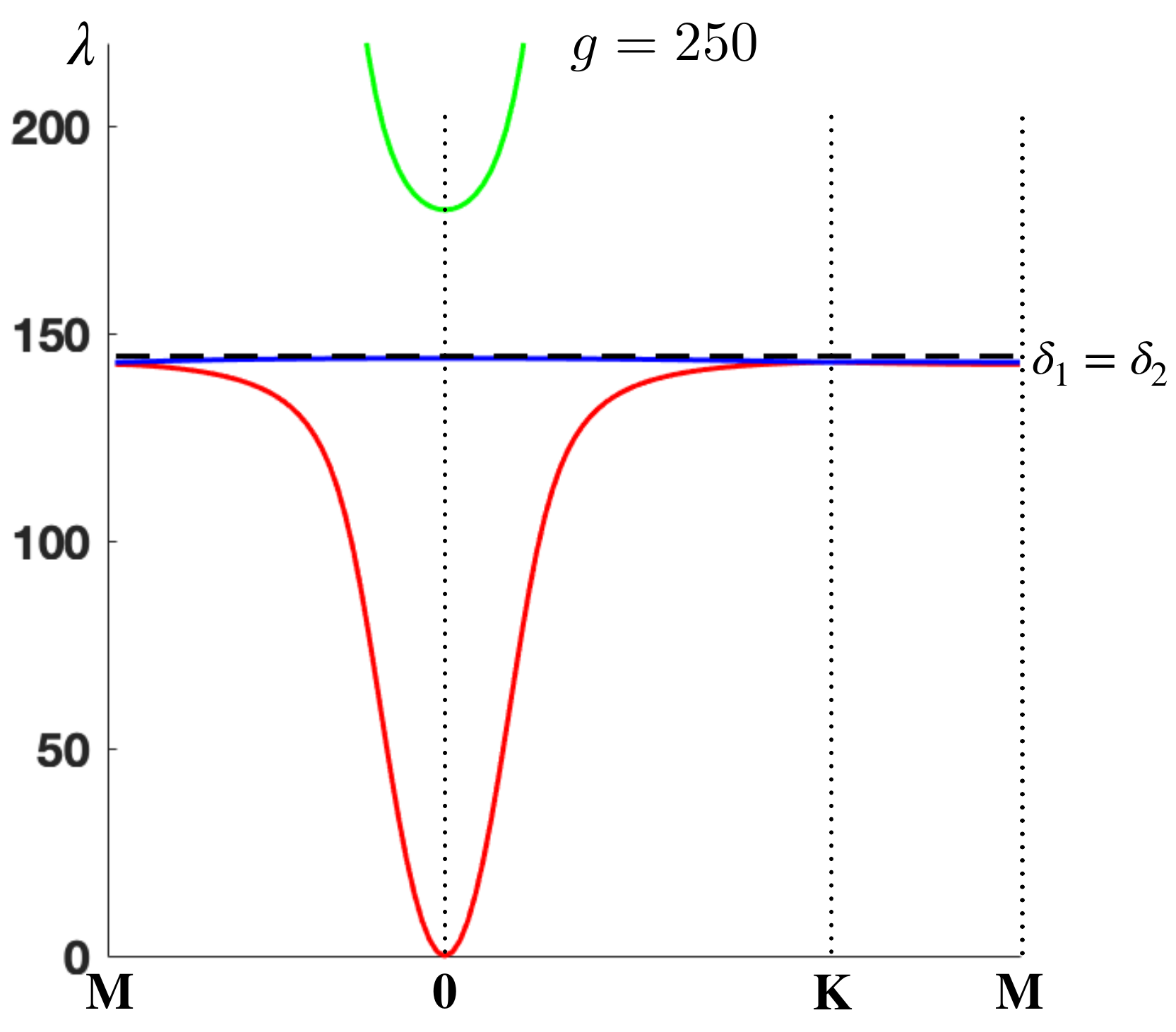}
\end{center}
\caption{Dispersion maps $\bk\mapsto \lambda_1(\aspar;\bk)$, $\lambda_2(g;\bk)$ and  $\lambda_3(g;\bk)$
 plotted along the boundary, ${\bf M}\to\bf{0}\to {\bf K}\to{\bf M}$, of the symmetry-reduced Brillouin zone for the indicated values of the contrast $\aspar$. For all $\aspar>0$, $\lambda_1(\aspar;{\bf 0})=0$.
  However, as $\aspar\uparrow\infty$, 
$\lambda_1(\aspar;\bk)$  converges,  uniformly on any compact subset of  $\mathcal{B}\setminus \{0 \}$, to the two-fold degenerate smallest  eigenvalue $\delta_1=\delta_2>0$ of the Dirichlet Laplacian   (horizontal dashed line) for the domain $\cell^A\cup\cell^B$, the union of two disc inclusions of radius $R_0=0.2$.  The second dispersion map,  $\lambda_2(g;\bk)$, converges uniformly on the \underline{full} Brillouin zone $\mathcal{B}$ to $\delta_1=\delta_2>0$ .  Here, $\lambda=\delta_1=\delta_2=\left(z_{0,1}/R_0\right)^2$, where $z_{0,1}$ denotes the first zero of the Bessel function $J_0(z)$.  For typical values of $g$, there is a Dirac point (conical intersection of bands) over the high-symmetry quasi-momentum $\bK$ (and $\bK^\prime$, not shown). For the simulated structure, this Dirac point is situated between the second and third bands for $g=8.9$ and between the first and second bands for 
 $g=25, 40, 100, 250$; see Theorem \ref{th.degeneracy}.  A (transitional) triple degeneracy occurs at $\bK$ (resp. $\bK^\prime$) for $g=13.1$. Furthermore, for large $g$ there is a gap between the $2^{nd}$ and $3^{rd}$ spectral bands; see Theorem \ref{thm.convunifbands} .}
\label{fig.3eigenvalbig}
\end{figure}

\item {\it Theorem \ref{th.degeneracy} and Corollary \ref{cor.Diracpointfirstdispcurv}, Existence of Dirac points for $g$ sufficiently large:} 
In Section \ref{DP} we prove, under a non-degeneracy condition (non-zero Dirac velocity, $ v_D(g)$),
 that for $g$ sufficiently large, $\bbA_{\aspar}$ has Dirac points $(\lambda_D(g),\bk_D)$.
These occur at intersections of the first two dispersion surfaces at an energy $\lambda_D(g)$, for   $\bk_D$  equal to any of the $6$ vertices of  $\mathcal{B}$ ($\bK$ and $\bK^\prime$, high symmetry quasi-momenta and their images by rotations of $2\pi/3$ and $4\pi/3$ centered at $0$, see section \ref{honey-sym}
and Remark \ref{rem.Diracpointsequivalence}).  By Theorem \ref{thm.convunifbands}, since $\bK_\star=\bK, \bK^\prime$ are non-zero, it follows that $\lambda_1(g;\bK_\star)=\lambda_2(g;\bK_\star)=\lambda_D(g)$ tends to the (multiplicity 2)  Dirichlet eigenvalue $\delta_1=\delta_2$ 
 of the inclusion subset $\cell^A\cup \cell^B $ of the fundamental domain $\Omega$.

 In Figure \ref{fig.3eigenvalbig}, Dirac points appear as linear (transverse) crossings of two curves. 
 For Schroedinger operators \cite{FW:12}, and divergence form elliptic operators with smooth coefficients \cite{LWZ:18})
  we have that: for all but a discrete set of values of the well-depth parameter (respectively, contrast parameter),
 Dirac points occur at least  at one energy over the $6$ vertices of $\mathcal{B}$.
 In the setting of this article, Figure  \ref{fig.3eigenvalbig} shows the emergence of Dirac points for sufficiently large  $g$ between the first two bands. For a detailed discussion of the observed transfer of the Dirac point 
from the second and third bands to the first and second bands as $\aspar$ increases, see Section \ref{numerics}, specifically the discussion of Figure \ref{fig.eigenvaltransition}.

Let $\bk_D=\bK$. The eigenvalue $\lambda_D(g)$ has a corresponding eigenspace of dimension two with associated orthonormal basis $\{\Phi_1(\aspar,\cdot),\Phi_2(\aspar,\cdot)\}$ where $\Phi_1(\aspar,\cdot)=P_{1,\bK}^A+\mathcal{O}(1/g)$ and 
$\Phi_2(\aspar,\cdot)=\pm \rme^{-2\rmi \pi/3}P_{1,\bK}^B+\mathcal{O}(1/g)$ for large $g$. Here,  $P_{1,\bK}^A$ and $P_{1,\bK}^B$
are $\bK-$quasi-periodic superpositions of single inclusion Dirichlet states.
  $P_{1,\bK}^A$ is supported on the $A-$ inclusions and $P_{1,\bK}^B$ is supported on the $B-$ inclusions.
Hence, these Dirichlet states play a role analogous to {\it atomic orbitals} in the strong-binding Schroedinger analysis \cite{FLW-CPAM:17}. Numerical simulations demonstrating this behavior are presented in Sections \ref{numerics} and \ref{sec.higherbandscirccase}.

The disjointness of the supports of the leading order terms of $\Phi_j(\cdot,\aspar)$ implies, for the Dirac velocity: 
\[ v_D(g)=v_D^{(1)}\aspar^{-1}+\mathcal{O}(\aspar^{-2}).\]
 Hence, the Dirac velocity tends to zero when $\aspar\uparrow\infty$ and the Dirac cone becomes increasing flat as $\aspar$ increases; see Figure \ref{fig.3eigenvalbig} for $\aspar=250$. This is consistent with the uniform convergence of dispersion surfaces away from $\bk={\bf 0}$. We conjecture that $v_D^{(1)}\ne0$. Numerical computations show this to be the case (see Section \ref{numerics}) and hence for $g\gg1$,  $(\lambda_D, \bk_D)$  is a Dirac (conical) point in the sense of Definition \ref{def.Diracpoints}.

Our proof of Theorem \ref{th.degeneracy} and its Corollary \ref{cor.Diracpointfirstdispcurv}, is based
on the set of sufficient conditions proved in Theorem \ref{Thm.weakLyapounovSchmidtred}. 
Analogous conditions were proved in \cite{FW:12} using a Schur complement / Lyapunov-Schmidt reduction strategy, 
  for Schroedinger operators
 and for elliptic operators with smooth coefficients \cite{LWZ:18} by the same overall strategy. But since $\bbA_{\aspar}$ has discontinuous coefficients the  proof of Theorem \ref{Thm.weakLyapounovSchmidtred} proceeds via the weak formulation of the elliptic eigenvalue problem. 
Furthermore, since we also study Dirac points arising in higher energy bands, the natural energy form is not coercive; indeed it is infinite dimensionally indefinite.
We use the notion of $T$-coercivity to transfer the problem to a coercive setting, enabling bounded invertibility of the relevant operator to obtain the reduction; see  Section \ref{sec.diracpointthoerem}.   \\
A precise formulation of the above results with extensions to Dirac points in higher energy spectral bands appears in Sections \ref{sec.condP} through \ref{sec.diracpointthoerem}.

\item {\it Theorem \ref{th.asympteigen}, Corollaries \ref{cor.defphi1} and \ref{cor.limiteigenstateH1norm}, and Theorem \ref{th.fermveloc} Asymptotic expansions of Bloch eigenmodes and Dirac velocity at $\bK_*=\bK,\, \bK'$:} 
In Section \ref{sec-asympresult} we construct asymptotic expansions to any order in $g^{-1}$  of  a) the Dirac energy  $\lambda_D(g)$, and b) an associated  orthonormal basis, $\{\Phi_1(\aspar,\cdot),\Phi_2(\aspar,\cdot)\}$, of Bloch eigenfunctions. Such approximate eigenpairs are called {\it quasi-modes}.
 By general self-adjointness principles, presented in Appendix \ref{app.Appendixquasimode} in the weak formulation,
 the existence of Dirac quasi-modes with small residual implies that the actual
  Dirac eigenpairs (established earlier) are  within a neighborhood whose size is set by the size of this residual. Thus, this justifies the expansions. These mode expansions are then used
to obtain an asymptotic expansion  of the Dirac velocity, $v_D(\aspar)$, to any order in $g^{-1}$.
 \end{enumerate}

\subsection{Connection to previous analytical works on high contrast media}

High contrast elliptic operators have been studied  in media where the contrast and the geometrical structure (e.g. inclusion length scale) are coupled. For example,  the articles \cite{Cherednichenko:2016,Fortes:11,Suslina:2013,Zhikhov:05} concern the high contrast homogenization regime, in which the size of the unit cell depends on the material contrast, and the articles \cite{Figotin:98,Gomez:2021} concern media which contain asymptotically thin structures.
In this article, inclusions defined by $\sigma_\aspar$ have {\it fixed} geometry and we take the contrast, $\aspar$, between the two material parameters to be large. We can make use of the variational methods of Hempel and Lienau  \cite{Hem-00}, who  studied the limiting behavior of the band spectrum and obtained criteria for the opening of gaps in the spectrum of $\bbA_\aspar$ for $\aspar\gg1$. Further developments along these lines concerning the density of states of such media were obtained in \cite{Frie:02}. 
The main focus in \cite{Hem-00} is the use of high contrast to open gaps in the energy spectrum. In this paper we apply these methods together the symmetries of novel structures  to study important band structure properties such as Dirac points. We believe our approach can be used to study other classes spectral degeneracies; see, e.g., \cite{kmow:18_20}.

An alternative potential theoretic formulation of the spectral problem for operators with piecewise constant coefficients,
  in terms of  boundary integral equations, is developed in \cite{Ammari2009layer}.
   But unlike, the methods of this paper, this formulation is not easily adapted
if the media has strong heterogeneities or anisotropy, {\it i.e.} if the medium is not piecewise constant. In a square lattice geometry, using such a potential theoretic approach, the authors of  \cite{Lipton2017:creating} study  the question of the existence and width of  spectral gaps  for sufficiently large contrast. They provide, in particular for disc-shaped inclusions,
 a sufficient condition with a lower bound on the contrast (in terms of inclusions radii and inclusions relative distances)  to open a gap between consecutive energy bands. Recently, high contrast elliptic operators in honeycomb structures have been studied via a potential theoretic approach for the ``inverted'' case of a honeycomb-lattice of acoustic ``Minnaert bubbles'', in which $\sigma_g$ is large within the honeycomb inclusion set and bounded outside the inclusion set \cite{Ammari:2020}.  Their results apply to the two lowest bands and, so far, have no equivalent for higher energy bands.

Concerning asymptotic expansion of  eigenpairs in high contrast media,  we  mention the works \cite{Ammari2009layer,Nazarov:2020,Hut:2014,Lipton:2017:Bloch}. In \cite{Hut:2014} the case of a bounded domain with Dirichlet boundary conditions and high contrast  coefficient within the inclusions is studied.
 Using approximation by quasi-modes, they provide an asymptotic expansion of the eigenvalues  and spectral projectors to any order. In \cite{Ammari2009layer}, 
using Riesz homomorphic functional calculus and a potential theory, the authors give the leading order term of a  Bloch simple eigenvalue  in a  square lattice at any non-zero quasimomentum. By a similar technique and in the same geometry, the authors of   \cite{Lipton:2017:Bloch}  prove the analyticity of  simple Bloch eigenvalues or of the eigenvalue group, for the case of degenerate eigenvalues.
The coefficients involved in their series expansion are defined implicitly via the Riesz holomorphic functional calculus.
Finally,  recently in \cite{Nazarov:2020}, the first terms of the asymptotic  expansion of  Bloch eigenvalues were derived for the limiting case of a single disc-shaped inclusion whose closure touches the boundary of a square unit cell. 
\medskip

We conclude  by mentioning a number of interesting natural questions to consider going forward:\\
 (1) Are there results on the scaled limiting shape 
of dispersion surfaces which intersect in a Dirac point?  Such a result was proved for the first two bands of Schroedinger operators in the strong binding regime in \cite{FLW-CPAM:17}.\\
(2) Dirac points are known to occur in generic honeycomb Schroedinger operators \cite{FW:12}. The methods of \cite{FW:12} were used to prove the analogous result for divergence form operators with smooth coefficients \cite{LWZ:18}. It would be of interest to develop a method, which through the weak formulation, can handle divergence form operators with non-smooth coefficients, $\mathbb{A}_\aspar$.  \\
(3) And finally, it would be of great interest to consider the propagation of edge modes in the current context. For honeycomb media, these have been studied (a) for the Schroedinger equation with a sharply (zigzag-) terminated honeycomb structure \cite{FW:20}, and domain wall line-defects \cite{Drou-Wein:2020,Drou:19,FLW-CPAM:17}, and for (b) Maxwell's equations in a honeycomb structure with a domain wall \cite{LWZ:18}. It would be interesting to use the techniques of the present paper to study edge states for line-defect perturbations of $\bbA_\aspar$.

\subsection{Outline of the paper}
  In Section \ref{sec-analytic-prem}, we give a precise formulation of the mathematical problem
   and review the spectral theory of  elliptic operators in periodic  structures.\\
   In Section \ref{HL-thy}, we summarize the variational theory developed by Hempel and Lineau in \cite{Hem-00}. We provide some extensions of their theory concerning  strict monotonicity Bloch eigenvalues as functions in the contrast parameter $\aspar$, and  regarding  uniform convergence of band dispersion functions  on the Brillouin zone as $\aspar\uparrow\infty$. \\
 In Section  \ref{sec.condP},  building on  \cite{Hem-00}, we introduce a spectral isolation  condition  $\condS$ 
  (Definition \ref{Def.condS}), expressed in terms of eigenvalues of the Dirichlet Laplacian  for a single inclusion. We then establish a relation between this Dirichlet spectrum  and the $L^2_{\bk}$-spectrum (for $\bk\neq 0$) of $\bbA_g$ for large $\aspar$. Furthermore, we characterize the high contrast global behavior of dispersion surfaces and show the existence of a spectral gap for dispersion surfaces whose high contrast limit is given in terms of Dirichlet eigenvalues that satisfy the condition $\condS$. Condition $\condS$ always holds for the first Dirichlet eigenvalue. Hence, these results hold  for the lowest-lying (first) dispersion surface. \\
 In Section \ref{DP} we embark on the study of honeycomb operators $\bbA_\aspar$, using general results of the previous sections.
Section \ref{honey-sym} discusses the symmetries of honeycomb operators, operators which commute 
   with $\bbA_\aspar$, and important consequences for the spectral analysis.  
In connection with this, the Appendix \ref{sec-appendixcommutation}, deals with technical questions arising due to the discontinuities of the elliptic coefficient $\sigma_{\aspar}$. The notion of a Dirac point is defined 
    in Section \ref{DP-def}, and  sufficient conditions for their existence are given in  Theorem  \ref{Thm.weakLyapounovSchmidtred}.  In Section \ref{orbitals} we construct states which capture the limiting behavior of the  Floquet Bloch eigenspace of Dirac points; these states are $\bK_*-$quasi-periodic superpositions of translates of the Dirichlet eigenstates of the single isolated inclusion. In Section \ref{DP-exist} we 
 prove  Theorem \ref{th.degeneracy} on the existence of Dirac points, associated with Dirichlet spectrum of $-\Delta$ satisfying
  $\condS$, under the non-degeneracy condition that the Dirac velocity, $v_D(g)$, is non-zero.
  Condition $\condS$ always holds for the first Dirichlet eigenvalue and hence Theorem \ref{th.degeneracy} applies to give Dirac points 
 within the two first dispersion surfaces, modulo the non-degeneracy assumption, which we address numerically.
 \\
 In Section \ref{sec-asympresult},  we prove asymptotic expansions to any order in $\aspar^{-1}$ for: the Dirac eigenvalue,
  $\lambda_D(\aspar)$,  a natural choice of orthonormal of its corresponding $2-$ dimensional eigenspace, and for the Dirac velocity $v_D(\aspar)$.  We use the weak formulation of the  quasi-mode principle, discussed in Appendix  \ref{app.Appendixquasimode}.
 \\
In  Section \ref{numerics} we present the results of numerical simulations which illustrate
 our rigorous results for the first $2$ spectral bands of $\bbA_\aspar$. \\
 In Section \ref{sec.higherbandscirccase} we show, for the case of disc-shaped inclusions,  that the spectral isolation condition  $\condS$ is satisfied for all radial Dirichlet eigenpairs, {\it i.e.} those whose eigenvalues derive  from  zeros of the Bessel function $J_0(z)$. Hence, for this special geometry, the results of Sections \ref{sec.condP}-\ref{sec-asympresult} apply to give
 Dirac points at intersections of an infinite sequence of energy  band pairs. As an illustration, we present numerical simulations for the  $11^{th}$ and $12^{th}$ spectral bands. 
 \\
 In Section  \ref{sec.diracpointthoerem}, we prove Theorem \ref{Thm.weakLyapounovSchmidtred} on sufficient conditions
  for the existence of Dirac points. The proof uses a weak formulation of Lyapunov-Schmidt / Schur complement reduction scheme
   of previous work and makes use of $T-$coercivity to define an appropriate resolvent for the reduction, which applies when Dirac points sit among spectral bands above the first two.  \\
Finally, in the Appendix \ref{sec-extensionresults}, we extend our approach and results to a class of divergence form elliptic operators  which, in electromagnetism, model inhomogeneous and anisotropic  inclusions  in an heterogeneous and anisotropic bulk.
 
 \subsection{Notations, definitions and conventions}\label{sec-not}  
 $\bullet$ We denote by  $\NO$ and $\N$  respectively the set of non-negative integers and positive integers.\\
$\bullet$ For two Banach spaces $E$ and $F$, $B(E,F)$ is the Banach algebra of bounded linear operators from $E$ into $F$. We write $B(E)=B(E,E)$ .\\
$\bullet$ $\left\langle \cdot,\cdot \right \rangle _{_{E^*,E}}$ denotes the duality product between the Banach space   $E$ and  its dual $E^*$.\\
$\bullet$ In this paper, all the considered Hilbert spaces $\mathcal{H}$ are endowed with a complex inner product. This inner product and all the considered sesquilinear forms, defined on $\mathcal{H} \times \mathcal{H}$, are  antilinear with respect to the second variable.\\
$\bullet$  Except in the context of our definition of lattice, the symbol $\oplus$ refers to both the direct sum of two closed spaces in a Banach space and the orthogonal direct sum between two closed spaces in a Hilbert space. Unless explicitly specified otherwise, the symbol $\oplus$  refers to  an orthogonal direct sum.  
\\
$\bullet$ If $\bbA:D(\bbA)\subset \mathcal{H}\to \mathcal{H}$ is a selfadjoint (resp. normal) operator on a Hilbert space $\mathcal{H}$, one denotes by $\bbE_{\bbA}(\cdot)$ its associated spectral measure defined from the Borel sets of $\R$ (resp. $\C$) into the projection of $B(\mathcal{H})$ (see \cite{Dau:85-3,RS1,Sch:12}).  \\
\noindent $\bullet$ Let $\mathcal{H}$ be a Hilbert space, $\bbA:D(\bbA)\subset \mathcal{H}\to \mathcal{H}$ a unbounded linear operator and $\bbB: \calH \to \calH$ a bounded linear (or bounded anti-linear) operator.
One says that $\bbA$ commutes with $\bbB$, if $D(\bbA)$ is stable under $\bbB$ (i.e. $\bbB(D(\bbA))\subset D(\bbA)$) and if the commutator $[\bbA,\bbB]=\bbA \bbB-\bbB \bbA $ vanishes
on $D(\bbA)$.\\ For the particular case where  $\bbA$ is a self-adjoint and $\bbB$ is linear, this definition  is equivalent to the commutation of $\bbB$ with the spectral measure  $\bbE_{\bbA}(\cdot)$ of  $\bbA$ in the sense 
that the commutator $[\bbB,\bbE_{\bbA}(I)]$ vanishes on  $\mathcal{H}$ for any Borel sets $I$ of $\R$ (see Proposition 5 p. 145 of \cite{Dau:85-3}). \\
Furthermore if $\bbB$ is a normal operator, it  is equivalent to the commutation of the spectral measures of $\bbB$ with $\bbA$, i.e $\bbE_{\bbB}(J)D(\bbA)\subset D(\bbA)$ and $[\bbA, \bbE_{\bbB}(J)]$ vanish on $D(\bbA)$ for any Borel sets $J$ of $\C$ (by combining Proposition  5.27  p. 107--108 of \cite{Sch:12}  and Proposition 5 p. 145 of \cite{Dau:85-3}). Finally, this turns out to be also equivalent to the commutation of the  two spectral measures, namely $[\bbE_{\bbA}(I),\bbE_{\bbB}(J)]=0$ on $ \calH$, for any Borel sets $I$ of $\R$ and any Borel sets $J$ of $\C$ (see Proposition  5.27  p. 107--108 of \cite{Sch:12}).\\
$\bullet$ Let $R$ denote the matrix which rotates a vector in $\R^2$  clockwise by $2\pi/3$:
\begin{equation}\label{eq.defrotmat}
R=\begin{pmatrix}\displaystyle - \frac{1}{2} & \displaystyle\frac{\sqrt{3}}{2} \\[4pt] \displaystyle- \frac{\sqrt{3}}{2}   &\displaystyle -\frac{1}{2} \end{pmatrix}.
\end{equation}
Its eigenvalues and a choice of normalized eigenvectors are given by:
\begin{equation}\label{eq.defroteigenelem}
R \, \xi= \tau \, \xi \ \ \mbox{ and } \ \ R \, \overline{ \xi}= \overline{\tau}  \, \overline{ \xi}, \ \ \mbox{ where } \tau=\rme^{2\rmi \frac{\pi}{3} } \ \mbox{ and } \ \xi=\frac{1}{\sqrt{2}}\, ( 1 \, , \rmi )^{\top}.
\end{equation}
\\
$\bullet$ Triangular lattice, $\Lambda=\Z\bv_1\oplus\Z\bv_2$, where
\[ \bv_1=\Big(\frac{\sqrt{3}}{2},  \,\frac{1}{2}\Big)^{\top} \mbox{ and } \bv_2=\Big(\frac{\sqrt{3}}{2},  \,-\frac{1}{2}\Big)^{\top}.\]
\\
$\bullet$ Honeycomb, $\mathbb{H}= \Lambda_A\cup\Lambda_B$, where $\Lambda_J=\bv_J+\Lambda$
 with base points: 
 \[ \bv_A=(0,0)^{\top} \mbox{ and }  \bv_B=\Big(\frac{1}{\sqrt{3}},0\Big)^{\top}.\]
 \\
$\bullet$ Dual lattice $\Lambda^*=\mathbb{Z}\bk_1\oplus\mathbb{Z}\bk_2 $,  where
\[
 \bk_1=2\pi \Big(\frac{\sqrt{3}}{3},  \, 1\Big)^{\top} \mbox{ and } \bk_2=2\pi \Big(\frac{\sqrt{3}}{3},  \, -1\Big)^{\top}.
\] \\
 $\bullet$ A choice of hexagonal tile center is given by
$
\bx_c=\frac{1}{2}\left( \frac{1}{\sqrt{3}}, -1\right)^{\top} \ .
$ 
It is located at a vertex of the fundamental cell $\cell$; see Figure \ref{fig.funcell}). Note that $R\bx_c=-\bv_B$. \\ [4pt]
$\bullet$ $\bK$ and $\bK'$ are the $2$ independent high-symmetry quasi-momenta, at vertices of the Brillouin zone $\mathcal{B}$ depicted in Figure \ref{fig.funcell}:
\begin{equation}\label{eq.defKK'}
\bK=\frac{1}{3}(\bk_1-\bk_2)=\left(0, \frac{4 \pi}{3}\right)^{\top}  \ \mbox{ and } \ \bK'=-\bK.
\end{equation}

\noindent{\bf Acknowledgements:} M.C. wishes to thank the Department of Applied Physics and Applied Mathematics for its hospitality during the 2017-18, when this research was initiated. M.C. and M.I.W. were supported in part by Simons Foundation Math + X Investigator Award \#376319. M.I.W. was also supported by US National Science Foundation grants DMS-1412560, DMS-1620418 and DMS-1908657. 

\section{Analytical preliminaries}\label{sec-analytic-prem}
In this section we first introduce the equilateral triangular lattice and honeycomb structure. We then define the honeycomb medium through the piecewise constant coefficient $\sigma_g$ of the operator $\bbA_{\aspar}$.  We employ notations and conventions similar to those  in  \cite{FW:12}.

\subsection{Triangular lattice, honeycomb structure and  the periodic medium}\label{sec-detatailedform}
The equilateral triangular lattice in $\mathbb{R}^2$ is given by:
$$
\Lambda=\mathbb{Z}\bv_1\oplus\mathbb{Z}\bv_2=\{ m_1 \bv_1+ m_2 \bv_2, \, (m_1,m_2)\in \mathbb{Z}^2 \} .
$$
Given base points $\bv_A$  and $\bv_B$,
we consider  
equilateral triangular sub-lattices:  $\Lambda_A= \bv_A+\Lambda  \mbox{ and } \Lambda_B= \bv_B+\Lambda$. 
The honeycomb structure, $\HH$,  is the union of these two interpenetrating  sub-lattices:
\begin{equation*}
\HH\ =\ \Lambda_A\ \cup\ \Lambda_B .
\label{Hdef}\end{equation*} 
As a fundamental domain in $\mathbb{R}^2_{\bx}$, we choose the diamond-shaped region, 
 which contains $\bv_A$ and $\bv_B$:
\begin{equation*}
\cell=\Big(-\frac{1}{\sqrt{3}},0\Big)^{\top}+\big\{ \theta_1 \bv_1+ \theta_2 \bv_2,  0 < \theta_i < 1 ,\, i=1,2 \big\}\ ;
\label{cell}\end{equation*}
see Figure \ref{fig.funcell}.  Denote the discrete translates of $\cell$ by lattice vectors by:
\begin{equation*}
\cell_{mn}={\cell}+m\bv_1+n\bv_2,
\label{cell-mn}
\end{equation*} 
where  $\cell=\cell_{00}$. 
The family of regions $\overline{\cell_{mn}},\ (m,n)\in\Z^2$  is a tiling of  $\bbR^2$.

\subsection{The honeycomb coefficient $\sigma_g(x)$}\label{sec:sigma}

In this section we define  $\sigma_{\aspar}(x)$ to be piecewise constant on the fundamental cell, $\cell$,  and extend it to be $\Lambda-$ periodic on $\R^2$. We begin with a discussion of constraints on the  subset of $\cell$ consisting of inclusions.\\ \\
 Throughout  this article we assume:
\begin{itemize}
\item [($\Omega$.i)] The inclusion $\cell^{A}$ is a non-empty simply connected open subset of $\Omega$ with a Lipschitz boundary $\partial\Omega^A$, and with $\bv_A\in\cell^{A}$.
\item [($\Omega$.ii)] The inclusion $\cell^B$ is the translate of $\cell^A$ by $\bv_B$.
 \[\textrm{We denote the inclusion subset of $\cell$ by  $\cell^{+}:=\cell^A\cup \cell^B$.}\]
\item[($\Omega$.iii)]  The inclusions are disjoint;\ $\overline{\cell^{A}}\cap \overline{\cell^{B}}=\emptyset$.
\item[($\Omega$.iv)]  The inclusion set is uniformly bounded away from the boundary, $\partial\cell$, of the fundamental cell;\ 
${\rm dist}\big(\overline{\cell^+},\partial\cell\big)>0$ .
\end{itemize}

Next  we impose conditions, used in the construction of a honeycomb symmetric medium, $\sigma_{\aspar}(x)$,
and the associated operator $\mathbb{A}_\aspar$.
We further require the following assumptions on $\cell^{A}$ and $\cell^{B}$ linked to the honeycomb symmetries:
\begin{itemize}
\item[($\Omega$.v)] $\cell^{A}$ is invariant under the $2\pi/3-$ rotation about origin $\bv_A={\bf 0}$. That is, \[R(\cell^{A})=\cell^{A},\ \textrm{where $R$ is the clockwise  $2\pi/3-$ rotation matrix};\]
we say that $\cell^{A}$ is centered at $\bv_A$.
\item[($\Omega$.vi)]  $\cell^{A}$ is invariant under inversion with respect to $\bv_A={\bf 0}$. That is,
 \[ \cell^{A}=- \cell^{A}=\{ - \bx: \bx \in  \cell^{A} \}.\]
\end{itemize}
Note that  ($\Omega$.v) and ($\Omega$.vi) together imply that $\cell^{A}$ is also $\pi/3$ rotationally invariant about $\bv_A$.

Let $\cell^-$ denote the part of $\cell$, which is outside the inclusion set:
 \[ \cell^-=\cell\setminus \overline{\cell^+}\] 
and thus
\[ \cell=\cell^+ \cup \partial \cell^+\cup \cell^-\ ;\qquad \textrm{see Figure \ref{fig.cell}.}\] 

Our periodic partial differential operator, $\mathbb{A}_\aspar$,  is specified by a piecewise constant  constitutive law, $\sigma_\aspar=\sigma_\aspar(\bx)$. We first define $\sigma_\aspar$ on $\cell$ by
\begin{align}
\sigma_{\aspar}(\bx)\ =\ \begin{cases}
1, &\ \bx\in \cell^{+}=\cell^A\cup \cell^B\\
g, &\ \bx\in \cell^- ,
\end{cases}
\label{sig-def}
\end{align}
and then extend $\sigma_\aspar$ to be defined on all $\R^2$
 as a  $\Lambda-$ periodic function. This extension is smooth across cell interfaces but has discontinuous jumps across inclusion boundaries.
Referring to Figure \ref{fig.cell}, an admissible choice of $\sigma_\aspar$ is obtained by 
 taking  $\sigma_\aspar\equiv1$ on all inclusions and $\sigma_\aspar\equiv g$ in their complement with respect to $\R^2$.
 \begin{figure}[!h]
\centering
 \includegraphics[width=0.8\textwidth]{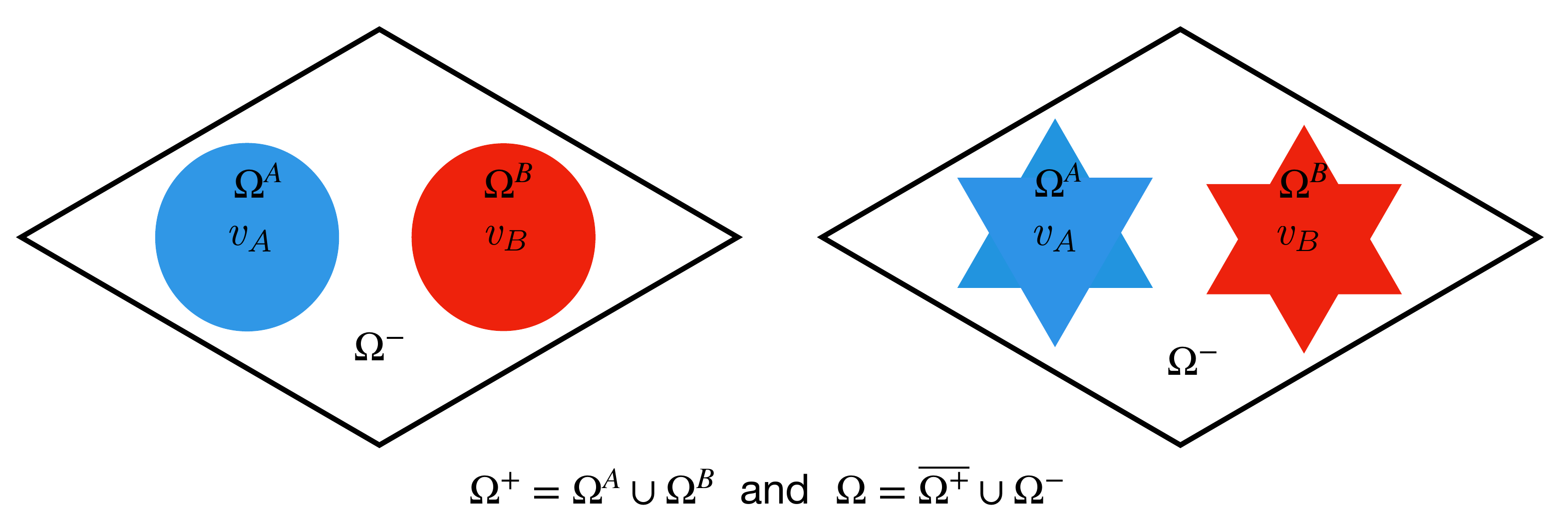}
   \caption{Examples of inclusion subsets, $\Omega^+$, of the fundamental cell, $\cell$,  satisfying conditions ($\Omega$.i)-($\Omega$.vi). Note that the boundaries of the inclusions in the right panel are Lipschitz continuous but not $C^1$.}
 \label{fig.cell}
\end{figure}

\subsection{A quick review of Floquet-Bloch theory}\label{sec-Floqth}
Let $\Lambda^*\subset (\R^2_{\bx})^*=\R^2_{\bk}$ denote the lattice which is dual to $\Lambda$:
$$
\Lambda^*=\mathbb{Z}\bk_1\oplus\mathbb{Z}\bk_2=\{ m_1 \bk_1+ m_2 \bk_2, \, (m_1,m_2)\in \mathbb{Z}^2 \} ,
\quad \textrm{where}\quad \bk_l \cdot \bv_{m}=2\pi \delta_{lm}.
$$
A choice of  fundamental cell in  $\R^2_{\bk}$,
 the {\it Brillouin zone},  is the closed regular hexagon $\mathcal{B}$  shown in Figure \ref{fig.funcell}.
The family of regions $\mathcal{B}_{mn}=\mathcal{B}+m\bk_1+n\bk_2$ where $(m,n)$ varies over $\Z^2$ tiles $\R^2_{\bk}$.

For any  $\bk \in \bbR^2$, let $L^2_{\bk}=L^2_{\bk}(\bbR^2/\Lambda)$ denote the  Hilbert space of $\bk-$ quasi-periodic functions:
\begin{equation}\label{eq.defL2k}
L^2_{\bk}:=\{ f\in L^2_{\mathrm{loc}}(\bbR^2)\mid  f(\bx+\bv)=\rme^{\rmi  \bk\cdot \bv  } f(\bx), \mbox{ for a.e } \bx\in \bbR^2 \mbox{ and all}\ \bv\in \Lambda  \},
\end{equation}
 with  inner product  $
(f,g)_{L^2_{\bk}}=\int_{\cell} f(x)  \overline{g(x)} \rmd x
$, defined  for all $f,g\in L^2_{\bk}$, 
and denote by $\| \cdot \|_{L^2_{\bk}}$ the associated norm. (When the context is clear, we omit the subscript $L^2_{\bk}$ from the norm notation.) Note that  $L^2_{\bk=0}=L^2(\R^2/ \Lambda)$, the space of $\Lambda-$ periodic functions.

Given a function $u$ in the Schwartz space $\mathcal{S}(\R^2)$, define its Floquet-Bloch transform:
\begin{equation}
\label{FB-def}
(\calF u)(\bx,\bk)=\sum_{\bv \in \Lambda}\rme^{\rmi \bk\cdot \bv }u(\bx-\bv), \ \textrm{for\ }\ \bx\ \in \bbR^2_{\bx} \mbox{ and }\ \bk\in \bbR^2_{\bk}.
\end{equation}
For a fixed $\bx \in \cell$, the expression \eqref{FB-def}  extends $\bk\mapsto (\calF u)(\bx,\bk)$ to a $\Lambda^*-$  periodic function on $\R^2_{\bk}$:
\[ (\calF u)(\bx,\bk+\bkg)\ =\ (\calF u)(\bx,\bk)\ \textrm{ for all }\ \bkg\in\Lambda^*\ .\]
Thus we may restrict the quasimomentum $\bk$ to $\mathcal{B}$.
Furthermore, for a fixed $\bk \in \mathcal{B}$,  the function is $\bx \mapsto (\calF u)(\bx,\bk)\in L^2_{\bk}$:
$$
\left(\calF u\right)(\bx+\bv,\bk)=\rme^{\rmi \bk\cdot \bv } \, \left(\calF u\right)(\bx,\bk), \  \textrm{for all}\ \bv \in \Lambda \ \textrm{ and a.e. $\bx\in \mathbb{R}^2$}.
$$
For all $u\in \mathcal{S}(\mathbb{R}^2)$, one can show the following Plancherel-type identity:
\begin{equation}
\int_{\bbR}|u(\bx)|^2 \mathrm{d}\bx=\frac{1}{|\mathcal{B}|}\int_{\mathcal{B}}\int_{\cell}|\calF{u}(\bx,\bk) |^2\mathrm{d} \bx  \, \mathrm{d} \bk ,
\label{planch}\end{equation}
where $|\mathcal{B}|=|\bk_1 \wedge  \bk_2|=2*(2\pi^2)/\sqrt{3}$ is the area of the Brillouin zone $\mathcal{B}$.
The identity \eqref{planch}  extends  by density $\mathcal{F}/{\sqrt{|\mathcal{B}|}}$ as  a linear isometry from $L^2(\mathbb{R}^2)$ to $L^2(\mathcal{B}, L^2_{\bk})$. Moreover, this isometry is unitary  (since surjective) and we have the following inversion formula;  all $u\in L^2(\bbR^2)$ have 
the following $L^2_{\bk}$ decomposition:

\begin{equation*}
u(\bx)= \frac{1}{|\mathcal{B}|}\big(\calF^{*}\mathcal{F}u\big)(\bx)=\frac{1}{|\mathcal{B}|}\int_{\bk\in\mathcal{B}}  \mathcal{F}(u)(\bx,\bk) \,  \rmd \bk, \, \mbox{ for  a.e.  $\bx \in \bbR^2$. } 
\label{F*def}
\end{equation*}

In analogy with \eqref{eq.defL2k} we introduce, for any $\bk\in \R^2$, the Sobolev spaces $H^1_{\bk}=H^1_{\bk}(\bbR^2/\Lambda)$:
$$
H^1_{\bk}:=\{ f\in H^1_{\mathrm{loc}}(\bbR^2)\mid  f(\bx+\bv)=\rme^{\rmi  \bk\cdot \bv  } f(\bx), \mbox{for  a.e. } \bx\in \bbR^2 \mbox{ and all} \  \bv\in \Lambda  \},
$$
endowed with the following inner product,
$$
(f,g)_{H^1_{\bk}}=\int_{\cell} f(\bx) \overline{g(\bx)} \rmd \bx+ \int_{\cell}  \nabla f(\bx) \cdot \overline{\nabla g(\bx)}  \, \rmd \bx
$$
and denote by $\| \cdot \|_{H^1_{\bk}}$ the associated  norm.
\begin{remark}\label{H1bk} 
We emphasize that functions $u\in H^1_{\bk}$,  are such that the trace of the map $\bx \mapsto \rme^{-i\bk\cdot\bx}u(\bx)$ has no jump  across boundaries between period cells and is $\Lambda-$ periodic. For the definition of traces (restriction operators) in terms of Sobolev spaces, we refer for e.g. to \cite{Girault:1986,Gri:85,Monk:03}.
\end{remark}

\subsection{The operator $\bbA_{\aspar}$ and its spectral theory: general properties}

Let $\sigma_{\aspar }$ denote the piecewise constant function and $\Lambda-$ periodic ($L^2_{\bk=0}=L^2(\R^2/\Lambda)$) function introduced in Section \ref{sec:sigma}. The operator
\begin{equation*}
\bbA_{\aspar}=-\nabla \cdot \sigma_{\aspar }  \nabla\ :\  {D}(\bbA_{\aspar})\subset L^2(\bbR^2) \to L^2(\bbR^2)
\label{bbA}\end{equation*}
 with domain 
\begin{equation*}
{D}(\bbA_{\aspar})=\{ u\in  H^1(\bbR^2) \mid -\nabla \cdot \sigma_{\aspar }  \nabla u \in L^2(\bbR^2) \},
\label{dom-bbA}
\end{equation*}
is a positive self-adjoint operator.
Furthermore, $\bbA_{\aspar}$ is unitarily equivalent to a direct fiber integral over self-adjoint positive operators $\bbA_{\aspar, \bk}$  (see for e.g. \cite{Kuch:01}):
\begin{equation}\label{eq.intdirec}
\bbA_{\aspar}=\frac{1}{|\mathcal{B}|}\calF^{*}\, \left( \int_{\bk\in\mathcal{B}}^{\oplus} \bbA_{\aspar, \bk} \, \rmd \bk \right) \calF\   \ \mbox{ meaning }  \ \calF \big(\bbA_{\aspar} u\big)(\cdot,\bk)= \bbA_{\aspar,\bk} \, \calF u(\cdot,\bk) \mbox{ for a.e. } \bk \in \mathcal{B},
\end{equation}
where $\calF$ denotes the Floquet-Bloch transform defined in Section \ref{sec-Floqth}. For any $\bk \in \R^2$
 \[ \bbA_{\aspar, \bk}: {D}(\bbA_{\aspar, \bk}) \to L^2_{\bk} \]
  denotes the operator $-\nabla \cdot \sigma_{\aspar }  \nabla$  with domain given by:
$$
 {D}(\bbA_{\aspar, \bk}):=\Big\{ u \in H^1_{\bk}\mid  -\nabla \cdot \sigma_{\aspar }  \nabla u \in L^2_{\bk}\Big\}.
$$

\begin{remark}\label{DomAbk} 
Let $u\in{D}(\bbA_{\aspar, \bk})$. Then,  the trace of the function $\bx \mapsto \rme^{-i\bk\cdot\bx}u(\bx)$ and its Neumann trace $\partial (\rme^{-\rmi\bk\cdot\bx}u )/\partial \bn$ are $\Lambda-$ periodic, and have no jump across the boundaries between periodic cells.  Here, the orientation of the  unit normal vector $\bn$ on $\cup_{m,n\in \mathbb{Z}} \partial \cell_{mn}$ is chosen so that $\bn$ is constant  on the sides of the periodic cells that are obtained from each other by translation of lattice vectors,  except of course at the lattice vertices where it is not defined.
\end{remark}

For $\bk \in \R^2$, $\bbA_{\aspar, \bk}$  has a compact resolvent; see  e.g. \cite{RS4,Kuch:01}.
Its spectrum consists of a sequence of non-negative eigenvalues:
 \begin{equation*} 0\le\lambda_1(\aspar;\bk)\le \lambda_2(\aspar;\bk)\le\cdots\le \lambda_b(\aspar;\bk)\le\cdots,
 \label{spec-k}\end{equation*}
 listed with multiplicity, and tending to positive infinity. For any $\kappa\in \Lambda^*$, one has  $L^2_{\bk+\bkg}=L^2_{\bk}$, $H^1_{\bk+\bkg}=H^1_{\bk}$, $D(\bbA_{\aspar,\bk+\bkg})=D(\bbA_{\aspar,\bk})$ and thus $\bbA_{\aspar,\bk+\bkg}=\bbA_{\aspar,\bk}$.
The maps 
\begin{equation}
 \bk\in \R^2\mapsto \lambda_n(\aspar;\bk) \textrm{ are real-valued, Lipschitz continuous and $\Lambda^*$-periodic functions;}  
 \label{lam-lip}
 \end{equation} 
 see, for example, \cite{avron-simon:78,Conca:1997,FW:14}.
The maps $\bk\mapsto\lambda_j(g;\bk)$ are called band dispersion maps.
Since the band  dispersion maps are $\Lambda^*$-periodic, one can restrict their study to the Brillouin zone $\mathcal{B}\simeq\mathbb{T}^2$.
Their graphs over $\mathcal{B}$ are called  \emph{the dispersion surfaces} of $\bbA_\aspar$.
By \eqref{eq.intdirec} and \eqref{lam-lip},  the spectrum of $\bbA_{\aspar}$, $\sigma(\bbA_{\aspar})$, can be obtained from the spectra of the fiber operators:
\begin{equation}\label{eq.specA}
\sigma(\bbA_{\aspar})=\bigcup_{n=1}^{\infty} \lambda_n(\aspar;\mathcal{B}) \ .
\end{equation}
The \emph{band structure} of  $\bbA_{\aspar}$ refers to the collection of dispersion (eigenvalue) maps and their associated eigenmodes.

\subsection{Eigenvalue bracketing}\label{bracket}

To estimate the location  of $\sigma(\bbA_{\aspar})$, we make use of the  Dirichlet and Neumann spectra $\bbA_{\aspar}$ on $\cell$. Introduce the operators:
\begin{align}
\bbA_{\aspar}^{Dir,\cell} &:=-\nabla \cdot \sigma_{\aspar }  \nabla, \mbox{ where }\  {D}(\bbA_{\aspar}^{Dir,\cell})=\{u \in H^1_0(\cell)\mid -\nabla \cdot \sigma_{\aspar }  \nabla u \in L^2(\cell)   \}, \label{A-Dir}
\\  
\bbA_{\aspar}^{Neu,\cell}& :=-\nabla \cdot \sigma_{\aspar }  \nabla ,\mbox{ where } {D}(\bbA_{\aspar}^{Neu,\cell})=\big\{u \in H^1(\cell)\mid -\nabla \cdot \sigma_{\aspar }  \nabla u \in L^2(\cell) \mbox{ and } \frac{\partial u}{\partial \bn }=0   \big\}, \label{A-Neu}
\end{align}
where $ \bn $ denotes unit normal vector on $\partial\Omega$, which points exterior to $\Omega$.
The operators $\bbA_{\aspar}^{Dir,\cell}$ and $\bbA_{\aspar}^{Neu,\cell}$ are  self-adjoint, positive and have compact resolvent. Moreover,  $\bbA_{\aspar}^{Dir,\cell}$ is positive definite. Their spectra consist of a sequence of real  eigenvalues of finite multiplicity and tending to infinity. We list them (with multiplicity) as:
\begin{align*}
&0<\lambda_1^{Dir,\cell}(\aspar)\leq \lambda_2^{Dir,\cell}(\aspar) \leq  \ldots \lambda_n^{Dir,\cell}(\aspar) \leq \ldots \\
& 0=\lambda_1^{Neu,\cell}(\aspar)\leq \lambda_2^{Neu,\cell}(\aspar) \leq \ldots \leq  \lambda_n^{Neu,\cell}(\aspar) \leq \ldots.\end{align*}
The following lemma provides useful upper and lower bounds on the maps of the Floquet-Bloch dispersion maps:  $\lambda_n(\bk;\aspar)$ in terms of $\lambda_n^{Dir,\cell}(\aspar)$ and $\lambda_n^{Neu,\cell}(\aspar)$.
\begin{lemma}\label{lem.inequaleigenval}
For any  fixed $\aspar>0$, one has for all $n\ge1$ and all $\bk\in\mathcal{B}$: 
\begin{equation}\label{eq.ineqvp}
 \lambda_n^{Neu,\cell}(\aspar)  \leq\lambda_n(\bk;\aspar)\leq  \lambda_n^{Dir,\cell}(\aspar)\ .
 \end{equation}
Hence,  by \eqref{eq.ineqvp} and \eqref{eq.specA}:
\begin{equation*}\label{eq.inclus}
\sigma(\bbA_{\aspar}) \subset \bigcup_{n=1}^{\infty}[\lambda^{Neu,\cell}_n(\aspar),\lambda^{Dir,\cell}_n(\aspar)].
\end{equation*}
\end{lemma}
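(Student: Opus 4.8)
The plan is to deduce both inequalities from the min-max (Courant–Fischer) characterization of eigenvalues of the three self-adjoint operators $\bbA_{\aspar,\bk}$, $\bbA_{\aspar}^{Dir,\cell}$ and $\bbA_{\aspar}^{Neu,\cell}$, after identifying all of them as Friedrichs extensions of the common quadratic form $a_{\aspar}(u,u)=\int_{\cell}\sigma_{\aspar}(\bx)|\nabla u(\bx)|^2\,\rmd\bx$ restricted to appropriate form domains. Concretely, the form domain of $\bbA_{\aspar,\bk}$ is $H^1_{\bk}$, the form domain of $\bbA_{\aspar}^{Dir,\cell}$ is $H^1_0(\cell)$, and the form domain of $\bbA_{\aspar}^{Neu,\cell}$ is $H^1(\cell)$; all three carry the same form $a_{\aspar}$ and the same $L^2(\cell)$ inner product. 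The key structural observation is the chain of inclusions of form domains obtained by identifying $H^1_{\bk}$ with a subspace of $H^1(\cell)$ via restriction to $\cell$ and, in the other direction, extending $H^1_0(\cell)$ functions by $\bk$-pseudoperiodicity:
\begin{equation*}
H^1_0(\cell)\ \hookrightarrow\ H^1_{\bk}\big|_{\cell}\ \hookrightarrow\ H^1(\cell),
\end{equation*}
where each embedding is isometric for both $\|\cdot\|_{L^2(\cell)}$ and the form $a_{\aspar}$. The first embedding holds because a function in $H^1_0(\cell)$ extends by zero outside $\cell$ and then by the pseudoperiodicity rule to an element of $H^1_{\bk}$ (the zero trace on $\partial\cell$ guarantees no jump, cf. Remark \ref{H1bk}); the second is the plain restriction map, which is well-defined since an $H^1_{\bk}$ function is in $H^1_{\mathrm{loc}}(\bbR^2)$.

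Next I would invoke the min-max principle in the form: for a nonnegative self-adjoint operator $T$ with compact resolvent arising from a closed nonnegative form $t$ with form domain $Q(t)$,
\begin{equation*}
\lambda_n(T)=\min_{\substack{V\subset Q(t)\\ \dim V=n}}\ \max_{\substack{u\in V\setminus\{0\}}}\ \frac{t(u,u)}{\|u\|^2}.
\end{equation*}
Since enlarging the class of admissible subspaces $V$ can only decrease the min-max value, the inclusion $H^1_0(\cell)\subset H^1_{\bk}$ (as form domains carrying the same form and norm) yields $\lambda_n(\bk;\aspar)\le\lambda_n^{Dir,\cell}(\aspar)$, and the inclusion $H^1_{\bk}\subset H^1(\cell)$ yields $\lambda_n^{Neu,\cell}(\aspar)\le\lambda_n(\bk;\aspar)$, which is exactly \eqref{eq.ineqvp}. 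The inclusion of spectra then follows immediately from \eqref{eq.specA}, taking the union over $n$ and over $\bk\in\mathcal{B}$, together with the Lipschitz continuity \eqref{lam-lip} which guarantees $\lambda_n(\cdot;\aspar)$ sweeps out a closed interval contained in $[\lambda_n^{Neu,\cell}(\aspar),\lambda_n^{Dir,\cell}(\aspar)]$.

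The main obstacle is not the min-max argument itself, which is routine, but the careful verification that the three operators genuinely coincide with the Friedrichs extensions of $a_{\aspar}$ on the stated form domains, and that the embeddings of form domains are compatible with forms and norms despite the discontinuity of $\sigma_{\aspar}$ across $\partial\cell^+$. Since $\sigma_{\aspar}$ is smooth (indeed constant) in a neighborhood of $\partial\cell$, the boundary conditions defining $D(\bbA_{\aspar}^{Dir,\cell})$ and $D(\bbA_{\aspar}^{Neu,\cell})$ and the pseudoperiodic matching in $D(\bbA_{\aspar,\bk})$ (Remark \ref{DomAbk}) are the natural ones associated to these forms, and standard elliptic form theory (e.g. \cite{RS4,Kuch:01}) applies; one must simply check that the jump of $\sigma_{\aspar}$ across the \emph{interior} inclusion boundaries $\partial\cell^+$ does not affect the form domain — it only appears in the identification of the operator domain via the transmission condition $[\sigma_{\aspar}\partial_{\bn}u]=0$, which is automatically encoded once $-\nabla\cdot\sigma_{\aspar}\nabla u\in L^2$. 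I would state this verification as a short paragraph citing the general theory rather than reproving it.
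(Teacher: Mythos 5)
Your proposal is correct and follows essentially the same route as the paper: the min--max characterization of the eigenvalues for the common form $a_{\aspar}(u,u)=\int_{\cell}\sigma_{\aspar}|\nabla u|^2\,\rmd\bx$ on the nested form domains $H^1_0(\cell)\subset H^1_{\bk}\subset H^1(\cell)$, with the inequality following because a larger form domain admits more competitor subspaces and hence yields smaller (or equal) eigenvalues. The paper states this more tersely by writing out the three min--max formulas and invoking the inclusions directly; your additional remarks on the Friedrichs extension picture and the isometric nature of the embeddings are consistent but not needed.
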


\begin{proof}
The proof is  based on the application of the min-max principle  to the quadratic (sesquilinear) form 
associated with the operator  $-\nabla \cdot \sigma_{\aspar }\nabla u$
\begin{equation*}\label{eq.sesqatau}
a_{\aspar}(u,v)=\int_{\cell} \sigma_{\aspar }(\bx)\nabla u(\bx) \cdot  \overline{\nabla v(\bx)} \, \rmd  \bx, 
\end{equation*}
with the form domains: $ H^1_0(\cell) $, $ H^1_{\bk}$ and $ H^1(\cell)$, associated with, respectively, Dirichlet, $\bk-$quasi-periodic  and Neumann boundary conditions. By  the min-max characterization of eigenvalues \cite{Dau:85-3,Henrot:2006,RS4} of selfadjoint  positive operators with compact resolvent:
\begin{enumerate}
\item Neumann boundary conditions:
\begin{equation}\label{eq.minmaxneu}
\lambda_n^{Neu,\cell}(\aspar)=\min_{\substack{V\subset H^1(\Omega)\\ \mathrm{dim} V=n}}
 \quad \max_{\substack{u\in V\\ \|u\|_{L^2(\Omega)}=1}} a_{\aspar}(u,u) \,;
 \end{equation}
\item $\bk-$ quasi-periodic boundary conditions:
\begin{equation}\label{eq.minmaxkpseudo}
\lambda_n(\aspar;\bk)=\min_{\substack{V\subset H^1_{\bk}\\ \mathrm{dim} V=n}} \quad
 \max_{\substack{u\in V\\ \|u\|_{L^2(\Omega)}= 1}} a_{\aspar}(u,u)\,  ;
\end{equation}
\item Dirichlet boundary conditions:
\begin{equation}\label{eq.minmaxdir}
\lambda_n^{Dir,\cell}(\aspar)=\min_{\substack{V\subset H^1_{0}(\Omega)\\ \mathrm{dim} V=n}}
 \quad \max_{\substack{u\in V\\ \|u\|_{L^2(\Omega)}= 1}} a_{\aspar}(u,u) \, .
\end{equation}
\end{enumerate}
The inequality \eqref{eq.ineqvp} follows immediately from the relations: 
$H_{0}^{1}(\cell)\subset H^1_{\bk}\subset H^1(\cell)$, where we regard $H^1_{\bk}$ as the subspace $H^1(\Omega)$ consisting of $H^1_{\bk}$
 functions restricted to $\cell$.
\end{proof}

\section{High-contrast ($g\gg1$) behavior of dispersion maps }\label{HL-thy}

The article \cite{Hem-00} contains general results
on the large $\aspar$ behavior of the  band dispersion functions, $\bk\mapsto \lambda_n(\aspar;\bk)$; see also \cite{Frie:02}.
 In this section we review and extend their results  and in later sections apply them honeycomb operators $\bbA_{\aspar,\bk}$. For the results of this section we  do not require
 honeycomb symmetry.

\subsection{Limit of the  fiber operators $\bbA_{\aspar,\bk}$}

Recall that $\Omega^+$ denotes the inclusion subset within the fundamental cell, $\cell$; see Figure \ref{fig.funcell}.  
For each $(m,n)\in\Z^2$ define
\begin{align}\label{eq.defOmegapm}
&\textrm{Translates of $\Omega^+$: For $(m,n)\in\Z^2$, }\quad \cell_{mn}^{+} =\ \cell^+ +\ m\bv_1\ +\ n\bv_2\ \subset\ \cell_{mn}, \nn\\
& \textrm{Union over all translates of inclusions:}\quad \boldcell^+ =\ \cup_{(m,n)\in\Z^2}\ \cell^+_{mn}\ \subset\ \R^2,\ \ \nn  \\
& \textrm{The bulk:}\quad  \boldcell^- =\bbR^2\setminus \overline{\boldcell^+}.
\end{align}
We introduce the closed subspaces of $L^2_{\bk}$ and $H^1_{\bk}$ consisting of functions that vanish outside $\boldcell^+$:
\begin{equation}\label{eq.defL2D+}
\widetilde{L}^2_{\bk} =\ \{u\in L^{2}_{\bk} \mid u(\bx)=0 \mbox{ a.e.  on }  \overline{\boldcell^-} \}  \mbox{ and }  \widetilde{H}^{1}_{\bk}= \{u\in H_{\bk}^{1} \mid u(\bx)=0 \mbox{ a.e.  on }\overline{ \boldcell^-} \}.
 \end{equation}

Let $\bk\in \mathcal{B}$ be fixed.  In \cite{Hem-00} it is proved that  the positive operator $\bbA_{\aspar,\bk}$  converges in the norm-resolvent sense as $\aspar\to +\infty$ via a study of  the convergence of the associated sesqulinear form:
\begin{equation}\label{eq.sesquilinearformakg}
a_{\aspar, \bk}(u,v)=(\bbA^{\frac{1}{2}}_{\aspar,\bk}u, \bbA^{\frac{1}{2}}_{\aspar,\bk} v )_{L^2_{\bk}}=\int_{\cell} \sigma_{\aspar} \nabla u \cdot \overline{ \nabla v }\ \mathrm{d} \, \bx ,  \quad\textrm{for all}\  u,v \in \ D(\bbA_{\aspar,\bk}^{\frac{1}{2}})=H^1_{\bk}.
\end{equation}
We present a short outline of their reasoning, which uses a result of \cite{Sim:78} on monotone quadratic forms. 

For any fixed $u\in H^1_{\bk}$, the function $\aspar \mapsto a_{\aspar, \bk}(u,u)$ is increasing for positive $\aspar$.
Hence, we may define:
$$
a_{\infty,\bk}(u,u)=\sup_{g>0} a_{\aspar, \bk}(u,u)= \lim_{\aspar \to \infty}a_{\aspar,\bk}(u,u),
$$
with domain
$$
\domain(a_{\infty,\bk})=\Big\{ u \in H^{1}_{\bk} \mid \sup_{\aspar>0} a_{\aspar, \bk}(u,u) <\infty \Big\}.
$$
Let us now characterize $\domain(a_{\infty,\bk})$.
A function $u\in  H^{1}_{\bk}$ belongs to  $\domain(a_{\infty,\bk})$ if and only for some constant $C>0$:
$$
a_{\aspar, \bk}(u,u)=\int_{\cell} \sigma_{\aspar,\bk } |\nabla u|^2 \, \rmd  \bx\leq C ,  \quad \textrm{for all}\  \aspar>0.
$$
The latter condition implies that:
\begin{equation}\label{eq.conddom}
  \aspar \ \int_{\cell^-}|\nabla u|^2 \rmd \bx \leq C, \quad  \textrm{for all}\   \aspar>0,
  \quad\textrm{and hence}\ \nabla u=0\quad \textrm{a.e. in}\quad  \cell^-.
\end{equation}
Since $\cell^-$ is open and connected (recall $\cell^A$ is simply connected), it follows that $u(\bx)=$ a constant a.e. in  $\cell^-$.
Furthermore, since $H^1_{\bk}$ functions are $\bk$ quasi-periodic and belong to $H^1_{loc}(\R^2)$ (their trace is continuous across the boundary of the cells and the boundary of the inclusions), we conclude that $u(\bx)$ is a.e. equal to a constant on $\overline{\boldcell^{-}}$. There are now two cases: $\bk\ne0$ and $\bk=0$. 

\subsubsection{The limiting operator $\bbAinfDk$ for  $\bk\neq 0$}

 For $\bk\in \mathcal{B}\setminus \{ 0\}$, non-zero constant functions on $\overline{\boldcell^{-}}$ do not belongs to $H^1_{\bk}$. Since $u$ is a.e. constant on $\overline{\boldcell^{-}}$, it follows  that $u(\bx)=0$ for  a.e. $\bx$ in $\overline{\boldcell^{-}}$. Hence, $\domain(a_{\infty,\bk})=  \widetilde{H}^{1}_{\bk}$; see \eqref{eq.defL2D+}.

Thus, for all $\bk \in \mathcal{B}$, the limiting sesqulininear form is given by:
$$
a_{\infty, \bk}(u,u)=\lim_{\aspar \to \infty}a_{\aspar,\bk}(u,u)=\int_{\cell^+}  |\nabla u|^2 \, \rmd  \bx \, , \quad  \textrm{for all}\  u \in \domain(a_{\infty,k})= \widetilde{H}^{1}_{\bk}.
$$
For all $\bk\ne0$ we may now associate,  to the limiting form $a_{\infty, \bk}$,   the self-adjoint positive definite operator $\bbAinfDk: \domain(\bbAinfDk) \subset  \widetilde{L}^{2}_{\bk} \to  \widetilde{L}^{2}_{\bk}$:
\begin{align}\label{eq.defDIropinf}
\left(\ \bbAinfDk u\ \right)(\bx)\ =\ 
\begin{cases} -\Delta u(\bx), &  \mbox{ for  a. e. } \bx\in{\boldcell^+}\\ 
\ 0\ & \mbox{ for  a. e. }  \bx\in  \overline{ \boldcell^-}=\mathbb{R}^2 \setminus \boldcell^+,
\end{cases}
\end{align}
with domain:
$$
\domain(\bbAinfDk)=\{ u \in \domain(a_{\infty,\bk})= \widetilde{H}^1_{\bk} \mid   \bbAinfDk u \in  \widetilde{L}^2_{\bk} \}. 
$$
The operator $\bbAinfDk$ is  a positive definite self-adjoint operator with a compact resolvent. 
It has a discrete set of real and strictly positive eigenvalues:
 \begin{equation*}
 0<\delta_1\leq \delta_2 \ldots  \leq \delta_n \leq \ldots,
 \label{delta_n}
 \end{equation*} 
 listed with multiplicity and tending to $+\infty$. 
By \eqref{eq.defDIropinf},  the sequence $(\delta_n)_{n\geq 1}$ is independent of $\bk$ since  
it coincides with the sequence of eigenvalues (counted also with multiplicity) of the Dirichlet Laplacian:
\begin{equation}\label{eq.defADinfty}
\LaplDcellp:=-\Delta :\ \domain(\LaplDcellp)\subset L^2(\cell^+) \to  L^2(\cell^+)
\end{equation}
where its domain ${D}(\LaplDcellp):=\{u\in H^1_0( \cell^+)\mid  \Delta u \in L^2(\cell^+)\}$ with $\cell^+=\cell^A\cup \cell^B$.
 The following result on norm-resolvent convergence of $\bbA_{\aspar,\bk}$  as $\aspar\to +\infty$ is proved in
\cite{Hem-00}:
\begin{theorem}[Norm resolvent convergence of $\bbA_{\aspar,\bk}$ to $\bbAinfDk$] \label{thmsaympAtau}
Let $\bk\in \mathcal{B}\setminus \{ {\bf 0}\}$. Then,  for all $\zeta \in \bbC\setminus \bbR$:
\begin{equation}\label{eq.strongresconv}
R_{\bbA_{\aspar,\bk}}(\zeta)=\left(\bbA_{\aspar,\bk}-\zeta\right)^{-1} \longrightarrow R_{\bbAinfDk }(\zeta)= \left( \bbAinfDk-\zeta\right)^{-1}  , \mbox{ as } \aspar \to \infty,  \mbox{ in }B( L^2_{\bk}).
\end{equation}  
It follows that the eigenvalues of $\bbA_{\aspar,\bk}$ converge to those of $\bbAinfDk$: Fix $\bk\ne0$,
$$
\textrm{for any $n>0$},\qquad \lambda_n(\aspar;\bk) \to \delta_n \ \mbox{ as }  \ \aspar \to +\infty.
$$
\end{theorem}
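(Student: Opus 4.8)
The plan is to obtain the norm-resolvent convergence \eqref{eq.strongresconv} by exploiting the monotone convergence of the quadratic forms $a_{\aspar,\bk}$, as already set up in the discussion above. The main tool is the general principle (see \cite{Sim:78}, and also \cite{RS1}) that if a family of closed, nonnegative, symmetric forms $a_\aspar$ on a Hilbert space is monotone increasing in $\aspar$ and converges pointwise to a closed form $a_\infty$ on its natural domain $\domain(a_\infty)=\{u:\sup_\aspar a_\aspar(u,u)<\infty\}$, then the associated self-adjoint operators converge in the strong-resolvent sense. So the first step is to verify the hypotheses of this principle in our setting: for each fixed $\bk\ne0$, the forms $a_{\aspar,\bk}$ on $H^1_\bk$ are closed, nonnegative, symmetric, and $\aspar\mapsto a_{\aspar,\bk}(u,u)$ is nondecreasing; this was observed above. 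Identifying $\domain(a_{\infty,\bk})=\widetilde H^1_\bk$ and $a_{\infty,\bk}(u,u)=\int_{\cell^+}|\nabla u|^2$, also done above, then the form-convergence theorem yields strong-resolvent convergence of $\bbA_{\aspar,\bk}$ to $\bbAinfDk$ — where one must be slightly careful that $\bbAinfDk$ acts on the smaller space $\widetilde L^2_\bk$, so the resolvent $R_{\bbAinfDk}(\zeta)$ is extended by zero on $(\widetilde L^2_\bk)^\perp$, i.e. precomposed and postcomposed with the orthogonal projection $P_\bk$ onto $\widetilde L^2_\bk$.

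The second step is to upgrade strong-resolvent convergence to \emph{norm}-resolvent convergence in $B(L^2_\bk)$. The standard route is compactness: since $\bbA_{\aspar,\bk}$ has compact resolvent uniformly in $\aspar$ (in fact $\bbA_{\aspar,\bk}\ge \bbA_{1,\bk}$ in the form sense for $\aspar\ge1$, so $0\le R_{\bbA_{\aspar,\bk}}(\zeta)\le$ something compact when $\zeta<0$, giving a uniform compactness/equicontinuity bound via the $H^1_\bk\hookrightarrow L^2_\bk$ compact embedding), one shows that the family $\{R_{\bbA_{\aspar,\bk}}(\zeta)\}_\aspar$ is collectively compact, or more directly that it is precompact in operator norm; combined with strong convergence to $R_{\bbAinfDk}(\zeta)$ this forces norm convergence. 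Concretely: take $\zeta=-1$; for $f\in L^2_\bk$ with $\|f\|=1$, $u_\aspar:=R_{\bbA_{\aspar,\bk}}(-1)f$ satisfies $a_{\aspar,\bk}(u_\aspar,u_\aspar)+\|u_\aspar\|^2=(f,u_\aspar)\le 1$, so $\|u_\aspar\|_{H^1_\bk}\le C$ uniformly; by the compact embedding $H^1_\bk\hookrightarrow\hookrightarrow L^2_\bk$ the operators $R_{\bbA_{\aspar,\bk}}(-1)$ map the unit ball into a fixed precompact set, hence a subsequence converges in $B(L^2_\bk)$, and the limit is identified via the already-established strong convergence; since every subsequence has a further subsequence converging in norm to the same limit $R_{\bbAinfDk}(-1)$, the whole family converges in norm. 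The extension to general $\zeta\in\C\setminus\R$ follows from the resolvent identity.

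The final step is to deduce eigenvalue convergence $\lambda_n(\aspar;\bk)\to\delta_n$. This is immediate from norm-resolvent convergence: the eigenvalues of $R_{\bbA_{\aspar,\bk}}(-1)$, namely $(\lambda_n(\aspar;\bk)+1)^{-1}$, converge (with multiplicity) to those of $R_{\bbAinfDk}(-1)$ acting on $L^2_\bk$. On $\widetilde L^2_\bk$ the latter has eigenvalues $(\delta_n+1)^{-1}$ (equal to the Dirichlet-Laplacian eigenvalues on $\cell^+$, as noted in \eqref{eq.defADinfty}), and on the orthogonal complement $(\widetilde L^2_\bk)^\perp$ the extended operator is zero; but the eigenvalue $0$ of the limit corresponds to $\lambda=+\infty$ and so does not interfere with convergence of any fixed finite eigenvalue $\lambda_n(\aspar;\bk)$. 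Hence for each fixed $n$, $(\lambda_n(\aspar;\bk)+1)^{-1}\to(\delta_n+1)^{-1}$, i.e. $\lambda_n(\aspar;\bk)\to\delta_n$ as $\aspar\to\infty$.

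The step I expect to be the main obstacle is the upgrade from strong- to norm-resolvent convergence together with the careful bookkeeping of the two-space setting (the kernel subspace $(\widetilde L^2_\bk)^\perp$ on which the limit operator is declared to be zero): one must make sure the "spurious" zero eigenvalue of the limiting resolvent is correctly interpreted so that it does not spoil the eigenvalue-convergence conclusion, and that the uniform-in-$\aspar$ $H^1_\bk$ bound genuinely holds (using $\aspar\ge1$, say, which is harmless since we let $\aspar\to\infty$). Everything else is a routine application of the monotone-form convergence machinery from \cite{Sim:78}.
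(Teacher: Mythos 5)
The paper does not supply a proof of this theorem; it defers entirely to Hempel--Lienau \cite{Hem-00} after sketching only the identification of the limiting form and its domain. So there is no ``paper's own proof'' to match against, and your proposal has to be assessed on its own terms. Your overall plan --- Simon's monotone form convergence theorem for strong resolvent convergence, compactness to upgrade to norm resolvent convergence, then convergence of eigenvalues of compact resolvents --- is the natural route and is consistent with the outline the paper gives before the theorem statement.

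There is, however, a genuine gap in the middle step. You write that because the resolvents map the unit ball of $L^2_\bk$ into a fixed precompact set (collective compactness), ``a subsequence converges in $B(L^2_\bk)$.'' That inference is false for a general collectively compact family: take $T_nf=(f,e_n)e_1$ on $\ell^2$, whose ranges all lie in the line spanned by $e_1$ yet $\|T_n-T_m\|=\sqrt{2}$ for $n\neq m$, so no norm-convergent subsequence exists even though $T_n\to 0$ strongly. The missing ingredient is the self-adjointness of $R_{\bbA_{\aspar,\bk}}(-1)$, which you never invoke. Self-adjointness is exactly what rescues the argument: from the compact set $K\supset\bigcup_\aspar R_{\bbA_{\aspar,\bk}}(-1)(B)$, take a finite $\eps$-net, let $P$ be the (finite-rank) projection onto its span; then $\|(I-P)R_{\bbA_{\aspar,\bk}}(-1)\|\leq\eps$ uniformly in $\aspar$, and taking adjoints gives $\|R_{\bbA_{\aspar,\bk}}(-1)(I-P)\|\leq\eps$ as well, so the family lies within $2\eps$ of a norm-precompact set of operators with fixed finite-dimensional range. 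This yields norm-precompactness, and then your subsequence argument goes through. Alternatively one can skip the subsequence argument altogether: if $\|R_{\bbA_{\aspar,\bk}}(-1)-R_{\bbAinfDk}(-1)\|\not\to 0$, pick unit vectors $f_\aspar$ realizing the defect, pass to a weakly convergent subsequence $f_\aspar\rightharpoonup f$ and use that $R_{\bbA_{\aspar,\bk}}(-1)f_\aspar$ lies in $K$ (hence has a norm-convergent subsequence) together with self-adjointness --- $(R_{\bbA_{\aspar,\bk}}(-1)(f_\aspar-f),g)=(f_\aspar-f,R_{\bbA_{\aspar,\bk}}(-1)g)\to 0$ --- to identify the limit as $R_{\bbAinfDk}(-1)f$ and reach a contradiction. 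Either repair is short, but it must be stated: the compactness argument as written is not sufficient. The remaining steps (Simon's theorem for strong resolvent convergence, the resolvent identity for general $\zeta$, and the eigenvalue counting including the handling of the infinite-dimensional kernel $(\widetilde{L}^2_{\bk})^{\perp}$ where the extended limit resolvent vanishes) are correctly handled.
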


\begin{remark}\label{rem.extensionres}
The resolvent $R_{\bbAinfDk }(\zeta)$ acts on a closed proper subspace  $\widetilde{L}^2_{\bk}$ of $L^2_{\bk}$.
Thus, to discuss norm resolvent convergence  in \eqref{eq.strongresconv}, one needs to extend  the resolvent $R_{\bbAinfDk}( \zeta)$ to the whole space $L^2_{\bk}=\widetilde{L}^2_{\bk}{\small\oplus} (\widetilde{L}^2_{\bk})^{\perp}$. 
This is done (see \cite{Hem-00,Sim:78}) by setting $R_{\bbAinfDk}(\zeta) u=0$  for $u\in (\widetilde{L}^2_{\bk})^{\perp}=\{u \in  L^2_{\bk} \mid u=0 \mbox{ a.e. on } \boldcell^+ \} $.
\end{remark}

\subsubsection{The limiting operator $\bbAinfDk$ for  $\bk= 0$}
For $\bk={ \bf 0}$ (periodic-case), the constant functions  belong to $H^1_{\bf 0 }=H^1(\R^2/ \Lambda)$. Thus, in contrast to the case when $\bk\ne0$,  we can only conclude from \eqref{eq.conddom} that $u$ is constant a.e. on $\overline{\boldcell^{-}}$.  Thus, $\domain(a_{\infty,{ \bf 0}})=  \widetilde{H}^{1}_{\bf 0 }\oplus \operatorname{span}({\bf 1})$ where ${\bf 1}$ is the constant function equal to one on all $\bbR^2$.  Hence, for $\bk=0$ the limiting form $a_{\infty,{ \bf 0}}$ is given by:
$$
a_{\infty,{ \bf 0}}(u,u)=\lim_{\aspar \to \infty}a_{\aspar,\bk}(u,v)=\int_{\cell^+}  |\nabla u|^2 \, \rmd  \bx \, , \quad  \forall u \in \domain(a_{\infty,{\bf 0}})=  \widetilde{H}^{1}_{\bf 0 }\oplus \operatorname{span}({\bf 1}).
$$ 
We note that the latter direct sum is not orthogonal.
There exists a unique limiting positive self-adjoint operator  $\bbA_{\infty,{ \bf 0}}$  associated with $a_{\infty, {\bf 0}}$. It acts on the closed subspace  $\widetilde{L}^{2}_{\bf 0}\oplus \operatorname{span}({\bf 1})$  of $L^2_{{ \bf 0}}$  and its domain is given by $\domain(\bbA_{\infty,{ \bf 0}}):=\{ u \in \widetilde{H}^{1}_{\bf 0 }\mid \tilde{A}_0 u\in  \widetilde{L}^{2}_{\bf 0} \}\oplus \operatorname{span}({\bf 1})$ where $\tilde{A}_0 u$ is defined by $\tilde{A}_0 u=-\Delta u$ a.e  on $\boldcell^+$ and $\tilde{A}_0 u=0$ a.e on  $\overline{\boldcell^-}$.
The description of the limiting operator requires much more technical study than for the case $\bk\neq 0$. This is carried out  in  \cite{Hem-00}. We do not provide the details here.
  $\bbA_{\infty,{ \bf 0}}$ has a compact resolvent and  its spectrum 
consists  of a non-negative sequence $(\nu_n)$ of eigenvalues:  
\begin{equation*}
0\leq\nu_1\leq \nu_2 \leq \ldots \leq \nu_n \leq  \ldots ,\label{nu_n}
\end{equation*}  
listed with multiplicity and tending to $+ \infty$. The following result is proved in \cite{Hem-00}.
\begin{theorem}[$\bbA_{\aspar,{\bf 0}}\to\bbA_{\infty, {\bf 0}}$ in the norm resolvent sense]\label{thmsaympANinf}
Let $\bk={\bf 0}$. Then we have for all $\zeta \in \bbC\setminus \bbR$:
\begin{equation}\label{eq.strongresconvn}
R_{\bbA_{\aspar,{\bf 0}}}(\zeta)=\left(\bbA_{\aspar,{\bf 0}}-\zeta\right)^{-1} \longrightarrow R_{\bbA_{\infty,{\bf 0}}
 }(\zeta)= \left( \bbA_{\infty,{\bf 0}}-\zeta\right)^{-1}  , \mbox{ as } \aspar \to \infty,  \mbox{ in }B( L^2_{{\bf0}}).
\end{equation}  
Hence, 
$$
\textrm{For all\quad  $n>0$},\qquad \lambda_n(\aspar;{\bf 0}) \to \nu_n \ \mbox{ as }  \ \aspar \to +\infty.
$$
\end{theorem}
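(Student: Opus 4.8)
The plan is to follow the strategy behind Theorem~\ref{thmsaympAtau}, adapted to the feature that at $\bk={\bf 0}$ the constant functions survive in the limiting form domain $\domain(a_{\infty,{\bf 0}})=\widetilde{H}^{1}_{{\bf 0}}\oplus\operatorname{span}({\bf 1})$, so that $\bbA_{\infty,{\bf 0}}$ acts on $\widetilde{L}^{2}_{{\bf 0}}\oplus\operatorname{span}({\bf 1})$ rather than on $\widetilde{L}^{2}_{{\bf 0}}$. It suffices to prove norm resolvent convergence at the single point $\zeta=-1$ (all operators being nonnegative); convergence at every $\zeta\in\C\setminus\R$ then follows from the resolvent identity, and eigenvalue convergence is read off at the end. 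First I would establish a uniform a priori bound: for $f\in L^2_{{\bf 0}}$ with $\|f\|\le1$ let $u_\aspar:=R_{\bbA_{\aspar,{\bf 0}}}(-1)f$, the unique element of $H^1_{{\bf 0}}$ with $a_{\aspar,{\bf 0}}(u_\aspar,v)+(u_\aspar,v)_{L^2_{{\bf 0}}}=(f,v)_{L^2_{{\bf 0}}}$ for all $v\in H^1_{{\bf 0}}$; testing with $v=u_\aspar$ and using $\sigma_\aspar\ge1$ gives $\|u_\aspar\|_{L^2_{{\bf 0}}}\le1$ and $\|\nabla u_\aspar\|_{L^2(\cell)}^2\le a_{\aspar,{\bf 0}}(u_\aspar,u_\aspar)\le1$, hence $\sup_\aspar\|u_\aspar\|_{H^1_{{\bf 0}}}\le\sqrt2$, and moreover $\aspar\int_{\cell^-}|\nabla u_\aspar|^2\,\rmd\bx\le1$ so that $\|\nabla u_\aspar\|_{L^2(\cell^-)}\to0$. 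By Rellich's theorem the family $\{R_{\bbA_{\aspar,{\bf 0}}}(-1)\}_{\aspar\ge1}$ is then collectively compact.

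Next I would identify the strong limit. Fix $f$ and a sequence $\aspar_k\to\infty$; extract a subsequence with $u_{\aspar_k}\rightharpoonup u$ in $H^1_{{\bf 0}}$ and $u_{\aspar_k}\to u$ in $L^2_{{\bf 0}}$. From $\nabla u_{\aspar_k}\to0$ in $L^2(\cell^-)$ we get $\nabla u=0$ a.e.\ on $\cell^-$, and since $\cell^-$ is open and connected and $u\in H^1_{\mathrm{loc}}(\R^2)$ is $\Lambda$-periodic, $u$ equals a single constant a.e.\ on $\overline{\boldcell^-}$, i.e.\ $u\in\domain(a_{\infty,{\bf 0}})$. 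Testing the weak identity against $v\in\domain(a_{\infty,{\bf 0}})$: since $\nabla v=0$ a.e.\ on $\cell^-$, the potentially singular term $\aspar_k\int_{\cell^-}\nabla u_{\aspar_k}\cdot\overline{\nabla v}\,\rmd\bx$ vanishes identically, and $a_{\aspar_k,{\bf 0}}(u_{\aspar_k},v)=\int_{\cell^+}\nabla u_{\aspar_k}\cdot\overline{\nabla v}\,\rmd\bx\to a_{\infty,{\bf 0}}(u,v)$ by weak $H^1$-convergence, while $(u_{\aspar_k},v)\to(u,v)$ and $(f,v)$ is fixed. Hence $a_{\infty,{\bf 0}}(u,v)+(u,v)=(f,v)$ for all $v\in\domain(a_{\infty,{\bf 0}})$; the form $a_{\infty,{\bf 0}}(\cdot,\cdot)+(\cdot,\cdot)$ is bounded and coercive on $\domain(a_{\infty,{\bf 0}})$ (coercivity is immediate since the gradient vanishes on $\cell^-$ there), so Lax--Milgram pins down $u$ uniquely, namely $u=R_{\bbA_{\infty,{\bf 0}}}(-1)f$ in the extension sense of Remark~\ref{rem.extensionres} (the variational problem sees $f$ only through its $L^2_{{\bf 0}}$-projection onto $\widetilde{L}^{2}_{{\bf 0}}\oplus\operatorname{span}({\bf 1})$, and the extension kills the orthogonal complement). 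Since the limit is independent of the subsequence, $R_{\bbA_{\aspar,{\bf 0}}}(-1)f\to R_{\bbA_{\infty,{\bf 0}}}(-1)f$ in $L^2_{{\bf 0}}$ for every $f$: strong resolvent convergence holds.

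To upgrade to norm convergence I would combine strong convergence with collective compactness. If $\|(R_{\bbA_{\aspar,{\bf 0}}}(-1)-R_{\bbA_{\infty,{\bf 0}}}(-1))f_\aspar\|\ge\eps$ for some $\|f_\aspar\|\le1$ along a subsequence, pass to a further subsequence with $f_\aspar\rightharpoonup f$, with $R_{\bbA_{\aspar,{\bf 0}}}(-1)f_\aspar\to w$ (collective compactness) and $R_{\bbA_{\infty,{\bf 0}}}(-1)f_\aspar\to R_{\bbA_{\infty,{\bf 0}}}(-1)f$ (the limit resolvent is compact); by self-adjointness, $(R_{\bbA_{\aspar,{\bf 0}}}(-1)f_\aspar,h)=(f_\aspar,R_{\bbA_{\aspar,{\bf 0}}}(-1)h)\to(f,R_{\bbA_{\infty,{\bf 0}}}(-1)h)=(R_{\bbA_{\infty,{\bf 0}}}(-1)f,h)$ for all $h$, so the weak and strong limits agree, $w=R_{\bbA_{\infty,{\bf 0}}}(-1)f$, whence $\|(R_{\bbA_{\aspar,{\bf 0}}}(-1)-R_{\bbA_{\infty,{\bf 0}}}(-1))f_\aspar\|\to\|w-R_{\bbA_{\infty,{\bf 0}}}(-1)f\|=0$, a contradiction. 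Thus $R_{\bbA_{\aspar,{\bf 0}}}(-1)\to R_{\bbA_{\infty,{\bf 0}}}(-1)$ in $B(L^2_{{\bf 0}})$, and then in $B(L^2_{{\bf 0}})$ for all $\zeta\in\C\setminus\R$. Finally, these are nonnegative compact self-adjoint operators, so norm convergence forces convergence of their ordered eigenvalue sequences (Weyl / min--max); since the nonzero eigenvalues of $R_{\bbA_{\aspar,{\bf 0}}}(-1)$ are $(\lambda_n(\aspar;{\bf 0})+1)^{-1}$ and those of $R_{\bbA_{\infty,{\bf 0}}}(-1)$ are $(\nu_n+1)^{-1}$, we conclude $\lambda_n(\aspar;{\bf 0})\to\nu_n$ for every $n$.

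The main obstacle is the identification step at $\bk={\bf 0}$. Because constants now lie in $H^1_{{\bf 0}}$, the limiting form domain is the non-orthogonal sum $\widetilde{H}^{1}_{{\bf 0}}\oplus\operatorname{span}({\bf 1})$, and one must check carefully: (i) that the weak limit $u$ genuinely lands in this domain rather than merely vanishing off $\boldcell^+$; (ii) that the admissible test functions are precisely those globally constant on $\overline{\boldcell^-}$ — exactly the property that annihilates the dangerous $\aspar\int_{\cell^-}$ contribution; and (iii) that $R_{\bbA_{\infty,{\bf 0}}}(-1)$ must be taken as the operator extended by $0$ on $(\widetilde{L}^{2}_{{\bf 0}}\oplus\operatorname{span}({\bf 1}))^{\perp}$, so that the limiting variational problem reproduces exactly $R_{\bbA_{\infty,{\bf 0}}}(-1)f$ and not the resolvent applied to another $L^2_{{\bf 0}}$-representative of $f$. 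The construction and self-adjointness of $\bbA_{\infty,{\bf 0}}$ itself, which the statement takes as given, is where \cite{Hem-00} invests the bulk of the technical work.
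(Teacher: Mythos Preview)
Your argument is correct. Note, however, that the paper does not supply its own proof of Theorem~\ref{thmsaympANinf}: it states the result and refers entirely to \cite{Hem-00}, remarking only that the description of the limiting operator $\bbA_{\infty,{\bf 0}}$ ``requires much more technical study than for the case $\bk\neq 0$''. The route sketched in the paper for the companion Theorem~\ref{thmsaympAtau} (and, by implication, for \cite{Hem-00}) goes through Simon's monotone convergence theorem for quadratic forms \cite{Sim:78}: one exploits that $\aspar\mapsto a_{\aspar,{\bf 0}}(u,u)$ is increasing, invokes the abstract machinery to get strong resolvent convergence to the operator associated with the limit form, and then upgrades to norm convergence via compactness.

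Your approach bypasses Simon's theorem entirely and is more self-contained: you establish the uniform $H^1$ bound and the crucial estimate $\aspar\|\nabla u_\aspar\|_{L^2(\cell^-)}^2\le 1$ directly from the variational identity, then identify the limit by hand using weak compactness and the structure of $\domain(a_{\infty,{\bf 0}})$. The key observation you isolate---that test functions $v\in\domain(a_{\infty,{\bf 0}})$ have $\nabla v=0$ on $\cell^-$, which exactly kills the $\aspar\int_{\cell^-}$ term---is the heart of the matter and makes the passage to the limit transparent. Your collective-compactness upgrade from strong to norm convergence is also clean and avoids appealing to abstract perturbation theory. What the monotone-forms route buys is that the closedness of the limit form and the self-adjointness of $\bbA_{\infty,{\bf 0}}$ come for free from Simon's theorem; in your approach these must be checked separately (as you acknowledge in your final paragraph), but once $\bbA_{\infty,{\bf 0}}$ is granted as in the statement, your argument is arguably more elementary.
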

Remark \ref{rem.extensionres} applies to  Theorem  \ref{thmsaympANinf}; one defines resolvent norm convergence \eqref{eq.strongresconvn} by extending the resolvent $R_{\bbA_{\infty,{\bf 0}}}(\zeta)$ by $0$ on the orthogonal complement  of $\big(\widetilde{L}^{2}_{\bf 0}\oplus \operatorname{span}({\bf 1})\big)$ in $L^2_{{ \bf 0}}$.

We conclude this section by recalling a further result of  \cite{Hem-00}, proved by similar techniques to those above, 
 showing   that  limiting eigenvalues of  $\bbAinfDk$, for $\bk\ne0$,  and $\bbA_{\infty,{\bf 0}}$
also arise as limits of the eigenvalues of the operators $\bbA^{Dir,\cell}_\aspar$ and  $\bbA^{Neu,\cell}_\aspar$,  
  introduced in \eqref{A-Dir}-\eqref{A-Neu}. By the min-max characterization of eigenvalues, \eqref{eq.minmaxdir} and \eqref{eq.minmaxneu}, for any $n\ge1$,   $\aspar \mapsto \lambda^{Dir,\cell}_n(\aspar)$ and $ \aspar \mapsto \lambda^{Neu,\cell}_n(\aspar)$ are increasing functions. Hence they either converge or diverge to $+\infty $ as  $\aspar\to+\infty$. In \cite{Hem-00}  the following is proved: 
  \begin{theorem}\label{thm.dirsneumlimitpec}
The spectrum of the operators $\bbA_{\aspar}^{Dir,\cell}$ and  $\bbA_{\aspar}^{Neu, \cell}$ converge respectively to the spectra of $\bbAinfDk$ for $\bk\in \mathcal{B}\setminus \{ {\bf 0}\}$ and $\bbA_{\infty,{\bf 0}}$ in the sense that for all $n>0$:
$$
\lambda_n^{Dir,\cell}(\aspar) \to \delta_n  \quad \mbox{ and } \quad  \lambda_n^{Neu,\cell}(\aspar) \to \nu_n,  \mbox{ as } \aspar \to \infty.
$$
\end{theorem}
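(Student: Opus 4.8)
The plan is to establish the two convergence statements, $\lambda_n^{Dir,\cell}(\aspar) \to \delta_n$ and $\lambda_n^{Neu,\cell}(\aspar) \to \nu_n$, by the same monotone-form machinery already used for the fiber operators $\bbA_{\aspar,\bk}$, simply replacing the quasi-periodic boundary conditions by Dirichlet (resp.\ Neumann) conditions on $\cell$. First I would record that, by the min-max formulas \eqref{eq.minmaxdir} and \eqref{eq.minmaxneu}, for each fixed $n$ the maps $\aspar\mapsto\lambda_n^{Dir,\cell}(\aspar)$ and $\aspar\mapsto\lambda_n^{Neu,\cell}(\aspar)$ are nondecreasing in $\aspar$ (the integrand $\sigma_\aspar$ increases pointwise), hence each has a limit in $(0,+\infty]$; the content of the theorem is that the limit is finite and equals the asserted Dirichlet/Neumann-Laplacian eigenvalue of $\cell^+$.

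For the Dirichlet case, I would introduce the monotone family of forms $a_{\aspar}^{Dir}(u,u)=\int_\cell \sigma_\aspar|\nabla u|^2\,\rmd\bx$ on the fixed domain $H^1_0(\cell)$, and take the increasing supremum $a_\infty^{Dir}(u,u)=\sup_{\aspar>0}a_\aspar^{Dir}(u,u)=\lim_{\aspar\to\infty}a_\aspar^{Dir}(u,u)$, with domain $\{u\in H^1_0(\cell):\sup_{\aspar>0}a_\aspar^{Dir}(u,u)<\infty\}$. Exactly the argument in the excerpt shows this domain forces $\nabla u=0$ a.e.\ on $\cell^-$; since $\cell^-$ is open, connected and meets $\partial\cell$ (by assumption $(\Omega.\mathrm{iv})$ the inclusions are bounded away from $\partial\cell$), a function in $H^1_0(\cell)$ that is locally constant on $\cell^-$ and has zero trace on $\partial\cell$ must vanish a.e.\ on $\overline{\cell^-}$. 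Hence $\domain(a_\infty^{Dir})=H^1_0(\cell^+)$ (functions supported on the closed inclusion set, extended by zero) and $a_\infty^{Dir}(u,u)=\int_{\cell^+}|\nabla u|^2\,\rmd\bx$, whose associated self-adjoint operator is precisely $\LaplDcellp=-\Delta_{\mathrm{Dir},\cell^+}$ of \eqref{eq.defADinfty}, with eigenvalues $(\delta_n)$. I would then invoke the monotone-form convergence theorem of \cite{Sim:78} (as used for Theorem \ref{thmsaympAtau}) to conclude $\bbA_\aspar^{Dir,\cell}\to\LaplDcellp$ in the norm-resolvent sense on $L^2(\cell)$, after extending $R_{\LaplDcellp}(\zeta)$ by $0$ on $(\widetilde{L}^2)^\perp$ as in Remark \ref{rem.extensionres}; norm-resolvent convergence then gives convergence of each eigenvalue, $\lambda_n^{Dir,\cell}(\aspar)\to\delta_n$.

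For the Neumann case I would repeat this with the fixed form domain $H^1(\cell)$. Here the limiting domain is $\{u\in H^1(\cell): \sup_\aspar a_\aspar^{Neu}(u,u)<\infty\}$, and the condition $\nabla u=0$ a.e.\ on the connected set $\cell^-$ now only forces $u$ to equal a constant a.e.\ on $\overline{\cell^-}$ (no boundary constraint is available), so the limiting domain is $H^1(\cell^+)\oplus\operatorname{span}(\mathbf 1_\cell)$ and $a_\infty^{Neu}(u,u)=\int_{\cell^+}|\nabla u|^2\,\rmd\bx$, whose operator is the Neumann Laplacian on $\cell^+$ (acting as $0$ on the complementary constant mode), with eigenvalues $(\nu_n)$. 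The same \cite{Sim:78} argument delivers norm-resolvent convergence and hence $\lambda_n^{Neu,\cell}(\aspar)\to\nu_n$. The main obstacle is bookkeeping rather than conceptual: one must check that the orthogonal complements on which the limiting resolvents are extended by zero are handled consistently (the constant mode in the Neumann case splits off an extra one-dimensional piece, just as in the $\bk=\mathbf 0$ fiber), and that the limiting form is genuinely closed on the stated domain so that the monotone-form theorem applies verbatim; both points are already settled in \cite{Hem-00}, so I would cite them rather than reprove them.
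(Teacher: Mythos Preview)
Your approach via monotone increasing forms and Simon's convergence theorem is exactly what the paper intends; it does not give its own proof but cites \cite{Hem-00}, noting that the result is ``proved by similar techniques to those above''. Your treatment of the Dirichlet case is correct.

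There is, however, a genuine error in the Neumann case. You assert that the limiting form domain is $H^1(\cell^+)\oplus\operatorname{span}(\mathbf 1_\cell)$ and that the associated operator is the Neumann Laplacian on $\cell^+$. But your own reasoning gives only that $u$ is a \emph{constant} on $\overline{\cell^-}$; hence the trace of $u$ on $\partial\cell^+$ equals that same constant, and the limiting form domain (as a subspace of $H^1(\cell)$) is
\[
\{u\in H^1(\cell): u\equiv c\ \text{on}\ \overline{\cell^-}\ \text{for some}\ c\in\mathbb{C}\}\ =\ H^1_0(\cell^+)\oplus\operatorname{span}(\mathbf 1_\cell),
\]
with $H^1_0(\cell^+)$ extended by zero to $\cell$. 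This is strictly smaller than what you wrote, and the limiting operator is therefore \emph{not} the Neumann Laplacian on $\cell^+$. Since functions in this domain are constant in a neighbourhood of $\partial\cell$, the Neumann and periodic conditions on $\partial\cell$ become indistinguishable, so the limiting operator coincides (under $L^2(\cell)\cong L^2_{\bf 0}$) with the operator $\bbA_{\infty,{\bf 0}}$ of Theorem~\ref{thmsaympANinf}, whose eigenvalues are by definition the $\nu_n$. With this correction your argument goes through; the paper itself warns that ``the description of the limiting operator requires much more technical study'' in this case and defers to \cite{Hem-00}.
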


\subsection{Limiting behavior of dispersion surfaces and the opening of spectral gaps}
We study the uniform convergence of the dispersions surfaces  $\bk \mapsto \lambda_n(\aspar; \bk)$ defined on the Brillouin-zone $\mathcal{B}$ and also on the existence of a criterion for the existence of gaps between these surfaces. Although not explicitly stated in \cite{Hem-00}, a direct consequence is the  uniform convergence of  the dispersion surfaces $\bk\mapsto \lambda_n(\aspar; \bk)$ on any compact subset  of the Brillouin zone $\mathcal{B}$ which does not contain ${\bf 0}$. We precede the formulation of this result with a lemma on the strict monotonicty of the dispersion maps with respect to the contrast parameter $\aspar$. This property is not discussed in \cite{Hem-00}. 
\begin{lemma}\label{lem.monoticitydispcurv}
\begin{enumerate}
\item Let  $n=1$.  For all $\aspar>0$, $\lambda_1(\aspar;{\bf 0})=0$. Furthermore, for fixed  $\bk \neq 0$, $\aspar\in\R_+ \mapsto \lambda_1(\aspar;\bk)$  is  strictly increasing.  
\item  Let $n\ge2$ and fix $\bk \in \mathcal{B}$. Then, the function $\aspar\in\R_+ \mapsto \lambda_n(\aspar;\bk)$  is  strictly increasing.
\end{enumerate}
\end{lemma}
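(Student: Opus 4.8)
The plan is to use the min-max characterization \eqref{eq.minmaxkpseudo} together with the strict monotonicity of the map $\aspar \mapsto a_{\aspar,\bk}(u,u)$ on the \emph{proper} subspace of $H^1_\bk$ consisting of functions that are not constant on $\cell^-$. The key elementary observation is that for $u \in H^1_\bk$,
\begin{equation*}
a_{\aspar,\bk}(u,u) = \int_{\cell^+} |\nabla u|^2\, \rmd\bx + \aspar \int_{\cell^-} |\nabla u|^2\, \rmd\bx,
\end{equation*}
so that $\aspar \mapsto a_{\aspar,\bk}(u,u)$ is strictly increasing precisely when $\int_{\cell^-} |\nabla u|^2\, \rmd\bx > 0$, i.e. when $u$ is not (a.e.) constant on the connected open set $\cell^-$; if instead $\nabla u = 0$ a.e. on $\cell^-$ then $a_{\aspar,\bk}(u,u)$ is independent of $\aspar$ and equals $a_{\infty,\bk}(u,u)$. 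The first claim for $n=1$ about $\lambda_1(\aspar;{\bf 0})=0$ is immediate, since the constant function ${\bf 1}\in H^1_{\bf 0}$ has $a_{\aspar,{\bf 0}}({\bf 1},{\bf 1})=0$ and $\bbA_\aspar \geq 0$.

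For the strict-monotonicity assertions I would argue by contradiction. Fix $\aspar_1 < \aspar_2$ and suppose $\lambda_n(\aspar_1;\bk) = \lambda_n(\aspar_2;\bk)$ for the relevant $(n,\bk)$. Since $a_{\aspar_1,\bk} \le a_{\aspar_2,\bk}$ pointwise on $H^1_\bk$, the min-max principle already gives $\lambda_m(\aspar_1;\bk)\le \lambda_m(\aspar_2;\bk)$ for all $m$; under the equality hypothesis for index $n$, a standard argument with the min-max formula \eqref{eq.minmaxkpseudo} shows there is an $n$-dimensional subspace $V \subset H^1_\bk$ realizing (or arbitrarily nearly realizing) $\lambda_n(\aspar_2;\bk)$ as its max of $a_{\aspar_2,\bk}$, on which $a_{\aspar_1,\bk}$ and $a_{\aspar_2,\bk}$ must agree; more cleanly, one takes $V$ to be the span of the first $n$ eigenfunctions $u_1,\dots,u_n$ of $\bbA_{\aspar_2,\bk}$, so that for every $u\in V$ with $\|u\|_{L^2}=1$ one has $a_{\aspar_1,\bk}(u,u)\le\lambda_n(\aspar_1;\bk)=\lambda_n(\aspar_2;\bk)$, while $a_{\aspar_2,\bk}(u,u)\le\lambda_n(\aspar_2;\bk)$ as well, and since $a_{\aspar_1,\bk}\le a_{\aspar_2,\bk}$, equality forces $a_{\aspar_1,\bk}(u,u)=a_{\aspar_2,\bk}(u,u)$, hence $\int_{\cell^-}|\nabla u|^2\,\rmd\bx=0$, for \emph{every} $u\in V$. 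Thus every element of the $n$-dimensional space $V$ is constant on $\cell^-$.

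Now I invoke the structure of the limiting form domains established earlier. For $\bk\ne 0$, functions in $H^1_\bk$ that are constant on the connected set $\cell^-$ must vanish there (non-zero constants are not $\bk$-quasiperiodic), so $V \subset \widetilde{H}^1_\bk$ and every $u\in V$ lies in $\domain(a_{\infty,\bk})$ with $a_{\aspar_2,\bk}(u,u)=a_{\infty,\bk}(u,u)$; consequently $\lambda_n(\aspar_2;\bk)\geq \delta_n$ by the min-max characterization of $\bbAinfDk$ on $\widetilde{H}^1_\bk$. But $\aspar\mapsto\lambda_n(\aspar;\bk)$ is increasing and, by Theorem \ref{thmsaympAtau}, converges to $\delta_n$, so $\lambda_n(\aspar_2;\bk)\le\delta_n$; hence $\lambda_n(\aspar_2;\bk)=\delta_n$. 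This contradicts strict monotonicity of the finite-$\aspar$ eigenvalue \emph{relative to its own limit}: since $\aspar\mapsto\lambda_n(\aspar;\bk)$ is increasing toward $\delta_n$ and we have shown $\lambda_n(\aspar_2;\bk)=\delta_n$, it would have to be eventually constant equal to $\delta_n$, forcing (by the same subspace argument run at a third value $\aspar_3>\aspar_2$) an $n$-dimensional subspace of $\domain(a_{\infty,\bk})$ on which $a_{\infty,\bk}$ is $\le\delta_n$ \emph{and} on which $a_{\aspar_3,\bk}=a_{\aspar_2,\bk}$; one then produces the contradiction by noting that equality $\lambda_n(\aspar;\bk)\equiv\delta_n$ on an interval of $\aspar$ would make $\delta_n$ an eigenvalue of $\bbA_{\aspar,\bk}$ with eigenfunction supported in $\boldcell^+$ and vanishing on $\boldcell^-$, i.e. an interior Dirichlet eigenfunction extended by zero — which is not in $D(\bbA_{\aspar,\bk})$ unless its normal derivative vanishes on $\partial\cell^+$, impossible for a Dirichlet eigenfunction of $-\Delta$ on $\cell^+$ by unique continuation. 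For $\bk={\bf 0}$ and $n\geq 2$, the same argument applies after quotienting out the one-dimensional space of constants: $V$ is $n$-dimensional and every element is constant on $\cell^-$, hence $V\subset\widetilde{H}^1_{\bf 0}\oplus\operatorname{span}({\bf 1})=\domain(a_{\infty,{\bf 0}})$, so $\lambda_n(\aspar_2;{\bf 0})\geq\nu_n$; combined with $\lambda_n(\aspar;{\bf 0})\uparrow\nu_n$ (Theorem \ref{thmsaympANinf}) and the same unique-continuation obstruction, we reach a contradiction. The main obstacle, and the step needing the most care, is precisely this last point: ruling out that a finite-$\aspar$ eigenvalue already \emph{equals} its $\aspar\to\infty$ limit — equivalently, that a Dirichlet eigenfunction of $-\Delta$ on $\cell^+$, extended by zero, could lie in $D(\bbA_{\aspar,\bk})$. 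This is where the trace/normal-trace matching conditions of Remark \ref{DomAbk} and unique continuation for $-\Delta$ must be combined; I would isolate it as a short separate lemma stating that $\delta_n \notin \sigma(\bbA_{\aspar,\bk})$ for any finite $\aspar>0$ and $\bk\ne 0$ (resp. $\nu_n\notin\sigma(\bbA_{\aspar,{\bf 0}})$ for $n$ such that $\nu_n>0$), after which the strict monotonicity follows cleanly from the combination of monotone convergence to these limits and the non-strict monotonicity already in hand.
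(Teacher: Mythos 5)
Your proposal contains a genuine gap at the crucial step where you pass from the pair of scalar inequalities to a statement about the whole subspace $V$. You take $V$ to be the span of the first $n$ eigenfunctions of $\bbA_{\aspar_2,\bk}$ and then argue: for every unit $u\in V$ one has $a_{\aspar_1,\bk}(u,u)\le\lambda_n(\aspar_1;\bk)=\lambda_n(\aspar_2;\bk)$, $a_{\aspar_2,\bk}(u,u)\le\lambda_n(\aspar_2;\bk)$, and $a_{\aspar_1,\bk}\le a_{\aspar_2,\bk}$, hence ``equality forces'' $a_{\aspar_1,\bk}(u,u)=a_{\aspar_2,\bk}(u,u)$ for \emph{every} $u\in V$. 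This does not follow: both quadratic forms being bounded above by the same number, together with one dominating the other, gives $a_{\aspar_1,\bk}(u,u)\le a_{\aspar_2,\bk}(u,u)\le\lambda_n$, which is not an equality for a generic $u\in V$. (Take any $u$ in the span of low eigenfunctions: both forms are strictly less than $\lambda_n$ and there is no constraint forcing them to agree.) What the min-max chain actually yields, after combining $\lambda_n(\aspar_1;\bk)\le\max_V a_{\aspar_1,\bk}\le\max_V a_{\aspar_2,\bk}=\lambda_n(\aspar_2;\bk)=\lambda_n(\aspar_1;\bk)$, is that $V$ realizes the min-max for \emph{both} contrasts, and that a maximizer of $a_{\aspar_1,\bk}$ over the unit sphere of $V$ — a single vector, not the whole subspace — achieves equality $a_{\aspar_1,\bk}=a_{\aspar_2,\bk}$ and is therefore a common eigenfunction of $\bbA_{\aspar_1,\bk}$ and $\bbA_{\aspar_2,\bk}$.

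This is exactly where the paper goes instead: it produces one nonzero $u$ that solves the eigenvalue problem for both $\aspar_1$ and $\aspar_2$ (possibly with different eigenvalue indices). Writing out the two PDEs on $\cell^+$ (where $\sigma_\aspar\equiv1$ for both contrasts) and on $\cell^-$ (where $\sigma_\aspar\equiv\aspar$), one of two cases occurs: either the two eigenvalues differ and then subtracting the two Helmholtz equations on $\cell^+$ forces $u\equiv 0$ on $\cell^+$, or they coincide, in which case the two bulk equations $-\Delta u=(\lambda/\aspar_j)u$ with $\aspar_1\ne\aspar_2$ and $\lambda\ne 0$ force $u\equiv 0$ on $\cell^-$. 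In either case the transmission conditions (Remark~\ref{DomAbk}) make both Cauchy data vanish on $\partial\cell^+$, and unique continuation for $-\Delta-c\,\mathrm{Id}$ on the connected cell $\cell$ kills $u$, a contradiction. Your detour through the $\aspar\to\infty$ limit $\delta_n$ is both unnecessary and incomplete: even granting (falsely) that all of $V$ is constant on $\cell^-$, the conclusion $\lambda_n(\aspar_2;\bk)=\delta_n$ does not by itself supply an $\bbA_{\aspar_2,\bk}$-eigenfunction supported inside $\cell^+$; the ``eventually constant on an interval, then run the argument at a third $\aspar_3$'' reasoning is a circle that still has to produce a common eigenfunction across two contrasts, which is precisely the object the direct argument obtains in one step. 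Your instinct that the endgame is unique continuation plus the trace matching of Remark~\ref{DomAbk} is correct, but the way to reach it is the common-eigenfunction route, not the subspace-wide equality of forms.
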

\begin{proof} Note that for any fixed $u\in H^1_{\bk}$, $\aspar\mapsto a_{\aspar}(u,u)$ defined for $\aspar>0$ is increasing. Hence, by the min-max characterization \eqref{eq.minmaxkpseudo}, for any fixed $n\ge1$ and $\bk\in\mathcal{B}$,  $\aspar\mapsto\lambda_n(\aspar;\bk)$ are increasing.
 For the particular case $n=1$ and $\bk =0$, $\bbA_{\aspar,{\bf 0}} u=0$ implies $u$ is equal to a constant on $\cell$.  Thus, $\lambda_1(\aspar;{\bf 0})=0$   for all $\aspar>0$ and  furthermore this eigenvalue is simple.

We  prove  now by contradiction, that $\lambda_n(\cdot;\bk)$ is strictly increasing for $n\geq 2$ or  for $n=1$  if $\bk \neq 0$. This strict monotonicity of the dispersion maps is not mentioned in \cite{Hem-00}.  Assume that there exist $\aspar_1,\aspar_2>0$ such that  $\aspar_1<\aspar_2$ and $\lambda_n(\aspar_1;\bk)=\lambda_n( \aspar_2;\bk)$. Then by the min-max theory, there exists a subspace $V_{j}$, for $ j=1,2$, of dimension $n$ with $V_{j}=\operatorname{span}(u_{1,j}, \ldots u_{n,j} )$ where $ u_{m,j}\in D(\bbA_{\aspar_j},\bk)$, $\|u_{m,j}\|_{L^2(\Omega)}= 1$ and $A_{\aspar_j, \bk} u_{m,j}=\lambda_{m}(\aspar_j;\bk)\, u_{m,j}$  for  $m=1,\ldots,n$ such that:
\begin{eqnarray}\label{eq.minmax2}
\lambda_n(\aspar_j;\bk)&=& a_{\aspar_j}(u_{n,j},u_{n,j})\nonumber \\
&=&\max_{\substack{u\in V_j, \, \|u\|_{L^2(\Omega)}= 1}} a_{\aspar_j}(u,u) =\min_{\substack{V\subset H^1_{\bk} \,, \mathrm{dim} V=n}} \quad
 \max_{\substack{u\in V, \, \|u\|_{L^2(\Omega)}= 1}} a_{\aspar_j}(u,u) . 
 \end{eqnarray}
 Thus, it follows that:
 \begin{eqnarray*}
 a_{\aspar_1}(u_{n,1},u_{n,1})&\leq& \max_{ u \in V_2, \, \|u\|_{L^2(\Omega)}= 1 } a_{\aspar_1}(u,u) \\
  &\leq& \max_{ u \in V_2, \, \|u\|_{L^2(\Omega)}= 1 } a_{g_2}(u,u) =a_{g_2}(u_{n,2},u_{n,2})=a_{g_1}(u_{n,1},u_{n,1})
 \end{eqnarray*}
 (where the first inequality is due to the min-max formula \eqref{eq.minmax2} for $j=1$ and the second to the increasing of the functions $\aspar\mapsto a_{\aspar}(u,u)$ defined for $\aspar>0$). Hence $V_2$ is a subspace of dimension $n$ for which the minimum is reached in \eqref{eq.minmax2} for both $j=1$ and  $j=2$. Thus, by min-max theory, the set $V_2$ consists  of the linear combination of eigenfunctions associated to the $n-th$ first eigenvalues  of the operator $A_{\aspar_1,\bk}$ and  the function $u_{n,2}\in V_2$ which reaches the maximum  over $V_2$ is an eigenfunction associated to the eigenvalue $\lambda_{n}(\aspar_1; \bk)(=\lambda_{n}(\aspar_2; \bk)$).
Thus, one has $A_{\bk,\aspar_1} u_{n,2}=\lambda_{n}(\aspar_1; \bk) u_{n,2}$ and $A_{\bk,\aspar_2} u_{n,2}=\lambda_{n}(\aspar_1; \bk) u_{n,2}$. This leads immediately to $-\Delta u=(\lambda_{n}(\aspar_1; \bk)/\aspar_1) \, u_{n,2}=(\lambda_{n}(\aspar_1; \bk)/\aspar_2 )\, u_{n,2}$ on $\cell^-$ and thus to $u_{n,2}=0$ on $\cell^-$ because $\lambda_{n}(\aspar_1; \bk)\neq 0$ for $n\geq 2$  or $\bk\neq 0$. Using the continuity of the traces of $u_{n,2}$ and of $\sigma_{\aspar_2}\partial u_{n,2}/\partial \bn$ across $\partial \cell^+$, it implies  that  $u_{n,2}$ and  $\partial u_{n,2}/ \partial \bn$ are also continuous and  thus they  both vanish on $\partial \cell^+$. Hence, one has $\Delta u_{n,2}=\lambda_{n}(\aspar_1; \bk) u_{n,2}$ on $\cell$ and $u_{n,2}=0$ on $\cell^-$ which leads  by the unique continuation principle (see e.g. Lemma 2.2 of \cite{Bour:10}) to the contradiction $u_{n,2}=0$ on $\cell$. 
\end{proof}

\begin{proposition}\label{prop.convunif}
Let $n\ge1$. (a) For $\bk\in\mathcal{B}\setminus\{{\bf0}\}$, the dispersion map $\bk\mapsto\lambda_n(\aspar;\bk)$ converges to the constant function with value $\delta_n$ as $\aspar\to \infty$.  This convergence is uniform on compact subsets of $\mathcal{B}\setminus\{{\bf0}\}$.
(b) The convergence in (a) is uniform on  all of $\mathcal{B}$  if and only if $\nu_n=\delta_n$.
\end{proposition}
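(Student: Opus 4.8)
The plan is to derive (a) from the pointwise convergence in Theorem \ref{thmsaympAtau} by a Dini-type argument powered by monotonicity in $\aspar$, and to derive (b) from the Neumann--Dirichlet bracketing of Lemma \ref{lem.inequaleigenval} together with the limiting identifications in Theorems \ref{thmsaympANinf} and \ref{thm.dirsneumlimitpec}.

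For part (a), I would first record the one-sided bound $\lambda_n(\aspar;\bk)\le\delta_n$, valid for all $\bk\in\mathcal{B}$ and all $\aspar>0$: indeed $\aspar\mapsto\lambda_n^{Dir,\cell}(\aspar)$ is nondecreasing with limit $\delta_n$ (Theorem \ref{thm.dirsneumlimitpec}), so $\lambda_n^{Dir,\cell}(\aspar)\le\delta_n$, and then $\lambda_n(\aspar;\bk)\le\lambda_n^{Dir,\cell}(\aspar)\le\delta_n$ by Lemma \ref{lem.inequaleigenval}. Set $f_\aspar(\bk):=\delta_n-\lambda_n(\aspar;\bk)\ge0$; each $f_\aspar$ is continuous on $\mathcal{B}$ by \eqref{lam-lip}, the family $\aspar\mapsto f_\aspar(\bk)$ is nonincreasing for each fixed $\bk$ (monotonicity of $\aspar\mapsto a_{\aspar,\bk}(u,u)$ together with the min-max formula \eqref{eq.minmaxkpseudo}), and for $\bk\ne{\bf 0}$ it tends to $0$ as $\aspar\to\infty$ by Theorem \ref{thmsaympAtau}. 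On a compact $K\subset\mathcal{B}\setminus\{{\bf 0}\}$ I would then run the standard Dini covering argument: given $\eps>0$, for each $\bk\in K$ choose $\aspar_{\bk}$ with $f_{\aspar_{\bk}}(\bk)<\eps$, use continuity of $f_{\aspar_{\bk}}$ to get a neighborhood $U_{\bk}$ of $\bk$ on which $f_{\aspar_{\bk}}<\eps$, and note that by monotonicity $f_\aspar<\eps$ on $U_{\bk}$ for all $\aspar\ge\aspar_{\bk}$; a finite subcover $U_{\bk_1},\dots,U_{\bk_m}$ of $K$ and $G:=\max_i\aspar_{\bk_i}$ then give $0\le f_\aspar<\eps$ on $K$ for all $\aspar\ge G$.

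For the ``if'' direction of (b), assuming $\nu_n=\delta_n$, I would simply squeeze: by Lemma \ref{lem.inequaleigenval}, $\lambda_n^{Neu,\cell}(\aspar)\le\lambda_n(\aspar;\bk)\le\lambda_n^{Dir,\cell}(\aspar)\le\delta_n$ for every $\bk\in\mathcal{B}$, hence $0\le\delta_n-\lambda_n(\aspar;\bk)\le\delta_n-\lambda_n^{Neu,\cell}(\aspar)$, and the right-hand side is $\bk$-independent and tends to $\delta_n-\nu_n=0$ as $\aspar\to\infty$ by Theorem \ref{thm.dirsneumlimitpec}; this is uniform convergence on all of $\mathcal{B}$. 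For the ``only if'' direction, if $\lambda_n(\aspar;\cdot)\to\delta_n$ uniformly on $\mathcal{B}$ then in particular $\lambda_n(\aspar;{\bf 0})\to\delta_n$, whereas Theorem \ref{thmsaympANinf} gives $\lambda_n(\aspar;{\bf 0})\to\nu_n$, forcing $\nu_n=\delta_n$.

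There is no serious obstacle here; the only point requiring care --- and it is precisely the content of (b) --- is that the Dini argument cannot be extended across $\bk={\bf 0}$: there $\lambda_n(\aspar;{\bf 0})\to\nu_n$, which may be strictly below $\delta_n$, so the pointwise limit on the full Brillouin zone is discontinuous at the origin and uniform convergence on any neighborhood of ${\bf 0}$ fails unless $\nu_n=\delta_n$. A minor technical wrinkle is that the Dini argument is run over the continuum of parameters $\aspar\in\R_+$ rather than a sequence, but the monotonicity in $\aspar$ makes the covering argument go through verbatim.
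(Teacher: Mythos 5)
Your proof is correct, and the core mechanism is the same as the paper's: part (a) is Dini's theorem driven by the monotonicity of $\aspar\mapsto\lambda_n(\aspar;\bk)$ (the paper cites Lemma \ref{lem.monoticitydispcurv}; you re-derive the covering argument by hand, which is fine), and the ``only if'' direction of (b) is, in both treatments, the observation that uniform convergence at $\bk={\bf 0}$ forces $\nu_n=\delta_n$ since Theorem \ref{thmsaympANinf} pins the pointwise limit there at $\nu_n$.

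The one place you genuinely diverge is the ``if'' direction of (b). The paper just re-applies Dini with $J=\mathcal{B}$, which works because when $\nu_n=\delta_n$ the pointwise limit is the \emph{continuous} constant $\delta_n$ on all of $\mathcal{B}$. You instead squeeze with the Neumann--Dirichlet bracketing of Lemma \ref{lem.inequaleigenval}: $0\le\delta_n-\lambda_n(\aspar;\bk)\le\delta_n-\lambda_n^{Neu,\cell}(\aspar)$, a $\bk$-independent bound tending to $0$ by Theorem \ref{thm.dirsneumlimitpec}. Both are valid. Your version has the small virtue of not invoking Dini (hence not needing continuity of $\bk\mapsto\lambda_n(\aspar;\bk)$) for that half of the equivalence, and it makes explicit the quantitative, $\bk$-uniform rate control coming from the Neumann eigenvalue. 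The paper's version is shorter. Either way, the proof is sound.
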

\begin{proof} Choose $J$, a compact subset of $\mathcal{B}\setminus\{{\bf0}\}$.
For $\aspar>0$, the functions $\bk\mapsto \lambda_n(\aspar;\bk)$  are continuous  and converge pointwise  to the constant function $\delta_n$; see Theorem \ref{thmsaympAtau}. Furthermore, for all fixed $\bk\in J$,  the functions $\aspar \mapsto \lambda_n(\aspar;\bk)$ are increasing  by Lemma \ref{lem.monoticitydispcurv}. Therefore, by Dini's theorem, the convergence is uniform on $J$. 
If $\nu_n\neq \delta_n$,  then  $\lambda_n(\aspar;{\bf0})\to \nu_n\ne\delta_n=\lim_{\aspar\to\infty}\lambda_n(\aspar,\bk\ne0)$. In this case the convergence is not uniform on $\mathcal{B}$ since $\bk\mapsto\lambda_n(\aspar;\bk)$ is continuous and the limiting function  is  discontinuous. On the other hand,  if $\nu_n=\delta_n$, take $J=\mathcal{B}$ and then Dini's theorem implies uniform convergence on $\mathcal{B}$.
\end{proof}

\begin{proposition}[Interlacing of limiting Dirichlet and Neumann eigenvalues]\label{Prop.interlace}
The eigenvalues $ \nu_n$ and $\delta_n$ interlace in the following sense:
\begin{equation}\label{eq.interlacing}
\nu_n \leq \delta_n \leq \nu_{n+1}, \ \mbox{ for }\  n\geq 1.
 \end{equation}
 \end{proposition}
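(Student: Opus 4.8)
The plan is to read off \eqref{eq.interlacing} directly from the min--max characterization of $\delta_n$ and $\nu_n$, exploiting that both are eigenvalues of operators attached to the \emph{same} quadratic form $u\mapsto\int_{\cell^+}|\nabla u|^2\,\rmd\bx$, acting with form domains that differ by exactly one dimension. Concretely, realizing everything inside $L^2_{\bf 0}$ with Rayleigh quotient $R(u):=\int_{\cell^+}|\nabla u|^2\,\rmd\bx\big/\|u\|_{L^2_{\bf 0}}^2$, I would first record
\[
\delta_n=\min_{\substack{W\subseteq\widetilde H^1_{\bf 0}\\ \dim W=n}}\ \max_{\substack{u\in W\\ u\neq0}}R(u),\qquad
\nu_n=\min_{\substack{W\subseteq\widetilde H^1_{\bf 0}\oplus\operatorname{span}({\bf 1})\\ \dim W=n}}\ \max_{\substack{u\in W\\ u\neq0}}R(u) ;
\]
the first holds because, via extension by zero, $\widetilde H^1_{\bf 0}$ is isometrically the form domain $H^1_0(\cell^+)$ of $\LaplDcellp$ (whose eigenvalues are the $\delta_n$) and the $L^2_{\bf 0}$-norm restricts to the $L^2(\cell^+)$-norm there, while the second is exactly the description of the limiting form $a_{\infty,{\bf 0}}$ recalled above.

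Both inequalities then follow by a soft min--max comparison with these two formulas. For $\nu_n\le\delta_n$: every $n$-dimensional subspace of $\widetilde H^1_{\bf 0}$ is in particular an $n$-dimensional subspace of $\widetilde H^1_{\bf 0}\oplus\operatorname{span}({\bf 1})$, and $R$ is literally the same functional in both formulas, so the infimum defining $\nu_n$ runs over a larger family. For $\delta_n\le\nu_{n+1}$: choose an $(n+1)$-dimensional subspace $W\subseteq\widetilde H^1_{\bf 0}\oplus\operatorname{span}({\bf 1})$ realizing $\nu_{n+1}$, e.g. the span of the first $n+1$ eigenfunctions of $\bbA_{\infty,{\bf 0}}$, so that $\max_{u\in W\setminus\{0\}}R(u)=\nu_{n+1}$. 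Since $\widetilde H^1_{\bf 0}$ has codimension one in $\widetilde H^1_{\bf 0}\oplus\operatorname{span}({\bf 1})$, the projection to $\operatorname{span}({\bf 1})$ restricted to $W$ has kernel $W\cap\widetilde H^1_{\bf 0}$ of dimension at least $(n+1)-1=n$; pick any $n$-dimensional $W'\subseteq W\cap\widetilde H^1_{\bf 0}$. Then $W'$ is admissible in the min--max for $\delta_n$ and $\max_{u\in W'\setminus\{0\}}R(u)\le\max_{u\in W\setminus\{0\}}R(u)=\nu_{n+1}$, which gives $\delta_n\le\nu_{n+1}$.

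The only delicate point — and the step I would flag as the main, if modest, obstacle — is justifying the two variational formulas, in particular that the eigenvalues $\delta_n$ of $\bbAinfDk$ (equivalently of $\LaplDcellp$) coincide with the min--max values over $n$-dimensional subspaces of $\widetilde H^1_{\bf 0}$ taken with the ambient $L^2_{\bf 0}$ inner product. This rests on the isometry between $\widetilde H^1_{\bf 0}$ and $H^1_0(\cell^+)$ by extension by zero, which in turn uses assumptions ($\Omega$.iii)--($\Omega$.iv): the inclusions are mutually disjoint and stay a positive distance from $\partial\cell$, so a periodic $H^1$ function vanishing on $\overline{\boldcell^-}$ has vanishing trace on $\partial\cell^+$ and hence lies in $H^1_0(\cell^+)$, and conversely. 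I would also note that the non-orthogonality of the sum $\widetilde H^1_{\bf 0}\oplus\operatorname{span}({\bf 1})$, remarked upon when $\bbA_{\infty,{\bf 0}}$ was introduced, plays no role here: the min--max principle only requires a nonnegative closed form whose form domain embeds compactly in the ambient Hilbert space, and both of these properties hold.
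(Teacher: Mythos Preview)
Your proof is correct. The paper's own proof is organized differently: for $\nu_n\le\delta_n$ it does not work directly with the limiting forms but instead passes to the limit $\aspar\to\infty$ in the finite-contrast bracketing $\lambda_n^{Neu,\cell}(\aspar)\le\lambda_n^{Dir,\cell}(\aspar)$ of Lemma~\ref{lem.inequaleigenval}, invoking Theorem~\ref{thm.dirsneumlimitpec} for the limits; for $\delta_n\le\nu_{n+1}$ it does not supply an argument in the text at all, deferring instead to \cite[Proposition~3.3]{Hem-00} with the remark that this is a ``much more delicate'' min--max argument. Your codimension-one comparison is the standard and clean way to obtain such an interlacing once the two form domains $\widetilde H^1_{\bf 0}\subset\widetilde H^1_{\bf 0}\oplus\operatorname{span}({\bf 1})$ are identified, and is presumably the content of the cited result; your write-up therefore has the advantage of being self-contained. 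Conversely, the paper's route for the first inequality simply reuses results already in hand and avoids having to match up the two Rayleigh quotients explicitly.
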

\begin{proof} The  first inequality $\nu_n \leq \delta_n$ in \eqref{eq.interlacing} follows from the limit $\aspar \to \infty$ of the inequality \eqref{eq.ineqvp}: $\lambda_n^{Neu,\cell}(\aspar)  \leq  \lambda_n^{Dir,\cell}(\aspar)$ and Theorem \ref{thm.dirsneumlimitpec}. The second 
inequality $\delta_n \leq \nu_{n+1}$ is much more delicate; it is proved by a min-max argument in \cite[Proposition 3.3]{Hem-00}.
\end{proof}

That the dispersion surfaces ``collapse'' onto asymptotic sets determined by the limiting Dirichlet and Neumann spectra $\{\nu_n, \, n \in \mathbb{N}\}$  and $\{\delta_n, \, n \in \mathbb{N}\}$ provides a means for identifying gaps in the spectrum of $\bbA_{\aspar}$
acting in $L^2(\R^2)$ .
The following result of \cite{Hem-00}  gives a condition  for the opening of a gap, for $\aspar$ sufficiently large, between the $n^{th}$ and $(n+1)^{st}$ dispersion surfaces.
\begin{proposition}[Condition for gap opening and location of spectral bands for $\aspar\gg1$]\label{Prop.nundeltan}
{\ }
\begin{enumerate}
\item Suppose $\delta_n< \nu_{n+1}$. Then,  for sufficiently large $\aspar$, there is a  gap in the spectrum of the periodic operator $\bbA_{\aspar}$.
More precisely, for all $\eta$ sufficiently small ($\eta \in (0,\nu_{n+1}-\delta_n)$), there exists $\aspar_{\eta}>0$ such that if  $\aspar >\aspar_{\eta}$, then  
\begin{equation}\label{eq.gapopen}
\sigma(\bbA_{\aspar}) \cap [\delta_n,\nu_{n+1}-\eta]=\emptyset.
\end{equation}
Hence,  there is spectral gap  located between the $n^{th}$ and $(n+1)^{st}$ spectral bands, 
which contains the interval $[\delta_n,\nu_{n+1}-\eta]$. 
\item  Suppose $ \nu_{n}< \delta_n$. Then, for any  sufficiently small $\eta$  (i.e. $\eta \in( 0,\delta_n-\nu_n)$), and $g>g_\eta$ sufficiently large,
the $n^{th}$ spectral band, $\lambda_{n}(\mathcal{B};\aspar)$,  contains the interval $[\nu_n,\delta_n-\eta]$. Hence, the $n-$th band ``does not get flat'' for large contrast.  
\end{enumerate}
\end{proposition}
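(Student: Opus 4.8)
The plan is to derive both statements directly from the eigenvalue bracketing of Lemma~\ref{lem.inequaleigenval}, the identification of the $\aspar\to\infty$ limits of the Dirichlet and Neumann eigenvalues in Theorem~\ref{thm.dirsneumlimitpec}, and the monotonicity of all the functions of $\aspar$ involved; no new analysis is needed beyond assembling these facts, together with the observation that, since $\bk\mapsto\lambda_n(\aspar;\bk)$ is continuous on the compact connected set $\mathcal B$, each band $\lambda_n(\mathcal B;\aspar)$ in the decomposition $\sigma(\bbA_\aspar)=\bigcup_{n\ge1}\lambda_n(\mathcal B;\aspar)$ of \eqref{eq.specA} is a compact interval $[\min_{\bk}\lambda_n(\aspar;\bk),\ \max_{\bk}\lambda_n(\aspar;\bk)]$. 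The one genuinely delicate input, the interlacing inequality $\delta_n\le\nu_{n+1}$, is Proposition~\ref{Prop.interlace} and is not reproved.

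For part (1), fix $\eta\in(0,\nu_{n+1}-\delta_n)$ and split $\sigma(\bbA_\aspar)$ into the first $n$ bands and the rest. For $m\le n$, the right-hand bracketing inequality in \eqref{eq.ineqvp} together with monotonicity of the Dirichlet eigenvalues in the index gives $\lambda_m(\mathcal B;\aspar)\subset[0,\lambda_n^{Dir,\cell}(\aspar)]$; since $\aspar\mapsto\lambda_n^{Dir,\cell}(\aspar)$ is \emph{strictly} increasing with supremum $\delta_n$ (Theorem~\ref{thm.dirsneumlimitpec}), one has $\lambda_n^{Dir,\cell}(\aspar)<\delta_n$ for every finite $\aspar$, so these $n$ bands avoid $[\delta_n,\infty)$. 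For $m\ge n+1$, the left-hand bracketing inequality gives $\lambda_m(\mathcal B;\aspar)\subset[\lambda_{n+1}^{Neu,\cell}(\aspar),\infty)$; since $\lambda_{n+1}^{Neu,\cell}(\aspar)\uparrow\nu_{n+1}$ there is $\aspar_\eta$ with $\lambda_{n+1}^{Neu,\cell}(\aspar)>\nu_{n+1}-\eta$ for $\aspar>\aspar_\eta$, so these bands avoid $(-\infty,\nu_{n+1}-\eta]$. Hence $\sigma(\bbA_\aspar)\cap[\delta_n,\nu_{n+1}-\eta]=\varnothing$ for $\aspar>\aspar_\eta$, and this interval lies in the gap between the $n$th and $(n+1)$st spectral bands.

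For part (2), fix $\eta\in(0,\delta_n-\nu_n)$, so that $\nu_n\le\delta_n-\eta$. Using the interval representation of $\lambda_n(\mathcal B;\aspar)$, it suffices to produce, for all large $\aspar$, a quasimomentum at which $\lambda_n(\aspar;\cdot)$ is $\le\nu_n$ and one at which it exceeds $\delta_n-\eta$. For the former, take $\bk={\bf 0}$: the map $\aspar\mapsto\lambda_n(\aspar;{\bf 0})$ is increasing with limit $\nu_n$ (Lemma~\ref{lem.monoticitydispcurv} and Theorem~\ref{thmsaympANinf}; for $n=1$ one simply has $\lambda_1(\aspar;{\bf 0})=0=\nu_1$), hence $\lambda_n(\aspar;{\bf 0})\le\nu_n$ for all finite $\aspar$. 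For the latter, fix any $\bk_0\in\mathcal B\setminus\{{\bf 0}\}$: by Theorem~\ref{thmsaympAtau}, $\lambda_n(\aspar;\bk_0)\to\delta_n$ as $\aspar\to\infty$, so $\lambda_n(\aspar;\bk_0)>\delta_n-\eta$ for $\aspar>\aspar_\eta$. Therefore $[\nu_n,\delta_n-\eta]\subset[\min_{\bk}\lambda_n(\aspar;\bk),\max_{\bk}\lambda_n(\aspar;\bk)]=\lambda_n(\mathcal B;\aspar)$ for $\aspar>\aspar_\eta$, i.e. the $n$th band does not collapse to a point as $\aspar\uparrow\infty$.

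The only step that goes beyond bookkeeping is the strict inequality $\lambda_n^{Dir,\cell}(\aspar)<\delta_n$ used in part (1) to expel the closed left endpoint $\delta_n$ from the spectrum; monotone convergence alone gives only $\le$. I would obtain it by an argument identical to the one establishing strict monotonicity of the dispersion maps in Lemma~\ref{lem.monoticitydispcurv}: equality $\lambda_n^{Dir,\cell}(\aspar_1)=\lambda_n^{Dir,\cell}(\aspar_2)$ at two contrasts $\aspar_1<\aspar_2$ would force, via the min-max characterization \eqref{eq.minmaxdir}, a nonzero common Dirichlet eigenfunction $u$, which (since $\lambda_n^{Dir,\cell}>0$ and $\aspar_1\neq\aspar_2$) vanishes on $\cell^-$; matching the traces of $u$ and of $\sigma_\aspar\,\partial u/\partial\bn$ across $\partial\cell^+$ then makes its Cauchy data vanish there while $-\Delta u=\lambda u$ on $\cell^+$, so unique continuation on the connected cell $\cell$ gives $u\equiv0$, a contradiction. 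Thus the anticipated obstacle is mild: the substantial fact $\delta_n\le\nu_{n+1}$ is imported from Proposition~\ref{Prop.interlace}, and everything else is a direct combination of bracketing with the limiting spectra.
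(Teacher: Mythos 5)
Your proof of part (2) matches the paper's almost word for word, so nothing to say there. For part (1) you take a genuinely different route to the key strict inequality that keeps $\delta_n$ out of the spectrum: you pass through the single parametrized quantity $\lambda_n^{Dir,\cell}(\aspar)$, appealing to the right-hand side of \eqref{eq.ineqvp} and then to \emph{strict} monotonicity of $\aspar\mapsto\lambda_n^{Dir,\cell}(\aspar)$, a fact the paper never establishes (it only records that these Dirichlet eigenvalues are increasing). You correctly flag this as the one nontrivial new ingredient and sketch a unique-continuation argument for it. The paper instead bypasses the Dirichlet eigenvalues entirely: it bounds each band function $\lambda_n(\aspar;\bk)$ directly, invoking the \emph{already proved} strict monotonicity in $\aspar$ of the dispersion maps from Lemma~\ref{lem.monoticitydispcurv} (together with $\lambda_n(\aspar;\bk)\to\delta_n$ for $\bk\neq 0$ and $\lambda_n(\aspar;{\bf 0})\to\nu_n\le\delta_n$), so that $\lambda_n(\aspar;\bk)<\delta_n$ pointwise in $\bk$, the case $(n,\bk)=(1,{\bf 0})$ being handled separately by $\lambda_1(\aspar;{\bf 0})=0$. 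What the paper's route buys is economy: no new lemma is needed. What yours buys is a single $\bk$-uniform bound $\lambda_n(\aspar;\bk)\le\lambda_n^{Dir,\cell}(\aspar)<\delta_n$ rather than a pointwise one. Both are valid, but since Lemma~\ref{lem.monoticitydispcurv} is already in hand, the paper's choice is the more efficient one; if you do want to go through the Dirichlet eigenvalues, note that your sketch only treats the branch where the common eigenfunction carries the \emph{same} eigenvalue for $\bbA_{\aspar_1}^{Dir,\cell}$ and $\bbA_{\aspar_2}^{Dir,\cell}$ — you also need the branch where the eigenvalues differ, in which case $u$ vanishes first on $\cell^+$ rather than on $\cell^-$, exactly as in the two-case argument of Lemma~\ref{lem.monoticitydispcurv}.
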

\begin{proof}
Figure \ref{fig.gapcond} (resp. figure \ref{fig.gapcond2}) serves as a clarifying schematic of the point 1 of Proposition \ref{Prop.nundeltan} (resp. of the point 2). 
\begin{figure}[!h]
\vspace{-0.3cm}
\centering
 \includegraphics[width=0.9\textwidth]{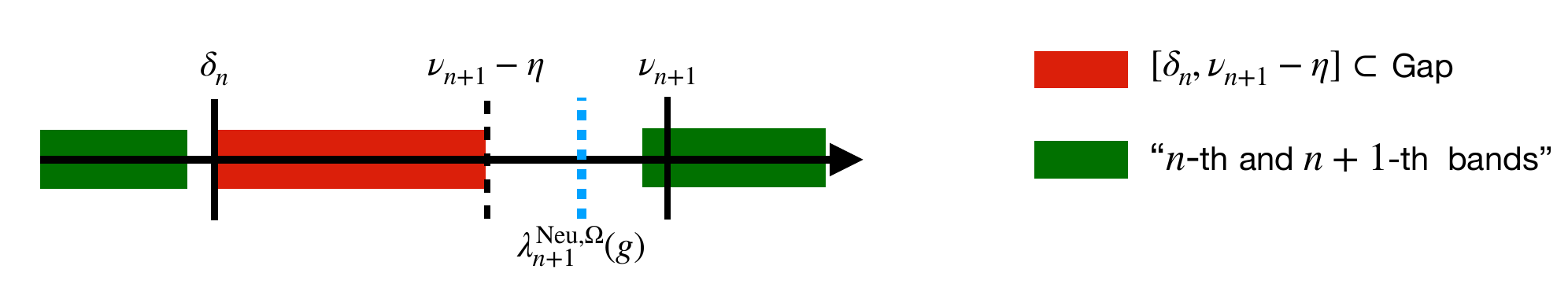}
   \caption{Bands and gaps of $\bbA_{\aspar}$ for $\aspar$ large when $\delta_n<\nu_{n+1}$ (case 1 of Proposition \ref{Prop.nundeltan}).   }
 \label{fig.gapcond}
\end{figure}

We first prove part (1).  Let $\eta\in(0,\nu_{n+1}-\delta_n)$ be fixed. One has by Theorem \ref{thm.dirsneumlimitpec} that  $\lambda_{n+1}^{Neu, \cell }(\aspar)\to \nu_{n+1}$, as $\aspar\to +\infty$. Therefore, there exists $\aspar_{\eta}>0$  such that if $ \aspar  >\aspar_{\eta}$ , then $ \nu_{n+1}-\eta< \lambda_{n+1}^{Neu, \cell }(\aspar)$.

Therefore, by the eigenvalue bracketing inequality \eqref{eq.ineqvp}, one has for $ \aspar  >\aspar_{\eta}$ and  all $  \bk \in \mathcal{B}$, $ \nu_{n+1}-\eta < \lambda_{n+1}^{Neu, \cell }(\aspar) \leq \lambda_{n+1}(\aspar;\bk)$. Hence the $(n+1)^{st}$- spectral band ``is above''  the energy $ \nu_{n+1}-\eta$. Furthermore, $\lambda_n(\cdot;\bk)$ is a strictly increasing function for $\aspar>0$  for $\bk\neq 0$ or $n>1$ (see lemma \ref{lem.monoticitydispcurv}) and it tends to $\delta_n$ (by Theorem  \ref{thm.dirsneumlimitpec}), as $\aspar\to +\infty$. Therefore one has $ \lambda_{n}(\aspar;\bk)< \delta_n$. For the particular case, $n=1$ and $\bk=0$, $\lambda_1(\aspar;{\bf 0})=0<\delta_1$ for $\aspar>0$. Thus, the $n-$th dispersion curve is always ``strictly below'' $\delta_n$ which yields to \eqref{eq.gapopen}.

\begin{figure}[!h]
\vspace{-0.3cm}
\centering
 \includegraphics[width=0.8\textwidth]{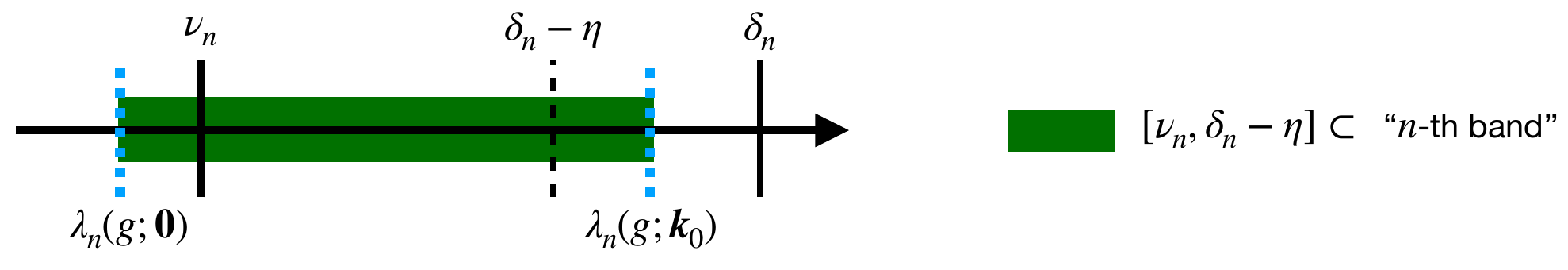}
   \caption{$n$-th bands for $\bbA_{\aspar}$ for $\aspar$ large when $\nu_n<\delta_n$ (case 2 of Proposition \ref{Prop.nundeltan}).}
 \label{fig.gapcond2}
\end{figure}
We now prove part (2). Let $\eta\in(0,\delta_n-\nu_n)$ be fixed. By Lemma \ref{lem.monoticitydispcurv},   $\lambda_n (\cdot;{\bf 0})$ is an increasing function for $\aspar >0$ and,  by Theorem \ref{thmsaympANinf}, it tends to $\nu_n$, as $\aspar \to \infty$. Thus, one has  $\lambda_n( \aspar;{\bf 0})\leq \nu_n$ for all $\aspar>0$. Let $\bk_0 \in \mathcal{B}\setminus\{{\bf 0} \}$, as $\lambda_{n}(\aspar;\bk_0)\to \delta_n$, as $\aspar\to +\infty$. One obtains that it  exists $\aspar_{\eta}>0$  such that if $ \aspar  >\aspar_{\eta}$ , $ \delta_n-\eta< \lambda_{n}(\aspar;\bk_0)$. One deduces from the convexity  of $\mathcal{B}$ and the continuity of  $\lambda_{n}(\aspar;\cdot)$ over $\mathcal{B}$ that $h: t\to \lambda_{n}(\aspar;t\bk_0)$ is well-defined  and continuous on $[0,1]$ and therefore by the intermediate value theorem $  [\nu_n,\delta_n-\eta]\subset h([0,1])\subset  \lambda_{n}(\aspar;\mathcal{B})$ for all $\aspar  >\aspar_{\eta}$. Hence it implies with \eqref{eq.specA} that for all $\aspar>\aspar_{\eta}$,
$ [\nu_n,\delta_n-\eta]\subset    \lambda_{n}(\aspar;\mathcal{B})\subset \sigma(\bbA_{\aspar})$.
\end{proof}

Thus, when the strict inequality $\nu_n<\delta_n<\nu_{n+1}$ is satisfied, one gets information on the large contrast asymptotics  of the $n^{th}$ and $(n+1)^{st}-$ spectral bands. In particular when $\delta_n<\nu_{n+1}$, there is a spectral gap between these two bands.

 In  \cite{Hem-00}, the authors provide the following useful condition on the eigenfunctions of  $\LaplDcellp$ (defined by \eqref{eq.defADinfty}) to ensure that $\delta_n<\nu_{n+1}$, and hence the existence of a spectral gap for $\aspar\gg1$.

\begin{proposition}\label{prop.condspec}(Conditions for a  gap in the spectrum  for large $\aspar$)\\
Let $(\delta_n)_{n\geq1}$ be the limiting Dirichlet eigenvalues
 (Theorem  \ref{thm.dirsneumlimitpec}), and  
formally set  $\delta_{0}=-\infty$. Assume ${\rm dim}\ \operatorname{ker}(\LaplDcellp-\delta_\star\, \mathrm{Id})=m+1$,
 with $\delta_\star=\delta_j=\dots= \delta_{j+m}$ so that 
 $$
 \delta_{j-1}<  \delta_\star< \delta_{j+m+1}.
$$
Then, we have the  following two distinct scenarios:
\begin{itemize}
\item[(A)] (Band separation)  there exists $u\in \operatorname{ker}(\LaplDcellp-\delta_\star \mathrm{Id}) $ such that
$
\displaystyle \int_{\cell^{+}} u(\bx)  \, \rmd \bx\neq 0,
$
$$
\textrm{and in this case,}\quad \nu_j < \delta_j \ \mbox{ and  }  \  \delta_{j+m}   <  \nu_{j+m+1}.
$$
\item[(B)]  for all $u\in \operatorname{ker}(\LaplDcellp-\delta_* \mathrm{Id}) $, one has
$
\displaystyle \int_{\cell^{+}} u (\bx) \, \rmd  \bx= 0,
$
$$
\textrm{and  then either}\quad \nu_j= \delta_j \  \mbox{ or } \ \delta_{j+m}  =  \nu_{j+m+1}.
$$
\end{itemize}
\end{proposition}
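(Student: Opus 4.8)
The plan is to relate the limiting Neumann eigenvalues $\nu_n$, which arise from the operator $\bbA_{\infty,{\bf 0}}$ acting on $\widetilde{L}^2_{\bf 0}\oplus\operatorname{span}({\bf 1})$, to the limiting Dirichlet eigenvalues $\delta_n$ of $\LaplDcellp$ via the min–max principle, exploiting the fact that $\bbA_{\infty,{\bf 0}}$ is a rank-one type perturbation of $\LaplDcellA$ (more precisely, its form domain $\widetilde{H}^1_{\bf 0}\oplus\operatorname{span}({\bf 1})$ differs from $\widetilde{H}^1_{\bf 0}$ only by the added constant function). The whole mechanism is governed by the quantity $\int_{\cell^+}u\,d\bx$ for $u$ in the $\delta_\star$-eigenspace, which measures the coupling of the Dirichlet eigenmodes to the constant function ${\bf 1}$. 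The two scenarios (A) and (B) correspond precisely to whether the entire degenerate eigenspace $\operatorname{ker}(\LaplDcellp-\delta_\star\,\mathrm{Id})$ is orthogonal to ${\bf 1}$ in $L^2(\cell^+)$ or not.

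First I would set up the comparison. Since $\widetilde{H}^1_{\bf 0}\subset\widetilde{H}^1_{\bf 0}\oplus\operatorname{span}({\bf 1})$ and the quadratic forms agree on the smaller space ($a_{\infty,{\bf 0}}(u,u)=\int_{\cell^+}|\nabla u|^2\,d\bx$, with the constant contributing zero gradient energy), the form domain only enlarges, so by min–max $\nu_n\le\delta_n$ for all $n$; this also re-derives the first inequality of Proposition \ref{Prop.interlace}. For the reverse-type estimates I would use the standard fact that adding one dimension to the form domain lowers each eigenvalue by at most one slot, i.e. $\delta_{n}\le\nu_{n+1}$ (Proposition \ref{Prop.interlace} again). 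The crux is then to decide, within the degenerate block $\delta_j=\cdots=\delta_{j+m}=\delta_\star$, whether the enlargement by ${\bf 1}$ actually strictly lowers some eigenvalue at the bottom of the block (forcing $\nu_j<\delta_j$) or does not reach the top of the block (forcing $\delta_{j+m}<\nu_{j+m+1}$).

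Concretely, for scenario (A) I would pick $u_0\in\operatorname{ker}(\LaplDcellp-\delta_\star\,\mathrm{Id})$ with $\int_{\cell^+}u_0\,d\bx\neq0$, and use the $(n+1)$-dimensional trial space obtained by appending ${\bf 1}$ to an $n$-dimensional space of Dirichlet eigenmodes with eigenvalues $\le\delta_\star$. A Rayleigh-quotient computation on this space, using that ${\bf 1}$ is not $L^2(\cell^+)$-orthogonal to $u_0$ and that $\bbA_{\infty,{\bf 0}}{\bf 1}=0$, shows the top eigenvalue of the restricted form is strictly below $\delta_\star$ when the appended space reaches down into the $\delta_\star$-block; carefully counting indices gives $\nu_j<\delta_j$. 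The second inequality $\delta_{j+m}<\nu_{j+m+1}$ follows symmetrically by a min–max bound from above: since the ${\bf 1}$-direction has been ``used up'' to produce the strict lowering, there is no room for a further eigenvalue of $\bbA_{\infty,{\bf 0}}$ to equal $\delta_\star$ beyond the multiplicity $m+1$, and one exhibits an explicit $(j+m+1)$-dimensional subspace of $\widetilde{H}^1_{\bf 0}\oplus\operatorname{span}({\bf1})$ on which the form is strictly below $\delta_{j+m+1}$ while, by the orthogonal-complement argument, $\nu_{j+m+1}$ cannot drop to $\delta_\star$. For scenario (B), where $\int_{\cell^+}u\,d\bx=0$ for every $u$ in the eigenspace, the added constant ${\bf 1}$ is $L^2(\cell^+)$-orthogonal to the whole $\delta_\star$-block, so the perturbation decouples from that block: one shows the eigenvalue $\delta_\star$ persists as an eigenvalue of $\bbA_{\infty,{\bf 0}}$, and a dimension count then forces either the bottom of the block to coincide, $\nu_j=\delta_j$, or the top to coincide with the next Neumann value, $\delta_{j+m}=\nu_{j+m+1}$ (the constant must ``fit in'' somewhere relative to this block).

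The main obstacle I anticipate is the careful bookkeeping of eigenvalue indices across the degenerate block together with the precise handling of the non-orthogonal direct sum $\widetilde{H}^1_{\bf 0}\oplus\operatorname{span}({\bf 1})$: one must be scrupulous that trial functions genuinely lie in the form domain of $\bbA_{\infty,{\bf 0}}$ (they do, since ${\bf 1}$ is constant on all of $\R^2$ and Dirichlet modes vanish on $\boldcell^-$), and that the strict inequalities in the Rayleigh quotients survive the limit — but the latter is exactly where the quantitative min–max (rather than soft monotonicity) is needed. I would lean on \cite[Proposition 3.3]{Hem-00} for the delicate second interlacing inequality and adapt its argument to track which side of the block the strict inequality lands on.
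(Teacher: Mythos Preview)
The paper does not supply its own proof of this proposition; it is quoted from Hempel--Lienau \cite{Hem-00} (see the sentence preceding the statement). So there is nothing in the paper to compare against beyond the citation, and your outline is essentially the approach of \cite{Hem-00}: view $\bbA_{\infty,{\bf 0}}$ as the Dirichlet problem with form domain enlarged by the single direction ${\bf 1}$, and decide the strict/nonstrict inequalities by whether ${\bf 1}$ couples to the $\delta_\star$-eigenspace via $\int_{\cell^+}u\,\rmd\bx$.

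One remark on execution. Your trial-space Rayleigh-quotient argument for $\nu_j<\delta_j$ in scenario (A) is workable but, as you anticipate, the non-orthogonal sum makes the bookkeeping awkward. A cleaner route is to compute the multiplicity of $\delta_\star$ as an eigenvalue of $\bbA_{\infty,{\bf 0}}$ directly: write the eigenvalue equation for $u=v+c{\bf 1}$ with $v\in\widetilde{H}^1_{\bf 0}$, obtain $(-\Delta-\delta_\star)v=\delta_\star c$ on $\cell^+$ together with the constraint $c|\cell|+\int_{\cell^+}v=0$, and apply the Fredholm alternative. In scenario (A) this forces $c=0$ and $\int_{\cell^+}v=0$, so $\dim\ker(\bbA_{\infty,{\bf 0}}-\delta_\star)=m$; in scenario (B) the whole Dirichlet eigenspace survives, giving multiplicity $\ge m+1$. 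Since the interlacing $\nu_n\le\delta_n\le\nu_{n+1}$ already pins $\nu_{j+1}=\cdots=\nu_{j+m}=\delta_\star$, the multiplicity count immediately yields both strict inequalities in (A) and the disjunction in (B), without any delicate trial-space construction.
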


\section{Isolation and limit behavior of two degenerate spectral bands}\label{sec.condP}

Our strategy to prove the existence of Dirac points
in the band structure of honeycomb operators, involves an asymptotic reduction of the full spectral 
 problem for $\mathbb{A}_\aspar$  to a problem localized about two degenerate (touching) bands. This (Lyapunov-Schmidt / Schur complement) reduction scheme requires that for $\aspar\gg1$
the two bands that touch in a Dirac point are separated from the remainder of the spectrum.  Since, the dispersion maps collapse onto asymptotic Dirichlet or asymptotic Neumann eigenvalues (Proposition \ref{prop.convunif}), the  condition in part (A) of 
Proposition \ref{prop.condspec} may be used to obtain such band separation. Indeed, suppose the scenario of part (A) holds, {\it i.e.}
  $u\in \operatorname{ker}(\LaplDcellp-\delta_\star \mathrm{Id}) $ and
$ \int_{\cell^{+}} u(\bx)  \, \rmd \bx\neq 0.$ Then, 
\begin{equation}
 \delta_{j-1}\le\nu_j<\delta_j=\delta_*=\dots =\delta_{j+m}<\nu_{j+m+1}\le\delta_{j+m+1}.
 \label{isolate-m}
 \end{equation}
Hence, for sufficiently large $\aspar$: bands $r$, with $r\le j-1$ and $r\ge j+m+1$ are uniformly bounded away from the $m+1$ bands with indices: $r=j,\dots, j+m$.

For the construction of Dirac points we require two spectral bands which touch to be isolated from other bands of spectrum; hence we require \eqref{isolate-m} with $m=1$. 
Since $\cell^+=\cell^A\cup\cell^B$ is the union of disjoint translates of a connected set, we may express this condition  in terms of the Dirichlet eigenvalues of a single inclusion, $\cell^A$.

We introduce  the eigenvalues (listed with multiplicity)
$$0<\tilde{\delta}_1< \tilde{\delta}_2 \ldots  \leq \tilde{\delta}_n \leq \ldots\ $$ 
of the single inclusion Dirichlet  Laplacian $\LaplDcellA$
and  recall that $(\delta_n)_{n\geq 1}$ is the sequence  of eigenvalues  (listed also with multiplicity) of $\LaplDcellp$ (and  of $\mathbb{A}_{\infty,\bk}$, for $\bk\in \mathcal{B}\setminus\{0\}$).
Since the single inclusion $\cell^A$ is connected,  $\tilde{\delta}_1$ is simple \cite[Theorems 1.2.5 and 1.3.2]{Henrot:2006}.
  The following proposition relates the spectra of 
 $\LaplDcellp$ and $\LaplDcellA$. We omit its elementary proof, which uses  \cite[Proposition 3 p.  269]{RS4}.
\begin{proposition}\label{1incl2incl}
The spectra of $\LaplDcellp$ and $\LaplDcellA$ satisfy the following relations:
\begin{enumerate}
\item $\sigma(\LaplDcellp)=\sigma(\LaplDcellA)$,
\item for all $n\ge1$: $\delta_{2n-1} =\delta_{2n}=\tilde{\delta}_n$,
\item   for all $n\ge1$:$$ \operatorname{dim}\operatorname{ker}(\LaplDcellp- \delta_{2n} \, \mathrm{I}d)\ =\ 2\times \operatorname{dim}\operatorname{ker}(\LaplDcellp- \tilde{\delta}_{n} \, \mathrm{I}d).$$
 \item for all $n\geq 1$: 
 \begin{align*}
  &\textrm{there exists}\quad u\in \operatorname{ker}(\LaplDcellp- \delta_{n} \, \mathrm{I}d)\ \textrm{such that }\  \displaystyle \int_{\cell^+} u(\bx) \rmd \bx \neq 0 \\ &\ \qquad\qquad \Longleftrightarrow \\
  &\textrm{there exists}\quad  v\in \operatorname{ker}(\LaplDcellA- \tilde{\delta}_{n} \, \mathrm{I}d)\ \textrm{such that }\  \displaystyle \int_{\cell_A} v(\bx) \rmd \bx \neq 0.
  \end{align*}
 \end{enumerate}
\end{proposition}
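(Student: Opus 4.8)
The plan is to reduce everything to the elementary fact that $\LaplDcellp$ decomposes as an orthogonal direct sum of two unitarily equivalent copies of $\LaplDcellA$. By assumptions ($\Omega$.ii)--($\Omega$.iii), $\cell^+=\cell^A\cup\cell^B$ is a disjoint union of two domains, $\cell^B=\bv_B+\cell^A$, whose closures are separated by a positive distance. Consequently $L^2(\cell^+)=L^2(\cell^A)\oplus L^2(\cell^B)$ orthogonally, $H^1_0(\cell^+)=H^1_0(\cell^A)\oplus H^1_0(\cell^B)$, and the Dirichlet Laplacian on $\cell^+$ acts componentwise, so that $\LaplDcellp=(-\Delta_{\mathrm{Dir},\cell^A})\oplus(-\Delta_{\mathrm{Dir},\cell^B})$ as an orthogonal direct sum of self-adjoint operators with compact resolvent; this is precisely the setting of \cite[Proposition 3 p. 269]{RS4}. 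Moreover the translation unitary $T:L^2(\cell^A)\to L^2(\cell^B)$, $(Tf)(\bx)=f(\bx-\bv_B)$, maps $H^1_0(\cell^A)$ onto $H^1_0(\cell^B)$ and intertwines the two Dirichlet Laplacians (since $-\Delta$ commutes with translations), so $-\Delta_{\mathrm{Dir},\cell^B}=T(-\Delta_{\mathrm{Dir},\cell^A})T^{-1}$ and hence $\LaplDcellp\cong\LaplDcellA\oplus\LaplDcellA$.

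Given this identification, parts (1), (2) and (3) are immediate from the general direct-sum formulas: $\sigma(A\oplus A)=\sigma(A)$; the eigenvalue list of $A\oplus A$ counted with multiplicity is that of $A$ with every entry repeated twice, whence $\delta_{2n-1}=\delta_{2n}=\tilde\delta_n$; and $\ker\big((A\oplus A)-\mu\,\mathrm{Id}\big)=\ker(A-\mu\,\mathrm{Id})\oplus\ker(A-\mu\,\mathrm{Id})$, which doubles the dimension in (3). I would simply invoke \cite[Proposition 3 p. 269]{RS4} and spell these out in a line or two.

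For part (4), fix a common eigenvalue $\mu$ (so $\mu=\delta_n=\tilde\delta_m$ for the appropriate index $m$). Any $u\in\ker(\LaplDcellp-\mu\,\mathrm{Id})$ splits as $u=u_A+u_B$ with $u_A\in\ker(-\Delta_{\mathrm{Dir},\cell^A}-\mu\,\mathrm{Id})$ supported in $\cell^A$ and $u_B\in\ker(-\Delta_{\mathrm{Dir},\cell^B}-\mu\,\mathrm{Id})$ supported in $\cell^B$, and $\int_{\cell^+}u\,\rmd\bx=\int_{\cell^A}u_A\,\rmd\bx+\int_{\cell^B}u_B\,\rmd\bx$. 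For the implication $\Leftarrow$, given $v\in\ker(\LaplDcellA-\mu\,\mathrm{Id})$ with $\int_{\cell^A}v\,\rmd\bx\neq0$, extend $v$ by zero across $\cell^B$ to obtain an element of $\ker(\LaplDcellp-\mu\,\mathrm{Id})$ with nonzero integral. For $\Rightarrow$, given $u$ with $\int_{\cell^+}u\,\rmd\bx\neq0$, at least one of $\int_{\cell^A}u_A\,\rmd\bx$ and $\int_{\cell^B}u_B\,\rmd\bx$ is nonzero; in the first case take $v=u_A$, in the second take $v=T^{-1}u_B=u_B(\cdot+\bv_B)$, which lies in $\ker(\LaplDcellA-\mu\,\mathrm{Id})$ and satisfies $\int_{\cell^A}v\,\rmd\bx=\int_{\cell^B}u_B\,\rmd\bx\neq0$ by the change of variables $\bx\mapsto\bx+\bv_B$.

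The argument has no genuine obstacle; the one point deserving a word is the passage from the quadratic-form decomposition to the operator-level identity $\LaplDcellp=(-\Delta_{\mathrm{Dir},\cell^A})\oplus(-\Delta_{\mathrm{Dir},\cell^B})$, which is legitimate precisely because ($\Omega$.iii) gives $\operatorname{dist}(\overline{\cell^A},\overline{\cell^B})>0$ so that $H^1_0(\cell^+)$ genuinely factors; this is standard and is the content of the cited Reed--Simon proposition. Everything else is bookkeeping for orthogonal direct sums together with a single translation change of variables, which is why the proof is omitted as \emph{elementary}.
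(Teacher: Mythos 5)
Your proof is correct and is exactly the ``elementary'' argument the paper has in mind in omitting it (the citation to \cite[Proposition 3, p.~269]{RS4} is precisely the direct-sum spectral statement you invoke for $\LaplDcellp\cong\LaplDcellA\oplus\LaplDcellA$). You also implicitly repaired two minor typos in the proposition as stated: in part~(3) the right-hand side should read $\ker(\LaplDcellA-\tilde\delta_n\,\mathrm{Id})$ rather than $\ker(\LaplDcellp-\tilde\delta_n\,\mathrm{Id})$ (otherwise it would force $\dim=0$), and in part~(4) the two sides must refer to a common eigenvalue, i.e.\ $\delta_{2n-1}=\delta_{2n}$ paired with $\tilde\delta_n$, which is how it is in fact used in the proof of Proposition~\ref{prop.limiteigenval} and how you correctly read it.
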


 The band separation scenario (A) of Proposition \ref{prop.condspec} with $m=1$ and  Proposition \ref{1incl2incl} lead naturally to the following definition on the eigenvalues of $\LaplDcellA$ referred as  a spectral isolation condition.
\begin{definition}(Spectral isolation condition, $\condS$)\label{Def.condS}
Let  $n\geq 1$ and $\tilde{\delta}_{n} $  be an eigenvalue of the single inclusion Dirichlet  Laplacian, $\LaplDcellA$.
We say that
$\tilde{\delta}_{n} $, for some $n\geq 1$, satisfies the spectral separation condition $\condS$ if the  two following properties hold:
\begin{enumerate}
\item[(a)]  $\tilde{\delta}_{n} $ is a simple eigenvalue of $\LaplDcellA$, i.e. $\operatorname{dim}\operatorname{ker}(\LaplDcellA-\tilde{\delta}_n \mathrm{I}d)=1$, and 
\item[(b)]  There exists an eigenfunction $v\in \operatorname{ker}(\LaplDcellA-\tilde{\delta}_n \mathrm{I}d)$  such that
\begin{equation*}\label{eq.refintnonzero}
\int_{\cell^A}v(\bx) \rmd \bx \neq 0.
\end{equation*}
\end{enumerate}
\end{definition}

Propositions \ref{prop.condspec} and \ref{1incl2incl} imply
\begin{proposition}\label{prop.limiteigenval}
 If the  the condition $\condS$  holds for $\tilde{\delta}_n$ with $n\geq 1$ then Dirichlet and Neumann eigenvalues satisfy the following inequalities 
$$
\nu_{2n-1} < \delta_{2n-1}=\nu_{2n}=\delta_{2n} < \nu_{2n+1}\leq \delta_{2n+1}.
$$
\end{proposition}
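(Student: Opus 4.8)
The plan is to assemble Propositions~\ref{prop.condspec}, \ref{1incl2incl} and \ref{Prop.interlace}: first translate the single-inclusion hypothesis $\condS$ into the hypotheses of the band-separation scenario~(A) of Proposition~\ref{prop.condspec} for the two-inclusion operator $\LaplDcellp$; then read off the two strict inequalities that scenario~(A) delivers; and finally pin down the remaining equality $\delta_{2n-1}=\nu_{2n}=\delta_{2n}$ by squeezing with the interlacing of $\{\nu_k\}$ and $\{\delta_k\}$.

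First I would unpack $\condS$(a): since $\dim\ker(\LaplDcellA-\tilde\delta_n\,\mathrm{Id})=1$ and the $\tilde\delta_k$ are listed with multiplicity, $\tilde\delta_n$ occurs exactly once in the list, so $\tilde\delta_{n-1}<\tilde\delta_n<\tilde\delta_{n+1}$ (with the convention $\tilde\delta_0=-\infty$ when $n=1$). By Proposition~\ref{1incl2incl}(2) this becomes $\delta_{2n-2}<\delta_{2n-1}=\delta_{2n}=\tilde\delta_n<\delta_{2n+1}$, and by Proposition~\ref{1incl2incl}(3) (which doubles single-inclusion multiplicities) $\dim\ker(\LaplDcellp-\delta_{2n}\,\mathrm{Id})=2$. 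Thus $\delta_\star:=\delta_{2n-1}=\delta_{2n}$ is an eigenvalue of $\LaplDcellp$ of multiplicity $m+1=2$, isolated in the $\delta$-sequence by $\delta_{j-1}=\delta_{2n-2}<\delta_\star<\delta_{2n+1}=\delta_{j+m+1}$ with $j=2n-1$, $m=1$ (for $n=1$ the lower separation is the $\delta_0=-\infty$ convention built into Proposition~\ref{prop.condspec}). Next, $\condS$(b) provides $v\in\ker(\LaplDcellA-\tilde\delta_n\,\mathrm{Id})$ with $\int_{\cell^A}v\,\rmd\bx\neq0$; by the equivalence in Proposition~\ref{1incl2incl}(4), applied to the eigenvalue $\tilde\delta_n=\delta_{2n}$, there is a $u\in\ker(\LaplDcellp-\delta_{2n}\,\mathrm{Id})$ with $\int_{\cell^+}u\,\rmd\bx\neq0$. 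Hence we are in scenario~(A) of Proposition~\ref{prop.condspec}, which yields $\nu_{2n-1}<\delta_{2n-1}$ and $\delta_{2n}<\nu_{2n+1}$.

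It remains to extract the middle equality. The interlacing inequalities $\nu_k\le\delta_k\le\nu_{k+1}$ of Proposition~\ref{Prop.interlace} with $k=2n-1$ give $\delta_{2n-1}\le\nu_{2n}\le\delta_{2n}$; since $\delta_{2n-1}=\delta_{2n}$ (Proposition~\ref{1incl2incl}(2)) this forces $\nu_{2n}=\delta_{2n-1}=\delta_{2n}$, while interlacing with $k=2n+1$ gives $\nu_{2n+1}\le\delta_{2n+1}$. Chaining these with the two strict inequalities from scenario~(A) produces $\nu_{2n-1}<\delta_{2n-1}=\nu_{2n}=\delta_{2n}<\nu_{2n+1}\le\delta_{2n+1}$, the assertion. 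I do not anticipate a genuine obstacle here: the proposition is essentially a bookkeeping corollary of results already in hand. The one point that needs care is the index translation between the single-inclusion eigenvalues $\tilde\delta_n$ appearing in $\condS$ and the two-inclusion eigenvalues $\delta_k$ appearing in Proposition~\ref{prop.condspec}, so that $j$, $m$ and the strict separations $\delta_{j-1}<\delta_\star<\delta_{j+m+1}$ are matched correctly, together with the $n=1$ edge case handled by the $\delta_0=-\infty$ convention.
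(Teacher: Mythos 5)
Your proof is correct and follows the same route as the paper: translate $\condS$ through Proposition~\ref{1incl2incl} into scenario~(A) of Proposition~\ref{prop.condspec} with $m=1$, then close the gap with the interlacing inequality \eqref{eq.interlacing}. One bookkeeping nit: the sandwich $\delta_{2n-1}\le\nu_{2n}\le\delta_{2n}$ requires interlacing at both $k=2n-1$ (for the left half) and $k=2n$ (for the right half), not $k=2n-1$ alone as you wrote, but this does not affect the argument.
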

\begin{proof}
Since $\tilde{\delta}_{n}$ is a simple eigenvalue of $\LaplDcellA$,   Proposition \ref{1incl2incl} implies that $\delta_{2n-1}$ is an eigenvalue of multiplicity $2$ with  $\delta_{2n-1}=\delta_{2n}=\tilde{\delta}_{n}< \delta_{2n+1} $.  Furthermore, the point (b) of Definition  \ref{Def.condS}  implies, via  part 4 of  Proposition \ref{1incl2incl}, that $\delta_\star=\delta_{2n-1}$ satisfies the band separation condition (A) of Proposition \ref{prop.condspec} with $m=1$. Thus it follows that:  $\nu_{2n-1}<\delta_{2n-1}=\delta_{2n}<\nu_{2n+1}$.
Finally, the interlacing inequality \eqref{eq.interlacing} yields: $\nu_{2n-1} < \delta_{2n-1}=\nu_{2n}=\delta_{2n} < \nu_{2n+1}\leq \delta_{2n+1}$.
\end{proof}

From the isolation of two degenerate asymptotic Dirichlet eigenvalues we next conclude  properties
of the corresponding dispersion surfaces of $\bbA_\aspar$ for sufficiently large $\aspar$.
\begin{theorem}\label{thm.convunifbands}
If $\tilde{\delta}_{n}$ satisfies  the spectral separation condition $\condS$, then the dispersion maps $\lambda_{2n-1}( \aspar; \cdot)$ and $\lambda_{2n}(\aspar;\cdot)$ satisfy  the following  (global) properties:
\begin{enumerate}
\item The dispersion map $\lambda_{2n-1}(\aspar;\cdot)$ converges uniformly to the constant function $\delta_{2n}=\tilde{\delta}_n$ as $\aspar\to +\infty$ on  any compact set of $\mathcal{B}\setminus \{0 \}$ whereas $\lambda_{2n-1}(\aspar;{\bf 0} )\to \nu_{2n-1}\neq \delta_{2n} $ as $\aspar\to + \infty$. In this case, the band (dispersion surface) does not flatten as $\aspar\uparrow\infty$.
\item $\lambda_{2n}(\aspar;\cdot)$ converges uniformly to $\delta_{2n}$ as $\aspar\to +\infty$ on the whole Brillouin zone $\mathcal{B}$. In this case, the band  ``becomes increasingly flat'' as $\aspar$ becomes large.

\item For $\aspar$ sufficiently large, there exists a gap between the $(2n)^{th}$ and the $(2n+1)^{st}$ dispersion maps. More precisely, one has $\lambda_{2n}( \aspar;\mathcal{B}) \subset [0,\delta_{2n})$ for all $\aspar >0$ and for any $0<\eta<\nu_{2n+1}-\delta_{2n}$, there exists $\aspar_{\eta}>0,$ such that $\forall \aspar>\aspar_{\eta} $,
$\lambda_{2n+1}(\aspar;\mathcal{B})\subset (\delta_{2n}+\eta,+\infty)$. 
\end{enumerate}
\end{theorem}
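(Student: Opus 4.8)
The plan is to deduce all three assertions directly from the spectral-isolation inequality furnished by Proposition~\ref{prop.limiteigenval} together with the convergence and monotonicity results established earlier in this section. By hypothesis, $\tilde\delta_n$ satisfies $\condS$, so Proposition~\ref{prop.limiteigenval} gives
\[
\nu_{2n-1} < \delta_{2n-1}=\nu_{2n}=\delta_{2n} < \nu_{2n+1}\leq \delta_{2n+1}.
\]
Write $\delta_\star:=\delta_{2n-1}=\delta_{2n}=\tilde\delta_n$. All three claims concern the behaviour of $\lambda_{2n-1}(\aspar;\cdot)$ and $\lambda_{2n}(\aspar;\cdot)$ as $\aspar\uparrow\infty$; the key input is that these dispersion maps converge pointwise to $\delta_\star$ for $\bk\neq{\bf0}$ (Theorem~\ref{thmsaympAtau}) and to $\nu_{2n-1}$, $\nu_{2n}$ respectively at $\bk={\bf 0}$ (Theorem~\ref{thmsaympANinf}), while being monotone increasing in $\aspar$ (Lemma~\ref{lem.monoticitydispcurv}).

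First I would prove part (2). Since $\nu_{2n}=\delta_{2n}=\delta_\star$, the map $\bk\mapsto\lambda_{2n}(\aspar;\bk)$ converges pointwise on \emph{all} of $\mathcal{B}$ (including $\bk={\bf0}$) to the constant $\delta_\star$; by Lemma~\ref{lem.monoticitydispcurv} each such map is increasing in $\aspar$ and (for $\aspar>0$ fixed) continuous in $\bk$. Dini's theorem then yields uniform convergence on the compact set $\mathcal{B}$ — this is exactly case (b) of Proposition~\ref{prop.convunif} with the index $2n$, which applies precisely because $\nu_{2n}=\delta_{2n}$. For part (1), the same Proposition~\ref{prop.convunif}(a) gives uniform convergence of $\lambda_{2n-1}(\aspar;\cdot)$ to $\delta_\star$ on compact subsets of $\mathcal{B}\setminus\{{\bf0}\}$, while at $\bk={\bf0}$ Theorem~\ref{thmsaympANinf} gives $\lambda_{2n-1}(\aspar;{\bf0})\to\nu_{2n-1}$, and the strict inequality $\nu_{2n-1}<\delta_\star$ from Proposition~\ref{prop.limiteigenval} shows the limit is genuinely discontinuous at ${\bf0}$, so convergence cannot be uniform on any neighborhood of ${\bf0}$; in particular the band does not flatten. (The statement that $\nu_{2n-1}$ is the honest limit, not merely a subsequential one, follows since $\aspar\mapsto\lambda_{2n-1}(\aspar;{\bf0})$ is monotone by Lemma~\ref{lem.monoticitydispcurv}.)

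For part (3), I would argue as in the proof of Proposition~\ref{Prop.nundeltan}(1), with $m=1$. The upper bound $\lambda_{2n}(\aspar;\mathcal{B})\subset[0,\delta_{2n})$ for all $\aspar>0$ comes from strict monotonicity of $\aspar\mapsto\lambda_{2n}(\aspar;\bk)$ (Lemma~\ref{lem.monoticitydispcurv}, noting $2n\geq2$) together with $\lambda_{2n}(\aspar;\bk)\uparrow\delta_{2n}$ pointwise; the Dirichlet bracketing bound $\lambda_{2n}(\aspar;\bk)\leq\lambda_{2n}^{Dir,\cell}(\aspar)$ with $\lambda_{2n}^{Dir,\cell}(\aspar)\uparrow\delta_{2n}$ (Theorem~\ref{thm.dirsneumlimitpec}) gives the uniform bound over $\mathcal{B}$. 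For the lower bound on the $(2n+1)^{\rm st}$ band, fix $0<\eta<\nu_{2n+1}-\delta_{2n}$ (a nonempty range since $\delta_{2n}<\nu_{2n+1}$); Neumann bracketing gives $\lambda_{2n+1}(\aspar;\bk)\geq\lambda_{2n+1}^{Neu,\cell}(\aspar)$ for all $\bk\in\mathcal{B}$, and $\lambda_{2n+1}^{Neu,\cell}(\aspar)\uparrow\nu_{2n+1}$ by Theorem~\ref{thm.dirsneumlimitpec}, so there is $\aspar_\eta$ with $\lambda_{2n+1}^{Neu,\cell}(\aspar)>\nu_{2n+1}-\eta>\delta_{2n}+\eta$ for $\aspar>\aspar_\eta$, whence $\lambda_{2n+1}(\aspar;\mathcal{B})\subset(\delta_{2n}+\eta,\infty)$. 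Combining the two displays produces the asserted spectral gap. None of these steps presents a real obstacle; the only point requiring care is bookkeeping the index shift (the $n$-th single-inclusion eigenvalue $\tilde\delta_n$ corresponds to the pair of indices $2n-1,2n$ via Proposition~\ref{1incl2incl}) and making sure the two distinct bracketing directions — Dirichlet for the upper band, Neumann for the lower band — are invoked correctly, exactly as in the proof of Proposition~\ref{Prop.nundeltan}.
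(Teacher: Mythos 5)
Your proof follows the same route as the paper: Proposition~\ref{prop.limiteigenval} for the interlacing, Proposition~\ref{prop.convunif}(a)--(b) for parts (1) and (2), and Proposition~\ref{Prop.nundeltan} for part (3). One small correction in part (3): the chain $\lambda_{2n+1}^{Neu,\cell}(\aspar)>\nu_{2n+1}-\eta>\delta_{2n}+\eta$ is wrong unless $\eta<\tfrac12(\nu_{2n+1}-\delta_{2n})$, whereas you fixed any $\eta<\nu_{2n+1}-\delta_{2n}$. You do not need the intermediate quantity $\nu_{2n+1}-\eta$: since $\lambda_{2n+1}^{Neu,\cell}(\aspar)\uparrow\nu_{2n+1}$ and $\nu_{2n+1}>\delta_{2n}+\eta$, there is $\aspar_\eta$ with $\lambda_{2n+1}^{Neu,\cell}(\aspar)>\delta_{2n}+\eta$ for all $\aspar>\aspar_\eta$. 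Equivalently, apply Proposition~\ref{Prop.nundeltan}(1) with its threshold parameter set to $\nu_{2n+1}-\delta_{2n}-\eta$ (which is positive), as the paper does explicitly.
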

\begin{proof}
By Proposition \ref{prop.limiteigenval}, the limit Dirichlet and Neumann eigenvalues satisfy  $\nu_{2n-1} < \delta_{2n-1}=\nu_{2n}=\delta_{2n} < \nu_{2n+1}\leq \delta_{2n+1}$.
First note that $\lambda_{2n-1}(\aspar;{\bf 0} )\to \nu_{2n-1} \neq \delta_{2n}$, as $\aspar \to +\infty$  (see Theorem \ref{thmsaympANinf}). 
By Proposition \ref{prop.convunif} and the inequality $\nu_{2n-1}<\delta_{2n-1}=\delta_{2n}$, 
 $\lambda_{2n-1}(\aspar;\bk )$ converges uniformly to $ \delta_{2n}$ on compact sets of $\mathcal{B}\setminus \{0 \}$.
Furthermore, since $\nu_{2n}=\delta_{2n}$, part (b) of Proposition \ref{prop.convunif} implies that
 $\lambda_{2n}(\aspar;\bk )$ converges uniformly to $\delta_{2n}$ on all of  $\mathcal{B}$.
Finally, as $\delta_{2n}<\nu_{2n+1}$,  the third point is a direct application of \eqref{eq.gapopen} (with $2n$ replacing $n$ and $\nu_{2n+1}-\delta_{2n}-\eta$ replacing $\eta$) in Proposition \ref{Prop.nundeltan}. This  yields a  gap between the $2n$-th and the $2n+1$-th band.
More precisely, one has $\lambda_{2n}( \aspar;\mathcal{B}) \subset [0,\delta_{2n})$ for all $\aspar >0$. On the other hand, by Proposition \ref{Prop.nundeltan}, for any $0<\eta<\nu_{2n+1}-\delta_{2n}$, there exists  $\aspar_{\eta}>0,$ such that $\forall \aspar>\aspar_{\eta}$, $\lambda_{2n+1}(\aspar;\mathcal{B})\in (\delta_{2n}+\eta,+\infty)$.
\end{proof}

Note, in particular, that the spectral isolation condition $\condS$ of Definition \ref{Def.condS}  applies to  to the smallest eigenvalue of $\LaplDcellA$. Indeed,  the smallest eigenvalue of the Dirichlet Laplacian $\LaplDcellA$ is simple and admits an eigenfunction $v$ that is almost everywhere positive (see, \cite[Theorems 1.2.5 and 1.3.2]{Henrot:2006} or \cite[Theorem 4.1]{Arendt:2020}). 
Thus, the assumptions of Theorem \ref{thm.convunifbands} hold for $n=1$ since $\tilde{\delta}_{1}$ satisfies the condition $\condS$.
\begin{corollary}\label{cor.firstdipstcurves}
Points $1,\, 2$ and $3$ of Theorem \ref{thm.convunifbands} hold for $n=1$, where for point $1$,  we have 
$\lambda_{1}(\aspar;{\bf 0})=\nu_1 = 0$ for all $\aspar >0$.
\end{corollary}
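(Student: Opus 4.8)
The plan is to deduce Corollary \ref{cor.firstdipstcurves} directly from Theorem \ref{thm.convunifbands} by verifying that the spectral isolation condition $\condS$ holds for the eigenvalue $\tilde\delta_1$, the smallest Dirichlet eigenvalue of $\LaplDcellA$. First I would recall that $\cell^A$ is, by hypothesis $(\Omega.\mathrm{i})$, a non-empty bounded connected open set with Lipschitz boundary; hence the Dirichlet Laplacian $\LaplDcellA$ has compact resolvent and its lowest eigenvalue $\tilde\delta_1$ is strictly positive. By the classical Courant–Fischer / Perron–Frobenius argument for the Laplacian on a connected domain (e.g. \cite[Theorems 1.2.5 and 1.3.2]{Henrot:2006}), $\tilde\delta_1$ is simple, so part (a) of Definition \ref{Def.condS} holds, and there is an associated eigenfunction $v$ which may be chosen strictly positive almost everywhere on $\cell^A$. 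Consequently $\int_{\cell^A} v(\bx)\,\rmd\bx > 0$, in particular non-zero, so part (b) of $\condS$ holds as well. Therefore $\tilde\delta_1$ satisfies $\condS$ with $n=1$.

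With $\condS$ verified for $n=1$, Theorem \ref{thm.convunifbands} applies verbatim with $n=1$, i.e. with the index pair $(2n-1,2n)=(1,2)$: the first dispersion map $\lambda_1(\aspar;\cdot)$ converges uniformly to $\tilde\delta_1$ on compact subsets of $\mathcal{B}\setminus\{{\bf 0}\}$ but $\lambda_1(\aspar;{\bf 0})\to\nu_1$, the second dispersion map $\lambda_2(\aspar;\cdot)$ converges uniformly to $\tilde\delta_1$ on all of $\mathcal{B}$, and for $\aspar$ large there is a gap between the second and third dispersion surfaces. This gives points $1$, $2$, $3$ for $n=1$.

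It remains only to identify the value of $\nu_1$ appearing in point $1$. Here I would invoke Lemma \ref{lem.monoticitydispcurv}, part (1): since $\bbA_{\aspar,{\bf 0}}$ annihilates constant functions (which lie in $H^1_{\bf 0}=H^1(\R^2/\Lambda)$), one has $\lambda_1(\aspar;{\bf 0})=0$ for every $\aspar>0$, and this eigenvalue is simple. Passing to the limit $\aspar\to+\infty$ via Theorem \ref{thmsaympANinf} gives $\nu_1=\lim_{\aspar\to\infty}\lambda_1(\aspar;{\bf 0})=0$. Thus $\lambda_1(\aspar;{\bf 0})=\nu_1=0$ for all $\aspar>0$, and in particular $\nu_1=0\neq\tilde\delta_1=\delta_2$, confirming that the first band does not flatten. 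This completes the proof.

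I do not anticipate a genuine obstacle here: the corollary is a specialization of the theorem together with two well-known facts (simplicity and positivity of the ground state of the Dirichlet Laplacian on a connected domain, and the invariance of constants under $\bbA_{\aspar,{\bf 0}}$). The only point requiring a word of care is that the a.e.-positivity of the ground-state eigenfunction, which is what makes the integral in $\condS$(b) non-zero, relies on the connectedness of $\cell^A$ guaranteed by $(\Omega.\mathrm{i})$; this is exactly the hypothesis built into the structure, so no extra argument is needed.
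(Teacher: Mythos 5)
Your proof is correct and follows essentially the same route as the paper: verifying $\condS$ for $\tilde\delta_1$ via simplicity and a.e.-positivity of the ground-state Dirichlet eigenfunction (same Henrot citation), applying Theorem \ref{thm.convunifbands} with $n=1$, and then identifying $\nu_1=0$ by combining Lemma \ref{lem.monoticitydispcurv} with Theorem \ref{thmsaympANinf}. The only difference is that you spell out the verification of $\condS$ inside the proof, whereas the paper records that observation in the paragraph immediately preceding the corollary.
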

\begin{proof}
As $\tilde{\delta}_1$ satisfies  $\condS$,  the points $1,\, 2$ and 3 of  Theorem \ref{thm.convunifbands} hold for $n=1$. The fact that  $\lambda_{1}(\aspar;{\bf 0})= \nu_1=0$ for all $\aspar >0$ follows from Lemma \ref{lem.monoticitydispcurv} and Theorem \ref{thmsaympANinf} applied for $n=1$.
\end{proof}
\begin{remark}
For honeycomb Schroedinger operators in the strong binding regime, the lowest two dispersion surfaces (after centering and rescaling) 
converge uniformly on the whole Brillouin zone to a tight binding model; \cite{FLW-CPAM:17}. Here uniform convergence of the surface to the first Dirichlet eigenvalue $\delta_1$ holds only away from a neighborhood of $\bk=0$.
\end{remark}
\begin{remark}\label{rem-noHoneycombsym}
We point out that untill now we have not required the honeycomb symmetry of $\sigma_\aspar$.
Thus so far we have only used that $\Omega=\cell^+\cup \cell^-\cup \partial \Omega$ contains  two identical open  simply connected inclusions $\cell^A$ and $\cell^B$ with Lipschitz boundary that are disjoint and have a positive distance from $\partial \Omega$. 
\end{remark}
\section{High contrast honeycomb structures and Dirac points}\label{DP}

In the present and subsequent sections, we use extensively the symmetries of the honeycomb structure.

\subsection{Symmetries of Honeycomb structures and their implications}\label{honey-sym}

 We begin by recalling here symmetry  properties of  honeycomb media; see \cite{FW:12}.
Let $\bK$ and $\bK'$ be the  two vertices of  $\mathcal{B}$ defined by  \eqref{eq.defKK'} (see  also figure \ref{fig.funcell}). Then, the six vertices of $\mathcal{B}$ are  generated from $\bK$ and $\bK'$ by applying the $2\pi/3$ clockwise  rotation matrix  $R$ (defined in \eqref{eq.defrotmat}). Thus,  the six vertices  fall into two groups:
\begin{itemize}
\item $\bK$ type-points: $\bK$, $R\, \bK=\bK+\bk_2$, $R^2\, \bK=\bK-\bk_1$,
\item $\bK'$ type-points: $\bK'$, $R\, \bK'=\bK'-\bk_2$, $R^2\, \bK'=\bK'+\bk_1$.
\end{itemize}
For any vertices $\bK_*$ of the Brillouin zone, one introduces also the rotation operator $\mathcal{R}[f]:L^2_{\bK_{*}} \to L^2_{\bK_{*}}$ with respect to the reference point $\bx_c$ (see section \ref{sec-not} for the definition of $\bx_c$) given by 
\begin{equation}\label{ed.defrotop}
\mathcal{R}[f](\bx)=f(\bx_c+R^{*}(\bx-\bx_c)) , \quad  f \in L^2_{\bK_{*}}.
\end{equation}
One first checks easily that $\mathcal{R}$ is well-defined. Indeed,  for any $f \in L^2_{\bK_*}$ and a.e. $\bx \in  \R^2$, one has for $\bv\in \Lambda$:
$
\mathcal{R}[f](\bx+\bv)=f(\bx_c+R^{*}(\bx-\bx_c)+R^{*}\bv ) 
$
and since $R^{*} \bv \in \Lambda$, we obtain
\begin{eqnarray}\label{eq.justfRop}
\mathcal{R}[f](\bx+\bv)&=&\rme^{i \bK_* \cdot R^{*}\bv}f(\bx_c+R^{*}(\bx-\bx_c)) \nonumber\\
&=&\rme^{i   R \bK_* \cdot \bv}f(\bx_c+R^{*}(\bx-\bx_c) ) \nonumber\\
&=&\rme^{i   \bK_* \cdot \bv}  \mathcal{R}[f](\bx) \quad  \mbox{ (since $R \bK_*-\bK_* \in \Lambda^*)$}.
\end{eqnarray}
Moreover, $\mathcal{R}$  is a unitary operator and  one can check that its (essential)  spectrum consists of three eigenvalues  $1, \tau, \overline{\tau}$ with $\tau=\exp(2\pi \rmi /3)$ with associated eigenspaces: 
\begin{equation}
L^2_{\bK_*,\nu}=\{ g\in L^2_{\bK_*} \mid \mathcal{R}g=\nu g\},
\label{L2Ks}\end{equation}
for $\nu=1, \tau, \overline{\tau}$.
Since $\mathcal{R}$ acting in $L^2_{\bK_{*}}$ is a normal operator, the spectral theorem implies that $L^2_{\bK_{*}}$ has the orthogonal decomposition:
\begin{equation}\label{eq.refL2K*decomp}
L^2_{\bK_*}=L^2_{\bK_*,1} \oplus L^2_{\bK_*,\tau} \oplus L^2_{\bK_*,\overline{\tau}}.
\end{equation}

Introduce, $\mathcal{P}$,  the inversion operator with respect to $\bx_c$, and complex conjugation, $\mathcal{C}$:
$$
\mathcal{P}[f](\bx)=f(2\bx_c-\bx),\qquad \mathcal{C}[f](\bx)=\overline{f(\bx)}.
$$
Their composition operator $\mathcal{P}\mathcal{C}:L^2_{\bK_{*}}\to L^2_{\bK_{*}}$  is given by
\begin{equation}\label{eq.definvoppseudoper}
\mathcal{P}\mathcal{C} [f] (\bx)=\mathcal{C} [f](2\bx_c-\bx)=\overline{f(2\bx_c-\bx)}.
\end{equation}
Furthermore, it  is easily verified (see Proposition 7.2 of \cite{FLW-CPAM:17})  that $\mathcal{P}\mathcal{C}$ is well-defined and  is an anti-linear involution  that  satisfies 
\[ \mathcal{P}\mathcal{C}(L^2_{\bK_*,\tau})=L^2_{\bK_*,\overline{\tau}} .\]
The vertices of $\mathcal{B}$ are {\it high symmetry quasi-momenta} in the following sense:
\begin{proposition}\label{prop.commutAk}
For any vertex $\bK_*$ of the Brillouin zone $\mathcal{B}$,  
$[\mathcal{R},\bbA_{\aspar, \bK_*}]=0$ and  $[\mathcal{PC},\bbA_{\aspar, \bK_*}]=0$; see Section \ref{sec-not}.
\end{proposition}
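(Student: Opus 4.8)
The plan is to deduce both commutation relations from a single structural fact: the energy form $a_{\aspar,\bK_*}(u,v)=\int_{\cell}\sigma_\aspar\,\nabla u\cdot\overline{\nabla v}\,\rmd\bx$ (cf. \eqref{eq.sesquilinearformakg}) is invariant under the unitary $\mathcal{R}$ and is taken to its complex conjugate by the antilinear involution $\mathcal{PC}$. Since $\bbA_{\aspar,\bK_*}$ is precisely the self-adjoint operator generated by the closed, densely defined, nonnegative form $a_{\aspar,\bK_*}$ with form domain $H^1_{\bK_*}$, we may use the weak characterization: $u\in D(\bbA_{\aspar,\bK_*})$ with $\bbA_{\aspar,\bK_*}u=f$ if and only if $u\in H^1_{\bK_*}$ and $a_{\aspar,\bK_*}(u,\varphi)=(f,\varphi)_{L^2_{\bK_*}}$ for all $\varphi\in H^1_{\bK_*}$. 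Granted the form invariance, commutation (with stability of the domain built in) then follows from a one-line substitution in the test function.

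First I would record that $\mathcal{R}$ and $\mathcal{PC}$ map $H^1_{\bK_*}$ bijectively onto itself. For $\mathcal{R}$: the map $\bx\mapsto\bx_c+R^*(\bx-\bx_c)$ is an affine isometry, hence preserves $H^1_{\mathrm{loc}}(\R^2)$, while $R^*\Lambda=\Lambda$ and $R\bK_*-\bK_*\in\Lambda^*$ preserve the $\bK_*$-quasiperiodicity class exactly as in \eqref{eq.justfRop}; bijectivity follows from $\mathcal{R}^3=\mathrm{Id}$, so $\mathcal{R}^{-1}=\mathcal{R}^2=\mathcal{R}^*$. For $\mathcal{PC}$: the inversion $\bx\mapsto 2\bx_c-\bx$ is again an isometry and carries $L^2_{\bK_*}$ into $L^2_{-\bK_*}$, so composing with $\mathcal{C}$ returns to $L^2_{\bK_*}$; thus $\mathcal{PC}$ is an antiunitary involution on $L^2_{\bK_*}$ that restricts to such a map on $H^1_{\bK_*}$. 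Next comes the crux, the form invariance. By the chain rule $\nabla(\mathcal{R}u)(\bx)=R\,(\nabla u)\big(\bx_c+R^*(\bx-\bx_c)\big)$; combining this with the honeycomb symmetries of $\sigma_\aspar$ — invariance under the $2\pi/3$ rotation and under inversion, both taken about $\bx_c$, which are inherited from ($\Omega$.v), ($\Omega$.vi) and $\Lambda$-periodicity as in \cite{FW:12} — with the facts that $\sigma_\aspar$ is a \emph{scalar} (so it commutes with the rotation matrix $R$ acting on gradients), $R$ is orthogonal, and the substitution $\by=\bx_c+R^*(\bx-\bx_c)$ has unit Jacobian and maps a fundamental cell to a fundamental cell (the integrand being $\Lambda$-periodic), one gets $a_{\aspar,\bK_*}(\mathcal{R}u,\mathcal{R}v)=a_{\aspar,\bK_*}(u,v)$ for all $u,v\in H^1_{\bK_*}$. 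The analogous computation for $\mathcal{PC}$, using $\nabla(\mathcal{PC}u)(\bx)=-\overline{(\nabla u)(2\bx_c-\bx)}$, the inversion symmetry of $\sigma_\aspar$ about $\bx_c$, and that $\sigma_\aspar$ is real-valued, yields $a_{\aspar,\bK_*}(\mathcal{PC}u,\mathcal{PC}v)=\overline{a_{\aspar,\bK_*}(u,v)}=a_{\aspar,\bK_*}(v,u)$.

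With these identities the conclusion is immediate. If $\bbA_{\aspar,\bK_*}u=f$, then for every $\psi\in H^1_{\bK_*}$ one has $a_{\aspar,\bK_*}(\mathcal{R}u,\psi)=a_{\aspar,\bK_*}(u,\mathcal{R}^{-1}\psi)=(f,\mathcal{R}^{-1}\psi)_{L^2_{\bK_*}}=(\mathcal{R}f,\psi)_{L^2_{\bK_*}}$ by unitarity of $\mathcal{R}$, so $\mathcal{R}u\in D(\bbA_{\aspar,\bK_*})$ and $\bbA_{\aspar,\bK_*}\mathcal{R}u=\mathcal{R}f=\mathcal{R}\bbA_{\aspar,\bK_*}u$, i.e. $[\mathcal{R},\bbA_{\aspar,\bK_*}]=0$. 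Likewise $a_{\aspar,\bK_*}(\mathcal{PC}u,\psi)=\overline{a_{\aspar,\bK_*}(u,\mathcal{PC}\psi)}=\overline{(f,\mathcal{PC}\psi)_{L^2_{\bK_*}}}=(\mathcal{PC}f,\psi)_{L^2_{\bK_*}}$, using that $\mathcal{PC}$ is an antiunitary involution, which gives $\mathcal{PC}u\in D(\bbA_{\aspar,\bK_*})$ and $[\mathcal{PC},\bbA_{\aspar,\bK_*}]=0$ (in the sense of the definition in Section \ref{sec-not}).

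I expect the only real obstacle to be technical: justifying the chain-rule and change-of-variables manipulations at the level of $H^1_{\bK_*}$-functions rather than smooth functions, given that $\sigma_\aspar$ jumps across $\partial\cell^+$ — in particular verifying that $\mathcal{R}$ and $\mathcal{PC}$ genuinely preserve both the quasiperiodicity class and the no-jump (trace-continuity) condition characterizing $H^1_{\bK_*}$, and map $D(\bbA_{\aspar,\bK_*})$ into itself, as well as deriving the rotation/inversion invariance of $\sigma_\aspar$ about $\bx_c$ rigorously from ($\Omega$.i)–($\Omega$.vi). This bookkeeping is the content of Appendix \ref{sec-appendixcommutation}. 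Alternatively, one may phrase everything in terms of the resolvent $(\bbA_{\aspar,\bK_*}-\zeta)^{-1}$, whose commutation with $\mathcal{R}$ and $\mathcal{PC}$ is equivalent to that of $\bbA_{\aspar,\bK_*}$ (Section \ref{sec-not}) and which follows directly from the Lax--Milgram representation of the inverse of the invariant form $a_{\aspar,\bK_*}+\zeta$.
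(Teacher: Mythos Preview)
Your argument is correct and rests on the same underlying identity as the paper's, namely invariance of the quadratic form $a_{\aspar,\bK_*}$ under $\mathcal{R}$ (and conjugation under $\mathcal{PC}$). The organization, however, is genuinely different. The paper's Proposition~\ref{prop.commutoprper} establishes domain stability the hard way: it characterizes $D(\bbA_{\aspar,\bK_*})$ by the strong transmission problem ($u\in H^1_{\bK_*,\Delta}({\bf\Omega^\pm})$, matching Dirichlet traces, and the jump condition $\aspar[\partial u/\partial\bn]^-=[\partial u/\partial\bn]^+$ on $\partial{\bf\Omega^+}$), and then invokes the trace--commutation machinery of Lemmas~\ref{lem.comquasiper} and~\ref{lem.comquasiper2} to check that $\mathcal{R}$ preserves each of these conditions; only afterwards does it verify $[\mathcal{R},\bbA_{\aspar,\bK_*}]=0$ via the form identity split over $\cell^\pm$ (equations~\eqref{eq.vanishcomutant1}--\eqref{eq.vanishcomutant2}). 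You instead exploit the first representation theorem for forms to get domain stability for free: once $\mathcal{R}$ is a bijection of the form domain $H^1_{\bK_*}$ and $a_{\aspar,\bK_*}(\mathcal{R}u,\mathcal{R}v)=a_{\aspar,\bK_*}(u,v)$, the weak characterization of $D(\bbA_{\aspar,\bK_*})$ immediately gives both $\mathcal{R}D(\bbA_{\aspar,\bK_*})\subset D(\bbA_{\aspar,\bK_*})$ and the vanishing commutator in one stroke, with no need to touch transmission conditions or Neumann traces. Your route is shorter and more conceptual for this statement in isolation; the paper's route is longer here but develops the trace lemmas that are indispensable later in Section~\ref{sec-asympresult} for the order-by-order asymptotic construction, so the extra work is not wasted.
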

\begin{proof}
The proof is given in Proposition \ref{prop.commutoprper}.
\end{proof}
One denotes  by $\operatorname{ker}_{\nu}(\bbA_{\aspar, \bK_*}-\lambda \mathrm{I}d)$ for $\lambda\in \R$ and $\nu=1, \tau, \, \overline{\tau}$ the space defined by
\begin{equation}\label{eq.defKersigma}
\operatorname{ker}_{\nu}(\bbA_{\aspar, \bK_*}-\lambda \mathrm{I}d):=\operatorname{ker}(\bbA_{\aspar, \bK_*}-\lambda \mathrm{I}d)\cap L^2_{\bK_*,\nu}
\end{equation}
and by $ m_{\aspar,\bK_*,\nu}=\operatorname{dim}(\operatorname{ker}_{\nu}(\bbA_{\aspar, \bK_*}-\lambda \mathrm{I}d))$.
Finally, we denote by $\sigma_{\nu}(\bbA_{\aspar,\bK_*})$ the $L^2_{\bK_*,\nu}$ spectrum of $\bbA_{\aspar,\bK_*}$:
\begin{equation}\label{def.tauspec}
\sigma_{\nu}(\bbA_{\aspar,\bK_*}):=\{ \lambda\in \bbR \mid  m_{\aspar,\bK_*,\nu}(\lambda)>0 \} \quad   \mbox{ for  }\nu=1\, ,\tau,\, \overline{\tau} .
\end{equation}
The commutation relations  with the symmetry operators imply:
\begin{corollary}\label{cor.commut}
Let $\lambda\in \R$ and $\bK_*$ be a vertex of $\mathcal{B}$. Then, one has
\begin{align}\label{eq.propcommutRspec}
\sigma(\bbA_{\aspar,\bK_*})&=\bigcup_{\nu=1,\tau,\overline{\tau}}\sigma_{\nu}(\bbA_{\aspar,\bK_*})\, ,\\
\operatorname{ker}\big(\bbA_{\aspar, \bK_*}-\lambda \mathrm{I}d\big)&= {\small{ \small \bigoplus_{\nu=1,\tau, \overline{\tau}}}}  \operatorname{ker}_{\nu}\big(\bbA_{\aspar, \bK_*}-\lambda \mathrm{I}d\big).
\label{eq.propcommutR}\end{align}
Thus, solving the eigenvalue problem: $\bbA_{\aspar, \bK_*} u=\lambda u$ is equivalent to solving the three eigenvalue problems: $\bbA_{\aspar, \bK_*} u=\lambda  u$ in $L^2_{\bK_{*},\nu}$ for $\nu=1,\tau,\overline{\tau}$.
 Moreover, 
\begin{equation}\label{eq.propcommutS}
\mathcal{P}\mathcal{C}\big(\operatorname{ker}_{\nu}\big(\bbA_{\aspar, \bK_*}-\lambda \mathrm{I}d\big)\Big)= \operatorname{ker}_{\overline{\nu}}\big(\bbA_{\aspar, \bK_*}-\lambda \mathrm{I}d\big) \
\mbox{ for } \ \nu=1,\tau, \overline{\tau}.
\end{equation}
Finally, we have the  following relations on the dimension of eigenspaces:
\begin{equation}\label{eq.propcommutdim}
m_{\aspar,\bK_*,\tau}(\lambda) =m_{\aspar,\bK_*,\overline{\tau}}(\lambda ) \mbox{ and } \operatorname{dim}\operatorname{ker}(\bbA_{\aspar, \bK_*}-\lambda \mathrm{I}d)=2\, m_{\aspar,\bK_*,\tau}(\lambda) +m_{\aspar,\bK_*,1}(\lambda),
\end{equation}
and on the spectra:
$\sigma_{\tau}(\bbA_{\aspar,\bK_*})=\sigma_{\overline{\tau}}(\bbA_{\aspar,\bK_*})$ and 
$\sigma(\bbA_{\aspar,\bK_*})=\sigma_{1}(\bbA_{\aspar,\bK_*})\cup \sigma_{\tau}(\bbA_{\aspar,\bK_*}) $.
\end{corollary}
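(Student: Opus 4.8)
The plan is to deduce all the claims from the two commutation relations of Proposition~\ref{prop.commutAk} together with the orthogonal decomposition \eqref{eq.refL2K*decomp}. Since $\mathcal{R}$ is unitary with $\sigma(\mathcal{R})=\{1,\tau,\overline{\tau}\}$, the orthogonal projections $\Pi_\nu$ of $L^2_{\bK_*}$ onto $L^2_{\bK_*,\nu}$ are exactly the spectral projections $\bbE_{\mathcal{R}}(\{\nu\})$, and $\Pi_1+\Pi_\tau+\Pi_{\overline{\tau}}=\mathrm{I}d$. First I would use the equivalence recorded in Section~\ref{sec-not}: because $\mathcal{R}$ is normal (unitary) and $[\mathcal{R},\bbA_{\aspar,\bK_*}]=0$, the operator $\bbA_{\aspar,\bK_*}$ commutes with the spectral measure of $\mathcal{R}$, i.e.\ $\Pi_\nu D(\bbA_{\aspar,\bK_*})\subset D(\bbA_{\aspar,\bK_*})$ and $[\Pi_\nu,\bbA_{\aspar,\bK_*}]=0$ on $D(\bbA_{\aspar,\bK_*})$ for $\nu=1,\tau,\overline{\tau}$. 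Hence each closed subspace $L^2_{\bK_*,\nu}$ reduces the self-adjoint operator $\bbA_{\aspar,\bK_*}$; writing $\bbA_{\aspar,\bK_*}^\nu$ for the part of $\bbA_{\aspar,\bK_*}$ in $L^2_{\bK_*,\nu}$ (again self-adjoint, with compact resolvent), we obtain the orthogonal operator decomposition $\bbA_{\aspar,\bK_*}=\bbA_{\aspar,\bK_*}^1\oplus\bbA_{\aspar,\bK_*}^\tau\oplus\bbA_{\aspar,\bK_*}^{\overline{\tau}}$.

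From this decomposition the identities \eqref{eq.propcommutRspec} and \eqref{eq.propcommutR} are immediate. For $\lambda\in\R$, $\ker(\bbA_{\aspar,\bK_*}-\lambda\,\mathrm{I}d)=\bigoplus_{\nu}\ker(\bbA_{\aspar,\bK_*}^\nu-\lambda\,\mathrm{I}d)=\bigoplus_{\nu}\ker_\nu(\bbA_{\aspar,\bK_*}-\lambda\,\mathrm{I}d)$, which is \eqref{eq.propcommutR}. Since $\bbA_{\aspar,\bK_*}$ has discrete spectrum, $\lambda\in\sigma(\bbA_{\aspar,\bK_*})$ if and only if this kernel is nontrivial, if and only if $m_{\aspar,\bK_*,\nu}(\lambda)>0$ for at least one $\nu$, which, by definition \eqref{def.tauspec}, is \eqref{eq.propcommutRspec}.

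Next I would bring in $\mathcal{PC}$. Recall (cf.\ the discussion around \eqref{eq.definvoppseudoper} and \cite[Proposition~7.2]{FLW-CPAM:17}) that $\mathcal{PC}$ is an anti-linear isometric involution on $L^2_{\bK_*}$ with $\mathcal{PC}\big(L^2_{\bK_*,\nu}\big)=L^2_{\bK_*,\overline{\nu}}$ for $\nu=1,\tau,\overline{\tau}$; in particular it maps $L^2_{\bK_*,1}$ onto itself, which one also sees directly from $L^2_{\bK_*,1}=\big(L^2_{\bK_*,\tau}\oplus L^2_{\bK_*,\overline{\tau}}\big)^\perp$ and the fact that $\mathcal{PC}$ is an isometry. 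By Proposition~\ref{prop.commutAk}, $\mathcal{PC}D(\bbA_{\aspar,\bK_*})\subset D(\bbA_{\aspar,\bK_*})$ and $\mathcal{PC}\bbA_{\aspar,\bK_*}=\bbA_{\aspar,\bK_*}\mathcal{PC}$ on $D(\bbA_{\aspar,\bK_*})$. For $\lambda\in\R$ and $u\in\ker(\bbA_{\aspar,\bK_*}-\lambda\,\mathrm{I}d)$, anti-linearity of $\mathcal{PC}$ and reality of $\lambda$ give $\bbA_{\aspar,\bK_*}\mathcal{PC}u=\mathcal{PC}(\lambda u)=\lambda\,\mathcal{PC}u$, so $\mathcal{PC}$ maps $\ker(\bbA_{\aspar,\bK_*}-\lambda\,\mathrm{I}d)$ into itself; intersecting with the $\mathcal{R}$-grading and using $\mathcal{PC}(L^2_{\bK_*,\nu})=L^2_{\bK_*,\overline{\nu}}$ yields $\mathcal{PC}\big(\ker_\nu(\bbA_{\aspar,\bK_*}-\lambda\,\mathrm{I}d)\big)\subset\ker_{\overline{\nu}}(\bbA_{\aspar,\bK_*}-\lambda\,\mathrm{I}d)$. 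Applying $\mathcal{PC}$ once more (it is an involution) gives the reverse inclusion, proving \eqref{eq.propcommutS}.

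The remaining assertions are then bookkeeping. Since $\mathcal{PC}$ restricts to an anti-linear bijection $\ker_\tau(\bbA_{\aspar,\bK_*}-\lambda\,\mathrm{I}d)\to\ker_{\overline{\tau}}(\bbA_{\aspar,\bK_*}-\lambda\,\mathrm{I}d)$, these two complex vector spaces have equal dimension, so $m_{\aspar,\bK_*,\tau}(\lambda)=m_{\aspar,\bK_*,\overline{\tau}}(\lambda)$; combined with \eqref{eq.propcommutR} this gives $\dim\ker(\bbA_{\aspar,\bK_*}-\lambda\,\mathrm{I}d)=m_{\aspar,\bK_*,1}(\lambda)+2\,m_{\aspar,\bK_*,\tau}(\lambda)$, which is \eqref{eq.propcommutdim}. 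The same equality shows $m_{\aspar,\bK_*,\tau}(\lambda)>0$ iff $m_{\aspar,\bK_*,\overline{\tau}}(\lambda)>0$, hence $\sigma_\tau(\bbA_{\aspar,\bK_*})=\sigma_{\overline{\tau}}(\bbA_{\aspar,\bK_*})$, and then \eqref{eq.propcommutRspec} collapses to $\sigma(\bbA_{\aspar,\bK_*})=\sigma_1(\bbA_{\aspar,\bK_*})\cup\sigma_\tau(\bbA_{\aspar,\bK_*})$. The only step requiring genuine care is the passage from the abstract commutation relations of Section~\ref{sec-not} to the reduction of $\bbA_{\aspar,\bK_*}$ by the $\mathcal{R}$-eigenspaces (routine for the linear operator $\mathcal{R}$ via the spectral-measure equivalence, and handled for the anti-linear operator $\mathcal{PC}$ by the reality-of-$\lambda$ argument above); beyond this there is no real obstacle once Proposition~\ref{prop.commutAk} is in hand.
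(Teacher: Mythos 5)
Your proof is correct and follows essentially the same route as the paper's: both reduce the argument to the fact that, since $\mathcal{R}$ is normal and commutes with the self-adjoint operator $\bbA_{\aspar,\bK_*}$, the spectral projections $\bbE_{\mathcal{R}}(\{\nu\})$ commute with $\bbA_{\aspar,\bK_*}$ (via the equivalences recorded in Section~\ref{sec-not}), which yields the orthogonal decomposition of each eigenspace. The paper proves only \eqref{eq.propcommutRspec}–\eqref{eq.propcommutR} by applying $\bbE_{\mathcal{R}}(\{\nu\})$ to a kernel vector directly and declares the rest "similar"; you phrase the same step in the slightly more structured language of reducing subspaces and the operator decomposition $\bbA_{\aspar,\bK_*}=\bigoplus_\nu \bbA^\nu_{\aspar,\bK_*}$, and then supply the $\mathcal{PC}$ argument (anti-linearity plus reality of $\lambda$, and the derivation of $\mathcal{PC}(L^2_{\bK_*,1})=L^2_{\bK_*,1}$ from orthogonality) that the paper omits, so your write-up is in fact somewhat more complete than the original while relying on the same key lemma.
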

\begin{proof}
We prove only \eqref{eq.propcommutRspec} and \eqref{eq.propcommutR} since the other relations are proved in a similar way. By  \eqref{eq.refL2K*decomp} and  \eqref{eq.defKersigma}, the spaces $\operatorname{ker}_{\nu}(\bbA_{\aspar, \bK_*}-\lambda \mathrm{I}d)$ are orthogonal and their orthogonal sum is included in $\operatorname{ker}(\bbA_{\aspar, \bK_*}-\lambda\mathrm{I}d)$. We show now the other inclusion. Let $u\in \operatorname{ker}(\bbA_{\aspar, \bK_*}-\lambda \mathrm{I}d)$. By virtue of \eqref{L2Ks} and \eqref{eq.refL2K*decomp}, $u$ admits the following orthogonal decomposition
\begin{equation}\label{eq.orthogonalrotdecomp}
u=u_1+u_{\tau}+u_{\overline{\tau}},   \mbox{ with } u_{\nu}=\bbE_{\mathcal{R}}(\{ \nu\})u\in L^2_{\bK_*,\nu} \  \mbox{ for } \nu=1,\tau,\overline{\tau},
\end{equation}
where $\bbE_{\mathcal{R}}(\{ \nu\})$ is the spectral projector of $\mathcal{R}$ associated to the eigenvalue $\nu$. As $\mathcal{R}$ is a bounded normal operator which commutes with the self-adjoint operator $\bbA_{\aspar, \bK_*} $ (see Proposition \ref{prop.commutAk}), it implies that its spectral measure: $\bbE_{\mathcal{R}}( \cdot)$ commutes also with $\bbA_{\aspar, \bK_*}$ (see section \ref{sec-not}). Hence, one has  $u_{\nu}=\bbE_{\mathcal{R}}(\{ \nu\})u\in D(\bbA_{\aspar, \bK_*} )$ (since $u\in   \operatorname{ker}(\bbA_{\aspar, \bK_*}-\lambda \mathrm{I}d)\subset  D(\bbA_{\aspar, \bK_*} )$ and $D(\bbA_{\aspar, \bK_*} )$ is stable by $\bbE_{\mathcal{R}}(\{ \nu\})$) and
$$
\bbA_{\aspar, \bK_*}u_{\nu}=\bbE_{\mathcal{R}}(\{ \nu\} )\bbA_{\aspar, \bK_*}u=\lambda  \bbE_{\mathcal{R}}(\{ \nu\})u =\lambda \, u_{\nu} \quad \mbox{ for } \nu=1,\tau,\overline{\tau}.$$
Thus,  $u_{\nu}\in \operatorname{ker}_{\nu}(\bbA_{\aspar, \bK_*}-\lambda \mathrm{I}d)$ and with \eqref{eq.orthogonalrotdecomp}, $u$ belongs to the orthogonal sum of  the three spaces: $\operatorname{ker}_{\nu}\big(\bbA_{\aspar, \bK_*}-\lambda \mathrm{I}d\big)$ for  $\nu=1,\tau,\overline{\tau}$. This proves \eqref{eq.propcommutR}. \eqref{eq.propcommutRspec} follows immediately  from  \eqref{def.tauspec} and \eqref{eq.propcommutR}.
\end{proof}

\subsection{Dirac points}\label{DP-def}
We recall the precise definition of a Dirac point for a divergence form  elliptic operator $\bbA_{\aspar}$; 
 see \cite{FW:12,LWZ:18}. 
\begin{definition}[Dirac Points]\label{def.Diracpoints}
Fix $\aspar>0$. The ``energy / quasimomentum''
pair $(\lambda_D(\aspar), \bk_D)\in \R^+\times \mathcal{B}$ is called a Dirac point of the operator $\bbA_{\aspar}$ if there exists $n\geq 1$
such that:
\begin{enumerate}
\item $\lambda_n(\aspar;\bk_D)=\lambda_{n+1}(\aspar;\bk_D)=\lambda_D(\aspar)$ is an eigenvalue of multiplicity $2$ of the operator  $\bbA_{\aspar,\bk_D}$;
\item The  dispersion maps $\lambda_n(\aspar;\cdot)$ and $ \lambda_{n+1}(\aspar;\cdot)$ touch in isotropic cones at $\bk_D$, i. e. for some $v_D(\aspar)>0$:
\begin{eqnarray*}
\lambda_{n+1}(\aspar;\bk)&=& \lambda_D(\aspar)+ v_D(\aspar) \, |\bk-\bk_D|+o(|\bk-\bk_D|);\\
\lambda_{n}(\aspar;\bk)&=& \lambda_D(\aspar)- v_D(\aspar) \, |\bk-\bk_D|+o(|\bk-\bk_D|).
\end{eqnarray*}
\end{enumerate}
\end{definition}
The following theorem gives sufficient conditions for the existence of a Dirac points at any vertex $\bK_*$ of the Brillouin zone $\mathcal{B}$. Its analogue was proved for Schroedinger operators
in \cite{FW:12} and for elliptic operators with smooth coefficients \cite{LWZ:18}. 
 Our proof  is given in Section 
\ref{sec.diracpointthoerem}.
 Since the coefficient of the operator $\bbA_\aspar$, $\sigma_{\aspar}$, is discontinuous, the proof is significantly different from that in previous works.   
\begin{theorem}[Sufficient condition for the existence of Dirac points]\label{Thm.weakLyapounovSchmidtred}
Let  $\aspar$  the positive contrast parameter be fixed, $\bK_*$ be any vertex of the Brillouin zone $\mathcal{B}$ and $\lambda_D(\aspar)\in \R^+$. Assume that:
\begin{enumerate}
\item $m_{\bK_*,\aspar,\tau}(\lambda_D(\aspar))=\mathrm{dim} \operatorname{ker}_{\tau}(\bbA_{\aspar, \bK_*}-\lambda_D(\aspar) \, \mathrm{I}d)=1$ (i.e. the $L^2_{\tau}$ eigenvalue problem $\bbA_{\aspar, \bK_*}u=\lambda_D(\aspar) u$ has a one dimensional space of solutions $u$).  Let $\Phi_1(\aspar,\cdot)$ be a normalized eigenfunction of $\operatorname{ker}_{\tau}(\bbA_{\aspar, \bK_*}-\lambda_D(\aspar) \, \mathrm{I}d)$ and  $\Phi_2(\aspar,\cdot)=\mathcal{P}\mathcal{C}\Phi_1(\aspar,\cdot)\in \operatorname{ker}_{\overline{\tau}}(\bbA_{\aspar, \bK_*}-\lambda_D(\aspar ) \, \mathrm{I}d)$ (by \eqref{eq.propcommutS}).
\item $m_{\bK_*,\aspar,1}(\lambda_D(\aspar))=0$ (i.e the $L^2_{\bK_*,1}$ eigenvalue $\bbA_{\aspar, \bK_*}u= \lambda_D(\aspar)u$ admits only $u=0$ as solution). 
\item (Non-vanishing of the Dirac velocity)
\begin{equation}\label{eq.fermyveloc}
 v_D(\aspar)=\Big| \int_{D} \sigma_{\aspar}\Phi_1(\aspar,\bx) \, \overline{\nabla \Phi_2(\aspar,\bx) }\, \rmd \bx \cdot (1,-\rmi)^{\top}\Big| \neq 0\ .
\end{equation}
\end{enumerate}
Then, $(\bK_*,\lambda_D(\aspar))$ is a Dirac point in the sense of Definition \ref{def.Diracpoints}.
\end{theorem}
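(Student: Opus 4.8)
The plan is to carry out a Lyapunov--Schmidt / Schur complement reduction of the $\bK_*$-fiber eigenvalue problem near the candidate energy $\lambda_D(\aspar)$, adapted to the weak (variational) formulation since $\sigma_\aspar$ is discontinuous. First I would set up the spectral projection $\Pi$ onto the two-dimensional subspace $W:=\operatorname{ker}_\tau(\bbA_{\aspar,\bK_*}-\lambda_D\mathrm{Id})\oplus\operatorname{ker}_{\overline\tau}(\bbA_{\aspar,\bK_*}-\lambda_D\mathrm{Id})=\operatorname{span}\{\Phi_1,\Phi_2\}$; hypotheses 1 and 2 guarantee $\dim W=2$ and, crucially, that $\lambda_D$ is \emph{isolated} within $\sigma(\bbA_{\aspar,\bK_*})$ only after restriction to the symmetry sectors --- one must check that $\lambda_D$ is an isolated point of the full $\bbA_{\aspar,\bK_*}$-spectrum, or at least that the part of the spectrum in a punctured neighborhood is bounded away, so that $Q:=I-\Pi$ restricts $\bbA_{\aspar,\bK_*}-\lambda_D$ to a boundedly invertible operator on $\operatorname{ran}Q$. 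Because we also want the argument to cover Dirac points sitting among high bands, the energy form $a_{\aspar,\bK_*}(u,u)-\lambda_D\|u\|^2$ is indefinite, so I would invoke $T$-coercivity: choose an involution-type operator $T$ (built from the sign of $\sigma_\aspar\nabla u$ projected against low modes, or more practically from the spectral projection onto modes below $\lambda_D$) so that $(u,v)\mapsto a_{\aspar,\bK_*}(u,v)-\lambda_D(u,v)$ composed with $T$ becomes coercive on $QH^1_{\bK_*}$, giving bounded invertibility of the ``fast'' block via Lax--Milgram.

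Next, writing $u=\alpha_1\Phi_1+\alpha_2\Phi_2+\psi$ with $\psi\in\operatorname{ran}Q$ and perturbing $\bk=\bK_*+\bkg$, I would expand the fiber operator in $\bkg$. The key point is that $\bbA_{\aspar,\bk}$ depends on $\bk$ through the conjugation $u\mapsto e^{\mathrm{i}\bk\cdot\bx}u$, so $\bbA_{\aspar,\bK_*+\bkg}=\bbA_{\aspar,\bK_*}+\bkg\cdot\bbL^{(1)}+|\bkg|^2\bbL^{(2)}$ in the weak sense, where $\bbL^{(1)}$ is first order in derivatives. Projecting the eigenvalue equation with $Q$ and using bounded invertibility of the fast block, I solve for $\psi=\psi(\alpha,\bkg,\lambda)$ with $\psi=O(|\bkg|+|\lambda-\lambda_D|)$, analytically (or at least $C^1$) in the parameters by the implicit function theorem / Neumann series. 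Substituting back into the $\Pi$-projection yields a $2\times2$ reduced eigenvalue problem $\mathcal{M}(\bkg,\lambda)\alpha=0$ with $\mathcal{M}(\bkg,\lambda)=(\lambda_D-\lambda)I+\bkg\cdot M^{(1)}+O(|\bkg|^2+|\bkg||\lambda-\lambda_D|+|\lambda-\lambda_D|^2)$, where $M^{(1)}$ has entries $\langle\bbL^{(1)}\Phi_j,\Phi_\ell\rangle$ --- computed in the weak form these are exactly the integrals $\int_D\sigma_\aspar\Phi_j\overline{\nabla\Phi_\ell}\,\rmd\bx$.

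Then comes the symmetry analysis of $M^{(1)}$, following \cite{FW:12}: since $\Phi_1\in L^2_{\bK_*,\tau}$, $\Phi_2\in L^2_{\bK_*,\overline\tau}$, and $\mathcal{R}$ commutes with $\bbA_{\aspar,\bK_*}$ while conjugating $\bbL^{(1)}$ appropriately under rotation of $\bkg$, the diagonal entries of $M^{(1)}$ vanish and the off-diagonal entries are forced to the form $\vartheta\,(k_1-\mathrm{i}k_2)$ and its conjugate, where $\vartheta=\int_D\sigma_\aspar\Phi_1\overline{\nabla\Phi_2}\,\rmd\bx\cdot(1,-\mathrm{i})^\top$ up to a fixed nonzero constant; the $\mathcal{PC}$ relation $\Phi_2=\mathcal{PC}\Phi_1$ pins down reality/conjugation so that $M^{(1)}$ is Hermitian with eigenvalues $\pm|\vartheta|\,|\bkg|$. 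Under hypothesis 3, $|\vartheta|=v_D(\aspar)\neq0$, so $\det\mathcal{M}(\bkg,\lambda)=0$ has, for small $|\bkg|$, exactly two solutions $\lambda=\lambda_D(\aspar)\pm v_D(\aspar)|\bkg|+o(|\bkg|)$ (the $o$-term controlled by the higher-order remainders via Rouché or the implicit function theorem on $\det\mathcal{M}$), which is precisely Definition \ref{def.Diracpoints}. The main obstacle I anticipate is the first step: establishing bounded invertibility of the fast block uniformly enough to run the reduction when the energy form is indefinite --- i.e.\ making the $T$-coercivity construction work cleanly in $H^1_{\bK_*}$ with the discontinuous coefficient, and verifying the needed spectral isolation of $\lambda_D$ in the full (not sector-restricted) fiber operator; once invertibility is in hand the rest is a perturbative computation whose structure is dictated by the $\mathcal{R}$- and $\mathcal{PC}$-symmetry exactly as in the smooth case.
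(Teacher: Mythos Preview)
Your proposal is correct and matches the paper's proof almost step for step: weak formulation via sesquilinear forms (necessary because the domains of $\tbbA_{\aspar,\bK+\bkg}$ depend on $\bkg$ through the jump conditions), Lyapunov--Schmidt reduction onto $\operatorname{span}\{\phi_1,\phi_2\}$, $T$-coercivity for invertibility of the fast block, symmetry analysis of the $2\times2$ leading matrix, and solving $\det\mathcal{M}=0$. Your worry about spectral isolation of $\lambda_D$ in the \emph{full} fiber operator is unnecessary---$\bbA_{\aspar,\bK_*}$ has compact resolvent, so every eigenvalue is isolated of finite multiplicity, and hypotheses 1 and 2 combined with \eqref{eq.propcommutdim} give $\dim\ker(\bbA_{\aspar,\bK_*}-\lambda_D\mathrm{Id})=2$; the paper's $T$-coercivity construction is exactly your second suggestion, namely $\mathbb{T}u=u_+-u_-$ where $u_-$ is the component in the span of eigenfunctions for eigenvalues below $\lambda_D$.
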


\begin{remark}\label{rem.Diracpointsequivalence}
By symmetry, it is sufficient to prove Theorem  \ref{Thm.weakLyapounovSchmidtred} for one of the 6 vertices of  the Brillouin zone $\mathcal{B}$ \cite{FLW-CPAM:17,LWZ:18}. This relies on the following symmetry properties: 
\begin{enumerate}
\item If $\bK_*$ is a $\bK$ type-points then $L^2_{\bK}=L^2_{\bK_*}$ and $\bbA_{\aspar,\bK}=\bbA_{\aspar,\bK_*}$ for all $\aspar>0$ since $\bK$ and $\bK_*$ yields to the same quasi-periodic conditions (as their quasimomenta differ from  a dual lattice vector). Thus, $\bbA_{\aspar,\bK}$ and $\bbA_{\aspar,\bK_*}$ have the same eigenelements. Hence, conditions of Theorem   \ref{Thm.weakLyapounovSchmidtred}  are satisfied for $(\bK_*,\lambda_D(\aspar))$ if and only if they are satisfied for $(\bK,\lambda_D(\aspar))$ and in particular one has $v^{\bK_*}_{F}(\aspar)=v^{\bK}_{F}(\aspar)$. If $\bK_*$ is a $\bK'$ type-points, the same property holds (by replacing $\bK$ by $\bK'$)  since  $\bbA_{\aspar,\bK'}=\bbA_{\aspar,\bK_*}$ .
\item As $\bK'=-\bK$, one checks easily that $(\lambda, \Phi)$ is  an eigenpair of  $\bbA_{\aspar,\bK}$ if and only if $(\lambda, \mathcal{C}{\Phi}=\overline{\phi})$ is an eigenpair of  $\bbA_{\aspar,\bK'}$. Furthermore, one shows easily that for real $\lambda$ and $\sigma=1, \tau,\overline{\tau}$: 
\begin{equation}\label{eq.conj} \mathcal{C}(L^2_{\pm \bK, \sigma})=L^2_{\mp \bK, \overline{\sigma}}  \ \mbox{ and } \ \mathcal{C}(\operatorname{ker}_{\sigma}\big(\bbA_{\aspar,\pm \bK}-\lambda \mathrm{I}d\big))=\operatorname{ker}_{\overline{\sigma}}\big(\bbA_{\aspar, \mp \bK}-\lambda \mathrm{I}d\big)).
\end{equation}
Since $\mathcal{C}$ (an anti-linear involution) does not change the subspace dimensionality, we have by \eqref{eq.propcommutdim} and \eqref{eq.conj}:
$$m_{\aspar,\bK,\sigma}(\lambda) =m_{\aspar,\bK',\overline{\sigma}}(\lambda)=m_{\aspar,\bK',\sigma}(\lambda) \, \mbox{ for } \sigma=1, \tau,\overline{\tau} \mbox{ and } \lambda\in \R.$$ 
Thus,  Theorem \ref{Thm.weakLyapounovSchmidtred} holds for $(\bK, \lambda_D(\aspar))$  with associated $\Phi_1^{\bK}(\aspar,\cdot)$, $\Phi_2^{\bK}(\aspar,\cdot)=\mathcal{P}\mathcal{C} \Phi_1^{\bK}(\aspar,\cdot)$ and $v_{F}^{\bK}(\aspar)$ if and only if it is satisfied for $(\bK', \lambda_D(\aspar))$ with associated $\Phi_1^{\bK'}(\aspar,\cdot)=\mathcal{C}\, \Phi_2^{\bK}(\aspar,\cdot)$, $\Phi_2^{\bK'}(\aspar,\cdot)=\mathcal{P}\,\mathcal{C} \Phi_1^{\bK'}(\aspar,\cdot)=\mathcal{P} \Phi_2^{\bK}(\aspar,\cdot)= ( \mathcal{P}^2\circ \mathcal{C} )\Phi_1^{\bK}(\aspar,\cdot)=  \mathcal{C}  \Phi_1^{\bK}(\aspar,\cdot)$ and $v_{F}^{\bK'}(\aspar)=v_{F}^{\bK}(\aspar)$. 
\end{enumerate}
\end{remark}
\begin{remark}
 In Theorem \ref{Thm.weakLyapounovSchmidtred}, the normalized vector  $\Phi_1(\aspar)$ is uniquely defined up to a complex phase. However, the  Dirac  velocity  $v_D(\aspar)$,  defined by  \eqref{eq.fermyveloc}, is independent of this choice of phase.
\end{remark}

\subsection{Construction of limiting eigenstates at high symmetry quasi-momenta}\label{orbitals}

In this section we build up approximations for a basis of the degenerate eigenspace associated to a Dirac point.
The idea is that as $\aspar\uparrow\infty$, the $L^2_{\bK_\star}$  eigenstates of $\mathbb{A}_{\aspar,\bK_\star}$ converge to eigenstates of the Dirichlet Laplacian for an isolated inclusion. It is therefore natural, when $\aspar$ is large, to seek Floquet-Bloch states
  which are quasi-periodic superpositions of translates of Dirichlet eigenstates. 
  In the context of quantum chemistry, this idea 
  is known as LCAO: the linear combination of atomic orbitals. By analogy we refer to the translated Dirichlet eigenstates {\it Dirichlet orbitals}. 
 
\subsubsection{Dirichlet orbitals}
Introduce the operators associate with  inversion, complex conjugation and $2\pi/3$ rotation on the  single inclusion $\cell^A$:
\begin{equation}
\label{PCR-def}
\mathcal{P}_{\cell^A} \,f(\bx)=f(-\bx),\quad 
\quad \mathcal{C}f(\bx)=\overline{f(\bx)},\quad \mathcal{R}_{\cell^A} \,f(\bx)=f(R^* \, \bx).
\end{equation}
Here, $R$ denotes the $2\times2$  matrix, which rotates a vector in the plane about $\bv_A=0$ by $2\pi/3$ clockwise.
Since $R(\cell^A)=\cell^A$ and $\mathcal{P}(\cell^A)=\cell^A$ (assumptions ($\Omega$.v) and ($\Omega$.vi) of Section \ref{sec:sigma}), the operators $\mathcal{P}_{\cell^A}, \mathcal{R}_{\cell^A}$ and $\mathcal{C}$ map $L^2(\cell^A)$ to itself. $\mathcal{P}_{\cell^A}$ and $\mathcal{R}_{\cell^A} $ are unitary
 and $\mathcal{C}$ is anti-unitary. Furthermore, 
 \begin{equation}
  [\mathcal{P}_{\cell^A}, \LaplDcellA]=0,\quad [\mathcal{R}_{\cell^A},\LaplDcellA]=0,\  [\mathcal{C}, \LaplDcellA]=0]. 
  \label{commute}\end{equation}
The commutation with the conjugation operator is obvious, for the two other commutation relations, see  \ref{sec-not},  in particular, Proposition \ref{prop.comutappend} for more details.

\begin{proposition}\label{Prop-honeycombsym}
Let $\tilde{\delta}_n$, $n\geq 1$  be an eigenvalue  of $\LaplDcellA$ satisfying the  spectral isolation condition $\condS$ of Definition \ref{Def.condS}. 
Let $v\in \operatorname{ker}(\LaplDcellA-\tilde{\delta}_{n} \mathrm{I}d)$, then
one has 
\begin{equation}\label{eq.honeycombsym}
\mathcal{R}_{\cell^A}[v](\bx)=\mathcal{R}^*_{\cell^A}[v](\bx)=v(\bx).
\end{equation}
Moreover,  there exists  a unique (up to a  factor of $-1$) normalized eigenfunction $p_n\in \operatorname{ker}(\LaplDcellA-\tilde{\delta}_{n} \mathrm{I}d)$ such that  for almost all $\bx\in \cell^A $:
\begin{align}\label{eq.honeycombsymCP}
\mathcal{C}[p_n](\bx) &= p_n(\bx),\quad {\rm and}\\
{\rm either}\quad  \mathcal{P}_{\cell^A}[p_n](\bx) &=p_n(\bx) \quad {\rm    or}\quad   \mathcal{P}_{\cell^A}[p_n](\bx) = -p_n(\bx) \ .\nonumber
\end{align}
\end{proposition}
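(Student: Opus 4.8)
The plan is to exploit that, under part (a) of the spectral isolation condition $\condS$, the eigenspace $E_n := \operatorname{ker}(\LaplDcellA - \tilde{\delta}_n\,\mathrm{Id})$ is one-dimensional, so any bounded linear operator commuting with $\LaplDcellA$ acts on $E_n$ as multiplication by a scalar. By the commutation relations \eqref{commute}, $E_n$ is invariant under $\mathcal{R}_{\cell^A}$, $\mathcal{P}_{\cell^A}$ and $\mathcal{C}$; the first two are unitary and hence act on the complex line $E_n$ as multiplication by a unimodular scalar, while $\mathcal{C}$ is anti-unitary.

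First I would prove \eqref{eq.honeycombsym}. Since $R^*$ is a rotation by $2\pi/3$, $(R^*)^3$ is the identity matrix and so $\mathcal{R}_{\cell^A}^3 = \mathrm{Id}$; thus the scalar $\mu$ with $\mathcal{R}_{\cell^A}[v]=\mu v$ on $E_n$ satisfies $\mu^3=1$, i.e. $\mu\in\{1,\tau,\overline{\tau}\}$. To exclude $\mu=\tau,\overline{\tau}$ I invoke part (b) of $\condS$: choose $v\in E_n$ with $\int_{\cell^A}v\,\rmd\bx\neq0$. The substitution $\by=R^*\bx$ has unit Jacobian and maps $\cell^A$ onto itself (by assumption ($\Omega$.v)), so
\[
\mu\int_{\cell^A}v(\bx)\,\rmd\bx=\int_{\cell^A}v(R^*\bx)\,\rmd\bx=\int_{\cell^A}v(\by)\,\rmd\by=\int_{\cell^A}v(\bx)\,\rmd\bx,
\]
which forces $\mu=1$. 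Hence $\mathcal{R}_{\cell^A}$ is the identity on $E_n$, and since $\mathcal{R}_{\cell^A}$ is unitary, $\mathcal{R}^*_{\cell^A}=\mathcal{R}_{\cell^A}^{-1}$, so applying $\mathcal{R}^*_{\cell^A}$ to $\mathcal{R}_{\cell^A}[v]=v$ gives $\mathcal{R}^*_{\cell^A}[v]=v$ as well; this is \eqref{eq.honeycombsym}.

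Next I would construct $p_n$ and establish \eqref{eq.honeycombsymCP}. Pick any normalized $w\in E_n$. Since $\mathcal{C}$ commutes with $\LaplDcellA$, $\overline{w}=\mathcal{C}[w]\in E_n$, so $\overline{w}=\rme^{\rmi\theta}w$ for some $\theta\in\R$ (the modulus is $1$ because $\mathcal{C}$ preserves the $L^2(\cell^A)$ norm). Setting $p_n:=\rme^{\rmi\theta/2}w$ one checks $\mathcal{C}[p_n]=\rme^{-\rmi\theta/2}\overline{w}=\rme^{-\rmi\theta/2}\rme^{\rmi\theta}w=p_n$, so $p_n$ is a.e.\ real-valued and normalized; and the real-valued normalized elements of the complex line $E_n$ are precisely $\pm p_n$, giving uniqueness up to a factor $-1$. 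Finally $\mathcal{P}_{\cell^A}[p_n]\in E_n$, so $\mathcal{P}_{\cell^A}[p_n]=\varepsilon p_n$ for some $\varepsilon\in\C$, and $\mathcal{P}_{\cell^A}^2=\mathrm{Id}$ yields $\varepsilon^2=1$, hence $\varepsilon=\pm1$, which is the dichotomy in \eqref{eq.honeycombsymCP}. One also records $\mathcal{R}_{\cell^A}[p_n]=p_n$ from the previous step.

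The substance of the argument lies entirely in the first step, and the only genuine point is recognizing that part (b) of $\condS$ — the non-vanishing of $\int_{\cell^A}v$ — is exactly what rules out $\mathcal{R}_{\cell^A}$ acting as $\tau$ or $\overline{\tau}$ on the (simple) eigenspace; everything else is routine one-dimensional linear algebra together with a unit-Jacobian change of variables, so I do not anticipate a substantive obstacle.
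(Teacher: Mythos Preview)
Your proof is correct but takes a genuinely different route from the paper's for the key step \eqref{eq.honeycombsym}. The paper first constructs the real-valued normalized eigenfunction $p_n$ (via $\operatorname{Re} v$, $\operatorname{Im} v$), and then argues: since $p_n$ is real, $\mathcal{R}_{\cell^A}p_n$ is real, so the scalar $\alpha$ with $\mathcal{R}_{\cell^A}p_n=\alpha p_n$ lies in $\R$; unitarity gives $\alpha\in\{\pm1\}$, and $\mathcal{R}_{\cell^A}^3=\mathrm{Id}$ then forces $\alpha=1$. You instead work directly with a general $v\in E_n$, note $\mu^3=1$, and use part~(b) of $\condS$ together with the unit-Jacobian change of variables to exclude $\mu\in\{\tau,\overline{\tau}\}$.

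Each argument buys something. The paper's proof uses only part~(a) of $\condS$ (simplicity) and not part~(b), so it actually proves a slightly stronger statement: rotation invariance holds whenever $\tilde{\delta}_n$ is simple, regardless of the mean. Your argument, on the other hand, makes transparent the role of the non-vanishing mean and does not require first passing to a real eigenfunction; it would also apply in settings where no real structure is available. For the parity dichotomy you use $\mathcal{P}_{\cell^A}^2=\mathrm{Id}\Rightarrow\varepsilon^2=1$, whereas the paper uses that $\mathcal{P}_{\cell^A}p_n$ is real to get $\beta\in\R$ and then $|\beta|=1$; both are equally short.
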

\begin{proof}
Let $v$ be an eigenfunction associated with $\tilde{\delta}_n$.   Since  $\LaplDcellA$ commutes with $\mathcal{C}_{\cell^A}$,  $\overline{v}$ is also an eigenfunction of $\LaplDcellA$  associated with $\tilde{\delta}_n$.
Therefore, $\operatorname{Re}(v)=(v+\overline{v})/2$ and  $\operatorname{Im}(v)= (v-\overline{v})/(2\rmi)\in \operatorname{ker}(\LaplDcellA-\tilde{\delta}_n \mathrm{I}d)$ and these two functions cannot be simultaneously equal to the zero function since $v=\operatorname{Re}(v)+\rmi \operatorname{Im}(v) \neq 0$. Thus,  there exists  a real-valued eigenfunction associated to $\tilde{\delta}_n$. Moreover, since $\tilde{\delta}_n$ is a simple eigenvalue, there exists a unique (up to a multiplication by $-1$) real-valued  normalized eigenfunction, which we denote   $p_n\in \operatorname{ker}(\LaplDcellA-\ \tilde{\delta}_n \mathrm{Id})$.
By the commutation relations \eqref{commute},  the one dimensional space $\operatorname{ker}(\LaplDcellA-\tilde{\delta}_n \mathrm{I}d)=\operatorname{span}\{p_n\}$ is invariant under  $\mathcal{R}_{\cell^A}$ and $\mathcal{P}_{\cell^A}$. In addition, $\mathcal{R}_{\cell^A} \,p_n$ and $\mathcal{P}_{\cell^A}\,p_n$ are real-valued, thus there exist  $\alpha, \beta\in \R$ such that $(\mathcal{R}_{\cell^A} \,p_n)(\bx)=p_n(R^* \bx)=\alpha \, p_n(\bx)$ and  $\mathcal{P}_{\cell^A} p_n(\bx)  = \beta \, p_n(\bx) $ for $ \bx\in\cell^A$. But as $\mathcal{R}_{\cell^A} \,p_n $
and $\mathcal{P}_{\cell^A} \,p_n $ are normalized  (since $p_n$ is normalized), and since $\mathcal{R}_{\cell^A}$ is  unitary and $\mathcal{P}_{\cell^A}$ is  anti-unitary, we have $\alpha=\pm 1$ and $\beta=\pm 1$. Furthermore, since $\mathcal{R}_{\cell^A}^3=\mathrm{Id}$, we have $\alpha^3=1$ and thus $\alpha=1$.  Finally, since $\operatorname{ker}(\LaplDcellA-\tilde{\delta}_n \mathrm{I}d)=\operatorname{span}\{p_n\}$ and $\mathcal{R}_{\cell^A}p_n=p_n$, it follows that any $v\in \operatorname{ker}(\LaplDcellA-\tilde{\delta}_n \mathrm{I}d)$ satisfies   \eqref{eq.honeycombsym}.
\end{proof}

Let $\tilde{\delta}_n$, $n\geq 1$  be an eigenvalue  of $\LaplDcellA$ satisfying the band separation condition $\condS$ and let $p_n\in \operatorname{ker}(\LaplDcellA-\tilde{\delta}_n \mathrm{I}d)$  denote the normalized  eigenfunction (unique up to a factor of $-1$), guaranteed by Proposition \ref{Prop-honeycombsym} which satisfies
  the symmetry relations \eqref{eq.honeycombsym} and \eqref{eq.honeycombsymCP}.
   We extend $p_n$ to be defined on all $\mathbb{R}^2$ by setting it equal to zero on $\mathbb{R}^2\setminus\cell^A$.  We continue to denote this extension by $p_n$ and observe that for a.e. $\bx\in\mathbb{R}^2$ we have:
   \begin{align}\label{eq.RsymR2}
p_n(\bx) &=p_n(R^*\bx)=p_n(R\bx)\\
\overline{p_n(\bx)} &= p_n(\bx),\quad {\rm and} \label{eq.CsymR2}\\
\label{eq.CPsymR2}
{\rm either}\quad  p_n(-\bx) &=p_n(\bx) \quad {\rm    or}\quad  p_n(-\bx) = -p_n(\bx) .
\end{align}

The following proposition is an immediate consequence of Proposition \ref{1incl2incl}. 
\begin{proposition}\label{lem.eigenDirichlettwoinclusions}
Assume that $\tilde\delta_n$  is an eigenvalue of $\LaplDcellA$  for which the spectral isolation condition $\condS$ of Definition \ref{Def.condS} holds, with corresponding normalized eigenfunction  $p_n$ that satisfies  \eqref{eq.RsymR2}, \eqref{eq.CsymR2} and \eqref{eq.CPsymR2}.
Then,  $\delta_{2n-1}=\delta_{2n}= \tilde{\delta}_{n}$ is an eigenvalue of $\LaplDcellp$ of multiplicity $2$, i. e.
$$
\delta_{2n-1}=\delta_{2n}= \tilde{\delta}_{n}<\delta_{2n+1}= \tilde{\delta}_{n+1}   \textrm{ for $n\ge1$ }  \mbox{ and } \  \tilde{\delta}_{n-1}=\delta_{2n-2}<\delta_{2n-1} \textrm{ for $n>1$.}
$$
Furthermore, $\{p_{n}(\bx-\bv_A),p_{n}(\bx-\bv_B)\}$ is an orthonormal basis for 
$\operatorname{ker}(\LaplDcellp-\delta_{2n} \mathrm{I}d)$.
\end{proposition}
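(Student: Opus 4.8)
The plan is to deduce everything from the product structure of the operator $\LaplDcellp$ together with Proposition \ref{1incl2incl} and the simplicity part of the isolation hypothesis $\condS$. The structural observation is that, by assumptions ($\Omega$.ii)--($\Omega$.iii), the inclusion set $\cell^+=\cell^A\cup\cell^B$ is a disjoint union with $\cell^B=\cell^A+\bv_B$ and $\overline{\cell^A}\cap\overline{\cell^B}=\emptyset$; consequently $L^2(\cell^+)=L^2(\cell^A)\oplus L^2(\cell^B)$ and $H^1_0(\cell^+)=H^1_0(\cell^A)\oplus H^1_0(\cell^B)$ under extension by zero, and $\LaplDcellp$ is the orthogonal direct sum of the Dirichlet Laplacian on $\cell^A$ and the Dirichlet Laplacian on $\cell^B$. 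The latter is unitarily equivalent to $\LaplDcellA$ through the translation $\bx\mapsto\bx-\bv_B$, which is an $L^2$-isometry commuting with $-\Delta$. This is exactly the content already packaged in Proposition \ref{1incl2incl}, so in practice I would simply invoke that proposition rather than re-derive it.

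First I would record the eigenvalue identities. Part (a) of $\condS$ states that $\tilde\delta_n$ is a simple eigenvalue of $\LaplDcellA$; in the list of eigenvalues counted with multiplicity this forces the strict inequalities $\tilde\delta_{n-1}<\tilde\delta_n<\tilde\delta_{n+1}$ (the second for all $n\ge1$, the first for $n>1$), since equality on either side would give multiplicity at least two. Applying part 2 of Proposition \ref{1incl2incl}, namely $\delta_{2m-1}=\delta_{2m}=\tilde\delta_m$ for every $m\ge1$, with $m=n-1,n,n+1$ then yields the claimed chain $\delta_{2n-2}=\tilde\delta_{n-1}<\tilde\delta_n=\delta_{2n-1}=\delta_{2n}<\tilde\delta_{n+1}=\delta_{2n+1}$. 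Equivalently, and this is the form I actually need below, part 3 of Proposition \ref{1incl2incl} combined with $\dim\operatorname{ker}(\LaplDcellA-\tilde\delta_n\mathrm{I}d)=1$ gives $\dim\operatorname{ker}(\LaplDcellp-\delta_{2n}\mathrm{I}d)=2$, so $\tilde\delta_n=\delta_{2n}$ is an eigenvalue of $\LaplDcellp$ of multiplicity exactly two.

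Finally I would exhibit the basis explicitly. Let $p_n$ be the normalized eigenfunction of $\LaplDcellA$ furnished by Proposition \ref{Prop-honeycombsym}, extended by zero to $\R^2$, and set $\phi_A(\bx)=p_n(\bx-\bv_A)$ and $\phi_B(\bx)=p_n(\bx-\bv_B)$. Since $p_n$ is supported in $\overline{\cell^A}$, the function $\phi_A$ is supported in $\overline{\cell^A}$ and, by ($\Omega$.ii), $\phi_B$ is supported in $\overline{\cell^A+\bv_B}=\overline{\cell^B}$; by ($\Omega$.iii) these supports are disjoint, so $(\phi_A,\phi_B)_{L^2(\cell^+)}=0$, and each is a unit vector because translation is an $L^2$-isometry and $\|p_n\|_{L^2(\cell^A)}=1$. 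Using the direct-sum decompositions above, $\phi_A\in H^1_0(\cell^A)$ and $\phi_B\in H^1_0(\cell^B)$ both sit inside $H^1_0(\cell^+)$, and $-\Delta\phi_A=\tilde\delta_n\phi_A$, $-\Delta\phi_B=\tilde\delta_n\phi_B$ by translation invariance of $-\Delta$; hence $\phi_A,\phi_B\in\operatorname{ker}(\LaplDcellp-\delta_{2n}\mathrm{I}d)$. As that space is two-dimensional, $\{\phi_A,\phi_B\}$ is an orthonormal basis of it. The only step calling for a measure of care — the closest thing here to an obstacle — is the bookkeeping that the zero-extension across a merely Lipschitz inclusion boundary keeps a function in $H^1_0(\cell^+)$ and that the split operator really coincides with $\LaplDcellp$; but this is the very point underlying Proposition \ref{1incl2incl} (via \cite[Proposition~3, p.~269]{RS4}), so no additional argument is required.
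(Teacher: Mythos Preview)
Your proof is correct and follows exactly the approach the paper indicates: the paper states only that the result ``is an immediate consequence of Proposition \ref{1incl2incl}'' without further detail, and your argument simply unpacks that implication using parts 2 and 3 of Proposition \ref{1incl2incl} together with the simplicity assumption in $\condS$, then exhibits the orthonormal basis via the direct-sum structure of $\LaplDcellp$.
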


\begin{example}[$\cell^A$ and $\cell^B$, circular inclusions]
If  $\cell^A$ and $\cell^B$ are two discs of radius $R_0$, the three first eigenvalues of $\LaplDcellp$ are given by
$
\delta_1=\delta_2=\tilde{\delta}_1=\left(z_{0,1}/R_0\right)^2< \delta_3=\tilde{\delta}_2= \left(z_{1,1}/R_0\right)^2$
where $z_{p,q}$ denotes the $q^{th}$ positive zero of the Bessel function $J_p$ ($p\in \N_0$). 
Moreover, the normalized eigenfunction $p_1$  associated to $\tilde{\delta}_{1}$ is (up to $-1$ factor) given by
\begin{equation*}\label{eq.defeigenfunctions}
p_1(\bx)=\frac{{ \bf1}_{_{|\bx|\le R_0}}(\bx)}{\sqrt{\pi}}\frac{J_0\big(\sqrt{\tilde{\delta}_1}|\bx|\big)}{|J_0'(z_{0,1})| R_0} ;
\end{equation*}
note that $|J_0'(z_{0,1})|\neq 0$ as the zeros  of $J_0$ are simple; see \cite{Wat:66}. 
For the normalization of $p_1$, we use the identity $\int_0^{R_0} J_0^2(\sqrt{\tilde{\delta}_1} r) \, r \, \mathrm{d}r =R_0^2  \, J_0'(z_{0,1})^2/2$ (see  chapter 5, formula 11 page 135 of \cite{Wat:66}).
Finally, note that in this particular case $p_1$ is an  even function.
\end{example}

\subsubsection{Pseudo-periodic superposition of Dirichlet orbitals}
In this section, we prove the existence of Dirac points over the Brillouin zone vertices for $\aspar$ sufficiently large. By  Remark \ref{rem.Diracpointsequivalence}, it suffices to work at the single vertex $\bK$ of $\mathcal{B}$.  
First, one construct the eigenstates of the limit problem as $\aspar \to+ \infty$. We  assume that the Dirichlet eigenvalue $\tilde{\delta}_n$, $n\geq 1$ satisfies the spectral isolation condition $\condS$ of Definition \ref{Def.condS}.
 Let $p_n$ be the unique (up to a factor $-1$) normalized eigenfunction  associated to $\tilde{\delta}_n$ that  satisfies  \eqref{eq.RsymR2}, \eqref{eq.CsymR2} and \eqref{eq.CPsymR2}. For each of the two triangular sublattices, $\Lambda+\bv_J$, $J=A,B$ which  comprise
  the honeycomb, we associate the function
\begin{equation}\label{eq.defPka}
P_{n,\bK}^J(\bx):=\sum_{\bv\in \Lambda} \rme^{\rmi \bK\cdot \bv} p_{n}(\bx-\bv-\bv_J), \ \ \mbox{ for a. e. }    \bx \in \mathbb{R}^2,\qquad  J=A,B.
\end{equation}
The supports of  each summand are disjoint, so the above series is trivially convergent; for each $\bx\in\mathbb{R}^2$, 
  at most one of its term is nonzero.  Furthermore, note that $P_{n,\bK}^J(\cdot)$ is the Floquet-Bloch transform at quasi-momentum $\bK$ of
  $p_{n}(\cdot-\bv_J)$ (see \eqref{FB-def}) and is therefore $\bK-$quasi-periodic. Hence, $P_{n,\bK}^J\in L^2_{\bK}$.
\begin{lemma}\label{lem.pkapkb}
Assume that the eigenvalue $\tilde{\delta}_n$ of $\LaplDcellA$  satisfies the condition $\condS$ of Definition \ref{Def.condS}. Then, 
\begin{enumerate}
\item
\begin{equation}\label{eq.PkAPKB}
P_{n,\bK}^A\in L_{\bK,\tau}^2 \mbox{ and } P_{n,\bK}^B\in L_{\bK, \overline{\tau}}^2,\quad \mathcal{P}\mathcal{C}P_{n,\bK}^A=\pm \rme^{- \rmi  \frac{2\pi}{3}} P^B_{n,\bK},
\end{equation}
where the choice of sign, $\pm$, is  $+$ (resp. $-$) if  the  single inclusion Dirichlet eigenfunction $p_n$ with corresponding eigenvalue  $\tilde{\delta}_n$ is even (resp. odd).
\item  the two-dimensional eigenspace of $\bbA_{\infty, \bK}$  associated with  the eigenvalue
  $\tilde{\delta}_n=\delta_{2n-1}=\delta_{2n}$ admits $\{P_{n,\bK}^A,P_{n,\bK}^B\}$ as an orthonormal basis.
\end{enumerate}
\end{lemma}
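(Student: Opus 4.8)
The plan is to establish part~1 by direct computation from the lattice-sum definition \eqref{eq.defPka}, and to derive part~2 from the identification, made earlier in this section, of the spectrum of $\bbA_{\infty,\bK}$ with that of $\LaplDcellp$.

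For part~1, the first step is to compute $\mathcal{R}P_{n,\bK}^A$. Unfolding the definition \eqref{ed.defrotop} of $\mathcal{R}$ and applying the rotation symmetry $p_n(\bx)=p_n(R\bx)$ of \eqref{eq.RsymR2}, together with the geometric facts $R\bx_c=-\bv_B$ and $\bx_c+\bv_B=\bv_2\in\Lambda$, the rotated orbital $p_n\big(\bx_c+R^*(\bx-\bx_c)-\bv\big)$ becomes $p_n\big(\bx-\bv_2-R\bv\big)$. Reindexing by $\bw=\bv_2+R\bv$ --- a bijection of $\Lambda$ since $R\Lambda=\Lambda$ and $\bv_2\in\Lambda$ --- and using $R\bK-\bK=\bk_2\in\Lambda^*$ together with $\bk_l\cdot\bv_m=2\pi\delta_{lm}$ and $\bK\cdot\bv_2=-2\pi/3$, the accumulated phase is exactly $\rme^{-\rmi\bK\cdot\bv_2}=\rme^{2\rmi\pi/3}=\tau$, so $\mathcal{R}P_{n,\bK}^A=\tau\,P_{n,\bK}^A$, i.e. $P_{n,\bK}^A\in L^2_{\bK,\tau}$. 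The second step is to compute $\mathcal{P}\mathcal{C}P_{n,\bK}^A$ from \eqref{eq.definvoppseudoper}: using that $p_n$ is real (\eqref{eq.CsymR2}) and has a definite parity $p_n(-\bx)=\pm p_n(\bx)$ (\eqref{eq.CPsymR2}), the fact $2\bx_c-\bv_B=\bv_2-\bv_1\in\Lambda$, and an analogous reindexing, one gets $\mathcal{P}\mathcal{C}P_{n,\bK}^A=\pm\,\rme^{-2\rmi\pi/3}\,P_{n,\bK}^B$, with sign $+$ when $p_n$ is even and $-$ when $p_n$ is odd. Since $\mathcal{P}\mathcal{C}(L^2_{\bK,\tau})=L^2_{\bK,\overline\tau}$, recalled above, this identity simultaneously yields $P_{n,\bK}^B\in L^2_{\bK,\overline\tau}$ and the displayed relation in \eqref{eq.PkAPKB}, completing part~1.

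For part~2, I would first check that $P_{n,\bK}^J\in\widetilde{H}^1_{\bK}=\domain(a_{\infty,\bK})$ for $J=A,B$: since $p_n\in H^1_0(\cell^A)$ and, by ($\Omega$.iv), $\overline{\cell^A}$ lies at positive distance from $\partial\cell$, its extension by zero belongs to $H^1_0(\cell)$, so the $\bK$-pseudo-periodic superposition \eqref{eq.defPka} is $H^1_{\mathrm{loc}}$, $\bK$-quasi-periodic, and supported in $\boldcell^+$, hence vanishes a.e.\ on $\overline{\boldcell^-}$. Evaluating $\bbA_{\infty,\bK}$ cell by cell via \eqref{eq.defDIropinf} --- on each translate $\cell^A+\bv$ one has $P_{n,\bK}^A=\rme^{\rmi\bK\cdot\bv}p_n(\cdot-\bv)$, so $-\Delta P_{n,\bK}^A=\tilde\delta_n P_{n,\bK}^A$ there, while $P_{n,\bK}^A$ vanishes on the $\cell^B$-translates and on $\overline{\boldcell^-}$ where $\bbA_{\infty,\bK}$ acts as $0$ --- shows $\bbA_{\infty,\bK}P_{n,\bK}^J=\tilde\delta_n P_{n,\bK}^J$. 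Orthonormality is immediate: on $\cell$ only the $\bv={\bf 0}$ summand of $P_{n,\bK}^A$ survives, supported in $\cell^A$, and likewise $P_{n,\bK}^B$ is supported in $\cell^B$; these are disjoint by ($\Omega$.iii) and $\|p_n\|_{L^2(\cell^A)}=1$. Finally, Proposition~\ref{1incl2incl}(2)--(3), together with the simplicity of $\tilde\delta_n$ guaranteed by $\condS$, gives $\tilde\delta_n=\delta_{2n-1}=\delta_{2n}$ and $\dim\ker(\LaplDcellp-\tilde\delta_n\,\mathrm{I}d)=2$, and the spectrum of $\bbA_{\infty,\bK}$ coincides (with multiplicity) with that of $\LaplDcellp$; hence $\ker(\bbA_{\infty,\bK}-\tilde\delta_n\,\mathrm{I}d)$ is two-dimensional, so the orthonormal pair $\{P_{n,\bK}^A,P_{n,\bK}^B\}$ is a basis of it. Equivalently, one may argue directly: any $u\in\ker(\bbA_{\infty,\bK}-\tilde\delta_n\,\mathrm{I}d)$ restricts on each inclusion to a Dirichlet $\tilde\delta_n$-eigenfunction, which by simplicity of $\tilde\delta_n$ is a scalar multiple of the corresponding translate of $p_n$, and $\bK$-quasi-periodicity then forces $u\in\operatorname{span}\{P_{n,\bK}^A,P_{n,\bK}^B\}$.

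The step demanding the most care is the phase bookkeeping in part~1: one must confirm, from the explicit coordinates in the notation section, the facts $R\bx_c=-\bv_B$, $\bx_c+\bv_B=\bv_2\in\Lambda$, $2\bx_c-\bv_B=\bv_2-\bv_1\in\Lambda$, $R\bK-\bK=\bk_2\in\Lambda^*$ and $R\Lambda=\Lambda$, and then verify that after reindexing the lattice sum the residual phase is precisely $\rme^{-\rmi\bK\cdot\bv_2}=\tau$ for $\mathcal{R}$ and $\pm\,\rme^{-\rmi\bK\cdot(\bv_2-\bv_1)}=\pm\,\rme^{-2\rmi\pi/3}$ for $\mathcal{P}\mathcal{C}$ --- an arithmetic slip at any of these points would change the symmetry label attached to $P_{n,\bK}^A$ or $P_{n,\bK}^B$. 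Part~2, by contrast, is routine given the spectral facts already established.
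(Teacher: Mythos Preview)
Your proposal is correct and follows essentially the same route as the paper: the computation of $\mathcal{R}P_{n,\bK}^A$ via $p_n(R\cdot)=p_n(\cdot)$, the reindexing $\bw=R\bv+\bv_2$, and the resulting phase $\rme^{-\rmi\bK\cdot\bv_2}=\tau$ match the paper's argument line by line, as does the $\mathcal{P}\mathcal{C}$ computation using $2\bx_c=\bv_2-\bv_1+\bv_B$ and the parity of $p_n$. For part~2 the paper is terser --- it simply invokes the correspondence between eigenfunctions of $\bbA_{\infty,\bK}$ and those of $\LaplDcellp$ together with Proposition~\ref{lem.eigenDirichlettwoinclusions} --- but your more explicit verification (membership in $\widetilde{H}^1_{\bK}$, the action of $\bbA_{\infty,\bK}$ cell by cell, orthonormality from disjoint supports, and the dimension count) unpacks exactly what that citation encodes.
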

\begin{proof}
Definitions \eqref{eq.defDIropinf} and \eqref{eq.defADinfty} imply that  $\bbA_{\infty, \bK}$ and $\LaplDcellp$ 
have the same eigenvalues  (with multiplicity). Furthermore, the eigenfunctions of $\bbA_{\infty, \bK}$ are obtained from those of $\LaplDcellp$   by first extending them to be identically zero on $\cell\setminus\cell^+$, and then extending this function on $\cell$ to be $\bk-$quasi-periodic on  $\R^2$.
Thus,  by Proposition \ref{lem.eigenDirichlettwoinclusions},  $\{P_{n,\bK}^{A},P_{n,\bK}^{B}\} $ is an orthonormal basis of the $2$-dimensional space $\operatorname{ker}(\bbA_{\infty, \bK}-\delta_{2n}\mathrm{I}_d)$.

We next prove that $P_{n,\bK}^A\in L_{\bK,\tau}^2$. This proof is similar to the one of \cite[Lemma 10.4]{LWZ:18}. We recall that $\bv_A=(0,0)$. For almost all $ \bx\in \mathbb{R}^2$:
\begin{eqnarray*}
\mathcal{R}[P_{n,\bK}^{A}](\bx)
&=&\sum_{\bv \in \Lambda} \rme^{\rmi \bK \cdot \bv} p_{n}(\bx_c+R^{*}(\bx-\bx_c)-\bv)\\
&=&\sum_{\bv \in \Lambda}\rme^{\rmi \bK \cdot \bv} p_{n}(R\, \bx_c+(\bx-\bx_c)-R\, \bv)\, ,
\end{eqnarray*}
where the last equality holds since $p_n(R\, \by)=p_n(\by)$ for all $ \by\in \mathbb{R}^2$ (Proposition
 \ref{Prop-honeycombsym}). Using  that
$\bv_2=\bx_c-R\bx_c$ (since $\bx_c=\bv_2-\bv_B$ and $R\bx_c=-\bv_B$) and  that $R$ is unitary we have
\begin{eqnarray*}
\mathcal{R}[P_{n,\bK}^{A}](\bx)&=&\sum_{\bv \in \Lambda}\rme^{\rmi R \bK \cdot R \bv} p_{n}\big(\bx-(R\, \bv+\bv_2)\big)\\
&=& \rme^{-\rmi R \bK\cdot \bv_2} \sum_{\bv \in \Lambda}\rme^{\rmi R \bK \cdot (R \bv+\bv_2)} p_{n}\big(\bx-(R\, \bv+\bv_2)\big),\\
&=&  \rme^{-\rmi  \bK\cdot \bv_2} \sum_{\bv \in \Lambda}\rme^{\rmi \bK \cdot (R \bv+\bv_2)} p_{n}\big(\bx-(R\, \bv+\bv_2)\big) \quad  (\mbox{ since }R\, \bK=\bk_2+\bK), \\
&=&\tau  \sum_{\bw \in \Lambda}\rme^{\rmi \bK \cdot \bw} p_{n}\big(\bx-\bw\big). \end{eqnarray*}
The last equality follows since $\bv\mapsto R\bv+\bv_2$ maps $\Lambda$ to itself
 and and $\rme^{-\rmi  \bK\cdot \bv_2}=\rme^{-\frac{\rmi}{3}(\bk_1-\bk_2)\cdot\bv_2}=\rme^{-\frac{\rmi}{3}(-2\pi)}=\tau$.
Thus, $\mathcal{R}[P_{n,\bK}^A](\bx) =\tau \, P_{n,\bK}^A(\bx)$ and  $P_{n,\bK}^A\in L_{\bK,\tau}^2$.

Finally, we prove the equality in \eqref{eq.PkAPKB}. This will imply that $P_{\bK,B}$ is in $L^2_{\bK,\overline{\tau}}$ since $P_{\bK,A}\in L_{\bK,\tau}^2$ and $\mathcal{P}\mathcal{C}$ maps $ L_{\bK,\tau}^2$ into  $L_{\bK,\overline{\tau}}^2$.   By definition of $\mathcal{P}\mathcal{C}$, one has
\begin{eqnarray*}
\mathcal{P}\mathcal{C}[P_{n,\bK}^{A}](\bx)&=&\overline{P_{n,\bK}^{A}(2\bx_c-\bx)}\\
&=& \sum_{v\in \Lambda} \rme^{- \rmi \bK\cdot \bv } p_{n}(2\bx_c- \bx- \bv) \, \quad  \mbox{(since $p_n$ is real-valued),}\\
&=& \pm  \sum_{v\in \Lambda} \rme^{- \rmi \bK\cdot \bv } p_{n}(\bx+\bv -2\bx_c)  \quad  \mbox{(since $p_n$ is even ($+$) resp. odd ($-$)),} \\
&=&\pm  \sum_{v\in \Lambda} \rme^{- \rmi \bK\cdot \bv } p_{n}(\bx+\bv -\bv_2+\bv_1-\bv_B) \quad \mbox{(since $2\bx_c=\bv_2-\bv_1+\bv_B$),}\\
&=& \pm \rme^{ \rmi \bK\cdot (\bv_1-\bv_2) } \sum_{\bw\in \Lambda} \rme^{ \rmi \bK\cdot \bw } p_{n}(\bx- \bw-\bv_B)\quad \mbox{($\bw=-\bv+\bv_2-\bv_1$),}\\
&=& \pm  \rme^{-\rmi  \frac{2 \pi}{3}} P_{n,\bK}^{B}(\bx)  \quad  \mbox{(since $\rme^{ \rmi \bK\cdot (\bv_1-\bv_2) }=\rme^{-\rmi \frac{2\pi}{3}}$)}.
\end{eqnarray*}
\end{proof}

\subsection{Existence of Dirac points for high contrast}\label{DP-exist}

 In this section, building upon the sufficient conditions of Theorem \ref{Thm.weakLyapounovSchmidtred},
   we prove that for $\aspar$ sufficiently large, the existence of Dirac points is reduced
 to the non-vanishing of the Dirac velocity $v_D(g)$. 
 For this result we require the following classical result of spectral theory on spectral measures and norm resolvent convergence. Its proof can be found in \cite[Theorem VIII.23 p. 289-290]{RS1}.
\begin{lemma}\label{lem.convspecproj}
Let $(\bbA_n)_{n\in \N}$ be a sequence of unbounded  self-adjoint operators on a Hilbert space $\mathcal{H}$ that converges to a self-adjoint operator $\bbA$  on  $\mathcal{H}$ in  the norm resolvent sense. That is, for any $\zeta\in \C\setminus \R$, $\|R_{\bbA_n}(\zeta)-R_{\bbA}(\zeta) \|_{B(\mathcal{H})}\to 0 $, as  $n\to+\infty$. Then, for any real $a,b\notin \sigma(\bbA)$, one has the following norm convergence of the spectral  projectors onto the real interval $(a,b)$:
$$
\|\bbE_{\bbA_n}\big((a,b)\big)- \bbE_{\bbA}\big((a,b)\big) \|_{B(\mathcal{H})} \to 0 \quad \mbox{ as }n \to + \infty.
$$
\end{lemma}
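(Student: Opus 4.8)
\textbf{Proof plan for Lemma~\ref{lem.convspecproj}.}
The plan is to reduce the convergence of spectral projectors to the convergence of resolvents via an approximation argument on continuous functions, following the classical treatment in \cite[Theorem VIII.23]{RS1}. First I would record the key functional-calculus fact: if $\bbA_n \to \bbA$ in the norm resolvent sense, then for \emph{every} bounded continuous function $f$ on $\R$ that vanishes at infinity (i.e. $f\in C_0(\R)$), one has $\|f(\bbA_n)-f(\bbA)\|_{B(\mathcal H)}\to 0$. This is proved by a standard density/Stone–Weierstrass argument: the set of $f\in C_0(\R)$ for which this convergence holds is a closed subalgebra of $C_0(\R)$ (closed because $\|f(\bbA_n)\|\le \|f\|_\infty$ uniformly in $n$, so a uniformly small perturbation of $f$ changes $f(\bbA_n)$ by a uniformly small amount), it is closed under complex conjugation, and it contains the resolvent functions $x\mapsto (x-\zeta)^{-1}$ for $\zeta\in\C\setminus\R$ by hypothesis; these separate points and vanish nowhere simultaneously, so by the Stone–Weierstrass theorem the subalgebra is all of $C_0(\R)$.

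Next, with $a,b\notin\sigma(\bbA)$ fixed, I would approximate the indicator $\mathbf 1_{(a,b)}$ from above and below by continuous functions. Concretely, given $\varepsilon>0$ small enough that $[a-\varepsilon,a+\varepsilon]$ and $[b-\varepsilon,b+\varepsilon]$ are disjoint from $\sigma(\bbA)$, choose $f_\varepsilon, h_\varepsilon \in C_0(\R)$ with $0\le f_\varepsilon\le \mathbf 1_{(a,b)}\le h_\varepsilon\le 1$, such that $f_\varepsilon \equiv 1$ on $[a+\varepsilon,b-\varepsilon]$ and $h_\varepsilon\equiv 0$ outside $(a-\varepsilon,b+\varepsilon)$, and such that $f_\varepsilon$ and $h_\varepsilon$ differ from $\mathbf 1_{(a,b)}$ only on $(a-\varepsilon,a+\varepsilon)\cup(b-\varepsilon,b+\varepsilon)$. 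By the spectral theorem applied to $\bbA$, since its spectral measure gives zero mass to these two small intervals, we get $f_\varepsilon(\bbA)=h_\varepsilon(\bbA)=\bbE_{\bbA}((a,b))$. Therefore
\[
\|\bbE_{\bbA_n}((a,b))-\bbE_{\bbA}((a,b))\| \le \|\bbE_{\bbA_n}((a,b))-f_\varepsilon(\bbA_n)\| + \|f_\varepsilon(\bbA_n)-f_\varepsilon(\bbA)\|,
\]
and on the right the second term tends to $0$ as $n\to\infty$ by the first step, while the first term is controlled using $f_\varepsilon\le \mathbf 1_{(a,b)}\le h_\varepsilon$, which gives $0\le \bbE_{\bbA_n}((a,b))-f_\varepsilon(\bbA_n)\le (h_\varepsilon-f_\varepsilon)(\bbA_n)$ as positive operators, hence $\|\bbE_{\bbA_n}((a,b))-f_\varepsilon(\bbA_n)\|\le \|(h_\varepsilon-f_\varepsilon)(\bbA_n)\|$, and since $h_\varepsilon-f_\varepsilon\in C_0(\R)$ this converges to $\|(h_\varepsilon-f_\varepsilon)(\bbA)\|=0$ (again using that $\bbA$'s spectral measure ignores the two $\varepsilon$-intervals). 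Combining: $\limsup_n \|\bbE_{\bbA_n}((a,b))-\bbE_{\bbA}((a,b))\| \le \|(h_\varepsilon-f_\varepsilon)(\bbA)\| = 0$ for the chosen $\varepsilon$, which already gives the conclusion (no further $\varepsilon\to 0$ limit is even needed, since the right-hand side is exactly zero).

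The main obstacle — really the only subtle point — is the passage from norm resolvent convergence to norm convergence of $f(\bbA_n)$ for general $f\in C_0(\R)$; everything else is bookkeeping with positivity of spectral projections and the fact that $\sigma(\bbA)$ avoids the cut-points $a,b$. Since in the paper's applications (Theorems~\ref{thmsaympAtau} and \ref{thmsaympANinf}) the limiting operators act on proper closed subspaces, I would also note that the resolvents there are understood as extended by zero on the orthogonal complement (Remark~\ref{rem.extensionres}), so that all operators genuinely act on the same Hilbert space $\mathcal H$ and the lemma applies verbatim. As this is a standard result, I would simply cite \cite[Theorem VIII.23 p.~289--290]{RS1} rather than reproduce the full argument.
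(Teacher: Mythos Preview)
Your proposal is correct and aligns exactly with the paper's treatment: the paper does not give a proof but simply refers the reader to \cite[Theorem VIII.23 p.~289--290]{RS1}, which is precisely the citation you arrive at, and the sketch you outline (Stone--Weierstrass passage from resolvents to $C_0(\R)$, then sandwiching $\mathbf 1_{(a,b)}$ between continuous functions using $a,b\notin\sigma(\bbA)$) is the standard argument behind that theorem. Your remark about extending the limiting resolvent by zero on the orthogonal complement (Remark~\ref{rem.extensionres}) is also exactly how the paper handles the application.
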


Let $\bbE_{{\bbA}_{\aspar,\bK}}(\cdot)$ and $\bbE_{\bbA_{\infty,\bK}}(\cdot)$ denote the spectral measures associated with the self-adjoint operators $\bbA_{\aspar,\bK}$ and $\bbA_{\infty,\bK}$. 
\begin{remark}
From Theorem \ref{thmsaympAtau},  one knows that  $\bbA_{\aspar,\bK}$ tends to $\bbA_{\infty,\bK}$ in the norm resolvent. But $\bbA_{\infty,\bK}$ acts on a strict closed subspace $\widetilde{L}^2_{\bK}$ of $L^2_{\bK}$, the Hilbert space associated to $\bbA_{\infty,\bK}$.  Thus, to give a meaning to  Lemma \ref{lem.convspecproj} in our setting, one needs as in Remark \ref{rem.extensionres} for the resolvent operator $R_{\bbA_{\infty,\bK}}(\zeta)$ to extend the definition of the spectral measure of $\bbE_{\bbA_{\infty,\bK}}(\cdot)$ to the space  $L^2_{\bK}=\widetilde{L}^2_{\bK}\oplus(\widetilde{L}^2_{\bK})^{\perp}$ by setting $\bbE_{\bbA_{\infty,\bK}}(I)u=0$ for any $u\in (\tilde{L}^2_{\bK})^{\perp}$ and any Borel sets $I$ of $\R$, for more details (see \cite{Hem-00,Sim:78}).
\end{remark}

\begin{theorem}[$\condS$  and $v_D(g)\ne0\implies$ existence of Dirac points]\label{th.degeneracy}
Assume that the eigenvalue $\tilde{\delta}_n$ of $\LaplDcellA$  satisfies the spectral isolation condition $\condS$ 
 of Definition \ref{Def.condS}. Let $\{P_{n,\bK}^{A},P_{n,\bK}^{B}\}$ be as in \eqref{eq.defPka}. Then, there exists $\aspar_*>0$ such that for all $\aspar>\aspar_*$ the following conditions hold:
\begin{enumerate}
\item $\lambda_{2n-1}(\aspar;\bK)=\lambda_{2n}(\aspar;\bK)=\lambda_D(\aspar)$ is a multiplicity $2$ eigenvalue of  $\bbA_{\aspar, \bK}$.
\item  $m_{\aspar,\bK,\tau}(\lambda_D(\aspar)=1$ and a normalized eigenfunction of $\operatorname{ker}_{\tau}(\bbA_{\aspar, \bK}-\lambda_D(\aspar) \, \mathrm{I}d)$ is given by
\begin{equation}\label{eq.defphi1}
\Phi_1(\aspar,\cdot):=\frac{\mathbb{E}_{\bbA_{\aspar,\bK}}(\{ \lambda_D(\aspar) \})\, P_{n,\bK}^A}{\|\mathbb{E}_{\bbA_{\aspar,\bK}}(\{ \lambda_D(\aspar) \})\, P_{n,\bK}^A\|_{L^2_{\bK}}},
\end{equation} 
 In addition, one has  $\Phi_1(\aspar,\cdot) \to P_{n,\bK}^A$ as $\aspar\to +\infty$ in the $L^2_{\bK}-$norm.
\item  $m_{\bK_*,\aspar,\overline{\tau}}(\lambda_D(\aspar))=1$ and a normalized eigenfunction in $\operatorname{ker}_{\overline{\tau}}(\bbA_{\aspar, \bK}-\lambda_D(\aspar) \, \mathrm{I}d)$
 is given by $\Phi_{2}(\aspar,\cdot)=\mathcal{P}\mathcal{C} \Phi_{1}(\aspar,\cdot)$. Moreover, this eigenfunction satisfies 
\begin{equation}\label{eq.defphi2}
\Phi_{2}(\aspar,\cdot)=\pm \rme^{-i  \frac{2\pi}{3}} \, \frac{\mathbb{E}_{\bbA_{\aspar,\bK}}(\{\lambda_D(\aspar)\})\, P_{n,\bK}^B}{\|\mathbb{E}_{\bbA_{\aspar,\bK}}(\{ \lambda_D(\aspar) \})\, P_{n,\bK}^B\|_{L^2_{\bK}}}.
\end{equation}
Equation \eqref{eq.defphi2} holds with the $+$ sign (resp. $-$) if  the eigenfunction $p_n$ is even (resp. odd) in Definition \eqref{eq.defPka}. Furthermore, $\Phi_2(\aspar,\cdot) \to \pm \rme^{-i  \frac{2\pi}{3}} \,  P_{n,\bK}^B$ as $\aspar\to +\infty$  in the $L^2_{\bK}-$norm.

\item $m_{\bK_*,\aspar,1}(\lambda_D(\aspar))=0$. Thus, $\bbA_{\aspar,\bK}\, u =\lambda_D(\aspar) \, u$  admits only  zero as solution in $L_{\bK,1}^2$. 

\item $\{\Phi_{1}(\aspar,\cdot),\Phi_{2}(\aspar,\cdot)\}$ is  an orthonormal basis of $\operatorname{ker}(\bbA_{\aspar,\bK}-\lambda_D(\aspar)\, \mathrm{I}d)$ in $L^2_{\bK}$. 
\end{enumerate}
Finally, if the Dirac velocity $v_D(\aspar)$  satisfies the non degeneracy condition:
\begin{equation}\label{eq.fermvelocnond}
v_D(\aspar)\equiv \Big| \int_{D} \sigma_{\aspar}\Phi_1(\aspar,\bx) \, \overline{\nabla \Phi_2(\aspar,\bx) }\, \rmd \bx \cdot \overline{(1,\rmi)}^{\top}\Big| \neq 0,
\end{equation}
then $(\bK,\lambda_D(\aspar))$ is a Dirac point   in the sense of Definition \ref{def.Diracpoints}. It follows 
 (see Remark \ref{rem.Diracpointsequivalence}) that  $(\bK_*,\lambda_D(\aspar))$ is a Dirac point for any vertex $\bK_*$ of the Brillouin zone $\mathcal{B}$.
\end{theorem}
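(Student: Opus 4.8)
The plan is to combine the norm-resolvent convergence $\bbA_{\aspar,\bK}\to\bbA_{\infty,\bK}$ (Theorem \ref{thmsaympAtau}) with the spectral-projector convergence of Lemma \ref{lem.convspecproj}, exploiting the spectral isolation that $\condS$ provides, and then invoke the sufficient conditions of Theorem \ref{Thm.weakLyapounovSchmidtred}. First I would record the consequence of $\condS$ and Proposition \ref{prop.limiteigenval}: the limiting eigenvalue $\delta_{2n-1}=\delta_{2n}=\tilde\delta_n$ is a multiplicity-2 eigenvalue of $\bbA_{\infty,\bK}$ that is strictly separated from the rest of $\sigma(\bbA_{\infty,\bK})$, say $\delta_{2n-2}<\tilde\delta_n<\delta_{2n+1}$ (with the lower inequality vacuous when $n=1$). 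Fix $a,b\notin\sigma(\bbA_{\infty,\bK})$ with $(a,b)\cap\sigma(\bbA_{\infty,\bK})=\{\tilde\delta_n\}$. By Lemma \ref{lem.convspecproj}, $\bbE_{\bbA_{\aspar,\bK}}((a,b))\to\bbE_{\bbA_{\infty,\bK}}((a,b))$ in $B(L^2_{\bK})$ (with the extension-by-zero convention of the Remark following that lemma). Since norm-close projections have equal rank, for $\aspar$ large enough $\bbA_{\aspar,\bK}$ has exactly two eigenvalues (with multiplicity) in $(a,b)$; by the ordering and by Theorem \ref{thmsaympAtau} these are $\lambda_{2n-1}(\aspar;\bK)$ and $\lambda_{2n}(\aspar;\bK)$, both converging to $\tilde\delta_n$. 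To see that these two eigenvalues actually coincide (so that $\lambda_D(\aspar):=\lambda_{2n-1}(\aspar;\bK)=\lambda_{2n}(\aspar;\bK)$ has geometric multiplicity exactly $2$), I would use the symmetry decomposition of Corollary \ref{cor.commut}: the rank-2 spectral subspace $\mathrm{Ran}\,\bbE_{\bbA_{\aspar,\bK}}((a,b))$ is invariant under $\mathcal{R}$ and under $\mathcal{P}\mathcal{C}$, and I show below that for large $\aspar$ it contains a nonzero vector in $L^2_{\bK,\tau}$; applying $\mathcal{P}\mathcal{C}$ then produces a nonzero vector in $L^2_{\bK,\overline\tau}$, and these are orthogonal, hence they span the whole 2-dimensional subspace, which forces it to be a single eigenspace. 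This gives item (1), and simultaneously $m_{\aspar,\bK,\tau}(\lambda_D(\aspar))=m_{\aspar,\bK,\overline\tau}(\lambda_D(\aspar))=1$ and $m_{\aspar,\bK,1}(\lambda_D(\aspar))=0$ by the dimension count in \eqref{eq.propcommutdim}, giving items (2)--(4).

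The technical heart is to produce the explicit eigenfunctions and their limits. By Lemma \ref{lem.pkapkb}, $\{P_{n,\bK}^A,P_{n,\bK}^B\}$ is an orthonormal basis of $\ker(\bbA_{\infty,\bK}-\tilde\delta_n I)=\mathrm{Ran}\,\bbE_{\bbA_{\infty,\bK}}((a,b))$, with $P_{n,\bK}^A\in L^2_{\bK,\tau}$. The plan is to set $Q_\aspar^A:=\bbE_{\bbA_{\aspar,\bK}}(\{\lambda_D(\aspar)\})\,P_{n,\bK}^A=\bbE_{\bbA_{\aspar,\bK}}((a,b))\,P_{n,\bK}^A$ (the two agree once $(a,b)$ isolates $\lambda_D(\aspar)$, which holds for large $\aspar$). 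Since $\bbE_{\bbA_{\aspar,\bK}}((a,b))\to\bbE_{\bbA_{\infty,\bK}}((a,b))$ in operator norm, $Q_\aspar^A\to\bbE_{\bbA_{\infty,\bK}}((a,b))P_{n,\bK}^A=P_{n,\bK}^A$ in $L^2_{\bK}$; in particular $\|Q_\aspar^A\|\to 1\neq0$, so for $\aspar$ large $Q_\aspar^A\neq0$ and $\Phi_1(\aspar,\cdot):=Q_\aspar^A/\|Q_\aspar^A\|$ is a well-defined normalized eigenfunction of $\bbA_{\aspar,\bK}$ for $\lambda_D(\aspar)$, with $\Phi_1(\aspar,\cdot)\to P_{n,\bK}^A$. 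Because $\mathcal{R}$ commutes with $\bbA_{\aspar,\bK}$ (Proposition \ref{prop.commutAk}) it commutes with $\bbE_{\bbA_{\aspar,\bK}}((a,b))$, so $\mathcal{R}Q_\aspar^A=\bbE_{\bbA_{\aspar,\bK}}((a,b))\,\mathcal{R}P_{n,\bK}^A=\tau\,Q_\aspar^A$, i.e. $\Phi_1(\aspar,\cdot)\in L^2_{\bK,\tau}$ and $m_{\aspar,\bK,\tau}(\lambda_D(\aspar))\ge1$; by the rank-2 bound this is an equality. Then $\Phi_2(\aspar,\cdot):=\mathcal{P}\mathcal{C}\Phi_1(\aspar,\cdot)\in L^2_{\bK,\overline\tau}$ is a normalized eigenfunction by \eqref{eq.propcommutS} and Proposition \ref{prop.commutAk}; applying $\mathcal{P}\mathcal{C}$ (which commutes with $\bbE_{\bbA_{\aspar,\bK}}((a,b))$) to the definition of $\Phi_1$ and using $\mathcal{P}\mathcal{C}P_{n,\bK}^A=\pm\rme^{-\rmi 2\pi/3}P_{n,\bK}^B$ from \eqref{eq.PkAPKB}, together with $\|\mathcal{P}\mathcal{C}Q_\aspar^A\|=\|Q_\aspar^A\|=\|\bbE_{\bbA_{\aspar,\bK}}((a,b))P_{n,\bK}^B\|$ (the last by antiunitarity of $\mathcal{P}\mathcal{C}$), yields \eqref{eq.defphi2} and the convergence $\Phi_2(\aspar,\cdot)\to\pm\rme^{-\rmi 2\pi/3}P_{n,\bK}^B$. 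Since $\Phi_1\in L^2_{\bK,\tau}$ and $\Phi_2\in L^2_{\bK,\overline\tau}$ they are orthogonal, giving item (5).

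Finally, with items (1)--(5) in hand, the conditions of Theorem \ref{Thm.weakLyapounovSchmidtred} are met at $\bK$: hypothesis 1 is item (2) with this $\Phi_1$ and $\Phi_2=\mathcal{P}\mathcal{C}\Phi_1$, hypothesis 2 is item (4), and hypothesis 3 is precisely the assumed non-degeneracy \eqref{eq.fermvelocnond}. Therefore $(\bK,\lambda_D(\aspar))$ is a Dirac point in the sense of Definition \ref{def.Diracpoints}, and by Remark \ref{rem.Diracpointsequivalence} the same holds at every vertex $\bK_*$ of $\mathcal{B}$. I expect the main obstacle to be the bookkeeping around the limiting operator $\bbA_{\infty,\bK}$ acting on the proper closed subspace $\widetilde L^2_{\bK}\subsetneq L^2_{\bK}$: one must consistently use the extension-by-zero conventions (Remark \ref{rem.extensionres} and the Remark before this theorem) so that Lemma \ref{lem.convspecproj} genuinely applies in $B(L^2_{\bK})$, and one must check that the isolating interval $(a,b)$ can be chosen with endpoints avoiding $\sigma(\bbA_{\infty,\bK})$ \emph{and} so that, for $\aspar$ beyond a threshold, it isolates exactly $\{\lambda_D(\aspar)\}$ from the rest of $\sigma(\bbA_{\aspar,\bK})$ — this uses the uniform (in a neighborhood of $\tilde\delta_n$) separation coming from $\condS$ via Proposition \ref{prop.limiteigenval} and the Neumann/Dirichlet bracketing of Lemma \ref{lem.inequaleigenval}. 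Everything else is a direct consequence of the commutation relations and the norm convergence of spectral projectors.
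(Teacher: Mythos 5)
Your proposal follows essentially the same route as the paper's proof: both rely on the norm-resolvent convergence of $\bbA_{\aspar,\bK}$ to $\bbA_{\infty,\bK}$ (Theorem \ref{thmsaympAtau}), the spectral-projector convergence of Lemma \ref{lem.convspecproj} with the extension-by-zero convention, the isolation of the two asymptotic Dirichlet eigenvalues provided by $\condS$ via Proposition \ref{prop.limiteigenval}, the commutation of $\bbA_{\aspar,\bK}$ with $\mathcal{R}$ and $\mathcal{P}\mathcal{C}$, and finally Theorem \ref{Thm.weakLyapounovSchmidtred}. The construction of $\Phi_1, \Phi_2$ and the derivation of \eqref{eq.defphi1}--\eqref{eq.defphi2} match the paper's closely.

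One small structural difference: you argue the coincidence $\lambda_{2n-1}(\aspar;\bK)=\lambda_{2n}(\aspar;\bK)$ by working directly with the rank-$2$ spectral subspace $V=\mathrm{Ran}\,\bbE_{\bbA_{\aspar,\bK}}((a,b))$, whereas the paper runs a contradiction argument along a sequence $\aspar_m\to\infty$: it shows the projection of $P_{n,\bK}^A$ onto the (putatively simple) eigenspace for $\lambda_{2n-1}(\aspar_m)$ is nonzero for a subsequence, then uses $\mathcal{R}$ and $\mathcal{P}\mathcal{C}$ to build two orthogonal eigenfunctions for that simple eigenvalue. Your framing is arguably cleaner and avoids the pigeon-hole step. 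However, the claim that $v_1\in L^2_{\bK,\tau}$ and $v_2=\mathcal{P}\mathcal{C}v_1\in L^2_{\bK,\overline\tau}$ spanning $V$ ``forces it to be a single eigenspace'' needs a short justification you don't supply: since $\mathcal{R}\vert_V$ has simple eigenvalues $\tau,\overline\tau$ with eigenvectors $v_1,v_2$, and $\bbA_{\aspar,\bK}\vert_V$ commutes with $\mathcal{R}\vert_V$, the operator $\bbA_{\aspar,\bK}\vert_V$ is diagonal in the basis $\{v_1,v_2\}$; then since $\mathcal{P}\mathcal{C}$ commutes with $\bbA_{\aspar,\bK}$, preserves real eigenvalues, and sends $v_1$ to $v_2$, the two diagonal entries must coincide. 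This is the same mechanism as the paper's contradiction, just packaged once for all $\aspar$ past the threshold rather than along a sequence. With that one-sentence filler, your plan reproduces all five items of the theorem and the final invocation of Theorem \ref{Thm.weakLyapounovSchmidtred} correctly.
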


Since the first eigenvalue $\tilde{\delta}_1$ of $\LaplDcellA$  (with corresponding positive eigenfunction) satisfies the spectral isolation condition $\condS$ of  Definition \ref{Def.condS}, we have:
\begin{corollary}\label{cor.Diracpointfirstdispcurv}
Let  $\aspar\ge\aspar_\star$ be sufficiently large.  Assume the non-degeneracy condition \eqref{eq.fermvelocnond} on $v_D(\aspar)$.
Then, there exists are Dirac points between the two first dispersion surfaces of $\mathbb{A}_\aspar$, at each vertex of the Brillouin zone $\mathcal{B}$.
\end{corollary}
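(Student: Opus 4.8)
The plan is to prove Theorem \ref{th.degeneracy} by combining the norm-resolvent convergence $\bbA_{\aspar,\bK}\to\bbA_{\infty,\bK}$ (Theorem \ref{thmsaympAtau}) with the structure of the limiting eigenspace worked out in Lemma \ref{lem.pkapkb}, and then invoking Theorem \ref{Thm.weakLyapounovSchmidtred}. The key point is that the spectral isolation condition $\condS$ guarantees, via Proposition \ref{prop.limiteigenval}, that $\tilde\delta_n=\delta_{2n-1}=\delta_{2n}$ is a doubly degenerate eigenvalue of $\bbA_{\infty,\bK}$ that is \emph{strictly separated} from the rest of $\sigma(\bbA_{\infty,\bK})$: $\delta_{2n-2}<\delta_{2n-1}=\delta_{2n}<\delta_{2n+1}$. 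So I can pick real numbers $a<\tilde\delta_n<b$ with $a,b\notin\sigma(\bbA_{\infty,\bK})$ and $(a,b)\cap\sigma(\bbA_{\infty,\bK})=\{\tilde\delta_n\}$.

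First I would apply Lemma \ref{lem.convspecproj} to conclude that $\|\bbE_{\bbA_{\aspar,\bK}}((a,b))-\bbE_{\bbA_{\infty,\bK}}((a,b))\|_{B(L^2_{\bK})}\to0$ as $\aspar\to\infty$ (using the extension-by-zero convention for $\bbE_{\bbA_{\infty,\bK}}$ noted in the Remark). Since $\bbE_{\bbA_{\infty,\bK}}((a,b))$ is the rank-two orthogonal projector onto $\operatorname{span}\{P^A_{n,\bK},P^B_{n,\bK}\}$, for $\aspar$ large enough the projector $\bbE_{\bbA_{\aspar,\bK}}((a,b))$ also has rank two; hence $\bbA_{\aspar,\bK}$ has exactly two eigenvalues (with multiplicity) in $(a,b)$, and by the uniform convergence of dispersion surfaces (Theorem \ref{thm.convunifbands}) these are precisely $\lambda_{2n-1}(\aspar;\bK)$ and $\lambda_{2n}(\aspar;\bK)$, both tending to $\tilde\delta_n$. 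Next, the symmetry decomposition: since $P^A_{n,\bK}\in L^2_{\bK,\tau}$ and $P^B_{n,\bK}\in L^2_{\bK,\overline\tau}$ (Lemma \ref{lem.pkapkb}) and $\mathcal{R}$ commutes with $\bbA_{\aspar,\bK}$ (Proposition \ref{prop.commutAk}), the rank-two range of $\bbE_{\bbA_{\aspar,\bK}}((a,b))$ splits as a sum of its intersections with $L^2_{\bK,1},L^2_{\bK,\tau},L^2_{\bK,\overline\tau}$; I would show the projected vector $\bbE_{\bbA_{\aspar,\bK}}(\{\lambda_D\})P^A_{n,\bK}$ is nonzero for large $\aspar$ (because it converges to $P^A_{n,\bK}\neq0$) and lies in $L^2_{\bK,\tau}$, and similarly $\bbE_{\bbA_{\aspar,\bK}}(\{\lambda_D\})P^B_{n,\bK}$ is a nonzero element of $L^2_{\bK,\overline\tau}$. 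A dimension count using \eqref{eq.propcommutdim}, $2 = \dim\operatorname{ker}(\bbA_{\aspar,\bK}-\lambda_D) = 2m_{\aspar,\bK,\tau}(\lambda_D)+m_{\aspar,\bK,1}(\lambda_D)$, forces $m_{\aspar,\bK,\tau}=m_{\aspar,\bK,\overline\tau}=1$ and $m_{\aspar,\bK,1}=0$; this gives parts (1)--(5), with $\Phi_1,\Phi_2$ as in \eqref{eq.defphi1}--\eqref{eq.defphi2} and the sign determined by the parity of $p_n$ through the $\mathcal{PC}$ relation in \eqref{eq.PkAPKB}. The $L^2_{\bK}$-convergence $\Phi_1(\aspar,\cdot)\to P^A_{n,\bK}$ follows from $\bbE_{\bbA_{\aspar,\bK}}(\{\lambda_D\})P^A_{n,\bK}=\bbE_{\bbA_{\aspar,\bK}}((a,b))P^A_{n,\bK}\to\bbE_{\bbA_{\infty,\bK}}((a,b))P^A_{n,\bK}=P^A_{n,\bK}$ and normalization.

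Once (1)--(5) are in hand, the hypotheses (1) and (2) of Theorem \ref{Thm.weakLyapounovSchmidtred} hold verbatim at $\bK_*=\bK$, so hypothesis (3) — the non-vanishing of the Dirac velocity $v_D(\aspar)$ defined by \eqref{eq.fermvelocnond} — is exactly the stated additional assumption. Applying Theorem \ref{Thm.weakLyapounovSchmidtred} yields that $(\bK,\lambda_D(\aspar))$ is a Dirac point, and Remark \ref{rem.Diracpointsequivalence} propagates this to all six vertices $\bK_*$ of $\mathcal{B}$. Corollary \ref{cor.Diracpointfirstdispcurv} is then immediate: the first Dirichlet eigenvalue $\tilde\delta_1$ of $\LaplDcellA$ is simple with an a.e.-positive eigenfunction (Krein--Rutman, \cite[Theorems 1.2.5 and 1.3.2]{Henrot:2006}), whose integral over $\cell^A$ is therefore nonzero, so $\tilde\delta_1$ satisfies $\condS$; taking $n=1$ in Theorem \ref{th.degeneracy} gives Dirac points between the first two dispersion surfaces at every vertex of $\mathcal{B}$, under the non-degeneracy hypothesis on $v_D(\aspar)$.

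The main obstacle I anticipate is the careful bookkeeping around the fact that $\bbA_{\infty,\bK}$ lives on the proper closed subspace $\widetilde L^2_{\bK}\subsetneq L^2_{\bK}$: one must consistently use the extension-by-zero of both the resolvent and the spectral measure (as in Remark \ref{rem.extensionres}) so that Lemma \ref{lem.convspecproj} genuinely applies, and one must verify that the limiting spectral projector onto $(a,b)$ really is the rank-two projector onto $\operatorname{span}\{P^A_{n,\bK},P^B_{n,\bK}\}$ and picks up no contribution from $(\widetilde L^2_{\bK})^\perp$ — which is true because $0\notin(a,b)$ and the extension sets the measure to zero there. A secondary technical point is confirming that the projections of $P^A_{n,\bK}$ and $P^B_{n,\bK}$ onto $\operatorname{ker}(\bbA_{\aspar,\bK}-\lambda_D)$ stay nonzero and stay in the correct $\mathcal{R}$-eigenspaces for all large $\aspar$; this is where the norm convergence of projectors, together with $P^A_{n,\bK}\in L^2_{\bK,\tau}$ being $\mathcal{R}$-invariant and $\mathcal{R}$ commuting with $\bbE_{\bbA_{\aspar,\bK}}$, does the work.
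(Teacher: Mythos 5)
Your proposal is correct and follows the paper's route: the Corollary itself is immediate from Theorem \ref{th.degeneracy} with $n=1$, since $\tilde{\delta}_1$ is a simple eigenvalue with an a.e.-positive eigenfunction and so satisfies $\condS$, and your re-derivation of Theorem \ref{th.degeneracy} rests on the same ingredients the paper uses (Theorem \ref{thmsaympAtau}, Lemma \ref{lem.convspecproj}, Lemma \ref{lem.pkapkb}, the multiplicity formula \eqref{eq.propcommutdim}, and Theorem \ref{Thm.weakLyapounovSchmidtred} via Remark \ref{rem.Diracpointsequivalence}). The only cosmetic difference is that you establish the two-fold degeneracy directly from the rank of $\bbE_{\bbA_{\aspar,\bK}}((a,b))$ together with the nonvanishing of its action on $P^A_{n,\bK}\in L^2_{\bK,\tau}$ and $P^B_{n,\bK}\in L^2_{\bK,\overline{\tau}}$, whereas the paper argues by contradiction; to make your version airtight you should phrase the intermediate steps in terms of $\bbE_{\bbA_{\aspar,\bK}}((a,b))$ and introduce $\lambda_D(\aspar)$ only after the degeneracy is established, rather than writing $\bbE_{\bbA_{\aspar,\bK}}(\{\lambda_D\})$ and the identity $2=\dim\operatorname{ker}(\bbA_{\aspar,\bK}-\lambda_D)$ before that fact is available.
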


\begin{proof} We prove now  Theorem \ref{th.degeneracy}. As we work at a fixed quasimomentum $\bK$, we omit in the proof the $\bK$  dependence of the eigenvalues and write for e.g. $\lambda_{2n-1}(\aspar)$ for  $\lambda_{2n-1}(\aspar;\bK)$.\\ 

{\bf Step 1}: {\it Localization of the eigenvalue  for high contrast.} $\tilde{\delta}_n$  satisfies the spectral isolation condition $\condS$ of  Definition \ref{Def.condS}. Therefore,  by  Theorem \ref{thm.convunifbands}, one has on one hand that  $\lambda_{2n-1}(\aspar)$ and $\lambda_{2n}(\aspar)$ tend to $\delta_{2n}$ as $\aspar \to +\infty$ and on the other  that there exists  a gap between the $2n^{th}$ and the $(2n+1)^{st}$ bands.  More precisely, by virtue of Theorem  \ref{thm.convunifbands}, for a fixed $\eta$ satisfying $0<\eta<\min(\delta_{2n},\nu_{2n+1}-\delta_{2n})$, there exists $\aspar_*>0$  such that for  $\aspar>\aspar_*$,
$\lambda_{2n-1}(\aspar), \lambda_{2n}(\aspar) \in (\delta_{2n}-\eta, \delta_{2n}+\eta)$ and $\lambda_{2n+1}(\aspar)\notin [\delta_{2n}-\eta, \delta_{2n}+\eta]$.
If $n>1$, we also need  that $\lambda_{2n-2}(\aspar)\notin [\delta_{2n}-\eta, \delta_{2n}+\eta]$. Indeed,  one knows from  Theorem \ref{thmsaympAtau} that $\lambda_{2n-2}(\aspar)\to \delta_{2n-2}$ as $\aspar\to +\infty $ and $\aspar \mapsto \lambda_{2n-2}(\aspar)$ is strictly increasing by Lemma \ref{lem.monoticitydispcurv}. Hence, $\lambda_{2n-2}(\aspar)<\delta_{2n-2}$. Furthermore,  by  the condition $\condS$, $\tilde{\delta}_n$ is simple and thus $\delta_{2n-2}=\tilde{\delta}_{n-1}< \delta_{2n}=\tilde{\delta}_{n}$. Thus, if $n>1$ for any fixed $\eta$ satisfying $0<\eta<\min(\delta_{2n}-\delta_{2n-2},\nu_{2n+1}-\delta_{2n})$, there exists $\aspar_*>0$ such that $\aspar>\aspar_*$ implies $\lambda_{2n-2}(\aspar),\lambda_{2n+1}(\aspar) \notin [\delta_{2n}-\eta, \delta_{2n}+\eta]$ and $\lambda_{2n-1}(\aspar), \lambda_{2n}(\aspar) \in (\delta_{2n}-\eta, \delta_{2n}+\eta)$. Hence, for $n\geq1$, the eigenvalue $\lambda_{2n-1}(\aspar)$ is  of multiplicity  at most $2$ for $\aspar>\aspar_*$. \\
 
{\bf Step 2}:  {\it Proof of point 1. of Theorem \ref{th.degeneracy}.}\  We prove this by  contradiction. Suppose $\lambda_{2n-1}(\aspar)=\lambda_{2n-2}( \aspar)$ does not hold for $\aspar$ large enough. Then, there is a sequence $(\aspar_m)$ with $\aspar_m\to+ \infty$ as $m\to \infty$ such that $\lambda_{2n-1}(\aspar_m)\neq\lambda_{2n}(\aspar_m)$. Step 1 enables us to localize and isolate these eigenvalues to an interval about $\delta_{2n}$; for  any fixed $\eta$, satisfying $0<\eta<\min(\delta_{2},\nu_{3}-\delta_{2})$ if $n=1$ or $0<\eta<\min(\delta_{2n}-\delta_{2n-2},\nu_{2n+1}-\delta_{2n})$ if $n>1$, there exists  $\aspar_*>0$ such that  for $\aspar_m>\aspar_*$:
\begin{equation}\label{eq.speclocl}
\sigma(\bbA_{\aspar_m,\bK})\cap (\delta_{2n}-\eta, \delta_{2n}+\eta)=\{\lambda_{2n-1}(\aspar_m), \lambda_{2n}(\aspar_m)\}.
\end{equation}
This yields the following relation on the spectral projectors of $\bbA_{\aspar_m,\bK}$:  
\begin{equation}\label{eq.specprojequality}
\bbE_{\bbA_{\aspar_m,\bK}} \big((\delta_{2n}-\eta, \delta_{2n}+\eta)\big)=\bbE_{\bbA_{\aspar_m,\bK}}(\{\lambda_{2n-1}(\aspar_m), \lambda_{2n}(\aspar_m)\}).
\end{equation}
We define 
\begin{equation}\label{eq.nonzeroeigenfunc}
 u_{\aspar_m}= \bbE_{\bbA_{\aspar_m,\bK}}(\{\lambda_{2n-1}(\aspar_m), \lambda_{2n}(\aspar_m)\})P_{n,\bK}^A,
 \end{equation}
 and we prove that $u_{\aspar_m}\neq 0$ for all $m$ large enough.
 The resolvent norm convergence of $\bbA_{\aspar_m,\bK}$ to $\bbA_{\infty,\bK}$ (see Lemma \ref{lem.convspecproj}) implies the strong norm convergence of  the spectral projectors associated to an open interval whose endpoints are not in $\sigma(\bbA_{\infty,\bK})$. 
Thus,  using   \eqref{eq.specprojequality} and the fact  $\delta_{2n}\pm \eta \notin \sigma(\bbA_{\infty,\bK})$, one has the following limit in the $L^2_{\bK}$ norm:
$$
u_{\aspar_m}= \bbE_{\bbA_{\aspar_m,\bK}} \big((\delta_{2n}-\eta, \delta_{2n}+\eta)\big) P_{n,\bK}^{A} \longrightarrow \bbE_{\bbA_{\infty,\bK}} \big((\delta_{2n}-\eta, \delta_{2n}+\eta)\big) P_{n,\bK}^{A}  
$$
as $m\to \infty$ with 
\begin{equation}\label{eq.nozeroeigenfunc}
\bbE_{\bbA_{\infty,\bK}} \big((\delta_{2n}-\eta, \delta_{2n}+\eta)\big) P_{n,\bK}^{A} =\bbE_{\bbA_{\infty,\bK}}(\{\delta_{2n}\}) P_{n,\bK}^{A} = P_{n,\bK}^{A}  \neq 0.
\end{equation}
The  first equality in \eqref{eq.nozeroeigenfunc} and the property $\delta_{2n}\pm \eta \notin \sigma(\bbA_{\infty,\bK})$ rely on the following point. One has $\delta_{2n}=\delta_{2n-1}$, $\delta_{2n+1}\notin (\delta_{2n}-\eta,\delta_{2n}+\eta)$ (since $\nu_{2n+1}\leq \delta_{2n+1}$ by Proposition \ref{Prop.nundeltan}) and $\delta_{2n-2}\notin (\delta_{2n}-\eta,\delta_{2n}+\eta)$ for $n>1$. Hence,  $\bbE_{\bbA_{\infty,\bK}} \big((\delta_{2n}-\eta, \delta_{2n}+\eta)\big) =\bbE_{\bbA_{\infty,\bK}}(\{\delta_{2n}\}) $. Further, since $P_{n,\bK}^A$ is a normalized eigenfunction of $\bbA_{\infty,\bK}$ associated to $\delta_{2n}$ (Lemma \ref{lem.pkapkb}), the second  equality of \eqref{eq.nozeroeigenfunc} holds.

Consider now the  orthogonal decomposition of the non-zero vector $u_{\aspar_m}$ defined by \eqref{eq.nonzeroeigenfunc}:
\begin{equation}\label{eq.orthogdecompeigenfunc}
u_{\aspar_m}=\bbE_{\bbA_{\aspar_m,\bK}}(\{\lambda_{2n-1}(\aspar_m)\}) P_{n,\bK}^A+ \bbE_{\bbA_{\aspar_m,\bK}}(\{\lambda_{2n}(\aspar_m)\})P_{n,\bK}^A .
 \end{equation}
By \eqref{eq.nozeroeigenfunc} at least one  term of the right hand side of \eqref{eq.orthogdecompeigenfunc} does not vanish for $m$ large enough. Up to a subsequence extraction  on the sequence $(\aspar_m)$, one can  assume
without a loss of generality that $\bbE_{\bbA_{\aspar_m,\bK}}(\{\lambda_{2n-1}(\aspar_m)\}) P_{n,\bK}^A\neq 0$ for $m$ large enough. Thus, we can define the normalized vector $\Phi_{1}(\aspar_m,\cdot)$ as in \eqref{eq.defphi1} (by replacing  $\lambda_D(\aspar)$ by $\lambda_{2n-1}(\aspar_m)$ in  \eqref{eq.defphi1}).
Recall by Lemma \ref{lem.pkapkb},  that $P_{n,\bK}^A\in  L_{\bK,\tau}$. Since $\mathcal{R}$ commutes with  $\bbA_{\aspar_m,\bK}$ (see Proposition \ref{prop.commutAk}), $\mathcal{R}$ commutes with its associated spectral measure (see Section \ref{sec-not}). Therefore,  $\bbE_{\bbA_{\aspar_m,\bK}}(\{\lambda_{2n-1}(\aspar_m)\})\ L_{\bK,\tau}^2\subset L_{\bK,\tau}^2 $ and thus $\Phi_{1}(\aspar_m,\cdot)\in \operatorname{ker}_{\tau}(\bbA_{\aspar_m, \bK}-\lambda_{2n-1}(\aspar_m) \, \mathrm{I}d) $ is a  $L_{\bK,\tau}^2$  normalized eigenfunction for the eigenvalue $\lambda_{2n-1}(\aspar_m)$. We now  define $\Phi_{2}(\aspar_m,\cdot)=\mathcal{P}\mathcal{C} \Phi_{1}(\aspar_m,\cdot)$ (which is normalized since $\mathcal{P}\mathcal{C}$ is anti-unitary). By Corollary \ref{cor.commut}, $\Phi_{2}(\aspar_m,\cdot)\in \operatorname{ker}_{\overline{\tau}}(\bbA_{\aspar_m, \bK}-\lambda_{2n-1}(\aspar_m) \, \mathrm{I}d) $
is a normalized  $L^2_{\bK,\overline{\tau}}$ eigenfunction associated to $\lambda_{2n-1}(\aspar_m)$.
Moreover, as $L^2_{\bK,\tau}$ is orthogonal to $L^2_{\bK,\overline{\tau}}$,  $\{\Phi_{1}(\aspar_m,\cdot),\Phi_{2}(\aspar_m,\cdot)\}$ is an orthonormal set of two eigenfunctions associated to the eigenvalue $\lambda_{2n-1}(\aspar_m)$. But $\lambda_{2n-1}(\aspar_m)$ is a simple eigenvalue. Indeed, by assumption $\lambda_{2n-1}(\aspar_m)\neq \lambda_{2n}(\aspar_m)$,  and  from Step 1, one has  $\lambda_{2n-2}(\aspar_m)<\lambda_{2n-1}(\aspar_m)$ for $m$ large enough if $n>1$.
(Of course the same contradiction is reached if we were to assume that, up to a subsequence extraction,  $\mathbb{E}_{\bbA_{\aspar_m,\bK}}(\{\lambda_{2n}(\aspar_m)\})\, P_{n,\bK}^A\neq 0$ for $m$ large enough.) 
Hence,  for $\aspar$ large enough , one has $\lambda_{2n-1}(\aspar)=\lambda_{2n}(\aspar)=:\lambda_D(\aspar)$. Step 1 implies $\lambda_D(\aspar)$ is of multiplicity at most $2$.  Hence, $\lambda_D(\aspar)$ is of multiplicity $2$. This completes the proof of point $1$. \\

{\bf Step 3}: {\it Proof of points 2-5 in Theorem \ref{th.degeneracy}:} 
The reasoning of Step 2 applied to $\bbE_{\bbA_{\aspar;\bK}}\big(\{\lambda_D(\aspar;\bK)\}\big)$  shows  that the normalized eigenfunction $\Phi_{1}(\aspar,\cdot)$  defined by \eqref{eq.defphi1}  belongs to $\operatorname{ker}_{\tau}(\bbA_{\aspar, \bK}-\lambda_D(\aspar) \, \mathrm{I}d)$ and thus $m_{\bK_*,\aspar,\tau}(\lambda_D(\aspar)\geq 1$ . Moreover the strong convergence of the spectral projector implies:
$$
\Phi_{1}(\aspar,\cdot) \rightarrow P_{n,\bK}^A \mbox{ as } \aspar \to \infty \  \mbox{ since} \ \bbE_{\bbA_{\infty,\bK}}(\{\delta_n\})   P_{n,\bK}^A =P_{n, \bK}^A \mbox{ and } \|P_{n,\bK}^A\|_{L^2_{\bK}}=1.
$$
In the same way, we have that $\Phi_{2}(\aspar,\cdot)=\mathcal{P}\mathcal{C} \Phi_{1}(\aspar,\cdot)\in \operatorname{ker}_{\overline{\tau}}(\bbA_{\aspar, \bK}-\lambda_D(\aspar) \, \mathrm{I}d)$ is a  normalized eigenfunction associated to  $\lambda_{D}(\aspar)$ and thus $m_{\bK_*,\aspar,\overline{\tau}}(\lambda_D(\aspar))\geq 1$. Hence,
$\{\Phi_{1}(\aspar,\cdot),\Phi_{2}(\aspar,\cdot)\}$ is an orthonormal set of eigenfunctions associated to $\lambda_{D}(\aspar)$. Since $\lambda_{D}(\aspar)$ is  of multiplicity $2$, it follows from  the relation \eqref{eq.propcommutdim} that $$m_{\bK_*,\aspar,\tau}(\lambda_D(\aspar))= m_{\bK_*,\aspar,\overline{\tau}}(\lambda_D(\aspar))=1 \ \mbox{ and  } \ m_{\bK_*,\aspar,1}(\lambda_D(\aspar))=0.$$  This proves points  2, 4 and 5. 

To prove relation \eqref{eq.defphi2} of the point 3,   
apply $\mathcal{P}\mathcal{C}$ to \eqref{eq.defphi1} and use  the identity \eqref{eq.PkAPKB} and the facts that $\mathcal{P}\mathcal{C}$ commutes with $\bbE\big(\{\lambda_D(\aspar)\}\big)$ and  preserves the norm.

Finally, if  the non-degeneracy condition \eqref{eq.fermvelocnond} on $v_D(\aspar)$ holds, all assumptions of Theorem \ref{Thm.weakLyapounovSchmidtred} hold and we conclude that $(\bK,\lambda_D(\aspar))$ is a Dirac point in the sense of Definition \ref{def.Diracpoints}.
\end{proof}

\section{High contrast asymptotic analysis of Bloch eigenelements}\label{sec-asympresult}

\subsection{Expansion in powers of $\aspar^{-1}$ and hierarchy of PDEs in $L^2_{\bK,\tau}\cap H^1_{\bK}$}

In this section, we derive asymptotic expansions for a large contrast parameter, $\aspar$,  of Bloch eigenvalues, eigenfunctions at vertices of $\mathcal{B}$, and the Dirac velocity $v_D(\aspar)$. The validity of these asymptotic expansions is proved in Section \ref{sec.quasimode}.  These expansions  hold for eigenvalues whose limiting behavior  is given by a  Dirichlet  eigenvalue $\tilde{\delta}_n$  for $n\geq 1$ that satisfies the spectral isolation condition
 $\condS$ of  Definition \ref{Def.condS} (e.g. for the two first bands when $n=1$, or higher energy bands as it is illustrated in  section \ref{sec.higherbandscirccase}). We focus on the vertex $\bK$ of $\mathcal{B}$, all results 
apply by symmetry to the other vertices;
 see  Remark  \ref{rem.Diracpointsequivalence}.

If $\tilde{\delta}_n$ satisfies the condition $\condS$ of  Definition \ref{Def.condS}, the conclusions of Theorem \ref{th.degeneracy} hold.
In particular, for $\aspar$ large enough, the eigenvalue $\lambda_{2n-1}(\aspar;\bK)=\lambda_{2n}(\aspar;\bK)=\lambda_D(\aspar)$ is of multiplicity $2$  and  there exists an orthonormal basis $\{\Phi_1(\aspar,\cdot),\Phi_2(\aspar,\cdot)\}$   of $\operatorname{ker}(\bbA_{\aspar,\bK}-\lambda_D(\aspar) \mathrm{I}d)$, where $\Phi_1(\aspar,\cdot)\in L^2_{\bK,\tau}$ and $\Phi_2(\aspar,\cdot)=\mathcal{P}\mathcal{C}\Phi_1(\aspar,\cdot) \in L^2_{\bK,\overline{\tau}}$.

We expand the eigenelements  $\lambda_D(\aspar)$ and $\Phi_1(\cdot,\aspar)$ in powers of  $\aspar^{-1}$ for $\aspar$ large. By Theorems \ref{thm.convunifbands} and \ref{th.degeneracy} we have $ \lambda_D(\aspar)\to \tilde{\delta}_n $ and we can choose  a normalized eigenfunction
$
\Phi_1(\aspar,\cdot)  \to \Phi_1^{(0)}=P_{n,\bK}^A \mbox{ in } L^2_{\bK,\tau},
$
as  $\aspar \to +\infty$. Here, $P_{n,\bK}^A$ is defined by \eqref{eq.defphi1}. Thus, we formally expand, for any $M\ge1$:
\begin{align}\label{eq.ansatz1}
\lambda_D(\aspar) &= \sum_{m=0}^{M} \aspar^{-m}  \lambda_D^{(m)} +O(\aspar^{-(M+1)} ) \\
   \Phi_1( \aspar,\cdot) &= \sum_{m=0}^{M} \aspar^{-m}  \Phi_1^{(m)} +O(\aspar^{-(M+1)}) \mbox{ with } \Phi_1^{(m)} \in  L^2_{\bK,\tau} \cap  H^1_{\bK}
\nonumber\end{align}
with  $\lambda_D^{(0)} =\tilde{\delta}_n$ and $\Phi_1^{(0)}= P_{n,\bK}^A $.
We remark that if $H^1_{\bK}$ convergence holds for \eqref{eq.ansatz1}, then 
$$
\Phi_2(\aspar,\cdot)=\mathcal{P}\mathcal{C}\Phi_1(\aspar,\cdot)=\sum_{m=0}^{M} g^{-m} \, \mathcal{P}\mathcal{C}\Phi_1^{(m)} +O\big(g^{-(M+1)} \big) \mbox{ with }  \mathcal{P}\mathcal{C} \Phi_1^{(m)} \in  L^2_{\bK,\overline{\tau}} \cap  H^1_{\bK} .
$$

We next derive recursion relations  for $\Phi_1^{(m)}(\bx)$ and $\lambda_D^{(m)}$,  $m\ge1$.
We substitute the expansion \eqref{eq.ansatz1}  into the eigenvalue problem:
\begin{equation}\label{eq.eigenvalK}
\bbA_{\aspar,\bK}  \Phi_1(\aspar,\cdot )=\lambda_D(\aspar)   \Phi_1(\aspar, \cdot ) \mbox{ with } \Phi_1(\aspar,\cdot )\in \rmD(\bbA_{\aspar,\bK}).
\end{equation}
Thus $\Phi_1(\aspar,\cdot )$ is the solution of the following cell problem on $\cell$:
\begin{equation*}\label{eq.transmission}
\left\{ \begin{array}{ll}
 -  \Delta \Phi_1(\aspar,\cdot ) = \lambda_D(\aspar) \Phi_1(\aspar, \cdot )  & \mbox{ in } \cell^{+}, \\[6pt]
[\Phi_1(\aspar, \cdot)]=0 \mbox{ and } \displaystyle \Big[\sigma_\aspar \frac{\partial \Phi_1(\aspar, \cdot )}{\partial \bn} \Big]=0  & \mbox{ on } \partial{\cell}^{+}, \\[6pt]
- \aspar\, \Delta \Phi_1(\aspar, \cdot) = \lambda_D(\aspar) \Phi_1(\aspar, \cdot)  & \mbox{ in }  \cell^{-},\\[6pt]
\Phi_1(\aspar, \cdot) \ \mbox{ and } \ \displaystyle  \frac{\partial \Phi_1(\aspar, \cdot)}{\partial \bn}&  \hspace{-2cm}  \mbox{ $\bK-$quasi-periodic} \mbox{ on } \partial{\cell}.
\end{array} \right.
\end{equation*}
The jump of  traces  on $ \partial{\cell}^{+}$ is here defined by $[f]=f^- - f^+$ where $f^{\pm}$ is the trace of $f$  on $\partial \cell^{+}$ from the domain $\cell^{\pm}$. $\bn$ is  the outward unit normal vector oriented from $\cell^{+}$ to $\cell^{-}$ so that with the same convention the jump of Neumann  traces on  $\partial{\cell}^{+}$ is defined by $[\partial f/\partial \bn]=[\partial f/\partial \bn]^- - [\partial f/\partial \bn]^+$.
Inserting the expansions \eqref{eq.ansatz1} in \eqref{eq.eigenvalK} gives,  for all $M\ge0$:
{\small{
\begin{equation}\label{eq.transmission1}
\left\{ \begin{array}{ll} \displaystyle
\sum_{m=0}^{M} -   \aspar^{-m}  \Delta  \Phi_1^{(m)}+O(\aspar^{-(M+1)}) = \Big( \sum_{m=0}^{M} \aspar^{-m} \lambda_D^{(m)} +O(g^{-(M+1)})\Big)  \Big(\sum_{m=0}^{M} \aspar^{-m} \Phi_1^{(m)} +O(\aspar^{-(M+1)})\Big)\ \  \mbox{in }  \cell^{+}, \\[15pt]
 \displaystyle \Big[ \sum_{m=0}^{M} \aspar^{-m} \,   \Phi_1^{(m)} +O(\aspar^{-(M+1)}) \Big]^{-}=\Big[  \sum_{m=0}^{M} \aspar^{-m}  \Phi_1^{(m)}  +O(\aspar^{-(M+1)})\Big]^{+} \mbox{on } \partial{\cell}^{+}, \\[15pt] 
 \displaystyle\Big[\sum_{m=0}^{M}  \aspar^{-m+1} \, \frac{\partial  \Phi_1^{(m)}}{\partial \bn}+O(\aspar^{-M}) \Big]^{-}= \Big[\sum_{m=0}^{M}  \aspar^{-m}\frac{\partial  \Phi_1^{(m)}}{\partial \bn}+O(\aspar^{-(M+1)}) \Big]^{+} \mbox{on } \partial{\cell}^{+},\\[15pt]
 \displaystyle \sum_{m=0}^{M} -\aspar^{-m+1} \Delta\, \Phi_1^{(m)} +O(\aspar^{-M} )= \Big( \sum_{m=0}^{M} \aspar^{-m} \, \lambda_D^{(m)} +O(g^{-(M+1)}\Big) \, \Big(\sum_{m=0}^{M} \aspar^{-m} \, \Phi_1^{(m)} +O\big(\aspar^{-(M+1)} \big)\Big)\ \    \mbox{in }  \cell^{-},\\[6pt]
\Phi_1^{(m)} \mbox{ and } \displaystyle  \frac{\partial \Phi_1^{(m)}}{\partial \bn}   \mbox{  $\bK-$quasi-periodic } \mbox{ for }m=0,1,\ldots M \mbox{ on } \partial{\cell}.
\end{array} \right. \ .
\end{equation}
}}To simplify the notation, in the following we write $\partial f^{\pm}/\partial \bn$ for  the Neumann trace of $f$ on $\partial \cell^{+}$  from the domain $\cell^{\pm}$.
\subsection{Determination of  $\Phi_1^{(0)}\in L^2_{\bK,\tau}\cap H^1_{\bK}$ and $\lambda_D^{(0)}\in\R$}

In \eqref{eq.transmission1}, the order $\aspar^0$ term in $\cell^{+}$ and the order $\aspar$ term in $\cell^{-}$ yield the following equations:
\begin{equation}\label{eq.transmissionzeroorder}
\left\{ \begin{array}{ll}
 -  \Delta \Phi_1^{(0)}= \lambda_D^{(0)}   \Phi_1^{(0)}  & \mbox{ in }  \cell^{+}, \\[5pt]
 \Phi_1^{(0),+}= \Phi_1^{(0),-} \mbox{ and } \displaystyle \frac{\partial \Phi_1^{(0),-}}{\partial \bn}=0   & \mbox{ on } \partial{\cell}^+ \\[5pt]
- \Delta  \Phi_1^{(0)} =0  & \mbox{ in }  \cell^{-},\\[5pt]
 \Phi_1^{(0)} \mbox{ and } \displaystyle  \frac{\partial  \Phi_1^{(0)}}{\partial \bn}&  \hspace{-3cm}  \mbox{   $\bK-$quasi-periodic } \mbox{ on } \partial{\cell}.
\end{array} \right.
\end{equation}
We first solve for $ \Phi_1^{(0)}$ on ${\bf \cell^{-}}$. By the Lax-Milgram theorem, $ \Phi_1^{(0)}=0$ is the unique solution of the corresponding PDE problem in $H^{1}_{\bK}({\bf \cell^{-}})$ (see Appendix   \ref{sec-appendixcommutation} for the definition of the Sobolev space $H^{1}_{\bK}({\bf \cell^{-}})$). 
By the boundary 
conditions along $\partial\cell^+$, we have that $\Phi_1^{(0),+}=\Phi_1^{(0),-}=0$. Thus taking $\lambda_D^{(0)}=\tilde{\delta}_n$ and $\Phi_1^{(0)}$ on $\cell^+=\cell^A\cup\cell^B$ to be a Dirichlet eigenfunction on $\cell^A$ vanishing on $\cell^B$ solves the PDE problem in $\cell^+$.
Hence, the  $\bK$-quasi-periodic 
 extension to all $\R^2$, $\Phi_1^{(0)}=P_{n,\bK}^A\in L^2_{\bK,\tau}\cap H^1_{\bK}$ , satisfies all conditions of 
\eqref{eq.transmissionzeroorder} (Lemma \ref{lem.pkapkb}).

Starting with this choice of  $\lambda_D^{(0)}$  and  $\Phi_1^{(0)}$, we next define  $\lambda_D^{(m)}$ and $\Phi_1^{(m)}$ recursively  for all $m\geq 1$. The Sobolev spaces used in the following discussion are defined in  Appendix   \ref{sec-appendixcommutation}.
The symmetry operators $\mathcal{R}_{\bf \cell^{\pm}}$, $\mathcal{P}\mathcal{C}_{\bf \cell^{\pm}}$  and  $\mathcal{R}_{ \partial {\bf  \Omega^{+}}}$ and $\mathcal{P}\mathcal{C}_{ {\partial \bf  \Omega^{+}}} $ which are the operators $\mathcal{R}$ and $\mathcal{P}\mathcal{C}$ but defined on  the sets ${\bf \cell^{\pm}}$ and ${\partial \bf  \Omega^{+}} $ are also defined in this section.

\subsection{Determination of $\Phi_1^{(1)}\in L^2_{\bK,\tau}\cap H^1_{\bK}$ and $\lambda_D^{(1)}\in\R$}
In \eqref{eq.transmission1} the order $\aspar^{-1}$  term in $\cell^{+}$  and the order $\aspar^{0}$ term in $\cell^{-}$ yield:
\begin{equation}\label{eq.transmissionfirstorder}
\left\{ \begin{array}{ll}
 -  \Delta \Phi_1^{(1)}= \lambda_D^{(0)} \Phi_1^{(1)}+\lambda_D^{(1)} \Phi_1^{(0)}  & \mbox{ in }  \cell^{+}, \\[5pt]
\Phi_1^{(1),-}=\Phi_1^{(1),+} \mbox{ and }  \displaystyle  \frac{\partial \Phi_1^{(1),-}}{\partial \bn}=\frac{\partial P_{n,\bK}^{A,+} }{\partial \bn}  & \mbox{ on } \partial{\cell}^{+}, \\[10pt]
- \Delta \Phi_1^{(1)} =\lambda_D^{(0)}  \Phi_1^{(0)}  =0  & \mbox{ in }  \cell^{-},\\[10pt]
\Phi_1^{(1)} \mbox{ and } \displaystyle  \frac{\partial \Phi_1^{(1)}}{\partial \bn}&  \hspace{-3.7cm}  \mbox{  $\bK-$quasi-periodic } \mbox{ on } \partial{\cell}.
\end{array} \right.
\end{equation}
The latter two equations of \eqref{eq.transmissionfirstorder} imply that   $\Phi_1^{(1)}$  on the domain ${\bf \cell^{-}}$
 is determined by the boundary value problem: 
\begin{equation}\label{eq.phi1D-}
- \Delta \Phi_1^{(1)}=0  \mbox{ on } \cell^- \mbox{ and }  \displaystyle \frac{\partial \Phi_1^{(1),-}}{\partial \bn}= \frac{\partial P_{n,\bK}^{A,+}}{\partial \bn} \mbox{ on } \partial \cell^+, 
\end{equation}
with  $\bK-$quasi-periodic boundary conditions for $\Phi_1^{(1)} \mbox{ and }  \partial \Phi_1^{(1)} / \partial \bn \mbox{ on } \partial{\cell}$. By the Lax-Milgram theorem, 
this problem admits a unique solution in $H^{1}_{\bK}({\bf \cell^{-}})$ that belongs to $H^{1}_{\bK,\Delta}({\bf \cell^{-}})$.

To eventually establish that $\Phi_1^{(1)}\in L^2_{\bK,\tau}$, we first verify that $\Phi_1^{(1)}$ on $\cell^-$ inherits
that symmetries of $P_{n,\bK}^{A}$. 
Since  $P_{n,\bK}^{A} \in H^1_{\bK,\Delta}( {\bf  \Omega^{+}})$,
 using Lemma \ref{lem.comquasiper2} and the fact $P_{n,\bK}^A\in L^2_{\bK,\tau}$ yields that $\mathcal{R}_{ {\bf  \Omega^{-}}} \Phi_1^{(1)}\in H^1_{\bK}( {\bf  \Omega^{-}})$ satisfies
\begin{equation*}\label{eq.symrelation}
- \Delta \mathcal{R}_{ {\bf  \Omega^{-}}} \Phi_1^{(1)} =0  \mbox{ on } \cell^- \mbox{ and }  \displaystyle \Big[ \frac{\partial \mathcal{R}_{ \bf \Omega^-} \Phi_1^{(1)}}{\partial \bn}\Big]^{-}= \tau  \frac{\partial P_{n, \bK}^{A,+}}{\partial \bn} ,
\end{equation*}
with $\bK-$quasi-periodic boundary conditions   for $  \mathcal{R}_{ {\bf  \Omega^{-}}} \Phi_1^{(1)}  \mbox{ and }  \partial  \mathcal{R}_{ {\bf  \Omega^{-}}} \Phi_1^{(1)}  / \partial \bn \mbox{ on } \partial{\cell}$.
Therefore, $\mathcal{R}_{\bf \Omega_-} \Phi_1^{(1)} $ and $\tau \Phi_1^{(1)}$ satisfies the  same  boundary value problem which admits a unique solution  in $H^1_{\bK}( {\bf  \Omega^{-}})$ and it follows that
\begin{equation}\label{eq.rotationOmegam}
\mathcal{R}_{\bf  \Omega^{-}} \Phi_1^{(1)}=\tau \, \Phi_1^{(1)} \mbox{ on  } {\bf  \Omega^{-}} \ \mbox{ and  } \ [ \mathcal{R}_{\bf  \Omega^{-}} \Phi_1^{(1)}]^{-}=\tau\, \Phi_1^{(1),-} \mbox{ on  } \partial {\bf  \Omega^{-}}.
\end{equation}

We  next construct   $\Phi_1^{(1)}\in H^{1}_{\bK}({\bf  \Omega^{+}})$ and $\lambda_D^{(1)}$ satisfying
\begin{equation}\label{eq.compatcondi1}
 -  \Delta \Phi_1^{(1)}=\lambda_D^{(0)} \ \Phi_1^{(1)}+\lambda_D^{(1)}  \Phi_1^{(0)}  \ \mbox{ in } \cell^+ \ \mbox{ and } \Phi_1^{(1),+}=\Phi_1^{(1),-} \mbox{ on } \partial \cell^+,
\end{equation}
and compatible with the goal of obtaining $\Phi_1^{(1)}\in L^2_{\bK,\tau}\cap H^{1}_{\bK}$.

We  claim that   $\lambda_D^{(1)}$ can be chosen so that $\Phi_1^{(1)}$  is  unique  in the space 
\begin{equation}\label{eq.defW1}
\mathcal{W}_{1}=\Big\{ u\in H^1_{\bK}( {\bf  \Omega^{+}})  \, \mid \left(u,P_{n,\bK}^A\right)_{L^2_{\bK}({\bf  \Omega^{+}})} =0 \mbox{ and } \left(u,P_{n,\bK}^B\right)_{L^2_{\bK}{(\bf  \Omega^{+}})}=0 \Big\}.
\end{equation}

Seek $\Phi_1^{(1)}\in \mathcal{W}_{1} $, such that $\Phi^{(1)}_1=u_1+v_1$ where $u_1\in \mathcal{W}_{1} $ is an extension, to the region $\cell^+$ of  the boundary values: $\Phi_1^{(1),-}\in H^{\frac{1}{2}}_{\bK}( {\bf \partial \cell^{+}})$. 
Let us construct this extension. By standard elliptic theory, there exists a unique $\tilde{u}_1 \in H_{\bK}^{1}(\bf{ \cell^{+}})$ such that  $ \Delta \tilde{u}_1=0$ on $\cell^+$ and
$\tilde{u}_1=\Phi_1^{(1),-}$ on $\partial \cell^{+}$.  Furthermore,  \eqref{eq.trace} and \eqref{eq.rotationOmegam}  imply  $\mathcal{R}_{\bf  \partial \cell^+}\Phi_1^{(1),-}=[\mathcal{R}_{\bf  \Omega^{-}} \Phi_1^{(1)}]^{-}=\tau \Phi_1^{(1),-} $. Thus, it follows from Lemma \ref{lem.comquasiper} that $\mathcal{R}_{\bf  \cell^+} \tilde{u}_1=\tau \, \tilde{u}_1$ since  $\mathcal{R}_{\bf  \cell^+} \tilde{u}_1$ and $ \tau \tilde{u}_1$ are both solutions of the elliptic boundary value problem:
$$
\Delta u=0 \ \mbox{ on } \cell^+ \ \mbox{ and } \ u= \tau \, \Phi_1^{(1),-}  \mbox{ on } \partial \cell^+,
$$
which admits a unique solution in  $H_{\bK}^{1}({\bf  \Omega^{+}})$. Now, we set 
$$u_1=\tilde{u}_1-(\tilde{u}_1,P_{n,\bK}^A )_{L^2_{\bK}({\bf  \Omega^{+}} )}\, P_{n,\bK}^A.$$  
Note that $\mathcal{R}_{\bf  \cell^+} u_1=\tau u_1$ and $\mathcal{R}_{\bf  \cell^+}  P_{n,\bK}^B=\overline{\tau} P_{n,\bK}^B$. Therefore,  
$({u}_1,P_{\bK,B} )_{L^2_{\bK}({\bf  \Omega^{+})}}=0$,
  and thus $u_1\in  \mathcal{W}_{1}$.
On the other hand,  $u_1=\Phi_1^{(1),-}$ on ${\partial \bf \cell^{+}}$ since $P_{n,\bK}^A=0$ on ${\partial \bf \cell^{+}}$. Furthermore, one has   $u_1\in H^1_{\bK,\Delta}( {\bf \cell^{+}})$ (since $ \Delta \tilde{u}_1=0$ and $P_{n,\bK}^A\in H^1_{\bK,\Delta}( {\bf \cell^{+}} )$).

We now construct $v_1$. Since $\Phi^{(1)}_1=u_1+v_1$, equation \eqref{eq.compatcondi1} can be rewritten as
\begin{equation}\label{eq.v_1}
 \big(-  \Delta -\lambda_D^{(0)} \big)  v_1= \big(\Delta +\lambda_D^{(0)}\big) u_1+ \lambda_D^{(1)} \Phi_1^{(0)}  \mbox{ in }  \cell^+ \  \mbox{ and }  \ v_1=0  \mbox{ on } \partial \cell^+ .
\end{equation}
By the Fredholm alternative (see e.g.  \cite{Mclean:2000}),  \eqref{eq.v_1}   admits a unique solution $v_1\in \mathcal{W}_{1}$  if and only if the right hand side is orthogonal  $\mathrm{Ker}(\bbA_{\infty,\bK}-\lambda_D^{(0)}\mathrm{I}d)=\mathrm{span}\{P_{n,\bK}^A, P_{n,\bK}^B\}$ restricted to ${\bf  \Omega^{+}}$. Moreover, such a solution $v_1$ will satisfy  the additional regularity $v_1\in  H^1_{\bK,\Delta}( {\bf \cell^{+}} ) $  inherited from the equation \eqref{eq.v_1}.

We first impose orthogonality to $P_{n,\bK}^A$.
Using  Green's identity, relations \eqref{eq.transmissionzeroorder} and \eqref{eq.transmissionfirstorder} and the relations $\lambda_D^{(0)} =\tilde{\delta}_n$, $\Phi_1^{(0)}= P_{n,\bK}^A $ and $u_1=\Phi_1^{(1),+}=\Phi_1^{(1),-}$ on  ${\partial \bf \Omega^{+}}$ we have:
\begin{eqnarray*}
0 &=& \int_{\cell^{+}} [(\Delta +\lambda_D^{(0)}) u_1+ \lambda_D^{(1)}  \Phi_1^{(0)} ] \cdot \overline{P_{n,\bK}^A}  \, \rmd \bx\\ 
&=&\Big\langle  \frac{\partial  u_1}{\partial \bn},\overline{ P_{n,\bK}^A}\Big\rangle _{ H^{-1/2}_{\bK},H^{1/2}_{\bK}}-\Big\langle  \overline{ \frac{\partial P_{n,\bK}^{A,+}  }{\partial \bn} } , u_1 \Big\rangle _{ H^{-1/2}_{\bK},H^{1/2}_{\bK}}+ \lambda_D^{(1)} \\
&=&-\Big\langle  \overline{ \frac{\partial P_{n,\bK}^{A,+}  }{\partial \bn} } , \Phi_1^{(1),-}\Big\rangle _{ H^{-1/2}_{\bK},H^{1/2}_{\bK}}+ \lambda_D^{(1)},
\end{eqnarray*}
where  $\left\langle \cdot,\cdot \right \rangle _{  H^{-1/2}_{\bK},H^{1/2}_{\bK}}$ stands for the duality product between the Sobolev spaces  $H^{-1/2}_{\bK}({\bf \partial \Omega^{+}})$ and $H^{1/2}_{\bK}({\bf \partial \Omega^{+}})$ (see Appendix   \ref{sec-appendixcommutation} for more details). Therefore,  it yields
\begin{equation*}\label{eq.compatcondi}
\lambda_D^{(1)}=\Big\langle  \overline{ \frac{\partial P_{n,\bK}^{A,+}  }{\partial \bn} } , \Phi_1^{(1),-} \Big \rangle _{ H^{-1/2}_{\bK},H^{1/2}_{\bK}}.
\end{equation*}

Note that since the eigenvalue $\aspar \mapsto  \lambda_D(\aspar)=\lambda_{2n}(\aspar;\bk)$ is increasing and approaches $\tilde{\delta}_n$ (see Lemma \ref{lem.monoticitydispcurv} and Theorem \ref{thm.convunifbands}), we must have that $\lambda_D^{(1)}\leq 0$. Indeed, this can be explicitly displayed.
Using the equation \eqref{eq.phi1D-} and applying the Green's  identity  in $\cell^{-}$, one obtains that
\begin{eqnarray}\label{eq.deflmabda1}
\lambda_D^{(1)}&=& \Big\langle  \overline{ \frac{\partial \Phi^{(1),-}}{\partial \bn} }  , \Phi^{(1),-} \Big \rangle _{ H^{-1/2}_{\bK},H^{1/2}_{\bK}} \ \ (\mbox{where $-\bn$ is  here the outward normal to $\cell^{-}$}),  \nonumber\\
&=&\int_{\cell^{-}}- \Phi^{(1)}_1  \, \overline{ \Delta \Phi_1^{(1)}} - | \nabla \Phi^{(1)} |^2 \, \rmd \bx  \  \quad   \nonumber\\
&=&- \int_{\cell^{-}}  | \nabla \Phi^{(1)} |^2 \, \rmd \bx <0.
\end{eqnarray} 
This previous expression of $\lambda_D^{(1)}$ was obtained with a different approach by \cite{Ammari2009layer} for the case of a simple Bloch eigenvalue  and a  square lattice at any non-zero quasimomentum.

We now verify orthogonality  of the right hand side of \eqref{eq.v_1} to $P_{n, \bK}^B$.  Since $P_{n,\bK}^A=\Phi_1^{(0)}$ and $P_{n,\bK}^B$  are orthogonal in $L^2_{\bK}({\bf  \Omega^{+}})$:
\begin{eqnarray*}
\int_{\cell^{+}} [(\Delta +\lambda_D^{(0)}) u_1+ \lambda_D^{(1)}   \Phi_1^{(0)}  ] \cdot \overline{P_{n, \bK}^B }  \, \rmd \bx&=&\Big\langle  \frac{\partial  u_1}{\partial \bn}, \overline{P_{n,\bK}^B}\Big\rangle _{ H^{-1/2}_{\bK},H^{1/2}_{\bK}}-\Big\langle  \overline{ \frac{\partial P_{n,\bK}^{B,+}  }{\partial \bn} } , u_1 \Big \rangle _{ H^{-1/2}_{\bK},H^{1/2}_{\bK}}
 \\
&=&-   \Big\langle  \overline{  \frac{\partial  P^{B,+}_{n,\bK} }{\partial \bn} } ,  \Phi^{(1),-}_1 \Big \rangle_{ H^{-1/2}_{\bK},H^{1/2}_{\bK}}.
\end{eqnarray*}
As $\mathcal{R}_{\partial \bf \cell^+}$ is unitary in $H^{1/2}_{\bk}(\partial \bf \cell^+)$, one deduces from the definition \eqref{eq.defrotneumtrace} of the operator $\mathcal{R}_{\partial \bf \cell^+}$ in  $H^{-1/2}_{\bk}(\partial \bf \cell^+)$ that
$$
\Big\langle \frac{\partial  P^{B,+}_{n,\bK} }{\partial \bn }, \overline{ \Phi^{(1),-}_1} \Big \rangle_{ H^{-1/2}_{\bK},H^{1/2}_{\bK}} =\Big\langle \mathcal{R}_{\partial \bf \cell^+}\frac{\partial  P^{B,+}_{n,\bK} }{\partial \bn }, \overline{\mathcal{R}_ {\partial \bf \cell^+} \Phi^{(1),-}_1} \Big \rangle_{ H^{-1/2}_{\bK},H^{1/2}_{\bK}}.
$$
Hence, from  the relations \eqref{eq.rotationOmegam},  \eqref{eq.trace}, \eqref{eq.tracenormal}, and the fact that $ P_{n,\bK}^B\in L_{\bK,\overline{\tau}}$, it follows that
\begin{eqnarray*}
\Big\langle \frac{\partial  P^{B,+}_{n,\bK} }{\partial \bn }, \overline{ \Phi^{(1),-}_1} \Big \rangle_{ H^{-1/2}_{\bK},H^{1/2}_{\bK}}&=&\Big\langle \Big[ \frac{\partial \mathcal{R}_{\bf \cell^+} P^{B}_{n,\bK} }{\partial \bn }\Big]^{+}, \overline{\mathcal{R}_ {\partial \bf \cell^+} \Phi^{(1),-}_1} \Big \rangle_{ H^{-1/2}_{\bK},H^{1/2}_{\bK}}\\
&=&\overline{\tau}^2 \Big\langle \frac{\partial  P^{B,+}_{n,\bK} }{\partial \bn }, \overline{ \Phi^{(1),-}_1} \Big \rangle_{ H^{-1/2}_{\bK},H^{1/2}_{\bK}}.
\end{eqnarray*}
As $\overline{\tau}^2=\tau\neq 1$, one concludes that:
$$
\Big\langle  \overline{  \frac{\partial  P^{B,+}_{n,\bK} }{\partial \bn} } ,  \Phi^{(1),-}_1 \Big \rangle_{ H^{-1/2}_{\bK},H^{1/2}_{\bK}}=\overline{ \Big\langle \frac{\partial  P^{B,+}_{n,\bK} }{\partial \bn }, \overline{ \Phi^{(1),-}_1} }\Big \rangle_{ H^{-1/2}_{\bK},H^{1/2}_{\bK}}=0.
$$
Thus, the second compatibility condition holds automatically by symmetry arguments.

Hence if $\lambda_D^{(1)}$ is given by $\eqref{eq.deflmabda1}$, then  there is a unique solution $v_1\in \mathcal{W}_{1}$ of \eqref{eq.v_1} and one concludes that $\Phi^{(1)}_1=u_1+v_1$  is the unique solution of \eqref{eq.compatcondi1} in $\mathcal{W}_{1} $ (with the additional regularity $\Phi^{(1)}_1\in  H^1_{\bK,\Delta}( {\bf \cell^{+}} ) $ inherited from the equation \eqref{eq.compatcondi1}).
Moreover, using that  $\mathcal{R} \Delta u_1= \Delta \mathcal{R} u_1 $, $\mathcal{R}_{\bf \Omega^+} u_1=\tau u_1$, $\mathcal{R}_{\bf \Omega^+}  P_{n,\bK}^A=\tau  P_{n,\bK}^A$ on ${\bf \cell^+}$,  one easily checks that
$$
\mathcal{R}_{\bf \Omega^+}[(\Delta +\delta_1) u_1+ \lambda_D^{(1)}  P_{n,\bK}^A]=\tau \big( (\Delta +\delta_1) u_1+ \lambda_D^{(1)}  P_{n,\bK}^A \big)\mbox{ on }{ \bf \cell^+}.
$$ 
Furthermore, from the definition \eqref{eq.defW1} of  $\mathcal{W}^{1}$, one deduces immediately with Lemma \ref{lem.comquasiper}  that $\mathcal{W}^{1}$ is stable by $\mathcal{R}_{\bf \Omega^+}$ and thus  $\mathcal{R}_{\bf \Omega^+}   v_1\in \mathcal{W}^{1}$.
Thus, from \eqref{eq.v_1}, one obtains that $\mathcal{R}_{\bf \Omega^+}   v_1$ and $\tau v_1$ satisfies the same boundary value problem which  admits a unique solution in $\mathcal{W}^{1}$. Hence, one has $\mathcal{R}_{\bf \Omega^+}   v_1=\tau v_1 \mbox{ on } {\bf  \cell^+}$ and therefore  $\mathcal{R}_{\bf \Omega^+}\Phi_1^{(1)}=\tau \Phi_1^{(1)} \mbox{ on } {\bf \cell^+}$. One concludes finally with \eqref{eq.rotationOmegam} that $\Phi_1^{(1)}\in  L^2_{\bK,\tau}$.

To sum up, $\Phi_1^{(1)}\in  L^2_{\bK,\tau}$ is the unique solution of \eqref{eq.transmissionfirstorder} in $H^1_{\bK}$  orthogonal to $P_{n,\bK}^A$ and $P_{n,\bK}^B$ in $L^2_{\bK}$, with the additional regularity $\Phi^{(1)}_1\in H^1_{\bK,\Delta}( {\bf \cell^{\pm}} )$ inherited from the equation \eqref{eq.transmissionfirstorder}.

\subsection{Determination of  $\Phi_1^{(m)}\in L^2_{\bK,\tau}\cap H^1_{\bK}$ and $\lambda_D^{(m)}$, for  $m>1$}
We now generalize our construction at order $\aspar^{-1}$ to all orders $\aspar^{-m}$, for $m> 1$.
Identifying in \eqref{eq.transmission1} the order $\aspar^{-m}$  terms in $\cell^{+}$  and the order $g^{-m+1}$ terms in $\cell^{-}$ leads to:
\begin{equation}\label{eq.transmissionordern}
\left\{ \begin{array}{ll}
\displaystyle  -  \Delta \Phi_1^{(m)}= \sum_{p=0}^{m} \lambda_D^{(m-p)} \Phi_1^{(p)}& \mbox{ in }  \cell^{+}, \\[5pt]
\Phi_1^{(m),-}=\Phi_1^{(m),+} \mbox{ and }  \displaystyle \frac{\partial \Phi_1^{(m),-}}{\partial \bn}=  \frac{\partial \Phi_1^{(m-1),+}  }{\partial \bn}  & \mbox{ on } \partial{\cell}^{+} ,\\[10pt]
- \Delta \Phi_1^{(m)} = \displaystyle  \sum_{p=0}^{m-1} \lambda_D^{(m-1-p)} \Phi_1^{(p)}   & \mbox{ in }  \cell^{-},\\[15pt]
\Phi_1^{(m)} \mbox{ and } \displaystyle  \frac{\partial \Phi_1^{(m)}}{\partial \bn}&  \hspace{-3.5cm}  \mbox{$\bK-$quasi-periodic} \mbox{ on } \partial{\cell}.
\end{array} \right.
\end{equation}
The system \eqref{eq.transmissionordern} reduces to \eqref{eq.transmissionfirstorder} when $m=1$. For $m>1$,
the functions $\Phi_1^{(m)}$ and the scalars $\lambda_D^{(m)}$  for $m > 1$ are defined recursively. 
Assume  for $p=1,2,\ldots, m-1$, that $\Phi_1^{(p)}\in L^2_{\bK,\tau}\cap H^1_{\bK}$ and $\lambda_D^{(p)}\in\mathbb{C}$ are defined to uniquely solve \eqref{eq.transmissionordern} in $H^1_{\bK}\times \mathbb{C} $ and such that $\Phi_1^{(p)}$  is   $L^2_{\bK}-$ orthogonal to $P_{n,\bK}^A$ and $P_{n,\bK}^B$, and such  that $\Phi_1^{(p)}\in L^2_{\bK,\tau}$, with the additional regularity $\Phi^{(p)}_1\in H^1_{\bK,\Delta}( {\bf \cell^{\pm}} )$ inherited from the equations \eqref{eq.transmissionordern}. 
We proceed to construct $ \lambda_D^{(m)}$ and $\Phi_1^{(m)}$ satisfying these same properties. 

Consider \eqref{eq.transmissionordern} on $\cell^-$, which states
\begin{equation*}
- \Delta \Phi_1^{(m)} = \displaystyle  \sum_{p=0}^{m-1} \lambda_D^{(m-1-p)} \Phi_1^{(p)} \ \   \mbox{ in }   \ \ \cell^{-} \quad  \mbox{ and } \quad \frac{\partial \Phi_1^{(m),-}}{\partial \bn} =  \frac{\partial \Phi_1^{(m-1),+}  }{\partial \bn} \ \   \mbox{ on }  \partial\cell^+,
\end{equation*}
where $\Phi_1^{(m)}$  and $ \partial \Phi_1^{(m)}/\partial \bn$ are required to satisfy $\bK-$quasi-periodic boundary conditions on $\partial{\cell}$.
As in our analysis for $m=1$,  by the Lax-Milgram theorem, this problem admits a unique solution in $H^{1}_{\bK}({\bf \cell^{-}})$ (that is also in $H^{1}_{\bK,\Delta}({\bf \cell^{-}})$). Furthermore,  as for $m=1$, by applying Lemma \ref{lem.comquasiper2} with the fact that $\sum_{p=0}^{m-1} \lambda_D^{(m-1-p)} \Phi_1^{(p)}\in L^{2}_{\bK,\tau} $ (since $\Phi_1^{(p)} \in L^{2}_{\bK,\tau}$ for $m=0,\ldots, m-1$) leads to $\mathcal{R}_{{\bf \cell^{-}}}\Phi_1^{(m)}=\tau \Phi_1^{(m)}$ in ${\bf \cell^-}$. Therefore, one deduces that $[\mathcal{R}_{{\bf \cell^{-}}}\Phi_1^{(m)}]^-=\tau \Phi_1^{(m),-} \mbox{ on } {\partial \bf \cell^+}$.

Turning to $\Phi_1^{(m)}$ on $\cell^+$, we seek $\lambda_D^{(m)}\in\C$, and  $\Phi_1^{(m)}\in H^{1}_{\bK}({\bf \cell^+})$ such that
\begin{equation}\label{eq.compatcondi2}
 (- \Delta-\lambda^{(0)}_D) \, \Phi_1^{(m)} = \sum_{p=0}^{m-1} \lambda_D^{(m-p)} \Phi_1^{(p)}\ \ \textrm{in}\ \    \cell^{+}  \quad \mbox{ and } \quad \Phi_1^{(m),+} =\Phi_1^{(m),-}   \mbox{ on }   \partial{\cell}^{+} .
\end{equation}
For the construction we again use the decomposition: $\Phi^{(m)}=u_m+v_m$, where $u_m\in \mathcal{W}_{1} $ is an extension to ${\bf\cell^+}$ of $\Phi_1^{(m),-}\in H^{\frac{1}{2}}_{\bK}( {\bf \partial \cell^{+}})$. The function $u_m\in  \mathcal{W}_{1}$ is constructed in a manner analogous to that for orders $\aspar^{-p}$, $p=1,\dots,m-1$. Thus, the resulting $u_m$ satisfies 
the symmetry and regularity properties: $ \mathcal{R}_{\bf \cell^+ }u_m=\tau u_m$ (with $ u_m$ also in  $H^{1}_{\bK,\Delta}({\bf \cell^{+}})$). 

Setting $\Phi^{(m)}=u_m+v_m$ in \eqref{eq.compatcondi2} leads to 
\begin{equation}\label{eq.compatcondi3}
 (- \Delta-\lambda^{(0)}_D) \, v_{m}=(\Delta+\lambda^{(0)}_D) \, u_{m} +  \lambda^{(m)}_D \, \Phi_{1}^{(0)}+  \sum_{p=1}^{m-1} \lambda_D^{(m-p)} \Phi_1^{(p)}  \mbox{ in }  \cell^{+}  \ \mbox{ and }  \ v_m=0 \mbox{ on }   \partial{\cell}^{+} .
\end{equation}

As for the case $m=1$, we determine  $\lambda_D^{(m)}$ from  the compatibility conditions for solvability for 
  $\Phi^{(m)}\in \mathcal{W}_1$, i.e. that $\Phi^{(m)}$  is orthogonal to $P_{n,\bK}^A$ and $P_{n,\bK}^B$. By  induction hypothesis,  $\Phi_1^{(p)}$ is orthogonal to $P_{n,\bK}^A$ for $p=1,\ldots, m-1$ and $\Phi_1^{(0)}=P_{n,\bK}^A$. Therefore,  using \eqref{eq.compatcondi3}) we have that
$$
\lambda_D^{(m)}= \int_{\cell^{+}}\big(-  \Delta - \lambda_D^{(0)} \big) u_{m}\,  \overline{P_{n,\bK}^A } \, \rmd \bx.
$$
Applying Green's identity  yields:
\begin{equation*}
 \lambda_D^{(m)}=\big\langle  \overline{ \frac{\partial P_{n,\bK}^{A,+}}{\partial \bn} } ,  \Phi_1^{(m),-}  \big \rangle_{ H^{-1/2}_{\bK},H^{1/2}_{\bK}}  .
\end{equation*}
Continuing, as in the case  $m=1$, by Green's identity applied in $\cell^-$ and  equation \eqref{eq.phi1D-}, one obtains:
\begin{equation}\label{eq.deflambdan}
 \lambda_D^{(m)}=-\int_{\cell^{+}}  \nabla \Phi_1^{(m)} \ \overline{  \nabla \Phi_1^{(1)}  } \,  \rmd \bx.
\end{equation}
Together  with \eqref{eq.deflmabda1}, for the case $m=1$, we have \eqref{eq.deflambdan} for $m\geq 1$.

The other compatibility condition,  orthogonality of the right hand side of \eqref{eq.compatcondi3} to $P_{n,\bK}^B$, follows using that $\Phi_1^{(m)}$ is orthogonal in $L_{\bK}^2({\bf \cell ^+})$ to $P_{n,\bK}^B$ for $p=0,\ldots, m-1$ and then by reproducing  the reasoning done for $m=1$, using  symmetry relation $\mathcal{R} \Phi_1^{(m),-}= \tau  \Phi_1^{(m),-} \mbox{ on } \partial \bf \cell^+$ and Lemmas \ref{lem.comquasiper} and \ref{lem.comquasiper2}. 
Hence, for  $\lambda_D^{(m)}$ given by  \eqref{eq.deflambdan}, equation \eqref{eq.compatcondi3} admits a unique solution $v_m \in \mathcal{W}_1$ (and also in  $H^{1}_{\bK,\Delta}({\bf \cell^{+}})$). Therefore,  $\Phi^{(m)}_1=u_m+v_m$  is the unique solution of \eqref{eq.compatcondi2} in $\mathcal{W}_{1} $ (with the additional regularity $\Phi^{(m)}_1\in H^{1}_{\bK,\Delta}({\bf \cell^{+}})$).
Furthermore, using  that   $\Phi_1^{(p)}\in L^2_{\bK,\tau}$ for $p=0,\ldots,m-1$,  one proves easily by mimicking the  reasoning done for $m=1$  that $\mathcal{R} \Phi_1^{(m)}= \tau  \Phi_1^{(m)} \mbox{ in } {\bf \cell^+}$. 

Thus, one concludes that  $\Phi_1^{(m)}\in L^{2}_{\bK,\tau} $ and that  with $\lambda_D^{(m)}\in \mathbb{C}$ as defined, 
 $\Phi_1^{(m)}$ is the unique solution of \eqref{eq.transmissionordern} in $H^{1}_{\bK}$ such that $\Phi_1^{(m)}$ is orthogonal to $P_{n,\bK}^A$ and $P_{n,\bK}^B$ (with the additional regularity $\Phi^{(m)}_1\in H^1_{\bK,\Delta}( {\bf \cell^{\pm}} )$ inherited from the equation \eqref{eq.transmissionordern}). 
 
\subsection{Asymptotic expansions}\label{sec.quasimode}
In the previous section we developed a formal procedure for computing  approximate $L^2_{\bK,\tau}$ eigenpairs  of $\bbA_{\aspar,\bK}$ to any order in $\aspar^{-1}$.  Such approximations are often called {\it quasi-modes}. Our goal in this section is to prove that these approximate eigenpairs approximate genuine eigenpairs of
$\bbA_{\aspar,\bK}$. To show this we use general  principles of self-adjoint operators described in Appendix \ref{app.Appendixquasimode}.
\begin{remark}\label{quasi}
Since the operator  $\bbA_{\aspar,\bK}$ has discontinuous coefficients, its domain $\rmD(\bbA_{\aspar,\bK})$  depends on the asymptotic parameter $\aspar$. Therefore, we use here a weak formulation of the quasi-modes approach  outlined in Appendix \ref{app.Appendixquasimode}, which permits the extension of  the notion of quasi-mode to functions with less regularity (in particular, functions that do not belong to $\rmD(\bbA_{\aspar,\bK})$ but  belong rather  to  $\rmD(\bbA_{\aspar,\bK}^{1/2})=H^1_{\bK}$; the latter is independent of $\aspar$). Furthermore, this approach yields an asymptotic expansion of the Bloch eigenfunctions $\Phi_j(\aspar,\cdot)$ in a norm which is stronger than the $H^1$-norm which, in particular, allows us to obtain  an asymptotic expansion of the Dirac velocity $v_D(\aspar)$. Note that the expression for $v_D(\aspar)$, see \eqref{eq.fermyveloc}, depends both on  $\Phi_j(\aspar,\cdot)$ and  $\nabla\Phi_j(\aspar,\cdot)$. Results related to this weak formulation of quasi-modes  expansions are summarized in Appendix  \ref{app.Appendixquasimode}.
\end{remark}

Introduce the inner product defined on $ H^{1}_{\bK}$ by 
$$
(u,v)_{a_{\aspar,\bK}}=a_{\aspar,\bK}(u,v), \quad \textrm{for all}\  u,v \in H^{1}_{\bK}.
$$
Here, $a_{\aspar,\bK}$ is the  sesqulinear form  defined in  \eqref{eq.sesquilinearformakg}.
 We denote  by $\|\cdot \|_{ a_{\aspar,\bK}}$ the norm  associated to  $(\cdot,\cdot)_{a_{\aspar,\bK}}$. By a Poincar\'{e} type inequality, this  norm dominates  the  norm $\| \cdot \|_{H^{1}_{\bK}}$ with a constant $C$ independent of  $\aspar$ for $\aspar\geq 1$.

\noindent For $M \geq 0$, introduce $\lambda_D^M(\aspar)$ and $\Phi_1^{M}(\aspar)$, the formal approximations
 of the previous section:
\begin{equation}\label{eq.defquasimode}
\lambda_D^M(\aspar)=\sum_{m=0}^M   \lambda_D^{(m)}  \, \aspar^{-m} \ \mbox{ and } \ \Phi_{1}^M(\aspar,\cdot)=\sum_{m=0}^M   \Phi_1^{(m)}    \aspar^{-m}.
\end{equation}
Here, $\lambda_D^{(0)}=\tilde{\delta}_n$, $\Phi_1^{(0)}=P_{n,\bK}^A$ and for $m\geq 1$, $\lambda_D^{(m)}$ is defined  by \eqref{eq.deflambdan},  and $\Phi_1^{(m)}\in H^{1}_{\bK}\cap L^2_{\bK,\tau}$ is the unique solution of  \eqref{eq.transmissionordern} for $m\geq 1$ in $H^{1}_{\bK}$, which is orthogonal to $\ \mathrm{span}\{P_{n,\bK}^A, P_{n,\bK}^B\}$.

We shall use the following proposition to justify the asymptotic expansion  \eqref{eq.ansatz1} of the eigenvalue $\lambda_D(\aspar)= \lambda_{2n-1}(\aspar;\bK)= \lambda_{2n}(\aspar;\bK)$ for $g$ sufficiently large.

\begin{proposition}\label{prop.quasiestm}
For any $M\in \NO$, there exists $C>0$ such that for all $v\in H^{1}_{\bK}$ and  all $\aspar>1$:
\begin{equation}\label{eq.estimationquasimode}
\Big|a_{\aspar,\bK}\big(\Phi_{1}^{M+1}(\aspar,\cdot),v\big)- \lambda_D^M(\aspar) \ \big(\Phi_{1}^{M+1}(\aspar,\cdot),v\big)\Big|\leq C \|v \|_{ a_{\aspar,\bK}}  \  \aspar^{-(M+1)},
\end{equation}
where $\lambda_D^M(\aspar) $ and $\Phi_{1}^M(\aspar,\cdot)$ are defined by \eqref{eq.defquasimode}.
\end{proposition}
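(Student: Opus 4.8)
The estimate \eqref{eq.estimationquasimode} is exactly the statement that $\big(\lambda_D^M(\aspar),\Phi_1^{M+1}(\aspar,\cdot)\big)$ is a \emph{weak quasi-mode} for the sesquilinear form $a_{\aspar,\bK}$ with residual of order $\aspar^{-(M+1)}$, measured against test functions $v\in H^1_{\bK}$ through the $\|\cdot\|_{a_{\aspar,\bK}}$-norm. The strategy is to compute the residual functional
\[
v\ \longmapsto\ \calR_{\aspar}^{M}(v):=a_{\aspar,\bK}\big(\Phi_{1}^{M+1}(\aspar,\cdot),v\big)-\lambda_D^M(\aspar)\,\big(\Phi_{1}^{M+1}(\aspar,\cdot),v\big)
\]
by expanding everything in powers of $\aspar^{-1}$ and using that, by construction, the functions $\Phi_1^{(m)}$ and scalars $\lambda_D^{(m)}$ were chosen precisely so that all terms of order $\aspar^{-m}$ for $0\le m\le M$ cancel. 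First I would write $a_{\aspar,\bK}(u,v)=\int_{\cell^+}\nabla u\cdot\overline{\nabla v}\,\rmd\bx+\aspar\int_{\cell^-}\nabla u\cdot\overline{\nabla v}\,\rmd\bx$, plug in $\Phi_1^{M+1}(\aspar,\cdot)=\sum_{m=0}^{M+1}\aspar^{-m}\Phi_1^{(m)}$ and $\lambda_D^M(\aspar)=\sum_{m=0}^{M}\aspar^{-m}\lambda_D^{(m)}$, and collect like powers of $\aspar^{-1}$. Note the shift: the $\aspar\int_{\cell^-}$ term turns an $\aspar^{-m}$ coefficient into an $\aspar^{-(m-1)}$ contribution, which is why the hierarchy \eqref{eq.transmissionordern} pairs ``order $\aspar^{-m}$ in $\cell^+$'' with ``order $\aspar^{-m+1}$ in $\cell^-$.''

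The key step is to integrate by parts in each subdomain and recognize the weak form of the PDE hierarchy. For a test function $v\in H^1_{\bK}$ (whose trace is continuous across $\partial\cell^+$ and which is $\bK$-pseudo-periodic, so boundary terms on $\partial\cell$ cancel by pseudo-periodicity), Green's identity gives
\[
\int_{\cell^+}\nabla\Phi_1^{(m)}\cdot\overline{\nabla v}\,\rmd\bx
=-\int_{\cell^+}\Delta\Phi_1^{(m)}\,\overline{v}\,\rmd\bx
+\Big\langle\tfrac{\partial\Phi_1^{(m),+}}{\partial\bn},\overline{v}\Big\rangle_{H^{-1/2}_{\bK},H^{1/2}_{\bK}},
\]
and similarly on $\cell^-$ with the opposite-oriented normal. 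Summing the two contributions, the interface terms combine into $\big\langle[\partial\Phi_1^{(m)}/\partial\bn]^- -[\partial\Phi_1^{(m)}/\partial\bn]^+\,,\,\overline{v}\big\rangle$, and the volume terms produce $-\Delta\Phi_1^{(m)}$ on each side. Matching against the transmission conditions and Helmholtz equations in \eqref{eq.transmissionzeroorder}, \eqref{eq.transmissionfirstorder}, \eqref{eq.transmissionordern} — which hold in the appropriate weak ($H^1_{\bK,\Delta}$) sense as established in the preceding subsections — shows that the coefficient of $\aspar^{-m}$ in $\calR_\aspar^M(v)$ vanishes for every $0\le m\le M$. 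What survives is exactly the ``overhang'': the order-$\aspar^{-(M+1)}$ piece, namely
\[
\calR_\aspar^M(v)=\aspar^{-(M+1)}\Big(\int_{\cell^+}\nabla\Phi_1^{(M+1)}\cdot\overline{\nabla v}\,\rmd\bx-\textstyle\sum_{p}\lambda_D^{(M-p)}(\Phi_1^{(p)},v)_{\cell^+}+\dots\Big)\ +\ (\text{similar }\cell^-\text{ terms with the }\aspar\text{ weight absorbed}),
\]
i.e. a fixed ($\aspar$-independent) bounded functional of $v$, times $\aspar^{-(M+1)}$; here I would be slightly careful to track that the leftover $\cell^-$ contribution carries a factor $\aspar\cdot\aspar^{-(M+1)}\cdot\aspar^{-1}$-type bookkeeping so that it still lands at order $\aspar^{-(M+1)}$ after using that one extra power of $\aspar^{-1}$ comes from $\Phi_1^{(M+2)}$ not being included — in fact the cleanest route is to observe that the truncated expansion $\Phi_1^{M+1}$ solves \eqref{eq.transmissionordern} for $m\le M$ exactly, so the residual is manufactured solely by the top-order term and its normal-derivative mismatch on $\partial\cell^+$.

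Finally I would bound $|\calR_\aspar^M(v)|$. The leftover functional is a sum of terms each of the form $\int_{\cell^\pm}\nabla w\cdot\overline{\nabla v}$ or $\int_{\cell^\pm}w\,\overline{v}$ with $w$ ranging over the fixed finite set $\{\Phi_1^{(p)}:0\le p\le M+1\}$; by Cauchy--Schwarz these are controlled by $\|w\|_{H^1_{\bK}}\|v\|_{H^1_{\bK}}$ (uniformly in $\aspar\ge1$), and by the Poincar\'e-type inequality mentioned just before the proposition, $\|v\|_{H^1_{\bK}}\le C\|v\|_{a_{\aspar,\bK}}$ with $C$ independent of $\aspar\ge1$. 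Collecting the finitely many such terms and the single power $\aspar^{-(M+1)}$ gives \eqref{eq.estimationquasimode} with a constant $C$ depending only on $M$ (through the norms of $\Phi_1^{(0)},\dots,\Phi_1^{(M+1)}$) but not on $\aspar$ or $v$. The main obstacle, and the only genuinely delicate point, is the bookkeeping of powers of $\aspar$ across the two subdomains — ensuring that the $\aspar$-prefactor on the $\cell^-$ integral, combined with the index shift in the hierarchy, really does leave a net residual of size $\aspar^{-(M+1)}$ and not $\aspar^{-M}$ — together with making sure all interface duality pairings $\langle\cdot,\cdot\rangle_{H^{-1/2}_{\bK},H^{1/2}_{\bK}}$ are legitimate, which is guaranteed by the regularity $\Phi_1^{(m)}\in H^1_{\bK,\Delta}(\cell^\pm)$ proved recursively in the previous subsections.
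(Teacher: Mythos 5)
Your proposal is correct and follows essentially the same route as the paper: split the residual $\calR_\aspar^M(v)$ into terms of order $\aspar^{-m}$ for $m\le M$ (shown to vanish via Green's identity, the cancellation of interface duality pairings using $\partial\Phi_1^{(m+1),-}/\partial\bn=\partial\Phi_1^{(m),+}/\partial\bn$, and the PDE hierarchy in $\cell^\pm$) and a leftover of order $\aspar^{-(M+1)}$ bounded by Cauchy--Schwarz together with the $\aspar$-uniform Poincar\'e inequality. One small clarification against your ``cleanest route'' aside: the paper does \emph{not} integrate the leftover by parts to expose a normal-derivative mismatch for $\Phi_1^{(M+1)}$; it instead bounds the surviving volume integrals $\aspar^{-(M+1)}\int_{\cell^+}\nabla\Phi_1^{(M+1)}\cdot\overline{\nabla v}$ and the tail $\sum_{m\ge M+1}\aspar^{-m}\sum_p\lambda_D^{(p)}(\Phi_1^{(m-p)},v)$ directly by Cauchy--Schwarz, which is what you ultimately do in your final paragraph as well --- so the imprecision is harmless. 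Your flagged bookkeeping concern (the $\aspar$-weight on $\cell^-$ combined with the index shift) is indeed the only delicate point, and it resolves exactly as you expect because $\Phi_1^{(0)}$ vanishes on $\cell^-$, which removes the would-be $\aspar^{+1}$ term.
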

\begin{proof}
Let $v\in H^{1}_{\bK}$.
We begin by splitting the difference on the left hand side of \eqref{eq.estimationquasimode} into two parts, terms of order $\aspar^{-m}$ with $m\le M$  and terms of order $\aspar^{-(M+1)}$:
\begin{equation}\label{eq.decompositiona}
a_{\aspar,\bK}\big(\Phi_{1}^{M+1}(\aspar,\cdot),v\big)- \lambda_D^M(\aspar) \ \big(\Phi_{1}^{M+1}(\aspar,\cdot),v\big)=s_{\aspar}(v)+r_{\aspar}(v)
\end{equation}
with 
\begin{equation}\label{eq.defsv}
s_{\aspar}(v)=\sum_{m=0}^{M} \frac{1}{g^{m}}\int_{\cell^-} \nabla \, \Phi_1^{(m+1)}\,  \overline{\nabla v} \, \rmd \bx+\sum_{m=0}^{M} \frac{1}{g^{m}}\int_{\cell^+} \nabla \, \Phi^{(m)}\, \overline{\nabla v}  \,\rmd \bx -\sum_{m=0}^{M}\ \frac{1}{g^{m}} \,\sum_{p=0}^{m}\lambda_D^{(p)} \,  (\Phi^{(m-p)}_1,v) 
\end{equation}
and
$$
r_{\aspar}(v)=\frac{1}{g^{M+1}}\int_{\cell^+} \nabla \, \Phi_1^{(M+1)} \, \overline{ \nabla v} \, \rmd \bx-\sum_{m=M+1}^{2M+1}\ \frac{1}{g^{m}} \,\sum_{p=0}^{M}\lambda_D^{(p)} \,  (\Phi^{(m-p)}_1,v),
$$
(we used here in particular that $\Phi_1^{(0)}=P_{n,\bK}^A=0$ on $\cell^-$).
By the Cauchy-Schwarz inequality $ r_{\aspar}(v)$ satisfies the bound:
\begin{equation*}
|r_{\aspar}(v)|\leq   \aspar^{-(M+1)}  \| \nabla \Phi_1^{(M+1)} \|_{L^2(\cell^{+})} \,  \| \nabla  v \|_{L^2(\cell^{+})}+\Big(\sum_{m=M+1}^{2M+1} \aspar^{-m} \sum_{p=0}^{M} |\lambda_D^{(p)} |  \ \|\Phi_1^{(m-p)}\| \Big)    \  \|  v \|.
\end{equation*}
By a Poincar\'{e} type inequality, it follows that there exists $C>0$, independent of $\aspar\ge1$,  such that 
\begin{equation}\label{eq.ineqrv}
|r_{\aspar}(v)|\leq   C  \,  \aspar^{-(M+1)}  \|  v \|_{ a_{\aspar,\bK}} .
\end{equation}

To complete the proof, we  claim that $s_{\aspar}(v)=0$ by the definitions of  $\lambda_D^{(m)}$ and $\Phi_1^{(m)}$. 
Using the Green identity in $\cell^{\pm}$ and the fact that $\pm \bn$ is the outward normal of $\cell^{\pm}$ leads to
$$
 \sum_{m=0}^{M} \frac{1}{g^{m}}\int_{\cell^ -} \nabla \, \Phi_1^{(m+1)}\,  \overline{ \nabla v} \, \rmd \bx=\sum_{m=0}^{M} \frac{1}{g^{m}}\Big[\int_{\cell^-}- \Delta \, \Phi_1^{(m+1)}\,  \overline{  v} \, \rmd \bx-\big\langle\frac{\partial \Phi_1^{(m+1),-}}{\partial \bn}  , \overline{  v} \big \rangle_{ H^{-1/2}_{\bK},H^{1/2}_{\bK}}\Big]
$$
and
$$
\sum_{m=0}^{M} \frac{1}{g^{m}}\int_{\cell^+} \nabla \, \Phi_1^{(m)}\,  \overline{ \nabla v} \, \rmd \bx=\sum_{m=0}^{M} \frac{1}{g^{m}}\Big[\int_{\cell^+}- \Delta \, \Phi_1^{(m)}\,  \overline{ v} \, \rmd \bx+
\big\langle  \frac{\partial \Phi_1^{(m),+}}{\partial \bn}  , \overline{  v} \big \rangle_{ H^{-1/2}_{\bK},H^{1/2}_{\bK}}\Big].
$$
Adding the last expressions and using that $ \partial \Phi_1^{(m+1),-}/\partial \bn= \partial \Phi_1^{(m),+}/\partial \bn$ on $\partial \cell^+$ (by  \eqref{eq.transmissionfirstorder} and \eqref{eq.transmissionordern}), we have that the duality products on $\partial \cell^+$ cancel  and $s_{\aspar}$ defined by \eqref{eq.defsv} can be rewritten as
\begin{equation*}\label{eq.defsvzero}
  s_{\aspar}(v)=  \sum_{m=0}^{M} \frac{1}{g^{m}} \Big[ \int_{\cell^-}\big(- \Delta \, \Phi_1^{(m+1)} 
 -\sum_{p=0}^{m}\lambda_D^{(p)} \,  \Phi_1^{(m-p)}\big)\,  \overline{  v} \, \rmd \bx+\int_{\cell^+}\big(- \Delta \, \Phi_1^{(m)} -\sum_{p=0}^{m}\lambda_D^{(p)} \,  \Phi_1^{(m-p)}\big)\,  \overline{  v} \, \rmd \bx \Big] .
\end{equation*}
Hence,  $s_{\aspar}(v)=0$, since $\Phi_1^{(0)}$ and $\Phi_1^{(m)}$ for $m\geq  1$ satisfy respectively  \eqref{eq.transmissionzeroorder} and  \eqref{eq.transmissionordern}  in $\cell^{\pm}$.
Therefore, with \eqref{eq.decompositiona} and \eqref{eq.ineqrv}, one obtains immediately the inequality \eqref{eq.estimationquasimode}.
\end{proof}

Using the preceding Lemma, we can now prove the asymptotic expansion  \eqref{eq.ansatz1}.
\begin{theorem}\label{th.asympteigen}(Asymptotic expansion of the Dirac eigenvalue, $\lambda_D(\aspar)$, for large $\aspar$)\\
Assume $\tilde{\delta}_n$ satisfies condition $\condS$ of Definition \ref{Def.condS} for a fixed $n\geq 1$. Then,  for $M\ge0$ and $\aspar$ large enough, the {\it Dirac eigenvalues}: $\lambda_{2n-1}(\aspar;\bK)= \lambda_{2n}(\aspar;\bK)=\lambda_D(\aspar)$ admit the asymptotic expansion 
\begin{equation}\label{eq.asympeigenval}
 \lambda_D(\aspar)= \lambda_D^M(\aspar)+O\big(g^{-(M+1)} \big)=\sum_{m=0}^M \lambda_D^{(m)} \aspar^{-m} +O\big(g^{-(M+1)} \big) ,
\end{equation}
where $\lambda_D^M(\aspar)$ is defined by the $M-$ term expansion \eqref{eq.defquasimode}.
\end{theorem}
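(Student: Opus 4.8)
The plan is to deduce Theorem~\ref{th.asympteigen} from Proposition~\ref{prop.quasiestm} by invoking the general ``quasi-mode'' principle for self-adjoint operators in its weak (form) formulation, as announced in Appendix~\ref{app.Appendixquasimode}. Recall that, by Theorem~\ref{th.degeneracy} (whose hypothesis $\condS$ is in force), there exists $\aspar_*>0$ such that for all $\aspar>\aspar_*$ the operator $\bbA_{\aspar,\bK}$ has a genuine two-fold eigenvalue $\lambda_D(\aspar)=\lambda_{2n-1}(\aspar;\bK)=\lambda_{2n}(\aspar;\bK)$, and moreover this eigenvalue is isolated from the rest of the spectrum of $\bbA_{\aspar,\bK}$: by Step~1 of the proof of Theorem~\ref{th.degeneracy}, for any sufficiently small $\eta>0$ there is $\aspar_\eta$ with $\sigma(\bbA_{\aspar,\bK})\cap(\delta_{2n}-\eta,\delta_{2n}+\eta)=\{\lambda_D(\aspar)\}$ whenever $\aspar>\aspar_\eta$. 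This quantitative spectral gap, uniform in $\aspar$ large, is exactly what makes the quasi-mode estimate convertible into a convergence rate.

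First I would record the weak quasi-mode lemma: if $(\mu,\psi)$ with $\psi\in H^1_{\bK}$, $\|\psi\|_{L^2_{\bK}}=1$, satisfies $|a_{\aspar,\bK}(\psi,v)-\mu\,(\psi,v)_{L^2_{\bK}}|\le r\,\|v\|_{a_{\aspar,\bK}}$ for all $v\in H^1_{\bK}$, then $\operatorname{dist}\big(\mu,\sigma(\bbA_{\aspar,\bK})\big)\le C r$ for a constant $C$ depending only on a lower bound for $\aspar$ (through the $\aspar$-uniform equivalence of $\|\cdot\|_{a_{\aspar,\bK}}$ and $\|\cdot\|_{H^1_{\bK}}$ for $\aspar\ge1$, noted just before Proposition~\ref{prop.quasiestm}); this is the content of Appendix~\ref{app.Appendixquasimode}. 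Next I would apply this with $\mu=\lambda_D^M(\aspar)$ and $\psi=\Phi_1^{M+1}(\aspar,\cdot)/\|\Phi_1^{M+1}(\aspar,\cdot)\|_{L^2_{\bK}}$. Proposition~\ref{prop.quasiestm} gives residual $r\le C\aspar^{-(M+1)}$; one must also check the harmless normalization point, namely that $\|\Phi_1^{M+1}(\aspar,\cdot)\|_{L^2_{\bK}}=1+O(\aspar^{-1})$ — this is immediate since $\Phi_1^{(0)}=P_{n,\bK}^A$ is normalized and the corrections $\Phi_1^{(m)}$, $m\ge1$, are fixed $H^1_{\bK}$ (hence $L^2_{\bK}$) functions independent of $\aspar$, so dividing by this factor changes the residual only by $O(\aspar^{-(M+2)})$. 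We conclude $\operatorname{dist}\big(\lambda_D^M(\aspar),\sigma(\bbA_{\aspar,\bK})\big)\le C\aspar^{-(M+1)}$.

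The final step uses the spectral isolation to upgrade ``distance to the spectrum'' into ``distance to $\lambda_D(\aspar)$''. Since $\lambda_D^{(0)}=\tilde\delta_n=\delta_{2n}$ and the higher-order terms vanish as $\aspar\to\infty$, we have $\lambda_D^M(\aspar)\to\delta_{2n}$; likewise $\lambda_D(\aspar)\to\delta_{2n}$ by Theorem~\ref{thm.convunifbands}. Fix $\eta>0$ small enough that $(\delta_{2n}-\eta,\delta_{2n}+\eta)$ meets $\sigma(\bbA_{\aspar,\bK})$ only in $\{\lambda_D(\aspar)\}$ for $\aspar>\aspar_\eta$ (Step~1 of Theorem~\ref{th.degeneracy}). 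For $\aspar$ large, both $\lambda_D^M(\aspar)$ and $\lambda_D(\aspar)$ lie in $(\delta_{2n}-\eta/2,\delta_{2n}+\eta/2)$; since $C\aspar^{-(M+1)}<\eta/2$ for $\aspar$ large, the nearest spectral point to $\lambda_D^M(\aspar)$ is forced to be $\lambda_D(\aspar)$ itself, whence $|\lambda_D(\aspar)-\lambda_D^M(\aspar)|\le C\aspar^{-(M+1)}$, which is \eqref{eq.asympeigenval}. I expect the only genuine subtlety to be bookkeeping the $\aspar$-dependence of constants (ensuring $C$ in Proposition~\ref{prop.quasiestm} and in the quasi-mode lemma are both uniform in $\aspar\ge1$, which they are thanks to the Poincaré-type inequality) and confirming the isolation radius $\eta$ can be chosen once and for all independently of $M$; both are already in hand from Theorem~\ref{th.degeneracy} and Proposition~\ref{prop.quasiestm}, so no new estimate is needed.
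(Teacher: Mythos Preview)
Your proposal is correct and follows essentially the same route as the paper: apply the weak quasi-mode estimate of Appendix~\ref{app.Appendixquasimode} (Corollary~\ref{cor.eigvalest}) to the pair $(\lambda_D^M(\aspar),\Phi_1^{M+1}(\aspar,\cdot))$ via Proposition~\ref{prop.quasiestm}, then use the spectral isolation from Step~1 of Theorem~\ref{th.degeneracy} to identify the nearby spectral point as $\lambda_D(\aspar)$. One minor correction: the paper asserts only that $\|\cdot\|_{a_{\aspar,\bK}}$ \emph{dominates} $\|\cdot\|_{H^1_{\bK}}$ uniformly for $\aspar\ge1$, not full equivalence; the $\aspar$-uniform constant in the quasi-mode lemma comes instead from the direct computation $\|\Phi_1^{M+1}(\aspar,\cdot)\|_{a_{\aspar,\bK}}\to\|\nabla P_{n,\bK}^A\|_{L^2(\cell^+)}>0$ (using that $\Phi_1^{(0)}=P_{n,\bK}^A$ vanishes on $\cell^-$).
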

\begin{remark}
A consequence of \eqref{eq.asympeigenval} is that $\lambda_D^{(m)}$ and $\lambda_D^M(\aspar)$ are real for all $m, M\in \NO$ and  $\aspar>0$.
\end{remark}
 
\begin{proof}
 Let $M\in \NO$. First,  by Theorem \ref{th.degeneracy} we have for $\aspar$ sufficiently large that
  $\lambda_{D}(\aspar)=\lambda_{2n-1}(\aspar;\bK)=\lambda_{2n}(\aspar;\bK)$, and  is of multiplicity $2$.  
We will  show \eqref{eq.asympeigenval}, by applying the Corollary \ref{cor.eigvalest} to the approximate eigenpair (quasi-mode): $\lambda= \lambda_D^M(\aspar)$ and $u=\Phi_{1}^{M+1}(\aspar,\cdot)$ defined  in \eqref{eq.defquasimode},
using the bound \eqref{eq.estimationquasimode} of Proposition \ref{prop.quasiestm}. 

We next verify the assumptions of Corollary \ref{cor.eigvalest}. By  \eqref{eq.defquasimode}, we have
 $\lambda_D^M(\aspar)\to \tilde{\delta}_{n}>0$ as $\aspar \to +\infty $ which implies that $\operatorname{Re}(\lambda_D^M(\aspar))>0$ for $\aspar$ large enough. Moreover,  as $\Phi^{(0)}_1=P_{n,\bK}^A$ vanishes on $\cell^-$ and $P_{n,\bK}^A  \neq 0$, one gets
\begin{equation}\label{eq.termnormphiNlim}
\|\Phi_{1}^{M+1}(\aspar, \cdot)\|_{a_{\aspar,\bK}} \to \|\nabla  P_{n,\bK}^A\| _{L^2 (\cell^+)}>0, \mbox{ as  } \aspar \to + \infty.
\end{equation}
Thus, it follows that
$$(|\lambda_D^M(\aspar)|+1)^{-1}  \|\Phi_{1}^{M+1}(\aspar, \cdot)\|_{a_{\aspar,\bK}}  \to (\tilde{\delta}_{n}+1)^{-1} \|\nabla  P_{n,\bK}^A\|_{L^2 (\Omega^+)} >0.$$ 
We deduce that for $\aspar$ large enough, $C/\aspar^{M+1} < (|\lambda_D^M(\aspar)|+1)^{-1}  \|\Phi_{1}^{M+1}(\aspar, \cdot)\|_{a_{\aspar,\bK}}  $, wherer $C>0$ (independent of $\aspar$) is the constant in the estimate \eqref{eq.estimationquasimode}. Therefore, the bound  \eqref{eq.estimationquasimode} and Corollary \ref{cor.eigvalest} imply that for $\aspar$ large enough, there exists $\tilde{\lambda}(\aspar,\bK)\in \sigma(\bbA(\aspar,\bK))$
such that:
$$
|\lambda_D^M(\aspar)-\tilde{\lambda}(\aspar,\bK)|\leq  \frac{C}{\|\Phi_{1}^{M+1}(\aspar, \cdot)\|_{a_{\aspar,\bK}} \  \aspar^{M+1}} \ (|\lambda_D^M(\aspar)|+1).
$$
Therefore, there is a constant $\widetilde{C}>0$, independent  of $\aspar$, such that for $\aspar$ large enough:
\begin{equation}\label{eq.estimateeigen}
|\lambda_D^M(\aspar,\bK)-\tilde{\lambda}(\aspar,\bK)|\leq   \widetilde{C} \aspar^{-(M+1)}.
\end{equation}
Since  $\lambda_D^M(\aspar,\bK) \to \tilde{\delta}_n=\delta_{2n}$ as $\aspar\to +\infty$, it follows form \eqref{eq.estimateeigen} that $\tilde{\lambda}(\aspar,\bK) \to \delta_{2n}$. Moreover, by Theorem \ref{thmsaympAtau} and  Proposition \ref{lem.eigenDirichlettwoinclusions}, one has $\lambda_{2n-1}(\aspar;\bK)=\lambda_{2n}(\aspar;\bK)=\lambda_{D}(\aspar)\to \delta_{2n}$,
$\lambda_{2n+1}(\aspar;\bK)\to \delta_{2n+1}>\delta_{2n}$  if $n\geq 1$ and if $n>1$, $\lambda_{2n-2}(\aspar;\bK)\to \delta_{2n-2}<\delta_{2n}$ for $\aspar\to +\infty$. Thus, one has necessarily  $\tilde{\lambda}(\aspar,\bK)=\lambda_D(\aspar)$ for  all $\aspar $ sufficiently large.

Finally, we show that $\lambda_D^{(n)}$ and $\lambda_D^N(\aspar,\bK)$ are real-valued.  Indeed, we know that
$\lambda_D^{0}(\aspar)=\lambda^{(0)}_D=\tilde{ \delta}_n\in \mathbb{R}$ and also from \eqref{eq.deflmabda1} that $\lambda_D^{(1)}<0$. Therefore,  $\lambda_{D}^1(\aspar)=\lambda^{(0)}_D+\aspar^{-1} \lambda_D^{(1)}\in \mathbb{R}$ for all $\aspar>0$. Unfortunately,  that $\lambda^{(m)}_D$ and $\lambda_D^{M}(\aspar)$ are real-valued is not easily deduced from formula  \eqref{eq.deflambdan} for $m, M>1$. However,  we can  straightforwardly verify this from \eqref{eq.asympeigenval} by induction. 
Indeed, assume that for all $0\leq m\leq M$, $\lambda^{(m)}_D$ and $\lambda_D^{m}(\aspar)$ are real-valued for any $\aspar>0$.
Using the relation \eqref{eq.asympeigenval} at the order $M+1$ leads to
$$
\lim_{\aspar \to \infty} \aspar^{M+1} \big( \lambda_D(\aspar) - \lambda_D^{M}(\aspar) \big)= \lambda^{(M+1)}_D 
$$
where $ \lambda_D(\aspar)\in \sigma(\bbA_{\aspar,\bK})\subset \R$ and $\lambda_D^{M}(\aspar)$ is real-valued by induction. Hence,  taking the latter limit, one deduced  that  $\lambda^{(M+1)}_D\in \R$.  Thus $\lambda^{M+1}_D(\aspar)=\lambda_D^{M}(\aspar)+\lambda^{(M+1)}/\aspar^{M+1}$ is also real-valued for any $\aspar>0$.
\end{proof}

We next address bounds on the truncation error for our asymptotic expansions of the eigenfunctions.
\begin{theorem}\label{th.eigenaprox}(Asymptotic expansion of the eigenfunction  in the $\|\cdot \|_{a_{\aspar}}$ norm)\\
Assume that  $\tilde{\delta}_n$ satisfies the condition $\condS$ for a fixed $n\geq 1$ and let $M\in \NO$. Then, for $\aspar$ large enough, there exists an eigenfunction $\Psi^{(1)}(\aspar,\cdot)\in \operatorname{ker}(\bbA_{\aspar,\bK}-\lambda_D(\aspar) \mathrm{I}_d)$ (with $\lambda_D(\aspar) =\lambda_{2n}(\aspar;\bK)=\lambda_{2n-1}(\aspar;\bK)$)  which satisfies $\|\Psi^{(1)}(\aspar,\cdot)  \|=1$ (where $\| \cdot\|$ is the $L^2_{\bK}$-norm) and $C>0$ such that:
\begin{equation}\label{eq.asympeigenfunc}
\Big\| \Psi^{(1)}(\aspar,\cdot)- \frac{ \Phi_{1}^{M+1}(\aspar,\cdot)}{\| \Phi_{1}^{M+1}(\aspar,\cdot)\|} \Big\|_{a_{\aspar,\bK}} \leq \frac{C}{\aspar^{M+1}}.
\end{equation}
\end{theorem}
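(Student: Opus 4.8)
The plan is to recast the weak quasi-mode estimate of Proposition~\ref{prop.quasiestm} as a residual bound for a compact self-adjoint operator on the energy space, then project the quasi-mode orthogonally onto the two-dimensional eigenspace of the Dirac eigenvalue and control the resulting normalization error. Throughout, $\aspar\ge1$ and $\tilde\delta_n$ satisfies $\condS$, so the conclusions of Theorems~\ref{thm.convunifbands} and~\ref{th.degeneracy} are available. I would work in the Hilbert space $\calH_\aspar:=\big(H^1_{\bK},(\cdot,\cdot)_{a_{\aspar,\bK}}\big)$ and introduce the operator $T_\aspar\in B(\calH_\aspar)$ defined by $(T_\aspar u,v)_{a_{\aspar,\bK}}=(u,v)_{L^2_{\bK}}$ for all $u,v\in H^1_{\bK}$; by the Riesz theorem it is well defined, bounded, self-adjoint and positive, and it is compact since $H^1_{\bK}\hookrightarrow L^2_{\bK}$ compactly. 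For $\mu>0$ one has $\operatorname{ker}(T_\aspar-\mu\,\mathrm{I}d)=\operatorname{ker}(\bbA_{\aspar,\bK}-\mu^{-1}\mathrm{I}d)$ (weak form of the eigenvalue problem), so $\sigma(T_\aspar)\setminus\{0\}=\{\lambda_m(\aspar;\bK)^{-1}:m\ge1\}$. Writing $\mu^M_\aspar:=\lambda_D^M(\aspar)^{-1}$ and taking the supremum over $\|v\|_{a_{\aspar,\bK}}=1$ in \eqref{eq.estimationquasimode} (and dividing by $\lambda_D^M(\aspar)\to\tilde\delta_n>0$), Proposition~\ref{prop.quasiestm} becomes, for $\aspar$ large and with $C$ independent of $\aspar$,
\begin{equation}\label{eq.Tresidual}
\big\|\,(T_\aspar-\mu^M_\aspar)\,\Phi_{1}^{M+1}(\aspar,\cdot)\,\big\|_{a_{\aspar,\bK}}\ \le\ C\,\aspar^{-(M+1)}.
\end{equation}

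Next I would record a uniform spectral gap. By Theorems~\ref{th.degeneracy} and~\ref{thm.convunifbands}, for $\aspar$ large $\lambda_D(\aspar)=\lambda_{2n-1}(\aspar;\bK)=\lambda_{2n}(\aspar;\bK)$ has multiplicity exactly two and there is $d_0>0$, independent of (large) $\aspar$, such that $\operatorname{dist}\big(\lambda_D(\aspar),\sigma(\bbA_{\aspar,\bK})\setminus\{\lambda_D(\aspar)\}\big)\ge d_0$ (the neighbouring eigenvalues converge to limits $\ne\delta_{2n}$, resp.\ to $+\infty$, while $\lambda_D(\aspar)\to\delta_{2n}>0$). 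Applying $\lambda\mapsto\lambda^{-1}$ yields a uniform $\tilde d_0>0$ with $\operatorname{dist}\big(\mu^*_\aspar,\sigma(T_\aspar)\setminus\{\mu^*_\aspar\}\big)\ge\tilde d_0$, where $\mu^*_\aspar:=\lambda_D(\aspar)^{-1}\in\sigma(T_\aspar)$. By Theorem~\ref{th.asympteigen}, $|\lambda_D^M(\aspar)-\lambda_D(\aspar)|=O(\aspar^{-(M+1)})$, hence $|\mu^M_\aspar-\mu^*_\aspar|=O(\aspar^{-(M+1)})$; since $\|\Phi_{1}^{M+1}(\aspar,\cdot)\|_{a_{\aspar,\bK}}$ is bounded (cf.\ \eqref{eq.termnormphiNlim}), \eqref{eq.Tresidual} gives $\|(T_\aspar-\mu^*_\aspar)\Phi_{1}^{M+1}(\aspar,\cdot)\|_{a_{\aspar,\bK}}=O(\aspar^{-(M+1)})$.

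Now let $E_\aspar:=\bbE_{T_\aspar}(\{\mu^*_\aspar\})$ be the spectral projection of $T_\aspar$, i.e.\ the $(\cdot,\cdot)_{a_{\aspar,\bK}}$-orthogonal projection of $\calH_\aspar$ onto $\operatorname{ker}(\bbA_{\aspar,\bK}-\lambda_D(\aspar)\mathrm{I}d)$. Decomposing $\Phi_{1}^{M+1}(\aspar,\cdot)=E_\aspar\Phi_{1}^{M+1}(\aspar,\cdot)+w_\aspar$ $a_{\aspar,\bK}$-orthogonally and using the spectral theorem for $T_\aspar$ on the range of $\mathrm{I}d-E_\aspar$, $\tilde d_0\,\|w_\aspar\|_{a_{\aspar,\bK}}\le\|(T_\aspar-\mu^*_\aspar)w_\aspar\|_{a_{\aspar,\bK}}\le\|(T_\aspar-\mu^*_\aspar)\Phi_{1}^{M+1}(\aspar,\cdot)\|_{a_{\aspar,\bK}}=O(\aspar^{-(M+1)})$, so $\|w_\aspar\|_{a_{\aspar,\bK}}=O(\aspar^{-(M+1)})$, and by the uniform Poincar\'e-type domination $\|\cdot\|_{L^2_{\bK}}\le C\|\cdot\|_{a_{\aspar,\bK}}$ (valid for $\aspar\ge1$, stated before Proposition~\ref{prop.quasiestm}) also $\|w_\aspar\|=O(\aspar^{-(M+1)})$. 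Define $\Psi^{(1)}(\aspar,\cdot):=E_\aspar\Phi_{1}^{M+1}(\aspar,\cdot)\big/\big\|E_\aspar\Phi_{1}^{M+1}(\aspar,\cdot)\big\|$: then $\|\Psi^{(1)}(\aspar,\cdot)\|=1$ and $\Psi^{(1)}(\aspar,\cdot)\in\operatorname{ker}(\bbA_{\aspar,\bK}-\lambda_D(\aspar)\mathrm{I}d)$ (in fact in $L^2_{\bK,\tau}$, since $E_\aspar$ commutes with $\mathcal R$ and $\Phi_{1}^{M+1}(\aspar,\cdot)\in L^2_{\bK,\tau}$). As $\Phi_{1}^{(0)}=P_{n,\bK}^A$ is $L^2_{\bK}$-normalized and the $\Phi_{1}^{(m)}$, $m\ge1$, are $L^2_{\bK}$-orthogonal to it, $\|\Phi_{1}^{M+1}(\aspar,\cdot)\|\to1$ and $\|\Phi_{1}^{M+1}(\aspar,\cdot)\|_{a_{\aspar,\bK}}\to\|\nabla P_{n,\bK}^A\|_{L^2(\cell^+)}>0$, so $\|E_\aspar\Phi_{1}^{M+1}(\aspar,\cdot)\|$ and $\|E_\aspar\Phi_{1}^{M+1}(\aspar,\cdot)\|_{a_{\aspar,\bK}}$ are bounded above and below by positive constants for $\aspar$ large. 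Writing
\begin{equation*}
\Psi^{(1)}(\aspar,\cdot)-\frac{\Phi_{1}^{M+1}(\aspar,\cdot)}{\|\Phi_{1}^{M+1}(\aspar,\cdot)\|}
=E_\aspar\Phi_{1}^{M+1}(\aspar,\cdot)\left(\frac{1}{\|E_\aspar\Phi_{1}^{M+1}(\aspar,\cdot)\|}-\frac{1}{\|\Phi_{1}^{M+1}(\aspar,\cdot)\|}\right)-\frac{w_\aspar}{\|\Phi_{1}^{M+1}(\aspar,\cdot)\|},
\end{equation*}
bounding the scalar factor by $\big|\,\|\Phi_{1}^{M+1}(\aspar,\cdot)\|-\|E_\aspar\Phi_{1}^{M+1}(\aspar,\cdot)\|\,\big|\le\|w_\aspar\|$, both summands have $\|\cdot\|_{a_{\aspar,\bK}}$-norm $O(\aspar^{-(M+1)})$, which is \eqref{eq.asympeigenfunc}.

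The delicate point, and the only real obstacle, is the uniformity in $\aspar$ of all constants --- the gap $\tilde d_0$ (which rests on Theorems~\ref{thm.convunifbands} and~\ref{th.degeneracy}), the Poincar\'e constant, and the two-sided bounds on $\|\Phi_{1}^{M+1}(\aspar,\cdot)\|$ and $\|E_\aspar\Phi_{1}^{M+1}(\aspar,\cdot)\|$ --- together with the fact that the relevant projection must be taken orthogonally in the $\aspar$-dependent inner product $(\cdot,\cdot)_{a_{\aspar,\bK}}$ rather than in $L^2_{\bK}$. Equivalently, one may simply invoke the eigenfunction counterpart of the weak quasi-mode principle of Appendix~\ref{app.Appendixquasimode} (the companion of Corollary~\ref{cor.eigvalest}), whose hypotheses are exactly estimate \eqref{eq.estimationquasimode} and the isolation of $\lambda_D(\aspar)$; the argument above is a direct unpacking of it.
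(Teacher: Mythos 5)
Your proof is correct and takes essentially the same route as the paper's: the operator $T_\aspar$ you introduce is precisely the operator $\OpcompApp$ of Appendix~\ref{app.Appendixquasimode}, and your computation is a direct unpacking of Corollary~\ref{cor.eigenaprox}, which the paper verifies (estimate \eqref{eq.quasimodeweakestimateeigenfunc1} for $u_\aspar=\Phi_1^{M+1}/\|\Phi_1^{M+1}\|$ with the true eigenvalue $\lambda_D(\aspar)$, plus the uniform lower bound on the reciprocal spectral gap via \eqref{eq.defdistancespec}) and then invokes directly, as you yourself observe in your closing remark.
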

\begin{proof}To prove this result,
we apply the Corollary \ref{cor.eigenaprox} to $u_{\aspar}=\Phi_{1}^{M+1}(\aspar,\cdot)/\| \Phi_{1}^{M+1}(\aspar,\cdot)\|$.
We first  establish an estimate of the form \eqref{eq.quasimodeweakestimateeigenfunc1} for  $u_\aspar$ and the eigenvalue  $\lambda_D(\aspar)$ of $\bbA_{\aspar,\bK}$. 
For all $v\in H^{1}_{\bK}$,  one has
\begin{eqnarray*}
&&  \Big|a_{\aspar,\bK}\big(u_{\aspar},v\big)- \lambda_D(\aspar) \ \big(u_{\aspar},v\big)\Big|  \\[4pt]
 && \leq  \| \Phi_{1}^{M+1}(\aspar,\cdot)\|^{-1} \big( |a_{\aspar,\bK}\big(\Phi_{1}^{M+1}(\aspar,\cdot),v\big)-\  \lambda_D^M(\aspar) \, (\Phi_{1}^{M+1}(\aspar,\cdot),v)| \big) \\ [4pt]
  && \ \ +   |   \lambda_D(\aspar)- \lambda_D^M(\aspar)|  \ \|v \|_{ a_{\aspar,\bK}}.
\end{eqnarray*}
By  estimate  \eqref{eq.estimationquasimode}, the asymptotic relation \eqref{eq.asympeigenval} and the fact that $\|\Phi_{1}^{M+1}(\aspar,\cdot)\|  \to \| P_{n,\bK}^A \|=1 $ as $\aspar\to +\infty$,  there exists $C_1>0$ such that for all $\aspar$ sufficiently large:
$$
\Big|a_{\aspar,\bK}\big(u_{\aspar},v\big)- \lambda_D(\aspar) \ \big( u_{\aspar},v\big)\Big| \leq C_1 \|v \|_{ a_{\aspar,\bK}}  \,  \aspar^{-(M+1)}, \quad  \ \textrm{for all}\ \  v\in H^{1}_{\bK}.
$$
This is precisely an estimate of the form \eqref{eq.quasimodeweakestimateeigenfunc1}. 

We next need require, for large $\aspar$, a lower bound on $ \operatorname{dis}(\lambda_D(\aspar)^{-1},\sigma(\bbA_{\aspar,\bK}^{-1})\setminus\{\lambda_D(\aspar)^{-1}\})$; see  Corollary \ref{cor.eigenaprox} . By Theorem \ref{th.degeneracy},   for $\aspar$ large: $\lambda_{2n-1}(\aspar;\bK)=\lambda_{2n}(\aspar;\bK)=\lambda_D(\aspar)$ is of multiplicity $2$.  Thus,
\begin{equation} \label{eq.defdistancespec}
\hspace{-0.1cm} \operatorname{dis}(\lambda_D(\aspar)^{-1},\sigma(\bbA_{\aspar,\bK}^{-1})\setminus\{\lambda_D(\aspar)^{-1}\})=\ 
  \hspace{-0.1cm} \begin{cases} 
  &  \hspace{-0.35cm} \lambda_D(\aspar)^{-1}-  \lambda_{3}(\aspar;\bK)^{-1}  \textrm{ if $n=1$}\\[5pt]
&  \hspace{-0.35cm}  \min \big( \lambda_{2n-2}(\aspar;\bK)^{-1}- \lambda_D(\aspar)^{-1}, \lambda_D(\aspar)^{-1}-\lambda_{2n+1}(\aspar;\bK)^{-1} \big) \textrm{ if $n>1$}.
 \end{cases}
\end{equation}
We deduce, using Theorem \ref{thmsaympAtau}, \eqref{eq.termnormphiNlim}, $\lambda_D(\aspar)\to \delta_{2n}$  and $\|\Phi_{1}^{M+1}(\aspar,\cdot)\|  \to 1$, that
$$
b(\aspar)=1/2 \, \lambda_D(\aspar) \, \|u_{\aspar}\|_{a_{\aspar,\bK}} \operatorname{dis}(\lambda_D(\aspar)^{-1},\sigma(\bbA_{\aspar,\bK}^{-1})\setminus\{\lambda_D(\aspar)^{-1}\}) \to C_2>0 , \mbox{ as } \aspar \to +\infty,
$$
with 
\begin{equation*}
 C_2=\ 
\begin{cases} 
  & \displaystyle  \frac{ \delta_2}{2} \, \|\nabla  P_{n,\bK}^A\| _{L^2 (\cell^+)} (\delta_{2}^{-1}- \delta_{3}^{-1})  \mbox{ for } n=1 \\[5pt]
&    \displaystyle \frac{ \delta_{2n}}{2}  \, \|\nabla  P_{n,\bK}^A\| _{L^2 (\cell^+)} \min( \delta_{2n-2}^{-1}- \delta_{2n}^{-1}, \delta_{2n}^{-1}- \delta_{2n+1}^{-1} ) \ \mbox{ for } n>1.
 \end{cases}
\end{equation*}
Therefore, for $\aspar$ sufficiently large, $0<C_1\, \aspar^{-(M+1)}<\min( b(\aspar),1) $. It follows using Corollary \ref{cor.eigenaprox} that for $\aspar$ large, there exists $\Psi^{(1)}(\aspar,\cdot) \in L^2_{\bK}$ an eigenfunction of $\bbA_{\aspar,\bK}$ associated to the eigenvalue $\lambda_D(\aspar)$ which satisfies  and $\|\Psi^{(1)}(\aspar,\cdot)  \|=1$,
and  such that
\begin{equation} \label{eq.defdconstanteconv}
\Big\| \Psi^{(1)}(\aspar,\cdot)- \frac{ \Phi_{1}^{M+1}(\aspar,\cdot)}{\| \Phi_{1}^{M+1}(\aspar,\cdot)\|} \Big\|_{a_{\aspar,\bK}} \leq  \frac{\widetilde{C}(\aspar) \, C_1}{\aspar^{M+1}} 
\end{equation}
where
$$
 \widetilde{C}(\aspar)=\widetilde{C}_1(\aspar) +\lambda_{D}(\aspar)^{-\frac{1}{2}}+ \lambda_{D}(\aspar)^{\frac{1}{2}}   \lambda_1(\aspar;\bK)^{-\frac{1}{2}},
\mbox{ with } \widetilde{C}_1(\aspar)=\frac{4 \, \lambda_D(\aspar)^{-1}}{  \operatorname{dis}(\lambda_D(\aspar)^{-1},\sigma(\bbA_{\aspar,\bK}^{-1}\setminus\{\lambda_D(\aspar)^{-1}\})}.$$
The expression of $\widetilde{C}(\aspar)$ comes from \eqref{eq.constante}. Using   \eqref{eq.defdistancespec},  \eqref{eq.defdconstanteconv} and Theorem \ref{thmsaympAtau}, one gets that  $\widetilde{C}(\aspar)\to C_3>0$, as $\aspar\to+\infty$ where the constant $C_3$  can be easily made explicit.
We conclude with \eqref{eq.defdconstanteconv} that for $\aspar$ large enough, there exists $C_4>0$ (independent of $\aspar$) such that
$$
\Big\| \Psi^{(1)}(\aspar,\cdot)- \frac{ \Phi_{1}^{M+1}(\aspar,\cdot)}{\| \Phi_{1}^{M+1}(\aspar,\cdot)\|} \Big\|_{a_{\aspar,\bK}} \leq  \frac{ C_4}{\aspar^{M+1}}.
$$
\end{proof}
In the following corollary we construct (more explicitly  than  in Theorem \ref{th.eigenaprox}) an eigenfunction of $\operatorname{ker}(\bbA_{\aspar,\bK}-\lambda_D(\aspar) \mathrm{I}_d)$
that is approximated by $\Phi_{1}^{M+1}(\aspar,\cdot)/\| \Phi_{1}^{M+1}(\aspar,\cdot)\|$.
\begin{corollary}\label{cor.defphi1}
Assume that  $\tilde{\delta}_n$ satisfies the condition $\condS$ for a fixed $n\geq 1$ and let $M\in \NO$. Then,  for  $\aspar $ sufficiently large, $\{\Phi_1(\aspar, \cdot),\Phi_2(\aspar, \cdot)\}$,  defined by 
\begin{equation}\label{eq.defPhi1proj}
\Phi_1(\aspar, \cdot)=\frac{\mathbb{E}_{\bbA_{\aspar,\bK}}(\{\lambda_D(\aspar)\})\,\Phi_{1}^{M+1}(\aspar,\cdot)}{\|\mathbb{E}_{\bbA_{\aspar,\bK}}(\{\lambda_D(\aspar)\})\, \Phi_{1}^{M+1}(\aspar,\cdot)\|} \in L^2_{\bK,\tau} \ \mbox{ and  }  \ \Phi_2(\aspar, \cdot)=\mathcal{P} \mathcal{C}\Phi_1(\aspar, \cdot)\in L^2_{\bK,\overline{\tau}}\ ,
\end{equation}
is an orthonormal basis of $\operatorname{ker}(\bbA_{\aspar,\bK}-\lambda_D(\aspar) \mathrm{I}_d)$  with $\lambda_D(\aspar) =\lambda_{2n}(\aspar;\bK)=\lambda_{2n-1}(\aspar;\bK)$
and  there exists $C>0$ such that:
\begin{equation}\label{eq.asympeigenfunc2}
\Big\|\Phi_1(\aspar, \cdot) - \frac{ \Phi_{1}^{M+1}(\aspar,\cdot)}{\| \Phi_{1}^{M+1}(\aspar,\cdot)\|} \Big\|_{a_{\aspar,\bK}} \leq \frac{C}{\aspar^{M+1}} \ \mbox{ and }  \ \Big\|\Phi_2(\aspar, \cdot) -\frac{  \mathcal{P} \mathcal{C} \, \Phi_{1}^{M+1}(\aspar,\cdot)}{\| \Phi_{1}^{M+1}(\aspar,\cdot)\|}  \Big\|_{a_{\aspar,\bK}} \leq \frac{C}{\aspar^{M+1}}.
\end{equation}
\end{corollary}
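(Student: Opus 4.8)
The plan is to derive the corollary from Theorem~\ref{th.eigenaprox} (which already produces \emph{some} normalized eigenfunction $\Psi^{(1)}(\aspar,\cdot)\in\operatorname{ker}(\bbA_{\aspar,\bK}-\lambda_D(\aspar)\mathrm{I}_d)$ obeying \eqref{eq.asympeigenfunc}) together with the structural output of Theorem~\ref{th.degeneracy} (for $\aspar>\aspar_\star$ the eigenvalue $\lambda_D(\aspar)=\lambda_{2n-1}(\aspar;\bK)=\lambda_{2n}(\aspar;\bK)$ has multiplicity $2$, and its eigenspace splits through the symmetry operators) and the commutation relations of Proposition~\ref{prop.commutAk}. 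The first step is to check that the right-hand side of \eqref{eq.defPhi1proj} is well defined, i.e.\ that $\mathbb{E}_{\bbA_{\aspar,\bK}}(\{\lambda_D(\aspar)\})\Phi_{1}^{M+1}(\aspar,\cdot)\neq 0$ for $\aspar$ large. Since $\|\cdot\|_{a_{\aspar,\bK}}$ dominates $\|\cdot\|_{H^1_{\bK}}$, hence $\|\cdot\|_{L^2_{\bK}}$, uniformly in $\aspar\ge 1$, the estimate \eqref{eq.asympeigenfunc} gives $\|\Psi^{(1)}(\aspar,\cdot)-\Phi_{1}^{M+1}(\aspar,\cdot)/\|\Phi_{1}^{M+1}(\aspar,\cdot)\|\|_{L^2_{\bK}}=O(\aspar^{-(M+1)})$; applying the orthogonal projection $\mathbb{E}_{\bbA_{\aspar,\bK}}(\{\lambda_D(\aspar)\})$ (which fixes $\Psi^{(1)}(\aspar,\cdot)$ and has operator norm $\le 1$) to this difference yields
\[
\Big\|\,\mathbb{E}_{\bbA_{\aspar,\bK}}(\{\lambda_D(\aspar)\})\,\tfrac{\Phi_{1}^{M+1}(\aspar,\cdot)}{\|\Phi_{1}^{M+1}(\aspar,\cdot)\|}-\Psi^{(1)}(\aspar,\cdot)\Big\|_{L^2_{\bK}}=O(\aspar^{-(M+1)}).
\]
Hence the projected vector has $L^2_{\bK}$-norm tending to $\|\Psi^{(1)}(\aspar,\cdot)\|=1$, so it is nonzero for large $\aspar$ and $\Phi_1(\aspar,\cdot)$ is a genuine normalized eigenfunction of $\bbA_{\aspar,\bK}$ for $\lambda_D(\aspar)$.

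The second step is the symmetry/orthonormality bookkeeping. Each $\Phi_1^{(m)}$, and hence $\Phi_{1}^{M+1}(\aspar,\cdot)$, lies in $L^2_{\bK,\tau}$; since $\mathcal{R}$ commutes with $\bbA_{\aspar,\bK}$ (Proposition~\ref{prop.commutAk}) it commutes with $\mathbb{E}_{\bbA_{\aspar,\bK}}(\cdot)$, so the projection and therefore $\Phi_1(\aspar,\cdot)$ remain in $L^2_{\bK,\tau}$. By \eqref{eq.propcommutS} of Corollary~\ref{cor.commut}, $\Phi_2(\aspar,\cdot)=\mathcal{P}\mathcal{C}\Phi_1(\aspar,\cdot)\in L^2_{\bK,\overline{\tau}}$; because $\mathcal{P}\mathcal{C}$ commutes with $\bbA_{\aspar,\bK}$ and is anti-unitary, $\Phi_2(\aspar,\cdot)$ is again a normalized eigenfunction for $\lambda_D(\aspar)$. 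Orthogonality $\Phi_1(\aspar,\cdot)\perp\Phi_2(\aspar,\cdot)$ follows from $L^2_{\bK,\tau}\perp L^2_{\bK,\overline{\tau}}$, and since $\operatorname{ker}(\bbA_{\aspar,\bK}-\lambda_D(\aspar)\mathrm{I}_d)$ is two-dimensional (Theorem~\ref{th.degeneracy}), $\{\Phi_1(\aspar,\cdot),\Phi_2(\aspar,\cdot)\}$ is an orthonormal basis of it.

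The last step is to upgrade the $L^2_{\bK}$ closeness to the $\|\cdot\|_{a_{\aspar,\bK}}$ bound \eqref{eq.asympeigenfunc2}. The key remark is that for $u,v$ in the eigenspace $\operatorname{ker}(\bbA_{\aspar,\bK}-\lambda_D(\aspar)\mathrm{I}_d)\subset D(\bbA_{\aspar,\bK}^{1/2})=H^1_{\bK}$ one has $a_{\aspar,\bK}(u,v)=\lambda_D(\aspar)(u,v)_{L^2_{\bK}}$, so on that finite-dimensional space the two norms differ only by the bounded factor $\sqrt{\lambda_D(\aspar)}\to\sqrt{\delta_{2n}}$. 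Applying this to the normalized eigenfunctions $\Phi_1(\aspar,\cdot)$ and $\Psi^{(1)}(\aspar,\cdot)$ gives $\|\Phi_1(\aspar,\cdot)-\Psi^{(1)}(\aspar,\cdot)\|_{a_{\aspar,\bK}}=O(\aspar^{-(M+1)})$, and combining this with \eqref{eq.asympeigenfunc} through the triangle inequality yields the first estimate in \eqref{eq.asympeigenfunc2}. For the second estimate I would use that $\mathcal{P}\mathcal{C}$, commuting with $\bbA_{\aspar,\bK}$ and being $L^2_{\bK}$-anti-unitary, is an anti-isometry for $\|\cdot\|_{a_{\aspar,\bK}}$ on $H^1_{\bK}$ (apply $\mathcal{P}\mathcal{C}$ to $\bbA_{\aspar,\bK}^{1/2}$, or change variables $\bx\mapsto2\bx_c-\bx$); since $\Phi_2(\aspar,\cdot)-\mathcal{P}\mathcal{C}\Phi_{1}^{M+1}(\aspar,\cdot)/\|\Phi_{1}^{M+1}(\aspar,\cdot)\|=\mathcal{P}\mathcal{C}\big(\Phi_1(\aspar,\cdot)-\Phi_{1}^{M+1}(\aspar,\cdot)/\|\Phi_{1}^{M+1}(\aspar,\cdot)\|\big)$, the second bound is the $\mathcal{P}\mathcal{C}$-image of the first. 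The only mildly delicate point, and the one I expect to need most care, is keeping the various normalization denominators under uniform control in $\aspar$ while chaining these estimates; this is handled by noting that $\Psi^{(1)}(\aspar,\cdot)$, $\Phi_1(\aspar,\cdot)$, and $\mathbb{E}_{\bbA_{\aspar,\bK}}(\{\lambda_D(\aspar)\})\Phi_{1}^{M+1}(\aspar,\cdot)$ are all within $O(\aspar^{-(M+1)})$ of the \emph{same} quasi-mode $\Phi_{1}^{M+1}(\aspar,\cdot)/\|\Phi_{1}^{M+1}(\aspar,\cdot)\|$, whose $L^2_{\bK}$-norm converges to $\|P_{n,\bK}^A\|=1$.
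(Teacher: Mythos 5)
Your proposal is correct, and its skeleton matches the paper's proof: both reduce to Theorem~\ref{th.eigenaprox}, both use the orthogonality of $L^2_{\bK,\tau}$ and $L^2_{\bK,\overline{\tau}}$ together with the two-dimensionality of the eigenspace from Theorem~\ref{th.degeneracy} for the basis claim, and both finish the second estimate by observing that $\mathcal{PC}$ is an $a_{\aspar,\bK}$-anti-isometry (via \eqref{eq.commut3} and the inversion-symmetry of $\sigma_{\aspar}$). Two of your ingredients, however, differ from the paper's in a way worth noting. For well-definedness of the normalization in \eqref{eq.defPhi1proj}, you work entirely with the quasi-mode estimate \eqref{eq.asympeigenfunc}, pushing it through the projection $\mathbb{E}_{\bbA_{\aspar,\bK}}(\{\lambda_D(\aspar)\})$ to get a lower bound on the $L^2_{\bK}$-norm of the projected vector; the paper instead invokes Lemma~\ref{lem.convspecproj} (strong convergence of spectral projectors under norm-resolvent convergence) to show $\mathbb{E}_{\bbA_{\aspar,\bK}}(\{\lambda_D(\aspar)\})\Phi_1^{M+1}(\aspar,\cdot)\to P_{n,\bK}^A\neq 0$. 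Your route is self-contained and needs no spectral-projector limit. For the $a_{\aspar,\bK}$-estimate, you route the triangle inequality through $\Psi^{(1)}(\aspar,\cdot)$ and exploit the identity $a_{\aspar,\bK}(u,v)=\lambda_D(\aspar)(u,v)_{L^2_{\bK}}$ on the eigenspace, which turns the $a_{\aspar,\bK}$-distance between two eigenvectors into a $\sqrt{\lambda_D(\aspar)}$-multiple of their $L^2_{\bK}$-distance; the paper instead routes through $\mathbb{E}_{\bbA_{\aspar,\bK}}(\{\lambda_D(\aspar)\})\Phi_1^{M+1}(\aspar,\cdot)/\|\Phi_1^{M+1}(\aspar,\cdot)\|$ and uses that the spectral projection has operator norm one on $(H^1_{\bK},\|\cdot\|_{a_{\aspar,\bK}})$. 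Your observation is slightly cleaner but ultimately leans on the same facts. The one place you should firm up is the step from step~1 to ``$\|\Phi_1(\aspar,\cdot)-\Psi^{(1)}(\aspar,\cdot)\|_{L^2_{\bK}}=O(\aspar^{-(M+1)})$'': you know $\mathbb{E}_{\bbA_{\aspar,\bK}}(\{\lambda_D(\aspar)\})\Phi_1^{M+1}/\|\Phi_1^{M+1}\|$ is $O(\aspar^{-(M+1)})$-close to $\Psi^{(1)}$, but $\Phi_1(\aspar,\cdot)$ is the \emph{renormalization} of the projected vector by its own norm, so you need the elementary estimate
\begin{equation*}
\Bigl|\,\|\mathbb{E}_{\bbA_{\aspar,\bK}}(\{\lambda_D(\aspar)\})\Phi_1^{M+1}\|-\|\Phi_1^{M+1}\|\,\Bigr|\ \le\ \bigl\|(\mathrm{I}-\mathbb{E}_{\bbA_{\aspar,\bK}}(\{\lambda_D(\aspar)\}))\Phi_1^{M+1}\bigr\|\ \le\ \|\Phi_1^{M+1}\|\,\bigl\|\Phi_1^{M+1}/\|\Phi_1^{M+1}\|-\Psi^{(1)}\bigr\| ,
\end{equation*}
using Pythagoras and minimality of the orthogonal projection, before you can compare $\Phi_1(\aspar,\cdot)$ with $\Psi^{(1)}(\aspar,\cdot)$. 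You correctly flag this as the ``mildly delicate'' point; once spelled out along these lines the proof is complete.
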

\begin{proof}
We first prove that \eqref{eq.defPhi1proj} is well-defined for $\aspar$ large enough.
By Theorem \ref{th.degeneracy},  $\lambda_{2n}(\aspar;\bK)=\lambda_{2n-1}(\aspar;\bK)=\lambda_D(\aspar)$ is an eigenvalue of multiplicity $2$ of $\bbA_{\aspar,\bK}$.
Thus, from steps 1 and 2 of the proof of Theorem 
\ref{th.degeneracy}, we know that for any fixed $\eta$ satisfying $0<\eta<\min(\delta_{2},\nu_{3}-\delta_{2})$ if $n=1$ or $0<\eta<\min(\delta_{2n}-\delta_{2n-2},\nu_{2n+1}-\delta_{2n})$ if $n>1$, the relations \eqref{eq.speclocl} and  \eqref{eq.specprojequality} hold (for $\aspar$ large enough) with   $\{\lambda_{2n}(\aspar;\bK),\lambda_{2n-1}(\aspar;\bK)\}=\{\lambda_D(\aspar)\}$ (and $\bbA_{\aspar,\bK}$ instead of $\bbA_{\aspar_m,\bK}$) and 
 $\bbE_{\bbA_{\infty,\bK}}((\delta_{2n}-\eta,\delta_{2n}+\eta))=\bbE_{\bbA_{\infty,\bK}}(\{\delta_{2n}\}) $. Hence, by Lemma \ref{lem.convspecproj},  one has:
$$
\bbE_{\bbA_{\aspar,\bK}}((\delta_{2n}-\eta,\delta_{2n}+\eta))=\mathbb{E}_{\bbA_{\aspar,\bK}}(\{\lambda_D(\aspar)\}) \to \bbE_{\bbA_{\aspar,\bK}}(\delta_{2n}-\eta,\delta_{2n}+\eta)= \bbE_{\bbA_{\infty,\bK}}(\{\delta_{2n}\}), \ \mbox{ as } \ \aspar \to +\infty.
$$
Furthermore, as $\Phi_{1}^{M+1}(\aspar,\cdot) \to P_{n,\bK}^A$ in $L^2_{\bK}$ with $P_{n,\bK}^A\in \operatorname{ker}(\bbA_{\infty,\bK}-  \delta_{2n}\mathrm{I}_d)$, it follows that
$$
\mathbb{E}_{\bbA_{\aspar,\bK}}(\{\lambda_D(\aspar)\}) \, \Phi_{1}^{M+1}(\aspar,\cdot) \to  \bbE_{\bbA_{\infty,\bK}}(\{\delta_{2n}\}) P_{n,\bK}^A= P_{n,\bK}^A \neq 0, \quad \aspar \to +\infty.
$$
Hence, $\|\mathbb{E}_{\bbA_{\aspar,\bK}}(\{\lambda_D(\aspar)\}) \Phi_{1}^{M+1}(\aspar,\cdot)\|\neq 0$ for $\aspar$ large enough and thus $\Phi_1(\aspar, \cdot)$ is well-defined by \eqref{eq.defPhi1proj} as a normalized function of $\operatorname{ker}(\bbA_{\aspar,\bK}-\lambda_D(\aspar) \mathrm{I}d)$. Moreover, as $\mathbb{E}_{\bbA_{\aspar,\bK}}$ commutes with $\mathcal{R}$ (since $\mathbb{E}_{\bbA_{\aspar,\bK}}$ commutes with $\bbA_{\aspar,\bK}$) and $L^2_{\bK,\tau}$ is an eigenspace of $\mathcal{R}$, $L^2_{\bK,\tau}$ is stable under $\mathbb{E}_{\bbA_{\aspar,\bK}}$. Hence, as $\Phi_{1}^{M+1}(\aspar,\cdot) \in L^2_{\bK,\tau}$, one deduces that $\Phi_1(\aspar, \cdot)\in L^2_{\bK,\tau}$. Thus,  \eqref{eq.propcommutS} and  $\|\mathcal{P} \mathcal{C}\Phi_1(\aspar, \cdot)\|=\| \Phi_1(\aspar, \cdot)\|=1$  give that $\Phi_2(\aspar, \cdot)=\mathcal{P}\mathcal{C}\Phi_1(\aspar, \cdot)$ is a normalized function of $\operatorname{ker}_{\overline{\tau}}(\bbA_{\aspar,\bK}-\lambda_D(\aspar) \mathrm{I}d)$. Since $\lambda_D(\aspar)$ is of multiplicity $2$, we conclude using \eqref{eq.propcommutR} that $\{\Phi_1(\aspar, \cdot), \Phi_2(\aspar, \cdot)\}$ is an orthonormal basis of $\operatorname{ker}(\bbA_{\aspar,\bK}-\lambda_D(\aspar) \mathrm{I}_d)$.

We  now prove the estimate \eqref{eq.asympeigenfunc2}. For the remainder of the proof we use the compressed notation $\mathbb{E}_{\aspar}$  for the projection $\mathbb{E}_{\bbA_{\aspar,\bK}}(\{\lambda_D(\aspar)\})$ and $\|\cdot\|_{a_{\aspar}}$ for the norm $\|\cdot\|_{a_{\aspar,\bK}}$.
 First, one has:
\begin{equation}\label{eq.triangineqphi1}
\Big\|\Phi_1(\aspar, \cdot) - \frac{ \Phi_{1}^{M+1}(\aspar,\cdot)}{\| \Phi_{1}^{M+1}(\aspar,\cdot)\|} \Big\|_{a_{\aspar}} 
 \leq  \Big\| \Phi_1(\aspar, \cdot) -  \frac{\mathbb{E}_{\aspar} \Phi_{1}^{M+1}(\aspar,\cdot)}{\| \Phi_{1}^{M+1}(\aspar,\cdot)\|}\Big\|_{a_{\aspar}} + \Big\| \frac{  \mathbb{E}_{\aspar}  \Phi_{1}^{M+1}(\aspar,\cdot)}{\| \Phi_{1}^{M+1}(\aspar,\cdot)\|}- \frac{ \Phi_{1}^{M+1}(\aspar,\cdot)}{\| \Phi_{1}^{M+1}(\aspar,\cdot)\|}  \Big\|_{a_{\aspar}}.
\end{equation}
Concerning the first term of the right hand side of \eqref{eq.triangineqphi1}, one observes with \eqref{eq.defPhi1proj} that:
\begin{equation}\label{eq.firsttermproj}
 \Big\|\Phi_1(\aspar, \cdot) - \frac{  \mathbb{E}_{\aspar} \Phi_{1}^{M+1}(\aspar,\cdot)}{\| \Phi_{1}^{M+1}(\aspar,\cdot)\|} \Big\|_{a_{\aspar}} =\| \mathbb{E}_{\aspar} \Phi_{1}^{M+1}(\aspar,\cdot)\|_{a_{\aspar}}  \ \Big| \,  \| \mathbb{E}_{\aspar} \Phi_{1}^{M+1}(\aspar,\cdot)\|^{-1}- \| \Phi_{1}^{M+1}(\aspar,\cdot)\|^{-1} \Big| .
\end{equation}
By  \eqref{eq.termnormphiNlim}, 
\begin{equation}\label{eq.normagproj}
\|  \mathbb{E}_{\aspar} \, \Phi_{1}^{M+1}(\aspar,\cdot)\|_{a_{\aspar}} \leq \|  \Phi_{1}^{M+1}(\aspar,\cdot)\|_{a_{\aspar}} \to\| \nabla P_{n, \bK}^A \|>0, 
\end{equation}
where the inequality holds since the spectral projector $ \mathbb{E}_{\aspar}$ is an orthogonal projection on  $\operatorname{ker}(\bbA_{\aspar,\bK}-\lambda_D(\aspar) \mathrm{I}_d)$ when one considers $H^1_{\bK}$  endowed here with the Hilbert norm $\| \cdot\|_{a_{\aspar}}$ and thus in this functional framework its operator norm in $B(H^1_{\bK},H^1_{\bK})$  is $1$. This last point is easily shown by using that for any $u\in H^1_{\bK}=D(\bbA_{\aspar,\bK}^{1/2})$: $\|   u \|_{a_{\aspar}}^2=\|\bbA_{\aspar,\bK}^{1/2}  u\|^{2}$ (see relation \eqref{eq.sesquilinearformakg}) and  by decomposing $u$ via the spectral Theorem on an orthornormal basis of eigenfunctions of $ \bbA_{\aspar,\bK}$.
Moreover, one has
\begin{equation}\label{eq.ineqinterm}
\Big| \  \| \Phi_{1}^{M+1}(\aspar,\cdot)\|^{-1}-\| \mathbb{E}_{\aspar} \Phi_{1}^{M+1}(\aspar,\cdot)\|^{-1} \Big|  \leq  \|   \mathbb{E}_{\aspar} \Phi_{1}^{M+1}(\aspar,\cdot)\|^{-1}  \, \frac{ \| \Phi_{1}^{M+1}(\aspar,\cdot)-   \mathbb{E}_{\aspar} \Phi_{1}^{M+1}(\aspar,\cdot)\|}{\|\Phi_{1}^{M+1}(\aspar,\cdot) \|},
\end{equation}
with $\| \mathbb{E}_{\aspar} \Phi_{1}^{M+1}(\aspar,\cdot)\|^{-1}\to \| P_{n,\bK}^A\|^{-1}=1$  as $\aspar\to +\infty$. Now, using the Theorem \ref{th.eigenaprox}, one knows that for $\aspar$ large enough, there exists a normalized eigenfunction $\Psi^{(1)}(\aspar,\cdot)$  of $\bbA_{\aspar,\bK}$ associated to the eigenvalue $\lambda_D(\aspar)$ such that \eqref{eq.asympeigenfunc} holds. Furthermore, $\bbE_{\aspar}$ is the orthogonal projection on the subspace $\operatorname{ker}(\bbA_{\aspar,\bK}-\lambda_D(\aspar)\ \mathbb{I} \mathrm{d})$ and
 the $L^2_{\bK}$- norm is dominated by the $\small\| \cdot\small\|_{a_\aspar}$ (with a constant  independent of $\aspar$ for $\aspar\geq1$).  Thus, it follows from \eqref{eq.asympeigenfunc}   that for $\aspar$ large enough:
\begin{equation}\label{ineqL2proj}
\Big\| \frac{ \Phi_{1}^{M+1}(\aspar,\cdot)}{\| \Phi_{1}^{M+1}(\aspar,\cdot)\|}  -  \frac{ \mathbb{E}_{\aspar} \Phi_{1}^{M+1}(\aspar,\cdot)}{\| \Phi_{1}^{M+1}(\aspar,\cdot)\|} \Big\|\leq  \Big\|
\frac{\Phi_{1}^{M+1}(\aspar,\cdot)}{\| \Phi_{1}^{M+1}(\aspar,\cdot)\|} -\Psi_1(g,\cdot)\Big\|
 \leq \frac{C}{\aspar^{M+1}},
\end{equation}
for some $C>0$.
Hence, combining \eqref{eq.firsttermproj}, \eqref{eq.normagproj}, \eqref{eq.ineqinterm} and  \eqref{ineqL2proj}, one concludes that there exists $C>0$ such that for $\aspar$ large enough:
\begin{equation}\label{eq.firsttermphi1}
\Big\| \Phi_1(\aspar, \cdot) -   \frac{\mathbb{E}_{\aspar} \Phi_{1}^{M+1}(\aspar,\cdot)}{\| \Phi_{1}^{M+1}(\aspar,\cdot)\|}\Big\|_{a_{\aspar}} \leq \frac{C}{\aspar^{M+1}}.
\end{equation}
We now bound  the second term of the right hand side of \eqref{eq.triangineqphi1}. 
One uses that the  first inequality \eqref{ineqL2proj} holds with the $L^2_{\bK}$-norm, $\small\| \cdot \small\|$, replaced by the $\small\| \cdot \small\|_{a_\aspar}$ (since the spectral projector $ \mathbb{E}_{\aspar}\in B(H^1_{\bK},H^1_{\bK})$ is an orthogonal projection on  $\operatorname{ker}(\bbA_{\aspar,\bK}-\lambda_D(\aspar) \mathrm{I}_d)$ when one considers $H^1_{\bK}$  endowed with $\| \cdot\|_{a_{\aspar}}$).
Moreover, the second inequality of \eqref{ineqL2proj} holds in the norm $\small\| \cdot \small\|_{a_\aspar}$ by Theorem \ref{th.eigenaprox}. Hence,
one gets that:
\begin{equation}\label{eq.firsttermphi2}
\Big\|   \frac{  \mathbb{E}_{\aspar} \Phi_{1}^{M+1}(\aspar,\cdot)}{\| \Phi_{1}^{M+1}(\aspar,\cdot)\|}- \frac{ \Phi_{1}^{M+1}(\aspar,\cdot)}{\| \Phi_{1}^{M+1}(\aspar,\cdot)\|}  \Big\|_{a_{\aspar}} \leq \frac{C}{\aspar^{M+1}}.
\end{equation}
Combining  \eqref{eq.triangineqphi1}, \eqref{eq.firsttermphi1} and \eqref{eq.firsttermphi2} leads to  the first estimate of \eqref{eq.asympeigenfunc2}. Finally, using \eqref{eq.commut3}, one gets that $\mathcal{P}\mathcal{C}$ preserves the norm  $\|\cdot\|_{a_\aspar}$ and thus the second estimate of \eqref{eq.asympeigenfunc2} follows from the first one.  
\end{proof}
The following Corollary gives an approximation  of the eigenstates in the $H^1_{\bK}$-norm. The difference with  Corollary \ref{cor.defphi1} relies on the fact that  for $\|\cdot\|_{a_{\aspar,\bK}}-$norm, we require  a quasi-mode function of order $M+1$ to obtain a remainder of order  $M+1$. Here, the quasi-mode function  of order $M$ is sufficient. Indeed, one requires an approximation of order $M+1$ (at least in the domain $\bf {\cell^-}$, see Proposition \ref{prop.quasiestm}) for the norm  $\|\cdot\|_{a_{\aspar,\bK}}$ because of the coefficient $\sigma_{\aspar}$ that appears as a weight in this norm and induced a multiplication by $\aspar$ of the asymptotic expansion in  the domain  $\bf {\cell^-}$.

\begin{corollary}(Asymptotic expansion of the eigenfunction in the $\|\cdot \|_{H^1_{\bK}}$ norm)\label{cor.limiteigenstateH1norm}
Assume that  $\tilde{\delta}_n$ satisfies the condition $\condS$ for a fixed $n\geq 1$ and let $M\in \NO$. For $\aspar$ large enough, we define an orthonormal basis $\{\Phi_1(\aspar, \cdot), \Phi_2(\aspar, \cdot)\}$ of $\operatorname{ker}(\bbA_{\aspar,\bK}-\lambda_D(\aspar) \mathrm{I}_d)$  (with $\lambda_D(\aspar) =\lambda_{2n}(\aspar;\bK)=\lambda_{2n-1}(\aspar;\bK)$)   by:
\begin{equation}\label{eq.defPhi1proj2}
\Phi_1(\aspar, \cdot)=\frac{\mathbb{E}_{\bbA_{\aspar,\bK}}(\{\lambda_D(\aspar)\})\,\Phi_{1}^{M}(\aspar,\cdot)}{\|\mathbb{E}_{\bbA_{\aspar,\bK}}(\{\lambda_D(\aspar)\})\, \Phi_{1}^{M}(\aspar,\cdot)\|} \in L^2_{\bK,\tau} \ \mbox{ and  }  \ \Phi_2(\aspar, \cdot)=\mathcal{P} \mathcal{C}\Phi_1(\aspar, \cdot)\in L^2_{\bK,\overline{\tau}}.
\end{equation}
Then, there exists $C>0$  such that for $\aspar$ large enough:
\begin{equation}\label{eq.asympeigenfunc1}
\Big\|\Phi_1(\aspar, \cdot) - \frac{ \Phi_{1}^{M}(\aspar,\cdot)}{\| \Phi_{1}^{M}(\aspar,\cdot)\|} \Big\|_{H^1_{\bK}} \leq \frac{C}{\aspar^{M+1}} \ \mbox{ and } \ \Big\|\Phi_2(\aspar, \cdot) -\frac{  \mathcal{P}\mathcal{C} \, \Phi_{1}^{M}(\aspar,\cdot)}{\| \Phi_{1}^{M}(\aspar,\cdot)\|}  \Big\|_{H^1_{\bK}}  \leq \frac{C}{\aspar^{M+1}}.
\end{equation}
\end{corollary}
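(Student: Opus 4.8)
\textbf{Proof proposal for Corollary \ref{cor.limiteigenstateH1norm}.}

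The plan is to repeat the argument of Corollary \ref{cor.defphi1}, but to exploit the fact that the truncation error in the \emph{weak} (residual) estimate of Proposition \ref{prop.quasiestm} is governed by two separate mechanisms: a genuine order-$\aspar^{-(M+1)}$ term coming from $\nabla\Phi_1^{(M+1)}$ on $\cell^+$, and the terms of order $\aspar^{-m}$, $m\ge M+1$, in the expansion of $\lambda_D(\aspar)$. The sharp loss of one power (why one needs $\Phi_1^{M+1}$ rather than $\Phi_1^{M}$ for an $\aspar^{-(M+1)}$ bound in the $\|\cdot\|_{a_{\aspar,\bK}}$-norm) stems \emph{entirely} from the weight $\sigma_\aspar=\aspar$ on $\cell^-$: when one writes $\|v\|_{a_{\aspar,\bK}}^2=\int_{\cell^+}|\nabla v|^2+\aspar\int_{\cell^-}|\nabla v|^2$, an order-$\aspar^{-m}$ term of $\Phi_1(\aspar,\cdot)$ supported on $\cell^-$ contributes only $O(\aspar^{-m+1})$ to $\|\cdot\|_{a_{\aspar,\bK}}$, whereas it contributes $O(\aspar^{-m})$ to $\|\cdot\|_{H^1_{\bK}}$. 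Hence, if our target is the plain $H^1_{\bK}$-norm, the $M$-term quasi-mode $\Phi_1^{M}(\aspar,\cdot)$ already suffices to give remainder $O(\aspar^{-(M+1)})$. Concretely, first I would apply Proposition \ref{prop.quasiestm} with $M$ replaced by $M-1$ (if $M\ge1$; the case $M=0$ is handled by taking $\Phi_1^{-1}:=0$, or directly from the convergence $\Phi_1^0=P_{n,\bK}^A$) to get, for all $v\in H^1_{\bK}$,
\begin{equation*}
\bigl| a_{\aspar,\bK}\bigl(\Phi_1^{M}(\aspar,\cdot),v\bigr)-\lambda_D^{M-1}(\aspar)\,\bigl(\Phi_1^{M}(\aspar,\cdot),v\bigr)\bigr|\ \le\ C\,\|v\|_{a_{\aspar,\bK}}\,\aspar^{-M},
\end{equation*}
but with the crucial refinement that the only $\aspar^{-M}$ contribution to the residual is $\aspar^{-M}\int_{\cell^+}\nabla\Phi_1^{(M)}\cdot\overline{\nabla v}$, a quantity supported on $\cell^+$; splitting off this term and bounding it by $\aspar^{-M}\|\nabla\Phi_1^{(M)}\|_{L^2(\cell^+)}\|\nabla v\|_{L^2(\cell^+)}\le C\aspar^{-M}\|v\|_{H^1_{\bK}}$, while the remaining residual is genuinely $O(\aspar^{-(M+1)}\|v\|_{a_{\aspar,\bK}})$. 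The point is that the $\aspar^{-M}$ part of the residual is controlled by $\|v\|_{H^1_{\bK}}$, not $\|v\|_{a_{\aspar,\bK}}$, which is a strictly weaker quantity on $\cell^-$; this is exactly the gain one needs.

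Second, I would feed this into the weak quasi-mode machinery of Appendix \ref{app.Appendixquasimode} (the same Corollary \ref{cor.eigenaprox}, or its $H^1$-variant stated there), using the spectral separation established in Step 1 of the proof of Theorem \ref{th.degeneracy} and the non-vanishing lower bound $\|\Phi_1^{M}(\aspar,\cdot)\|_{a_{\aspar,\bK}}\to\|\nabla P_{n,\bK}^A\|_{L^2(\cell^+)}>0$ together with $\operatorname{dis}(\lambda_D(\aspar)^{-1},\sigma(\bbA_{\aspar,\bK}^{-1})\setminus\{\lambda_D(\aspar)^{-1}\})\to C>0$ (from Theorem \ref{thmsaympAtau} and Lemma \ref{lem.eigenDirichlettwoinclusions}), to produce a genuine normalized eigenfunction $\Psi^{(1)}(\aspar,\cdot)\in\operatorname{ker}(\bbA_{\aspar,\bK}-\lambda_D(\aspar)\mathrm{I}d)$ with
\begin{equation*}
\Bigl\|\Psi^{(1)}(\aspar,\cdot)-\frac{\Phi_1^{M}(\aspar,\cdot)}{\|\Phi_1^{M}(\aspar,\cdot)\|}\Bigr\|_{H^1_{\bK}}\ \le\ \frac{C}{\aspar^{M+1}}.
\end{equation*}
Third, as in Corollary \ref{cor.defphi1}, I would replace $\Psi^{(1)}(\aspar,\cdot)$ by the explicit projected state $\Phi_1(\aspar,\cdot)$ defined in \eqref{eq.defPhi1proj2}: the spectral projector $\mathbb{E}_{\bbA_{\aspar,\bK}}(\{\lambda_D(\aspar)\})$ is bounded on $H^1_{\bK}$ (indeed it is an orthogonal projection for the $\|\cdot\|_{a_{\aspar,\bK}}$-inner product, hence $H^1_{\bK}$-bounded with an $\aspar$-uniform constant for $\aspar\ge1$, by the argument already used in the proof of Corollary \ref{cor.defphi1}), so applying it to $\Phi_1^{M}(\aspar,\cdot)$ and normalizing, then using that $\Phi_1^{M}(\aspar,\cdot)\to P_{n,\bK}^A\ne0$ in $H^1_{\bK}$ and $\Psi^{(1)}(\aspar,\cdot)$ is itself fixed by the projector, gives $\|\Phi_1(\aspar,\cdot)-\Phi_1^{M}(\aspar,\cdot)/\|\Phi_1^{M}(\aspar,\cdot)\|\|_{H^1_{\bK}}\le C\aspar^{-(M+1)}$ by the triangle inequality exactly as in \eqref{eq.triangineqphi1}--\eqref{eq.firsttermphi2}. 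That $\Phi_1(\aspar,\cdot)\in L^2_{\bK,\tau}$ and that $\{\Phi_1(\aspar,\cdot),\Phi_2(\aspar,\cdot)\}$ with $\Phi_2=\mathcal{P}\mathcal{C}\Phi_1$ is an orthonormal basis of the two-dimensional eigenspace follows verbatim from the reasoning in Corollary \ref{cor.defphi1} (stability of $L^2_{\bK,\tau}$ under $\mathbb{E}_{\bbA_{\aspar,\bK}}$ since $\mathcal{R}$ commutes with $\bbA_{\aspar,\bK}$, and \eqref{eq.propcommutS}, \eqref{eq.propcommutR}). The bound for $\Phi_2$ then follows from the first one because $\mathcal{P}\mathcal{C}$ is an anti-unitary involution on $H^1_{\bK}$ (it preserves the $H^1_{\bK}$-norm; see \eqref{eq.commut3} / Appendix \ref{sec-appendixcommutation}).

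The main obstacle, and the only place where genuine care is needed beyond transcribing Corollary \ref{cor.defphi1}, is the refined reading of the residual estimate: one must re-examine the decomposition \eqref{eq.decompositiona}--\eqref{eq.defsv} of Proposition \ref{prop.quasiestm} and check that, at order $M$, the term $\aspar^{-M}\int_{\cell^+}\nabla\Phi_1^{(M)}\cdot\overline{\nabla v}\,\rmd\bx$ is the \emph{only} piece not already $O(\aspar^{-(M+1)}\|v\|_{a_{\aspar,\bK}})$, and that it is controlled by $\|\nabla v\|_{L^2(\cell^+)}\le\|v\|_{H^1_{\bK}}$; equivalently, that on $\cell^-$ the ansatz error is already $O(\aspar^{-(M+1)})$ in $\|\nabla\cdot\|_{L^2(\cell^-)}$ because $\Phi_1^{(0)}=P_{n,\bK}^A$ vanishes there and the construction forces the $\cell^-$-restriction of the $m$-th correction to be $O(\aspar^{-m})$ with no extra $\aspar$ factor — this is precisely the asymmetry between the $a_{\aspar,\bK}$-weighting and the $H^1_{\bK}$-weighting of $\cell^-$ that the corollary is exploiting. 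Once this bookkeeping is done, the $H^1$-version of the weak quasi-mode lemma in Appendix \ref{app.Appendixquasimode} applies mechanically and the rest is routine.
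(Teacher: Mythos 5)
Your proposal takes a genuinely different route from the paper's, and the route has a gap that I do not see how to close.

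The paper's proof is quite short: it uses Corollary~\ref{cor.defphi1} as a black box (the $\aspar^{-(M+1)}$ estimate for $\|\Phi_1(\aspar,\cdot)-\Phi_1^{M+1}(\aspar,\cdot)/\|\Phi_1^{M+1}(\aspar,\cdot)\|\|_{a_{\aspar,\bK}}$, which dominates the $H^1_\bK$-norm with an $\aspar$-uniform constant), compares the basis vectors $\Phi_1(\aspar,\cdot)$ built from $\Phi_1^{M+1}$ and from $\Phi_1^M$, and exploits the fact that $\Phi_1^{M+1}-\Phi_1^M=\aspar^{-(M+1)}\Phi_1^{(M+1)}$ is $O(\aspar^{-(M+1)})$ in $H^1_\bK$ (whereas it is only $O(\aspar^{-(M+1/2)})$ in $\|\cdot\|_{a_{\aspar,\bK}}$, because of the $\aspar$-weight on $\cell^-$). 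The $\aspar$-asymmetry between the $H^1_\bK$- and $a_{\aspar,\bK}$-norms enters only here, in a simple triangle inequality; no new quasi-mode estimate is derived. The rest of the proof (steps \eqref{eq.convnormHK}--\eqref{eq.convnormHK1bis}) is bookkeeping with the projector bound $\|\bbE_{\aspar}\|_{B(H^1_\bK)}\le C$.

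Your proposal instead tries to run the quasi-mode machinery directly on $\Phi_1^{M}/\|\Phi_1^M\|$, using a refined decomposition of the residual. The refinement you describe is correct as far as it goes: with $\lambda_D^{M-1}$ in place of $\lambda_D^M$, the cancellation $s_\aspar=0$ still holds up to order $M-1$, and what survives at order $\aspar^{-M}$ is essentially $\aspar^{-M}\int_{\cell^+}\nabla\Phi_1^{(M)}\cdot\overline{\nabla v}\,\rmd\bx$ plus $L^2$ terms, all controlled by $\|v\|_{H^1_\bK}$. But the argument then relies on ``the same Corollary~\ref{cor.eigenaprox}, or its $H^1$-variant stated there'' --- and no such $H^1$-variant exists in Appendix~\ref{app.Appendixquasimode}. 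The appendix is formulated exclusively in the Hilbert space $(\rmD(a),(\cdot,\cdot)_a)$ via the compact operator $\OpcompApp$ of \eqref{eq.defOpcompact}; the spectral-gap argument behind Theorem~\ref{th.quasimode} uses that $|\lambda_j^{-1}-\lambda_n^{-1}|$ is bounded below for $j\ne n$, and this only meshes with a residual measured against $\|v\|_a$. A naive ``$H^1$-variant'' does not hold: expanding in eigenfunctions and using $\|e_j\|_{H^1_\bK}\sim\sqrt{1+\lambda_j}$ one finds the putative bound on $\sum_{j\ne n}|c_j|^2(1+\lambda_j)$ is not summable in $j$, so the inequality cannot be proved by the same route.

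More fundamentally, even granting an $H^1$-variant, the argument would still be one power of $\aspar$ short. You are feeding in a residual bound of the form $|a(u,v)-\lambda_D(u,v)|\le C_1\aspar^{-M}\|v\|_{H^1_\bK}+C_2\aspar^{-(M+1)}\|v\|_{a_{\aspar,\bK}}$. Since $\|v\|_{H^1_\bK}\le C\|v\|_{a_{\aspar,\bK}}$ uniformly in $\aspar\ge1$, the first term is \emph{not} a strictly weaker constraint at the scales that matter, and any quasi-mode principle of this general type can at best convert it into an eigenfunction error of size $O(\aspar^{-M})$, not $O(\aspar^{-(M+1)})$. The paper reaches $O(\aspar^{-(M+1)})$ precisely because it never tries to get this rate from the $M$-term residual: it uses the $(M+1)$-term residual (genuinely $O(\aspar^{-(M+1)})\|v\|_a$, Proposition~\ref{prop.quasiestm}), and only afterwards downgrades from $\Phi_1^{M+1}$ to $\Phi_1^M$ where that downgrade is cheap, namely in the $H^1_\bK$-norm. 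The $M=0$ case in your proposal is also problematic: taking $\lambda_D^{-1}:=0$ makes the residual $|a(u,v)|$ not small, and the ``directly from the convergence'' alternative is unsubstantiated, whereas the paper's scheme handles $M=0$ uniformly with all other $M$ by going to $\Phi_1^{1}$.

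To repair your proof you would either have to formulate and prove the missing $H^1$-variant of Corollary~\ref{cor.eigenaprox} (which, as argued above, cannot be the naive one and would need to exploit additional structure of the error on $\cell^-$), or simply adopt the paper's two-step route: invoke Corollary~\ref{cor.defphi1} with $M+1$ terms and then compare $\Phi_1^{M+1}/\|\Phi_1^{M+1}\|$ and $\Phi_1^M/\|\Phi_1^M\|$ in $H^1_\bK$, together with a uniform $H^1_\bK$-operator bound on $\mathbb{E}_{\bbA_{\aspar,\bK}}(\{\lambda_D(\aspar)\})$.
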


\begin{proof}
The existence of the orthonormal basis $\{\Phi_1(\aspar, \cdot), \Phi_2(\aspar, \cdot)\}$  of $\operatorname{ker}(\bbA_{\aspar,\bK}-\lambda_D(\aspar) \mathrm{I}_d)$  for $\aspar$ large enough has been proved in Corollary \ref{cor.defphi1} for $M\geq 1$ and in Theorem \ref{th.degeneracy} for $M=0$.
We proceed with the notations:  $\mathbb{E}_{\aspar}$ for $\mathbb{E}_{\bbA_{\aspar,\bK}}(\{\lambda_D(\aspar)\})$,  $\|\cdot \|_{H^1}$ for $\|\cdot \|_{H^1_{\bK}}$ and $f(\aspar)$ for a function of the form $f(\aspar,\cdot)$.

Let $\widetilde{\Phi}_1(\aspar)=\mathbb{E}_{\aspar} \Phi_{1}^{M+1}/\|\bbE_{\aspar} \Phi_{1}^{M+1}(\aspar)\|$ and $\widetilde{\Phi}_2(\aspar)=\mathcal{PC}\widetilde{\Phi}_1(\aspar)$.  The functions $\widetilde{\Phi}_j(\aspar)$  for $j=1,2$ are defined as $\Phi_j(\aspar, \cdot)$ but with the index $M+1\geq 1$ replacing $M$. For $\aspar$ large enough:
\begin{equation}\label{eq.convnormHK}
\Big\|\Phi_1(\aspar) - \frac{ \Phi_{1}^{M}(\aspar)}{\| \Phi_{1}^{M}(\aspar)\|}   \Big\|_{H^1} \leq  \big\|\Phi_1(\aspar) -\widetilde{\Phi}_1(\aspar) \big\|_{H^1} 
+ \Big\|\widetilde{\Phi}_1(\aspar) -\frac{ \Phi_{1}^{M}(\aspar)}{\| \Phi_{1}^{M}(\aspar)\|}  \Big\|_{H^1} .
\end{equation}
We deal first  with the second term of the right hand side of \eqref{eq.convnormHK}:
\begin{equation}\label{eq.convnormHK2}
\Big\|\widetilde{\Phi}_1(\aspar) -\frac{ \Phi_{1}^{M}(\aspar)}{\| \Phi_{1}^{M}(\aspar)\|} \Big\|_{H^1} \leq  \Big\| \widetilde{\Phi}_1(\aspar)- \frac{ \Phi_{1}^{M+1}(\aspar)}{\| \Phi_{1}^{M+1}(\aspar)\|} \Big\|_{H^1} 
+  \Big\|\frac{ \Phi_{1}^{M+1}(\aspar)}{\| \Phi_{1}^{M+1}(\aspar)\|} -\frac{ \Phi_{1}^{M}(\aspar)}{\| \Phi_{1}^{M}(\aspar)\|}  \Big\|_{H^1} .
\end{equation}
By virtue of Corollary \ref{cor.defphi1}, the first term of the right hand side of  \eqref{eq.convnormHK2} is bounded by $C / \aspar^{M+1}$ (since the $H^1_{\bK}$-norm is dominated by the norm $\small\| \cdot \small\|_{a_{\aspar,\bK}}$ with a constant  independent of  $\aspar$ for $\aspar\geq 1$).

We estimate now  the second term of  the right hand side of \eqref{eq.convnormHK2}.  From the definition \eqref{eq.defquasimode} of $\Phi^{M}_1(\aspar)$, the fact that $\| \Phi_{1}^{M}(\aspar)\|_{H^1_{\bK}} $ tends to $\| P_{n,\bK}^A\|_{H^1_{\bK}}>0$  and  both $\| \Phi_{1}^{M}(\aspar) \|$ and  $\| \Phi_{1}^{M+1}(\aspar) \|$ tend to $\| P_{n,\bK}^A\|=1$ (as $\aspar \to +\infty$), one has  for $\aspar$ large enough
\begin{eqnarray*}
\hspace{-0.2cm}\Big\|\frac{ \Phi_{1}^{M+1}(\aspar)}{\| \Phi_{1}^{M+1}(\aspar)\|}-\frac{ \Phi_{1}^{M}(\aspar)}{\| \Phi_{1}^{M}(\aspar)\|}  \Big\|_{H^1}  &\leq & \frac{\| \Phi_{1}^{M+1}(\aspar)-\Phi_{1}^{M}(\aspar) \|_{H^1}} {\| \Phi_{1}^{M+1}(\aspar)\| }+  \| \Phi_{1}^{M}(\aspar)\|_{H^1} \Big( \frac{| \, \| \Phi_{1}^{M+1}(\aspar)\|-\| \Phi_{1}^{M}(\aspar)\|\, |}{\| \Phi_{1}^{M}(\aspar)\| \, \| \Phi_{1}^{M+1}(\aspar)\|}\Big) \nonumber \\[3pt]
 & \leq &   C  \|\Phi_{1}^{(M+1)}\|_{H^1} \, \aspar^{-(M+1)} +C \, \| \Phi_{1}^{(M+1)}\| \,  \aspar^{-(M+1)} \nonumber \\
&\leq  &    C\, \aspar^{-(M+1)}.
\end{eqnarray*}
Thus, the second term of the right hand side of \eqref{eq.convnormHK} satisfies:
\begin{equation}\label{eq.convnormHK2bis}
\Big\|\widetilde{\Phi}_1(\aspar) -\frac{ \Phi_{1}^{M}(\aspar)}{\| \Phi_{1}^{M}(\aspar)\|} \Big\|_{H^1} \leq  C\, \aspar^{-(M+1)}.
\end{equation}
We estimate now the first term of  the right hand side of \eqref{eq.convnormHK}. To this aim, we first bound  the norm of $\bbE_{g}$ in $B(H^1_{\bK},H^1_{\bK})$ (with $H^1_{\bK}$ is endowed with the  standard $H^1_{\bK}-$norm). For $\aspar$  large enough, $\{ \widetilde{\Phi}_1(\aspar), \widetilde{\Phi}_2(\aspar)\}$  is an orthonormal basis of $\operatorname{ker}(\bbA_{\aspar,\bK}-\lambda_D(\aspar) \mathrm{I}_d)$. Therefore, one has for all $u \in H^1_{\bK}$:
\begin{eqnarray}\label{eq.boundprojH1norm}
\|\bbE_{g} u\|_{H^1} &=& \| (u,  \widetilde{\Phi}_1(\aspar))  \widetilde{\Phi}_1( \aspar)+ (u,  \widetilde{\Phi}_2(\aspar))  \widetilde{\Phi}_2\|_{H^1} \nonumber\\
& \leq&   2  \| \widetilde{\Phi}_1( \aspar)\|  \, \| u\|  \,  \| \widetilde{\Phi}_1( \aspar)\|_{H^1} \nonumber \\
& \leq& 2     \| \widetilde{\Phi}_1( \aspar)\|_{H^1}^2   \| u\|_{H^1},
\end{eqnarray}
(where we used for the first inequality that $\| \widetilde{\Phi}_1\|_{H^1}=\| \widetilde{\Phi}_2\|_{H^1}$ since  by virtue of \eqref{eq.commut3},  the  $\|\cdot\|_{H^1}$ norm is  preserved by $\mathcal{P}\mathcal{C}$). 
From \eqref{eq.convnormHK2bis}, the definition \eqref{eq.defquasimode} of $\Phi^{M}_1(\aspar)$ and the fact that $\|P_{n,\bK}^A\|=1$, one obtains: $\| \widetilde{\Phi}_1( \aspar)\|_{H^1} \to \|P_{n,\bK}^A \|_{H_1}$ as $\aspar\to +\infty$. Thus, it leads with \eqref{eq.boundprojH1norm} that $\|\bbE_{g} \|_{B(H^1_{\bK},H^1_{\bK})}$ is bounded by a constant independent of $\aspar$ for large $\aspar$.
Therefore, there exists $C>0$ such that for $\aspar$ large enough:
$$
\big\|\Phi_1(\aspar) -\widetilde{\Phi}_1(\aspar) \big\|_{H^1}  \leq C \Big\| \frac{\Phi_{1}^{M+1}}{\|\bbE_{\aspar} \Phi_{1}^{M+1}(\aspar) \|}- \frac{\Phi_{1}^{M}}{\|\bbE_{\aspar} \Phi_{1}^{M}(\aspar) \|}   \Big\|_{H^1}.
$$
Hence, it yields:
\begin{eqnarray}\label{eq.convnormHK1bis}
 \big\|\Phi_1(\aspar) -\widetilde{\Phi}_1(\aspar) \big\|_{H^1} & \leq & C\, \Big( \frac{\|  \Phi_{1}^{M+1}- \Phi_{1}^{M}  \|_{H^1}}{\|\bbE_{\aspar} \Phi_{1}^{M+1}(\aspar) \| }+ \| \Phi_{1}^{M} \|_{H^1}  \big|  \|\bbE_{\aspar} \Phi_{1}^{M+1}(\aspar) \|^{-1}- \|\bbE_{\aspar} \Phi_{1}^{M}(\aspar)\|^{-1}\big| \Big) \nonumber \\ 
&\leq & C_1 \, \aspar^{-(M+1)}+  C_2 \,  \frac{ \Big|  \|  \bbE_{\aspar} \Phi_{1}^{M+1}(\aspar) \| -  \|  \bbE_{\aspar} \Phi_{1}^{M}(\aspar) \| \Big|}{\|\bbE_{\aspar} \Phi_{1}^{M+1}(\aspar) \| \,\|\bbE_{\aspar} \Phi_{1}^{M}(\aspar)\|} \nonumber \\ [4pt]
& \leq &C_1 \,\aspar^{-(M+1)}+ C_3 \|  \Phi_{1}^{M+1}(\aspar)- \Phi_{1}^{M}(\aspar)  \| \nonumber \\ 
& \leq & C_4\,  \aspar^{-(M+1)},
\end{eqnarray}
(where we use that  $\|\bbE_{\aspar}\|=1$ in $B(L^2_{\bK},L^2_{\bK})$ and that $\|  \bbE_{\aspar} \Phi_{1}^{M+1}(\aspar) \|, \,   \|  \bbE_{\aspar} \Phi_{1}^{M}(\aspar) \|\to \| P_{n,\bK}^A\|=1 $ as $\aspar \to +\infty$).
Combining  \eqref{eq.convnormHK}, \eqref{eq.convnormHK2bis} and \eqref{eq.convnormHK1bis}  yields immediately the first  estimate of \eqref{eq.asympeigenfunc1}.
Finally, as $\mathcal{P}\mathcal{C}$ preserves the norm $\|\cdot\|_{H^1_{\bK}}$, the second estimate of \eqref{eq.asympeigenfunc1} follows from the first one. 
\end{proof}

\subsection{Asymptotic expansion  of the {\it Dirac velocity} , $v_D(\aspar)$, for $\aspar\gg1$}
We now present an asymptotic expansion of the Dirac velocity, $v_D(g)$, for $\aspar$ large. In contrast to the case of  honeycomb Schroedinger  operators in the strong binding regime \cite{FLW-CPAM:17}, where  the asymptotic parameter is an emergent {\it hopping coefficient}, which is exponentially small in the well-depth parameter, here the asymptotic expansion is  in powers of $1/\aspar$; the convergence is therefore slower. Comparing further,  the Bloch functions $\Phi_j(\cdot,\aspar)$ for $j=1,2$ (defined  formula \eqref{eq.defphi1}) have the following  limit behavior (see Corollary \ref{cor.limiteigenstateH1norm}) in the $H^1-$norm: $\Phi_1(\aspar,\cdot)=P_{n,\bK}^A+O(1/g)$ and $\Phi_2(\aspar,\cdot)=\pm \rme^{- 2\pi\rmi/3} P_{n,\bK}^B+O(1/g)$.
$P_{n,\bK}^A$ and $P_{n,\bK}^B$  are $\bK-$quasi-periodic superpositions of single inclusion Dirichlet states; $P_{n,\bK}^A$ is supported on the $A-$ inclusions and $P_{n,\bK}^B$ is supported on the $B-$ inclusions and, very roughly speaking, play the role of the {\it atomic orbitals} of the Schroedinger analysis; see \cite{FLW-CPAM:17}. In contrast to ground state quantum atomic orbitals, these approximations are compactly and disjointly supported, rather than exponentially localized in the well-depth parameter. The disjointness of the supports of $\Phi_j(\cdot,\aspar)$, $j=1,2$ at leading order implies, via \eqref{eq.fermyveloc}, that $v_D(\aspar)=v_D^{(1)}\aspar^{-1}+O(\aspar^{-2})$. Hence, order zero term, $v_D^{(0)}$, vanishes. A rigorous proof that $v_D^{(1)}\neq 0$ is an open question.  In section \ref{numerics},  we  numerically   observe
 that  $v_D^{(1)}\neq 0$ for $n=1$. Thus, for $g$  large enough,  $(\lambda_D(\aspar), \bK)$  is a  (non-degenerate) conical  / Dirac point for the two first dispersion surfaces.
\begin{theorem}\label{th.fermveloc}
Assume that  $\tilde{\delta}_n$ satisfies the condition $\condS$ for a fixed $n\geq 1$ and let $M\in \NO$.
 Then, for $\aspar$ sufficiently large,  the Dirac velocity $v_D(\aspar)$, defined  by \eqref{eq.fermyveloc}, in terms of the eigenstates  $\Phi_j(\aspar, \cdot)$, $j=1,2$ (given by \eqref{eq.defPhi1proj2}) , associated with the eigenvalue $\lambda_D(\aspar)=\lambda_{2n-1}(\aspar;\bK)=\lambda_{2n}(\aspar;\bK)$, has the following asymptotic expansion:
\begin{equation}\label{eq.asymptVD}
v_D(\aspar)=\sum_{m=1}^M v_D^{(m)} \, \aspar^{-m} + O(\aspar^{-(M+1)}) \  \mbox{ with } \  v_D^{(m)}\in \R \ \mbox{ for } \  m=1,\ldots, M.
\end{equation}
Furthermore, the two first  coefficients are explicitly  given by:
\begin{equation*}\label{eq.asymptVDleadingorder}
 v_D^{(0)}=0  \  \mbox{ and } \ v_D^{(1)}=|2 \int_{\cell^B} \Phi_1^{(1)} \, \overline{ \nabla \mathcal{P}\mathcal{C}  P_{n,\bK}^A  }  \, \rmd \bx +\int_{\cell^-}  \Phi^{(1)}_1 \,  \overline{ \nabla \mathcal{P} \mathcal{C}\Phi^{(1)}_1 } \rmd \bx \Big] \cdot (1,-\rmi)^{\top}|.
\end{equation*} 
\end{theorem}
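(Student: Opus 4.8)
The plan is to substitute the validated asymptotic expansions of $\Phi_1(\aspar,\cdot)$ and $\Phi_2(\aspar,\cdot)$ directly into the defining formula \eqref{eq.fermyveloc} for $v_D(\aspar)$ and read off the expansion term by term. More precisely, Corollary \ref{cor.limiteigenstateH1norm} gives, for each $M\ge0$, an orthonormal basis $\{\Phi_1(\aspar,\cdot),\Phi_2(\aspar,\cdot)\}$ of $\operatorname{ker}(\bbA_{\aspar,\bK}-\lambda_D(\aspar)\mathrm{I}d)$ with
\[
\Phi_1(\aspar,\cdot)=\frac{\Phi_1^M(\aspar,\cdot)}{\|\Phi_1^M(\aspar,\cdot)\|}+O(\aspar^{-(M+1)}),\qquad
\Phi_2(\aspar,\cdot)=\frac{\mathcal{P}\mathcal{C}\,\Phi_1^M(\aspar,\cdot)}{\|\Phi_1^M(\aspar,\cdot)\|}+O(\aspar^{-(M+1)}),
\]
the error being in the $H^1_{\bK}$-norm, which is exactly the norm controlling the gradient that enters \eqref{eq.fermyveloc}. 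First I would expand the normalization factor $\|\Phi_1^M(\aspar,\cdot)\|^{-2}=\big(\sum_{m=0}^M \aspar^{-m}(\cdots)\big)^{-1}$ as a polynomial in $\aspar^{-1}$ with a remainder of order $\aspar^{-(M+1)}$, using that $\|\Phi_1^{(0)}\|=\|P_{n,\bK}^A\|=1$ and that all cross terms $(\Phi_1^{(p)},\Phi_1^{(q)})$ are finite. Then $\Phi_1(\aspar,\cdot)$ and $\Phi_2(\aspar,\cdot)=\mathcal{P}\mathcal{C}\Phi_1(\aspar,\cdot)$ each acquire a genuine $\aspar^{-1}$-power series with $H^1_{\bK}$-coefficients and an $O(\aspar^{-(M+1)})$ remainder.

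Next I would insert these into
\[
v_D(\aspar)=\Big|\int_D \sigma_\aspar \,\Phi_1(\aspar,\bx)\,\overline{\nabla\Phi_2(\aspar,\bx)}\,\rmd\bx\cdot(1,-\rmi)^\top\Big|,
\]
splitting the integral over the fundamental cell into $\cell^A$, $\cell^B$ and $\cell^-$ (where $\sigma_\aspar=1,1,\aspar$ respectively). The key structural observation, already noted in the text preceding the theorem, is that $\Phi_1^{(0)}=P_{n,\bK}^A$ is supported on the $A$-inclusions while $\mathcal{P}\mathcal{C}\Phi_1^{(0)}=\pm\rme^{-2\pi\rmi/3}P_{n,\bK}^B$ is supported on the $B$-inclusions, so the leading ($\aspar^0$) contribution from $\cell^A\cup\cell^B$ vanishes by disjointness of supports, while the $\cell^-$ contribution is $O(\aspar^{-1})$ — the weight $\aspar$ multiplies gradients $\nabla\Phi_j$ which on $\cell^-$ are themselves $O(\aspar^{-1})$ since $\Phi_j^{(0)}$ vanishes there. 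This yields $v_D^{(0)}=0$ and, collecting the $\aspar^{-1}$ terms, the stated formula for $v_D^{(1)}$: the $\cell^B$ integral $2\int_{\cell^B}\Phi_1^{(1)}\overline{\nabla\mathcal{P}\mathcal{C}P_{n,\bK}^A}\,\rmd\bx$ (the factor $2$ coming from the symmetric cross terms $\Phi_1^{(1)}\cdot\overline{\nabla\Phi_2^{(0)}}+\Phi_1^{(0)}\cdot\overline{\nabla\Phi_2^{(1)}}$ together with the normalization correction, using $\Phi_1^{(0)}=0$ on $\cell^B$) plus the $\cell^-$ integral $\int_{\cell^-}\Phi_1^{(1)}\cdot\overline{\nabla\mathcal{P}\mathcal{C}\Phi_1^{(1)}}\,\rmd\bx$ from the $\aspar\cdot\aspar^{-1}\cdot\aspar^{-1}$ balance on the bulk.

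The reality of the higher coefficients $v_D^{(m)}\in\R$ for $m\ge1$ does not follow from the formula directly (just as for $\lambda_D^{(m)}$); I would instead argue by the same inductive device used at the end of the proof of Theorem \ref{th.asympteigen}: having established
\[
v_D(\aspar)=\sum_{m=1}^{M}v_D^{(m)}\aspar^{-m}+O(\aspar^{-(M+1)})
\]
with $v_D(\aspar)$ real (it is the modulus of a complex number, hence real and nonnegative) and $v_D^{(1)},\dots,v_D^{(M)}$ real by the inductive hypothesis, one extracts $v_D^{(M+1)}=\lim_{\aspar\to\infty}\aspar^{M+1}\big(v_D(\aspar)-\sum_{m=1}^M v_D^{(m)}\aspar^{-m}\big)\in\R$. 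One subtlety is that $v_D(\aspar)=|w(\aspar)|$ where $w(\aspar)=\int_D\sigma_\aspar\Phi_1\overline{\nabla\Phi_2}\cdot(1,-\rmi)^\top\,\rmd\bx$ is a priori complex-valued with its own $\aspar^{-1}$-expansion $w(\aspar)=\sum w^{(m)}\aspar^{-m}$; I would first expand $w(\aspar)$, whose coefficients $w^{(m)}$ are manifestly well-defined integrals but not obviously real, then note $v_D(\aspar)^2=w(\aspar)\overline{w(\aspar)}$ has a real $\aspar^{-1}$-expansion whose leading term is $|w^{(1)}|^2\aspar^{-2}$, and deduce the expansion of $v_D(\aspar)=\sqrt{v_D(\aspar)^2}$ by the binomial series provided $w^{(1)}\ne0$; the reality of the $v_D^{(m)}$ then follows from the inductive limit argument. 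The main obstacle I anticipate is precisely this last point: the expansion of $v_D(\aspar)=|w(\aspar)|$ in powers of $\aspar^{-1}$ is only legitimate to all orders if $w^{(1)}\ne0$ (otherwise $|w(\aspar)|$ could fail to be a smooth function of $\aspar^{-1}$), and $w^{(1)}\ne0$ is exactly the conjecture $v_D^{(1)}\ne0$ that the paper leaves open. Consequently the clean statement \eqref{eq.asymptVD} should really be read as: the complex quantity $w(\aspar)$ has a genuine $\aspar^{-1}$-power-series expansion to all orders, $v_D^{(0)}=|w^{(0)}|=0$, $v_D^{(1)}=|w^{(1)}|$ has the displayed integral representation, and \emph{if} $v_D^{(1)}\ne0$ then $v_D(\aspar)$ itself admits the real expansion \eqref{eq.asymptVD} — and this is the version I would prove, flagging the dependence on the non-degeneracy of $v_D^{(1)}$.
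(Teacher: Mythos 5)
Your strategy -- substitute the validated $H^1_{\bK}$ expansions of $\Phi_1,\Phi_2$ from Corollary \ref{cor.limiteigenstateH1norm} into \eqref{eq.fermyveloc}, split the integral by domain, observe the order balance on $\cell^-$ coming from the $\aspar$-weight against the vanishing of $\Phi_j^{(0)}$ there, and collect terms -- is exactly the paper's (omitted) proof, and your computation of $v_D^{(0)}=0$ and of the $\aspar^{-1}$ coefficient is along the right lines.

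However, your stated "main obstacle" -- that the expansion of $v_D(\aspar)=|w(\aspar)|$ is only legitimate if $w^{(1)}\ne 0$ -- is not a genuine obstruction, and the theorem holds unconditionally as stated. The modulus of any complex-valued quantity admitting an asymptotic power series in $\aspar^{-1}$ itself admits one, whether or not its first coefficient vanishes. Write $w(\aspar)\sim\sum_{m\ge 1}w^{(m)}\aspar^{-m}$ (the $m=0$ term vanishes by disjointness of supports, as you showed). If every $w^{(m)}=0$ then $w(\aspar)=O(\aspar^{-\infty})$, hence $|w(\aspar)|=O(\aspar^{-\infty})$ and the trivial expansion $v_D^{(m)}\equiv 0$ works. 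Otherwise let $m_0\ge 1$ be the least index with $w^{(m_0)}\ne 0$, and write $w(\aspar)=w^{(m_0)}\aspar^{-m_0}\bigl(1+r(\aspar)\bigr)$ with $r(\aspar)=O(\aspar^{-1})$ possessing its own asymptotic power series starting at order $\aspar^{-1}$. Then
\[
|w(\aspar)|=|w^{(m_0)}|\,\aspar^{-m_0}\,\sqrt{1+2\operatorname{Re}r(\aspar)+|r(\aspar)|^2}\,,
\]
and composing the analytic function $x\mapsto\sqrt{1+x}$ with the real asymptotic power series $2\operatorname{Re}r+|r|^2=O(\aspar^{-1})$ produces a real asymptotic power series by formal substitution. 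Thus $v_D(\aspar)$ has a real-coefficient asymptotic expansion to every order with leading term $|w^{(m_0)}|\aspar^{-m_0}$, and in particular $v_D^{(1)}=|w^{(1)}|$ in every case (both sides are $0$ whenever $w^{(1)}=0$), so the displayed formula is unconditionally correct. No weakened ``conditional'' restatement of \eqref{eq.asymptVD} is needed. The conjecture $v_D^{(1)}\ne 0$ is required elsewhere -- to discharge the non-degeneracy hypothesis \eqref{eq.fermvelocnond} of Theorem \ref{th.degeneracy} and conclude that Dirac points actually occur -- not for the asymptotic expansion itself.
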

\begin{proof}
The proof is a straightforward application of the formula  for $v_D(\aspar)$, \eqref{eq.fermyveloc},   and the asymptotic expansions of  $\Phi_j(\aspar,\cdot)$ for $j=1,2$ in the $H^1-$norm given by Corollary \ref{cor.limiteigenstateH1norm}. We omit the detailed calculations.
\end{proof}

\section{Transfer of Dirac points from the $2^{nd}$ and $3^{rd}$ bands to the $1^{st}$ and $2^{nd}$ bands as $\aspar\uparrow$; numerical results}\label{numerics}

In this section we corroborate  the results of Sections \ref{sec.condP} through \ref{sec-asympresult} with numerical simulations. The computations are implemented with Free Fem++ \cite{Hech-12}  and displayed either with Matlab or  with ParaView.  For these numerical experiments the fundamental  periodic cell $\cell$ contains  two disc-shaped inclusions $\cell^A$ and $\cell^B$ of radius $R_0=0.2$. The results displayed in Figures \ref{fig.3eigenvalbig}, \ref{fig.eigenvaltransition}, \ref{fig.vecg23} and \ref{fig.vecg12}
are obtained using $P_2$ periodic Lagrange finite elements on the same mesh of $\cell$.

\begin{figure}[h!]
\begin{center}
\includegraphics[width=8.5cm]{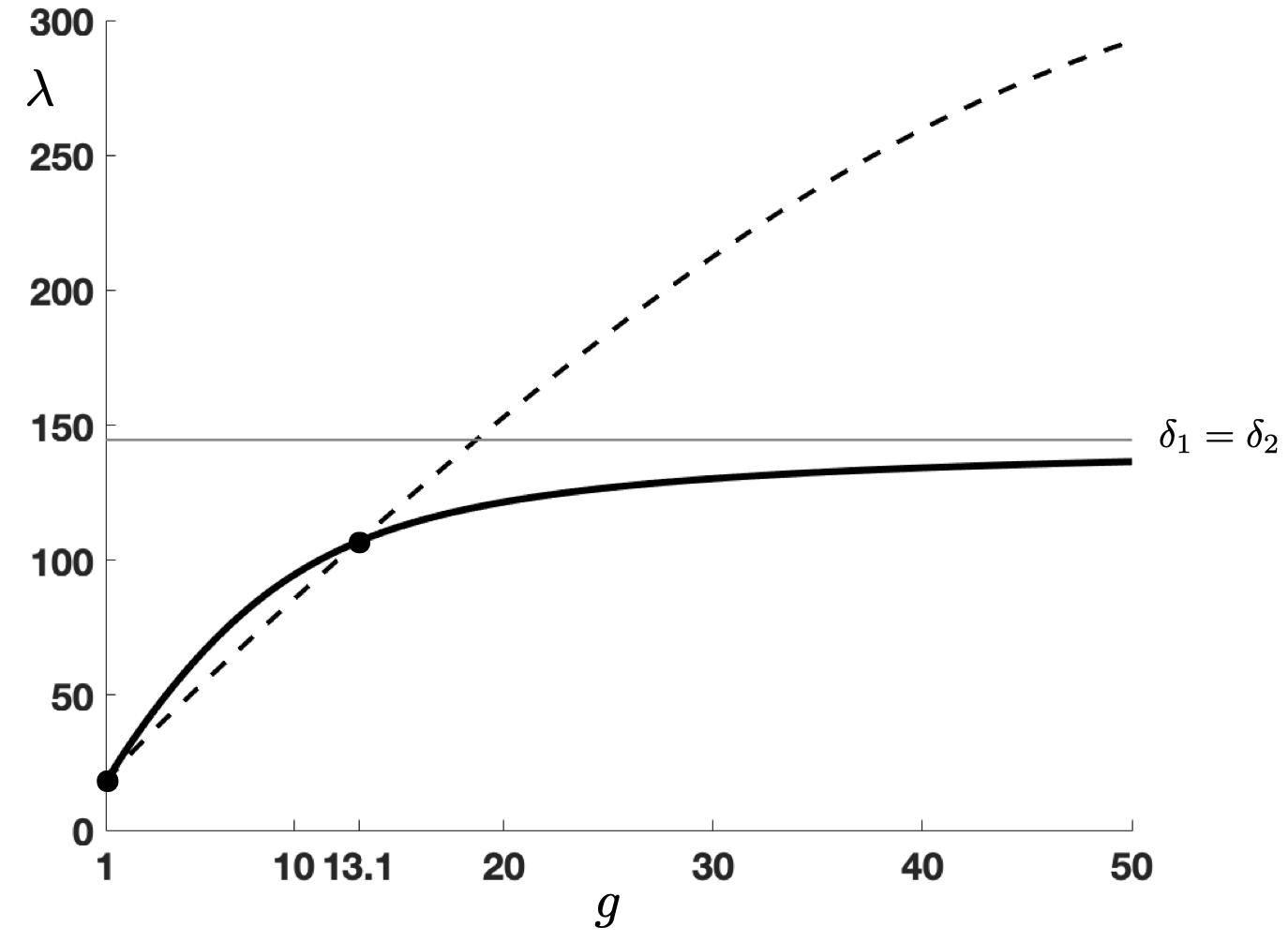}
\end{center}
\caption{
Transfer of Dirac cone with vertex at Dirac point, $(\bK,\lambda_D(g))$, from $2^{nd}$ and $3^{rd}$ bands for $1<\aspar<\aspar_{c}\approx13.1$,
 to the $1^{st}$ and $2^{nd}$ bands for $\aspar>1$. 
 Solid curve: Dirac point energy,  $g\mapsto\lambda_D(g)$, a double eigenvalue with corresponding eigenspace 
 $\subset L^2_{\bK,\tau}\oplus L^2_{\bK,\overline{\tau}}$. Dashed curve: 
 simple eigenvalue with  $1-$ dimensional eigenspace $\subset L^2_{\bK,1}$. As $\aspar\uparrow$, $\lambda_D(g)$ converges from below to the lowest Dirichlet eigenvalue of the domain $\Omega^A\cup\Omega^B$.}
\label{fig.eigenvaltransition}
\end{figure}

\begin{figure}[h!]
\begin{center}
\includegraphics[width=7.25cm]{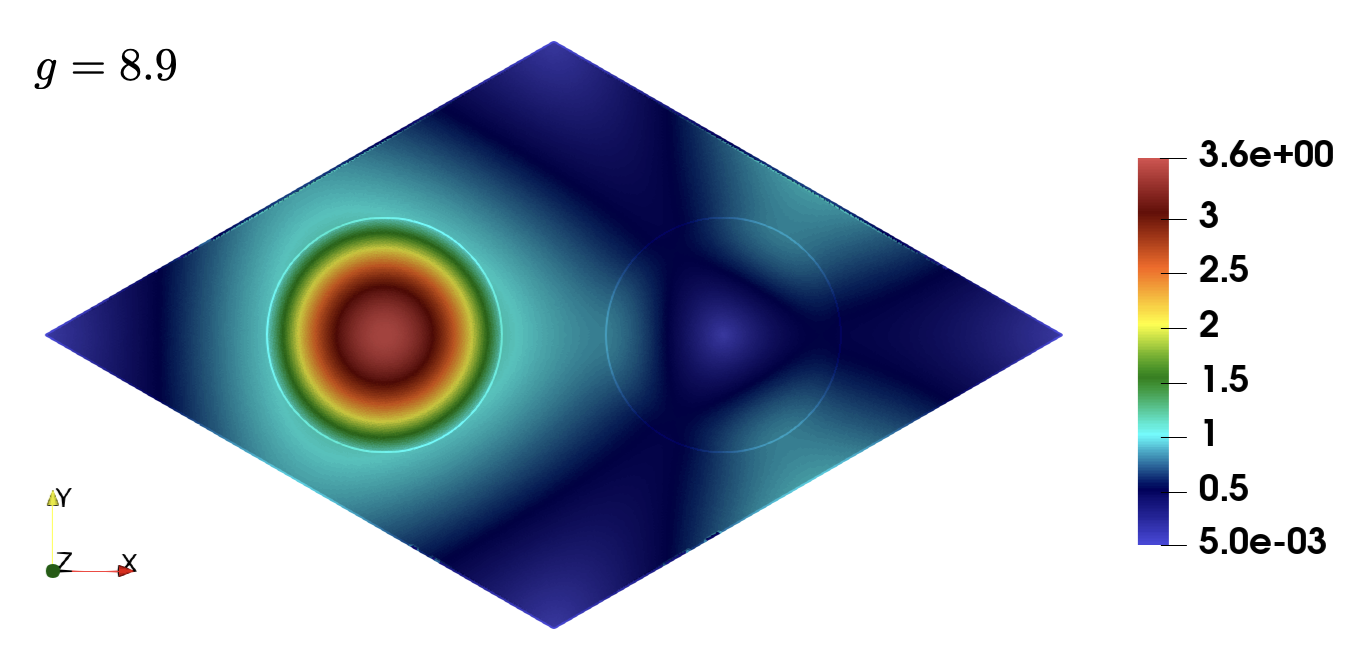}
\includegraphics[width=7.25cm]{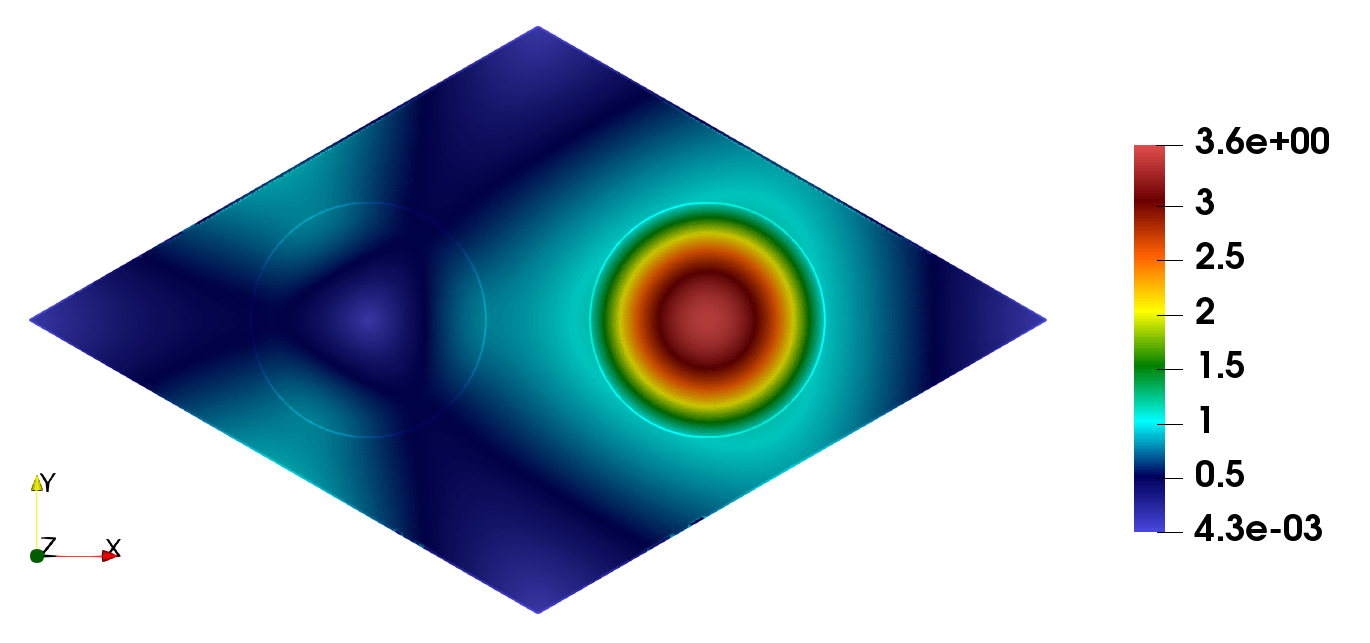}

\includegraphics[width=7.25cm]{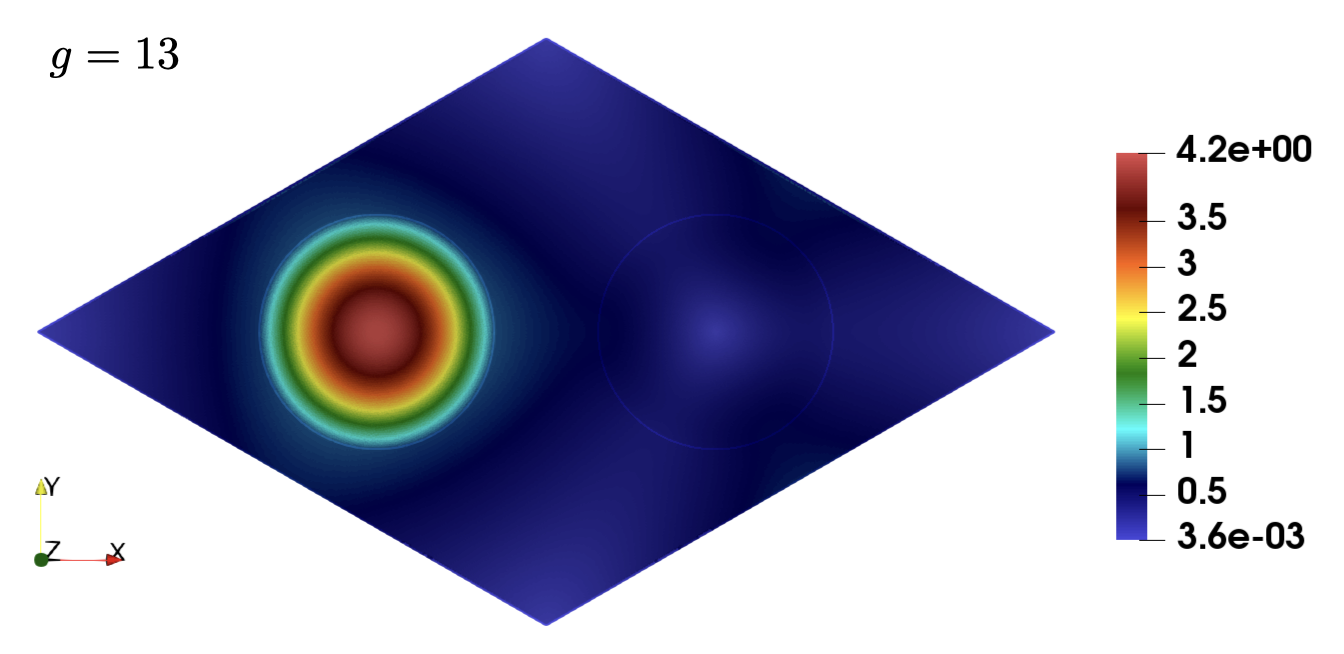}
\includegraphics[width=7.25cm]{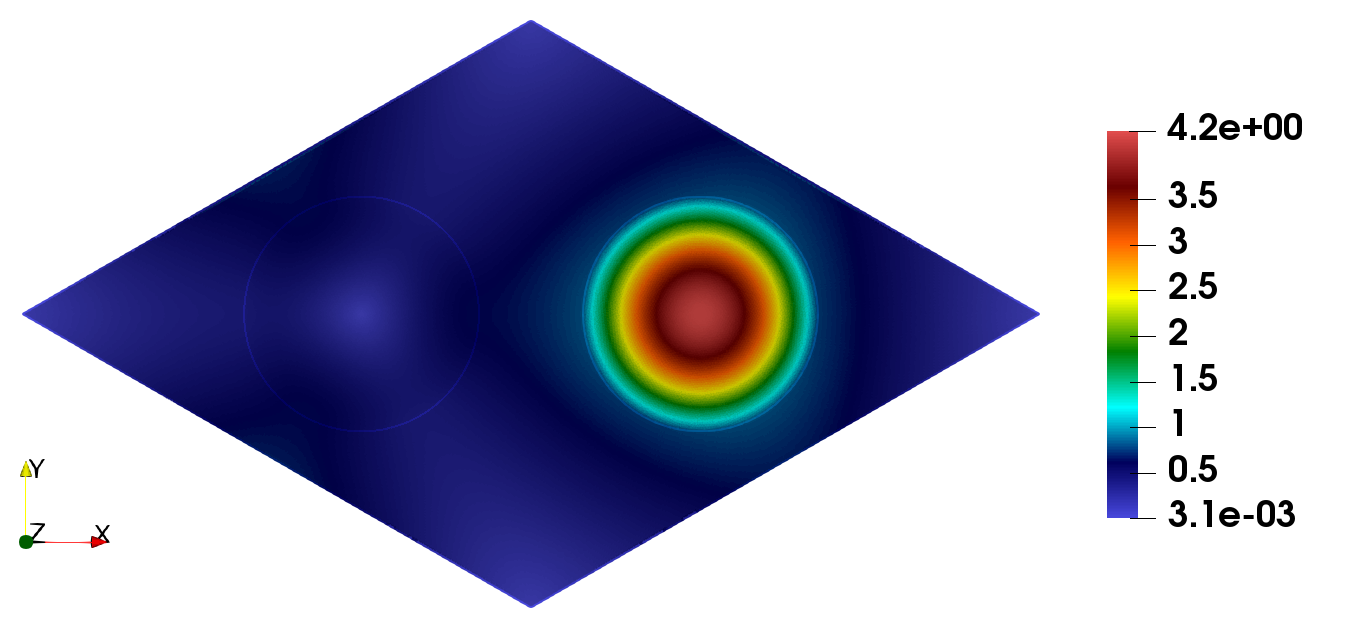}
\end{center}
\caption{$|\Phi_1(\aspar,\cdot)|$ (left)  and $|\Phi_2(\aspar,\cdot)|$  (right) computed with formula  \eqref{eq.defphi1} and \eqref{eq.defphi2} for  $\lambda_D(\aspar)=\lambda_2(\aspar;\bK)=\lambda_3(\aspar;\bK)$ in the case of disc-shaped inclusions of radius $R_0=0.2$  and for $\aspar=8.9$ (first row) and $13$ (second row). }
\label{fig.vecg23}
\end{figure}

\begin{figure}[h!]
\begin{center}
\includegraphics[width=7.25cm]{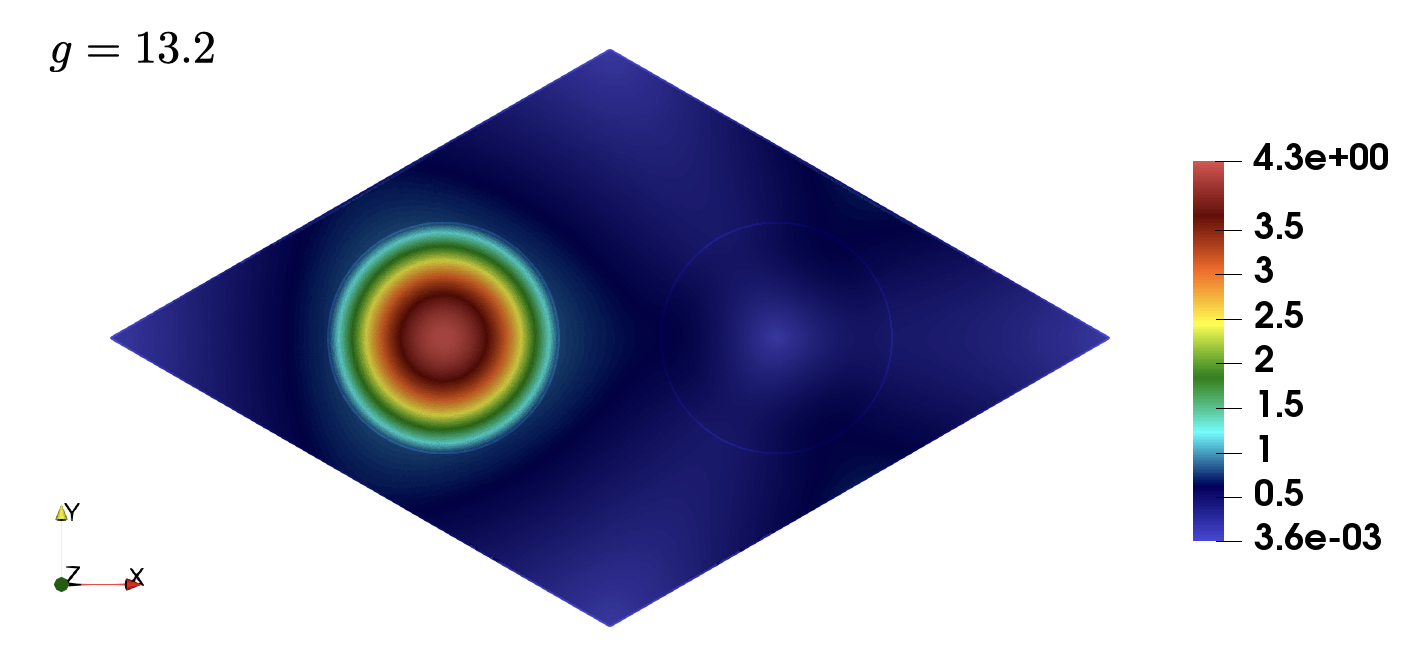}
\includegraphics[width=7.25cm]{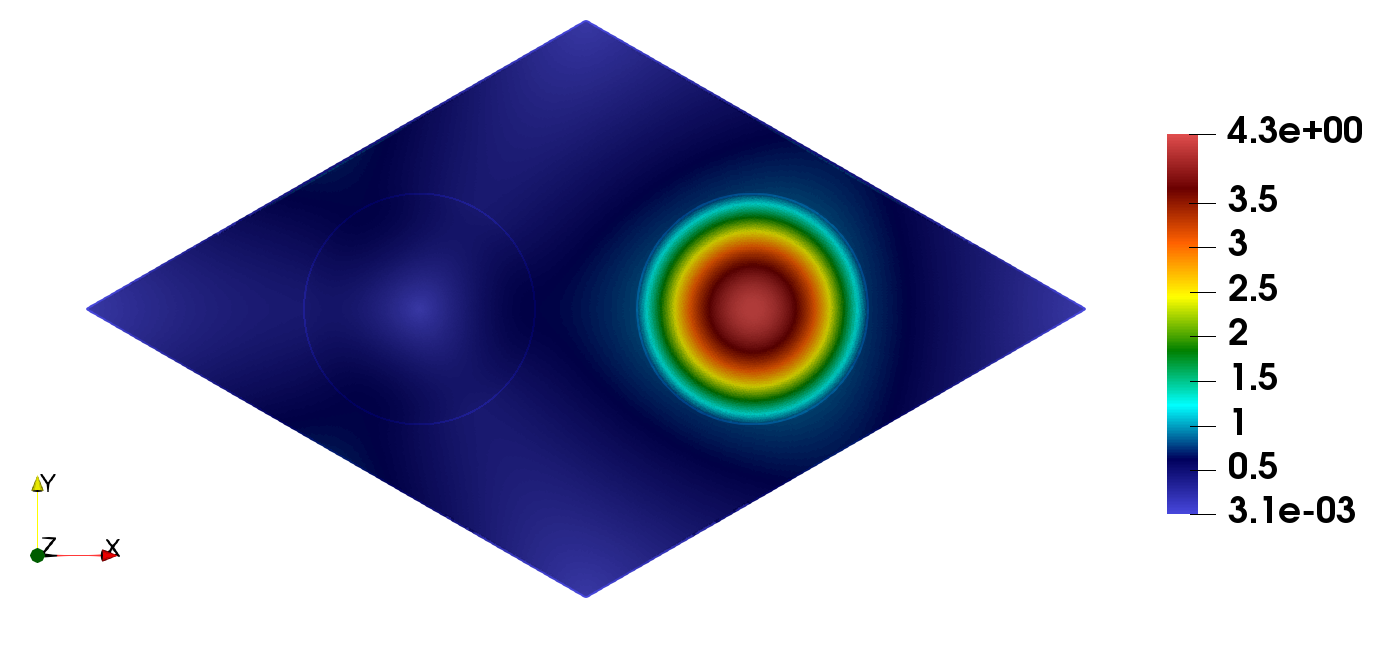}

\includegraphics[width=7.25cm]{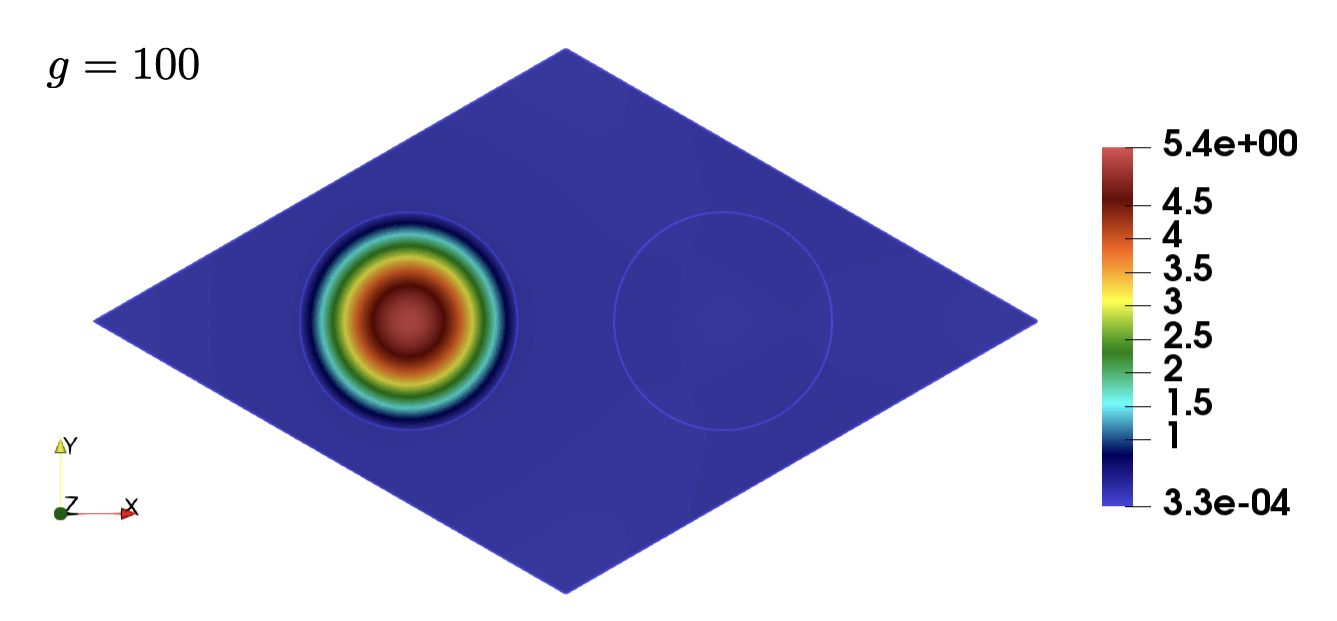}
\includegraphics[width=7.25cm]{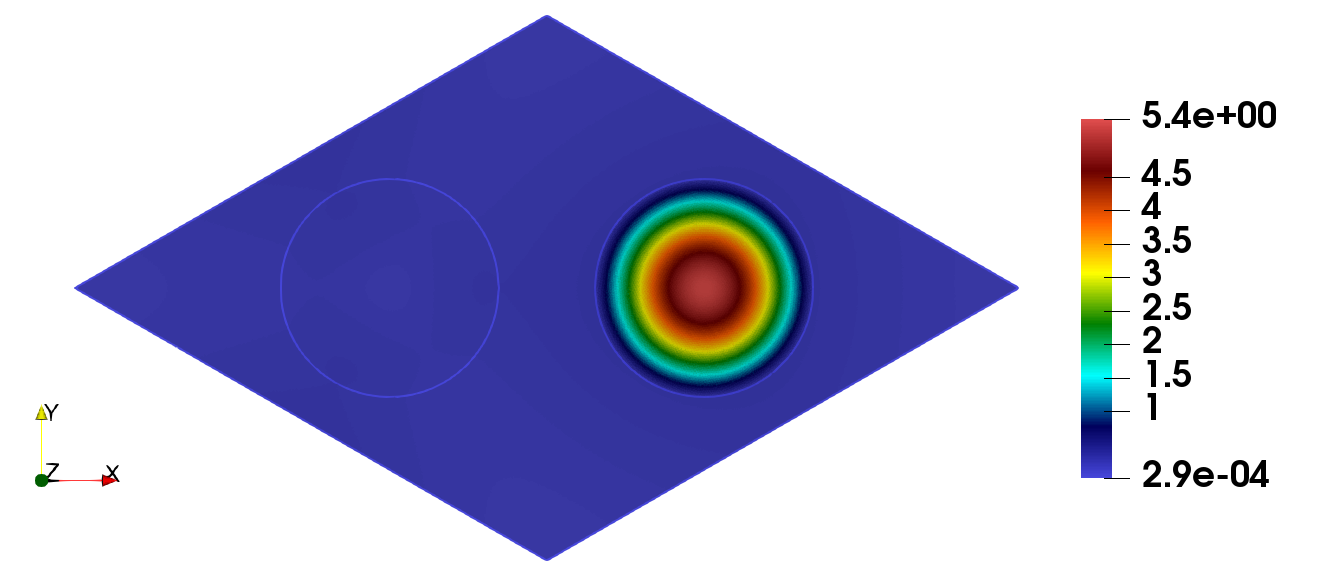}
\end{center}
\caption{$|\Phi_1(\aspar,\cdot)|$ (left)  and $|\Phi_2(\aspar,\cdot)|$  (right) computed with formula \eqref{eq.defphi1} and \eqref{eq.defphi2} for  $\lambda_D(\aspar)=\lambda_1(\aspar;\bK)=\lambda_2(\aspar;\bK)$ in the case of disc-shaped inclusions of radius $R_0=0.2$  and for $\aspar=13.2$ (first row) and $100$ (second row). }
\label{fig.vecg12}
\end{figure}

The behavior of the  first $3$ dispersion surfaces as the contrast parameter, $\aspar$, is varied is described  in Figure \ref{fig.3eigenvalbig}  of Section \ref{sec-mainresults}. Numerically illustrated are: the global behavior on $\mathcal{B}$ of these  dispersion surfaces: uniform convergence, existence of a gap, ... (Corollary  \ref{cor.firstdipstcurves}), and their  local behavior in a neighborhood of the vertices of $\mathcal{B}$: degeneracy, existence of Dirac points (Corollary \ref{cor.Diracpointfirstdispcurv}).

Figure \ref{fig.eigenvaltransition} displays a transfer of Dirac points from the $2^{nd}$ and $3^{rd}$ bands
 to the $1^{st}$ and $2^{nd}$ bands as the contrast parameter $\aspar$ is increased by tracking the Dirac cone vertex 
  at $(\bK,\lambda_D(g))$ as $\aspar$ varies; see Definition \ref{def.Diracpoints}.
For $\aspar=1$, the operator $\bbA_{1,\bK}$ coincides with minus Laplacian (since $\sigma_{\aspar}=1$) with $\bK-$quasi-periodic conditions. As shown in \cite{FW:12}, $\lambda_1(1;\bK)= \lambda_2(1;\bK)=\lambda_3(1;\bK)$ is a triple eigenvalue.
 For $1<\aspar<\aspar_c\approx13.1$ (in particular, $\aspar=8.9$, $\aspar=13$),
  a Dirac point occurs between the $2^{nd}$ and $3^{rd}$ bands  and we observe:
  $\lambda_1(\aspar;\bK)< \lambda_2(\aspar;\bK)=\lambda_3(\aspar;\bK)\equiv \lambda_D(g)$; the (solid curve) graphs of  $\aspar\mapsto \lambda_2(\aspar;\bK)$ and $\aspar \mapsto \lambda_3(\aspar;\bK)$ coincide in Figure  \ref{fig.eigenvaltransition}. It follows from the second  relation in \eqref{eq.propcommutdim} (a consequence of symmetry) that for $\aspar<\aspar_c$,  $g\mapsto  \lambda_1(\aspar;\bK)$ is (dashed) curve of simple $L^2_{\bK}-$ eigenvalues of  
  $\sigma(\bbA_{\aspar,\bK})$. Thus,  by Corollary \ref{cor.commut}, $\lambda_1(\bK,\aspar)$ belongs to $\sigma_1(\bbA_{\bK})$ for $1<\aspar<\aspar_c$. That the Dirac point is located between
  the $2^{nd}$ and the $3^{rd}$ bands indicates that  $\aspar$ in the parameter range $\aspar<\aspar_c$ is not sufficiently large for our theorems to apply;
   indeed for $\aspar$ sufficiently large we know from  Corollary \ref{cor.Diracpointfirstdispcurv}  and  Corollary  \ref{cor.firstdipstcurves} that a Dirac point appears between the first and second bands and that 
    there exists a gap between  the second  and the third band. Increasing $\aspar$, we find for $\aspar=\aspar_{c}\approx 13.1$, the first $3$ dispersion surfaces touch over $\bK$ (and, by symmetry, over all other vertices of $\mathcal{B}$)  
 at a triple eigenvalue $\lambda_1(\aspar_c;\bK)= \lambda_2(\aspar_c;\bK)=\lambda_3(\aspar_c;\bK)$, and that for $\aspar>\aspar_c$ the Dirac point {\it transfers} to an intersection of the first two bands:
 $\lambda_D(g)\equiv \lambda_1(\aspar;\bK)= \lambda_2(\aspar;\bK)<\lambda_3(\aspar;\bK)$ (since the graphs of  $\aspar\mapsto \lambda_1(\aspar;\bK)$ and $\aspar \mapsto \lambda_2(\aspar;\bK)$ coincide on Figure  \ref{fig.eigenvaltransition}).

 Figure \ref{fig.vecg23}  depicts, for  $\aspar=8.9,\, 13$,  the Bloch eigenfunctions
 $\Phi_1(\aspar,\cdot)$ and $\Phi_2(\aspar,\cdot)=\mathcal{P} \mathcal{C}\Phi_{1}(\aspar,\cdot)$,  which are  associated with the degenerate (multiplicity $2$) Dirac eigenvalue $\lambda_D(\aspar)= \lambda_2(\aspar;\bK)=\lambda_3(\aspar;\bK)$. $\Phi_1(\aspar,\cdot)$ and $\Phi_2(\aspar,\cdot)$), given by formulae \eqref{eq.defphi1} and \eqref{eq.defphi2}), are the ``normalized projections'' of  $P_{1,\bK}^A$ (resp. $ \mathcal{P}\mathcal{C} P_{1,\bK}^A=\rme^{-  2\pi \mathrm{i}/3}P_{1,\bK}^B $) on the  $2-$ dimensional kernel of  $\bbA(\aspar,\bK)-\lambda_D(\aspar)I$.

The contrast $\aspar=8.9$ and the inclusion radius $R_0=0.2$ were chosen equal to parameters used in  simulations of Floquet-Bloch modes in \cite[Figure 4 page 71]{Joannopoulos} for the first $2$ TE bands  for a square lattice of circular allumina inclusions at a non-zero quasimomentum. 
Consistent with  observations in \cite{Joannopoulos}  we find, for this range of $\aspar$, that $\Phi_1$ and $\Phi_2$ are not well  approximated by superposition  of our {\it Dirichlet orbitals}.  However, once $\aspar$ is increased to around  $13$ or beyond,   the modes $\Phi_1$ (resp. $\Phi_2$) become   localized on the sublattice, $\Lambda_A$ (resp.  $\Lambda_B$) and begin to look like superpositions of Dirichlet orbitals.

Figure \ref{fig.vecg12}, which depicts the behavior of the Bloch eigenfunctions for higher contrasts: 
  $\aspar=13.2,\, 100>\aspar_c$, demonstrates the behavior predicted by Theorem \ref{th.degeneracy} and its  Corollary \ref{cor.Diracpointfirstdispcurv}. In particular,  $\lambda_D(\aspar)$   is   now situated between the two first bands, and  $\Phi_1$ given by  by formula \eqref{eq.defphi1}  (resp. $\Phi_2$ given by \eqref{eq.defphi2})  are spatially localized  around the $A-$sites (resp. the $B-$sites).  This localization is more pronounced for $\aspar=100$ as predicted by Corollary \ref{cor.Diracpointfirstdispcurv}; $\Phi_1(\aspar,\cdot)\to P_{1,\bK}^A$ and $\Phi_2(\aspar,\cdot) \to 
  \rme^{-2\pi \rmi /3} \,P_{1,\bK}^B$ as $\aspar\to \infty$, where the function $ P_{\bK,1}^A$ (resp. $ P_{1,\bK}^B$), defined by \eqref{eq.defPka}, has support equal to the A- inclusions (resp. on the B-inclusions).

\begin{figure}[h!]
\begin{center}
\includegraphics[width=5.26cm]{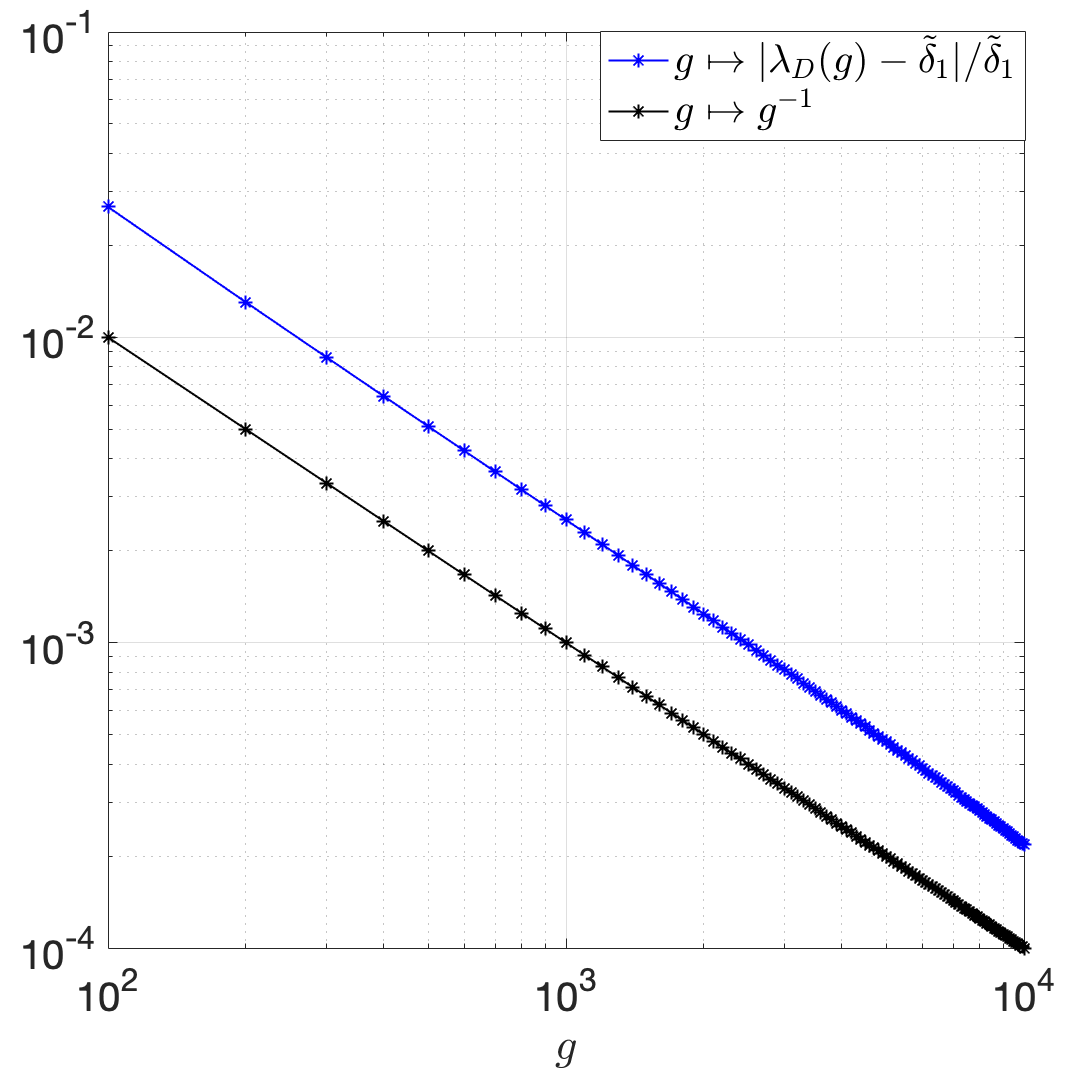}
\includegraphics[width=5.26cm]{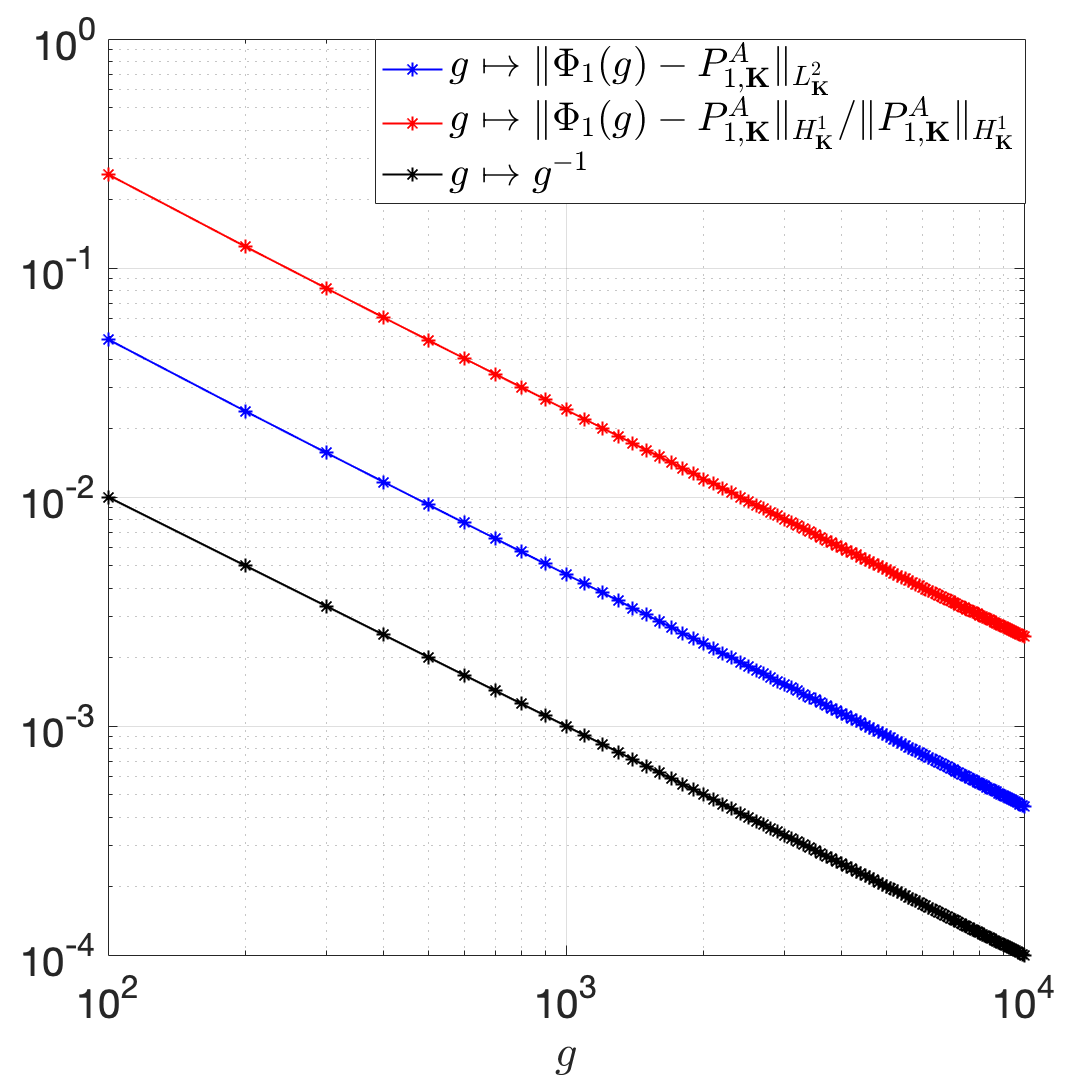}
\includegraphics[width=5.2cm]{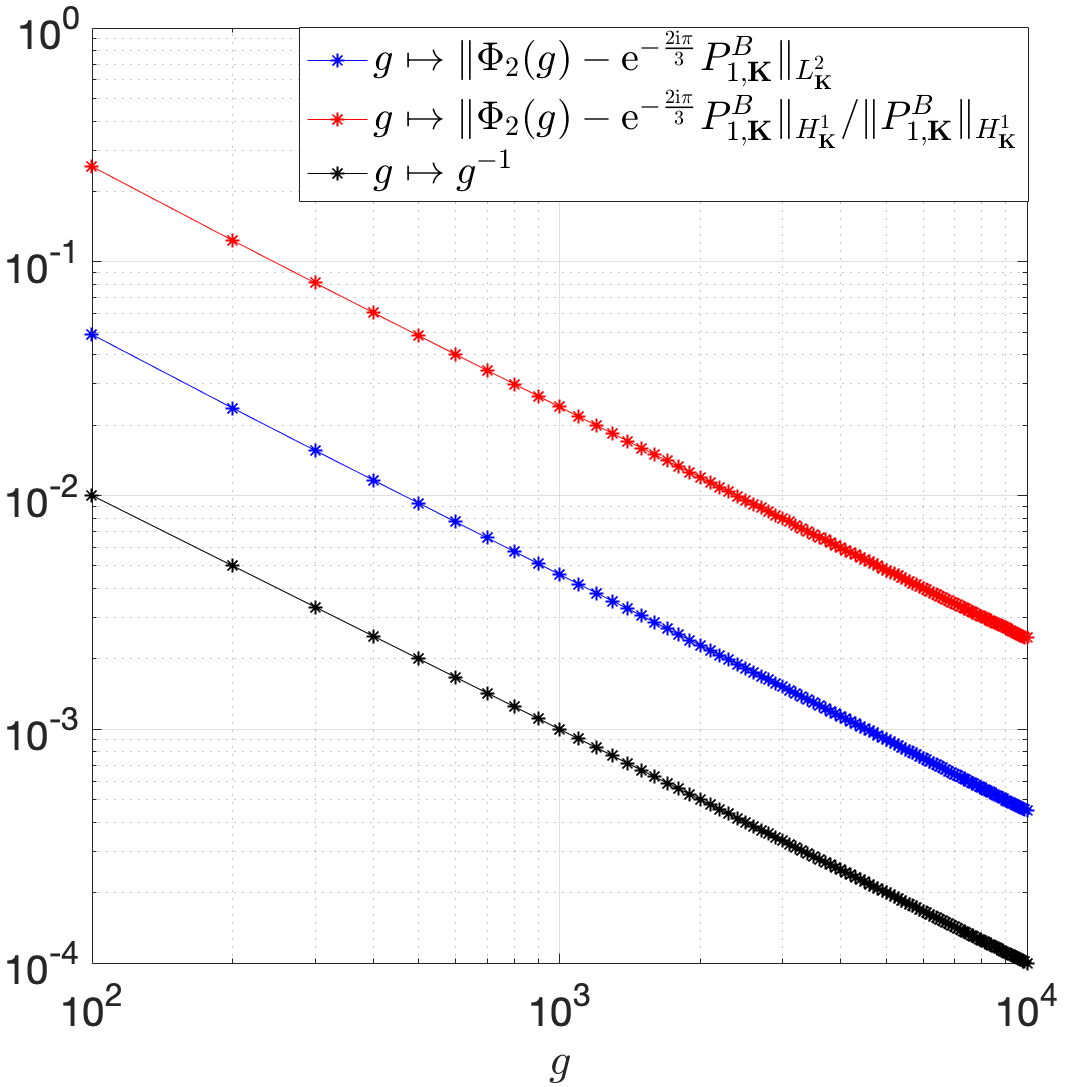}
\end{center}
\caption{All graphs are represented  in a logarithmic scale.
In the left figure, one represents  in blue $\aspar\mapsto |\lambda_D(\aspar)-\tilde{\delta}_1|/\tilde{\delta}_1$. In the center, one plots in blue $\aspar\mapsto \|\Phi_1(\aspar,\cdot)-P_{\bK,A}\|_{L^2_{\bK}}$  and in red $\aspar\mapsto \|\Phi_1(\aspar,\cdot)-P_{\bK,A}\|_{H^1_{\bK}}/\|P_{\bK,A}\|_{H^1_{\bK}}$ . At  the right, one  plots in blue  $\aspar\mapsto \|\Phi_2(\aspar,\cdot)-\rme^{- \, 2\pi \rmi/3} \,P_{1,\bK}^B\|_{L^2_{\bK}}$ and in red  $\aspar\mapsto \|\Phi_2(\aspar,\cdot)-\rme^{-\rmi \, 2\pi/3} \,P_{1,\bK}^B\|_{H^1_{\bK}}/\|P_{1,\bK}^B\|_{H^1_{\bK}}$. In each figure,  $\aspar\mapsto g^{-1}$ is plot in black as a reference graph.
}
\label{fig.conveig}
\end{figure}

\begin{figure}[h!]
\begin{center}
\includegraphics[width=6.2cm]{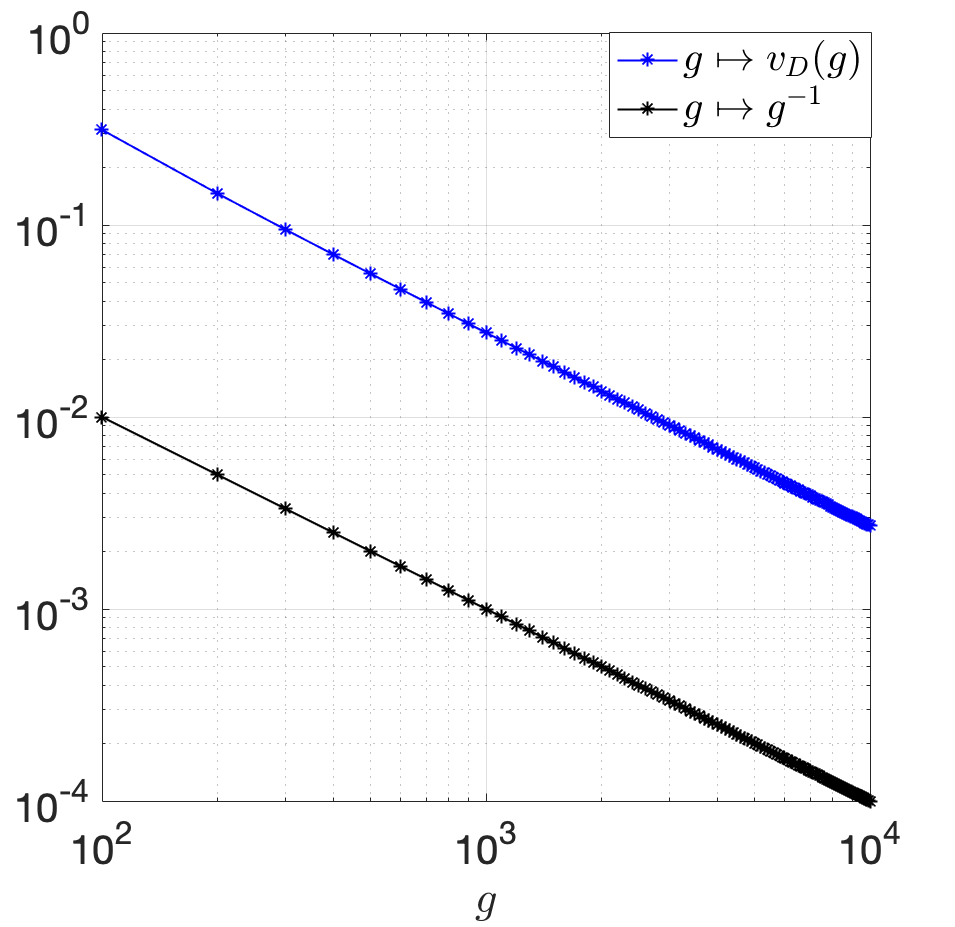} 
\hspace{1cm}
\includegraphics[width=6.2cm]{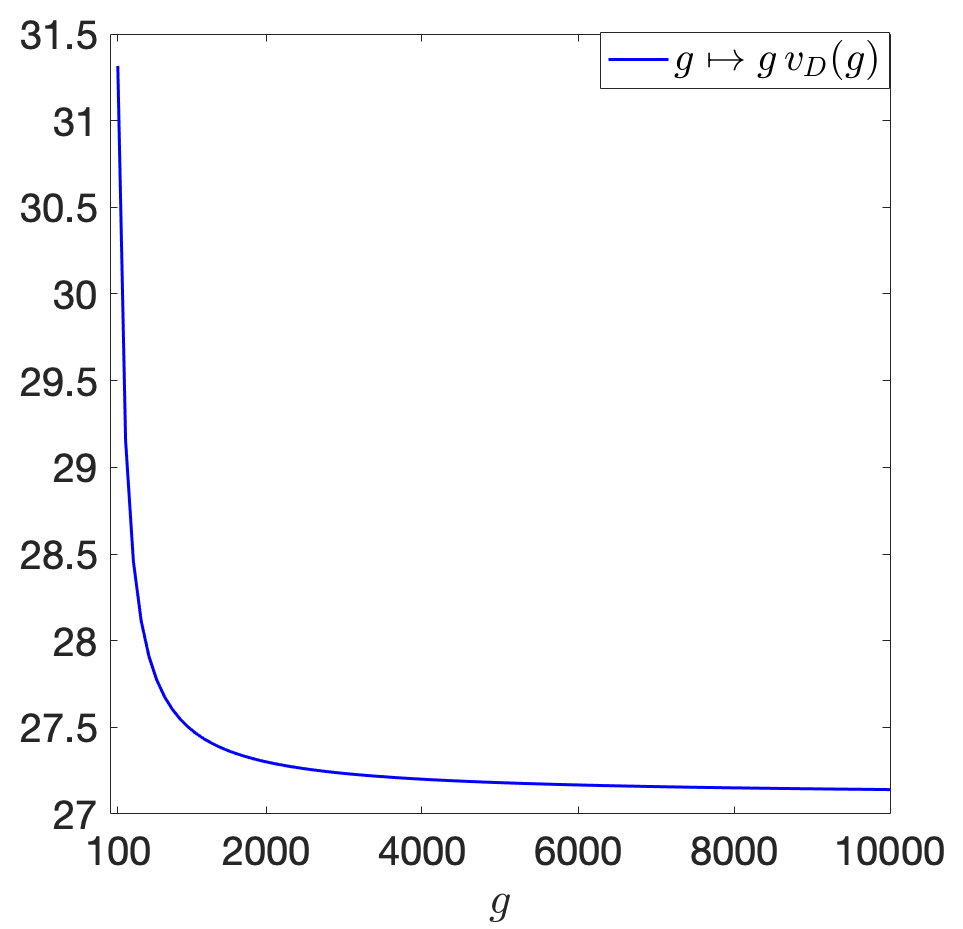}
\end{center}
\caption{The left graph sketches $\aspar\mapsto v_D(\aspar)$ (in blue) $\aspar\mapsto \aspar^{-1}$ (in black, as a reference curve) both in a logarithmic scale whereas the right graph represents $\aspar\mapsto \aspar \, v_D(\aspar)$.}
\label{fig.convgroupvelocity}
\end{figure}

Figures  \ref{fig.conveig} and \ref{fig.convgroupvelocity} are produced using a very fine mesh with $P_2$ periodic Lagrange  finite elements to simulate very large values of the contrast $\aspar$. These figures  concern the first two bands and are performed for $\aspar$ varying from $\aspar=10^2$ to $10^4$ with a constant  step size $\Delta g=100$. Here,  $\aspar$ is large enough to be in the regime of validity of Corollary \ref{cor.Diracpointfirstdispcurv}; $\lambda_D(\aspar)=\lambda_1(\aspar;\bK)= \lambda_2(\aspar;\bK)$ is of multiplicity $2$ and   $\Phi_1(\aspar,\cdot)$ and $\Phi_2(\aspar,\cdot)$ are given by  formulae \eqref{eq.defphi1} and \eqref{eq.defphi2}).
The $1/g$ rate of convergence  of the eigenpairs $(\lambda_D(\aspar),\Phi_j(\aspar,\cdot) ), \, j=1,2$ at $\bK$, predicted by the asymptotic analysis of the Section \ref{sec-asympresult} (Corollary \ref{cor.limiteigenstateH1norm} and  Theorem \ref{th.asympteigen}), is displayed in Figure \ref{fig.conveig}. In the left figure, one observes that the relative error: $\aspar\mapsto |\lambda_D(\aspar)-\tilde{\delta}_1|/\tilde{\delta}_1$ converges to zero, linearly in $1/g$. The center and right figures make clear the $1/g$ convergence of the relative error of $\Phi_1(\aspar,\cdot)-P_{1,\bK}^A$ and $\Phi_2(\aspar,\cdot)- \rme^{-2\rmi \pi /3 } P_{1,\bK}^B$ in the $L^2$ and $H^1$ norms.

The left panel of Figure  \ref{fig.convgroupvelocity}  illustrates that the Dirac velocity $v_D(g)$ (defined by  \eqref{eq.fermyveloc})  tends to $0$ at the rate $1/g$. 
The right panel displays  $g \mapsto g \, v_D(g)$, which indicates convergence to a positive constant $\approx27.1$. The asymptotic expansion \eqref{eq.asymptVD} of Theorem \ref{th.fermveloc} implies $v_D(g)= v_D^{(1)}\aspar^{-1}+\mathcal{O}(\aspar^{-2})$ for $\aspar\gg1$. 
Thus, numerically we observe $v_D^{(1)}>0$. This provides, for $\aspar$ sufficiently large,  a numerical verification of the non-degeneracy condition \eqref{eq.fermvelocnond}  on the Dirac velocity,  $v_D(\aspar)$,   associated with the energy quasi-momentum pair $(\bK,\lambda_D)$; see Corollary \ref{cor.Diracpointfirstdispcurv} and Theorem \ref{th.degeneracy}.
 With this verification, Corollary \ref{cor.Diracpointfirstdispcurv} ensures, for $\aspar$ sufficiently large, the existence of Dirac points at a touching of the first two bands over the $6$ vertices of the Brillouin zone.

\section{Higher energy bands for disc-shaped inclusions}\label{sec.higherbandscirccase}
\subsection{Eigenelements of the limiting operator ($\bbA_{\infty,\bk}$) for disc-shaped inclusions}
 Sections \ref{sec.condP} through \ref{sec-asympresult} discuss results on  the convergence, as  $\aspar\to\infty$, of band dispersion functions,  the existence of Dirac points, asymptotic expansions of the Floquet-Bloch eigenelements and the Dirac velocity. These results require that  $\tilde{\delta}_n$, the $n^{th}$ Dirichlet eigenvalue of  $\LaplDcellA$ for the single inclusion $\cell^A$, satisfies the spectral isolation condition $\condS$
 of Definition \ref{Def.condS}.  As we have seen, condition  $\condS$ holds for the first eigenvalue, $\tilde{\delta}_1$,  for any inclusion shape $\cell^A$ satisfying the assumptions of  Section \ref{sec:sigma}. In this section we demonstrate that condition $\condS$ can be verified in many cases for $n>1$.  This enables us, for disc-shaped inclusions, to obtain results on Dirac points occurring at energies deep within the band spectrum,  infinitely many such.

In this section, we take  $\cell^A=B(\bv_A,R_0)$ where  $B(\bv_A,R_0)$ is the open ball of radius $R_0>0$ centered at $\bv_A$. The radius $R_0$ is chosen so that the geometrical assumptions of Section \ref{sec:sigma} hold, namely $\overline{\cell^{A}}\cap \overline{\cell^{B}}=\emptyset$ and  $\overline{\cell^+}\cap \partial \cell=\emptyset$;
 in other words,  the $2$ disc-shaped inclusions do not overlap  and  do not intersect the boundary of the unit cell.
For this geometry, the spectrum $\sigma(\LaplDcellA)$  is expressible in terms of  zeros of the Bessel functions $J_p(z)$ for $p\in \NO$. 
We first recall some properties of these zeros; see \cite[chapter XV]{Wat:66} and  Figure \ref{fig.zerosBessel}:

\begin{itemize}
\item For each $p\in \NO$, the zeros of the Bessel functions $J_p(z)$ are real. Moreover,  $J_p(z)$ has an infinite number of positive zeros $z_{p,q}$ arranged in an ascending order of magnitude and indexed by $q\geq 1$.  These zeros are simple, {\it i.e.} $J_p'(z_{p,q})\neq 0$ for $q\geq 1$.
\item The zeros of any $2$ different Bessel functions, $J_{p_1}(z)$ and $J_{p_2}(z)$ where  $p_1\neq p_2$ are distinct. Moreover,  for  consecutive Bessel functions: $J_p(z)$ and $J_{p+1}(z)$,  they interlace:
\begin{equation*}\label{eq.interlac}
0<z_{p,1}< z_{p+1,1}< z_{p,2}<z_{p+1,2}<z_{p,3}<\ldots
\end{equation*}
\end{itemize}
\begin{figure}[!h]
\vspace{-0.3cm}
\centering
 \includegraphics[width=0.9\textwidth]{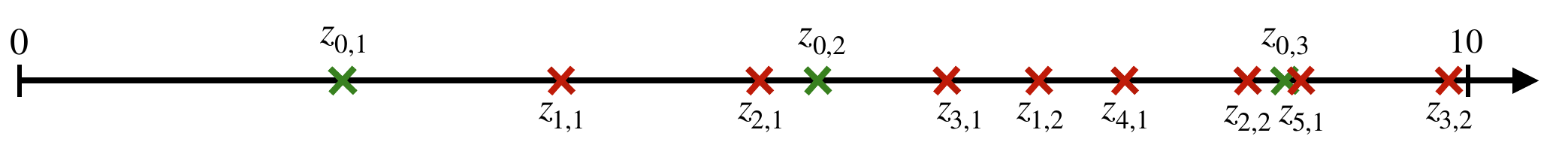}
   \caption{Positive zeros $z_{p,q}$ whose magnitude are below  $10$.}
 \label{fig.zerosBessel}
\end{figure}
Using the above properties, one obtains that the discrete spectrum of $\LaplDcellA$  is given by
\begin{equation*}\label{eq.speccirc}
\sigma( \LaplDcellA)=\left\{ \left(\frac{z_{p,q}}{R_0}\right)^2 \ :\  (p,q) \in \NO \times \N \right\}=\{ \tilde{\delta}_n,   \mbox{ for } n\in \N \},
\end{equation*} 
where the listing of $\tilde{\delta}_n$ is with multiplicity.
The corresponding eigenfunctions are given in terms of  Bessel functions. Indeed,
if  the $n^{th}$ Dirichlet eigenvalue of $\cell^A$, $ \tilde{\delta}_n =\left(z_{0,q}/R\right)^2$ for some $q \in \N$,
 then $\operatorname{ker}(\LaplDcellA-\tilde{\delta}_n \mathrm{Id})$ is $1$-dimensional and spanned by the normalized function $p_n$ with 
\begin{equation}\label{eq.defJoeigenfunc}
 p_n(\bx)=\frac{J_0\big(z_{0,q}\,|\bx| \, R_0^{-1} \big)}{\sqrt{\pi} \, |J_0'(z_{0,q})| R_0},  \quad\bx\in \cell^A.
\end{equation}
However, if  $ \tilde{\delta}_n =z_{p,q}^2/R_0^2 $ for some $p,q\geq1$, then  $\operatorname{ker}(\LaplDcellA-\tilde{\delta}_n \mathrm{Id})$ is $2$-dimensional  and spanned by the orthonormal set consisting of the functions:
\begin{equation}\label{eq.defJpeigenfunc}
p_{n,s}(\bx)= \frac{\sqrt{2}\,  J_p\big( z_{p,q} |\bx| \, R_0^{-1} \big)}{\sqrt{\pi} |J_p'(z_{p,q})| R_0}  \, \sin(p\, \theta) \ \mbox{ and }\ p_{n,c}(\bx)=  \frac{\sqrt{2}\, J_p\big( z_{p,q} |\bx| \, R_0^{-1}\big)}{\sqrt{\pi} |J_p'(z_{p,q})| R_0}  \, \cos(p\, \theta),  \quad \bx\in \cell^A.
\end{equation}
To obtain the normalization  in \eqref{eq.defJoeigenfunc} and  \eqref{eq.defJpeigenfunc}, we use  \cite[formula (11) page 135, chapter 5]{Wat:66}: $$\int_0^{R_0} J_p^2(z_{p,q} \, r/R_0) \, r \, \mathrm{d}r =R_0^2  \, J_p'(z_{p,q})^2/2.$$ This normalization is well-defined since $J_p'(z_{p,q})\neq 0$. 

The next proposition states  that the (infinitely many) Dirichlet  eigenvalues of $\cell^A$, $ \tilde{\delta}_n $, 
 which arise from zeros of the zeroth order Bessel function, $J_0(z)$, satisfy the spectral isolation condition $\condS$.

\begin{proposition}\label{prop.higherband}
Let $ \tilde{\delta}_n =\left(z_{0,q}/R_0\right)^2 $ for some $n,q \in \N$. Then, $\tilde{\delta}_n$ satisfies condition $\condS$ of Definition \ref{Def.condS}. Furthermore, the corresponding normalized eigenfunction $p_n$, defined by \eqref{eq.defJoeigenfunc} (and extended by $0$ on $\bbR^2\setminus \cell^A$), satisfies  the symmetry relations \eqref{eq.RsymR2} and \eqref{eq.CsymR2} and is even. Thus all the results from Sections \ref{sec.condP} to \ref{sec-asympresult} hold for such values of $n$. 
\end{proposition}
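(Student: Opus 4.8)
The plan is to verify the two clauses of the spectral isolation condition $\condS$ of Definition \ref{Def.condS} directly for $\tilde{\delta}_n = \left(z_{0,q}/R_0\right)^2$, using only the properties of the zeros of the Bessel functions recalled just before the proposition, and then to read off the symmetry properties of $p_n$ from its explicit radial form \eqref{eq.defJoeigenfunc}.

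First I would establish simplicity (clause (a)). Since $\sigma(\LaplDcellA) = \{(z_{p,q}/R_0)^2 : (p,q)\in\NO\times\N\}$, any eigenvalue of $\LaplDcellA$ equal to $\tilde\delta_n$ must be of the form $(z_{p',q'}/R_0)^2$, and $(z_{p',q'}/R_0)^2 = (z_{0,q}/R_0)^2$ forces $z_{p',q'}=z_{0,q}$ because both are positive. The distinctness of the zeros of two different Bessel functions then forces $p'=0$, and the strict monotonicity in $q'$ of the simple positive zeros $z_{0,q'}$ of $J_0$ forces $q'=q$. Hence $\operatorname{ker}(\LaplDcellA-\tilde\delta_n\,\mathrm{Id})$ coincides with the one-dimensional space associated to the index pair $(0,q)$, namely $\operatorname{span}\{p_n\}$ with $p_n$ as in \eqref{eq.defJoeigenfunc}; so clause (a) holds.

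Next I would check the nonvanishing of the mean (clause (b)) by taking $v=p_n$ and integrating in polar coordinates. With $\alpha = z_{0,q}/R_0$, the identity $\frac{\rmd}{\rmd r}\big(r J_1(\alpha r)\big) = \alpha\,r\,J_0(\alpha r)$ gives $\int_0^{R_0} J_0(\alpha r)\,r\,\rmd r = \alpha^{-1}R_0\,J_1(z_{0,q}) = (R_0^2/z_{0,q})\,J_1(z_{0,q})$, so that $\int_{\cell^A} p_n\,\rmd\bx$ is a nonzero scalar multiple of $J_1(z_{0,q})$. Since $J_1 = -J_0'$ and the zeros of $J_0$ are simple, $J_0'(z_{0,q})\neq 0$, hence $J_1(z_{0,q})\neq 0$ (equivalently, the zeros of $J_0$ and $J_1$ are distinct), and therefore $\int_{\cell^A} p_n\,\rmd\bx\neq 0$. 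Combined with clause (a), this shows $\tilde\delta_n$ satisfies $\condS$.

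Finally, the symmetry claims follow at once from the explicit form of $p_n$: it depends on $\bx$ only through $|\bx|$, so $p_n(R\bx)=p_n(R^*\bx)=p_n(\bx)$ for the orthogonal rotation matrix $R$ (this is \eqref{eq.RsymR2}) and $p_n(-\bx)=p_n(\bx)$, i.e.\ $p_n$ is even (the ``$+$'' alternative of \eqref{eq.CPsymR2}); and $J_0$, $J_0'(z_{0,q})$, $R_0$ are all real, so $p_n$ is real-valued, giving \eqref{eq.CsymR2}. Consistency with Proposition \ref{Prop-honeycombsym}, which also yields uniqueness of $p_n$ up to sign once (a) is known, is then automatic. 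I do not expect a genuine obstacle here; the only step requiring care is the Bessel-function bookkeeping in clause (b) (pinning down the integral identity and the normalizing constants), while everything else is an immediate consequence of the recalled facts about the zeros of $J_p$. Having verified $\condS$ for $\tilde\delta_n$ together with the requisite symmetries of $p_n$, the conclusions of Sections \ref{sec.condP} through \ref{sec-asympresult} apply verbatim for these values of $n$.
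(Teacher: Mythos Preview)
Your proposal is correct and follows essentially the same approach as the paper: simplicity is read off from the distinctness of zeros of different Bessel functions (which the paper states just before the proposition), the nonvanishing mean is computed via the same identities $(xJ_1(x))'=xJ_0(x)$ and $J_1=-J_0'$ together with simplicity of the zeros of $J_0$, and the symmetry properties follow immediately from the radial form of $p_n$. The only cosmetic difference is that the paper carries the explicit normalization constant through to obtain $\int_{\cell^A}p_n\,\rmd\bx = -2\sqrt{\pi}\,(R_0/z_{0,q})\,\operatorname{sgn}(J_0'(z_{0,q}))$, whereas you stop at ``a nonzero scalar multiple of $J_1(z_{0,q})$''; either suffices for clause (b).
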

\begin{proof}
If   $ \tilde{\delta}_n =\left(z_{0,q}/R_0\right)^2 $ for some $q \in \N$, then by \eqref{eq.defJoeigenfunc} we have that $\tilde{\delta}_n$ is a simple eigenvalue with $p_n$, defined by  \eqref{eq.defJoeigenfunc}, as an associated normalized eigenfunction. Moreover, by the identities: $(x\, J_1(x))'=x \, J_0(x)$ and $J_1(x)=-J_0'(x)$ for $x\in \R$  (see \cite{Wat:66}, equations (5) and (7) page 18), it follows that
\begin{equation*}
\int_{\cell^A}p_{n}(\bx) \rmd \bx= \frac{-2 \sqrt{\pi}  \,R_0\, J_0'(z_{0,q})}{ z_{0,q} |J_0'(z_{0,q})|}=-2  \sqrt{\pi} \, \frac{R_0} {z_{0,q}} \operatorname{sgn}(J_0'(z_{0,q})) \neq 0,
\end{equation*}
where $ \operatorname{sgn}$ stands for the sign function. 
Thus,  $\tilde{\delta}_n$  satisfies the condition $\condS$. Morever, it is clear that $p_n$ satisfies the symmetry relations \eqref{eq.RsymR2} and \eqref{eq.CsymR2} and is even.  
\end{proof}

\begin{remark}
The eigenvalues   $\tilde{\delta}_n=\left(z_{p,q}/R_0\right)^2$ with $p,q\geq 1$  satisfy neither of the two properties of the condition $\condS$. Specifically,  (a) $\tilde{\delta}_n$ is of multiplicity $2$, and (b)  $(p_{n,s},p_{n,c})$ is an orthonormal basis of $\operatorname{ker}(\LaplDcellA-\tilde{\delta}_n \mathrm{Id})$ and both functions  have zero mean. Hence, 
$ \int_{\cell^A}u(\bx) \rmd \bx=0$ for all  $ u\in \operatorname{ker}(\LaplDcellA-\tilde{\delta}_n \mathrm{Id})$.
 Hence, the results  of Sections \ref{sec.condP} to \ref{sec-asympresult} do not hold in such cases. The questions of  the existence of Dirac points and  the asymptotic expansions of the Bloch eigenelements at  the quasimomenta  $\bK_*$ in the  high contrast  regime  are not treated in this paper; only our general results on the convergence of the band functions of Section \ref{HL-thy} apply in this setting.
\end{remark}

\subsection{Numerical results for high energy bands}

\begin{figure}[h!]
\begin{center}
\includegraphics[width=5.1cm]{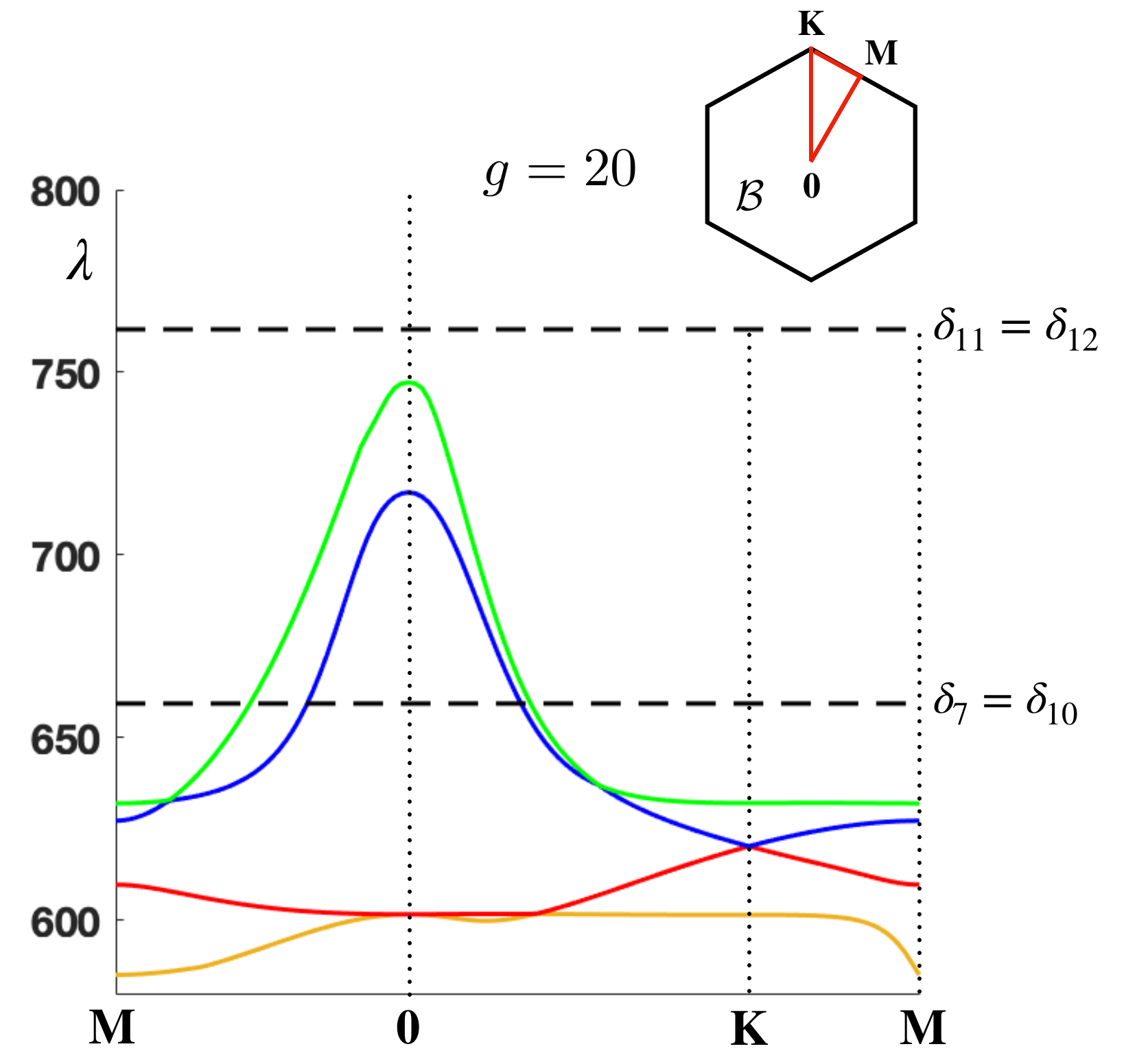}
\hspace{0.1cm}
\includegraphics[width=5.1cm]{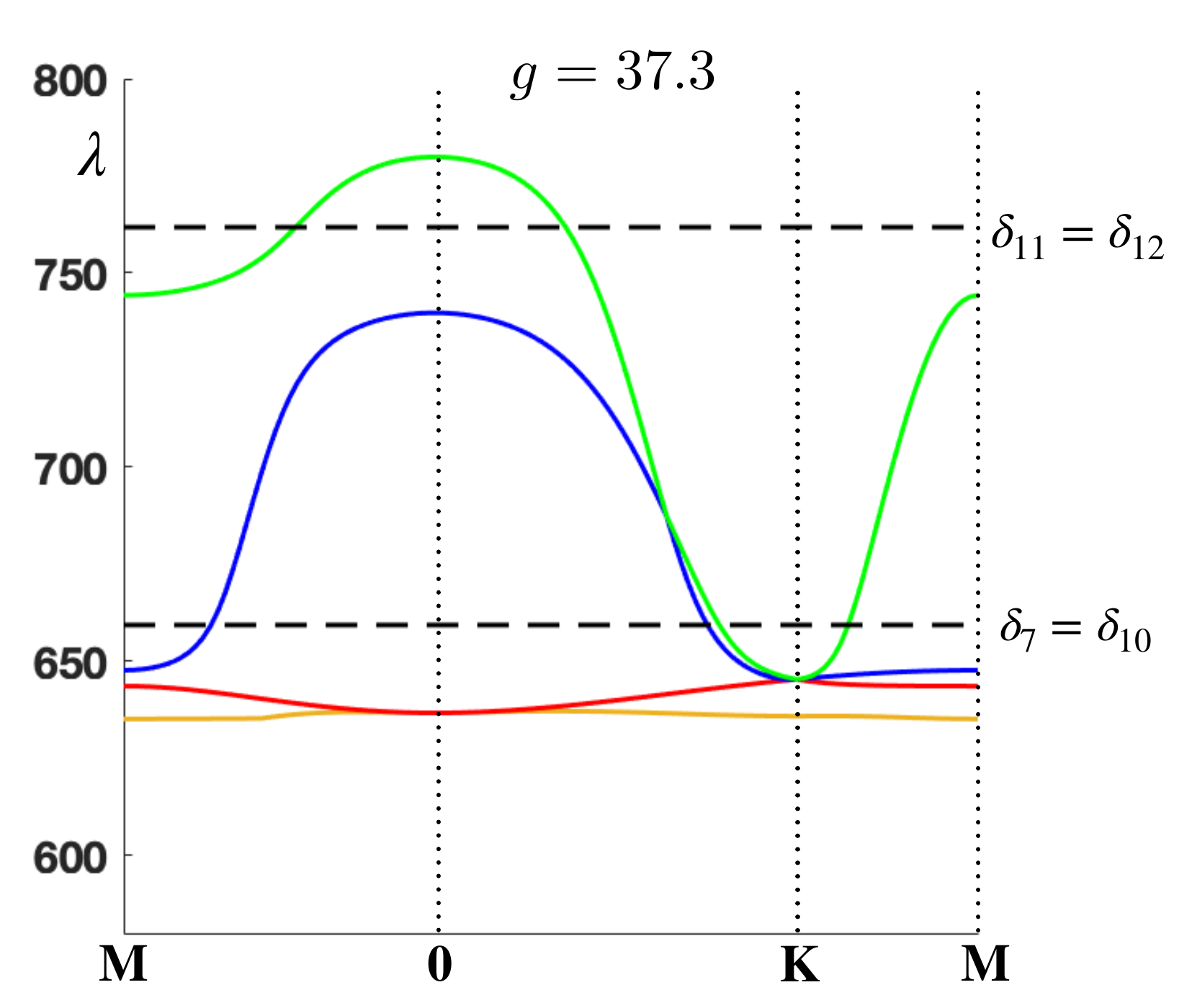}
\hspace{0.1cm}
\includegraphics[width=5.1cm]{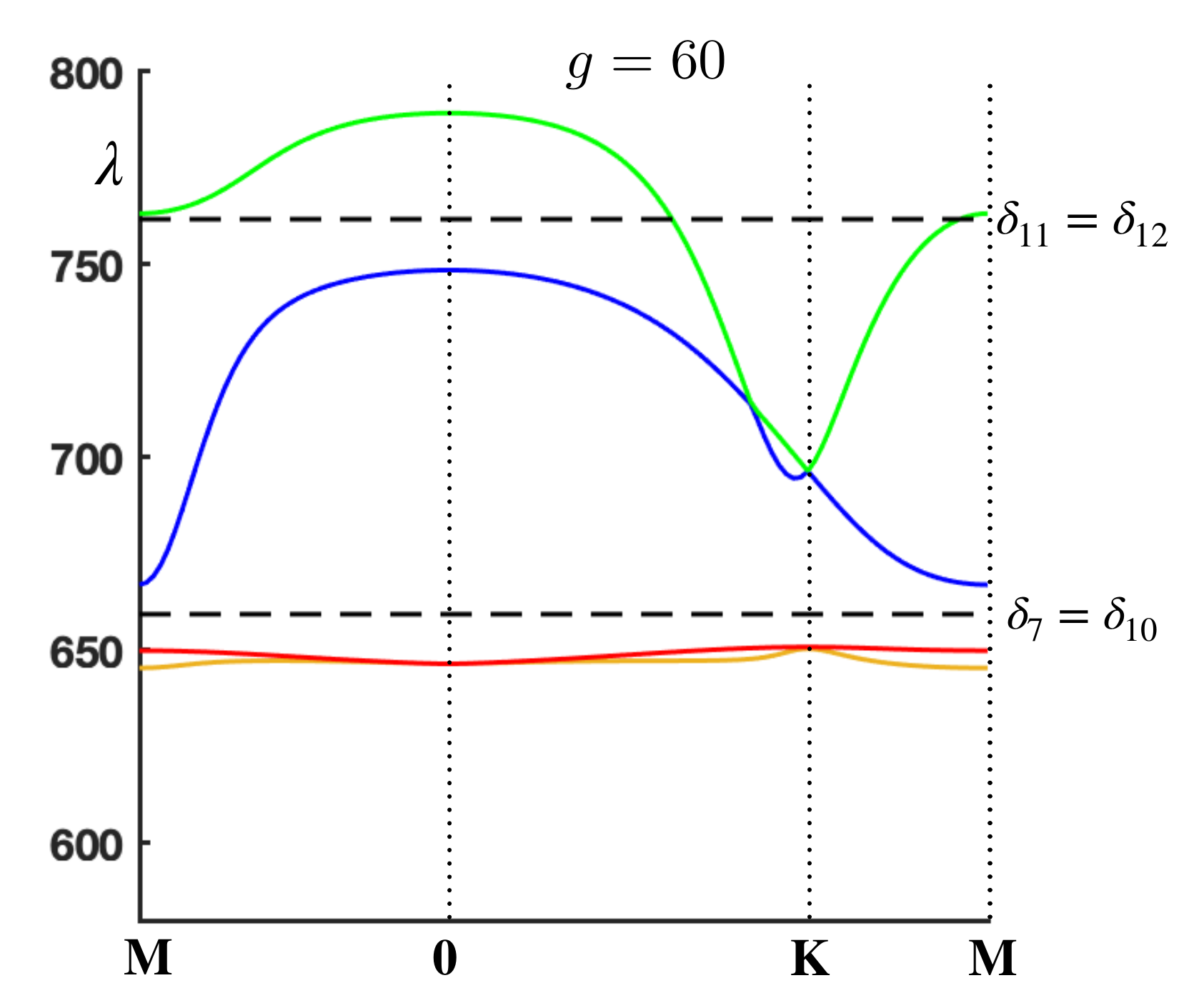}
\end{center}
\begin{center}
\includegraphics[width=5.1cm]{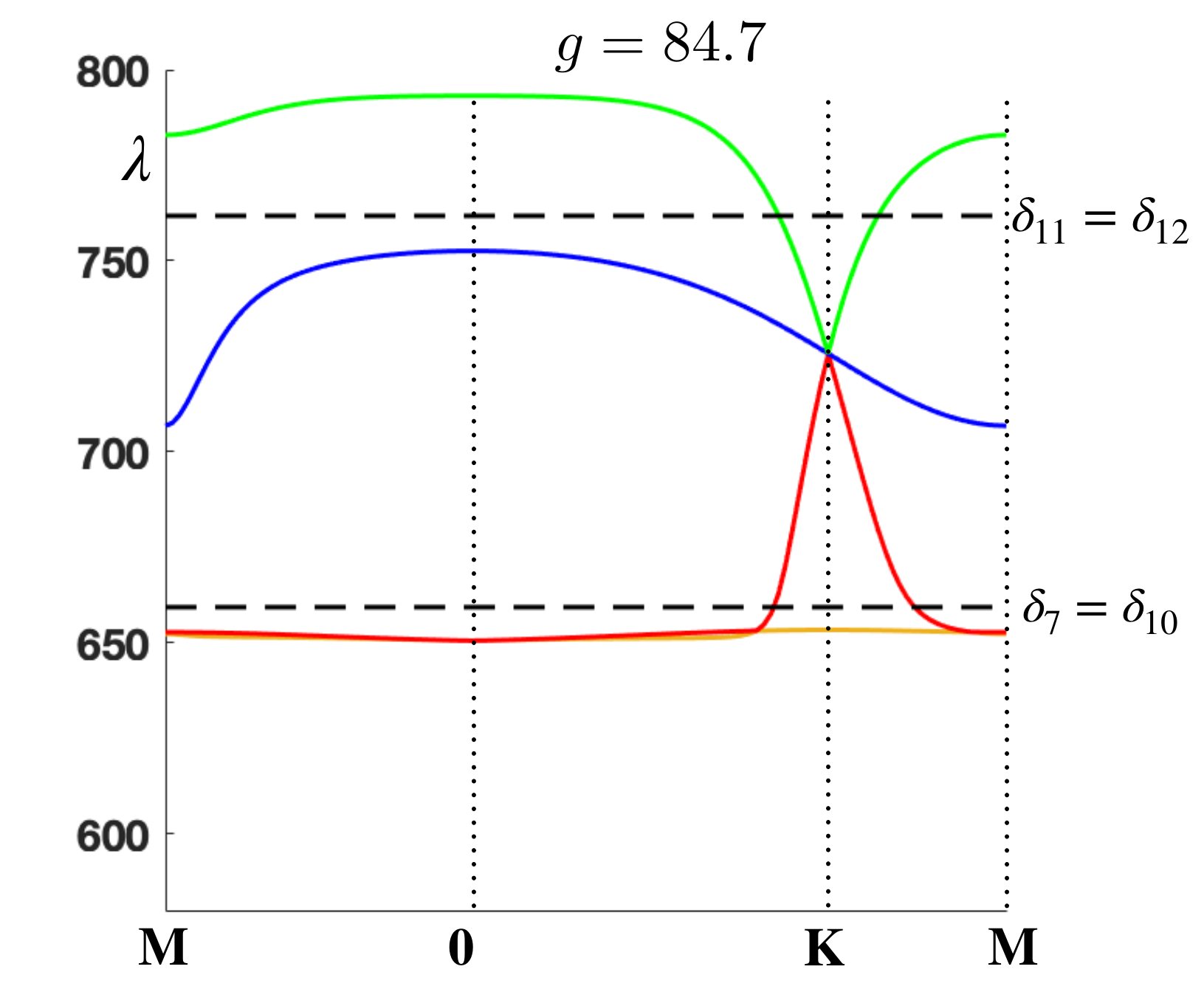}
\hspace{0.1cm}
\includegraphics[width=5.1cm]{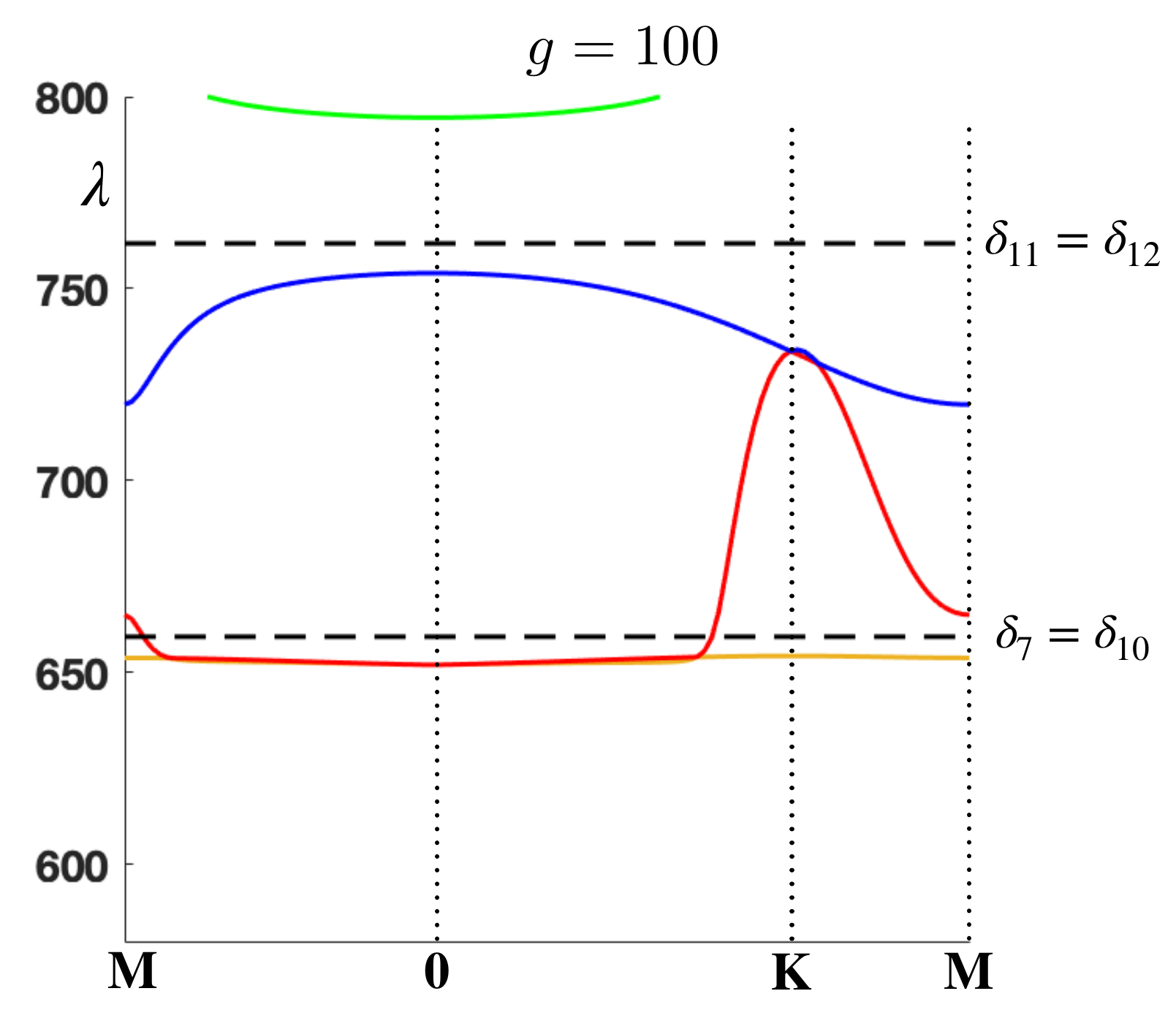}
\hspace{0.1cm}
\includegraphics[width=5.1cm]{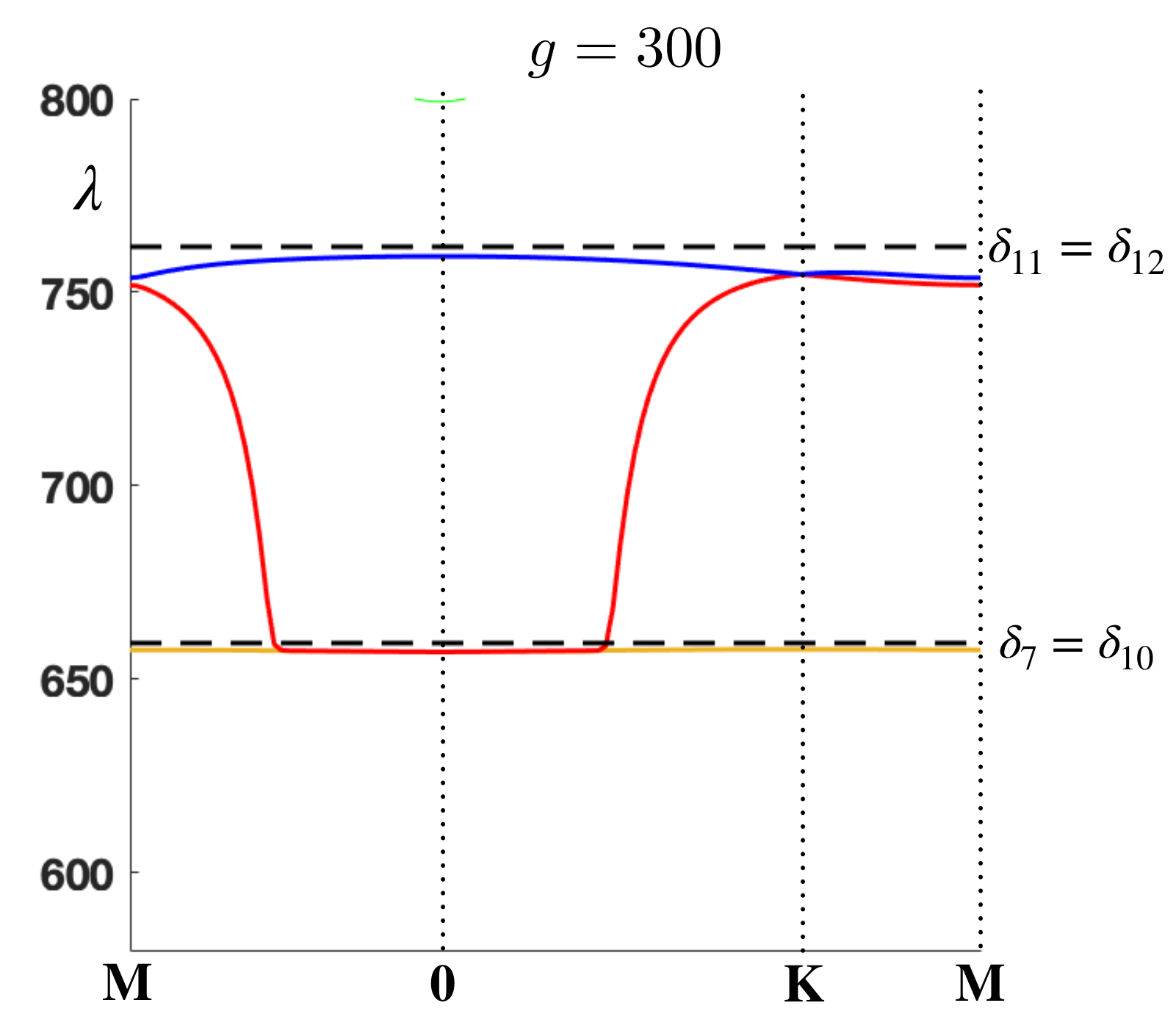}
\end{center}
\caption{ Dispersion maps $\bk\mapsto \lambda_{10}(\bk;g)$, $\lambda_{11}(\bk;g)$, $\lambda_{12}(\bk;g)$,  $\lambda_{13}(\bk;g)$  for the case of two disks $\cell^A$ and $\cell^B$  of radius $R_0=0.2$ per cell. The four maps  are plotted along the contour of ${\bf M}$-$\bf{0}$-${\bf K}$-${\bf M}$ of the  symmetry-reduced Brillouin zone in red (right) for the indicated values of the contrast $\aspar$.} 
\label{fig.3eigenvalbigsecondzero}
\end{figure}

We continue our numerical investigations with the medium described in Section \ref{numerics} with the mesh of $P_2$ periodic finite elements, used to obtain the results of  Figure  \ref{fig.3eigenvalbig}. We are interested here
in the energy bands associated with the Dirichlet eigenvalue $\tilde{\delta}_6=\left(z_{0,2}/R_0\right)^2$ where $z_{0,2}$ is the second positive zero of the Bessel function $J_0(z)$; using relation \ref{eq.speccirc}, the index $n=6$ can be read off Figure \ref{fig.zerosBessel} since the zeros represented by green crosses (resp. red crosses) are associated with eigenvalues of  the operator $\LaplDcellA$ of multiplicity $1$ (resp. 2). By  Proposition \ref{prop.higherband}, since $\tilde{\delta}_6$  is associated with a positive zero of $J_0(z)$, it satisfies the spectral isolation condition $\condS$. Hence, all the results from Sections \ref{sec.condP} to \ref{sec-asympresult} apply to the case  $n=6$, and thus to the associated dispersion surfaces  $\bk\mapsto\lambda_{11}(\aspar;\bk )$ and $\bk\mapsto \lambda_{12}(\aspar;\bk )$.

Figure \ref{fig.3eigenvalbigsecondzero} displays the  dispersion surfaces $\bk\mapsto\lambda_j(g;\bk)$, $ j=10,\, 11,\, 12,\, 13$ over the boundary contour of a symmetry-reduced  Brillouin zone. 
The Dirac point is situated at $\bK$ between the $12-$th and $13-$th bands for $\aspar=60$ and between the $11^{th}$ and $12^{th}$  bands for $\aspar=20$ and for the large contrast values $\aspar=100$ and $300$, 
as predicted by Theorem \ref{th.degeneracy}.
A (transitional) triple degeneracy occurs at $\bK$ for  $\aspar\approx37.3$ and $\aspar\approx 84.7$.
Moreover, for large $\aspar$, as predicted by Theorem \ref{thm.convunifbands} (applied with $n=6$), one observes that:\\ [2pt]
(a) the $11^{th}$ band does not ``become flat''. Indeed, it converges  away from $\bk=0$ to $\tilde{\delta}_6=\delta_{11}=\delta_{12}$ and at $\bk=0$, numerically, one notes that it  converges to $\tilde{\delta}_4=\delta_{7}=\tilde{\delta}_5=\delta_{10}<\delta_{11}$. 
Thus, one can conjecture that  $\nu_{11}=\delta_{10}$ in  Theorem \ref{thm.convunifbands}.  \\[2pt]
(b) the $12^{th}$ band converges  uniformly  to $\tilde{\delta}_6=\delta_{11}=\delta_{12}$ and this band ``becomes flat''.  \\[2pt]
(c) there exists  a gap between the the $12^{th}$th and $13^{th}$  bands. \\[2pt]
Furthermore, since $\delta_7=\delta_8=\delta_9=\delta_{10}$ (see Figure \ref{fig.zerosBessel}), one has by the interlacing property \eqref{eq.interlacing}: $\nu_{10}=\delta_{10}$. Thus,  as predicted  by Proposition \ref{prop.convunif}, the $10^{th}$ band converges uniformly to $\tilde{\delta}_5=\delta_{10}$.
\begin{figure}[h!]
\begin{center}
\includegraphics[width=7.25cm]{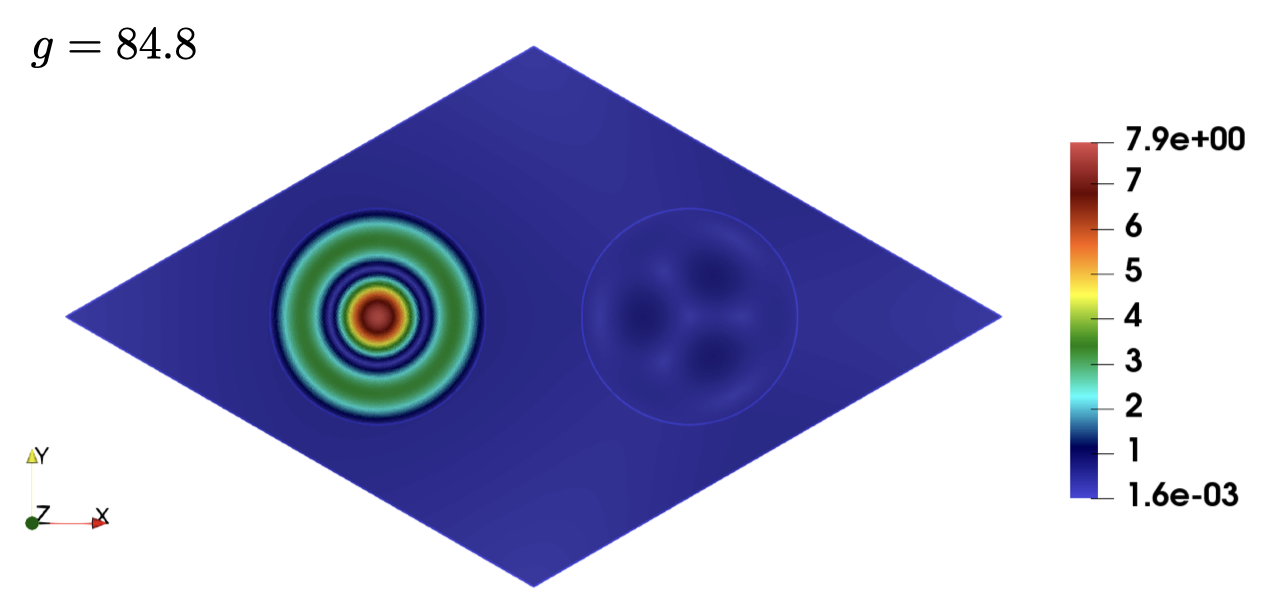}
\includegraphics[width=7.25cm]{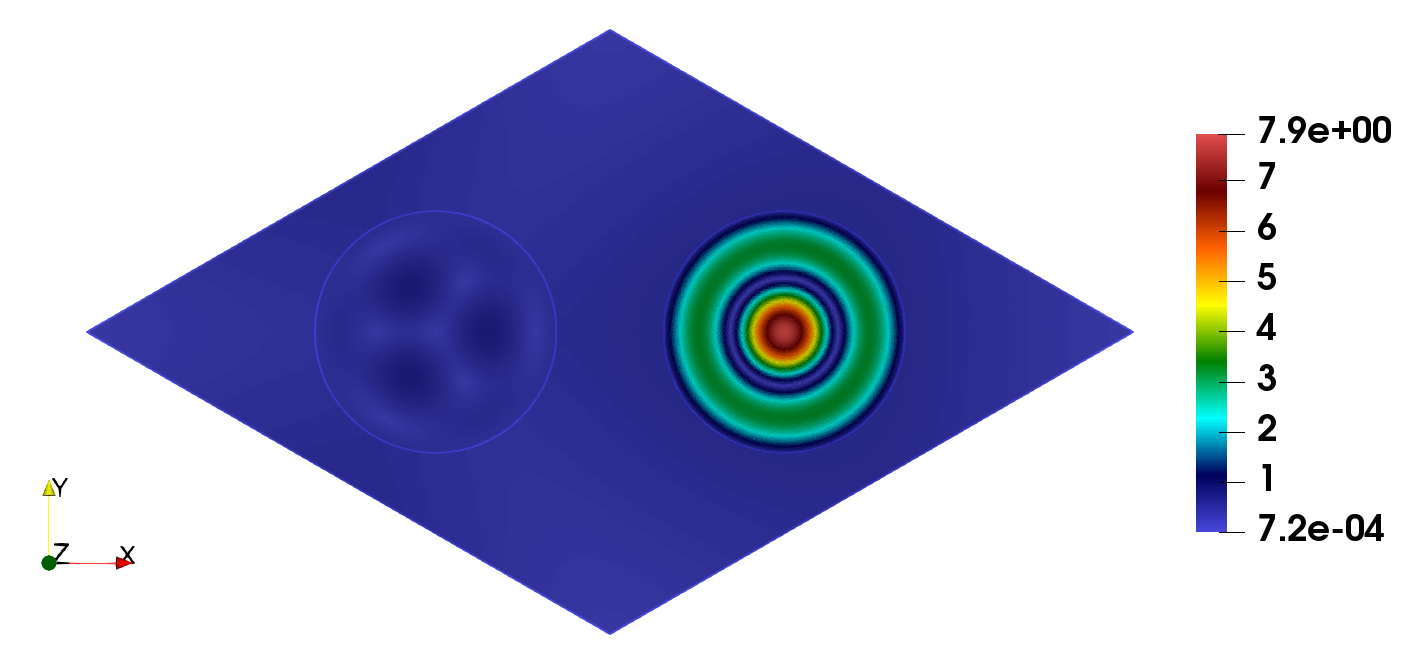}

\includegraphics[width=7.25cm]{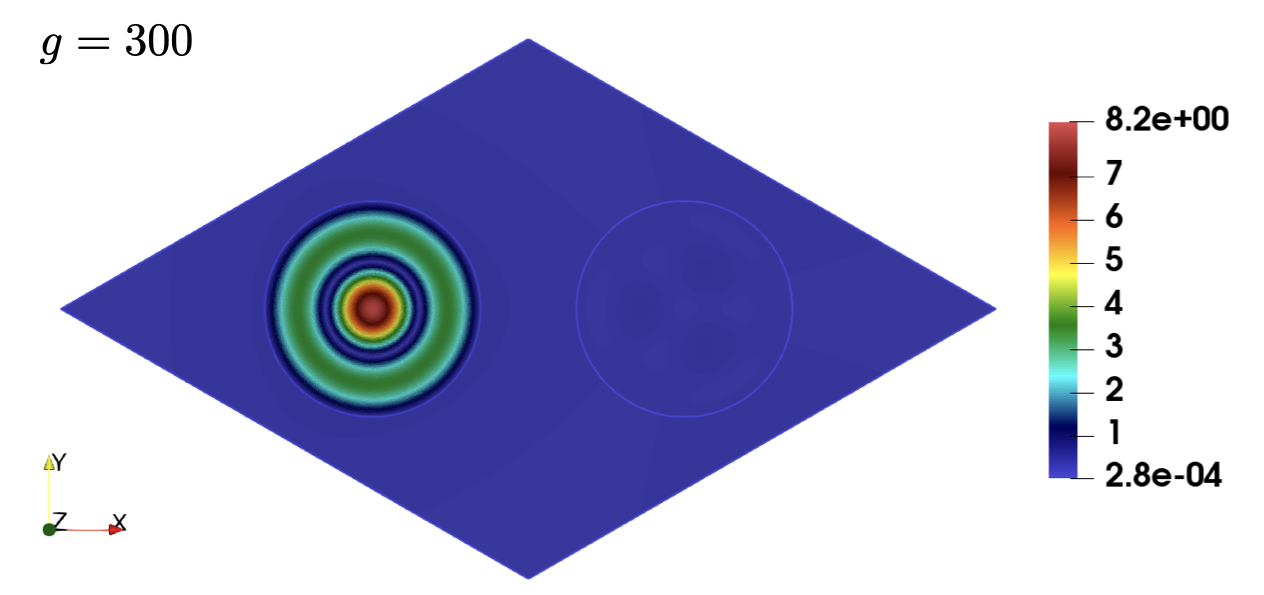}
\includegraphics[width=7.25cm]{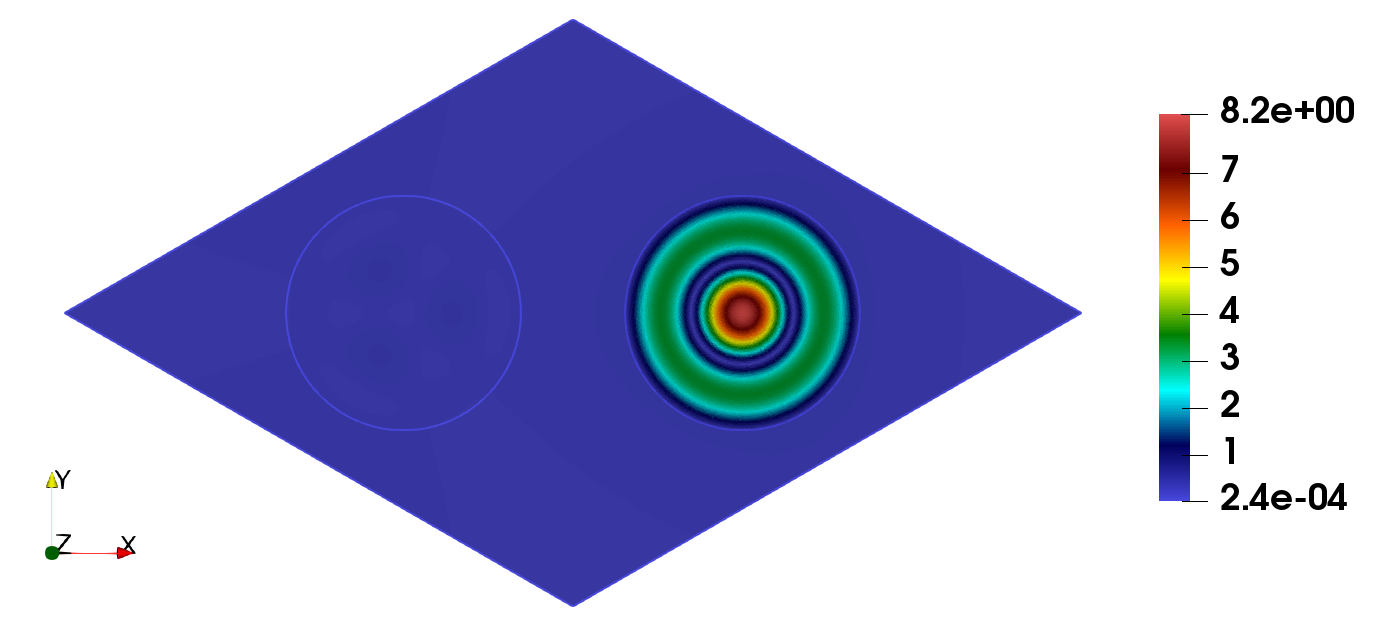}
\end{center}
\caption{$|\Phi_1(\aspar,\cdot)|$ (left)  and $|\Phi_2(\aspar,\cdot)|$  (right)  computed respectively with formula \eqref{eq.defphi1} and \eqref{eq.defphi2} for  $\lambda_D(\aspar)=\lambda_{11}(\aspar;\bK)=\lambda_{12}(\aspar;\bK)$  in the case of disk-shaped inclusions of radius $R_0=0.2$  and for $\aspar=84.8$ (first row) and $\aspar=300$ (second row).}
\label{fig.vecpsecondzero}
\end{figure}

In figure  \ref{fig.vecpsecondzero},  the contrast  values $\aspar=84.8$ and  $300$ are above the occurrence of the last triple (transitioning) point: $\aspar\approx 84.7$ (see Figure \ref{fig.3eigenvalbigsecondzero}). Thus, we are here in the high contrast regime predicted by Theorem \ref{th.degeneracy}:  $\lambda_D(\aspar)=\lambda_{11}(\aspar;\bK)=\lambda_{12}(\aspar;\bK)$ is of multiplicity $2$ and the Dirac points are situated between these two energy bands.  Here, we display $|\Phi_1(\aspar,\cdot)|$  and $|\Phi_2(\aspar,\cdot)|$, computed with formula \eqref{eq.defphi1} and \eqref{eq.defphi2}. 
One observes a localization of $\Phi_1$  (resp. $\Phi_2$) selectively in the  $A-$inclusions (resp. on the $B-$inclusions). Again, 
as predicted by Theorem \ref{th.degeneracy}, this localization is more pronounced as the contrast, $\aspar$, increases.

\section{Proof of Theorem \ref{Thm.weakLyapounovSchmidtred} \large {on conditions for the existence of Dirac points}}\label{sec.diracpointthoerem}
\subsection{Weak form of the Lyapunov-Schmidt / Schur complement reduction}
By symmetry arguments (Remark \ref{rem.Diracpointsequivalence}) it suffices to prove Theorem  \ref{Thm.weakLyapounovSchmidtred} for $\bK_*=\bK$.  Furthermore, relations  \eqref{eq.propcommutdim} of  Corollary \ref{cor.commut},  and assumptions 1 and 2 of  Theorem  \ref{Thm.weakLyapounovSchmidtred}  imply that $\lambda_D(\aspar)$  is an eigenvalue of multiplicity equal to $2$ of  $\bbA_{\aspar,\bK}$. Thus, there exists  $n\geq 1$ such that $\lambda_n(\aspar;\bK)= \lambda_{n+1}(\aspar;\bK)=\lambda_D(\aspar)$.
Hence,  part 1 of  Definition  \ref{def.Diracpoints} of Dirac points has been proved for $(\bK, \lambda_D(\aspar))$.
 It  remains to prove  part 2 of this definition, concerning the conical character of the dispersion surfaces
  near $(\bK,\lambda_D)$.  

We follow the framework developed in \cite{FW:12,FW:14} for the Schroedinger equation and in  \cite{LWZ:18} for divergence form  elliptic operators with smooth coefficients to derive the  asymptotic behavior of the two (Lipschitz) dispersion surfaces $\lambda_{n+1}(\aspar;\bK+\bkg)$ and  $\lambda_{n}(\aspar;\bK+\bkg)$ for $\bkg$ small, which touch at $\lambda_D(g)$   for $\bkg={\bf 0}$.
 Since $\bbA_{\aspar}$ has discontinuous coefficients,  we require a weak formulation of the Lyapunov-Schmidt / Schur reduction of  previous works \cite{FW:12,FW:14} that necessitates many technical 
adjustments.

We seek a non-trivial solution of the  eigenvalue problem:
\begin{equation}\label{eigenvalueprob1}
\bbA_{\aspar,\bK+\bkg} \, \Phi= \lambda\, \Phi,\qquad {D}(\bbA_{\aspar,\bK+\bkg})=\{ u\in H^1_{\bK+\bkg} \mid  \bbA_{\aspar,\bK+\bkg} u\in L^2_{\bK+\bkg}\}\, ,
\end{equation}
where $\lambda$ is near $\lambda_D$
and $\bkg$ is small perturbation of $\bK$. Such solutions $(\lambda,\Phi)$ depend on the asymptotic parameter $\aspar$ but since  $\aspar$ is fixed in this section, suppress this dependence. 
In particular,  as  $\Phi\in {D}(\bbA_{\aspar,\bK+\bkg})$, we have the transmission conditions at the boundary of the inclusions: $[\Phi]=0$ and $[\sigma_{\aspar} \, \nabla \phi \cdot  \bn ]=0$ .
We can reformulate the eigenvalue problem in the space of periodic functions (independent of $\bk$) by
 setting $\Phi(\bx)= \rme^{\rmi (\bkg+\bK) \cdot \bx} \phi(\bx)$. Then,   the eigenvalue problem \eqref{eigenvalueprob1} 
  can be expressed in terms of the operator:
  \[ 
   \tbbA_{\aspar,\bK+\bkg} =\rme^{-\rmi  (\bK+\bkg)\cdot \bx }  \bbA_{\aspar,\bK+\bkg} \,\rme^{\rmi  (\bK+\bkg)\cdot \bx } :=-\nabla_{\bK+\bkg}\cdot \sigma_{\aspar} \nabla_{\bK+\bkg},\quad (\nabla_{\bkg}=\nabla+\mathrm{i}\bkg)\nonumber
\]
acting on the domain:
\[ \rmD(\tbbA_{\aspar,\bK+\bkg})=\{ u\in H^1_{{\bf} 0}=H^1(\mathbb{R}^2/\Lambda) \mid  \tbbA_{\aspar,\bK+\bkg}  \, u \in  L^2_{{\bf} 0}= L^2(\mathbb{R}^2/\Lambda)\}\ .\]
As the operator $\tbbA_{\aspar,\bK+\bkg}$ is unitarily equivalent to $\bbA_{\aspar,\bK+\bkg}$, it is self-adjoint, has a compact resolvent on $L^2(\mathbb{R}^2/\Lambda)$ and the same sequence of eigenvalues (counted with multiplicity) as $\bbA_{\aspar,\bK+\bkg} $.
For $\phi\in \rmD(\tbbA_{\aspar,\bK+\bkg})$,  the transmission boundary conditions are:
\begin{equation}\label{eq.jumpcond}[\phi]=0 \mbox{ and } [\sigma_{\aspar} \, \nabla_{\bK+\bkg} \, \phi \cdot \bn ]=0.
\end{equation}
The eigenvalue problem \eqref{eigenvalueprob1} is equivalent to
\begin{align}\label{eq.eigeprobstrong}
\tbbA_{\aspar,\bK+\bkg} \, \phi &= \lambda\, \phi\ ,\quad \phi\in  \rmD(\tbbA_{\aspar,\bK+\bkg}).
\end{align}

In contrast to  the previous studies  \cite{FW:12,FW:14,LWZ:18}, here $\tbbA_{\aspar,\bK+\bkg}$ has  discontinuous coefficients,  $\sigma_{\aspar}$. In particular, one can not expand the operator $\tbbA_{\aspar,\bK+\bkg}$
as in \cite{FW:12,FW:14,LWZ:18}  since $\tbbA_{\aspar,\bK+\bkg}$ and $\tbbA_{\aspar,\bK}$ do not have the same domain. Indeed, at the boundary of the inclusions, the functions in $ \rmD(\tbbA_{\aspar,\bK+\bkg})$ satisfy the jump conditions  \eqref{eq.jumpcond} whereas functions  $\phi$ in  $ \rmD(\tbbA_{\aspar,\bK})$ satisfy  $[\phi]=0$ and $[\sigma_{\aspar} \, \nabla_{\bK}\, \phi \cdot \bn ]=0$. Nevertheless,
this expansion will be possible via a weak formulation of the Lyapunov-Schmidt / Schur complement reduction.

Multiplication  by a test function $\psi \in H^1(\mathbb{R}^2/\Lambda) $,  integrating over $\cell$, and applying
the divergence theorem, one can show that the eigenvalue problem  \eqref{eq.eigeprobstrong} is equivalent to the following weak formulation: \\  Find a function $\phi  \in H^1(\mathbb{R}^2/\Lambda)$ such that:
\begin{equation}\label{eq.weak}
\int_{\cell}\sigma_{\aspar} \nabla_{\bK+\bkg} \phi \cdot \overline{ \nabla_{\bK+\bkg} \psi}\, \rmd \bx =\ \lambda\ \int_{\cell} \phi \  \overline{\psi}\, \rmd \bx, \ \ \textrm{for all}\  \psi \in H^1(\mathbb{R}^2/\Lambda).
\end{equation}
For $|\bkg|$ small, we seek $(\lambda,\phi)$ with  $\lambda=\lambda_{D}+E_1,\ |E_1|\ll 1$, expecting that $E_1(\bkg)=o(1)$ as $\bkg \to 0$.
We rewrite the problem \eqref{eq.weak} as:
\begin{equation}\label{eq.weakform1}
s_{\bK}(\phi,\psi)=e_{E_1}(\phi,\psi)+b_{\bK,\bkg}(\phi,\psi)+c_{\bk}(\phi,\psi), \ \ \forall \psi \in H^1(\mathbb{R}^2/\Lambda),
\end{equation}
where for all  $u,v\in  H^1(\mathbb{R}^2/\Lambda) $, the continuous sesquilinear forms $s_{\bK}, b_{\bK,\bkg}, c_{\bkg}$  and $e_{E_1}$ are defined by:
\begin{equation}\label{eq.defses}
\begin{array}{lllll}
s_{\bK}(u,v)&=&\langle \bbS^{\bK} u, \overline{v} \rangle&=& \displaystyle \int_{\cell}\sigma_{\aspar} \nabla_{\bK} u \cdot \overline{ \nabla_{\bK} v}\,- \lambda_D\,  u \cdot \overline{  v} \, \rmd \bx, \\
 e_{E_1}(u,v) & = & \langle \bbE^{E_1} u, \overline{v} \rangle &=&   E_1\displaystyle  \int_{\cell} u\cdot  \overline{v} \, \rmd \bx,\\
 b_{\bK,\bkg}(u,v)&=& \langle \bbB^{\bK,\bkg} u, \overline{v} \rangle & =& \displaystyle -\int_{\cell} \sigma_{\aspar} \left[ \rmi \, \bkg u \cdot \overline{\nabla_{\bK} v}+ \nabla_{\bK} u \cdot \overline{ \rmi \, \bkg \, v} \right] \rmd \bx,\\
 c_{\bkg}(u,v) &= & \langle \bbC^{\bkg} u, \overline{v} \rangle& =&\displaystyle  -  |\bkg|^2 \int_{\cell} \sigma_{\aspar} u\cdot  \overline{v} \, \rmd \bx, 
\end{array}
\end{equation}
and $ \langle \cdot, \cdot \rangle $ denotes the duality product between $H^1(\mathbb{R}^2/\Lambda)$ and its dual $H^1(\mathbb{R}^2/\Lambda)^{*}$. The bounded operators $ \bbS^{\bK}$, $\bbE^{E_1}$, $\bbB^{\bK,\bkg}$ and $\bbC^{\bkg} \in \mathcal{B}(H^1(\mathbb{R}^2/\Lambda), H^1(\mathbb{R}^2/\Lambda)^{*})$ are associated to the different continuous sesqulinear forms by the relation \eqref{eq.defses}.
Since $\aspar$ is fixed, we omit the dependence  on  $\aspar$ of these operators and their associated continuous sesquilinear forms.
So the weak formualtion \eqref{eq.weakform1} is equivalent to the following linear equation:
\begin{equation}\label{eq.dual}
\bbS^{\bK} \phi= \bbB^{\bK,\bkg} \phi+ \bbC^{\bkg} \phi+ \bbE^{E_1} \phi
\end{equation}
valued in $H^1(\mathbb{R}^2/\Lambda)^*$ with unknown $\phi\in H^1(\mathbb{R}^2/\Lambda)$.

Let  
\[ \phi_i=\rme^{-i \bK \cdot \bx} \Phi_i,\quad i=1,2,\]
 where $\{\Phi_1,\Phi_2\}$ is the orthnormal basis for the $L^2_{\bK}$
  eigenspace for the eigenvalue $\lambda_D$ in
  Theorem \ref{Thm.weakLyapounovSchmidtred}.
Thus $\{\phi_1,\phi_2\}$ is an orthonormal basis for the $L^2(\mathbb{R}^2/\Lambda)$ of  $\operatorname{ker}(\tbbA_{\aspar,\bK}-\lambda_D \mathrm{I}d)$, i.e. the eigenspace for
 \eqref{eq.eigeprobstrong} with $\bkg=0$. Let $\bbP_{\parallel}$ denote the orthogonal
projection of $L^2(\mathbb{R}^2/\Lambda)$ onto $\mathcal{V}_0=\mathrm{span}\{\phi_1,\phi_2\}$:
$$
\bbP_{\parallel}f=(f,\phi_1)_{ L^2(\mathbb{R}^2/\Lambda)} \phi_1+ (f,\phi_2)_{ L^2(\mathbb{R}^2/\Lambda)} \,\phi_2, \ \mbox{ for any } f\in L^2(\mathbb{R}^2/\Lambda)
$$
and  $\bbP_{\perp}=\mathbb{I}-\bbP_{\parallel}$.
We seek a solution  $(E_1,\phi)$ of \eqref{eq.dual} with
$$
\phi=\phi^{(0)}+\phi^{(1)} 
$$
where 
\begin{equation}\label{eq.defphi0}
\phi^{(0)}=\bbP_{\parallel} \phi =\alpha \, \phi_1+ \beta \,\phi_2 \in H^1(\mathbb{R}^2/\Lambda) \ \ \mbox{  and } \ \ \phi^{(1)}=  \bbP_{\perp} \phi    \in H^1(\mathbb{R}^2/\Lambda).
\end{equation}
Thus our eventual goal is to construct, for all $\bkg$ small: 
$\alpha=(\phi, \phi_1)_{L^2(\mathbb{R}^2/\Lambda)}$, $\beta=(\phi, \phi_2)_{L^2(\mathbb{R}^2/\Lambda)}$,  $\phi^{(1)}$ and $E_1$ such that \eqref{eq.dual}  holds.  

We first remark that the  restriction to $H^1(\mathbb{R}^2/\Lambda) $ of $ \bbP_{\parallel}$ and $\bbP_{\perp}$: $ \bbP_{\parallel}: H^1(\mathbb{R}^2/\Lambda)\to \mathcal{V}_0\subset H^1(\mathbb{R}^2/\Lambda)$ and $\bbP_{\perp}: H^1(\mathbb{R}^2/\Lambda)\to \mathcal{V}_1:=\bbP_{\perp}H^1(\mathbb{R}^2/\Lambda) \subset H^1(\mathbb{R}^2/\Lambda)$ belong to $B(H^1(\mathbb{R}^2/\Lambda))$. Therefore, $\mathcal{V}_0$ and $\mathcal{V}_1$  are closed subspaces of $H^1(\mathbb{R}^2/\Lambda)$ and we have the (non-orthogonal) decomposition $H^1(\mathbb{R}^2/\Lambda)=\mathcal{V}_0\oplus \mathcal{V}_1$.
Thus, for  test functions $\psi\in H^1(\mathbb{R}^2/\Lambda)$ we write: $\psi=\psi^{(0)}+\psi^{(1)}$ where $\psi^{(0)}= \bbP_{\parallel} \psi$ and $\psi^{(1)}= \bbP_{\perp} \psi$.  Furthermore, any operator $\mathbb{M}:H^1(\mathbb{R}^2/\Lambda)\to H^1(\mathbb{R}^2/\Lambda)^*$ is equivalent to an operator $\mathbb{M}: \mathcal{V}_0\oplus \mathcal{V}_1\to \mathcal{V}_0^*\oplus \mathcal{V}_1^*$,  expressed  in the
equivalent block form:
\begin{equation*}
\mathbb{M} \begin{pmatrix} \phi^{(0)}\\   \phi^{(1)}\end{pmatrix} \equiv \begin{pmatrix} \mathbb{M}_{00} & \mathbb{M}_{01}\\ \mathbb{M}_{10} & \mathbb{M}_{11}\end{pmatrix} \begin{pmatrix} \phi^{(0)}\\   \phi^{(1)}\end{pmatrix}  ,
\label{Tform}\end{equation*}
where $\mathbb{M}_{ij}\in  \mathcal{B}(\mathcal{V}_j, \mathcal{V}_i^{*})$ for $i,j=0,1$. We next express 
 \eqref{eq.dual} in the block form $\mathbb{M}(E_1,\bkg)\phi=0$.

First, since $\phi^{(0)}\in \rmD(\bbA_{\bK})$, one can apply the divergence theorem and use that $\phi^{(0)}$ is an eigenfunction of $\tbbA_{\bK}$ for the eigenvalue $E_D$ to obtain:
\begin{equation*}
s_{\bK}(\phi^{(0)},\psi^{(1)})=\langle \bbS^{\bK} \phi^{(0)}, \overline{\psi^{(1)}}\rangle=\int_{\cell}[\nabla_{\bK} \cdot \sigma_{\aspar} \,\nabla_{\bK}\, \phi^{(0)}-E_D \,\phi^{(0)}] \cdot \overline{\psi^{(1)}} \, \rmd \bx=0.
\end{equation*}
It implies that $\bbS_{01}=0$.
In the same way, we find $\bbS_{10}=0$ and $\bbS_{00}=0$. It is also straightforward to see  that: $ \bbE^{E_1}_{01}=0$ and $\bbE^{E_1}_{10}=0$ since $\mathcal{V}_0$ and $\mathcal{V}_1$ are orthogonal subspaces in $L_2(\mathbb{R}^2/\Lambda)$.
Hence the  weak formulation \eqref{eq.dual} of the eigenvalue problem for $\mathbb{A}_{g,\bK+\bkg}$ is equivalent to the system:
\begin{align}
 \begin{pmatrix} \bbE^{E_1}_{00}+\bbB^{\bK,\bkg}_{00}+  \bbC^{\bkg}_{00} & \bbB^{\bK,\bkg}_{01}+ \bbC^{\bkg}_{01} \\
  \bbB^{\bK,\bkg}_{10}+\bbC^{\bkg}_{10}  &   - \bbS^{\bK}_{11} + \bbE^{E_1}_{11}+\bbB^{\bK,\bkg}_{11}+  \bbC^{\bkg}_{11}\end{pmatrix} \begin{pmatrix} \phi^{(0)} \\ \phi^{(1)} \end{pmatrix} \ &=\ \begin{pmatrix} 0 \\ 0 \end{pmatrix}  
  \label{eq.systweak}.
\end{align}
The linear eigenvalue problem \eqref{eq.systweak} is to be solved, for $\bkg$ small,  for $\phi^{(0)}, \phi^{(1)}$ and $E_1$. 
 We shall solve \eqref{eq.systweak} by a Schur complement / Lyapunov-Schmidt reduction strategy. Namely, we first solve 
  the second equation in \eqref{eq.systweak} for $\phi^{(1)}$ as a function of the two parameters $\alpha$ and $\beta$, which specify 
   $\phi^{(0)}=\alpha\phi_1+\beta\phi_2\in \mathcal{V}_0$. We then substitute $\phi^{(1)}[\alpha,\beta]$ into the first equation in \eqref{eq.systweak} to obtain a  two-dimensional homogeneous system of equations  
   $\mathcal{M}(E_1,\kappa;\aspar)\begin{pmatrix}\alpha \\ \beta\end{pmatrix}=\begin{pmatrix}0\\ 0 \end{pmatrix}$, whose 
    $2\times2$ matrix depends nonlinearly on $E_1$ and $\bkg$. The equation
     $\det\mathcal{M}(E_1,\kappa;\aspar)=0$ defines the two dispersion surfaces touching at $(\bK,\lambda_D)$ since it corresponds to solution $E_1=0$ and $ \bkg={\bf 0}$.
    We will show that there exists $\kappa_0\neq 0$  such that for all $|\bkg|<\kappa_0$ (and $\aspar>0$), we can  solve for  $E_1(\bkg;g)$.

\subsection{T-coercivity  and inversion of the operator $\bbS^{\bK}_{11}$}\label{sec.Tcoercivity}

We now proceed with the reduction step. The key is to show that  the $(2,2)$ entry of the operator in \eqref{eq.systweak} is invertible. The main step is to prove the invertiblity of $\bbS^{\bK}_{11}$. One approach is to apply a weak formulation of the Fredholm alternative \cite{Mclean:2000}.  Here, we present an alternative approach based on the notion of {\it T-coercivity}, an explicit reformulation of the $\operatorname{inf-sup}$ theory which generalizes coercivity \cite{Bon:12,Ciar:12}, and can be applied in cases where the standard Fredholm theory cannot be applied, 
e.g.  to invert divergence form elliptic operators with sign-changing coefficients in their principal part \cite{Bon:12}. T-coercivity has the additional appeal of simplicity; it reduces the invertibility of indefinite / non-coercive problems, {\it e.g.} for bands $n\ge2$ where the sesqulinear form is not coercive, to an application of the Lax-Milgram theorem.
The  T-coercivity  approach adopt here has been applied in the context of well-posdness and discretization of Helmholtz  operator  problems in bounded domains \cite{Ciar:12}.

We begin with a brief discussion of  the T-coercivity approach. Let $\mathcal{H}$ be a Hilbert space for which the complex conjugation  is an antiunitary involution. Consider the general  problem of  finding $u\in\mathcal{H}$ such that
\begin{equation}\label{eq.Tcoercivity}
a(u,v)=l(\overline{v}), \quad  \textrm{for all}\  v \in \mathcal{H}, 
\end{equation}
where $a$ and $l$ are respectively a continuous sesquilinear form $\mathcal{H}\times \mathcal{H}$ and a continuous linear form on $\mathcal{H}$. If the form $a$ is not coercive on $\mathcal{H}$, one can not  directly apply the Lax-Milgram theorem to produce  a unique solution $u\in\mathcal{H}$. In such cases, the idea of T-coercivity is to construct (if possible) a bounded isomorphism  $\mathbb{T}$ of $\mathcal{H}$ such that the sesquilinear form:
$a_T(u,v)=a(u,\mathbb{T}v)$ for $u,v \in \mathcal{H}$ is coercive.  One then applies the Lax-Milgram theorem to find a unique solution $u\in\mathcal{H}$ of
\begin{equation}
a_T(u,v)=l(\overline{\mathbb{T} v}),  \quad  \textrm{for all}\ v \in \mathcal{H}.
\label{Tvar}\end{equation}
This solution depends continuously on $l\in  \mathcal{H}^*$, since $\mathbb{T}$ is bounded. 
Furthermore since  $\mathbb{T}:\mathcal{H}\to\mathcal{H}$ is an isomorphism, \eqref{Tvar} is equivalent to \eqref{eq.Tcoercivity}, and thus \eqref{eq.Tcoercivity} has a unique solution.
This is further equivalent  to the bounded operator  $\bbA $ from $\mathcal{H}$ to its dual $\mathcal{H}^*$ defined by 
$\langle \bbA u, \overline{v}\rangle=a(u,v)$ for all $u,v\in \mathcal{H}$ (where $\langle \cdot,  \cdot \rangle $ stands for the duality product between $\mathcal{H}^*$ and $\mathcal{H}$) being invertible.

 To apply the T-coercivity in the present setting, decompose $L^2(\mathbb{R}^2)$ as the orthogonal sum:
\begin{equation}
 L^2(\mathbb{R}^2\setminus \Lambda) = \big(\mathcal{V}_{1,-}\oplus \mathcal{V}_0\big)\oplus \big(\mathcal{V}_{1,-}\oplus \mathcal{V}_0\big)^\perp\ ,
 \label{eq.orthog}
 \end{equation}
 where
 \begin{align*} 
& \mathcal{V}_{0}= \textrm{ ${\rm span}\{\phi_1,\phi_2\}$ }\ \textrm{and}\\
& \mathcal{V}_{1,-}= \textrm{span of a basis of eigenfunctions of the $n-1$ first eigenvalues of $\tilde{\bbA}(\aspar,\bK)$} .
\end{align*}
For   $n=1$,
   we set $\mathcal{V}_{1,-}=\{0\}$.
Associated with the orthogonal decomposition \eqref{eq.orthog} are $3$ projection operators: 
$\bbP_{\perp,-}:L^2(\R^2 /\Lambda)\to\mathcal{V}_{1,-}$, $\bbP_\parallel:  L^2(\R^2 /\Lambda) \to  \mathcal{V}_{0}$ and $ \bbP_{\perp,+}:L^2(\R^2 /\Lambda)\to  (\mathcal{V}_{1,-}\oplus \mathcal{V}_0)^\perp$ satisfying:
$ \bbP_{\perp,-} + \bbP_\parallel + \bbP_{\perp,+} = \mathbb{I}$ and  $ \bbP_{\perp,-}+ \bbP_{\perp,+}= \bbP_{\perp}$. 
The restriction of $\bbP_{\perp,-}$, $\bbP_\parallel$, $\bbP_{\perp,+}$  to $H_1(\mathbb{R}^2\setminus \Lambda)$ belongs the $B(H_1(\mathbb{R}^2\setminus \Lambda))$. We define $\mathcal{V}_{1,+} = \bbP_{\perp,+} H^1(\mathbb{R}^2 / \Lambda)$, a subspace of $\bbP_{\perp,+} L^2(\mathbb{R}^2 / \Lambda)$. This yields the following decomposition of $\mathcal{V}_1$:
\begin{equation*} \mathcal{V}_1=\mathcal{V}_{1,-} \oplus \mathcal{V}_{1,+} ,\label{V1-sum}\end{equation*}
as a non-orthogonal sum of closed subspaces  endowed  with the $H^1-$ inner product and associated norm.
  On the Hilbert space $\mathcal{V}_1$, we define a bounded isomorphism (an involution) $\mathbb{T}$ as follows.  For any $u\in \mathcal{V}_1$, we write the unique  decomposition $u=u_{+}+u_{-}$, where $u_+\in \mathcal{V}_{1,+}$ and $u_{-}\in \mathcal{V}_{1,-}$. Then we define
\begin{align*}
 \mathbb{T} u:= u_{+}-u_{-} .
\label{eq.Tisomorphism}
\end{align*}

\begin{lemma}\label{lem.Tcoerc}
The  operator $\bbS^{\bK}_{11}\in \mathcal{B}\big(\mathcal{V}_1, (\mathcal{V}_1)^{*}\big)$ has a bounded inverse $(\bbS^{\bK}_{11})^{-1}\in \mathcal{B}\big((\mathcal{V}_1)^{*}, \mathcal{V}_1\big)$.
\end{lemma}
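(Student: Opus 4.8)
\textbf{Proof plan for Lemma \ref{lem.Tcoerc}.}
The plan is to use the T-coercivity machinery just introduced, with the involution $\mathbb{T}$ defined on $\mathcal{V}_1=\mathcal{V}_{1,-}\oplus\mathcal{V}_{1,+}$. Concretely, I would show that the sesquilinear form $s^{\bK}_{11,T}(u,v):=s_{\bK}(u,\mathbb{T}v)$, for $u,v\in\mathcal{V}_1$, is coercive on $\mathcal{V}_1$ (endowed with the $H^1$-norm), and is obviously still bounded since $\mathbb{T}$ is bounded. Once this is established, the Lax–Milgram theorem applied to $s^{\bK}_{11,T}$ gives, for every $\ell\in(\mathcal{V}_1)^*$, a unique $u\in\mathcal{V}_1$ with $s_{\bK}(u,\mathbb{T}v)=\langle\ell,\overline{\mathbb{T}v}\rangle$ for all $v\in\mathcal{V}_1$; since $\mathbb{T}:\mathcal{V}_1\to\mathcal{V}_1$ is an isomorphism (indeed $\mathbb{T}^2=\mathbb{I}$), this is equivalent to $s_{\bK}(u,w)=\langle\ell,\overline{w}\rangle$ for all $w\in\mathcal{V}_1$, i.e. to $\bbS^{\bK}_{11}u=\ell$. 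Uniqueness and continuous dependence then give that $\bbS^{\bK}_{11}$ is a bounded bijection from $\mathcal{V}_1$ onto $(\mathcal{V}_1)^*$ with bounded inverse, by the open mapping theorem (or directly from the Lax–Milgram estimate).

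The heart of the matter is the coercivity estimate for $s^{\bK}_{11,T}$. Write $u=u_++u_-$ with $u_\pm\in\mathcal{V}_{1,\pm}$, so $\mathbb{T}u=u_+-u_-$. The subspaces $\mathcal{V}_{1,-}$ and $\mathcal{V}_{1,+}$ are spanned by eigenfunctions of $\tilde{\bbA}_{\aspar,\bK}$ (equivalently, after the gauge change, of $\bbA_{\aspar,\bK}$); $\mathcal{V}_{1,-}$ corresponds to the $n-1$ eigenvalues strictly below $\lambda_D$ and $\mathcal{V}_{1,+}=\bbP_{\perp,+}H^1(\mathbb{R}^2/\Lambda)$ sits in the spectral subspace for eigenvalues strictly above $\lambda_D$ (recall $\lambda_D=\lambda_n(\aspar;\bK)=\lambda_{n+1}(\aspar;\bK)$ has multiplicity exactly $2$ by assumptions 1--2 of Theorem \ref{Thm.weakLyapounovSchmidtred}, so the $(n+2)^{\mathrm{nd}}$ eigenvalue is strictly larger than $\lambda_D$). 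Using the spectral decomposition of $\bbA_{\aspar,\bK}$ and the identity $s_{\bK}(w,w)=a_{\aspar,\bK}(w,w)-\lambda_D\|w\|^2_{L^2_{\bK}}=\|\bbA_{\aspar,\bK}^{1/2}w\|^2-\lambda_D\|w\|^2$, one gets
\[
s_{\bK}(u,\mathbb{T}u)=s_{\bK}(u_+,u_+)-s_{\bK}(u_-,u_-)+\big(s_{\bK}(u_-,u_+)-s_{\bK}(u_+,u_-)\big).
\]
The cross terms vanish: $\mathcal{V}_{1,-}$ and $\mathcal{V}_{1,+}$ are spanned by eigenfunctions for disjoint sets of eigenvalues, hence $s_{\bK}(u_-,u_+)=s_{\bK}(u_+,u_-)=0$ (both $L^2$ and energy inner products of these eigenfunctions vanish). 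On $\mathcal{V}_{1,+}$ one has $s_{\bK}(u_+,u_+)\ge \big(1-\lambda_D/\lambda_{n+2}(\aspar;\bK)\big)\|\bbA_{\aspar,\bK}^{1/2}u_+\|^2\gtrsim\|u_+\|^2_{H^1_{\bK}}$, and on the finite-dimensional space $\mathcal{V}_{1,-}$ one has $-s_{\bK}(u_-,u_-)\ge\big(\lambda_D/\lambda_{n-1}(\aspar;\bK)-1\big)\|\bbA_{\aspar,\bK}^{1/2}u_-\|^2\gtrsim\|u_-\|^2_{H^1_{\bK}}$ (when $n=1$ this term is absent). Finally $\|u\|^2_{H^1_{\bK}}=\|u_+\|^2_{H^1_{\bK}}+\|u_-\|^2_{H^1_{\bK}}+2\Real(u_+,u_-)_{H^1_{\bK}}$; but $u_\pm$ are $L^2$-orthogonal and are each finite linear combinations / spectral projections, so one controls the $H^1$ cross term by a fixed constant times the sum of norms and absorbs it, or simply notes that on the orthogonal-in-$L^2$ decomposition $\|u\|^2_{H^1_{\bK}}$ is comparable to $\|u_+\|^2_{H^1_{\bK}}+\|u_-\|^2_{H^1_{\bK}}$ because $\bbP_{\perp,\pm}$ are bounded on $H^1_{\bK}$. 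Combining these yields $s_{\bK}(u,\mathbb{T}u)\ge c\|u\|^2_{H^1_{\bK}}$ with $c>0$ (depending on $\aspar$, which is fine since $\aspar$ is fixed here).

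The main obstacle I anticipate is the bookkeeping around the $H^1$ cross term $\Real(u_+,u_-)_{H^1_{\bK}}$: the decomposition $\mathcal{V}_1=\mathcal{V}_{1,-}\oplus\mathcal{V}_{1,+}$ is orthogonal in $L^2_{\bK}$ but only a topological (non-orthogonal) direct sum in $H^1_{\bK}$, so one must argue via the $H^1_{\bK}$-boundedness of the spectral projectors $\bbP_{\perp,-},\bbP_{\perp,+}$ (these map $H^1_{\bK}=\rmD(\bbA_{\aspar,\bK}^{1/2})$ to itself with norm $1$ when $H^1_{\bK}$ is given the $\|\cdot\|_{a_{\aspar,\bK}}$-norm, exactly as in the proof of Corollary \ref{cor.defphi1}) to conclude that the form $\mathbb{T}$ is an isomorphism of $\mathcal{V}_1$ and that $\|u_\pm\|_{a_{\aspar,\bK}}\le\|u\|_{a_{\aspar,\bK}}$. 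Working with the equivalent norm $\|\cdot\|_{a_{\aspar,\bK}}$ rather than $\|\cdot\|_{H^1_{\bK}}$ makes this transparent: in that norm $u_+\perp u_-$ for the energy inner product is false in general, but $\bbP_{\perp,\pm}$ being energy-orthogonal projections (they commute with $\bbA_{\aspar,\bK}$, hence with $\bbA_{\aspar,\bK}^{1/2}$) restores orthogonality, and the coercivity constant comes out cleanly as $c=\min\big(1-\lambda_D/\lambda_{n+2}(\aspar;\bK),\ \lambda_D/\lambda_{n-1}(\aspar;\bK)-1\big)>0$ (or just $1-\lambda_D/\lambda_{n+2}(\aspar;\bK)$ when $n=1$). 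The remaining verifications — boundedness of $s^{\bK}_{11,T}$, the equivalence of \eqref{Tvar} with $\bbS^{\bK}_{11}u=\ell$, and the conclusion that $(\bbS^{\bK}_{11})^{-1}\in\mathcal{B}((\mathcal{V}_1)^*,\mathcal{V}_1)$ — are routine applications of Lax–Milgram.
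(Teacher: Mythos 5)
Your proof is correct and follows essentially the same route as the paper: T-coercivity via the involution $\mathbb{T}$ on $\mathcal{V}_1=\mathcal{V}_{1,-}\oplus\mathcal{V}_{1,+}$, cancellation of the cross terms from spectral orthogonality (both in $L^2$ and for $\tilde{\bbA}_{\bK}^{1/2}u_\pm$), spectral-gap lower bounds on $\pm s_{\bK}(u_\pm,u_\pm)$, and Lax--Milgram. The only cosmetic difference is that the paper retains a small fraction $\eta\,\|\tilde{\bbA}_{\bK}^{1/2}u_\pm\|^2$ in each lower bound so that adding them immediately gives $\eta\,\|\tilde{\bbA}_{\bK}^{1/2}u\|^2$ (sidestepping the $H^1$ cross-term worry you raise), whereas you resolve that worry directly via the $\|\cdot\|_{a_{\aspar,\bK}}$-orthogonality of the commuting spectral projections; both are valid and lead to the same coercivity constant.
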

\begin{proof}
We show that the invertibility of $\bbS^{\bK}_{11}$ follows from the T-coercivity of the sesquilinear form $s_{\bK}$ on $\mathcal{V}_1$. For the norm and inner product on  $ L^2(\mathbb{R}^2/\Lambda)$ we omit the subscript indicating the function space.
For all $u=u_{+}+u_{-} \ \in \mathcal{V}_1$ with $u_{\pm}\in \mathcal{V}_{1,\pm}$ we have
\begin{eqnarray}\label{eq.coerciv1}
(s_{\bK} \, u, \mathbb{T}u)
&=& (\tilde{\mathbb{A}}_{\bK}^{\frac{1}{2}}u, \tilde{\mathbb{A}}_{\bK}^{\frac{1}{2}} \mathbb{T}u)- \lambda_D (u,\mathbb{T}u) \nonumber \\
&=&\big(\tilde{\mathbb{A}}_{\bK}^{\frac{1}{2}}(u_++u_-), \tilde{\mathbb{A}}_{\bK}^{\frac{1}{2}} (u_+-u_-)\big) -\lambda_D\big(u_++u_-,u_+-u_-\big) \nonumber \\
&=& \big(\tilde{\mathbb{A}}_{\bK}^{\frac{1}{2}}u_+, \tilde{\mathbb{A}}_{\bK}^{\frac{1}{2}} u_+\big)- \lambda_D\|u_+\|^2+\lambda_D \|u_-\|^2- \big(\tilde{\mathbb{A}}_{\bK}^{\frac{1}{2}}u_-, \tilde{\mathbb{A}}_{\bK}^{\frac{1}{2}} u_-\big) .
\end{eqnarray}
The last equality follows  from the orthogonality of  $u_+$ and $u_-$  in $L^2(\bbR^2\setminus \Lambda)$  and  that  $\tilde{\mathbb{A}}_{\bK}^{\frac{1}{2}} u_-$  and  $\tilde{\mathbb{A}}_{\bK}^{\frac{1}{2}} u_+$ are also orthogonal in $L^2(\bbR^2\setminus \Lambda)$ (as an immediate consequence of the spectral theorem).

Since $ \lambda_D(\aspar)=\lambda_{n+1}(\aspar;\bK)= \lambda_{n}(\aspar;\bK)$ has multiplicity $2$, one can choose $\eta>0$ such that $(1-\eta) \, \lambda_{n+2}(\aspar;\bK) -\lambda_D>0$ for $n\geq 1$ and $\lambda_D- (1+\eta) \, \lambda_{n-1}(\aspar;\bK) >0$ if $n>1$. 
Thus, one gets:
\begin{eqnarray}\label{eq.coerciv2}
\big(\tilde{\mathbb{A}}_{\bK}^{\frac{1}{2}}u_+, \tilde{\mathbb{A}}_{\bK}^{\frac{1}{2}} u_+\big)- \lambda_D\|u_+\|^2&=&  \eta\, \big(\tilde{\mathbb{A}}_{\bK}^{\frac{1}{2}}u_+, \tilde{\mathbb{A}}_{\bK}^{\frac{1}{2}} u_+\big)+(1-\eta ) \big(\tilde{\mathbb{A}}_{\bK}^{\frac{1}{2}}u_+, \tilde{\mathbb{A}}_{\bK}^{\frac{1}{2}} u_+\big)-\lambda_D\|u_+\|^2 \nonumber  \\
&\geq&  \eta\, \big(\tilde{\mathbb{A}}_{\bK}^{\frac{1}{2}}u_+, \tilde{\mathbb{A}}_{\bK}^{\frac{1}{2}} u_+\big) +[(1-\eta) \, \lambda_{n+2}(\aspar;\bK)-\lambda_D]  \, \|u_+\|^2 \nonumber  \\
&\geq &  \eta\, \big(\tilde{\mathbb{A}}_{\bK}^{\frac{1}{2}}u_+, \tilde{\mathbb{A}}_{\bK}^{\frac{1}{2}} u_+\big),
\end{eqnarray}
where, in the last inequality, we used  that $(\tilde{\mathbb{A}}_{\bK}^{\frac{1}{2}}u_+, \tilde{\mathbb{A}}_{\bK}^{\frac{1}{2}} u_+)\geq \lambda_{n+2}(\aspar;\bK)   \|u_+\|^2$. This is an immediate consequence of the definition of the orthogonal projector $\bbP_{\perp,+}$, and the expression for $\tilde{\bbA}_{\bK}^{\frac{1}{2}}u_+$ in an orthonormal basis  $L^2(\bbR^2\setminus \Lambda)$ with respect to which  $\tilde{\bbA}_{\bK}$ is diagonal.
Similarly,  for the case  $n>1$:  
\begin{eqnarray}\label{eq.coerciv3}
\lambda_D\|u_-\|^2- \big(\tilde{\mathbb{A}}_{\bK}^{\frac{1}{2}}u_-, \tilde{\mathbb{A}}_{\bK}^{\frac{1}{2}} u_-\big)&=&\lambda_D \|u_-\|^2-(1+\eta) \big(\tilde{\mathbb{A}}_{\bK}^{\frac{1}{2}}u_-, \tilde{\mathbb{A}}_{\bK}^{\frac{1}{2}} u_-\big)+
\eta  \, \big(\tilde{\mathbb{A}}_{\bK}^{\frac{1}{2}}u_-, \tilde{\mathbb{A}}_{\bK}^{\frac{1}{2}} u_-\big) \nonumber  \\
&\geq&  [\lambda_D -(1+\eta) \lambda_{n-1}(\aspar;\bK)] \, \|u_-\|^2  +  \eta\, \big(\tilde{\mathbb{A}}_{\bK}^{\frac{1}{2}}u_-, \tilde{\mathbb{A}}_{\bK}^{\frac{1}{2}} u_-\big) \nonumber \\
&\geq &\eta\, \big(\tilde{\mathbb{A}}_{\bK}^{\frac{1}{2}}u_-, \tilde{\mathbb{A}}_{\bK}^{\frac{1}{2}} u_-\big)\ .
\end{eqnarray}
In the last inequality, we used that  $(\tilde{\mathbb{A}}_{\bK}^{\frac{1}{2}}u_-, \tilde{\mathbb{A}}_{\bK}^{\frac{1}{2}} u_-)\leq \lambda_{n-1}(\aspar;\bK)] \, \|u_-\|^2 $, a consequence of the definition of  $\bbP_{\perp,-}$.
Combining \eqref{eq.coerciv1}, \eqref{eq.coerciv2} and \eqref{eq.coerciv3} (using  that $u=u_+$ and $u_-=0$ for the particular case $n=1$), we have:
\begin{eqnarray*}
(s_{\bK}u, \mathbb{T}u)&\geq &  \eta \Big[ \big(\tilde{\mathbb{A}}_{\bK}^{\frac{1}{2}}u_+, \tilde{\mathbb{A}}_{\bK}^{\frac{1}{2}} u_+\big) + \big(\tilde{\mathbb{A}}_{\bK}^{\frac{1}{2}}u_-, \tilde{\mathbb{A}}_{\bK}^{\frac{1}{2}} u_-\big)\Big]\\
&\geq & \eta \, \big(\tilde{\mathbb{A}}_{\bK}^{\frac{1}{2}}u, \tilde{\mathbb{A}}_{\bK}^{\frac{1}{2}} u\big)\\
&\geq &  \eta    \min (1,\aspar) \, \|\nabla_{\bK} u\|^2 \\
&\geq& C(\bK) \, \eta  \min (1,\aspar)  \, \|u \|^2_{H^1(\mathbb{R}^2\setminus \Lambda)}.
\end{eqnarray*}
Therefore, the continuous sesqulinear form $s_{\bK}$ is T-coercive on $\mathcal{V}_1$. Hence, by the Lax Milgram lemma  $\bbS^{\bK}_{11}\in \mathcal{B}\big(\mathcal{V}_1, (\mathcal{V}_1)^{*}\big)$ has a bounded inverse $(\bbS^{\bK}_{11})^{-1}\in \mathcal{B}\big((\mathcal{V}_1)^{*}, \mathcal{V}_1\big)$.
\end{proof}
\subsection{Reduction to the determinant of a $2\times 2$ matrix}
Using the invertibility of $\bbS^{\bK}_{11}$ (Lemma \ref{lem.Tcoerc}), we rewrite the second equation in \eqref{eq.systweak} as
\begin{equation}\label{eq.invertphi1}
\left(\mathbb{I}-  \Xi(\bkg,E_1)\right) \phi^{(1)}= (\bbS^{\bK}_{11})^{-1} [\bbB^{\bK,\bkg}_{10}+\bbC^{\bkg}_{10} ]\phi^{(0)},
\end{equation}
where  $\Xi(\bkg,E_1):=(\bbS^{\bK}_{11})^{-1}\,[\bbE^{E_1}_{11}+\bbB^{\bK,\bkg}_{11}+  \bbC^{\bkg}_{11}]\in \mathcal{B}(\mathcal{V}_1)$.
From the expression of the sesquilinear forms $e_{E_1}$, $b_{\bK,\bkg}$ and $ c_{\bkg}$, we see that  $\bbE^{E_1}$, $\bbB^{\bK,\bkg}$ and   $\bbC^{\bkg}$  depend  respectively linearly in $E_1$, linearly in $\bkg$ and quadratically in $\bkg$. Since  $(\bbS^{\bK}_{11})^{-1}$ is bounded and independent of $\bkg$, it follows that $\Xi(\bkg,E_1)=O(|\bkg|+|E_1|)$ tends to zero as $(\bkg,E_1) \to 0\in\mathbb{R}^3$.
Thus, for $|\bkg|+|E_1|$ sufficiently small, one has $\| \Xi(\bkg,E_1)\|<1$ and therefore that $\mathbb{I}-\Xi(\bkg,E_1)$ is invertible.
Hence, 
\begin{equation}\label{eq.opsolphi1}
c[\bk,E_1]=\big(\mathbb{I}-\Xi(\bkg,E_1)\big)^{-1}  (\bbS^{\bK}_{11})^{-1} [\bbB^{\bK,\bkg}_{10}+\bbC^{\bkg}_{10} ] \in \mathcal{B}(\mathcal{V}_0,\mathcal{V}_1) 
\end{equation}
is well-defined for $|\bkg|+|E_1|$ sufficiently small. 
Relations  \eqref{eq.invertphi1}, \eqref{eq.opsolphi1}  and \eqref{eq.defphi0} imply, for $|\bkg|+|E_1|$ sufficiently small:
\begin{equation}\label{eq.solvphi_1}
 \phi^{(1)}=  c[\bkg,E_1]\,\phi_1  \, \alpha+  c[\bkg,E_1]\,\phi_2  \, \beta .
\end{equation}
The expression  \eqref{eq.opsolphi1}  can be expanded in a Neumann series. Moreover,  $\bbE^{E_1}$, $\bbB^{\bK,\bkg}$ and $\bbC^{\bkg}$ depend, respectively, linearly in $E_1$, linearly in $\bkg$ and quadratically in $\bkg$. Thus, for $(E_1,\bkg)$  in a  neighborhood, $\mathcal{U}$, of the origin: $(E_1,\bkg) \to c[\bkg,E_1]\ \phi_{j}$,  for $j=1,2$, are analytic as a mapping from  $(E_1,\bkg)\in \mathcal{U}$   into $\mathcal{V}_1$ endowed with the $H^1-$norm (for a discussion of the composition and product of analytic functions defined and valued on a Banach space, see e.g. \cite{Buf:03}).
 We also have
\begin{equation}\label{eq.bound1}
\|c[\bkg,E_1]\ \phi_{j}\|_{H^{1}(\mathbb{R}^2/\Lambda)}\leq  C  |\bkg|\  \mbox{ for }j=1,2,  \mbox{ $(E_1,\bkg) \in \mathcal{U}$  and  some $C>0$.}
\end{equation}
Now substituting the expression of $\phi^{(1)}$ given by \eqref{eq.solvphi_1} into the first equation of \eqref{eq.systweak} yields  a closed equation for $\alpha, \beta$, depending on $(E_1,\bkg)$:
$$
0 =  [ \bbE^{E_1}_{00}+\bbB^{\bK,\bkg}_{00}+  \bbC^{\bkg}_{00}]\ ( \phi_1\, \alpha +\beta \, \phi_2)+ [\bbB^{\bK,\bkg}_{01}+ \bbC^{\bkg}_{01}] \,  \big( c[\bkg,E_1]\,\phi_1  \, \alpha+  c[\bkg,E_1]\,\phi_2 \, \beta \big) 
$$
valued in $(\mathcal{V}_0)^{*}$.
Using that $\{\phi_1,\phi_2\}$ is an orthonormal basis of $\mathcal{V}_0$, the latter equation may be expressed equivalently as a system of 
two homogeneous linear equations for  $\alpha$ and  $\beta$:
\begin{equation}\label{eq.matsystalphabeta}
\mathcal{M}(\bkg,E_1) \begin{pmatrix} \alpha \\ \beta \end{pmatrix}=0,
\end{equation}
where 
\begin{align}\label{eq.defmatMkE1} 
\mathcal{M}(\bkg,E_1) &=E_1 \mathbb{I}_2 +\mathcal{M}_{A}(\bkg)+\mathcal{M}_{B}(\bkg,E_1),\\
\mathcal{M}_{A}(\bkg) &:= \begin{pmatrix}\langle \bbB^{\bK,\bkg}_{00} \phi_{1}, \overline{\phi_{1}}\rangle & \langle \bbB^{\bK,\bkg}_{00} \phi_2, \overline{\phi_{1}}\rangle \label{eq.defmatMA}\\[4pt]
\langle \bbB^{\bK,\bkg}_{00} \phi_{1}, \overline{\phi_2}\rangle & \langle \bbB^{\bK,\bkg}_{00} \phi_2, \overline{\phi_2}\rangle \\ 
  \end{pmatrix},\quad {\rm and}\\
 \mathcal{M}_{B}(\bkg,E_1) &:= \begin{pmatrix}\langle \big[\bbC^{\bkg}_{00} + (\bbB^{\bK,\bkg}_{01}+ \bbC^{\bkg}_{01})  c[\bkg,E_1] \big]\phi_{1}, \overline{\phi_1}\rangle & \hspace{-0.1cm}\langle \big[\bbC^{\bkg}_{00} + (\bbB^{\bK,\bkg}_{01}+ \bbC^{\bkg}_{01})  c[\bkg,E_1] \big]\phi_2, \overline{\phi_{1}} \rangle \\[6pt]
\langle \big[\bbC^{\bkg}_{00} + (\bbB^{\bK,\bkg}_{01}+ \bbC^{\bkg}_{01})  c[\bkg,E_1] \big]\phi_{1}, \overline{\phi_2} \rangle & \hspace{-0.1cm} \langle \big[\bbC^{\bkg}_{00} + (\bbB^{\bK,\bkg}_{01}+ \bbC^{\bkg}_{01})  c[\bkg,E_1] \big]\phi_2, \overline{\phi_2} \rangle \\ 
  \end{pmatrix}.
\label{eq.defmatMB}\end{align}
From  \eqref{eq.defmatMkE1}, the analyticity of $(E_1,\bkg) \mapsto c[\bkg,E_1]\ \phi_{j}$ for $j=1,2$ and the fact that $ \bbE^{E_1}$, $ \bbB^{\bK,\bkg}$ and   $ \bbC^{\bkg}$ depend  respectively linearly in $E_1$, linearly in $\bkg$ and quadratically in $\bkg$, it follows  that $(E_1,\bkg)\mapsto \mathcal{M}(\bkg,E_1)  $ is analytic in a neighborhood $\mathcal{U}$ of the origin in $\mathbb{R}^3$. 
Furthermore from \eqref{eq.defmatMkE1}, we have that
 $E_1\mapsto  E_1 \mathbb{I}_2$ is linear in $E_1$,  $\bkg\mapsto \mathcal{M}_{A}(\bkg)$ is linear in $\bkg$  and from  \eqref{eq.bound1} and \eqref{eq.defmatMB},    the matrix entries of $(\mathcal{M}_{B}(\cdot,\cdot))_{ij}$ satisfy,  for  $i,j=1,2$  and  all $(E_1,\bkg)  \in \mathcal{U}$:
\begin{equation*}\label{eq.bound2}
\mathcal{M}_{B}(\bkg,E_1)_{ij}=g_{ij}(E_1,\bkg)+ h_{ij}(\bkg)  \mbox{ with }  | g_{ij}(E_1,\bkg)| \leq C  |\bkg|^2 \mbox { and } \  | h_{ij}(\bkg)| \leq C |\bkg|^2\ .
\end{equation*}
Here, $C>0$ and $g_{ij}$ and $h_{ij}$ are analytic functions of $(E_1,\bkg)$ and $\bkg$.
The preceding arguments imply a characterization of the dispersion surfaces in a neighborhood of $(\bK,\lambda_D)$.
\begin{proposition}\label{prop.reduc}
For $(E_1,\bkg)\in \mathcal{U}$, a sufficiently small neighborhood of the origin in $\mathbb{R}^3$, 
$\lambda(\aspar;\bK+\bkg)=\lambda_D+E_1$ is an eigenvalue of the eigenvalue problem \eqref{eigenvalueprob1} if and only if 
\begin{equation}\label{eq.determinant}
\operatorname{det} \mathcal{M}(\bkg,E_1)=0.
\end{equation}
Here, the $2\times2$ complex-valued  matrix $\mathcal{M}(\bkg,E_1)$ is given by \eqref{eq.matsystalphabeta}, \eqref{eq.defmatMkE1}, \eqref{eq.defmatMA} and \eqref{eq.defmatMB}. Furthermore,
$(\bkg,E_1)\mapsto  \operatorname{det}   \mathcal{M}(\bkg,E_1)$ is analytic on  $\mathcal{U}$.
\end{proposition}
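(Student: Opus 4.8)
\textbf{Proof plan for Proposition \ref{prop.reduc}.}

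The plan is to complete the Lyapunov--Schmidt / Schur complement reduction already set up in the preceding subsections, and then extract analyticity from the fact that every building block is built out of the bounded, $\bkg$-independent inverse $(\bbS^{\bK}_{11})^{-1}$ (Lemma \ref{lem.Tcoerc}) composed with operators depending polynomially on $(E_1,\bkg)$. First I would fix the small neighborhood: by the discussion following \eqref{eq.invertphi1}, the operator $\Xi(\bkg,E_1)=(\bbS^{\bK}_{11})^{-1}[\bbE^{E_1}_{11}+\bbB^{\bK,\bkg}_{11}+\bbC^{\bkg}_{11}]$ has operator norm $O(|\bkg|+|E_1|)$, so there is a neighborhood $\mathcal{U}$ of the origin in $\R^3$ on which $\|\Xi(\bkg,E_1)\|<1$; shrinking $\mathcal{U}$ if necessary so that the Neumann series for $(\mathbb{I}-\Xi)^{-1}$ converges, the operator $c[\bkg,E_1]$ of \eqref{eq.opsolphi1} is well-defined and $\phi^{(1)}$ is determined by \eqref{eq.solvphi_1} as a $\C$-linear function of $(\alpha,\beta)$. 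Substituting this into the first (the $\mathcal{V}_0$-) equation of \eqref{eq.systweak} yields exactly the homogeneous $2\times2$ linear system \eqref{eq.matsystalphabeta} with matrix $\mathcal{M}(\bkg,E_1)$ given by \eqref{eq.defmatMkE1}--\eqref{eq.defmatMB}. Conversely, if $\phi^{(0)}=\alpha\phi_1+\beta\phi_2$ solves \eqref{eq.matsystalphabeta} and we \emph{define} $\phi^{(1)}$ by \eqref{eq.solvphi_1}, then $\phi=\phi^{(0)}+\phi^{(1)}\in H^1(\R^2/\Lambda)$ solves the block system \eqref{eq.systweak}, hence the weak form \eqref{eq.dual}, hence the eigenvalue problem \eqref{eigenvalueprob1}.

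Thus the equivalence ``$\lambda_D+E_1\in\sigma(\bbA_{\aspar,\bK+\bkg})$ for $(E_1,\bkg)\in\mathcal{U}$ $\iff$ the system \eqref{eq.matsystalphabeta} has a nontrivial solution $(\alpha,\beta)$'' is immediate once one checks two bookkeeping points. The first point is that a nontrivial eigenfunction $\phi$ of \eqref{eq.eigeprobstrong} with $(E_1,\bkg)\in\mathcal{U}$ cannot have $\phi^{(0)}=\bbP_\parallel\phi=0$: if it did, then from \eqref{eq.invertphi1} and the invertibility of $\mathbb{I}-\Xi(\bkg,E_1)$ we would get $\phi^{(1)}=0$, whence $\phi=0$, a contradiction; so indeed $(\alpha,\beta)\neq(0,0)$. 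The second point is that a nontrivial solution $(\alpha,\beta)$ of \eqref{eq.matsystalphabeta}, fed through \eqref{eq.solvphi_1}, produces $\phi\neq0$ since $\phi^{(0)}=\alpha\phi_1+\beta\phi_2\neq 0$ and $\phi^{(0)}\perp\phi^{(1)}$ in $L^2(\R^2/\Lambda)$ (as $\phi^{(1)}\in\mathcal{V}_1$ while $\phi^{(0)}\in\mathcal{V}_0$, and these are $L^2$-orthogonal). With both points in hand, nontrivial solvability of \eqref{eq.matsystalphabeta} is equivalent to $\det\mathcal{M}(\bkg,E_1)=0$, which is \eqref{eq.determinant}.

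For the analyticity claim I would argue as follows. The forms $e_{E_1},b_{\bK,\bkg},c_{\bkg}$ in \eqref{eq.defses} are, respectively, linear in $E_1$, linear in $\bkg$, and quadratic in $\bkg$; hence the associated operators $\bbE^{E_1},\bbB^{\bK,\bkg},\bbC^{\bkg}\in B(H^1(\R^2/\Lambda),H^1(\R^2/\Lambda)^*)$ and all their blocks are polynomial (hence analytic) in $(E_1,\bkg)$. Since $(\bbS^{\bK}_{11})^{-1}$ is a fixed bounded operator (Lemma \ref{lem.Tcoerc}), $(E_1,\bkg)\mapsto\Xi(\bkg,E_1)$ is analytic with values in $B(\mathcal{V}_1)$ and vanishes at the origin, so on a possibly smaller $\mathcal{U}$ the Neumann series $(\mathbb{I}-\Xi)^{-1}=\sum_{k\ge0}\Xi^k$ converges locally uniformly and defines an analytic $B(\mathcal{V}_1)$-valued map; composing with the analytic, $\bkg$-linear map $\bbB^{\bK,\bkg}_{10}+\bbC^{\bkg}_{10}$ shows $(E_1,\bkg)\mapsto c[\bkg,E_1]\in B(\mathcal{V}_0,\mathcal{V}_1)$ is analytic, and in particular $(E_1,\bkg)\mapsto c[\bkg,E_1]\phi_j\in\mathcal{V}_1$ is analytic for $j=1,2$ (using the standard calculus of Banach-space-valued analytic functions, e.g. \cite{Buf:03}). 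Pairing these against the fixed vectors $\overline{\phi_i}$ via the (continuous, bilinear) duality product shows each entry of $\mathcal{M}_A(\bkg)$ and $\mathcal{M}_B(\bkg,E_1)$ in \eqref{eq.defmatMA}--\eqref{eq.defmatMB} is analytic on $\mathcal{U}$; together with the obviously analytic term $E_1\,\mathbb{I}_2$ this makes $(E_1,\bkg)\mapsto\mathcal{M}(\bkg,E_1)$ an analytic $2\times2$-matrix-valued map, and finally $\det\mathcal{M}(\bkg,E_1)$ is a polynomial in the entries, hence analytic on $\mathcal{U}$. The main obstacle is not any single step but the careful verification that the weak block decomposition \eqref{eq.systweak} is genuinely equivalent to \eqref{eigenvalueprob1} --- i.e. that no eigenpair is lost or spuriously created by passing to the periodic gauge $\phi=\rme^{-\rmi(\bK+\bkg)\cdot\bx}\Phi$ and then projecting onto $\mathcal{V}_0\oplus\mathcal{V}_1$ --- and that the off-diagonal blocks $\bbS^{\bK}_{01}=\bbS^{\bK}_{10}=0$, $\bbS^{\bK}_{00}=0$, $\bbE^{E_1}_{01}=\bbE^{E_1}_{10}=0$ used to reach the clean form \eqref{eq.systweak} really do vanish; these follow from $\phi^{(0)}\in\rmD(\tbbA_{\aspar,\bK})$ being an eigenfunction at eigenvalue $\lambda_D$ (via the divergence theorem, as already noted) and from $L^2$-orthogonality of $\mathcal{V}_0$ and $\mathcal{V}_1$, but they must be stated explicitly.
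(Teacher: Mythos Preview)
Your proposal is correct and follows essentially the same Lyapunov--Schmidt / Schur complement route as the paper: invert $\bbS^{\bK}_{11}$ via Lemma~\ref{lem.Tcoerc}, solve for $\phi^{(1)}$ by Neumann series in $\Xi(\bkg,E_1)$, substitute into the $\mathcal{V}_0$-equation, and read off analyticity from the polynomial dependence of $\bbE^{E_1},\bbB^{\bK,\bkg},\bbC^{\bkg}$ on $(E_1,\bkg)$. If anything, your treatment is slightly more explicit than the paper's in spelling out the two bookkeeping points (that $\phi^{(0)}=0\Rightarrow\phi=0$ on $\mathcal{U}$, and conversely that $(\alpha,\beta)\neq 0$ yields $\phi\neq 0$ by $L^2$-orthogonality of $\mathcal{V}_0$ and $\mathcal{V}_1$), which the paper leaves implicit in the phrase ``the foregoing argument implies.''
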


\subsection{Simplifications of the determinant using symmetries}
We expect that  for $|E_1|+|\bkg|$ small, the leading behavior of $E_1(\bkg,g)$ is given by the eigenvalues
 of the matrix $\mathcal{M}_A(\bkg)$. 
We now use  honeycomb symmetry to simplify  $\mathcal{M}_{A}(\bkg)$.

\begin{lemma}\label{Lem.simplificationdet}
\begin{enumerate}
\item For $\bkg \in \R^2$:
\begin{itemize}
\item $\langle \bbB^{\bK,\bkg}_{00} \phi_{i}, \overline{\phi_{i}}\rangle=0$ for $i=1,2$.
\item $\langle \bbB^{\bK,\bkg}_{00} \phi_{j}, \overline{\phi_{i}}\rangle=\overline{\langle \bbB^{\bK,\bkg}_{00} \phi_{i}, \overline{\phi_j}\rangle}=-2\mathrm{i} \,\bkg\, \cdot \int_{\cell}  \sigma_{\aspar}\Phi_{j} \, \overline{ \nabla \Phi_{i}} \, \rmd \bx \ $  for $i,j=1,2$ and $i\neq j$.
\end{itemize}
\item Let 
\begin{equation}\label{eq.fermveloc}
\tilde{v}_D= -\mathrm{i}  \int_{\cell} \sigma_{\aspar}\Phi_1 \, \overline{\nabla \Phi_2 }\, \rmd \bx \cdot \overline{(1,\rmi)}^{\top}.
\end{equation}
Then,  $v_D=|\tilde{v}_D|$, with $v_D$ defined in \eqref{eq.fermyveloc} and
one rewrites  $\mathcal{M}_{A}(\bkg)$ defined by \eqref{eq.defmatMA} as
$$
\mathcal{M}_{A}(\bkg)=\begin{pmatrix} 0 &  \tilde{v}_D\, (\bkg_1+ \rmi \bkg_2) \\
\overline{\tilde{v}_D}\, (\bkg_1- \rmi \bkg_2)  & 0
  \end{pmatrix}.
$$
\end{enumerate} 
\end{lemma}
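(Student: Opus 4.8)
\textbf{Proof proposal for Lemma \ref{Lem.simplificationdet}.}

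The plan is to exploit the two symmetry operators that commute with $\bbA_{\aspar,\bK}$, namely $\mathcal{R}$ (the $2\pi/3$ rotation with respect to $\bx_c$) and $\mathcal{PC}$ (inversion composed with conjugation), together with the key structural fact from Theorem \ref{Thm.weakLyapounovSchmidtred}/Theorem \ref{th.degeneracy} that $\Phi_1 \in L^2_{\bK,\tau}$, $\Phi_2 = \mathcal{PC}\Phi_1 \in L^2_{\bK,\overline{\tau}}$, i.e. $\mathcal{R}\Phi_1 = \tau\Phi_1$ and $\mathcal{R}\Phi_2 = \overline{\tau}\Phi_2$. Recall from \eqref{eq.defses} that $\langle \bbB^{\bK,\bkg}_{00}\phi_j,\overline{\phi_i}\rangle = b_{\bK,\bkg}(\phi_j,\phi_i) = -\int_\cell \sigma_\aspar[\rmi\bkg\,\phi_j\cdot\overline{\nabla_\bK\phi_i} + \nabla_\bK\phi_j\cdot\overline{\rmi\bkg\,\phi_i}]\,\rmd\bx$; rewriting in terms of the Bloch functions $\Phi_j = \rme^{\rmi\bK\cdot\bx}\phi_j$, an integration by parts (using that $\Phi_i\in D(\bbA_{\aspar,\bK})$, so the transmission conditions across $\partial\cell^+$ hold and no boundary terms on $\partial\cell$ appear by $\bK$-pseudo-periodicity) turns this into a clean expression of the form $-2\rmi\,\bkg\cdot\int_\cell \sigma_\aspar\Phi_j\overline{\nabla\Phi_i}\,\rmd\bx$ plus a term that vanishes because $\Phi_i$ is an eigenfunction. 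I would carry out this reduction first.

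For part (1), the diagonal entries: I would show $\int_\cell \sigma_\aspar\,\Phi_i\,\overline{\nabla\Phi_i}\,\rmd\bx$ transforms under $\mathcal{R}$ into $\overline{\tau}^{?}$ times itself in a way that forces it to vanish. Concretely, change variables $\bx \mapsto \bx_c + R^*(\bx-\bx_c)$ in the integral; since $\sigma_\aspar$ is invariant under this rotation (honeycomb symmetry, assumptions ($\Omega$.v)), and $\mathcal{R}\Phi_i = \nu_i\Phi_i$ with $|\nu_i|=1$, and the gradient picks up the rotation matrix $R$, one finds $\int_\cell \sigma_\aspar\Phi_i\overline{\nabla\Phi_i}\,\rmd\bx = R^{\top}\int_\cell \sigma_\aspar\Phi_i\overline{\nabla\Phi_i}\,\rmd\bx$ (the factor $|\nu_i|^2 = 1$). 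Since $R$ (rotation by $2\pi/3$) has no nonzero fixed vector, the integral is the zero vector, hence both diagonal entries of $\mathcal{M}_A(\bkg)$ vanish. For the off-diagonal entries, the Hermitian symmetry $\langle\bbB^{\bK,\bkg}_{00}\phi_j,\overline{\phi_i}\rangle = \overline{\langle\bbB^{\bK,\bkg}_{00}\phi_i,\overline{\phi_j}\rangle}$ is immediate from the anti-self-adjointness structure of the form $b_{\bK,\bkg}$ (it is $\rmi$ times something real-symmetric), or directly by conjugating the integral; I would then record the stated formula $-2\rmi\,\bkg\cdot\int_\cell\sigma_\aspar\Phi_j\overline{\nabla\Phi_i}\,\rmd\bx$.

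For part (2), I would first check $v_D = |\tilde v_D|$: comparing \eqref{eq.fermveloc} with \eqref{eq.fermyveloc}, both involve $\int_D \sigma_\aspar\Phi_1\overline{\nabla\Phi_2}\,\rmd\bx$ dotted with $(1,-\rmi)^{\top}$ (note $\overline{(1,\rmi)}^{\top}=(1,-\rmi)^{\top}$), and the extra factor $-\rmi$ has modulus one, so $v_D = |\tilde v_D|$. Then I would assemble $\mathcal{M}_A(\bkg)$: the $(1,2)$ entry is $-2\rmi\,\bkg\cdot\int_\cell\sigma_\aspar\Phi_2\overline{\nabla\Phi_1}\,\rmd\bx$, and using the $\mathcal{PC}$-symmetry relation $\Phi_2 = \mathcal{PC}\Phi_1$ together with the rotation eigenvalue relations, I expect to show $\int_\cell\sigma_\aspar\Phi_2\overline{\nabla\Phi_1}\,\rmd\bx$ is a complex multiple of the vector $(1,\rmi)^{\top}$ (this is the standard ``the only vectors with the right $\mathcal{R}$-transformation law are multiples of $\xi$ or $\overline\xi$'' argument, cf. \eqref{eq.defroteigenelem}), so $\bkg\cdot(\cdots) = \text{const}\cdot(\bkg_1+\rmi\bkg_2)$, and matching constants gives the $\tilde v_D$ prefactor; the $(2,1)$ entry follows by Hermitian symmetry. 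The main obstacle I anticipate is the bookkeeping in the change-of-variables computations under $\mathcal{R}$ — tracking how $\nabla\Phi_i$ transforms (the chain rule introduces $R$ or $R^*$), how $\bx_c$ enters, and pinning down the precise phase so that the off-diagonal entry comes out as exactly $\tilde v_D(\bkg_1+\rmi\bkg_2)$ rather than with a spurious factor of $\tau$ or a sign; this is where the relations $R\bK = \bK + \bk_2$, $R\bx_c = -\bv_B$, and $\mathcal{R}\Phi_j = \nu_j\Phi_j$ must be combined carefully, in parallel with the computation in Lemma \ref{lem.pkapkb}.
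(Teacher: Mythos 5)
Your rotation argument for the diagonal entries (change of variables under $\mathcal{R}$, yielding $\int_\cell\sigma_\aspar\Phi_i\overline{\nabla\Phi_i}\,\rmd\bx = R^\top\int_\cell\sigma_\aspar\Phi_i\overline{\nabla\Phi_i}\,\rmd\bx$ and hence the zero vector since $1\notin\sigma(R)$) is exactly the paper's argument, and the Hermitian symmetry observation and the part-(2) pinning of $\int_\cell\sigma_\aspar\Phi_2\overline{\nabla\Phi_1}\,\rmd\bx$ along $\xi$ via the same rotation trick are also correct.

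The gap is in how you propose to obtain the identity $\langle\bbB^{\bK,\bkg}_{00}\phi_j,\overline{\phi_i}\rangle = -2\rmi\,\bkg\cdot\int_\cell\sigma_\aspar\Phi_j\overline{\nabla\Phi_i}\,\rmd\bx$ for $i\neq j$. You say ``an integration by parts … turns this into a clean expression … plus a term that vanishes because $\Phi_i$ is an eigenfunction.'' If you try to move the derivative off $\Phi_j$ in the second term of $b_{\bK,\bkg}(\phi_j,\phi_i)$, you must integrate $\int_\cell\sigma_\aspar\,\partial_k(\overline{\Phi_i}\Phi_j)\,\rmd\bx$. Splitting over $\cell^+\cup\cell^-$ and applying the divergence theorem on each piece (the $\partial\cell$ terms cancel by $\bK$-pseudo-periodicity), you pick up a surface term of the form $(1-g)\int_{\partial\cell^+}\overline{\Phi_i}\Phi_j\,n_k\,\rmd\gamma_\bx$, because $\sigma_\aspar$ jumps across $\partial\cell^+$. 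This term does not vanish by the transmission conditions (those control $[\Phi_i]$ and $[\sigma_\aspar\nabla\Phi_i\cdot\bn]$, not $\int_{\partial\cell^+}\overline{\Phi_i}\Phi_j\,n_k\,\rmd\gamma_\bx$), nor does it disappear merely because $\Phi_i$ is an eigenfunction; and for $i=j$ the integrand $|\Phi_i|^2 n_k$ is manifestly not identically zero. The mechanism the paper uses is different and avoids integration by parts altogether: because $\Phi_j = \mathcal{PC}\Phi_i$ (for $i\neq j$), one substitutes $\nabla\Phi_j = \nabla\mathcal{PC}\Phi_i = -\mathcal{PC}\nabla\Phi_i$ and $\Phi_i = \mathcal{PC}\Phi_j$ into the second term of $b_{\bK,\bkg}(\phi_j,\phi_i)$, and then uses the anti-unitarity of $\mathcal{PC}$ and the invariance $\mathcal{PC}(\sigma_\aspar\cdot) = \sigma_\aspar\,\mathcal{PC}(\cdot)$ to transform that second term exactly into the first, giving the factor $2$ purely algebraically with no boundary contribution. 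You should replace the integration-by-parts step by this $\mathcal{PC}$ substitution; once you do, the rest of your outline goes through.
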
 
\begin{proof}
Using \eqref{eq.defses}, one has for $i,j=1,2$:
\begin{equation}\label{eq.conjcoeff}
\langle \bbB^{\bK,\bkg}_{00} \phi_j,\overline{ \phi_i}\rangle=-\int_{\cell} \sigma_{\aspar}  [\rmi \, \bkg  \, \phi_j\cdot \overline{ \nabla_{\bK}{\phi_i}} +\nabla_{\bK} \phi_j\cdot \overline{\rmi \, \bkg \, \phi_i}] \, \rmd \bx=\overline{\langle \bbB^{\bK,\bkg}_{00} \phi_i, \overline{\phi_j}\rangle}.
\end{equation}
Since $\phi_i=\Phi_{i} \rme^{-i \bK \cdot \bx}$ and $\nabla_{\bK} \phi_i=\nabla \Phi_{i} \rme^{-i \bK \cdot \bx}$ for $i,j=1,2$ we have:
\begin{equation}\label{eq.ceoffij}
\langle \bbB^{\bK,\bkg}_{00} \phi_j, \overline{\phi_i}\rangle=-\int_{\cell} \sigma_{\aspar
}  [\rmi \, \bkg  \, \Phi_{j}\cdot \overline{ \nabla \Phi_{i}}+\nabla \Phi_{j}\cdot \overline{\rmi \, \bkg  \,\Phi_{i}}] \, \rmd \bx.
\end{equation}
When $i= j\in \{1,2\}$, this latter expression simplifies to:
\begin{equation}\label{eq.coeffmatA11}
\langle \bbB^{\bK,\bkg}_{00} \phi_i, \overline{\phi_i}\rangle=2 \operatorname{Im}\big( \sigma_{g}  \Phi_{i} , \bkg \cdot \nabla \Phi_{i} \big)= 2\operatorname{Im} \Big(\bkg \cdot  \int_{\cell} \sigma_{\aspar }  \Phi_{i}  \overline{\nabla \Phi_i}\,   \rmd \bx \Big).
\end{equation}
Following the ideas developed in the proof of Proposition 4.8 of \cite{LWZ:18}, we show that $\langle \bbB^{\bK,\bkg}_{00} \phi_i, \overline{\phi_i} \rangle$ for $i=1,2$ vanish. We set
that $\nu=\tau$ if $i=1$ and $\nu=\overline{\tau}$ if $i=2$. Then, one has
$$
\int_{\cell} \sigma_{\aspar }  \Phi_{i}  \overline{\nabla \Phi_i}\,   \rmd \bx=\int_{\cell} \mathcal{R} ( \sigma_{\aspar }  \Phi_{i} ) \, \overline{\mathcal{R}\nabla \Phi_i}\,   \rmd \bx.
$$
Since $\mathcal{R}(\sigma_g \Phi_{i})=\sigma_g \, \mathcal{R}\Phi_{i}$, $\Phi_{i}\in L^2_{\bK,\nu}$, $\mathcal{R}  \nabla =  R^{*}\nabla \mathcal{R} $ by  \eqref{eq.commut3} and $|\nu|=1$, it follows that
$$
\int_{\cell} \sigma_{\aspar }  \Phi_{i}  \overline{\nabla \Phi_i}\,   \rmd \bx=\nu   \int_{\cell}\sigma_{\aspar }  \Phi_{i} \,   \overline{R^{*}\nabla \mathcal{R}   \Phi_i}\,   \rmd \bx=\nu \, \overline{\nu}  \int_{\cell}\sigma_{\aspar }  \Phi_{i} \,   \overline{R^{*}\nabla    \Phi_i} \mathrm{d} \bx=R^{\top}  \int_{\cell}\sigma_{\aspar }  \Phi_{i} \,  \overline{\nabla    \Phi_i} \mathrm{d} \bx,
$$
where $R^\top$ denotes the transpose of $R$. Multiplying by $R^{\top}$ on both sides yields:
 \begin{equation}\label{eq.sym}
 R \int_{\cell} \sigma_{\aspar }  \Phi_{i}  \overline{\nabla \Phi_i}\,   \rmd \bx =\int_{\cell}\sigma_{\aspar }  \Phi_{i} \,  \overline{\nabla    \Phi_i} \mathrm{d}\bx=0.
 \end{equation}
The last equality holds since $1$ is not an eigenvalue of $R$. By \eqref{eq.coeffmatA11},  $\langle \bbB^{\bK,\bkg}_{00} \phi_i, \overline{\phi_i}\rangle=0$ for $i=1,2$.

We next simplify the expression \eqref{eq.ceoffij}   using  that $\Phi_{j}=\mathcal{P} \mathcal{C}\, \Phi_{i}$ for $i\neq j$. One has
\begin{eqnarray}\label{eq.coeffBij}
\langle \bbB^{\bK,\bkg}_{00} \phi_j,\overline{\phi_i}\rangle&=&-\int_{\cell} \sigma_{\aspar
}  [\rmi \, \bkg  \, \Phi_{j}\cdot \overline{ \nabla \Phi_{i}}+\nabla  \mathcal{P}\mathcal{C}[\Phi_{i}]\cdot \overline{\rmi \, \bkg  \, \mathcal{P}\mathcal{C}\Phi_{j}}] \, \rmd \bx, \nonumber\\
&=&-\int_{\cell} \sigma_{\aspar
}  [\rmi \, \bkg  \, \Phi_{j}\cdot \overline{ \nabla \Phi_{i}} \, \rmd \bx- \int_{\cell} \nabla  \mathcal{P}\mathcal{C}[\Phi_{i}]\cdot \overline{  \mathcal{P}\mathcal{C} [-\rmi \, \bkg\sigma_{\aspar
}\Phi_{j}}] \, \rmd \bx \ \   \mbox{(as $\mathcal{P}\mathcal{C} (\sigma_{\aspar
} \Phi_j)=\sigma_g \mathcal{P} \,\mathcal{C} \Phi_j$)} ,\nonumber \\
&=& -\int_{\cell} \sigma_{\aspar
}  [\rmi \, \bkg  \, \Phi_{j}\cdot \overline{ \nabla \Phi_{i}} \, \rmd \bx-\int_{\cell} \mathcal{P} \mathcal{C}[- \nabla \Phi_{i}]\cdot \overline{  \mathcal{P}\mathcal{C} [-\rmi \, \bkg\sigma_{\aspar
}\Phi_{j}}] \, \rmd \bx  \mbox{ (as $\nabla \mathcal{P}\mathcal{C}=-\mathcal{P}\mathcal{C}\nabla$,  see \eqref{eq.commut3})} ,\nonumber\\
&=& -\int_{\cell} \sigma_{\aspar}  [\rmi \, \bkg  \, \Phi_{j}\cdot \overline{ \nabla \Phi_{i}} \, \rmd \bx-\int_{\cell } \overline{ \nabla \Phi_{i}}\cdot \rmi \, \bkg \,\sigma_{\aspar
}\Phi_{j}\, \rmd \bx \ \mbox{ (since $\mathcal{P}\mathcal{C}$ is anti-unitary)}, \nonumber\\
&=& -2\mathrm{i} \,\bkg\, \cdot \int_{\cell} \sigma_{\aspar}\Phi_j \, \overline{\nabla \Phi_i }\, \rmd \bx \, . 
\end{eqnarray}
Applying the reasoning for the case $i=j$, but now for $j=2$ and $i=1$, we have that equation \eqref{eq.sym} is replaced by:
$$
R \int_{\cell} \sigma_{\aspar }  \Phi_{2}  \overline{\nabla \Phi_1}\,   \rmd \bx =\tau  \int_{\cell}\sigma_{\aspar }  \Phi_{2} \,  \overline{\nabla    \Phi_1} \mathrm{d}\bx.
$$
It follows that for some $a \in \C$:
\begin{equation}\label{eq.eigenvectau}
\int_{\cell}\sigma_{\aspar }  \Phi_{2} \,  \overline{\nabla    \Phi_1} \mathrm{d}\bx=a   \, \xi,
\end{equation}
where  $\xi\in \operatorname{ker}(R-\tau \mathrm{I}d)$ is displayed  in  \eqref{eq.defroteigenelem}. 
One deduces  that
\begin{equation}\label{eq.eigenvectau2}
 a= a  \, \xi \cdot \overline{\xi}= \int_{\cell}\sigma_{\aspar }  \Phi_{2} \,  \overline{\nabla    \Phi_1} \mathrm{d}\bx  \cdot \overline{\xi}.
\end{equation}
Setting  $\tilde{v}_D=-\sqrt{2}\, \mathrm{i} \, a $ and using \eqref{eq.conjcoeff}, \eqref{eq.coeffBij},   \eqref{eq.eigenvectau} and \eqref{eq.defroteigenelem} we have:
$$
\langle \bbB^{\bK,\bkg}_{00} \phi_2,\overline{\phi_1}\rangle= \overline{\langle \bbB^{\bK,\bkg}_{00} \phi_1,\overline{\phi_2}\rangle}  = -2 \mathrm{i} \,\bkg\, \cdot a  \, \xi=-\sqrt{2}\, \mathrm{i} \, a  \, (\bkg_1+\rmi \bkg_2)=\tilde{v}_D\, (\bkg_1+\rmi \bkg_2).
$$
Finally, by \eqref{eq.eigenvectau2} and \eqref{eq.defroteigenelem}, we have $\tilde{v}_D=-\sqrt{2}\,\mathrm{i} \, a $ is given by \eqref{eq.fermveloc}.
\end{proof}
\subsection{Conical behavior of the dispersion curves}
Under the  condition $v_D\neq 0$, we conclude the locally conical behavior of the dispersion curves near
the Dirac point $(\bK,\lambda_D)$. 
\begin{proposition}\label{Prop.implicit} 
Let $\aspar>0$ and  $v_D(\aspar)$  be defined by \eqref{eq.fermyveloc}. Under the non-degeneracy condition: $v_D(\aspar)\neq 0$, there exists $\kappa_0$ such that for all $|\bkg|<\kappa_0$ the equation \eqref{eq.determinant}: $\operatorname{det} \mathcal{M}(\bkg,E_1)=0$  has two solutions   $\bkg \mapsto E_{1}^{\pm}(\bkg)$ given by:
$$
E_{1}^{\pm}(\bkg)=\pm v_D(\aspar) \, |\bkg|  \,(1+ E_{1,res}^{\pm}(\bkg)),\quad |\bkg|<\kappa_0,
$$
where $\bkg\mapsto E_{1,res}^{\pm}(\bkg )$ are Lipschitz continuous functions which vanish at $\bkg= 0$.
\end{proposition}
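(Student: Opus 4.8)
\textbf{Proof plan for Proposition \ref{Prop.implicit}.}

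The plan is to analyze the analytic equation $\operatorname{det}\mathcal{M}(\bkg,E_1)=0$ near the origin, using the structure obtained in Lemma \ref{Lem.simplificationdet}. First I would write out the determinant explicitly. By \eqref{eq.defmatMkE1} and Lemma \ref{Lem.simplificationdet},
\begin{equation*}
\mathcal{M}(\bkg,E_1)=\begin{pmatrix} E_1+g_{11} & \tilde{v}_D(\bkg_1+\rmi\bkg_2)+g_{12}+h_{12}\\[4pt] \overline{\tilde{v}_D}(\bkg_1-\rmi\bkg_2)+g_{21}+h_{21} & E_1+g_{22}\end{pmatrix},
\end{equation*}
where the $g_{ij}=g_{ij}(\bkg,E_1)$ and $h_{ij}=h_{ij}(\bkg)$ are analytic near the origin and satisfy $|g_{ij}|,|h_{ij}|\le C|\bkg|^2$ (and, since $\mathcal{M}_B$ carries no term linear in $\bkg$ alone, each also vanishes to appropriate order). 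Expanding the determinant gives
\begin{equation*}
\operatorname{det}\mathcal{M}(\bkg,E_1)=E_1^2-v_D^2|\bkg|^2+\mathcal{R}(\bkg,E_1),
\end{equation*}
where $v_D=|\tilde{v}_D|$ and the remainder $\mathcal{R}$ collects all terms involving at least one $g_{ij}$ or $h_{ij}$ factor; it is analytic and obeys a bound of the form $|\mathcal{R}(\bkg,E_1)|\le C(|E_1|\,|\bkg|^2+|\bkg|^3+|E_1|^2|\bkg|^2)$ for $(\bkg,E_1)$ small, i.e.\ it is $o(|E_1|^2+|\bkg|^2)$ and, more precisely, higher order than the leading homogeneous part $E_1^2-v_D^2|\bkg|^2$.

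Next I would desingularize by introducing polar-type coordinates: write $\bkg=|\bkg|\,\omega$ with $\omega\in S^1$, set $t=|\bkg|\ge0$, and look for solutions of the form $E_1=\pm v_D\, t\,(1+\zeta)$ with $\zeta$ small. Substituting and dividing by $t^2$ (the leading part scales as $t^2$) yields, for the $+$ branch,
\begin{equation*}
v_D^2(1+\zeta)^2-v_D^2+\frac{1}{t^2}\mathcal{R}\big(t\omega,\pm v_D t(1+\zeta)\big)=0,
\end{equation*}
that is $F_\pm(t,\omega,\zeta):=v_D^2\big(2\zeta+\zeta^2\big)+t^{-2}\mathcal{R}(\cdots)=0$. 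The key point is that $t^{-2}\mathcal{R}(t\omega,\pm v_D t(1+\zeta))$ extends to a Lipschitz (indeed, for fixed direction, better) function of $(t,\omega,\zeta)$ that vanishes at $t=0$: this follows because $\mathcal{R}$ is a convergent power series whose every monomial in $(\bkg,E_1)$ has total degree $\ge 3$, so after the substitution every monomial carries at least one explicit factor of $t$ beyond $t^2$. Then $F_\pm(0,\omega,0)=0$ and $\partial_\zeta F_\pm(0,\omega,0)=2v_D^2\ne0$ by the non-degeneracy hypothesis $v_D\ne0$. The analytic implicit function theorem (applied uniformly in the parameter $\omega\in S^1$, which is legitimate since all estimates are uniform in $\omega$ and $S^1$ is compact) then produces, for $|t|<\kappa_0$ with $\kappa_0>0$ independent of $\omega$, a unique small solution $\zeta=E_{1,res}^\pm(t\omega)$, Lipschitz in $(t,\omega)$ hence in $\bkg$, with $E_{1,res}^\pm(0)=0$. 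Re-expressing, $E_1^\pm(\bkg)=\pm v_D|\bkg|(1+E_{1,res}^\pm(\bkg))$ solves $\operatorname{det}\mathcal{M}=0$.

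Finally I would check that these are the only solutions: since $\operatorname{det}\mathcal{M}(\bkg,\cdot)$ is, for each fixed small $\bkg$, an analytic function of $E_1$ equal to $E_1^2-v_D^2|\bkg|^2+\mathcal{R}$ with $\mathcal{R}$ of higher order, a Rouché / degree argument (or directly the quadratic-in-$E_1$ structure up to higher-order perturbation) shows it has exactly two roots near $E_1=0$, which must be the two branches just constructed. Combining with Proposition \ref{prop.reduc}, these two branches are precisely $\lambda_{n+1}(\aspar;\bK+\bkg)-\lambda_D$ and $\lambda_n(\aspar;\bK+\bkg)-\lambda_D$ (the ordering follows from $E_1^+>0>E_1^-$ for $\bkg\ne0$), giving the conical behavior of Definition \ref{def.Diracpoints} with Dirac velocity $v_D$, and completing the proof of Theorem \ref{Thm.weakLyapounovSchmidtred}. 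The main obstacle I anticipate is the bookkeeping needed to verify that $t^{-2}\mathcal{R}(t\omega,\pm v_D t(1+\zeta))$ is genuinely Lipschitz up to $t=0$ and that the implicit function theorem can be applied with constants uniform in the angular variable $\omega$; this is where the precise homogeneity/order bounds on $\mathcal{M}_A$ and $\mathcal{M}_B$ (linearity of $\mathcal{M}_A$ in $\bkg$, the $O(|\bkg|^2)$ bounds on $\mathcal{M}_B$, and analyticity of $c[\bkg,E_1]\phi_j$ from the Neumann series) are essential.
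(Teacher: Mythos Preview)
Your proposal is correct and follows essentially the same route as the paper, which in fact gives no argument of its own here but simply refers to the Schroedinger case in \cite{FW:12}; the expansion $\operatorname{det}\mathcal{M}(\bkg,E_1)=E_1^2-v_D^2|\bkg|^2+\mathcal{R}(\bkg,E_1)$ with $\mathcal{R}$ of higher order, followed by the rescaling $E_1=\pm v_D|\bkg|(1+\zeta)$ and an implicit-function argument uniform in $\omega=\bkg/|\bkg|\in S^1$, is precisely the strategy of that reference. One minor bookkeeping slip: the diagonal entries of $\mathcal{M}$ are $E_1+g_{ii}+h_{ii}$ (both $g_{ii}$ and $h_{ii}$ present, each $O(|\bkg|^2)$), not just $E_1+g_{ii}$, but this does not affect your remainder estimate or the rest of the argument.
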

\begin{proof}
The proof is the same as for honeycomb Schroedinger operators. We refer to  \cite{FW:12}  for the details.
\end{proof}
\appendix

 \section{$\bbA_{\aspar}$ and transverse electric (TE) modes }\label{TE-Max}
In this section we explain how the elliptic operator $\bbA_{\aspar}$ arises in the study of 
 transverse electric (TE) waves  in electromagnetism.
Let $\varepsilon_0$ and $\mu_0$ denote the vacuum dielectric constant and  vacuum  magnetic  permeability.
Introduce Cartesian coordinates $(x,y,z)\in\mathbb{R}^3$.
 Consider a dielectric (non-conducting) medium with macroscopic permeability, $\mu=\mu_0$,  macroscopic dielectric parameter $\eps(x,y,z)$ and current source density $\bJ(x,y,z)$.
 Let $\bE$ and $\bH$ denote, respectively, the electric and magnetic fields.
 Time harmonic solutions of frequency $\omega$ of Maxwell's equations,  $\left(\bE,  \bH\right)\rme^{-i\omega t}$, are governed by the system:
$$
\rmi\omega\, \eps \, \bE + \nabla\wedge \bH= \bJ
\quad\mbox{and}\quad 
-\rmi\omega\, \mu_0  \,\bH + \nabla\wedge\bE = 0
\quad\mbox{in }\R^3.
$$

We further restrict to the case of a medium, where the current density $\bJ$ and  all medium coefficients depend  on the transverse variables $x$ and $y$, but not on the longitudinal variables, $z$: 
 \[ \bJ=\bJ(\bx)=\bJ(x,y), \quad   \eps=\eps(\bx)=\eps(x,y),\quad \mu=\mu_0, \quad\textrm{where}\  \bx=(x,y).\ \]
 We seek modal solutions (i.e  for $\bJ=0$), which are in a  transverse electric (TE) polarization state: 
\begin{equation*}
 \bE(\bx,z) = \big( \bE_{\perp}(\bx ),0 \big)^{\top} = (E_x(\bx),E_y(\bx),0)^{\top} \,  \mbox{ and } \,
\bH (\bx,z) =  H_z(\bx) \; \boldsymbol{e}_z = (0,0, H_z(\bx))^\top .
\end{equation*}
Introduce the $2-$ dimensional scalar and vector operators:
\[
\nabla^\perp v(x,y) := (\partial_y v, -\partial_x v)^{\top}
\quad\mbox{and}\quad 
\curl \bv := \partial_x v_y -\partial_y v_x \mbox{ where } \bv=(v_x,v_y)^{\top}.
\]
For any vector $\bu(\bx)=(\bu_{\perp}(\bx),0)+u_z(\bx) e_z$, which is independent of $z$, 
 we have identity: $\left(\nabla\wedge \bu\right)(\bx)=\curl \bu_{\perp}(\bx) e_z+(\nabla^\perp u_z(\bx),0)^\top $.
Hence, transverse electric (TE) modes are obtained from solutions  of
\begin{equation}
-\nabla \cdot \varepsilon^{-1} \nabla H_z =\omega^2 \mu_0  \, H_z,
\label{Hz-eqn}\end{equation}
where  $\bE_{\perp}$ can be computed  (for frequencies $\omega\neq 0$) from $H_z$:
\begin{equation*}
\bE_{\perp}=\rmi (\eps \omega)^{-1} \ \nabla^{\perp} H_z \ .
\label{bE-eqn}
\end{equation*}
We take $\eps(x,y)=\eps(\bx)$ to be of the form:
\[
\eps(\bx)\ =\ 
\begin{cases} 
\eps_0\ & \textrm{for $\bx$ in the bulk},\\
 \eps_0 g\ &\textrm{for $\bx$ in the inclusions}\ ,
 \end{cases}
 \]
 a piecewise constant medium with contrast parameter $g$.
Multiplication of \eqref{Hz-eqn} by $\eps_0 g$ yields:
\begin{equation*}\label{eq.TEtoAg}
\bbA_{\aspar} H_z = \lambda  \, H_z \ , \ \mbox{ where } \lambda= \aspar\ \left(\frac{ \omega}{c}\right)^2 
\end{equation*}
and $\bbA_{\aspar}$ is defined by \eqref{eq.defAg} and \eqref{sig-def} and $c=(\mu_0 \eps_0)^{-1/2}$ is vacuum speed of light.
Thus, one observes that one easily translates the key results on the  bands $\lambda_n(\aspar;\bk)$ for $\bbA_{\aspar} $ into 
the original electro-magnetic setting by ``using the change of variable'': $$\omega_{n,\pm}(\bk,g) = \pm c \sqrt{\frac{\lambda_n(\bk,g)}{g}}, \quad \forall n \in \N  \ \mbox{ and  any } \ \bk \in \mathcal{B} .$$

\section{Commutation of symmetry operators}\label{sec-appendixcommutation}
In this section we present results on the commutation properties of the operators: rotation, inversion and $\mathcal{P}\mathcal{C}$. Consider first the Dirichlet Laplacian for a single  inclusion.
\begin{proposition}\label{prop.comutappend}
Let $\cell^{A}$ be a non-empty simply connected bounded open set of $\bbR^2$ with Lipschitz boundary $\partial \cell^A$. Assume further that  $R(\cell_A)=\cell^A$ and  $-\cell^A=\cell^A$. Let 
 $\mathcal{R}_{\cell^A}:L^2(\cell^A) \to L^2(\cell^A)$ and  $\mathcal{P}_{\cell^A}: L^2(\cell^A) \to L^2(\cell^A)$ be the (unitary) rotation operator and  (anti-unitary) inversion  operator defined by \eqref{PCR-def},
and denote using the same notation: $\mathcal{R}_{\cell^A}:  L^2(\cell^A)^2 \to L^2(\cell^A)^2 $ and $ \mathcal{P}_{\cell^A}: L^2(\cell^A)^2 \to L^2(\cell^A)^2$ these 
operators acting on $2$-dimensional vector fields. Then,  $\mathcal{R}_{\cell^A} H^1_0(\cell^A)\subset H^1_0(\cell^A)$ and $\mathcal{P}_{\cell^A}H^1_0(\cell^A)\subset H^1_0(\cell^A)$ and we have the commutation relations:
\begin{align}
& \nabla(\mathcal{R}_{\cell^A}u)= R \,\mathcal{R}_{\cell^A} ( \nabla u)   \mbox{ and } \ \nabla(\mathcal{P}_{\cell^A}u)=- \mathcal{P}_{\cell^A}( \nabla u),  \ \  \forall u \in H^1_0(\cell^A), \label{eq.commut1} \\
& \Delta\, \mathcal{R}_{\cell^A}u=\mathcal{R}_{\cell^A}\, \Delta u    \mbox{ and }   \ \Delta\, \mathcal{P}_{\cell^A}u=\mathcal{P}_{\cell^A}\, \Delta u,  \ \  \forall u \in  H^1_{\Delta,A}=\{v \in H^1(\cell^A)\mid \Delta v \in L^2(\cell^A)\} \label{eq.commut2}.
\end{align}
Finally,  $\LaplDcellA$ commutes  with $ \mathcal{R}_{\cell^A}$ and $ \mathcal{P}_{\cell^A}$. That is,  $D(\LaplDcellA)=H^1_{\Delta,A}\cap H^1_0( \cell^A)$ is stable under  $  \mathcal{R}_{\cell^A} $ and $  \mathcal{P}_{\cell^A} $, and $[\LaplDcellA,\mathcal{R}_{\cell^A}]=[\LaplDcellA,\mathcal{P}_{\cell^A}]=0$  on $D(\LaplDcellA)$.
\end{proposition}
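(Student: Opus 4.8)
The plan is to prove the three assertions of Proposition \ref{prop.comutappend} in the order: (i) stability of $H^1_0(\cell^A)$ under $\mathcal{R}_{\cell^A}$ and $\mathcal{P}_{\cell^A}$ together with the gradient commutation relations \eqref{eq.commut1}; (ii) the Laplacian commutation relations \eqref{eq.commut2}; (iii) the self-adjoint operator statement, i.e.\ stability of $D(\LaplDcellA)$ and vanishing of the commutators. For (i), I would start with smooth compactly supported functions. If $u\in C_c^\infty(\cell^A)$, then $\mathcal{R}_{\cell^A}u(\bx)=u(R^*\bx)$ is again in $C_c^\infty(\cell^A)$ since $R^*(\cell^A)=\cell^A$ (the hypothesis $R(\cell^A)=\cell^A$ gives $R^*(\cell^A)=R^2(\cell^A)=\cell^A$ because $R^3=\mathrm{Id}$), and the chain rule gives $\nabla(u\circ R^*)(\bx)=(R^*)^\top(\nabla u)(R^*\bx)=R\,\mathcal{R}_{\cell^A}(\nabla u)(\bx)$, using $(R^*)^\top=R^{-\top}=R$ for an orthogonal matrix with $R^\top=R^{-1}$; wait — more carefully, $(R^*)^\top = \overline R^\top$, and since $R$ is a real rotation matrix $R^* = R^\top = R^{-1}$, so $\nabla(u\circ R^{-1})(\bx) = (R^{-1})^\top \nabla u(R^{-1}\bx) = R\,\mathcal R_{\cell^A}(\nabla u)(\bx)$. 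Similarly $\nabla(\mathcal{P}_{\cell^A}u)(\bx)=\nabla(u(-\bx))=-(\nabla u)(-\bx)=-\mathcal{P}_{\cell^A}(\nabla u)(\bx)$. Since $\mathcal{R}_{\cell^A}$ and $\mathcal{P}_{\cell^A}$ are isometries of $L^2(\cell^A)$ and of $L^2(\cell^A)^2$ (with the above intertwining for gradients), these identities extend by density of $C_c^\infty(\cell^A)$ in $H^1_0(\cell^A)$; in particular the maps send $H^1_0(\cell^A)$ into itself, and \eqref{eq.commut1} holds for all $u\in H^1_0(\cell^A)$.

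For step (ii), I would run the same density argument one derivative higher, but now using the weak definition of the distributional Laplacian rather than classical differentiation. Fix $u\in H^1_{\Delta,A}$. For any test function $\varphi\in C_c^\infty(\cell^A)$,
\[
\langle \Delta(\mathcal{R}_{\cell^A}u),\varphi\rangle
= \langle \mathcal{R}_{\cell^A}u,\Delta\varphi\rangle
= \langle u,\mathcal{R}_{\cell^A}^{-1}\Delta\varphi\rangle
= \langle u,\Delta(\mathcal{R}_{\cell^A}^{-1}\varphi)\rangle
= \langle \Delta u,\mathcal{R}_{\cell^A}^{-1}\varphi\rangle
= \langle \mathcal{R}_{\cell^A}(\Delta u),\varphi\rangle,
\]
where I used that $\mathcal{R}_{\cell^A}$ commutes with $\Delta$ on $C_c^\infty(\cell^A)$ (a special case of the first relation in \eqref{eq.commut1} applied twice, or directly: $\Delta(\varphi\circ R^{-1})=(\Delta\varphi)\circ R^{-1}$ since $R$ is an isometry of $\mathbb R^2$, so $\Delta$ is rotation-invariant), and that $\mathcal{R}_{\cell^A}$ is $L^2(\cell^A)$-self-adjoint-like in the sense that $\langle \mathcal{R}_{\cell^A}f,\varphi\rangle = \langle f,\mathcal{R}_{\cell^A}^{-1}\varphi\rangle$ — I should be slightly careful here because the pairing $\langle\cdot,\cdot\rangle$ is the sesquilinear $L^2$ inner product and $\mathcal{R}_{\cell^A}$ is linear, so its $L^2$-adjoint is $\mathcal{R}_{\cell^A}^{-1}=\mathcal R_{\partial\cell^A}$ with $R$ replaced by $R^*$, which is fine. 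This shows $\Delta(\mathcal{R}_{\cell^A}u)=\mathcal{R}_{\cell^A}(\Delta u)\in L^2(\cell^A)$, hence $\mathcal{R}_{\cell^A}H^1_{\Delta,A}\subset H^1_{\Delta,A}$ and \eqref{eq.commut2} holds. For $\mathcal{P}_{\cell^A}$ the argument is identical using $\Delta(\varphi(-\cdot))=(\Delta\varphi)(-\cdot)$; the fact that $\mathcal P_{\cell^A}$ is (in the notation of Section \ref{orbitals}) an anti-unitary combining inversion with conjugation does not affect the commutation since $\Delta$ has real coefficients, so $\overline{\Delta\varphi}=\Delta\overline\varphi$.

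For step (iii), I would combine (i) and (ii). Since $D(\LaplDcellA)=H^1_{\Delta,A}\cap H^1_0(\cell^A)$ and both $\mathcal{R}_{\cell^A}$ and $\mathcal{P}_{\cell^A}$ preserve each of the two intersectands by (i) and (ii), the domain is stable. On this domain, for $u\in D(\LaplDcellA)$, by \eqref{eq.commut2},
\[
\LaplDcellA\,\mathcal{R}_{\cell^A}u = -\Delta(\mathcal{R}_{\cell^A}u) = -\mathcal{R}_{\cell^A}(\Delta u) = \mathcal{R}_{\cell^A}\,\LaplDcellA\,u,
\]
so $[\LaplDcellA,\mathcal{R}_{\cell^A}]=0$ on $D(\LaplDcellA)$, and identically for $\mathcal{P}_{\cell^A}$. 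This matches the definition of commutation of an unbounded operator with a bounded (anti-)linear operator given in Section \ref{sec-not}. I anticipate that the only genuinely delicate point is the density/extension step (i): one must know that $C_c^\infty(\cell^A)$ is dense in $H^1_0(\cell^A)$ (which is the definition of $H^1_0$) and that the boundary regularity (Lipschitz $\partial\cell^A$) is not actually needed here — it is needed only elsewhere for trace theorems — so the argument goes through for any open set invariant under $R$ and inversion. The bookkeeping with $R^*$ versus $R$ and transposes in \eqref{eq.commut1}, and the harmless but notation-sensitive presence of complex conjugation in $\mathcal P_{\cell^A}$, are the places where I would be most careful to get signs right, but none of this presents a conceptual obstacle.
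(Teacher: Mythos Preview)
Your proposal is correct and follows essentially the same approach as the paper's proof: establish the chain-rule identities \eqref{eq.commut1} and \eqref{eq.commut2} first on $C_c^\infty(\cell^A)$, extend \eqref{eq.commut1} to $H^1_0(\cell^A)$ by density, extend \eqref{eq.commut2} to $H^1_{\Delta,A}$ by distribution theory, and then combine the two stabilities to conclude the commutation with $\LaplDcellA$. Your treatment is in fact more explicit than the paper's sketch; the only small point to tidy is that in step~(ii) you should also note that $\mathcal{R}_{\cell^A}$ and $\mathcal{P}_{\cell^A}$ preserve $H^1(\cell^A)$ (not just $H^1_0$), which follows from the same distributional chain-rule computation and is needed for the stability of $H^1_{\Delta,A}$.
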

\begin{proof}
We summarize the ideas of the proof. Relations \eqref{eq.commut1} and \eqref{eq.commut2} are first proved for all $\bx\in\cell^A$ and smooth functions $u$ in $D(\cell^A)$ (the space of $C^{\infty}$  functions  compactly supported in $\cell^A$) by  using the chain rule  as in  \cite[Lemma 3.3 and Theorem 3.2]{LWZ:18}. Then, using the  density of $D(\cell^A)$ in $H^1_0(\cell^A)$ (for the $H^1$-norm)   and distribution theory, one extends  \eqref{eq.commut1}   to any functions in $H^1_0(\cell^A)$  and proves the stability of $H^1_0(\cell^A)$ by $\mathcal{P}_{\cell^A}$ and  $\mathcal{R}_{\cell^A}$. 
 Afterwards, one extends \eqref{eq.commut2}  to any functions in $H^1_{\Delta,A}$ by using distribution theory.
Finally,  from the stability of  $H^1_0(\cell^A)$ by $\mathcal{R}_{\cell^A}$ and $\mathcal{P}_{\cell^A}$ and  relation  \eqref{eq.commut2}, one gets that $D(\LaplDcellA) =H^1_0(\cell^A) \cap H^1_{\Delta,A}$  is stable under $\mathcal{R}_{\cell^A}$ and $\mathcal{P}_{\cell^A}$. This last point and \eqref{eq.commut2} imply that $\LaplDcellA$ commutes  with $ \mathcal{R}_{\cell^A}$ and $ \mathcal{P}_{\cell^A}$.
\end{proof}
 
To  deal with the case of elliptic operators with quasi-periodic conditions, 
we introduce for  any $\bk \in \mathcal{B}$ the  following Hilbert spaces on the open sets ${\bf \Omega}^{\pm}$  (defined in \eqref{eq.defOmegapm}):
\begin{align*}\label{eq.defspacesL2H1}
&L^2_{\bk}({\bf \Omega}^{\pm}):=\{ u \mid  u \in  L^2(\calO),\, \mbox{ for all bounded open sets $\calO \subset {\bf \Omega}^{\pm}$}, \  u \mbox{ is $\bk-$quasi-periodic on ${\bf \Omega}^{\pm}$}  \}, \nonumber \\
&H^1_{\bk}({\bf\Omega}^{\pm}):=\{ u\in L^2_{\bk}({\bf \Omega}^{\pm}) \mid u \in  H^1(\calO), \, \forall  \mbox{ bounded open sets $\calO \subset {\bf \Omega}^{\pm}$}  \}.
\end{align*}
These spaces are endowed with their standard inner products:
\begin{equation}\label{eq.innerproduct}
(u,v)_{L^2_{\bk}({\bf \Omega^{\pm}})}=\int_{\Omega^{\pm}} u \, \overline{v} \, \rmd \bx \  \mbox{ and } \ (f,g)_{H^1_{\bk}({\bf \Omega^{\pm}})}=\int_{\cell^{\pm}} f \, \overline{g}\,  \rmd \bx+ \int_{\cell^{\pm}}  \nabla f \cdot \overline{\nabla g}  \, \rmd \bx .
\end{equation} 
\begin{remark}
The functions of $H^1_{\bk}({\bf\Omega}^{-})$ are $\bk$-quasi periodic functions that are $H^1$ on each periodic copy of $\cell^-$  and satisfy a $\bk-$quasi-periodic boundary condition, i. e. the Dirichlet  trace has  no jump across the border of the periodic cells: $\cup_{(m,n)\in\Z^2}\ \partial \cell_{mn}$ and is $\bk-$quasi-periodic. This last condition relies on  the fact that these functions have to be $H^1$ on all bounded open subsets of ${\bf \Omega}^{-}$. Thus their Dirichlet trace has to match at the boundary of the periodic cells. However,  as the inclusions are disjoint, no condition (except $\bk-$quasi-periodicity) is imposed on the Dirichlet  trace on ${\partial  \bf \Omega^{+}}$ for  functions of $H^1_{\bk}({\bf\Omega}^{\pm})$.
\end{remark}
We then introduce  the space  
\begin{equation*}
L^2_{\bk}({\bf \partial {\Omega}^{+}})=\{ f\in L^2_{\mathrm{loc}}({\bf \partial {\Omega}^{+}})\mid  f(\bx+\bv)=\rme^{\rmi  \bk\cdot \bv  } f(\bx), \mbox{ for a.e } \bx\in  {\bf \partial {\Omega}^{+}} \mbox{ and all }  \bv\in \Lambda  \}.
\end{equation*}
Its inner product is defined by replacing $\Omega^{\pm} $ by $\partial \Omega^{+}$ and $\rmd  \bx$ by $\rmd \gamma_{\bx}$ in the left formula of  \eqref{eq.innerproduct}. The Dirichlet trace operator $\gamma_D^{\pm}: C^{1}_{\bk}(\overline{{\bf \Omega^{\pm}}}) \subset  H^1_{\bk}({\bf \Omega^{\pm}}) \to  L^2_{\bk}(\partial { \bf{\Omega}^{+})}$ (where $C^{m}_{\bk}(\overline{{\bf \Omega^{\pm}}}), \, m\in \NO$ stands for the space of $C^m$ smooth $\bk-$quasi-periodic functions $f$ on ${\bf \Omega^{\pm}}$ such that $f$ and all its partial derivatives  of order less than $m$ or equal to $m$ admits a continuous extension on $\overline{\bf \Omega^{\pm}}$) defined by $\gamma_D^{\pm}(u)=u\vert_{\partial  {\bf\Omega^{+}}}$ extends by density to a continuous operator on $ H^1_{\bk}(\bf \Omega^{\pm}) $. Furthermore, its range  is the Sobolev space (see \cite{Girault:1986,Gri:85,Monk:03}):
\begin{equation}\label{eq.defH1/2}
H^{1/2}_{\bk}({\bf \partial \Omega^{+}}):=\{ f\in L^2_{\bk}({\bf \partial {\Omega}^{+}})\mid  \int_{\partial {\Omega}^{+}\times \partial {\Omega}^{+}}\frac{|f(\bx)-f(\by)|^2}{|\bx-\by|^2} \rmd \gamma_{\bx}   \rmd \gamma_{\by}<\infty\}
\end{equation}
endowed with its standard norm defined for all $f\in H^{1/2}_{\bk}({\bf \partial \Omega^{+}})$ by:
\begin{equation}\label{eq.defH1/2-norm}
\| f\|_{H^{1/2}_{\bk}({\bf \partial \Omega^{+}})}^2= \| f\|_{L^{2}_{\bk}({\bf \partial \Omega^{+}})}^2+\int_{\partial {\Omega}^{+}\times \partial {\Omega}^{+}}\frac{|f(\bx)-f(\by)|^2}{|\bx-\by|^2} \rmd \gamma_{\bx}   \rmd \gamma_{\by}.
\end{equation}
The mapping $\gamma_D^{\pm}  : H^1_{\bk}({\bf \Omega^{\pm}})\to H^{1/2}_{\bk}({\bf \partial \Omega^{+}})$ is also continuous (for the $H^{1/2}_{\bk}({\bf \partial \Omega^{+}})$ norm).

For $\bK_*$ a vertex of $\mathcal{B}$, one denotes by $\mathcal{R}_{{\bf \cell^{\pm}}}:L^2_{\bK_*}({\bf\Omega^{\pm}})\to L^2_{\bK_*}({\bf\Omega^{\pm}})$ and $\mathcal{P}\mathcal{C}_{{\bf \cell^{\pm}}}:L^2_{\bK_*}({\bf\Omega^{\pm}})\to L^2_{\bK_*}({\bf\Omega^{\pm}})$, the rotation and $\mathcal{P}\mathcal{C}$ operators defined as \eqref{ed.defrotop}  and  \eqref{eq.definvoppseudoper} but on the sets ${\bf\Omega^{\pm}}$. We also use the notation $ \mathcal{R}_{\bf \cell^{\pm}}:L^2_{\bK_*}({\bf\Omega^{\pm}})^2\to L^2_{\bK_*}({\bf\Omega^{\pm}})^2$ and $\mathcal{P}\mathcal{C}_{\bf \cell^{\pm}}:L^2_{\bK_*}({\bf\Omega^{\pm}})^2\to L^2_{\bK_*}({\bf\Omega^{\pm}})^2$ for the equivalent of  these operators acting on  $2D$ vector fields. Finally,  $\mathcal{R}_{{\bf  \partial \Omega^+}}:L^2_{\bK_*}({\bf \partial \Omega^+})\to L^2_{\bK_*}( {\partial \bf\Omega^{+}})$ and $\mathcal{P}\mathcal{C}_{{\bf \partial \cell^{+}}}:L^2_{\bK_*}( {\bf \partial \Omega^+})\to L^2_{\bK_*}({\bf \partial\Omega^+})$ denote the same rotation and $\mathcal{P}\mathcal{C}$ operators, but for scalar functions defined on ${\bf \partial \Omega^{+}}$. The justification for these definitions is similar to the ones of the operator $\mathcal{R}$ and $\mathcal{P}\mathcal{C}$ (see \eqref{eq.justfRop} and \eqref{eq.definvoppseudoper}). It relies, on one hand, on  ${\bf\Omega^{\pm}}$ and  ${\bf \partial\Omega^+}$ being invariant under $2\pi/3$ clockwise rotation and inversion with respect to the center, $\bx_c$  and, on the other hand, on the stability of $\bK_*-$quasi-periodic boundary conditions under these operators. Finally, to see that the adjoint  of the unitary operators $\mathcal{R}^*_{{\bf \Omega^{\pm}}}$  and  $\mathcal{R}^*_{\bf\partial  \cell^{+}}$ are  defined, interchange the rotation matrix $R$ with its inverse $R^*$ in the definition of $\mathcal{R}_{\bf  \Omega^{\pm}}$ and $\mathcal{R}_{\bf  \partial \Omega^{+}}$.
\begin{lemma}\label{lem.comquasiper}
Let $\bK_*$ be any vertex of $\mathcal{B}$. Then, the space $H^1_{\bK_*}({\bf \cell^{\pm}})$ is stable by $\mathcal{R}_{\bf \cell^{\pm}}$ and $\mathcal{P}\mathcal{C}_{{\bf \cell^{\pm}}}$ and one has the following commutation relations:
\begin{equation}
\nabla( \mathcal{R}_{\bf \cell^{\pm}}u)= R \, \mathcal{R}_{{\bf \cell^{\pm}}}( \nabla u) \quad  \mbox{ and }  \quad   \nabla(\mathcal{P}\mathcal{C}_{{\bf \cell^{\pm}}}u)=- \mathcal{P}\mathcal{C}_{{\bf \cell^{\pm}}}( \nabla u),   \quad \forall u \in H^1_{\bK_*}({\bf \cell^{\pm}}). \label{eq.commut3} 
\end{equation}
Moreover,  the trace operators $\gamma_D^{\pm}$ commute with $\mathcal{R}_{\bf \cell^{\pm}}$ and  $\mathcal{P}\mathcal{C}_{{\bf \cell^{\pm}}}$ in the following sense:
\begin{equation}\label{eq.trace}
 \gamma_D^{\pm}(\mathcal{R}_{\bf \cell^{\pm}}u)=\mathcal{R}_{{ \partial \bf \cell^{+}}}( \gamma_D^{\pm} u )   \quad \mbox{and}  \quad   \gamma_D^{\pm}(\mathcal{P}\mathcal{C}_{\bf \cell^{\pm}}u)=\mathcal{P}\mathcal{C}_{{ \partial \bf \cell^{+}}}( \gamma_D^{\pm} u ) ,  \quad \forall u \in H^1_{\bK_*}({\bf \cell^{\pm}}).
\end{equation}
\end{lemma}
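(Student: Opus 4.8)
\textbf{Proof plan for Lemma \ref{lem.comquasiper}.}
The plan is to reduce everything to the single-inclusion case (Proposition \ref{prop.comutappend}) on each periodic copy of $\cell^{\pm}$, together with the fact that the symmetry operators preserve $\bK_*$-pseudo-periodicity, which was already checked (cf. \eqref{eq.justfRop} and \eqref{eq.definvoppseudoper}, and the discussion preceding this lemma). The three assertions — stability of $H^1_{\bK_*}({\bf\cell^{\pm}})$, the gradient commutation relations \eqref{eq.commut3}, and the trace commutation relations \eqref{eq.trace} — will be proved in that order, since each uses the previous one.

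First I would establish \eqref{eq.commut3} and the stability of $H^1_{\bK_*}({\bf\cell^{\pm}})$ simultaneously. Fix $u\in H^1_{\bK_*}({\bf\cell^{\pm}})$. By definition $u$ is $\bK_*$-pseudo-periodic and lies in $H^1(\calO)$ for every bounded open $\calO\subset{\bf\cell^{\pm}}$; in particular its restriction to each translated copy $\cell^{\pm}_{mn}$ of $\cell^{\pm}$ is in $H^1$. On each such copy, the operators $\mathcal{R}_{\bf\cell^{\pm}}$ and $\mathcal{P}\mathcal{C}_{\bf\cell^{\pm}}$ act, up to a composition with the lattice translation that maps $\cell^{\pm}_{mn}$ to a fixed copy (and a pseudo-periodicity phase), exactly as the operators $\mathcal{R}_{\cell^A}$, $\mathcal{P}_{\cell^A}\mathcal{C}$ of Proposition \ref{prop.comutappend} (note $\mathcal{P}\mathcal{C}_{\bf\cell^{\pm}}$ is the composition of the inversion about $\bx_c$ with complex conjugation, and conjugation commutes with $\nabla$ up to the sign built into $\mathcal{P}\mathcal{C}$). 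Hence by the chain rule — carried out on $C^\infty$ functions and extended by density in $H^1$ exactly as in \cite[Lemma 3.3, Theorem 3.2]{LWZ:18} and Proposition \ref{prop.comutappend} — one gets $\nabla(\mathcal{R}_{\bf\cell^{\pm}}u)=R\,\mathcal{R}_{\bf\cell^{\pm}}(\nabla u)$ and $\nabla(\mathcal{P}\mathcal{C}_{\bf\cell^{\pm}}u)=-\mathcal{P}\mathcal{C}_{\bf\cell^{\pm}}(\nabla u)$ as distributions on the interior of each copy, hence a.e. on ${\bf\cell^{\pm}}$. Since $|\det R|=1$ and conjugation is an isometry, the right-hand sides are $L^2$ on every bounded open subset, so $\mathcal{R}_{\bf\cell^{\pm}}u$ and $\mathcal{P}\mathcal{C}_{\bf\cell^{\pm}}u$ are $H^1$ on every bounded open subset; combined with the already-established pseudo-periodicity of the image, this gives $\mathcal{R}_{\bf\cell^{\pm}}u,\ \mathcal{P}\mathcal{C}_{\bf\cell^{\pm}}u\in H^1_{\bK_*}({\bf\cell^{\pm}})$, i.e. the stability.

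Next I would deduce \eqref{eq.trace}. For $u\in C^1_{\bK_*}(\overline{{\bf\cell^{\pm}}})$ the identities $\gamma_D^{\pm}(\mathcal{R}_{\bf\cell^{\pm}}u)=\mathcal{R}_{\partial\bf\cell^{+}}(\gamma_D^{\pm}u)$ and $\gamma_D^{\pm}(\mathcal{P}\mathcal{C}_{\bf\cell^{\pm}}u)=\mathcal{P}\mathcal{C}_{\partial\bf\cell^{+}}(\gamma_D^{\pm}u)$ are immediate from the pointwise definitions, using that $\partial\bf\cell^{+}$ is invariant under the $2\pi/3$-rotation about $\bx_c$ and under inversion about $\bx_c$ (assumptions ($\Omega$.v)-($\Omega$.vi) together with the location of $\bx_c$). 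The general case follows by density of $C^1_{\bK_*}(\overline{{\bf\cell^{\pm}}})$ in $H^1_{\bK_*}({\bf\cell^{\pm}})$ and continuity of all four operators involved: $\gamma_D^{\pm}\colon H^1_{\bK_*}({\bf\cell^{\pm}})\to H^{1/2}_{\bK_*}({\bf\partial\cell^{+}})$ is bounded (stated above), $\mathcal{R}_{\bf\cell^{\pm}}$ and $\mathcal{P}\mathcal{C}_{\bf\cell^{\pm}}$ are bounded on $H^1_{\bK_*}({\bf\cell^{\pm}})$ by the commutation relations \eqref{eq.commut3} just proved, and $\mathcal{R}_{\partial\bf\cell^{+}}$, $\mathcal{P}\mathcal{C}_{\partial\bf\cell^{+}}$ are unitary on $L^2_{\bK_*}({\bf\partial\cell^{+}})$ (hence continuous there, which suffices to pass to the limit in $L^2$); passing to the limit on both sides of the identity yields \eqref{eq.trace}. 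The main obstacle is the bookkeeping in the first step: one must be careful that the density/distributional argument of \cite{LWZ:18} and Proposition \ref{prop.comutappend}, originally written for a bounded domain with $H^1_0$ or $H^1_\Delta$, carries over to the non-compact periodic set ${\bf\cell^{\pm}}$ and to spaces defined by "$H^1$ on every bounded open subset plus pseudo-periodicity" — this is handled by working copy-by-copy and noting that the pseudo-periodicity phases are constants that commute with $\nabla$ (for $\mathcal{R}$) or are merely conjugated (for $\mathcal{P}\mathcal{C}$), so no new boundary terms appear across the cell interfaces.
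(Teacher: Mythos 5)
Your proposal is correct and follows essentially the same route as the paper's own proof: prove \eqref{eq.commut3} first on smooth $\bK_*$-pseudo-periodic functions via the chain rule (citing the same references), deduce stability by density of $C^1_{\bK_*}(\overline{{\bf\cell^{\pm}}})$ in $H^1_{\bK_*}({\bf\cell^{\pm}})$, and then obtain \eqref{eq.trace} by passing to the limit using continuity of $\gamma_D^{\pm}$, of $\mathcal{R}_{\bf\cell^{\pm}},\mathcal{PC}_{\bf\cell^{\pm}}$ on $H^1_{\bK_*}$, and of the boundary symmetry operators on $L^2_{\bK_*}(\partial\bf\cell^{+})$. The only presentational difference is your ``copy-by-copy'' framing reducing to Proposition \ref{prop.comutappend}; the paper instead works directly on the full periodic set $\bf\cell^{\pm}$ with the dense subspace $C^1_{\bK_*}(\overline{{\bf\cell^{\pm}}})$, but the two viewpoints are equivalent once you note, as you do, that the pseudo-periodicity phases are constants that pass through $\nabla$.
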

\begin{proof}
We  summarize the idea of the proof of \eqref{eq.commut3}. The relations \eqref{eq.commut3} are first proved for functions in  $C^{1}_{\bK_*}(\overline{{\bf \Omega^{\pm}}})$ using the chain rule as in   \cite[Lemma 3.3, Theorem 3.2]{LWZ:18}. Thus, it follows that $C^{1}_{\bK_*}(\overline{{\bf \Omega^{\pm}}})$ is stable by the operators $\mathcal{R}_{\bf \cell^{\pm}}$ and $\mathcal{P}\mathcal{C}_{{\bf \cell^{\pm}}}$. The extension of the relations \eqref{eq.commut3} to all functions of $H^1_{\bK_*}( {\bf  \Omega^{\pm}})$ and the stability of the space $H^1_{\bK_*}({\bf \cell^{\pm}})$  under $\mathcal{R}_{\bf \cell^{\pm}}$ and $\mathcal{P}\mathcal{C}_{{\bf \cell^{\pm}}}$  are then proved by  distribution theory and  density of $C^{1}_{\bK_*}(\overline{{\bf \Omega^{\pm}}})$ in $H^1_{\bK_*}({\bf \cell^{\pm}})$  for the $H^1_{\bK_*}$-norm.

Let $u\in  H^1_{\bK_*}({\bf \cell^{\pm}})$. To prove  \eqref{eq.trace} for the rotation operator, we use that $\overline{C^{1}_{\bK_*}(\overline{{\bf \Omega^{\pm}}})}=H^1_{\bK_*}( {\bf  \Omega^{\pm}})$ in the $H^1_{\bK_*}$-norm. Thus, there exists  a sequence $(u_n)_{n\in \N}$ of $C^{1}_{\bK_*}(\overline{{\bf \Omega^{\pm}}})$ functions that converge to $u$ for the $H^1_{\bK_*}$-norm. By \eqref{eq.commut3},  $\mathcal{R}_{{\bf \Omega^{\pm}} }u_n\to  \mathcal{R}_{{\bf \Omega^{\pm}} }u$ in  $H^1_{\bK_*}$. By continuity of the Dirichlet trace operator, one has $\gamma_D^{\pm}(\mathcal{R}_{{\bf \Omega^{\pm}} }u_n) \to \gamma_D^{\pm}(\mathcal{R}_{{\bf \Omega^{\pm}} }u)$ and $\gamma_D^{\pm} u_n \to \gamma_D^{\pm} u$ in $H^{1/2}_{\bK_*}({\bf \partial \Omega^{+}})$ (and thus also in $L^{2}_{\bK_*}({\bf \partial \Omega^{+}})$). Moreover,  by continuity of $ \mathcal{R}_{{ \partial \bf \cell^{\pm}}}$, one has $\gamma_D^{\pm}(\mathcal{R}_{{\bf \Omega^{\pm}} }u_n)=\mathcal{R}_{{ \partial \bf \cell^{\pm}}}( \gamma_D^{\pm} u _n) \to  \mathcal{R}_{{ \partial \bf \cell^{\pm}}}( \gamma_D^{\pm} u)$ in $L^{2}_{\bK_*}({\bf \partial \Omega^{+}})$. Therefore, we conclude that $\gamma_D^{\pm}(\mathcal{R}_{{\bf \Omega^{\pm}} }u)=\mathcal{R}_{{ \partial \bf \cell^{\pm}}}( \gamma_D^{\pm} u)$ in $L^{2}_{\bK_*}({\bf \partial \Omega^{+}})$ and also in $H^{1/2}_{\bK_*}({\bf \partial \Omega^{+}})$. The same reasoning applies to the $\mathcal{P}\mathcal{C}$ operator.
\end{proof}

The dual  of $H^{1/2}_{\bk}({\bf \partial \Omega^{+}})$ is denoted  $H^{-1/2}_{\bk}({\bf \partial \Omega^{+}})$  and is equipped with the norm:
$$
\| g\|_{H^{1/2}_{\bk}({\bf \partial \Omega^{+}})}= \sup_{\|f\|_{H^{1/2}_{\bk}({\bf \partial \Omega^{+}})} = 1}|\left\langle g,f \right \rangle _{ H^{-1/2}_{\bk},H^{1/2}_{\bk}}|.
$$
Here, $\left\langle \cdot,\cdot \right \rangle _{  H^{-1/2}_{\bk},H^{1/2}_{\bk}}$ denotes the duality product between $H^{-1/2}_{\bk}({\bf \partial \Omega^{+}})$ and $H^{1/2}_{\bk}({\bf \partial \Omega^{+}})$ (see \cite{Girault:1986,Monk:03}) for which  $\left\langle \overline{u},v \right \rangle _{  H^{-1/2}_{\bk},H^{1/2}_{\bk}}:=\overline{\left\langle u,\overline{v} \right \rangle}_{  H^{-1/2}_{\bk},H^{1/2}_{\bk}}$ for all $u\in   H^{-1/2}_{\bk}$ and all $v\in H^{1/2}_{\bk}$.

Let $\bn$ denote the unit outward normal vector  to  $\bf \Omega^+$. 
The Neumann trace (see e.g. Corollary 2.6 of \cite{Girault:1986}) $[\partial u / \partial \bn]^{\pm}\in   H^{-1/2}_{\bk}({\bf \partial \Omega^{+}})$  is defined for a function $$u \in H^1_{\bk,\Delta}( {\bf  \Omega^{\pm}}):=\{ w\in H^1_{\bk}( {\bf \Omega^{\pm}}) \mid \Delta w \in L^2_{\bk}( {\bf \Omega^{\pm}})   \}$$ via the Green identity as the following continuous linear functional:
\begin{equation}\label{eq.Greenidentity}
\Big\langle \Big[ \frac{ \partial u }{\partial \bn}\Big]^{\pm}  ,\overline{\gamma_D^{\pm}(v)}\Big \rangle _{ H^{-1/2}_{\bk},H^{1/2}_{\bk}}:=\pm \int_{\Omega^{\pm}} \Delta u \cdot \overline{v} + \nabla u \cdot \overline{\nabla v}  \,  \mathrm{d} \, \bx, \ \forall  v\in H^1_{\bk}( {\bf \Omega^{\pm}}),
\end{equation} 
which is well-defined as a linear form since $\gamma_D^{\pm}:H^1_{\bk}({\bf \Omega^{\pm}})\to H^{1/2}_{\bk}({\bf \partial \Omega^{+}})$ is surjective. It is continuous since $u \in H^1_{\bk,\Delta}( {\bf  \Omega^{\pm}})$ and $\gamma_D$ admits a continuous right inverse $\mathscr{E}^{\pm}: H^{1/2}_{\bk}({\bf \partial \Omega^{+}})\to H^1_{\bk}( {\bf \Omega^{\pm}})$ (i.e. a continuous lift operator) such that $\gamma_d(\mathscr{E}^{\pm}w)=w$ for all $w\in  H^{1/2}_{\bk}({\bf \partial \Omega^{+}})$ (see  Proposition 1.1 of \cite{Girault:1986}).

\noindent If $u\in  H^2_{\bk}({\bf \Omega^{\pm}})$ (see  \cite{Girault:1986,Gri:85}) then $[\partial u / \partial \bn]^{\pm} \in L^2_{\bk}({\bf \partial \Omega^{+}})$ and the duality bracket is nothing but the integral $\int_{\partial \Omega^{+}}  [\partial u / \partial \bn]^{\pm}  \, \overline{v } \, \rmd \gamma_{\bx}$. 

One  now defines  $\mathcal{R}_{{\bf  \partial \Omega^+}}$ and $\mathcal{P}\mathcal{C}_{{\bf \partial \cell^{+}}}$ as bounded operators from $H^{-1/2}_{\bK_*}({\bf \partial \Omega^{+}})$ to $H^{-1/2}_{\bK_*}({\bf \partial \Omega^{+}})$ by:
\begin{eqnarray}
\left\langle \mathcal{R}_{\bf  \partial \Omega^+} w ,\overline{v} \right \rangle _{ H^{-1/2}_{\bK_*},H^{1/2}_{\bK_*}}&=&\left\langle w, \overline{\mathcal{R}_{\bf  \partial \Omega^+}^* v}\right \rangle _{ H^{-1/2}_{\bK_*},H^{1/2}_{\bK_*}},  \label{eq.defrotneumtrace}\\
\left\langle \mathcal{PC}_{{\bf  \partial \Omega^+}} w ,\overline{v} \right \rangle _{ H^{-1/2}_{\bK_*},H^{1/2}_{\bK_*}}&=&\left\langle \overline{w}, \mathcal{PC}_{\bf  \partial \Omega^+} v\right \rangle_{ H^{-1/2}_{\bK_*},H^{1/2}_{\bK_*}},  \nonumber
\end{eqnarray}
for all  $ w \in H^{-1/2}_{\bK_*}({\bf \partial \Omega^+}) \mbox{ and all }  v\in H^{1/2}_{\bK_*}({\bf \partial \Omega^{+}})$.
These operators are well-defined since by virtue of \eqref{eq.defH1/2} and \eqref{eq.defH1/2-norm}, it is clear that $\mathcal{R}^*_{{ \partial \bf \cell^{+}}}$ (resp. $\mathcal{PC}_{\bf  \partial \Omega^+}$) is unitary (resp. anti-unitary) on $ H^{1/2}_{\bK_*}({\bf \partial \Omega^{+}})$.
\begin{lemma}\label{lem.comquasiper2}
Let $\bK_*$ be one vertex of $\mathcal{B}$. Then the space $H^1_{\bK_*,\Delta}( {\bf  \Omega^{\pm}})$ is stable by  the operators $\mathcal{R}_{\bf \cell^{\pm}}$ and $\mathcal{P}\mathcal{C}_{{\bf \cell^{\pm}}}$ and one has the following commutation relations:
\begin{equation}
 \Delta\, \mathcal{R}_{\bf \cell^{\pm}} u=\mathcal{R}_{\bf \cell^{\pm}}\, \Delta u \  \   \mbox{ and }   \ \     \Delta\,\mathcal{P}\mathcal{C}_{{\bf \cell^{\pm}}}u=\mathcal{P}\mathcal{C}_{{\bf \cell^{\pm}}}\, \Delta u,  \quad \forall u \in H^1_{\bK_*,\Delta}( {\bf  \Omega^{\pm}}) \label{eq.commut4} .
\end{equation}
Moreover,  the Neumann trace commutes with the rotation and  $\mathcal{P}\mathcal{C}$ operators in the following way:
\begin{equation}\label{eq.tracenormal}
\Big[\frac{\partial \mathcal{R}_{\bf \cell^{\pm}} u}{\partial \bn}\Big]^{\pm}=\mathcal{R}_{\bf \partial \cell^{+}}\Big[\frac{\partial u}{\partial \bn}\Big]^{\pm}       \mbox{ and }  \Big[\frac{\partial \mathcal{P}\mathcal{C}_{\bf \cell^{\pm}} u}{\partial \bn}\Big]^{\pm}=\mathcal{P}\mathcal{C}_{\bf \partial \cell^{+}}\Big[\frac{\partial u}{\partial \bn}\Big]^{\pm}   , \, \forall u \in H^1_{\bK_*,\Delta}( {\bf  \Omega^{\pm}}).
\end{equation}
\end{lemma}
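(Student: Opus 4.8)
\textbf{Plan of proof for Lemma \ref{lem.comquasiper2}.}
The strategy mirrors the proof of Lemma \ref{lem.comquasiper}, first establishing the identities on a dense subspace of smooth functions and then extending by density and duality. The key observation is that the operators $\mathcal{R}_{\bf \cell^{\pm}}$ and $\mathcal{P}\mathcal{C}_{{\bf \cell^{\pm}}}$ are built from the orthogonal matrix $R^*$ (respectively inversion and complex conjugation), so the chain rule computation for $\Delta$ is essentially the one already carried out in \cite[Lemma 3.3, Theorem 3.2]{LWZ:18} and reproduced for $\nabla$ in \eqref{eq.commut3}: iterating \eqref{eq.commut3} and using $R^\top R = \mathrm{Id}$ (so that the Hessian transforms by conjugation by an orthogonal matrix and its trace is preserved) yields $\Delta(\mathcal{R}_{\bf \cell^{\pm}} u) = \mathcal{R}_{\bf \cell^{\pm}}(\Delta u)$ for $u\in C^2_{\bK_*}(\overline{\bf \cell^{\pm}})$, and likewise $\Delta(\mathcal{P}\mathcal{C}_{{\bf \cell^{\pm}}} u) = \mathcal{P}\mathcal{C}_{{\bf \cell^{\pm}}}(\Delta u)$ (the minus sign in \eqref{eq.commut3} squares away). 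First I would record this smooth-function computation, noting that it shows $C^2_{\bK_*}(\overline{\bf \cell^{\pm}})$ is stable under both operators.

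Next I would upgrade \eqref{eq.commut4} to all of $H^1_{\bK_*,\Delta}({\bf \cell^{\pm}})$ by a distributional argument: for $u\in H^1_{\bK_*,\Delta}({\bf \cell^{\pm}})$, one already knows from Lemma \ref{lem.comquasiper} that $\mathcal{R}_{\bf \cell^{\pm}} u \in H^1_{\bK_*}({\bf \cell^{\pm}})$, so it suffices to identify its distributional Laplacian; testing against $\varphi\in \mathcal{D}({\bf \cell^{\pm}})$ and changing variables $\bx\mapsto \bx_c + R^*(\bx-\bx_c)$ (which preserves Lebesgue measure and maps $\mathcal{D}({\bf \cell^{\pm}})$ to itself) transfers the two derivatives onto $u$ using the already-known commutation $\Delta$ on test functions, giving $\langle \Delta(\mathcal{R}_{\bf \cell^{\pm}}u), \varphi\rangle = \langle \mathcal{R}_{\bf \cell^{\pm}}(\Delta u), \varphi\rangle$; since $\Delta u\in L^2_{\bK_*}({\bf \cell^{\pm}})$ and $\mathcal{R}_{\bf \cell^{\pm}}$ preserves this space, $\Delta(\mathcal{R}_{\bf \cell^{\pm}}u)\in L^2_{\bK_*}({\bf \cell^{\pm}})$, whence $\mathcal{R}_{\bf \cell^{\pm}}u\in H^1_{\bK_*,\Delta}({\bf \cell^{\pm}})$ and \eqref{eq.commut4} holds. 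The $\mathcal{P}\mathcal{C}$ case is identical, with the extra conjugation handled as in \eqref{eq.definvoppseudoper}. This establishes the asserted stability of $H^1_{\bK_*,\Delta}({\bf \cell^{\pm}})$.

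For the Neumann-trace commutation \eqref{eq.tracenormal}, I would argue directly from the weak definition \eqref{eq.Greenidentity}. Fix $u\in H^1_{\bK_*,\Delta}({\bf \cell^{\pm}})$ and an arbitrary test function $v\in H^1_{\bK_*}({\bf \cell^{\pm}})$. Using \eqref{eq.defrotneumtrace}, unitarity of $\mathcal{R}^*_{\bf \partial \cell^{+}}$ on $H^{1/2}_{\bK_*}$, and the fact that $\mathcal{R}^*_{\bf \cell^{\pm}}$ is a right inverse partner of $\mathcal{R}_{\bf \cell^{\pm}}$ on $H^1_{\bK_*}({\bf \cell^{\pm}})$, I would compute
\begin{align*}
\Big\langle \mathcal{R}_{\bf \partial \cell^{+}}\Big[\tfrac{\partial u}{\partial \bn}\Big]^{\pm}, \overline{\gamma_D^{\pm}(v)}\Big\rangle
&= \Big\langle \Big[\tfrac{\partial u}{\partial \bn}\Big]^{\pm}, \overline{\mathcal{R}^*_{\bf \partial \cell^{+}}\gamma_D^{\pm}(v)}\Big\rangle
= \Big\langle \Big[\tfrac{\partial u}{\partial \bn}\Big]^{\pm}, \overline{\gamma_D^{\pm}(\mathcal{R}^*_{\bf \cell^{\pm}}v)}\Big\rangle \\
&= \pm\int_{\cell^{\pm}} \Delta u\cdot\overline{\mathcal{R}^*_{\bf \cell^{\pm}}v} + \nabla u\cdot\overline{\nabla(\mathcal{R}^*_{\bf \cell^{\pm}}v)}\,\rmd\bx,
\end{align*}
where I used \eqref{eq.trace} (with $R$ replaced by $R^*$) to move the boundary rotation inside the trace. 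Then a change of variables $\bx\mapsto \bx_c+R(\bx-\bx_c)$ in the volume integral, together with $\Delta(\mathcal{R}_{\bf \cell^{\pm}}u)=\mathcal{R}_{\bf \cell^{\pm}}(\Delta u)$, $\nabla(\mathcal{R}_{\bf \cell^{\pm}}u) = R\,\mathcal{R}_{\bf \cell^{\pm}}(\nabla u)$ and $R^\top R=\mathrm{Id}$, turns the right-hand side into $\pm\int_{\cell^{\pm}} \Delta(\mathcal{R}_{\bf \cell^{\pm}}u)\cdot\overline{v} + \nabla(\mathcal{R}_{\bf \cell^{\pm}}u)\cdot\overline{\nabla v}\,\rmd\bx = \langle [\partial(\mathcal{R}_{\bf \cell^{\pm}}u)/\partial\bn]^{\pm}, \overline{\gamma_D^{\pm}(v)}\rangle$. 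Since $\gamma_D^{\pm}$ is surjective onto $H^{1/2}_{\bK_*}({\bf \partial \cell^{+}})$, the two functionals agree, proving the rotation case of \eqref{eq.tracenormal}; the $\mathcal{P}\mathcal{C}$ case follows the same pattern, using the anti-unitarity of $\mathcal{P}\mathcal{C}_{\bf \partial \cell^{+}}$ and $\nabla\mathcal{P}\mathcal{C}_{\bf \cell^{\pm}} = -\mathcal{P}\mathcal{C}_{\bf \cell^{\pm}}\nabla$ (the two sign changes, from the gradient and from the orientation-reversing change of variables $\bx\mapsto 2\bx_c-\bx$, cancel). I expect the main technical obstacle to be bookkeeping the orientation conventions and the interplay of $R$ versus $R^*$ between the domain operators and the boundary operators — in particular making sure the change of variables in \eqref{eq.Greenidentity} is applied with the correct matrix so that the Jacobian is $1$ and the normal vector $\bn$ is mapped correctly — rather than any genuine analytic difficulty, since all the density and surjectivity facts needed are already in place.
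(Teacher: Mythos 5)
Your proposal is correct and takes essentially the same route as the paper's proof: you establish \eqref{eq.commut4} first on smooth functions via the chain rule and then extend by a distributional argument, and you derive \eqref{eq.tracenormal} by comparing both sides through the weak Green identity \eqref{eq.Greenidentity}, using \eqref{eq.trace}, \eqref{eq.defrotneumtrace}, and the $\nabla$- and $\Delta$-commutation relations. The only cosmetic difference is that the paper phrases the distributional extension via density of $C^2_{\bK_*}(\overline{\bf\cell^{\pm}})$ in $H^1_{\bK_*,\Delta}({\bf\cell^{\pm}})$ and the volume integral step via unitarity of $\mathcal{R}_{\bf\cell^{\pm}}$, whereas you phrase both via an explicit change of variables; these are the same computation.
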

\begin{proof}
We only summarize the idea of the proof of \eqref{eq.commut4}. The relations \eqref{eq.commut4} are first proved  for functions in  $C^{2}_{\bK_*}(\overline{{\bf \Omega^{\pm}}})$ using the chain rule as in \cite[Lemma 3.3, Theorem 3.2]{LWZ:18}. The extension of the relation \eqref{eq.commut4} to all functions of $H^1_{\bK_*,\Delta}( {\bf  \Omega^{\pm}})$ and the stability of  $H^1_{\bK_*,\Delta}( {\bf  \Omega^{\pm}})$ by $\mathcal{R}_{\bf \cell^{\pm}}$ and $\mathcal{P}\mathcal{C}_{\bf \cell^{\pm}}$
are  proved by using  distribution theory and  density of $C^{2}_{\bK_*}(\overline{{\bf \Omega^{\pm}}})$ in the Hilbert space $H^1_{\bK_*,\Delta}( {\bf  \Omega^{\pm}})$ endowed with the norm  $\| \cdot \|_{H^1_{\bK_*,\Delta}( {\bf  \Omega^{\pm}})}$: $ \| u\|_{H^1_{\bK_*,\Delta}( {\bf  \Omega^{\pm}})}^2=\| u\|^2_{H^1_{\bK_*}({\bf \Omega^{\pm}})}+\| \Delta u\|^2_{L^2_{\bK_*}({\bf \Omega^{\pm}}) }$, for all $u\in H^1_{\bK_*,\Delta}( {\bf  \Omega^{\pm}})$.

We show now the relation \eqref{eq.tracenormal}. Let $ u \in H^1_{\bK_*,\Delta}( {\bf  \Omega^{\pm}})$  be fixed and  $v$ be any functions in  $H^1_{\bK_*}( {\bf  \Omega^{\pm}}) $.
First, as the spaces $H^1_{\bK_*}( {\bf  \Omega^{\pm}})$ and $H^1_{\bK_*,\Delta}( {\bf  \Omega^{\pm}})$ are stable by $\mathcal{R}_{{\bf \Omega^{\pm}} }$, one has $\mathcal{R}_{{\bf \Omega^{\pm}} } v\in  H^1_{\bK_*}( {\bf  \Omega^{\pm}}) $ and  $\mathcal{R}_{{\bf \Omega^{\pm}} } u \in H^1_{\bK_*,\Delta}( {\bf  \Omega^{\pm}})$.
Thus, using the Green identity \eqref{eq.Greenidentity}, one has on one hand:  
\begin{equation}\label{eq.tracenormalrotGreen1}
\Big\langle \Big[\frac{ \partial  \mathcal{R}_{{\bf \Omega^{\pm}} }  u}{\partial \bn} \Big]^{\pm} ,\overline{\gamma_D^{\pm}(v)} \Big\rangle_{H^{-1/2}_{\bK_*},H^{1/2}_{\bK_*}}= \pm (\Delta \mathcal{R}_{{\bf \Omega^{\pm}} }  u, v)_{L^2_{\bK_*}({\bf \Omega^{\pm}})} \pm (\nabla \mathcal{R}_{{\bf \Omega^{\pm}} }  u, \nabla v)_{L^2_{\bK_*}({\bf \Omega^{\pm}})}
\end{equation}
and on the other hand:
\begin{eqnarray}\label{eq.tracenormalrotGreen2}
\qquad \quad \Big\langle   \mathcal{R}_{\bf  \partial \Omega^{+}} \Big[ \frac{ \partial  u}{\partial \bn} \Big]^{\pm}  ,\overline{\gamma_D^{\pm}(v)} \Big\rangle_{H^{-1/2}_{\bK_*},H^{1/2}_{\bK_*}}&=& \Big\langle    \Big[ \frac{ \partial  u}{\partial \bn} \Big]^{\pm} , \overline{\mathcal{R}_{{\bf \partial \Omega^{+}} }^{*} \gamma_D^{\pm}(v)} \Big\rangle_{H^{-1/2}_{\bK_*},H^{1/2}_{\bK_*}}\\[4pt]
&=& \pm  (\Delta  u,   \mathcal{R}^*_{\bf \cell^{\pm}}v)_{L^2_{\bK_*}({\bf  \Omega^{\pm}})}  \pm    (\nabla  u, \nabla \mathcal{R}^*_{\bf \cell^{\pm}}v)_{L^2_{\bK_*}({\bf \Omega^{\pm}})} \nonumber
\end{eqnarray}
(we use here that   $H^1_{\bK_*}( {\bf  \Omega^{\pm}})$ is stable by the rotation operator $\mathcal{R}^{*}_{\bf   \Omega^{\pm}} $ and  that $ \gamma_D^{\pm}(\mathcal{R}^*_{\bf \cell^{\pm}}v)=\mathcal{R}^*_{{ \partial \bf \cell^{+}}} \gamma_D^{\pm} (v) $, these properties can be shown in the same way as in the proof of Lemma \ref{lem.comquasiper}  for the rotation operator $\mathcal{R}_{\bf \Omega^{\pm}}$  since $\mathcal{R}^*_{\bf \cell^{\pm}}$ and $\mathcal{R}^*_{\bf\partial  \cell^{+}}$  are defined  by changing the rotation matrix $R$ by its inverse $R^*$ in the definition of $\mathcal{R}_{\bf \Omega^{\pm}}$ and $\mathcal{R}_{\bf  \partial \Omega^{+}}$).

Using  \eqref{eq.commut4} on the first term  of the right hand side of \eqref{eq.tracenormalrotGreen1}, one obtains:
\begin{equation}\label{eq.relationnablarot1}
(\Delta \mathcal{R}_{{\bf \Omega^{\pm}} }  u, v)_{L^2_{\bK}({\bf \Omega^{\pm}})} = ( \mathcal{R}_{{\bf \Omega^{\pm}} } \Delta   u, v)_{L^2_{\bK}({\bf \Omega^{\pm}})} = (  \Delta   u, \mathcal{R}_{{\bf \Omega^{\pm}} }^*v)_{L^2_{\bK}({\bf \Omega^{\pm}})}.
\end{equation}
Then, for the second  term, one has   
\begin{eqnarray*}
 (\nabla \mathcal{R}_{{\bf \Omega^{\pm}} }  u, \nabla v)_{L^2_{\bK_*}({\bf \Omega^{\pm}})}&=&(R^* \nabla( \mathcal{R}_{\bf \cell^{\pm}}u),R^{*} \nabla v)_{L^2_{\bK_*}({\bf \Omega^{\pm}})} \nonumber \\
&=&( \mathcal{R}_{\bf \Omega^{\pm}}  \nabla u, R^{*} \nabla v)_{L^2_{\bK_*}({\bf \Omega^{\pm}})} \ \mbox{{\footnotesize (since by \eqref{eq.commut3}, \, $R^* \nabla( \mathcal{R}_{\bf \cell^{\pm}}u)= \mathcal{R}_{{\bf \cell^{\pm}}}( \nabla u)$),}}  \nonumber \\
&=& ( \nabla u,    \mathcal{R}_{\bf \Omega^{\pm}}^*  \, R^{*} \nabla  \, \mathcal{R}_{{\bf \Omega^{\pm}}}\,  \mathcal{R}_{{\bf \Omega^{\pm}}}^* v)_{L^2_{\bK_*}({\bf \Omega^{\pm}})}.
\end{eqnarray*}
Using again that $R^* \nabla \mathcal{R}_{\bf \cell^{\pm}}=  \mathcal{R}_{{\bf \cell^{\pm}}}\nabla $ on $H^1_{\bK_*}( {\bf  \Omega^{\pm}})$ and that $\mathcal{R}^{*}_{\bf  \Omega^{\pm}} v\in H^1_{\bK_*}( {\bf  \Omega^{\pm}})$  since $ H^1_{\bK_*}( {\bf  \Omega^{\pm}})$ is stable by $\mathcal{R}^{*}_{\bf   \Omega^{\pm}}$ leads to
\begin{eqnarray}\label{eq.relationnablarot2}
(\nabla \mathcal{R}_{{\bf \Omega^{\pm}} }  u, \nabla v)_{L^2_{\bK_*}({\bf \Omega^{\pm}})}&=& ( \nabla u, \mathcal{R}_{{\bf \Omega^{\pm}} }^*  \mathcal{R}_{{\bf \Omega^{\pm}} } \nabla \mathcal{R}_{{\bf \Omega^{\pm}}}^* v)_{L^2_{\bK_*}({\bf \Omega^{\pm}})} \nonumber \\
&=& ( \nabla u, \nabla \mathcal{R}_{{\bf \Omega^{\pm}}}^* v)_{L^2_{\bK_*}({\bf \Omega^{\pm}})} \quad  \mbox{  (since $ \mathcal{R}_{{\bf \Omega^{\pm}} }$ is unitary).}
\end{eqnarray}
Thus, one concludes by \eqref{eq.tracenormalrotGreen1} , \eqref{eq.tracenormalrotGreen2},
\eqref{eq.relationnablarot1} and \eqref{eq.relationnablarot2} that 
$$
\Big\langle \Big[\frac{ \partial  \mathcal{R}_{{\bf \Omega^{\pm}} }  u}{\partial \bn} \Big]^{\pm} ,\overline{\gamma_D^{\pm}(v)} \Big\rangle_{H^{-1/2}_{\bK_*},H^{1/2}_{\bK_*}}=\Big\langle   \mathcal{R}_{\bf  \partial \Omega^{+}} \Big[ \frac{ \partial  u}{\partial \bn} \Big]^{\pm}  ,\overline{\gamma_D^{\pm}(v)} \Big\rangle_{H^{-1/2}_{\bK_*},H^{1/2}_{\bK_*}} ,
$$
which yields the second relation \eqref{eq.tracenormal} (since $\gamma_D^{\pm}$ is surjective). Finally, the equivalent property holds for the $\mathcal{P}\mathcal{C}$  operator  by the same reasoning.
\end{proof}

\begin{proposition}\label{prop.commutoprper}
Let $\bK_*$ be one vertex of $\mathcal{B}$ and $\aspar$ be a positive real number. The operator $\bbA_{\bK_*,\aspar}$ commutes  with the operators $ \mathcal{R}$ and $ \mathcal{P}\mathcal{C}$, i.e. its domain  $D(\bbA_{\bK_*, \aspar})$ is stable under  $  \mathcal{R}$ and $  \mathcal{P}\mathcal{C}$ and the commutators  $[\bbA_{\bK_*, \aspar},\mathcal{R}]$ and $[\bbA_{\bK_*, \aspar},\mathcal{P}]$ vanish on $D(\bbA_{\bK_*, \aspar})$.
\end{proposition}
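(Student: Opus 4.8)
The plan is to reduce the commutation of $\bbA_{\aspar,\bK_*}$ with $\mathcal{R}$ and $\mathcal{P}\mathcal{C}$ to the commutation relations already established on the subregions $\cell^\pm$ and $\partial\cell^+$ in Lemmas \ref{lem.comquasiper} and \ref{lem.comquasiper2}. The key point is that a function $u\in L^2_{\bK_*}$ lies in $D(\bbA_{\aspar,\bK_*})$ if and only if its restrictions to $\boldcell^+$ and $\boldcell^-$ satisfy $u\vert_{\boldcell^\pm}\in H^1_{\bK_*,\Delta}(\boldcell^\pm)$, together with two matching conditions across $\partial\cell^+$: continuity of the Dirichlet trace, $\gamma_D^+u=\gamma_D^-u$, and continuity of the weighted Neumann trace, $[\partial u/\partial\bn]^{+}=\aspar\,[\partial u/\partial\bn]^{-}$ (the jump condition $[\sigma_\aspar\nabla u\cdot\bn]=0$), where $\sigma_\aspar\equiv1$ on $\cell^+$ and $\equiv\aspar$ on $\cell^-$. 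Since $\bbA_{\aspar,\bK_*}u$ equals $-\Delta u$ on $\boldcell^+$ and $-\aspar\Delta u$ on $\boldcell^-$, once we know $D(\bbA_{\aspar,\bK_*})$ is $\mathcal{R}$- and $\mathcal{P}\mathcal{C}$-stable, the actual intertwining $\bbA_{\aspar,\bK_*}\mathcal{R}=\mathcal{R}\bbA_{\aspar,\bK_*}$ and $\bbA_{\aspar,\bK_*}\mathcal{P}\mathcal{C}=\mathcal{P}\mathcal{C}\bbA_{\aspar,\bK_*}$ follows immediately from \eqref{eq.commut4}, because $\Delta$ commutes separately on each piece and the scalar weight $\aspar$ is constant on each piece (and $\mathcal{R}$, $\mathcal{P}\mathcal{C}$ preserve the partition $\boldcell^\pm$).

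The concrete steps I would carry out are as follows. First, recall that $\mathcal{R}$ and $\mathcal{P}\mathcal{C}$, acting on $L^2_{\bK_*}$, map $\widetilde{L}^2_{\bK_*}$-type decompositions compatibly with the geometry: since $\boldcell^+$ and $\boldcell^-$ are each invariant under the clockwise $2\pi/3$ rotation about $\bx_c$ and under inversion about $\bx_c$ (a consequence of assumptions ($\Omega$.v)–($\Omega$.vi) and the honeycomb structure), the operators $\mathcal{R}$ and $\mathcal{P}\mathcal{C}$ restrict to the operators $\mathcal{R}_{\boldcell^\pm}$, $\mathcal{P}\mathcal{C}_{\boldcell^\pm}$ of the Appendix. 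Second, take $u\in D(\bbA_{\aspar,\bK_*})$. By Lemma \ref{lem.comquasiper} the spaces $H^1_{\bK_*}(\boldcell^\pm)$ are stable under $\mathcal{R}_{\boldcell^\pm}$ and $\mathcal{P}\mathcal{C}_{\boldcell^\pm}$, and by Lemma \ref{lem.comquasiper2} the spaces $H^1_{\bK_*,\Delta}(\boldcell^\pm)$ are as well; hence $\mathcal{R}u$ and $\mathcal{P}\mathcal{C}u$ restricted to $\boldcell^\pm$ again belong to $H^1_{\bK_*,\Delta}(\boldcell^\pm)$. Third, verify the two transmission conditions for $\mathcal{R}u$ (resp.\ $\mathcal{P}\mathcal{C}u$): continuity of the Dirichlet trace follows from \eqref{eq.trace}, namely $\gamma_D^\pm(\mathcal{R}_{\boldcell^\pm}u)=\mathcal{R}_{\partial\cell^+}(\gamma_D^\pm u)$, so if $\gamma_D^+u=\gamma_D^-u$ then applying the single operator $\mathcal{R}_{\partial\cell^+}$ to both sides gives $\gamma_D^+(\mathcal{R}u)=\gamma_D^-(\mathcal{R}u)$; similarly continuity of the weighted Neumann trace follows from \eqref{eq.tracenormal}, $[\partial(\mathcal{R}_{\boldcell^\pm}u)/\partial\bn]^\pm=\mathcal{R}_{\partial\cell^+}[\partial u/\partial\bn]^\pm$, since $\mathcal{R}_{\partial\cell^+}$ is linear and the weight $\aspar$ is a scalar that commutes with it, so the identity $[\partial u/\partial\bn]^+=\aspar[\partial u/\partial\bn]^-$ is preserved. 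The same argument with $\mathcal{P}\mathcal{C}_{\boldcell^\pm}$, $\mathcal{P}\mathcal{C}_{\partial\cell^+}$ and the corresponding relations in \eqref{eq.trace}, \eqref{eq.tracenormal} handles the antilinear case. This establishes $\mathcal{R}(D(\bbA_{\aspar,\bK_*}))\subset D(\bbA_{\aspar,\bK_*})$ and $\mathcal{P}\mathcal{C}(D(\bbA_{\aspar,\bK_*}))\subset D(\bbA_{\aspar,\bK_*})$. Fourth, for such $u$ compute $\bbA_{\aspar,\bK_*}\mathcal{R}u$ pointwise on $\boldcell^\pm$: it equals $-\Delta(\mathcal{R}_{\boldcell^+}u)=-\mathcal{R}_{\boldcell^+}\Delta u$ on $\boldcell^+$ by \eqref{eq.commut4}, and $-\aspar\Delta(\mathcal{R}_{\boldcell^-}u)=-\aspar\mathcal{R}_{\boldcell^-}\Delta u$ on $\boldcell^-$, i.e.\ $\bbA_{\aspar,\bK_*}\mathcal{R}u=\mathcal{R}\bbA_{\aspar,\bK_*}u$ a.e.; likewise for $\mathcal{P}\mathcal{C}$. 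Hence $[\bbA_{\aspar,\bK_*},\mathcal{R}]=0$ and $[\bbA_{\aspar,\bK_*},\mathcal{P}\mathcal{C}]=0$ on $D(\bbA_{\aspar,\bK_*})$.

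Two small technical points deserve attention and are, I expect, where the bookkeeping is most delicate. One is the consistent orientation of the unit normal $\bn$ on $\partial\cell^+$ under the action of $\mathcal{R}$ and $\mathcal{P}\mathcal{C}$: because $\boldcell^+$ is globally invariant and $R$, the inversion, are isometries taking the outward normal field of $\boldcell^+$ to itself (the rotation preserves orientation, the inversion reverses it but also reverses the "outward" direction in a compensating way), the signs in the transmission condition are genuinely preserved; one must check this carefully rather than wave it through, and the cleanest way is to read it off the integral identity \eqref{eq.Greenidentity} with a rotated/inverted test function, exactly as in the proof of Lemma \ref{lem.comquasiper2}. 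The other is that the equivalence "$u\in D(\bbA_{\aspar,\bK_*})$ iff $u\vert_{\boldcell^\pm}\in H^1_{\bK_*,\Delta}(\boldcell^\pm)$ plus the two matching conditions" — which is the characterization implicit in Remark \ref{DomAbk} and used throughout Section \ref{sec-asympresult} — should be invoked (or briefly justified via the weak formulation and the divergence theorem piecewise on $\cell^\pm$) so that membership in the domain is reduced entirely to statements on $\boldcell^\pm$ and $\partial\cell^+$ where the Appendix lemmas apply. I do not anticipate a serious obstacle beyond these orientation and trace-space consistency checks; the structure is a direct transfer of the single-inclusion commutation (Proposition \ref{prop.comutappend}) to the periodic, high-contrast, transmission-problem setting.
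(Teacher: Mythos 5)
Your proposal is correct and follows essentially the same route as the paper: you characterize $D(\bbA_{\aspar,\bK_*})$ by piecewise membership in $H^1_{\bK_*,\Delta}(\boldcell^\pm)$ together with the two transmission conditions on $\partial\boldcell^+$, and then invoke Lemmas \ref{lem.comquasiper} and \ref{lem.comquasiper2} (stability of $H^1_{\bK_*,\Delta}(\boldcell^\pm)$ and commutation of the Dirichlet and Neumann trace operators with $\mathcal{R}$, $\mathcal{P}\mathcal{C}$) to establish stability of the domain, exactly as in the paper. The one minor variant is the final commutation step: you argue pointwise a.e.\ on $\boldcell^\pm$ directly from \eqref{eq.commut4}, whereas the paper verifies $\bbA_{\aspar,\bK_*}\mathcal{R}u=\mathcal{R}\,\bbA_{\aspar,\bK_*}u$ in weak form by testing against $v\in H^1_{\bK_*}$ and using the internal identity \eqref{eq.relationnablarot2}; once the domain characterization is in hand both routes are equivalent, and yours is slightly more direct.
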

\begin{proof}
We will show that $ \mathcal{R}$ and $\bbA_{\bK_*, \aspar}$ commute. The proof that $ \mathcal{P}\mathcal{C}$ and $\bbA_{\bK_*, \aspar}$ commute is similar.
One has $D(\bbA_{\bK_*, \aspar})=\big\{ u \in H^1_{\bK_*}\mid  -\nabla \cdot \sigma_{\aspar }  \nabla u \in L^2_{\bK_*}\big\}$. Therefore,  $u\in D(\bbA_{\bK_*, \aspar})$ is equivalent to $u\in H^1_{\bK_*,\Delta}( {\bf  \Omega^{+}})$, $u\in H^1_{\bK_*,\Delta}( {\bf  \Omega^{-}})$ and $u$ satisfies the following transmission conditions on ${\bf \partial \Omega^{+}}$: $\gamma_D^{+}u=\gamma_D^{-}u$ and $\aspar  \, [  \partial  u/\partial \bn ]^{-}=[ \partial  u/\partial \bn]^{+}$. 

Let $u\in  D(\bbA_{\bK_*, \aspar})$. From Lemma \ref{lem.comquasiper2}, $H^1_{\bK_*,\Delta}( {\bf  \Omega^{\pm}})$ is stable by $\mathcal{R}_{\bf  \Omega^{\pm}}$. Thus, the restriction of $\mathcal{R}u$  to ${\bf  \Omega^{\pm}}$ belongs to $ H^1_{\bK_*,\Delta}( {\bf  \Omega^{\pm}})$. The continuity on the Dirichlet trace of $u$ and   \eqref{eq.trace} implies that $\gamma_D^{+}(\mathcal{R}_{\bf \cell^{+}}u)= \gamma_D^{-}(\mathcal{R}_{\bf \cell^{-}}u)$.
For the Neumann trace, as $ \aspar  [  \partial  u/\partial \bn ]^{-}=[ \partial  u/\partial \bn]^{+}$, the relation \eqref{eq.tracenormal}   implies that $ \aspar  [  \partial \mathcal{R}_{\bf \Omega^-} u/\partial \bn ]^{-}=[ \partial  \mathcal{R}_{\bf \Omega^+} u/\partial \bn]^{+}$.  Hence, $\mathcal{R}u\in D(\bbA_{\bK_*, \aspar})$ and thus $D(\bbA_{\bK_*, \aspar})$ is stable by $\mathcal{R}$.

Finally, let us prove  that  $\bbA_{\bK_*, \aspar}\mathcal{R} u=\mathcal{R} \bbA_{\bK_*, \aspar} u$.  Using the Green identity and the definition of $\sigma_{\aspar}$, we have for all $v\in H^1_{\bK_*}$:
\begin{equation}\label{eq.vanishcomutant1}
(\bbA_{\bK_*, \aspar}\mathcal{R} u,v)_{L^2_{\bK_*}}=(\sigma_{\aspar} \nabla \mathcal{R} u, \nabla v)_{L^2_{\bK_*}}=(\aspar  \nabla \mathcal{R} u, \nabla v)_{L^2_{\bK_*}({\bf  \Omega^{-}})}+( \nabla \mathcal{R} u, \nabla v)_{L^2_{\bK_*}({\bf  \Omega^{+}})}.
\end{equation}
On the other hand, for all $v\in H^1_{\bK_*}$:
\begin{eqnarray}\label{eq.vanishcomutant2}
(\mathcal{R} \bbA_{\bK_*, \aspar} u,v)_{L^2_{\bK_*}}&=&( \bbA_{\bK_*, \aspar} u, \mathcal{R}^*v)_{L^2_{\bK_*}} \nonumber\\
&=&(   \sigma_{\aspar} \nabla u, \nabla \mathcal{R}^*v)_{L^2_{\bK_*}}  \nonumber\\\
&=&(\aspar  \nabla u, \nabla \mathcal{R}_{\bf\Omega^-}^* v)_{L^2_{\bK_*}({\bf  \Omega^{-}})}+( \nabla  u, \nabla  \mathcal{R}_{\bf\Omega^+}^*v)_{L^2_{\bK_*}({\bf  \Omega^{+}})}.
\end{eqnarray}
Therefore, by \eqref{eq.vanishcomutant1},  \eqref{eq.vanishcomutant2} and \eqref{eq.relationnablarot2}, we have  $(\bbA_{\bK_*, \aspar}\mathcal{R} u,v)_{L^2_{\bK_*}}=(\mathcal{R} \bbA_{\bK_*, \aspar} u,v)_{L^2_{\bK_*}}$ for all $v\in H^1_{\bK_*}$. Since $H^1_{\bK_*}$ is dense in $L^2_{\bK_*}$, we have $\bbA_{\bK_*, \aspar}\mathcal{R} u=\mathcal{R} \bbA_{\bK_*, \aspar} u$  and thus $[\bbA_{\bK_*, \aspar},\mathcal{R}]$ vanishes on
$D(\bbA_{\bK_*, \aspar})$.
\end{proof}

\section{From quasi-modes to genuine modes}\label{app.Appendixquasimode}

To prove the asymptotic expansions of the Floquet-Bloch eigenpairs in Section \ref{sec-asympresult}, we use a corollary of the following
theorem on quasi-modes.
\begin{theorem}\label{th.quasimode}
Let $\bbA:\mathcal{H}\to\mathcal{H}$ be a linear compact self-adjoint positive   operator on a Hilbert space $\mathcal{H}$. Let $u\in \mathcal{H}$ with $\| u \| _{\mathcal{H}}=1$, $\lambda \in \mathbb{C}$ with  $\operatorname{Re}(\lambda)>0$ and $\eta>0$ such that
$$
\| \bbA u- \lambda u \| _{\mathcal{H}} \leq  \eta ,
$$
(such a pair $(u,\lambda)$ is usually referred to as a quasi-mode of accuracy $\eta$ of $\bbA$).
Then, there exists an eigenvalue $\lambda_n\in \sigma(\bbA)$ satisfying
$$|\lambda_n-\lambda|\leq \eta.$$
Furthermore, if there exists $\eta_*>\eta$ such that $\overline{B(\lambda,\eta_*)}\cap \sigma(\bbA)=\{\lambda_n\}$, there exists $u_n$ an eigenfunction, associated with $\lambda_n$, such that $\| u_n \| _{\mathcal{H}}=1$ and
$$
\|u-u_n\|_{\mathcal{H}}\leq  \frac{2\, \eta}{\eta_*}.
$$
\end{theorem}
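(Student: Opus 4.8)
\textbf{Proof plan for Theorem \ref{th.quasimode}.}

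The plan is to exploit the spectral theorem for the compact self-adjoint positive operator $\bbA$: let $(\lambda_j)_{j\geq1}$ denote its eigenvalues (listed with multiplicity) and $(e_j)_{j\geq1}$ a corresponding orthonormal basis of $\mathcal{H}$. Writing $u = \sum_j c_j e_j$ with $\sum_j |c_j|^2 = \|u\|_{\mathcal{H}}^2 = 1$, we have $\bbA u - \lambda u = \sum_j (\lambda_j - \lambda) c_j e_j$, so the hypothesis reads $\sum_j |\lambda_j - \lambda|^2 |c_j|^2 \leq \eta^2$. For the first assertion, I would argue by contradiction: if $|\lambda_j - \lambda| > \eta$ for every $j$, then $\sum_j |\lambda_j - \lambda|^2 |c_j|^2 > \eta^2 \sum_j |c_j|^2 = \eta^2$, contradicting the bound. (Here one uses that $\sigma(\bbA)$ is closed, so if the infimum of $|\lambda_j-\lambda|$ over $j$ equals $\eta$ it is attained, and a small refinement of the strict/non-strict inequality argument handles that boundary case; alternatively one simply gets $\inf_j |\lambda_j-\lambda| \le \eta$, which already gives an eigenvalue in the closed $\eta$-ball.) Hence there is $\lambda_n \in \sigma(\bbA)$ with $|\lambda_n - \lambda| \leq \eta$.

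For the second assertion, assume $\overline{B(\lambda,\eta_*)} \cap \sigma(\bbA) = \{\lambda_n\}$ with $\eta_* > \eta$. Split $u = u_\parallel + u_\perp$, where $u_\parallel$ is the orthogonal projection of $u$ onto $\operatorname{ker}(\bbA - \lambda_n I)$ and $u_\perp$ its complement; equivalently $u_\parallel = \sum_{j : \lambda_j = \lambda_n} c_j e_j$ and $u_\perp = \sum_{j : \lambda_j \ne \lambda_n} c_j e_j$. For every index $j$ contributing to $u_\perp$ we have $\lambda_j \notin \overline{B(\lambda,\eta_*)}$, hence $|\lambda_j - \lambda| > \eta_*$. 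Therefore
\begin{equation*}
\eta_*^2 \, \|u_\perp\|_{\mathcal{H}}^2 \;=\; \eta_*^2 \sum_{j:\lambda_j\ne\lambda_n} |c_j|^2 \;<\; \sum_{j:\lambda_j\ne\lambda_n} |\lambda_j - \lambda|^2 |c_j|^2 \;\leq\; \|\bbA u - \lambda u\|_{\mathcal{H}}^2 \;\leq\; \eta^2,
\end{equation*}
so $\|u_\perp\|_{\mathcal{H}} \leq \eta/\eta_*$, and consequently $\|u_\parallel\|_{\mathcal{H}}^2 = 1 - \|u_\perp\|_{\mathcal{H}}^2 \geq 1 - \eta^2/\eta_*^2 > 0$, so $u_\parallel \ne 0$. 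Define $u_n := u_\parallel / \|u_\parallel\|_{\mathcal{H}}$, a unit eigenfunction associated with $\lambda_n$. Then
\begin{equation*}
\|u - u_n\|_{\mathcal{H}} \;\leq\; \|u - u_\parallel\|_{\mathcal{H}} + \|u_\parallel - u_n\|_{\mathcal{H}} \;=\; \|u_\perp\|_{\mathcal{H}} + \big|\,1 - \|u_\parallel\|_{\mathcal{H}}\,\big|.
\end{equation*}
Since $\|u_\parallel\|_{\mathcal{H}} \leq 1$ and $\|u_\parallel\|_{\mathcal{H}}^2 \geq 1 - \|u_\perp\|_{\mathcal{H}}^2$, we get $1 - \|u_\parallel\|_{\mathcal{H}} \leq 1 - \|u_\parallel\|_{\mathcal{H}}^2 \leq \|u_\perp\|_{\mathcal{H}}^2 \leq \|u_\perp\|_{\mathcal{H}}$ (the last step uses $\|u_\perp\|_{\mathcal{H}} \le \eta/\eta_* < 1$). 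Hence $\|u - u_n\|_{\mathcal{H}} \leq 2\|u_\perp\|_{\mathcal{H}} \leq 2\eta/\eta_*$, as claimed.

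The argument is essentially elementary once one commits to the spectral decomposition; the only subtlety, and the step I would be most careful about, is the role of the hypothesis $\operatorname{Re}(\lambda) > 0$ together with $\lambda$ possibly complex while $\sigma(\bbA) \subset \mathbb{R}_{>0}$. This condition is what guarantees the ball $B(\lambda, \eta_*)$ can meaningfully isolate a single real eigenvalue and that $\lambda_n$ itself is positive (so that the later applications, where one inverts $\bbA$ and compares $\lambda_n^{-1}$ to the rest of $\sigma(\bbA^{-1})$, make sense); I would make sure the distances $|\lambda_j - \lambda|$ are genuinely computed in $\mathbb{C}$ and that no step secretly assumes $\lambda$ real. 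Everything else — the contradiction argument for the first part, the Parseval splitting for the second — is routine, and the constant $2\eta/\eta_*$ comes out exactly from the two triangle-inequality terms each bounded by $\|u_\perp\|_{\mathcal{H}}$.
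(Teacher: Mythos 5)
Your proof is correct, and it follows the standard spectral-decomposition (Parseval) argument; the paper itself does not prove this theorem but cites Oleinik et al.\ (1992, Lemma 1.1) for real $\lambda>0$ and simply asserts the extension to $\operatorname{Re}(\lambda)>0$. Your argument \emph{is} that extension made explicit: everything is phrased in terms of the complex distance $|\lambda_j-\lambda|$, and in fact the hypothesis $\operatorname{Re}(\lambda)>0$ plays no role in the proof at all — both the contradiction step and the Parseval splitting hold for arbitrary $\lambda\in\mathbb{C}$ — so, as you suspect, it is retained only for the downstream applications (Corollaries \ref{cor.eigvalest} and \ref{cor.eigenaprox}), where one passes to $\lambda^{-1}$.

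One small caution about your parenthetical aside in the first part. The set of eigenvalues $\{\lambda_j\}$ of a compact positive operator is not closed (its closure is $\{\lambda_j\}\cup\{0\}$, and $0$ is not an eigenvalue), so ``if $\inf_j|\lambda_j-\lambda|=\eta$ it is attained by closedness of $\sigma(\bbA)$'' is not literally correct: the infimum could be approached only as $\lambda_j\to 0$, in which case the closed $\eta$-ball around $\lambda$ meets $\sigma(\bbA)$ only at the non-eigenvalue $0$. This does not harm you, because your primary argument — that $|\lambda_j-\lambda|>\eta$ for all $j$ would force $\sum_j\big(|\lambda_j-\lambda|^2-\eta^2\big)|c_j|^2>0$ since at least one $c_j\neq 0$ — is complete without any attainment claim; the aside should just be dropped or softened. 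A still smaller point: the ``$<$'' in the displayed chain of the second part should be ``$\le$'' to cover the degenerate case $u_\perp=0$, but the conclusion $\|u_\perp\|_{\mathcal{H}}\le\eta/\eta_*$ is unaffected. The rest — the orthogonal split onto $\ker(\bbA-\lambda_n I)$, the bound $\|u_\perp\|\le\eta/\eta_*$, and the chain $1-\|u_\parallel\|\le\|u_\perp\|^2\le\|u_\perp\|$ yielding $\|u-u_n\|\le 2\|u_\perp\|$ — is exactly right.
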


The proof of Theorem \ref{th.quasimode} can be found for $\lambda >0$ in \cite{Ole:92} (Lemma 1.1 p age 264), but it is easy to check that their proof  holds  also for $\operatorname{Re}(\lambda)>0$.
We now reformulate this result for  unbounded self-adjoint operators $\bbA: \rmD(\bbA)\subset \mathcal{H}\to\mathcal{H}$ that are positive definite and have a compact resolvent. In such setting, one can introduce the closed sesquilinear form $a$ associated to the operator $\bbA$ defined by:
$$
a(u,v)=(\bbA^{\frac{1}{2}}u,\bbA^{\frac{1}{2}}v)_{\calH}, \quad \forall u,v \in \rmD(a)=\rmD(\bbA^{\frac{1}{2}}).
$$
As $\bbA$ is positive definite, one can introduce the inner product $(\cdot,\cdot)_{a}$ defined on the Hilbert space  $\rmD(a)$ by
$$
(u,v)_{a}=a(u,v), \quad \forall u,v \in \rmD(a),
$$
and $\|\cdot\|_{a}$ its associated norm.  Furthermore, as $\bbA$ is a self-adjoint operator with a compact resolvent,  $\rmD(a)=\rmD(\bbA^{\frac{1}{2}})$ is  compactly embedded in $\mathcal{H}$ and dense in $\mathcal{H}$.
Then, using the Riesz representation theorem, one defines a bounded injective operator $\OpcompApp: \rmD(a)\to \rmD(a)$
 by
\begin{equation}\label{eq.defOpcompact}
(\OpcompApp u, v)_a=(u,v)_{\mathcal H}, \ \forall u,v \in \rmD(a).
\end{equation}
One shows easily that
the operator $\OpcompApp$ is self-adjoint, positive and compact. Moreover, using  \eqref{eq.defOpcompact}, it is straightforward to prove that: $(\lambda_n,e_n)$ is an eigenpair for $\bbA$ if and only if $(\lambda_n^{-1},e_n)$
is an eigenpair for $\OpcompApp$. Hence,  $\sigma(\OpcompApp)=\sigma(\bbA^{-1})$.
The following two corollaries reformulate Theorem \ref{th.quasimode} in this current setting and provide eigenvalue and eigenfunction estimates. One one hand,  they extend  the notion of quasi-mode for functions with less regularity (namely, functions that do not belong to $\rmD(\bbA)$ but to the domain of $\rmD(\bbA^{\frac{1}{2}})$). On the other hand, they allow for estimation of  the error in the quasi-mode approximation of eigenfunctions of $\bbA$ in a stronger norm: $\|\cdot\|_{a}$ than the norm
$\|\cdot\|_{\mathcal{H}}$. 
See, for example,  \cite{Del-16,Del:2017}  for the eigenvalue estimate and  \cite{Cak-14} for the eigenfunction estimate  in the context of a Dirichlet Laplacian. We present  the details of the  eigenfunction estimate in a more general setting adapted to our problem.
\begin{corollary}(Eigenvalue estimate)\label{cor.eigvalest}
Assume that there exists $u\in \rmD(a)\setminus\{0\}$, $\lambda\in \mathbb{C}$ with $\operatorname{Re}(\lambda)>0$ and   $0<\eta < (|\lambda|+1)^{-1}  \|u\|_a$  such that
\begin{equation}\label{eq.quasimodeweakestimate}
|a(u,v)-\lambda (u,v)_{\mathcal{H}}|\leq \eta  \,  \|v\|_a ; \ \forall v\in  \rmD(a)\ . 
\end{equation}
Then there exists $\lambda_n\in \sigma(\bbA)$ such that:
\begin{equation*}\label{eq.eigest}
|\lambda-\lambda_n|\leq (|\lambda|+1)  \, \frac{\eta}{ \|u\|_a}.
\end{equation*}
\end{corollary}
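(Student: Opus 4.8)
\textbf{Proof plan for Corollary \ref{cor.eigvalest}.}

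The plan is to reduce the weak hypothesis \eqref{eq.quasimodeweakestimate} to the classical quasi-mode statement of Theorem \ref{th.quasimode} applied to the compact self-adjoint positive operator $\OpcompApp$ on the Hilbert space $\mathcal{H}_a := (\rmD(a), (\cdot,\cdot)_a)$, whose spectrum is $\sigma(\OpcompApp) = \sigma(\bbA^{-1})$. First I would rewrite \eqref{eq.quasimodeweakestimate} in terms of $\OpcompApp$: by the defining relation \eqref{eq.defOpcompact}, $(u,v)_{\mathcal{H}} = (\OpcompApp u, v)_a$ and also $a(u,v) = (u,v)_a$ for all $v \in \rmD(a)$, so the left-hand side equals $|(u - \lambda\,\OpcompApp u, v)_a|$. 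Taking the supremum over $v \in \rmD(a)$ with $\|v\|_a = 1$ and using that $\rmD(a)$ is dense in $\mathcal{H}_a$, we obtain $\|u - \lambda\,\OpcompApp u\|_a \le \eta$. Dividing by $|\lambda| > 0$ (which is legitimate since $\operatorname{Re}(\lambda) > 0$) gives $\|\OpcompApp u - \lambda^{-1} u\|_a \le \eta/|\lambda|$.

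Next I would normalize: set $\hat u = u/\|u\|_a$, so $\|\hat u\|_a = 1$ and $\|\OpcompApp \hat u - \lambda^{-1}\hat u\|_a \le \eta/(|\lambda|\,\|u\|_a) =: \hat\eta$. Note $\operatorname{Re}(\lambda^{-1}) = \operatorname{Re}(\lambda)/|\lambda|^2 > 0$, so the pair $(\hat u, \lambda^{-1})$ is a genuine quasi-mode of accuracy $\hat\eta$ for $\OpcompApp$ in the sense of Theorem \ref{th.quasimode}. Applying that theorem (with $\bbA$ there replaced by $\OpcompApp$ and $\mathcal{H}$ replaced by $\mathcal{H}_a$) yields an eigenvalue $\mu \in \sigma(\OpcompApp) = \sigma(\bbA^{-1})$ with $|\mu - \lambda^{-1}| \le \hat\eta$. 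Writing $\mu = \lambda_n^{-1}$ for the corresponding $\lambda_n \in \sigma(\bbA)$, we have $|\lambda_n^{-1} - \lambda^{-1}| \le \eta/(|\lambda|\,\|u\|_a)$, hence $|\lambda_n - \lambda| = |\lambda|\,|\lambda_n|\,|\lambda_n^{-1} - \lambda^{-1}| \le |\lambda_n|\,\eta/\|u\|_a$. The final step is to bound $|\lambda_n|$ by $|\lambda| + 1$: from $|\lambda_n^{-1} - \lambda^{-1}| \le \eta/(|\lambda|\,\|u\|_a)$ and the hypothesis $\eta < (|\lambda|+1)^{-1}\|u\|_a$ one gets $|\lambda_n^{-1} - \lambda^{-1}| < 1/(|\lambda|(|\lambda|+1))$, equivalently $|\lambda_n^{-1} - \lambda^{-1}| < |\lambda^{-1}| - (|\lambda|+1)^{-1}$, which forces $|\lambda_n^{-1}| > (|\lambda|+1)^{-1}$, i.e.\ $|\lambda_n| < |\lambda| + 1$. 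Substituting gives $|\lambda - \lambda_n| \le (|\lambda|+1)\,\eta/\|u\|_a$, as claimed.

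The only mildly delicate points, and the ones I would write out carefully, are: (i) justifying the passage from the inequality valued on the dense subspace $\rmD(a)$ to the norm estimate $\|u - \lambda\,\OpcompApp u\|_a \le \eta$ (here one uses that $u - \lambda\,\OpcompApp u \in \rmD(a) \subset \mathcal{H}_a$ and that the $a$-norm of an element of a Hilbert space is the supremum of its pairings against unit vectors of a dense subspace); and (ii) the elementary but slightly fiddly arithmetic turning $|\lambda_n^{-1} - \lambda^{-1}| < 1/(|\lambda|(|\lambda|+1))$ into $|\lambda_n| < |\lambda|+1$, where the strict inequality in the hypothesis on $\eta$ is exactly what is needed. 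Neither step presents a genuine obstacle; the main conceptual content is simply recognizing that the weak quasi-mode condition for $\bbA$ is the classical quasi-mode condition for $\OpcompApp = \bbA^{-1}$ in the energy norm, after which Theorem \ref{th.quasimode} does the work.
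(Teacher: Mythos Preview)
Your proposal is correct and follows exactly the approach the paper takes: rewrite the weak estimate via the identity $(u,v)_{\mathcal H}=(\OpcompApp u,v)_a$ to obtain a quasi-mode $(\hat u,\lambda^{-1})$ for the compact operator $\OpcompApp$ on $(\rmD(a),(\cdot,\cdot)_a)$, then apply Theorem~\ref{th.quasimode} and translate back via $\sigma(\OpcompApp)=\sigma(\bbA^{-1})$. Your write-up in fact supplies more detail than the paper, which defers the arithmetic (your step (ii)) to a reference; one minor simplification is that in your point (i) no density argument is needed, since $\rmD(a)$ \emph{is} the Hilbert space $\mathcal H_a$, so the sup over $\|v\|_a=1$ directly yields the norm.
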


\begin{proof}
The proof consists of rewriting the weak ``quasimode estimate''  \eqref{eq.quasimodeweakestimate} with the identity \eqref{eq.defOpcompact}. Then, one concludes by applying Theorem \ref{th.quasimode} to the operator $\OpcompApp$  and the quasimode $(u/\|u\|_a, \lambda^{-1})$ (since $\operatorname{Re}(\lambda^{-1})>0$) by using the fact that $(\lambda_n,e_n)$ is an eigenpair of $\bbA$ if and only if $(\lambda_n^{-1},e_n)$ is an eigenpair of $\OpcompApp$. The details are presented
in  the proof of Proposition 15 in \cite{Del-16}. 
\end{proof}
With a good estimate of the eigenvalues, one can use  the following result to estimate the eigenfunctions. Let $\operatorname{dis}(\cdot,\mathcal{O})$ denote the distance function to a set $\mathcal{O}$, and let $\lambda_1>0$ denote the smallest eigenvalue of $\bbA$.
\begin{corollary}(Eigenfunction estimate)\label{cor.eigenaprox}
Let $\lambda_n\in \sigma(\bbA)$. Assume that there exists $u\in \rmD(a)\setminus\{0\}$ with  $\|u\|_{\mathcal{H}}=1$ and $\eta\in (0,1)$ satisfying $$0< \eta <   \frac12\lambda_n \, \|u\|_a \operatorname{dis}(\lambda_n^{-1},\sigma(\bbA^{-1})\setminus\{\lambda_n^{-1}\})$$ such that
\begin{equation}\label{eq.quasimodeweakestimateeigenfunc1}
|a(u,v)-\lambda_n (u,v)_{\mathcal{H}}|\leq \eta  \,  \|v\|_a \, , \ \forall v\in  \rmD(a).
\end{equation}
Then, there exists  $u_n$, an eigenfunction of $\bbA$ associated to $\lambda_n$,  such that   $\| u_n \| _{\mathcal{H}}=1$ and
$$
\|u -u_n \|_{a}\leq  \widetilde{C} \eta .
$$
The positive constant $\widetilde{C}$  is  given  explicitly by:
\begin{equation}\label{eq.constante}
\tilde{C}=C_1+\lambda_n^{-\frac{1}{2}} + \lambda_n^{\frac{1}{2}}   \lambda_1^{-\frac{1}{2}} \ \mbox{ with } \ C_1=4 \, \big(\lambda_n \, \operatorname{dis}(\lambda_n^{-1},\sigma(\bbA^{-1})\setminus\{\lambda_n^{-1}\})\big)^{-1}.
\end{equation}
\end{corollary}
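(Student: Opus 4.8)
\textbf{Plan of proof for Corollary \ref{cor.eigenaprox}.}
The plan is to transfer the weak estimate \eqref{eq.quasimodeweakestimateeigenfunc1} into a statement about the compact operator $\OpcompApp$ defined in \eqref{eq.defOpcompact}, and then apply the eigenfunction part of Theorem \ref{th.quasimode}. First I would rewrite \eqref{eq.quasimodeweakestimateeigenfunc1} using \eqref{eq.defOpcompact}: for every $v\in\rmD(a)$, $a(u,v)-\lambda_n(u,v)_{\mathcal H}=a(u,v)-\lambda_n a(\OpcompApp u,v)=a\big(u-\lambda_n\OpcompApp u,\,v\big)$, so the hypothesis says $|a(u-\lambda_n\OpcompApp u,v)|\le\eta\|v\|_a$ for all $v$, i.e. $\|u-\lambda_n\OpcompApp u\|_a\le\eta$ by taking the supremum over $v$ in the unit ball of $(\rmD(a),\|\cdot\|_a)$. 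Dividing by $\lambda_n>0$ gives $\|\OpcompApp u-\lambda_n^{-1}u\|_a\le\eta/\lambda_n$, which is precisely a quasi-mode estimate (in the Hilbert space $(\rmD(a),(\cdot,\cdot)_a)$) for the compact self-adjoint positive operator $\OpcompApp$, with quasi-eigenpair $(\tilde u,\lambda_n^{-1})$ where $\tilde u=u/\|u\|_a$ and accuracy $\tilde\eta=\eta/(\lambda_n\|u\|_a)$.

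Next I would invoke Theorem \ref{th.quasimode} applied to $\OpcompApp$ acting on $(\rmD(a),\|\cdot\|_a)$. Recall $\sigma(\OpcompApp)=\sigma(\bbA^{-1})$ and $\lambda_n^{-1}\in\sigma(\OpcompApp)$. Set $\eta_*=\tfrac12\operatorname{dis}(\lambda_n^{-1},\sigma(\bbA^{-1})\setminus\{\lambda_n^{-1}\})$; the hypothesis $0<\eta<\tfrac12\lambda_n\|u\|_a\operatorname{dis}(\lambda_n^{-1},\sigma(\bbA^{-1})\setminus\{\lambda_n^{-1}\})$ ensures $\tilde\eta<\eta_*$ and that $\overline{B(\lambda_n^{-1},\eta_*)}\cap\sigma(\OpcompApp)=\{\lambda_n^{-1}\}$ (since $\eta_*$ is half the gap). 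Theorem \ref{th.quasimode} then yields an eigenfunction $\tilde u_n$ of $\OpcompApp$ associated with $\lambda_n^{-1}$ — equivalently an eigenfunction of $\bbA$ associated with $\lambda_n$ — with $\|\tilde u_n\|_a=1$ and $\|\tilde u-\tilde u_n\|_a\le 2\tilde\eta/\eta_*$. Unwinding: $\|u/\|u\|_a-\tilde u_n\|_a\le 4\eta/(\lambda_n\|u\|_a\operatorname{dis}(\cdots))$. To control $\|u-\tilde u_n\|_a$ rather than $\|u/\|u\|_a-\tilde u_n\|_a$, I note $\big|\,\|u\|_a-1\,\big|=\big|\,\|u\|_a-\|\tilde u_n\|_a\|u\|_a^{?}\big|$; more cleanly, since $\|u\|_{\mathcal H}=1$ one has $\|u\|_a^2=a(u,u)=\lambda_n+\big(a(u,u)-\lambda_n\|u\|_{\mathcal H}^2\big)$, and the bracket is bounded by $\eta\|u\|_a$ from \eqref{eq.quasimodeweakestimateeigenfunc1} with $v=u$; this pins $\|u\|_a$ near $\lambda_n^{1/2}$, and combined with $\|u\|_a\ge\lambda_1^{1/2}$ (Poincaré / spectral bound $\|w\|_a^2\ge\lambda_1\|w\|_{\mathcal H}^2$) it lets me estimate $\|u-u\|u\|_a^{-1}\|_a=\big|\|u\|_a-1\big|$ by a multiple of $\eta$. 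Finally, renormalize $\tilde u_n$ in $\mathcal H$: set $u_n=\tilde u_n/\|\tilde u_n\|_{\mathcal H}$; since $\tilde u_n$ is an eigenfunction of $\bbA$ for $\lambda_n$, $\|\tilde u_n\|_a^2=\lambda_n\|\tilde u_n\|_{\mathcal H}^2$, so $\|\tilde u_n\|_{\mathcal H}=\lambda_n^{-1/2}$ and $u_n=\lambda_n^{1/2}\tilde u_n$; then $\|u-u_n\|_a\le\|u-\tilde u_n\|_a+|1-\lambda_n^{1/2}|\|\tilde u_n\|_a$, and a short triangle-inequality bookkeeping, using $\|\tilde u_n\|_a\le\|u\|_a+\|u-\tilde u_n\|_a$ controlled as above, assembles the constant $\widetilde C=C_1+\lambda_n^{-1/2}+\lambda_n^{1/2}\lambda_1^{-1/2}$ with $C_1=4\big(\lambda_n\operatorname{dis}(\lambda_n^{-1},\sigma(\bbA^{-1})\setminus\{\lambda_n^{-1}\})\big)^{-1}$ as in \eqref{eq.constante}.

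The routine steps are the two identities $a(\OpcompApp u,v)=(u,v)_{\mathcal H}$ and $\|w\|_a^2=\lambda_n\|w\|_{\mathcal H}^2$ for eigenfunctions, plus elementary normalization arithmetic. The main obstacle I anticipate is purely bookkeeping: tracking how the $\|\cdot\|_a$-normalization (needed to apply Theorem \ref{th.quasimode} in the energy space) converts back to the $\|\cdot\|_{\mathcal H}$-normalization demanded in the statement, and doing so while keeping the constant explicit in the stated form \eqref{eq.constante}. In particular I must verify that the smallness hypothesis $\eta<1$ together with $\eta<\tfrac12\lambda_n\|u\|_a\operatorname{dis}(\cdots)$ is exactly what is needed for (i) $\tilde\eta<\eta_*$, so the isolation clause of Theorem \ref{th.quasimode} applies, and (ii) $\|u\|_a$ and $\|\tilde u_n\|_a$ to stay in a fixed neighbourhood of $\lambda_n^{1/2}$, so that the various triangle-inequality estimates can be absorbed into the single constant $\widetilde C$; none of this is deep, but the constant-chasing requires care. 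I expect the argument to parallel the treatment in \cite{Cak-14} for the Dirichlet Laplacian, adapted to the abstract form-domain setting here.
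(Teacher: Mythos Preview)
Your overall strategy matches the paper's proof exactly: rewrite \eqref{eq.quasimodeweakestimateeigenfunc1} via \eqref{eq.defOpcompact} as $\|\OpcompApp u-\lambda_n^{-1}u\|_a\le\eta/\lambda_n$, apply Theorem~\ref{th.quasimode} to $\OpcompApp$ on $(\rmD(a),\|\cdot\|_a)$ with $\eta_*=\tfrac12\operatorname{dis}(\lambda_n^{-1},\sigma(\bbA^{-1})\setminus\{\lambda_n^{-1}\})$, obtain $\tilde u_n$ with $\|\tilde u_n\|_a=1$ and $\|u/\|u\|_a-\tilde u_n\|_a\le C_1\eta/\|u\|_a$, then renormalize $u_n=\lambda_n^{1/2}\tilde u_n$ and use the $v=u$ estimate together with $\|u\|_a\ge\lambda_1^{1/2}$. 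All of these ingredients are correct and the same as the paper's.

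However, your final triangle-inequality split is wrong. You write $\|u-u_n\|_a\le\|u-\tilde u_n\|_a+|1-\lambda_n^{1/2}|\,\|\tilde u_n\|_a$, but \emph{neither} term on the right is $O(\eta)$: since $\|u\|_a\approx\lambda_n^{1/2}$ (as you yourself note) while $\|\tilde u_n\|_a=1$, the difference $\|u-\tilde u_n\|_a$ is of order $|\lambda_n^{1/2}-1|$, and $|1-\lambda_n^{1/2}|$ is a fixed constant. Your earlier claim that $|\|u\|_a-1|$ is a multiple of $\eta$ is likewise false and contradicts your own observation that $\|u\|_a$ is pinned near $\lambda_n^{1/2}$. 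The paper splits instead at the correct intermediate point $\|u\|_a\,\tilde u_n$:
\[
\|u-u_n\|_a\;\le\;\underbrace{\big\|u-\|u\|_a\,\tilde u_n\big\|_a}_{=\,\|u\|_a\cdot\|u/\|u\|_a-\tilde u_n\|_a\,\le\,C_1\eta}\;+\;\underbrace{\big|\,\|u\|_a-\lambda_n^{1/2}\big|}_{\le\,\lambda_n^{-1/2}\,|\|u\|_a^2-\lambda_n|\,\le\,\lambda_n^{-1/2}\eta\,\|u\|_a},
\]
and then bounds $\|u\|_a\le\eta+\lambda_n/\|u\|_a\le 1+\lambda_n\lambda_1^{-1/2}$ (dividing $\|u\|_a^2\le\lambda_n+\eta\|u\|_a$ by $\|u\|_a$, and using $\eta<1$ together with $\|u\|_a\ge\lambda_1^{1/2}$) to obtain exactly $\widetilde C\eta$. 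With this single correction to the decomposition, your proof is complete and coincides with the paper's.
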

\begin{proof}
Using the definition \eqref{eq.defOpcompact} of the operator $\OpcompApp$, one can rewrite the inequality \eqref{eq.quasimodeweakestimateeigenfunc1} as:
$$
|(u - \lambda_n \OpcompApp u,v)_a|\leq  \eta   \|v\|_a ,  \ \forall v\in  \rmD(a).
$$
As, $\lambda_n> 0$ and $u\neq0$, it leads  to  
\begin{equation*}\label{eq.estT1}
\Big\|\OpcompApp\, \frac{u}{ \|u\|_a }- \frac{1}{\lambda_n} \frac{u}{ \|u\|_a } \Big\|_a \leq\frac{\eta}{\lambda_n \,   \|u\|_a} .
\end{equation*}
Furthermore,  as $\sigma(\OpcompApp)=\sigma(\bbA^{-1})$, by choosing $\eta_*=1/2 \, \operatorname{dis}(\lambda_n^{-1},\sigma(\bbA^{-1})\setminus\{\lambda_n^{-1}\})$, one has $\sigma(\OpcompApp)\cap \overline{B(\lambda_n^{-1},\eta_*)}=\{\lambda_n^{-1}\}$ and by assumption $\eta \, (\lambda _n\, \|u\|_a)^{-1}<\eta_*$. Therefore,  the Theorem \ref{th.quasimode} applied to the operator $\OpcompApp$ (using $\eta \, (\lambda _n\, \|u\|_a)^{-1}$  for $\eta$ in this Theorem) implies that there exists an eigenfunction $\tilde{u}_n$ associated to $\lambda_n$ for $\bbA$ (and to $1/\lambda_n$ for $\OpcompApp$)
such that $\| \tilde{u}_n \| _{a}=1$  and
\begin{equation}\label{eq.estu1}
\Big\| \frac{u}{ \|u\|_a }-\tilde{u}_n \Big\|_{a}\leq  \frac{C_1 \eta}{ \|u\|_a}.
\end{equation}
with $C_1>0$ defined by \eqref{eq.constante}. It remains now to renormalize this last inequality with respect to the  norm $\|\cdot\|_{\mathcal{H}}$. To this aim, we introduce the vector $u_n$ defined by $u_n=\tilde{u}_n/\| \tilde{u}_n \|_{\mathcal{H}}$.
Then, one has:
\begin{eqnarray}\label{eq.estimationun}
\| u-u_n\|_a & \leq& \| u-\|u\|_a  \tilde{u}_n \|_a+  \| \|u\|_a  \tilde{u}_n-u_n\|_a \nonumber \\
&\leq &C_1 \eta + \| \|u\|_a  \tilde{u}_n-u_n\|_a  \quad \mbox{(using \eqref{eq.estu1})}.
\end{eqnarray}
Then, we estimate the second term of the right hand side of \eqref{eq.estimationun}. To this aim, one uses that
$$
\| \tilde{u}_n\|_{\mathcal{H}}^2=  ( \OpcompApp \tilde{u}_n, \tilde{u}_n)_{a}=  \lambda_n^{-1} (\tilde{u}_n,\tilde{u}_n)_{a}= \lambda_n^{-1}
$$
and it follows that
\begin{equation}\label{eq.constantest1}
\left\| \|u\|_a  \tilde{u}_n-u_n\right\|_a=\left\| (\|u\|_a -\| \tilde{u}_n\|_{\mathcal{H}}^{-1}) \tilde{u}_n \right\|_a=\left|\|u\|_a-\lambda_n^{\frac{1}{2}}\right|\leq \lambda_n^{-\frac{1}{2}}\, \left|\|u\|_a^2-\lambda_n\right| .
\end{equation}
Using the estimate \eqref{eq.quasimodeweakestimateeigenfunc1} for $v=u$  and the fact that $\|u\|_{\mathcal{H}}=1$  yields
\begin{equation}\label{eq.ineqnormL2interm}
 \left|\|u\|_a^2-\lambda_n\right| \leq \eta \|u\|_a.
\end{equation}
\noindent The last point is to dominate $ \|u\|_a$. Using  the estimate \eqref{eq.ineqnormL2interm}, the ``Poincar\'e type inequality'': $\|u\|_{\mathcal{H}}/ \|u\|_a\leq 1/\sqrt{ \lambda_1}$, $\|u\|_{\mathcal{H}}=1$ and that $\eta<1$ leads to:
\begin{equation}\label{eq.constantest2}
\|u\|_a=\frac{\|u\|_a^2}{\|u\|_a} \leq  \eta +\frac{\lambda_n}{\|u\|_a}  \leq  \eta+ \lambda_n \frac{\|u\|_{\mathcal{H}}}{\|u\|_{a}} \leq 1+ \frac{\lambda_n}{\sqrt{\lambda_1}} .
\end{equation}
Finally, one concludes from  \eqref{eq.estimationun}, \eqref{eq.constantest1}, \eqref{eq.ineqnormL2interm} and \eqref{eq.constantest2} that:
$$
\| u-u_n\|_a \leq \tilde{C}\eta  \ \mbox{ with } \
  \tilde{C}=C_1+ \lambda_n^{-\frac{1}{2}} + \lambda_n^{\frac{1}{2}}   \lambda_1^{-\frac{1}{2}}.
$$
\end{proof}
\section{Extensions  to a larger class of elliptic operators}\label{sec-extensionresults} 
We mention here that our approach and  results  extend easily  to a more general class of honeycomb self-adjoint elliptic divergence form operators with anisotropic and spatially heterogeneous coefficients.  
Our more general class of honeycomb operators is the operator  $\bbA_{\aspar}$, where $\sigma_{\aspar}$  in \eqref{eq.defAg}  is now the matrix valued-function 
 $\tilde{\sigma}_{\aspar}$ given by
\begin{equation}
\tilde{\sigma}_{\aspar}(\bx)\ =
\begin{cases} \label{eq.sigmaanisotrop}
a_1(\bx) \mathrm{I}_2+b_1(\bx) \sigma_2   &\textrm{for $\bx$ in the inclusions}\\
\aspar\,  (a_2(\bx) \mathrm{I}_2+ b_2(\bx) \sigma_2  ), \, \aspar>1 & \textrm{for $\bx$ in the bulk,}\\
 \end{cases}
 \end{equation}
with  $\sigma_2$ the  Pauli matrix:
 $$
 \sigma_2:=\begin{pmatrix} 0 &-\mathrm{i} \\[5pt] \mathrm{i}  &0\end{pmatrix}.
 $$
Here, $a_i$ and $b_i$, $i=1,2$ are assumed to be $\Lambda$-periodic bounded, scalar real-valued functions with
$a_1$ and $b_1$ (restricted to $\cell^+=\cell^A\cup \cell^B$) invariant under translation by $\bv_B-\bv_A$ (from the inclusion $\cell^A$ to the inclusion $\cell^B$).
Thus,  $\tilde{\sigma}_{\aspar}$, given by \eqref{eq.sigmaanisotrop}, extends  to give a Hermitian periodic matrix-valued function: 
 $\tilde{\sigma}_{\aspar}\in L^{\infty}(\mathbb{R}^2/\Lambda)$. 
The scalar functions $a_i$ and $b_i$ are required to satisfy the symmetry relations:
 \begin{eqnarray}\label{eq.symheterocoeff}
 && a_i(\bx_c+R\, (\bx-\bx_c))=a_i(\bx) ,\  a_i(2\bx_c-\bx)=a_i(\bx)  , \,  \mbox{ for } i=1,2  \\ &&  b_i(\bx_c+R\, (\bx-\bx_c))=b_i(\bx) , \ b_i(2\bx_c-\bx)=-b_i(\bx),  \,  \mbox{ for } i=1,2\nonumber ,
 \end{eqnarray}
 where $R$ is the $2\pi/3-$ clockwise  rotation matrix , $\bv_A$ the center of the inclusion $A$ is  the origin and $\bx_c$ is the reference point indicated in Figure \ref{fig.funcell} and defined in Section \ref{sec-not}.
Finally,  the matrix-valued functions $a_i(\bx) \mathrm{I}_2+b_i(\bx) \sigma_2$ for $i=1,2$
 are required to be uniformly positive definite.  
 This more general class of  honeycomb operators, $\bbA_\aspar$, which models a class of magneto-optic  materials and bi-anisotropic meta-materials, commutes with the required symmetry operators: $\mathcal{R}$ and $\mathcal{PC}$; see \cite{LWZ:18}. We can also apply the variational approach of \cite{Hem-00} to study its high contrast behavior of dispersion surfaces. Furthermore, the asymptotic expansions of the $L^2_{\bK}-$ Floquet-Bloch eigenpairs, their justification via the weak formulation of the quasi-mode approach (Section \ref{sec-asympresult}), and the Lyapunov-Schmidt / Schur complement reduction scheme of Section \ref{sec.diracpointthoerem} all extend easily to this setting. Hence, all the results and proofs of this paper  (except for those in Section \ref{sec.higherbandscirccase}) can be adapted to this context.

In particular, one replaces the Dirichlet Laplacian: $-\Delta$ on the inclusions $\cell^A\cup \cell^B $ by  the  
strictly elliptic operator $-\nabla \cdot (a_1(\bx) \mathrm{I}_2+ b_1(\bx) \sigma_2)  \nabla$ with Dirichlet boundary conditions on  $\cell^A\cup \cell^B $. One introduces the spectral isolation condition $\condS$ (see Definition \ref{Def.condS}) relative to this operator and $\condS$ holds  at least in this more general setting for the first eigenvalue if $b_1=0$ (see  e.g. \cite{Henrot:2006} pages $14$-$15$).  Finally, we point out that the $\Lambda-$periodicity of $a_1$ and $b_1$, their invariance by translation by $\bv_B-\bv_A$ and the symmetry relations \eqref{eq.symheterocoeff} imply that $a_1(R\, \bx)=a_1(\bx)$, $a_1(-\bx)=a_1(\bx)$ on $\cell^A$ and $b_1(R\, \bx)=b_1(\bx)$, $b_1(-\bx)=-b_1(\bx)$ on $\cell^A$. These latter relations are used to prove the commutation  $-\nabla \cdot (a_1(\bx) \mathrm{I}_2+b_1(\bx) \sigma_2) \nabla$ (equipped with Dirichlet boundary conditions on $\partial \cell^A$) with the symmetries operators $\mathcal{R}_{\cell_A}$ and $\mathcal{P}_{\cell^A}\mathcal{C}$ defined on the single inclusion $\cell^A$ (which corresponds in this more general setting to the relation \eqref{commute} for the Dirichlet Laplacian).

\bibliographystyle{plain}
\bibliography{high_contrast}

\end{document}